\documentclass[11pt]{article}
\usepackage[numbers,sort&compress]{natbib}
\usepackage{float}
\usepackage{zzt}
\usepackage{microtype}
\usepackage{mathrsfs}
\usepackage{xurl}
\usepackage{booktabs,array}
\usepackage{braket}
\usetikzlibrary{arrows.meta,positioning,calc}
\usepackage{xcolor,colortbl,hhline}

\numberwithin{theorem}{section}
\numberwithin{definition}{section}
\makeatletter
\let\c@lemma\c@theorem
\let\c@corollary\c@theorem
\let\c@proposition\c@theorem
\makeatother

\numberwithin{equation}{section}

\allowdisplaybreaks[1]
\title{Entrywise Logarithmic Matrix Algebra and Dichotomy of \\Planar Graph Homomorphisms}
\author{
Jin-Yi Cai\thanks{Department of Computer Sciences,
University of Wisconsin-Madison.}\\
\texttt{jyc@cs.wisc.edu}
\and Zhuxiao Tang\footnotemark[1]\\
\texttt{zztang@wisc.edu}
}
\date{}

\begin{document}
\pagenumbering{roman}
\pagestyle{empty}
\maketitle
\thispagestyle{empty}
\begin{abstract}
We prove a complexity classification of counting planar graph homomorphisms with non-negative weights.
For a real symmetric matrix $M$ with nonnegative entries, the problem $\GH(M)$ is either (1) P-time computable over all graphs, or (2) \#P-hard in general but  P-time computable over planar graphs, or (3) \#P-hard over planar graphs. 
%%%then maybe we also used \GH for the next sentence?
Furthermore, $\GH(M)$ in (2) consists of precisely  
those that involve the P-time FKT algorithm to count planar perfect matchings with a holographic transformation.

The dichotomy for entrywise positive and full rank matrices lies at the core of the proof. In this setting, 
the main tool is
%achieved by forming 
a matrix algebra  $\mathcal{A}$ (a vector space with bilinear multiplication) formed by taking the real span of (centered) entrywise logarithms of all realizable matrices from $M$ using planar edge gadgets and polynomial interpolation.
We then define a multiplicative homomorphism $F(\cdot)$ on the algebra $\mathcal{A}$, which turns out to be a polynomial map.
The exponents of
%from the unique 
prime factorization of component polynomials
of $F$ induce graph distances for distance-regular graphs.
These graphs are either the Hamming graphs or Doob graphs
using a classification result for distance-regular graphs (if
$\PlGH(M)$ is not \#P-hard),
after obtaining their intersection numbers.
%By computing their intersection numbers and a classification result for distance-regular graphs, we show these graphs are either Hamming graphs or Doob graphs.
%In either case, we can establish the complexity dichotomy.
The dichotomy is proved in either case.
We then extend the  complexity classification to all nonnegative matrices.
%%% sounds ok?
\end{abstract}

\clearpage
\setcounter{tocdepth}{2}
\tableofcontents
\clearpage
\pagenumbering{arabic}
\pagestyle{plain}

\section{Introduction}\label{sec:introduction}

Given graphs $G$ and $H$,
a mapping from $V(G)$ to $V(H)$ is called
a \emph{homomorphism}
if the edges of $G$ are mapped to the edges of $H$.
More generally,
let $M = (M_{ij})$ be an $n \times n$ symmetric matrix
with entries $M_{ij} \in \mathbb{R}_{\geq 0}$.
We interpret $M$ as defining a weighted graph $H = H_M$ on $[n]$,
where $M_{ij}$ is the weight of edge $(i,j)$.
Given $M$, the associated \emph{partition function}
$Z_{M}(G)$ for an input undirected multigraph $G = (V,E)$ is
$$Z_{M}(G) = \sum_{\sigma: V \rightarrow [n]} \prod_{(u, v) \in E} M_{\sigma(u) \sigma(v)}.$$

Isomorphic graphs $G \cong G'$ have the same value
$Z_{M}(G) = Z_{M}(G')$, thus every $M$
defines a graph property $Z_{M}(\cdot)$.
For a $0$--$1$ matrix $M$,
$Z_{M}(G)$ counts the number of homomorphisms from $G$ to $H$.
Graph homomorphism ($\GH$) encompasses a great deal of
graph properties and counting problems
arising in combinatorics and statistical physics
~\cite{lovasz1967operations, lovasz2012large, hell2004graphs}.

Each $M$ defines a
computational problem denoted by $\GH(M)$: given an input graph $G$,
output $Z_{M}(G)$.
The  complexity of 
$\GH(M)$ has been a major focus of research.
A number of increasingly general complexity dichotomy theorems have been achieved~\cite{dyer2000complexity,bulatov2005complexity,goldberg2010complexity,cai2013graph}.
A central feature of these results is the use of
\emph{edge gadgets}:
%small graphs
%%% JYC: doesn't have to be small
graph fragments that replace edges of the input instance
and enable polynomial-time reductions.
%Graph homomorphism 
$\GH$ can also be viewed as
a special case of counting $\CSP$ (\#$\CSP$), where
%a series of results established 
a complexity dichotomy is proved for
any set of constraint functions ${\cal F}$
~\cite{bulatov2013complexity,dyer2010complexity,dyer2011decidability,dyer2013complexity,bulatov2012complexity,cai2016nonnegative,cai2017complexity}.

Parallel to this development,
Valiant~\cite{valiant2008holographic}
introduced \emph{holographic algorithms}.
It is well known that counting the number of perfect
matchings (\#PM) is \#P-complete~\cite{valiant1979complexity}. 
On the other hand,
%it is known 
%since the 60's 
%that
the
famous FKT algorithm~\cite{kasteleyn1961statistics,temperley1961dimer,kasteleyn1963dimer,kasteleyn1967graph} from the 1960's
 can
compute \#PM on {\it planar graphs} in P-time.
%Valiant's h
Holographic algorithms
greatly extended its reach.
%in fact so much so that a 
Let $\PlGH(M)$ denote the problem $\GH(M)$ when the
input graphs $G$ are restricted to planar graphs.
Holographic algorithms using FKT  reveal non-trivial tractable cases for $\PlGH$ problems. 
For example, for  $M=\left(\begin{smallmatrix}
    a & b\\
    b & a
\end{smallmatrix}\right)$, the problem $\PlGH(M)$ is  tractable in P-time  for any $a,b\in \mathbb{R}$.
Indeed, this problem is equivalent to computing the partition function of the Ising model on planar graphs \cite{cai2017complexitybook}.
This tractability also extends to direct sums and tensor products
with other tractable cases.
A most intriguing question arises:
Is holographic transformation + FKT algorithm a \emph{universal} algorithm that 
\emph{every} \#P-hard counting problem expressible as a partition function (a sum-of-products)
that \emph{can be solved} in P-time on planar graphs
%(but  in general)
\emph{is solved} by this method alone?

After a series of work~\cite{cai2009holant,cai2016complete, backens2017new, backens2018complete, yang2022local, fu2019blockwise, fu2014holographic, cai2019holographic}
it was established that for every set of complex valued constraint functions ${\cal F}$
on the Boolean domain (i.e., domain size $n = 2$) there is a 3-way exact
classification for  
%counting 
%$\CSP$(${\cal F}$):
\#$\CSP$(${\cal F}$): 
 (1) P-time solvable, 
 (2) P-time solvable over planar graphs but \#P-hard over general graphs, 
 (3) \#P-hard over planar graphs.
Moreover,
category (2) consists of precisely  those problems that can be solved by
Valiant's holographic algorithm using FKT.

Extending this understanding to larger domains has
%proven
turned out to be remarkably difficult,
even in the restricted setting of $\GH$.
%graph homomorphisms.
The difficulty lies in two aspects.
First is the complication of more tractable classes. 
It is easy to check from the definition of partition function that 
\[
Z_{A\otimes B}(G)=Z_A(G)Z_B(G)
\] 
for any symmetric matrices $A$ and $B$.
Consequently, if $\PlGH(A)$ and $\PlGH(B)$ are both tractable, then so does $\PlGH(A\otimes B).$
This results in highly nontrivial tractable $\PlGH(M)$ problems.
For example, $\PlGH(M)$ is tractable for $M=\left(\begin{smallmatrix}
%\left(\begin{matrix}
    8 & 6 & 4 & 3\\
    6 & 8 & 3 & 4\\
    4 & 3 & 8 & 6\\
    3 & 4 & 6 & 8
    \end{smallmatrix}\right).$
%\end{matrix}\right)
%=\left(\begin{matrix}
%    2 & 1\\
%    1 & 2\\
%\end{matrix}\right)\otimes \left(\begin{matrix}
%    4 & 3\\
%    3 & 4\\
%\end{matrix}\right).$$

%Indeed, $\PlGH(M)$ is tractable when $M=\bigotimes_{i=1}^d\left(\begin{smallmatrix}
%    a_i & b_i\\
%    b_i & a_i
%\end{smallmatrix}\right).$
The second difficulty concerns proof techniques for $\#$P-hardness.
%is the lack of techniques of proving $\#$P-hardness.
Hardness proofs in the planar setting typically rely on
\emph{planar edge gadgets},
which must preserve planarity while enabling reductions.
%Even for
With significant effort and the construction of
individually tailored gadgets and 
highly specific analyses, this complexity classification was 
established for $n = 3$ and $n = 4$~\cite{CaiM23,CaiM24}.
%,
%hardness results were only possible 
%through the construction of
%individually tailored gadgets and 
%highly specific analyses~\cite{CaiM23,CaiM24}.

To better understand 
%the totality of 
planar edge gadgets,
%ultimate reach of planar edge gadgets,
%address this, 
Cai, Maran and Young~\cite{cai-maran-young}
studied  their power and ultimate reach.
%initiated a systematic study of these.
%of these planar edge gadgets
%to understand the fundamental challenges encountered
%when moving beyond the $n = 2$ case.
%Cai, Maran and Young 
They 
%established 
showed that
$\PlGH(M)$ is \#P-hard whenever
planar gadgets can be used to \emph{separate} the
diagonal entries of $M$.
This is proved by inductively reducing from matrices of smaller order.
%effectively distinguishing between different labels.
We call this the \emph{vertex separation method}.
However, they also discovered that,
%a formidable barrier:
for many matrices, such a separation is impossible.
And in this separability lies a formidable barrier: 
this separation is governed by
%impossible precisely
%of 
%This occurs 
%when the matrix possesses nontrivial symmetries described by
an abstract construct called the \emph{quantum automorphism group} $\qut(M)$~\cite{cai-maran-young}.
%~\cite{wang1995free, wang1998quantum,bichon2003quantum, banica2005quantum,atserias2019quantum, chassaniol2019study,lupini2020nonlocal, manvcinska2020quantum, kar2026npa}.
%%% JYC i edited a bit here. i don't want people to think this has to do literally with quantum computing
% or any change of model of computation, thus P and #P-hardness is not the classical established
% notion. this is especially the case since we had a section called model of computation.
When this  $\qut(M)$ is nontrivial, the diagonal
entries remain inseparable by
\emph{all} planar gadgets, creating a ``quantum gap''
%%% JYC I added a word here "all"
where the standard strategy of finding reductions
from smaller domain \#P-hard problems
becomes provably impossible.
In particular, the vertex separation method 
%is unable to 
cannot separate $\PlGH(M)$ for $M=\bigotimes_{i=1}^d\left(\begin{smallmatrix}
    a_i & b_i\\
    b_i & a_i
\end{smallmatrix}\right)$
from the \#P-hard problems.

To address the non-trivial Qut case, Cai, Maran and Tang \cite{spectral-manuscript} studied polynomial interpolation based on spectral relations of $M\in \Sym{n}{}{\R_{\ge 0}}$, to reduce $\PlGH(M)$ from the $n$-vertex coloring problem.
They proved a \emph{spectral criterion}, which says if the Perron–Frobenius unique largest eigenvalue of $M\in \Sym{n}{pd}{\Rp}$ is multiplicatively independent of the remaining eigenvalues, then $\PlGH(M)$ is \#P-hard.
This criterion applies even when $\qut(M)$ is non-trivial.
However, if $M$ is a tensor product of smaller matrices, then there will be non-trivial multiplicative relations among the eigenvalues of $M.$
%For example, when $M=A\otimes B$, where $A,B\in \Sym{2}{}{\R}$ with eigenvalues $\lambda_1,\lambda_2$ and $\mu_1,\mu_2$ respectively, the four eigenvalues 
Nevertheless, they proved when $M\in \Sym{n}{pd}{\Rp}$ is a tensor product of 
arbitrary 2 by 2 matrices, then $\PlGH(M)$ is \#P-hard unless $M$ is
%\cong 
$\bigotimes_{i=1}^d\left(\begin{smallmatrix}
    a_i & b_i\\
    b_i & a_i
\end{smallmatrix}\right)$, up to a 
%simultaneous permutation of rows and columns.
permutation.
%(up to a simultaneous permutation of rows and columns).
In our unpublished result, we also proved that if every irreducible tensor factor of $M$ with order $>2$
%greater than 2 
satisfies the spectral criterion, then $\PlGH(M)$ is \#P-hard.
We call this \emph{spectral method}.

The work \cite{spectral-manuscript} motivates us to think: given a matrix $M$ for which we want to prove \#P-hardness, can we \emph{realize} a matrix $M'$ by planar edge gadgets and polynomial interpolation such that every irreducible tensor factor of $M'$ with order $>2$
%greater than 2 
satisfies the spectral criterion?
The word ``realize'' contextually means $\PlGH(M')\le_T \PlGH(M)$, and thus,
if $\PlGH(M')$ is $\#$P-hard then $\PlGH(M)$ is also $\#$P-hard.
However, it is unclear how to characterize the entire set of all matrices we can realize by planar edge gadgets and interpolation.
This motivates us to explicitly formulate an algebraic structure to describe all matrices that are \emph{realizable} from $M.$

Technically speaking, if both $M_1$ and $M_2$ are realizable, then we can realize $M_1\circ M_2$ and $M_1M_2$ by thickening gadgets and  stretching gadgets, respectively. 
Here $M_1\circ M_2$ is the entrywise product and $M_1M_2$ is the ordinary matrix product.
Further, polynomial interpolation allows us to realize the entrywise power $M^{\circ t}$ for any $t\in \R$ if $M_{ij}>0$.
%$M\in\Sym{n}{}{\Rp}$.
We would like to have a vector space structure on them. But $M_1 + M_2$ is generally not realizable from $M_1$ and $M_2$.
It turns out that we \emph{can} obtain a matrix algebra  (a vector space \emph{and} a compatible bilinear product structure) by going to 
%a linear closure in 
an entrywise logarithm setting, denoted by $\log_\circ$.
%Denoting entrywise logarithm by , 
We have $\log_\circ(M_1\circ M_2)=\log_\circ(M_1)+\log_\circ(M_2)$ and $\log_\circ(M^{\circ t})=t\log_{\circ}(M),\,\forall t\in \R$, getting
a vector space structure.
%Therefore, the entrywise logarithm (if well-defined) of all realizable matrices form a $\R$-vector space.
%Stretching gadget realizes matrix product, but can
Miraculously, we can also perform the ordinary matrix product after taking $\log_\circ$, making the entrywise logarithm of realizable matrices an $\mathbb{R}$-algebra.
To make this work we have to take a slight detour.
We will first do a ``small thickening" then do stretching.
For realizable matrices $M_1$ and $M_2$, $L_1(s)=s\log_\circ M_1$ and $L_2(t)=t\log_\circ M_2$ are in the entrywise logarithmic vector space.
By stretching, the matrix product of entrywise exponents $M(s,t)=\exp_\circ (L_1(s))\exp_\circ(L_2(t))$ is realizable for every $s,t\in\mathbb{R}$.
The derivative $\partial_s\partial_t\log_{\circ}M(s,t)$ at $(s,t)=(0,0)$ is in the entrywise logarithmic vector space, since the $\mathbb{R}$-vector space is closed as a metric space.
For more details about the entrywise logarithmic matrix algebra, see \cref{sec:algebra}.

After we define the entrywise logarithmic matrix algebra $\mathcal{A}$, we further define a polynomial homomorphism $F(\cdot)$ on the algebra
using entrywise exponential map.
The map $F(X)$ for every $X \in \mathcal{A}$ separates out the action into pairwise orthogonal subspaces. The projection from $\mathbb{R}^{n \times n}$ into $\mathcal{A}$
creates a way to define a convex polytope $\mathcal C$, then the unique
polynomial factorization of each entry of $F(X)$
defines a metric (as in a metric space), from which we can ultimately show that it is a graph metric. They must satisfy stringent conditions as very special families of
distance-regular graphs. After excluding one particular family called the
Doob graphs using a special planar gadget (which had been used before~\cite{CaiM24}),
%what is 
we are  left with  only the Hamming graphs. This finally allowed us to
conclude that the only tractable families left are the tensor
products of the Ising model.
See \cref{sec:polyhom,sec:graph_distance,sec:hamming,sec:dichotomy} for more details.
%%% may be more than just sec 6 for details?
%Therefore, we will present the 
The following is the main theorem in the body text:

\begin{theorem}[Dichotomy for entrywise positive and full rank matrices]\label{thm:main}
Let $M\in\Sym{n}{F}{\Rp}$.
Then $\PlGH(M)$ is \#P-hard unless after one simultaneous permutation of rows and columns,
\begin{equation}
M=\lambda\bigotimes_{i=1}^{d}
\begin{pmatrix}a_i&b_i\\b_i&a_i\end{pmatrix},
\qquad \lambda>0,\quad a_i,b_i>0,\quad a_i\neq b_i.
\label{eq:ising-tensor}
\end{equation}
in which case $\PlGH(M)$ is tractable.
\end{theorem}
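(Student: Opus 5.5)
The plan follows the roadmap in the introduction; the tractability direction is the easy part. If $M$ has the form \eqref{eq:ising-tensor}, then $Z_M(G)=\lambda^{|E|}\prod_{i=1}^d Z_{M_i}(G)$ with $M_i=\left(\begin{smallmatrix}a_i&b_i\\b_i&a_i\end{smallmatrix}\right)$. This uses the identity $Z_{A\otimes B}=Z_AZ_B$ noted above. Each $\PlGH(M_i)$ is the planar Ising partition function, which is P-time computable by a holographic transformation followed by FKT. Hence $\PlGH(M)$ is tractable.

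For hardness, assume $\PlGH(M)$ is not \#P-hard and derive \eqref{eq:ising-tensor}.

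\textbf{Step 1 (the algebra).} Since $M$ is entrywise positive, $\log_\circ M$ is defined. Let $\mathcal{A}$ be the closed real span of the centered entrywise logarithms of all matrices realizable from $M$ by planar gadgets and interpolation. We show $\mathcal{A}$ is closed under addition and scaling, using thickening and interpolation of $M^{\circ t}$. It is also closed under the bilinear product $\partial_s\partial_t\log_\circ\bigl(\exp_\circ(sX)\exp_\circ(tY)\bigr)|_{0}$, using stretching and closedness. The identity, or its centered version, lies in $\mathcal{A}$, and full rank of $M$ is needed here so that realizable matrices stay nondegenerate.

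\textbf{Step 2 (the homomorphism).} Define the multiplicative map $F$ on $\mathcal{A}$ via entrywise exponentials. We show that $F$ is polynomial in coordinates and decomposes along pairwise orthogonal subspaces. Any realizable $M'$ then feeds into the spectral criterion of \cite{spectral-manuscript} and its extension to tensor factors. If some irreducible tensor factor of order $>2$ of a realizable positive definite $M'$ has its Perron eigenvalue multiplicatively independent of the others, then $\PlGH(M)$ is \#P-hard. Non-hardness therefore forces rigid multiplicative relations among eigenvalues of all of $\mathcal{A}$.

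\textbf{Step 3 (graph metric).} Factor each entry of $F(X)$ uniquely into irreducible polynomials. The resulting exponent vectors, together with the polytope $\mathcal{C}$ obtained by projecting into $\mathcal{A}$, define an integer-valued function $d(i,j)$ on $[n]$. We prove that $d$ is a metric and then a graph distance. The Step 2 relations should make it distance-regular, and we compute its intersection numbers explicitly. The classification of distance-regular graphs with those parameters (e.g.\ those with classical parameters/Hamming-like intersection arrays) leaves only Hamming graphs $H(d,q)$ and Doob graphs.

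\textbf{Step 4 (cases).} The Doob graphs are excluded by the special planar gadget of \cite{CaiM24}, which yields a matrix violating the spectral criterion. For Hamming graphs with $q>2$, the factor structure gives a tensor factor of order $q>2$ whose eigenvalues satisfy the spectral criterion, so again we get hardness. For $q=2$, $M$ is $\lambda\bigotimes_i$ of $2\times2$ circulant-like factors. The \cite{spectral-manuscript} result on tensor products of $2\times2$ matrices then forces each factor to be $\left(\begin{smallmatrix}a_i&b_i\\b_i&a_i\end{smallmatrix}\right)$ with $a_i\ne b_i$, where $a_i\neq b_i$ is forced by full rank. Positive definiteness, where required, is arranged by realizing $M^2$ via stretching.

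\textbf{Main obstacle.} I expect the hardest step to be Step 3: showing the factorization exponents define a genuine distance-regular graph metric and computing its intersection numbers. I would first try to prove the triangle inequality via convexity of $\mathcal{C}$. I would then derive regularity from invariance of $\mathcal{A}$ under the multiplicative structure of Step 2.
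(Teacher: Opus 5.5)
Your tractability argument is fine, and Step~1 matches the paper's construction of $\mathcal{A}$. The hardness direction, however, has a genuine gap exactly where the work lies. You route Steps~2--4 through the spectral criterion of \cite{spectral-manuscript}, claiming that non-hardness forces ``rigid multiplicative relations among eigenvalues'' which ``should make'' the exponent metric distance-regular. No mechanism is given for this. Moreover, the tensor-factor version of that criterion presupposes a known tensor decomposition of some realizable matrix, and producing that decomposition is precisely the job of the metric analysis, so as stated the plan is circular. The paper never uses the spectral criterion. Everything comes from the global identity $F(XY)=F(X)F(Y)$ together with three facts your plan does not contain.
\begin{enumerate}
\item The homogeneous setting, obtained from \cref{thm:nonconstant}: every realizable positive matrix has constant diagonal and constant row and column sums. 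Without it, neither the row isometry, nor the doubly stochastic normalization $K(C)$, nor $T_{xx}=E$ is available.
\item $F(E)=I$, where $E$ is the unit of $\mathcal{A}$, namely the projection onto its support. In general $E$ is not the centered identity $P$, which need not lie in $\mathcal{A}$. This, not nondegeneracy of realizable matrices, is where full rank of $M$ is used (\cref{lem:polynomial-identity}).
\item Every irreducible factor of every entry $F(X)_{xy}$ is an affine form $\ell_a$ with multiplicity $h_a(x,y)=\ell_a(T_{xy})$ (\cref{lem:affine-factors}). This is proved from row normalization and the mean identity $\sum_y F(X)_{xy}T_{xy}=X$ on $\operatorname{int}\mathcal{C}$. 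There $\Phi$ can be inverted, so $F(X)$ is realizable and positive.
\end{enumerate}
The metric is then the total multiplicity of the endpoint factors (those with $\ell_a(E)=0$), i.e.\ the vanishing order of $F(E-\varepsilon X)_{xy}$ at $\varepsilon=0$. The triangle inequality comes from comparing lowest-order terms, all with positive coefficients, in $F((1-\varepsilon)E)^2=F((1-\varepsilon)^2E)$, not from convexity of $\mathcal{C}$.

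Second, a single metric on $[n]$ cannot be expected to be distance-regular. The endpoint metric is $\delta=\sum_\omega\Delta_\omega$, the distance of a Cartesian product of coordinate graphs $\Gamma_\omega$. When the $Q_\omega$ differ this product is not distance-regular: a $2\times2$ Ising matrix tensored with a $3$-state Potts matrix already gives the triangular prism. Such cases must be handled before any hardness reduction is available. The paper first groups the complementary factors into orbits $\omega$ of the transpose involution. It shows the matrices $W_\omega$ are central and positive semidefinite, so $\Delta_\omega$ is a squared Euclidean distance whose zero set is an equivalence relation. It then proves the color bijection $[n]\to\prod_\omega X_\omega$. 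This yields $F(tE)=\bigotimes_\omega F_\omega(t)$ with $F_\omega(t)_{ab}=f_{D_\omega,Q_\omega,\Delta_\omega(a,b)}(t)$, and $Q_\omega\in\mathbb{Z}_{\ge2}$ follows from the integer coefficients of $\operatorname{Tr}F(tE)$. The Hamming intersection numbers then follow from $F_\omega(s)F_\omega(t)=F_\omega(st)$ and linear independence of the products $f_{D,Q,i}(s)f_{D,Q,j}(t)$.

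Step~4 likewise needs a reduction isolating one coordinate, which you do not supply. \cref{lem:coordinate-extraction} interpolates $F(tE)\circ\exp_\circ(\log z\cdot\Theta_\omega)$ down to $z=0$ to realize $F_\omega(t)$. In the Hamming case $F_\omega(t)=M_Q(t)^{\otimes D}$, and a $D$-th root yields a Potts model, hard by Vertigan. The Doob case uses the Wheatstone gadget and prime-exponent interpolation to reach a $4$-state Potts model. Finally, when $Q_\omega=2$ for all $\omega$, you cannot invoke the result on tensor products of $2\times2$ matrices, since nothing yet shows $M$ is such a product. The paper derives the Ising form directly. For the $n\times d$ cube sign matrix $S$ one has $E=\tfrac1n SS^{\tt T}$, so $P\log_\circ(M)P=E\,P\log_\circ(M)P\,E=S\Lambda S^{\tt T}$, and the constant diagonal forces $\Lambda$ to be diagonal.
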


The dichotomy for entrywise positive and full rank matrices lies at the core of the dichotomy for non-negative matrices (this was also seen in \cite{CaiM23,CaiM24,cai-maran-young}).
The following theorem extends this classification to nonnegative matrices
and its proof will be presented in \autoref{app: nnextension}.
\begin{theorem}[Dichotomy for nonnegative matrices]\label{thm: nonnegative}
Let $M\in\Sym{n}{}{\mathbb{R}_{\geq 0}}$. After a simultaneous permutation of rows and columns, we can express $M$ as a direct sum of blocks representing
connected components $C$ of size at least 2
in addition to  a possibly empty set of isolated vertices with or without  self-loops.
The problem  $\PlGH(M)$ is 
 \#P-hard unless each component $C$ is a certain rank-one modification
 of the matchgate tensor in \eqref{eq:ising-tensor}, or its bipartite form.

\end{theorem}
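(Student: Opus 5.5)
The plan is to reduce Theorem~\ref{thm: nonnegative} to Theorem~\ref{thm:main} through three reductions: connected components, bipartite versus non-bipartite components, and then the zero pattern of each component.

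\textbf{Step 1: isolated vertices and components.} An isolated vertex $i$ contributes a constant per vertex of $G$: $1$ if it carries no loop, and $M_{ii}^{|E(G)|}$ (restricted appropriately) if it does. Its contribution to $Z_M(G)$ can therefore be computed and separated off, and it affects complexity only trivially. For the nontrivial components $C_1,\dots,C_k$, I will use the standard planar component lemma (as in \cite{CaiM23,CaiM24,cai-maran-young}): when $G$ is connected, $Z_M(G)=\sum_j Z_{C_j}(G)$. Thickening and interpolation, which are planar, then separate the summands. This gives $\PlGH(C_j)\le_T \PlGH(M)$ for every $j$. Conversely, if every $C_j$ is tractable, then $\PlGH(M)$ is tractable by multiplicativity over the connected components of $G$.

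\textbf{Step 2: bipartite components.} For a bipartite $C=\left(\begin{smallmatrix}0&B\\B^T&0\end{smallmatrix}\right)$, I will use planar stretching by two edges. This realizes $\left(\begin{smallmatrix}BB^T&0\\0&B^TB\end{smallmatrix}\right)$, and planar input graphs of even length give access to each block separately. I will then apply the non-bipartite analysis below to $BB^T$. The intended conclusion is that $C$ is tractable exactly when $B$ is the bipartite form of a tractable non-bipartite component.

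\textbf{Step 3: connected non-bipartite component with zeros.} I will stretch by $2k$ edges and interpolate (Perron–Frobenius). Suitable powers $C^{\ell}$ are entrywise positive, and I use these to apply Theorem~\ref{thm:main}, realizing positive matrices such as $C^{2}$ or their thickenings. Full rank is the delicate point. If $C$ is singular, I plan to interpolate along the eigen-decomposition and so realize the projection onto the nonzero-eigenvalue part. I will then argue by a twin-reduction that $\PlGH(C)$ is equivalent to $\PlGH$ of a matrix of smaller order, plus a rank-one modification. This is the source of the ``rank-one modification'' in the statement: after collapsing twins and removing the Perron part, the remaining matrix must have the Ising-tensor form~\eqref{eq:ising-tensor}. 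For hardness, I will show the following. If a realizable positive full-rank matrix obtained from $C$ (for example $C^{\circ s}$ composed with stretchings, using the algebra $\mathcal{A}$) is not of form~\eqref{eq:ising-tensor}, then Theorem~\ref{thm:main} yields \#P-hardness. If every such matrix is of that form, then solving the functional equations forces $C$ itself to be a rank-one modification of a matchgate tensor. Tractability in that case comes from FKT with a holographic transformation, where the rank-one part is absorbed by a vertex-weight or gadget argument.

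\textbf{Main obstacle.} The hardest step is the converse in Step 3: lifting the ``Ising tensor'' conclusion from the realized positive matrices back to $C$. Knowing that $C^2$, say, is of form~\eqref{eq:ising-tensor} does not immediately pin down $C$. My first attempt is to use a one-parameter family of realizable matrices, namely thickenings $C^{\circ t}$ on the positive support combined with stretching. I will require each member to be of form~\eqref{eq:ising-tensor} after a permutation that must be constant in $t$ by continuity. Comparing the resulting entrywise identities as analytic functions of $t$ should then determine the zero pattern and the values of $C$.
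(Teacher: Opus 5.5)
Your skeleton matches the paper's: split off components, separate bipartite from non-bipartite components, and feed realizable positive full-rank matrices to \cref{thm:main}. But the steps that carry the weight are either missing or would fail.

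\textbf{Non-bipartite components.} Three ingredients are missing.

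(i) \emph{Full rank.} \cref{thm:main} says nothing about singular matrices, and your remedy does not produce a full-rank one. The projection onto the nonzero-eigenvalue part of a singular $C$ has the same rank as $C$, and spectral interpolation (\cref{lem:spectral-powers}) needs positive definiteness. Collapsing twins only removes proportional rows, so the compressed matrix can still be singular. The paper's device works for a positive $R$ with pairwise nonproportional rows. The matrix $A=\operatorname{diag}(h)^{-1}R^2\operatorname{diag}(h)^{-1}$, with $h_i=\sqrt{(R^2)_{ii}}$, is realizable, has unit diagonal, and by Cauchy--Schwarz has off-diagonal entries $<1$. Hence $B=A^{\circ k}$ is strictly diagonally dominant and invertible. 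Along the analytic family $R(t)=B^{\circ t}\circ R^{\circ(1-t)}$, \cref{thm:main} applies near $t=1$. Finiteness of the bijections $[m]\to\{\pm1\}^d$ fixes one $\phi$ on a convergent sequence, and analytic continuation then gives $\log_\circ R(t)\in\mathcal V_\phi$ at $t=0$ (\cref{lem: uniform tensor}, \cref{lem:nn-positive-kernel}). Your ``main obstacle'' idea has this spirit, but $C^{\circ t}$ combined with stretching has no guaranteed full-rank member.

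(ii) \emph{Returning from the power.} The paper takes $\ell$ odd, so $C$ is the unique real symmetric $\ell$-th root of $C^\ell$, and odd roots of $2\times2$ Ising factors are again Ising (\cref{lem:nn-nonbip-core}). With $C^2$, as you suggest, square roots are not unique.

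(iii) \emph{The rank-one part.} It is $K\otimes ww^{\tt T}$, not a removed Perron component. After compressing twin classes $P_a$ via $C_{ab}=M_{ij}/(s_is_j)$ with $s_i=\sqrt{(M^2)_{ii}}$ (\cref{lem: realize compressed matrix C}), you must still show that the multisets $\{s_i: i\in P_a\}$ agree for all $a$. Otherwise they act as nonuniform vertex weights. The paper uses the weighted quotients $\operatorname{diag}(u(f))\,C\,\operatorname{diag}(u(f))$ for realizable unaries $f$ (\cref{lem:nn-quotient}). These must again be Ising, hence have constant diagonal. Taking $f_i=n_a/s_i$ and $f_i=s_i^{j-1}$ ($i\in P_a$) and applying Newton's identities finishes it (\cref{lem:nn-moments-nonbipartite}).

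\textbf{Bipartite components.} Passing to $BB^{\tt T}$ and $B^{\tt T}B$ cannot work, because they do not determine $B$. Take $K=\left(\begin{smallmatrix}a&b\\b&a\end{smallmatrix}\right)^{\otimes 2}$ and a rotation $Q$ by a small angle $\theta\neq0$ inside its two-dimensional eigenspace for $a^2-b^2$. Then $B=KQ$ is entrywise positive and invertible, with $BB^{\tt T}=B^{\tt T}B=K^2$ an Ising tensor. Its first row is $\bigl(a^2+\alpha,\,ab-\beta,\,ab+\beta,\,b^2-\alpha\bigr)$, where $\alpha=\tfrac{(a^2-b^2)(\cos\theta-1)}{2}$ and $\beta=\tfrac{(a^2-b^2)\sin\theta}{2}$. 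This row admits no split into two pairs with equal products, whereas every row of a permuted Ising tensor does, so $B$ is not of the tractable form.

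The paper instead proves \cref{lem:nn-matching} directly. It sets $L=\log_\circ D-cJ$, shows $L$ has zero line sums, and puts $U(t)=\exp_\circ(tL)$. Both $U(t)U(t)^{\tt T}$ and $U(t)^{\tt T}U(t)$ are then Ising for all $t\neq0$, hence lie in fixed cube spaces. The second derivative at $t=0$ gives $\operatorname{im}L\subseteq\operatorname{im}S_\ell$ and $\operatorname{im}L^{\tt T}\subseteq\operatorname{im}S_r$, so $L=S_\ell AS_r^{\tt T}$. The $\log\cosh$/$\tanh^2$ identity at $t=1$ then forces $A$ to be a weighted permutation matrix. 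Zeros are handled by an odd power $R^{2k+1}$ and root extraction (\cref{lem:nn-bip-core}). Twin weights are handled by constant row and column products on each side (\cref{lem:nn-moments-bipartite}).

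Two minor points on Step 1. A loopless isolated vertex contributes $0$, not $1$, whenever $G$ has an edge. Components are separated by the planar first pinning lemma (\cref{lem:nn-component}), not by thickening alone.
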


\section{Preliminaries}\label{sec:preliminaries}

\subsection{Definitions and Notations}
For positive integer $n$, let $[n]$ denote the set $\{1,2,\ldots, n\}.$
Let $\Sym{n}{}{X}$ denote the set of
$n \times n$ symmetric matrices with entries from
$X \subseteq \mathbb{R}$.
For example, we can have $X = \mathbb{R}$, $\mathbb{R}_{\geq 0}$ or
$\mathbb{R}_{\neq 0}$.
We then let $\Sym{n}{F}{X} \subset \Sym{n}{}{X}$ 
and $\Sym{n}{pd}{X} \subset \Sym{n}{F}{X}$ denote, respectively, the
subsets of full rank and positive definite symmetric matrices.
We use $A\circ B$ to denote the entrywise product of $A$ and $B$.
We consider arbitrary real valued
matrices (rather than just algebraic real valued matrices)\footnote{However, our results stay strictly within the classical Turing machine model in terms of bit-complexity.
 See Cai, Maran, and Young~\cite[Appendix A]{cai-maran-young}. }.
We use $\leq_{T}$ and $\equiv_T$ to denote polynomial-time Turing reducibility and P-time Turing equivalence, respectively.
We use $\mathbb{I}[E]$ to denote the indicator of the event $E$.

Given $M\in\Sym{n}{}{\R}$, the associated \emph{partition function}
$Z_{M}(G)$ for an input undirected multi-graph $G = (V,E)$ is
$$Z_{M}(G) = \sum_{\sigma: V \rightarrow [n]} \prod_{(u, v) \in E} M_{\sigma(u) \sigma(v)}.$$
Each $M$ defines a
computational problem denoted by $\GH(M)$: given an input graph $G$,
output $Z_{M}(G)$.
Let $\PlGH(M)$ denote the problem $\GH(M)$ when the
input graphs $G$ are restricted to planar graphs.
\iffalse
The symmetric matrix $M$ is also called the edge weight.
The vertex weight is a vector $w=(w_1,\ldots, w_n)\in \Rp^n.$ 
The associated partition function $Z_{M,w}(G)$ for an input undirected multi-graph $G=(V,E)$ is
$$Z_{M,w}(G) = \sum_{\sigma: V \rightarrow [n]} \prod_{(u, v) \in E} M_{\sigma(u) \sigma(v)}\prod_{v\in V}w_{\sigma(v)}.$$
\fi
It is sometimes convenient to view the $\PlGH$ problem as a \emph{planar Holant} problem, especially when we allow the edge weight $M$ to be asymmetric.
We next introduce signatures and Holant problems.

A (real) {\it constraint function} $f$, or a {\it signature}, of arity $k$ over domain $[n]$
is a map $[n]^k  \rightarrow \mathbb{R}$.
Fix a set $\mathcal{F}$ of constraint functions. 
A {\it signature grid}
$\Omega=(G, \pi)$
 is a tuple, where $G = (V,E)$
is a graph called the {\it underlying graph} of $\Omega$, $\pi$ labels each $v\in V$ with a function
$f_v\in\mathcal{F}$ of arity ${\operatorname{deg}(v)}$,
and labels the incident edges
$E(v)$ at $v$ with input variables of $f_v$.
We consider all edge assignments $\sigma:E\to [n]$,
each gives an evaluation
$\prod_{v\in V}f_v(\sigma|_{E(v)})$, where $\sigma|_{E(v)}$
denotes the restriction of $\sigma$ to $E(v)$. 
The counting problem on an instance $\Omega$ is to compute the {partition function}

\begin{equation}\label{eq: def of Holant value}
\Holant_{\Omega}=\Holant({\Omega};\mathcal{F})=\sum_{\sigma:E\rightarrow [n]}\prod_{v\in V}f_v(\sigma|_{E_{(v)}}).
\end{equation}

The Holant problem $\Omega \mapsto \Holant_{\Omega}$ is a computational problem parameterized by a set $\mathcal{F}$ and is denoted by $\Holant(\mathcal{F})$.
If $\mathcal{F}=\{f\}$ is a singleton set, for simplicity, we write $\{f\}$ as $f$ directly and also write $\mathcal{F} \cup \{g\}$ as $\mathcal{F}\cup g$ or $\mathcal{F},g$.
We denote by
$\PlHolant(\mathcal{F})$  the restriction of $\Holant(\mathcal{F})$ to planar graphs $G$.
We use $\Holant{(\mathcal{F}\mid \mathcal{G})}$ to denote the Holant problem over signature grids with a bipartite graph $H = (U,V,E)$,
where each vertex in $U$ (or $V$) is assigned a signature in $\mathcal{F}$ (or $\mathcal{G}$
respectively).
$\PlHolant{(\mathcal{F}\mid \mathcal{G})}$ denotes its planar restriction.
Let
$\mathcal{EQ} = \{{=}_1, {=}_2, {=}_3, \dotsc\}$ be the set of \textsc{Equality} signatures of all arities.

For $M\in \Sym{n}{}{\R}$, the problem $\PlGH(M)$ is equivalent to $\PlHolant(\mathcal{EQ}\mid m)$, where $m$ is the binary signature over domain $[n]$ satisfying $m(i,j)=M_{ij},\,\forall i,j\in[n].$ 
Indeed, given a planar graph $G=(V,E)$, every vertex in $V$ may be viewed as an \textsc{Equality} signature, and we label the signature $m$ in the middle of every edge in $E.$
When $M$ is symmetric (as a matrix), the signature $m$ is symmetric in that $m(i,j)=m(j,i).$
If we allow $m$ be asymmetric as a binary signature, then the problem $\PlHolant(\mathcal{EQ}\mid m)$ naturally generalizes $\PlGH(M).$
%Also, given a matrix $M$ and a vertex weight $w$, the problem asking the partition function $Z_{M,w}(G)$ for planar input graphs $G$ can be reduced to the problem $\PlHolant(\mathcal{EQ}\mid m,w)$, where $m$ is the binary signature as above, and $w$ is viewed as a unary signature.
For convenience, we do not distinguish between the matrix $M\in \mathbb{R}^{n\times n}$ and the binary signature $m(\cdot,\cdot)$, in the sense that  $\PlHolant(\mathcal{EQ}\mid M)$ is the same as the problem $\PlHolant(\mathcal{EQ}\mid m)$.

\iffalse
We start with a trivial
lemma which shows that scaling $M$ has no effect on the complexity of $\PlGH(M)$:

\begin{restatable}{lemma}{MequivalentCM}\label{lemma: MequivalentCM}
    Let $M \in \Sym{n}{}{\mathbb{R}}$, and
    let $c \in \mathbb{R}_{\neq 0}$.
    Then $\PlGH(M) \equiv_T \PlGH(cM)$.
\end{restatable}
\begin{proof}
    Note that given any $G = (V, E),$
    $$Z_{cM}(G) = \sum_{\sigma: V \rightarrow [q]}
    \prod_{(u, v) \in E}(cM)_{\sigma(u)\sigma(v)}
    = c^{|E|} \sum_{\sigma: V \rightarrow [q]}
    \prod_{(u, v) \in E}cM_{\sigma(u)\sigma(v)}
    = c^{|E|}Z_{M}(G).$$
    Therefore, oracle access to one of $\PlGH(M)$ or $\PlGH(cM)$
    gives the other.
\end{proof}
\fi

In the following proofs, $n$ is a fixed positive integer and $M\in \Sym{n}{F}{\Rp}$ is fixed unless otherwise stated.
We will prove \cref{thm:main} for this particular $M$ in the body text.

\subsection{Linear Algebra}

\paragraph{Matrix conventions and spectral theorems}
\mbox{} \\
For every integer $m\geq1$, let $\mathbf1=(1,\ldots,1)^{\tt T}\in\mathbb R^m$, let $J_m=\mathbf1\mathbf1^{\tt T}$ be the all-ones matrix, $O_m$ be the all zero matrix, $U_m=J_m/m$ and $P_m=I_m-U_m$.
When the context is clear, we omit the subscript $m.$
The matrices $U$ and $P$ are orthogonal projections to 
$\operatorname{Span}_{\mathbb R}\{\mathbf1\}$ and 
$\mathbf{1}^\perp$, respectively.
%$\operatorname{Span}_{\mathbb R}\{\mathbf1\}^{\perp}$, respectively.

\begin{definition}\label{def: entrywise function}
Let $f: \mathbb{R} \to \mathbb{R}$. For a real matrix or vector $K$, let $f_\circ(K)$ denote the entrywise application of $f$. In particular, $\exp_\circ$ and $\log_\circ$ denote entrywise exponentiation and logarithm, respectively, where $\log_\circ$ is defined when all entries of $K$ are strictly positive.
\end{definition}

\begin{definition}\label{def:matrix-conventions}
The matrix inner product and its norm are
\[
\langle C,D\rangle=\operatorname{Tr}(C^{\tt T}D)
=\sum_{x=1}^n\sum_{y=1}^n C_{xy}D_{xy},
\qquad \|C\|=\sqrt{\langle C,C\rangle}
\quad(C,D\in\mathbb R^{n\times n}).
\]
\end{definition}

\begin{theorem}[Real spectral theorem; {\cite[Corollary 2.5.11(a)]{horn-johnson}}]\label{thm:spectral-theorem}
For every integer $n\geq1$ and every $A\in\Sym{n}{}{\R}$, let $\lambda_1,\ldots,\lambda_s$ be its distinct real eigenvalues, where $1\leq s\leq n$. 
Then $\R^n=E_{\lambda_1}\oplus \cdots \oplus E_{\lambda_s}$
as a direct sum of orthogonal eigen-subspaces, where  $E_{\lambda_i}=\ker(\lambda_iI-A)\subseteq \R^n$.
\end{theorem}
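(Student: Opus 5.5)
We argue by induction on $n$, and the induction will give an orthonormal eigenbasis. Grouping that basis by eigenvalue then yields the orthogonal decomposition $\R^n=E_{\lambda_1}\oplus\cdots\oplus E_{\lambda_s}$.

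The first step is to show that $A\in\Sym{n}{}{\R}$ has a real eigenvalue with a real eigenvector. We use the Rayleigh quotient. The function $x\mapsto x^{\tt T}Ax$ is continuous on the compact unit sphere $S^{n-1}\subset\R^n$, so it attains a maximum $\lambda$ at some unit vector $v$. For any $w\perp v$ and small real $\varepsilon$, consider $x_\varepsilon=(v+\varepsilon w)/\|v+\varepsilon w\|$. Expanding $x_\varepsilon^{\tt T}Ax_\varepsilon$ to first order in $\varepsilon$ and using $A=A^{\tt T}$, the first-order term is $2\varepsilon\, w^{\tt T}Av$. Maximality at $\varepsilon=0$ forces this term to vanish, so $w^{\tt T}Av=0$ for all $w\perp v$. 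Hence $Av\in\operatorname{Span}_\R\{v\}$, i.e.\ $Av=\lambda v$.

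Alternatively, one can take a complex eigenpair $Az=\mu z$ and compute $\bar z^{\tt T}Az$ in two ways. This shows $\mu=\bar\mu$, and since $A-\mu I$ is then a real singular matrix, it has a real null vector.

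The second step is the induction. Let $W=v^\perp$. For $w\in W$ we have $v^{\tt T}Aw=(Av)^{\tt T}w=\lambda v^{\tt T}w=0$, so $AW\subseteq W$. Choose an orthonormal basis of $W$. In that basis, the restriction $A|_W$ is represented by a real symmetric $(n-1)\times(n-1)$ matrix. By the induction hypothesis it has an orthonormal eigenbasis, and adjoining $v$ gives an orthonormal eigenbasis of $\R^n$ for $A$.

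The last step is to group the basis vectors by their eigenvalues, of which there are $s\le n$ distinct values $\lambda_1,\dots,\lambda_s$. Their spans $V_i$ satisfy $V_i\subseteq E_{\lambda_i}$ and $\R^n=\sum_i V_i$. For orthogonality, if $x\in E_{\lambda_i}$ and $y\in E_{\lambda_j}$ with $i\ne j$, then $\lambda_i x^{\tt T}y=x^{\tt T}Ay=\lambda_j x^{\tt T}y$, which forces $x^{\tt T}y=0$. So the eigenspaces are pairwise orthogonal, and their sum is therefore direct. Since this direct sum contains $\sum_i V_i=\R^n$, we get $\R^n=E_{\lambda_1}\oplus\cdots\oplus E_{\lambda_s}$, and in fact $E_{\lambda_i}=V_i$ by counting dimensions.

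The only step that is not routine bookkeeping is the existence of a real eigenvector. The compactness and variational argument handles it cleanly, and the complex-eigenvalue argument is a fallback.
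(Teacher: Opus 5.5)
Your proof is correct and self-contained. The paper gives no proof of this statement; it simply imports it from Horn--Johnson. So the meaningful comparison is with the route taken in that source.

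There, the real symmetric case is a corollary of the spectral theory of normal matrices, which rests on (real) Schur triangularization. The eigenvalue that starts the triangularization is supplied by the fundamental theorem of algebra. Symmetry then forces the triangular or $2\times 2$-block normal form to be real diagonal.

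You instead get a real eigenvector directly by maximizing $x^{\tt T}Ax$ over the compact unit sphere, and then deflate to the invariant complement $v^{\perp}$. Compactness thus replaces the fundamental theorem of algebra, and you never leave $\mathbb{R}^n$. Your complex-eigenpair fallback is exactly the point where the textbook's reliance on that theorem enters.

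What each route buys:
\begin{itemize}
\item Your route is shorter for this case. It also yields the extra fact that the largest eigenvalue equals $\max_{\|x\|=1}x^{\tt T}Ax$, the starting point of Courant--Fischer.
\item Its limitation is that the variational step uses $A=A^{\tt T}$ essentially and does not extend to normal matrices: rotation of the plane by $\pi/2$ is normal but has no real eigenvector.
\item The textbook machinery delivers unitary diagonalization of all normal matrices and the real block normal form in one stroke.
\end{itemize}

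One small step you might state explicitly at the end: the eigenvalues occurring in your orthonormal eigenbasis are all the distinct real eigenvalues of $A$. This is immediate from your orthogonality argument. The eigenspace of any further eigenvalue would be orthogonal to subspaces already summing to $\mathbb{R}^n$, and hence zero.
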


\begin{theorem}[Simultaneous spectral theorem; {\cite[Theorem 2.5.15]{horn-johnson}}]\label{thm:simultaneous-spectral}
For every integer $n\geq1$, any pairwise commuting family of real symmetric $n\times n$ matrices has a common orthonormal eigenbasis of $\mathbb R^n$.
\end{theorem}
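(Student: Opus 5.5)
We prove this by strong induction on the dimension $n$, applying the induction hypothesis to the restriction of the whole family to an eigenspace of a single non-scalar member. The family $\mathcal F$ may be infinite; this causes no difficulty, because the induction is on $n$ and never on the size of the family. The base case $n=1$ is trivial: every matrix is a scalar and $\{1\}$ is an orthonormal eigenbasis.

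For the inductive step, suppose first that every $A\in\mathcal F$ is a scalar multiple of $I_n$. Then any orthonormal basis of $\R^n$ (for instance the standard one) is a common eigenbasis, and we are done. Otherwise, fix a non-scalar $A\in\mathcal F$. By \cref{thm:spectral-theorem} we have
\[
\R^n=E_{\lambda_1}\oplus\cdots\oplus E_{\lambda_s},
\]
an orthogonal direct sum of eigenspaces with $s\ge 2$. Hence each $E_{\lambda_i}$ has dimension $n_i<n$.

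Next we check that each eigenspace is invariant under the whole family. For any $B\in\mathcal F$ and $v\in E_{\lambda_i}$, commutativity gives $A(Bv)=B(Av)=\lambda_i Bv$, so $Bv\in E_{\lambda_i}$. Thus every $B\in\mathcal F$ restricts to a linear map $B|_{E_{\lambda_i}}$ on $E_{\lambda_i}$, and this restriction is self-adjoint for the standard inner product inherited from $\R^n$. Fix an orthonormal basis $Q_i\in\R^{n\times n_i}$ of $E_{\lambda_i}$. Then $B_i:=Q_i^{\tt T}BQ_i$ is a real symmetric $n_i\times n_i$ matrix representing $B|_{E_{\lambda_i}}$. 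Because $Q_iQ_i^{\tt T}$ acts as the identity on $E_{\lambda_i}$ and $E_{\lambda_i}$ is invariant under every member of $\mathcal F$, we have $B_iC_i=Q_i^{\tt T}BCQ_i=C_iB_i$. So $\{B_i: B\in\mathcal F\}$ is again a commuting family of real symmetric matrices.

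By the induction hypothesis, this family has a common orthonormal eigenbasis $w_1,\dots,w_{n_i}$ of $\R^{n_i}$. The vectors $Q_iw_j$ then form an orthonormal basis of $E_{\lambda_i}$ consisting of common eigenvectors of all $B\in\mathcal F$, since $B(Q_iw_j)=Q_iB_iw_j$. Taking the union over $i=1,\dots,s$ gives a basis of $\R^n$ that is orthonormal, because the $E_{\lambda_i}$ are mutually orthogonal, and this is the required common orthonormal eigenbasis.

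The only delicate step is transferring the commuting-symmetric structure to the restricted operators. This comes down to two facts: invariance of $E_{\lambda_i}$ under every $B$, and the identity $Q_iQ_i^{\tt T}BQ_i=BQ_i$, which follows from that invariance.
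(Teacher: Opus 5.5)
Your argument is correct and complete. Invariance of each $E_{\lambda_i}$ under the whole family, together with $Q_iQ_i^{\tt T}BQ_i=BQ_i$, is exactly what you need for both steps: it shows that the compressions $B_i=Q_i^{\tt T}BQ_i$ still commute, and it lets you lift their common eigenvectors back via $B(Q_iw_j)=Q_iB_iw_j$.

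The paper gives no proof of its own; it only cites Horn--Johnson. There the result comes out of the general theory of commuting families: such a family has a common eigenvector, hence can be simultaneously orthogonally triangularized, and a triangular matrix that is symmetric (more generally, normal) must be diagonal.

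Your route instead splits $\mathbb{R}^n$ along the eigenspaces of one non-scalar member and inducts on dimension. It buys several things:
\begin{itemize}
\item it uses nothing beyond the single-matrix real spectral theorem (\cref{thm:spectral-theorem}), which the paper already states;
\item it works entirely over $\mathbb{R}$;
\item it handles infinite families such as $Z_{\operatorname{sym}}(\mathcal{A})$ directly, without first passing to a finite spanning set.
\end{itemize}
The triangularization route needs more machinery but is more general: it also gives simultaneous triangularization of arbitrary commuting families (with real eigenvalues, in the real case) and simultaneous unitary diagonalization of commuting normal families over $\mathbb{C}$.
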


\paragraph{Convex analysis}
\begin{definition}[Affine span and convex hull]\label{def: affine span}
    Let $V$ be a real vector space, and let $v_1,\ldots, v_n\in V.$
    The (real) \emph{affine span} of $v_1,\ldots, v_n$ is 
    \[\operatorname{Aff}_{\R}\{v_1,\ldots, v_n\}=\left\{\sum_{i=1}^n a_iv_i\mid a_i\in \R \text{ for every } i\in[n]\text{ and } \sum_{i=1}^n a_i=1\right\}.\]
    The (real) \emph{convex hull} of $v_1,\ldots, v_n$ is 
    \[\operatorname{Conv}_{\R}\{v_1,\ldots, v_n\}=\left\{\sum_{i=1}^n a_iv_i\mid a_i\ge 0 \text{ for every } i\in[n]\text{ and } \sum_{i=1}^n a_i=1\right\}.\]
\end{definition}

\begin{theorem}[{\cite[Theorem 6.9]{convex}}]\label{thm: convex hull nonempty}
    Let $V$ be a real vector space, and let $v_1, \ldots, v_n \in V$. If $u = \sum_{i=1}^n a_i v_i$ with $a_i \in (0,1)$ for all $i \in [n]$ and $\sum_{i=1}^n a_i = 1$, then $u$ belongs to the relative interior of $\operatorname{Conv}_{\mathbb{R}}\{v_1, \ldots, v_n\}$ in $\operatorname{Aff}_{\mathbb{R}}\{v_1, \ldots, v_n\}$.
\end{theorem}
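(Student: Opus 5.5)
The plan is to prove the statement directly from the definitions. We exhibit a neighborhood of $u$ inside $\operatorname{Aff}_{\R}\{v_1,\ldots,v_n\}$ that is contained in $\operatorname{Conv}_{\R}\{v_1,\ldots,v_n\}$. The idea is that small perturbations of $u$ within the affine span can be absorbed by small perturbations of the strictly positive coefficients $a_i$.

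First, describe the affine span as a translate of a finite-dimensional linear subspace. Put $A=\operatorname{Aff}_{\R}\{v_1,\ldots,v_n\}$ and
\[
W=\Big\{\textstyle\sum_{i=1}^n b_iv_i \;\Big|\; b\in\R^n,\ \sum_{i=1}^n b_i=0\Big\}.
\]
Since $u\in A$, subtracting two affine combinations shows $A=u+W$. The space $W$ is finite-dimensional even if $V$ is not, so $A$ carries its natural topology, in which all norms on $W$ are equivalent. Relative interior means interior with respect to this topology.

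Second, lift perturbations back to coefficients. Let $H=\{b\in\R^n\mid \sum_i b_i=0\}$ and let $\phi:H\to W$ be the surjective linear map $b\mapsto\sum_i b_iv_i$. Choose a linear right inverse $s:W\to H$ with $\phi\circ s=\mathrm{id}_W$; for example, take a basis of $W$ and pick a preimage of each basis vector. Every linear map between finite-dimensional spaces is continuous, so there is a constant $K$ with $\|s(w)\|_\infty\le K\|w\|$ for all $w\in W$.

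Third, set $\varepsilon=\min_i a_i/K>0$, which is positive because each $a_i\in(0,1)$. For any $w\in W$ with $\|w\|<\varepsilon$, write
\[
u+w=\sum_{i=1}^n\big(a_i+s(w)_i\big)v_i .
\]
The coefficients sum to $1+0=1$, and each satisfies $a_i+s(w)_i>a_i-\min_j a_j\ge 0$. Hence $u+w\in\operatorname{Conv}_{\R}\{v_1,\ldots,v_n\}$. The relative ball of radius $\varepsilon$ around $u$ in $A$ therefore lies in the convex hull, which is the claim.

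The only step needing care is the second one: the representation $\sum_i b_iv_i$ of a vector of $W$ is not unique when the $v_i$ are affinely dependent. Fixing a linear section $s$ resolves this, and no affine independence assumption is needed.
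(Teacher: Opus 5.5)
Your proof is correct, and it takes a different route from the paper. The paper does not prove this statement; it imports it from the convex-analysis literature. There the standard result is more general: it describes the relative interior of the convex hull of finitely many convex sets $C_1,\ldots,C_m$ as the set of combinations $\sum_i \lambda_i c_i$ with $\lambda_i>0$, $\sum_i\lambda_i=1$ and $c_i\in\operatorname{ri}C_i$. Taking singletons $C_i=\{v_i\}$ gives the statement, in fact with equality.

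You instead prove, directly from the definitions, exactly the one inclusion the paper needs, namely its use with $a_i=1/n$ to show $0\in\operatorname{int}\mathcal{C}$. The key device is a linear section $s$ of the coefficient map $H\to W$, which lets small perturbations inside the affine span be absorbed into the strictly positive coefficients. This gives a self-contained argument that works verbatim in an arbitrary real vector space. Only the finite-dimensional affine span needs a topology, and you identify this correctly. It also makes explicit that no affine independence of the $v_i$ is needed. The cited route gives the full two-sided characterization, but relies on a heavier theorem typically stated in $\mathbb{R}^n$.

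One small fix is needed. If $W=\{0\}$ (all $v_i$ equal), your bound may hold only with $K=0$, and then $\varepsilon=\min_i a_i/K$ is undefined. Simply take $K>0$, since any larger constant still satisfies $\|s(w)\|_\infty\le K\|w\|$. Alternatively, note that this case is trivial, because the affine span is then $\{u\}$.
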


\begin{definition}[Affine form {\cite[Sections 2.3.2 and 3.1.1]{boyd2004convex}}]\label{def: affine form}
    Let $V$ be a finite dimensional real vector space with an inner product.
    The map $\ell:V\to \R$ is called a (real) affine form, if it satisfies the following equivalent conditions:
    \begin{enumerate}
        \item There exists a linear map $L:V\to \R$ and $c\in \R$ such that $\ell(v)=L(v)+c$ for every $v\in V$;
        \item For all $u,v\in V$ and every $t\in \R$, $\ell(tu+(1-t)v)=t\ell(u)+(1-t)\ell(v)$.
        \item 
        There exists a unique vector $g\in V$ and $c\in \R$ such that $\ell(v)=\braket{g,v}+c.$
    \end{enumerate}
\end{definition}

\subsection{Analytic Functions}

All vector spaces in the analytic arguments are finite dimensional. Matrix spaces and their subspaces carry the norm in \cref{def:matrix-conventions}.

\begin{definition}[Fr\'echet derivatives; {\cite[Sections 3.1--3.2]{buffoni-toland}}]\label{def:frechet-derivative}
Let $V,W$ be finite-dimensional real normed vector spaces, let $\mathcal O\subseteq V$ be open, and let $g:\mathcal O\to W$. At $x\in\mathcal O$, the Fr\'echet derivative $dg(x):V\to W$ is the linear map, when it exists, satisfying
\[
\lim_{\substack{h\to0\\h\ne0}}
\frac{\|g(x+h)-g(x)-dg(x)(h)\|}{\|h\|}=0.
\]
For every integer $k\geq2$, $d^kg(x)$ denotes the $k$-th derivative when it exists, regarded as a $k$-linear map from $V^k$ to $W$. We write $d^0g(x)=g(x)$. For smooth $g$ and every $h\in V$,
$dg(x)(h)=\left.\frac{d}{dt}g(x+th)\right|_{t=0}$.
\end{definition}

\begin{definition}[Real analytic maps;
    {\cite[Definition 4.3.1]{buffoni-toland}}]
Let $V,W$ be finite-dimensional real normed vector spaces,
let $\mathcal O\subseteq V$ be open, and let $g:\mathcal O\to W$.
The map $g$ is \emph{real analytic} if, for every $x\in\mathcal O$,
there exist $\varepsilon>0$ and symmetric $k$-linear maps
$A_k:V^k\to W$ ($k\geq0$) such that
$\sup_{k\geq0}\varepsilon^k\|A_k\|<\infty$ and
\[
g(x+h)=\sum_{k=0}^{\infty}A_k(h,\ldots,h)
\qquad( h\in V,\,\|h\|<\varepsilon).
\]
Here $A_0\in W$ is regarded as a constant map.
\end{definition}

\begin{theorem}[Analytic Taylor expansion; {\cite[Proposition 4.3.4]{buffoni-toland}}]\label{thm:taylor-expansion}
Let $V,W$ be finite-dimensional real normed vector spaces, let $\mathcal O\subseteq V$ be open, and let $g:\mathcal O\to W$ be real analytic. For every $x\in\mathcal O$, there exists $\varepsilon>0$ such that the open ball $B(x,\epsilon)\subseteq\mathcal O$ and
\[
g(x+h)=\sum_{k=0}^{\infty}\frac1{k!}d^kg(x)(h,\ldots,h)
\qquad(h\in V,\ \|h\|<\varepsilon).
\]
The term for $k=0$ is $g(x)$. 
This expansion is unique, and its term of degree $k$ is homogeneous of degree $k$ for every integer $k\geq0$. 
%For every integer $m\geq0$, subtracting the terms of degrees $0,\ldots,m$ leaves $O(\|h\|^{m+1})$ as $h\to0$.
\end{theorem}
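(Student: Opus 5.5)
The statement is the analytic Taylor expansion, and the plan is to deduce it from the definition of real analyticity by identifying the power-series coefficients with the Fréchet derivatives. Fix $x\in\mathcal O$. By definition there are $\varepsilon_0>0$ and symmetric $k$-linear maps $A_k$ with $\sup_k\varepsilon_0^k\|A_k\|<\infty$ and $g(x+h)=\sum_k A_k(h,\dots,h)$ for $\|h\|<\varepsilon_0$. I will shrink $\varepsilon_0$ to some $\varepsilon\le\varepsilon_0/2$ so that $B(x,\varepsilon)\subseteq\mathcal O$, which is possible since $\mathcal O$ is open. On this smaller ball the series converges absolutely and uniformly, because $\|A_k(h,\dots,h)\|\le C(\|h\|/\varepsilon_0)^k\le C2^{-k}$.

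The main step is to show $A_k=\frac1{k!}d^kg(x)$. I will avoid manipulating multilinear series directly and reduce to one variable instead. For a fixed direction $h$ with $\|h\|<\varepsilon$, set $\phi(t)=g(x+th)$ for real $|t|<1+\delta$. Then $\phi(t)=\sum_k t^kA_k(h,\dots,h)$ is a convergent power series in $t$ with vector coefficients. Uniqueness of one-variable power-series coefficients gives $A_k(h,\dots,h)=\phi^{(k)}(0)/k!$.

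It remains to identify $\phi^{(k)}(0)$ with $d^kg(x)(h,\dots,h)$. This needs $g$ to be $C^k$ near $x$ with the chain rule applying. For that I will use the standard fact that a convergent multilinear power series can be differentiated term by term inside a slightly smaller ball. The derivative series $\sum_k kA_k(h,\dots,h,\cdot)$ converges uniformly there by the geometric bound $k(\|h\|/\varepsilon_0)^{k-1}$. Iterating this shows that $g$ is smooth on $B(x,\varepsilon)$ and that $d^kg(x)(h,\dots,h)=k!\,A_k(h,\dots,h)$. This step, justifying the term-by-term differentiation, is the main technical obstacle. I would handle it by the usual comparison with the scalar majorant series $\sum C(r/\varepsilon_0)^k$.

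Finally, symmetric $k$-linear maps are determined by their diagonal values through the polarization identity. Two symmetric $k$-linear maps agreeing on $(h,\dots,h)$ for all $h$ in a ball therefore agree everywhere, by homogeneity. This gives both $A_k=\frac1{k!}d^kg(x)$ and uniqueness of the expansion. Homogeneity of degree $k$ of the $k$-th term is immediate from multilinearity.
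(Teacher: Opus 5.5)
Your proposal is correct and is the standard textbook argument. The paper gives no proof of this statement; it only cites Buffoni--Toland (Proposition 4.3.4), so there is no in-paper argument to compare against.

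Two small remarks. First, your polarization step compares $k!A_k$ with $d^kg(x)$ as symmetric $k$-linear maps. It therefore tacitly uses that $d^kg(x)$ is symmetric. This holds because you have already shown $g$ is $C^k$ near $x$, but you should say so.

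Second, once you justify termwise differentiation of all orders, the identification of coefficients follows directly. The majorant you need is
\[
\bigl\|d^j[A_k(h,\ldots,h)]\bigr\|\le \frac{k!}{(k-j)!}\,C\varepsilon_0^{-j}\left(\frac{\|h\|}{\varepsilon_0}\right)^{k-j},
\]
summable in $k$ for $\|h\|\le\varepsilon_0/2$. Evaluating the $j$-th derivative series at $h=0$ kills every term except $j!A_j$. This gives $d^jg(x)=j!A_j$ as full multilinear maps, symmetry included, so the one-variable detour and polarization are not needed for this part.

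Where that detour does earn its keep is the uniqueness claim. Take a competing expansion $\sum_k B_k(h,\ldots,h)$ that merely converges pointwise on a ball, with no bound assumed on $\|B_k\|$. Restricting to lines and using uniqueness of scalar power-series coefficients gives $B_k(h,\ldots,h)=A_k(h,\ldots,h)$ for small $h$. Homogeneity and polarization then give $B_k=A_k$.
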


\begin{theorem}[Inverse function theorem; {\cite[Theorem 4.5.3]{buffoni-toland}}]\label{thm:analytic-inverse}
Let $V,W$ be finite-dimensional real vector spaces of equal dimension, let $\mathcal O\subseteq V$ be open, and let $g:\mathcal O\to W$ be real analytic. If $x\in\mathcal O$ and $dg(x)$ is invertible, then $g$ has a real analytic inverse $g^{-1}$ between neighborhoods of $y=g(x)$ and $x.$
Further, $dg^{-1}(y)=(dg(x))^{-1}.$
\end{theorem}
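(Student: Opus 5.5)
We would prove \cref{thm:analytic-inverse} in two stages: first obtain a local inverse that is $C^1$ by the standard contraction argument, then upgrade it to a real analytic inverse. To begin, fix bases and identify $V\cong W\cong\R^m$. Replacing $g$ by $h\mapsto (dg(x))^{-1}\bigl(g(x+h)-y\bigr)$ reduces us to the case $x=0$, $y=0$, $dg(0)=I$. This is harmless because composing with invertible affine maps preserves real analyticity and changes derivatives only by the corresponding linear factors.

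\textbf{Step 1 (local $C^1$ inverse).} For each target $w$, consider $T_w(v)=v-g(v)+w$. Since $g$ is analytic, it is in particular $C^1$, and $d g(0)=I$. Hence on a small closed ball $\overline{B(0,r)}$ we have $\|I-dg(v)\|\le 1/2$. For $\|w\|<r/2$ this makes $T_w$ a contraction of $\overline{B(0,r)}$ into itself, and the Banach fixed point theorem gives a unique solution $v=:g^{-1}(w)$. The usual estimates then show that $g^{-1}$ is Lipschitz, then differentiable, with $dg^{-1}(w)=(dg(g^{-1}(w)))^{-1}$. In particular $dg^{-1}(y)=(dg(x))^{-1}$, which is the last claim of the statement, and $g^{-1}$ is $C^1$.

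\textbf{Step 2 (analyticity).} This is the main obstacle. We would handle it by complexification. The power series of $g$ at each point $p$ near $0$, as in \cref{thm:taylor-expansion}, has symmetric multilinear coefficients with $\sup_k\varepsilon^k\|A_k\|<\infty$. Therefore the same series converges on a small complex ball in $\mathbb C^m$ and defines a holomorphic extension $G$ of $g$. By uniqueness of power series (the identity theorem), these local extensions agree on overlaps, so $G$ is holomorphic on a complex neighborhood of the real point $0$, and $dG(0)=I$.

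Repeating Step 1 verbatim over $\mathbb C^m\cong\R^{2m}$ gives a $C^1$ local inverse $G^{-1}$. Its real derivative $(dG)^{-1}$ is complex-linear, so $G^{-1}$ satisfies the Cauchy--Riemann equations and is holomorphic. Holomorphic maps have convergent power series, which yields local expansions of $G^{-1}$ around each point. For real $w$ near $0$, the real solution $g^{-1}(w)$ is also a solution of $G(v)=w$ in the small ball. By uniqueness of the fixed point, $G^{-1}(w)=g^{-1}(w)\in\R^m$.

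Finally, the Taylor coefficients of $G^{-1}$ at real points are derivatives of a map that is real-valued on real inputs, so they are real multilinear maps. This shows $g^{-1}$ is real analytic.

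If one prefers to avoid complex analysis, an alternative for Step 2 is Cauchy's method of majorants. Write $g(v)=v-\phi(v)$ with $\phi=O(\|v\|^2)$ and solve $v=w+\phi(v)$ formally as a power series in $w$. The coefficients are determined recursively and are bounded by those of the scalar majorant equation $s=t+\Phi(s)$, where $\Phi(s)=\sum_{k\ge2} C\varepsilon^{-k}s^k$. That scalar equation has an explicit convergent solution obtained from a quadratic. The formal series then converges near $0$ and coincides with the fixed point of Step 1. We would try the complexification route first, since its only delicate point is the consistency of the local holomorphic extensions, which is settled by uniqueness of power series.
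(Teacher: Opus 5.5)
The paper does not prove this statement. It imports it verbatim from Buffoni--Toland, where it is established in that book's Banach-space setting, so there is no in-paper argument to match. Your proposal is a correct, self-contained proof of the finite-dimensional case. The contraction step yields the $C^1$ local inverse and the formula $dg^{-1}(y)=(dg(x))^{-1}$. Complexification then upgrades regularity: the local inverse of the holomorphic extension $G$ has complex-linear differential, hence is holomorphic, and it agrees with $g^{-1}$ on real points by uniqueness of the fixed point. Your route gives independence from the reference using only finite-dimensional tools (complexified multilinear maps, and power series from Cauchy's formula on polydiscs), which is all the paper ever needs; the citation gives brevity and generality.

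A few bookkeeping points would make the argument airtight.
\begin{itemize}
\item The patching over points $p$ is unnecessary. The series at $0$ alone defines $G$ on a complex ball, with radius reduced by a geometric factor, since $\|A_k^{\mathbb C}(h,\ldots,h)\|\le(\sqrt2)^k\|A_k\|\,\|h\|^k$.
\item Run both contractions with the same radius $r$, chosen for $G$. For real $v$ one has $\|I-dg(v)\|\le\|I-dG(v)\|$, so the real fixed point lies in the complex ball where uniqueness holds.
\item When you turn Cauchy estimates into the paper's condition $\sup_k\varepsilon^k\|A_k\|<\infty$ with multilinear operator norms, polarization costs at most a factor $k^k/k!\le e^k$. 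This is absorbed by shrinking $\varepsilon$.
\item The majorant alternative gives a convergent series only at the center. To get analyticity on a whole neighborhood you also need re-expansion of convergent power series, or a rerun of the argument at each nearby point, where $dg$ stays invertible and the local inverses coincide by uniqueness.
\end{itemize}
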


\begin{theorem}[Analytic identity theorem; {\cite[(9.4.3), p. 208]{dieudonne1969}}]\label{thm:analytic-identity}
Suppose $U\subseteq\mathbb R^m$ is a domain, and $g:U\to \R$ is real analytic.
If $g(x)=0$ on a nonempty open subset of $U$, then $g(x)=0$ for every $x\in U$.
\end{theorem}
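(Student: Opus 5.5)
The plan is the standard connectedness argument, with $U$ a domain meaning a nonempty connected open subset of $\mathbb R^m$. Define
\[
S=\bigl\{x\in U \,:\, d^kg(x)=0 \text{ for every integer } k\geq 0\bigr\}.
\]
I will show that $S$ is nonempty, open in $U$, and closed in $U$. Connectedness of $U$ then forces $S=U$, and in particular $g=d^0g\equiv 0$ on $U$.

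\emph{Nonemptiness.} Let $W\subseteq U$ be the nonempty open set on which $g$ vanishes. On $W$, every Fr\'echet derivative of $g$ is the derivative of the zero function, because derivatives at a point only depend on values in a neighborhood. Hence $d^kg(x)=0$ for all $k$ and all $x\in W$, so $W\subseteq S$.

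\emph{Openness.} Let $x\in S$. By \cref{thm:taylor-expansion} there is $\varepsilon>0$ with $B(x,\varepsilon)\subseteq U$ and
\[
g(x+h)=\sum_{k\ge0}\tfrac1{k!}\,d^kg(x)(h,\ldots,h)=0 \quad\text{for } \|h\|<\varepsilon.
\]
So $g$ vanishes on the open ball $B(x,\varepsilon)$. By the nonemptiness argument applied to this ball, every derivative vanishes at every point of it. Thus $B(x,\varepsilon)\subseteq S$, and $S$ is open.

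\emph{Closedness.} Here I need that a real analytic map is $C^\infty$, i.e.\ each $d^kg$ exists and is continuous on $U$. I would obtain this from the local power series: a convergent series of symmetric multilinear terms with $\sup_k\varepsilon^k\|A_k\|<\infty$ can be differentiated termwise on a smaller ball. Each differentiated series converges uniformly there, so each $d^kg$ is continuous near every point. Given this, $S=\bigcap_k (d^kg)^{-1}(0)$ is an intersection of closed subsets of $U$, hence closed in $U$.

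\emph{Main obstacle.} The only nontrivial step is the regularity fact used for closedness: termwise differentiability, and hence continuity of all derivatives, of a real analytic map. This is standard and can be cited from the same reference as \cref{thm:taylor-expansion}, since that theorem already presupposes the existence of all $d^kg(x)$. The remaining steps are routine topology.
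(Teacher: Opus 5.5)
Your proposal is correct. It is the standard argument: show that the set where all derivatives of $g$ vanish is nonempty, open (via the local power series) and closed (via continuity of the $d^kg$), then use connectedness of $U$; this is the proof in the cited reference. The paper does not prove the statement itself but cites Dieudonn\'e for it.
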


\subsection{Distance-regular Graphs and the Hamming Classification}

\begin{definition}[Distance-regular graphs and intersection numbers; \cite{DRGbook}]\label{def:distance-regular}
Let $\Gamma$ be a connected simple graph of diameter $D\geq1$, and let $d_\Gamma(x,y)$ denote the length of a shortest path between vertices $x,y$. The graph is \emph{distance-regular} if, for every integer $0\leq i,j,k\leq D$ and every pair of vertices $x,y$ with $d_\Gamma(x,y)=k$, the number
\[
p_{ij}^k:=|\{z\in V(\Gamma)\mid d_\Gamma(x,z)=i \text{ and } d_\Gamma(z,y)=j\}|
\]
is independent of $x$ and $y$. 
These counts $p_{ij}^k$ are called \emph{intersection numbers}.
\end{definition}

\begin{definition}[Hamming graphs; \cite{DRGbook}]\label{def:hamming-graphs}
For integers $D\geq1$ and $Q\geq2$, the Hamming graph $H(D,Q)$ has vertex set $[Q]^D$; two tuples are adjacent exactly when they differ in one position.
\end{definition}

Clearly, the Hamming graph $H(D,Q)$ is distance-regular. 
Egawa \cite{egawa} proved that any distance-regular graph of diameter $D$ with the same intersection numbers as $H(D,Q)$ is isomorphic to the Hamming graph $H(D,Q)$, %except in the case of 
or the Doob graph. 
Doob graphs are defined using the Shrikhande graph:

\begin{definition}[The Shrikhande graph; \cite{DRGbook,egawa}]\label{def:shrikhande-graph}
The Shrikhande graph has vertex set $\mathbb Z_4^2$ and edges joining $x,y\in\mathbb Z_4^2$ exactly when $y-x\in\mathcal D$, where
\(
\mathcal D=\{(1,0),(3,0),(0,1),(0,3),(1,1),(3,3)\}.
\)
%All coordinates and differences in this definition are taken modulo four.
\end{definition}

See \autoref{app:doob} for a figure of the Shrikhande graph.

\begin{definition}[Doob graphs; \cite{Doob1972,egawa}]\label{def:hamming-doob}
A \emph{Doob graph} $I(p,q)$ is a Cartesian product of $p\geq1$ Shrikhande graphs and $q\geq0$ copies of the complete graph $K_4$.
%where $p,s$ are integers. Its diameter is $2p+s$.
\end{definition}

\begin{theorem}[Hamming--Doob classification; \cite{egawa}]\label{thm:hamming-classification}
Let $\Gamma$ be a distance-regular graph which has the same intersection numbers as $H(D,Q)$.
Then one of the following holds:
\begin{enumerate}
    \item $\Gamma$ is isomorphic to $H(D,Q)$; or
    \item 
    $Q=4$ and $\Gamma$ is isomorphic to the Doob graph $I(p,q)$ with $p\geq1$, $q\geq0$ and $2p+q=D$.
\end{enumerate}

\end{theorem}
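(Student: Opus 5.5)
This statement is Egawa's classification, which the paper quotes from \cite{egawa} rather than reproves. If I had to prove it, I would reconstruct $\Gamma$ from its local structure, then glue the local pieces into a Cartesian product. The key point is that the intersection numbers of $H(D,Q)$ pin down valency $k=D(Q-1)$, $a_1=p^1_{11}=Q-2$, and $c_2=p^2_{11}=2$, with $c_i=i$ in general.

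\emph{Step 1: local graphs.} For a vertex $x$, let $\Delta(x)$ be the subgraph induced on its neighbours. Then $\Delta(x)$ is $(Q-2)$-regular on $D(Q-1)$ vertices. The condition $c_2=2$ says that two nonadjacent neighbours $y,z$ of $x$ have exactly one common neighbour besides $x$. I would count triangles and paths of length two through $x$ and combine this with $c_2=2$ to control $\Delta(x)$:
\begin{itemize}
\item If $Q\neq4$, every connected component of $\Delta(x)$ should be a clique $K_{Q-1}$.
\item If $Q=4$, the degree is $2$, so components are cycles. Using $c_2=2$, I would show each component is either a triangle $K_3$ or a hexagon $C_6$. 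The hexagon is exactly the local graph of the Shrikhande graph.
\end{itemize}

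\emph{Step 2: coordinates and product structure.} In the clique case, every edge lies in a unique maximal clique of size $Q$, and every vertex lies in exactly $D$ such cliques. Two cliques through adjacent vertices are declared parallel when they span a ``square'', which is forced by $c_2=2$. I would show that parallelism is an equivalence relation with $D$ classes, and that each class partitions $V(\Gamma)$. Labelling each vertex by which clique of each class contains it gives a map $V(\Gamma)\to[Q]^D$. Using $c_i=i$ and induction on the distance, this map is an isometry, hence $\Gamma\cong H(D,Q)$. A convenient way to organise this is induction on $D$: the subgraph spanned by $D-1$ of the clique directions at $x$ should be geodesically closed and distance-regular with the intersection numbers of $H(D-1,Q)$, and $\Gamma$ is then the product of it with $K_Q$.

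\emph{Step 3: the $Q=4$ case.} Each vertex's neighbourhood is now a mixture of $q$ triangles and $p$ hexagons, with $3q+6p=3D$, i.e.\ $q+2p=D$. I would show that $p$ and $q$ are constant over the graph, since adjacency propagates the local type along edges. A hexagon component at $x$ generates a subgraph which, using $c_2=2$ and $a_1=2$, closes up into a Shrikhande graph. Triangle directions generate $K_4$ factors as before. The same parallel-class and product argument then yields $\Gamma\cong I(p,q)$ when $p\ge1$, and $H(D,4)$ when $p=0$.

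The main obstacle is the global gluing in Steps 2 and 3. One must show that locally defined directions are consistent around every cycle, so that the coordinate map is well defined and injective, and also rule out quotients of $H(D,Q)$ and twisted gluings. My first attempt would be to prove that the ``square'' relation on cliques is transitive, using $c_2=2$ together with $c_3=3$ to forbid inconsistent squares. Then I would build the isometry by induction on distance, as in Egawa's argument.
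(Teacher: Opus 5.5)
The paper does not prove this theorem; it imports it from Egawa. Your proposal therefore has to stand on its own, and as written it is a roadmap whose load-bearing steps are asserted rather than derived.

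\textbf{Step~1 cannot be done by local counting.} Counting triangles and $2$-paths at $x$, using $a_1=Q-2$ and $c_2=2$, cannot give the local classification you claim. Locally, $c_2=2$ only says that two nonadjacent vertices of $\Delta(x)$ have at most one common neighbour inside $\Delta(x)$. Beyond $kb_1=c_2k_2$, the counts yield essentially one identity: the number of induced $2$-paths in $\Delta(x)$ equals the number of $u\in\Gamma_2(x)$ whose two common neighbours with $x$ are adjacent.

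Non-clique local graphs satisfy all of this. For $Q=D=5$, let $\Delta(x)$ be two disjoint Petersen graphs. Both sides of the identity equal $60$. The $100$ nonadjacent pairs lying in different components supply the rest of $\Gamma_2(x)$, and the total is exactly $k_2=kb_1/c_2=160$. Such local graphs genuinely occur: Hall's locally Petersen graph on $63$ vertices (an antipodal $3$-cover of the Kneser graph on $2$-subsets of a $7$-set) is distance-regular with $a_1=3$ and $c_2=2$.

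For $Q=4$, the condition $c_2=2$ only excludes $4$-cycles in $\Delta(x)$. The icosahedron is locally $C_5$ with $a_1=c_2=2$. So already the local step must use the deeper parameters ($c_3=3$, $b_i=(D-i)(Q-1)$, diameter $D$). That is where the substance of Egawa's argument lies, and your sketch gives no mechanism for it. For $Q\in\{2,3\}$ the local step is vacuous, since $\Delta(x)$ is edgeless or a perfect matching, so the difficulty is concentrated at $Q\ge 4$.

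\textbf{Steps~2 and~3 leave the global work undone.} As you acknowledge, transitivity of parallelism, well-definedness and injectivity of the coordinate map, and the exclusion of quotients are announced but not proved. Step~3 also relies on two claims that do not follow from $a_1=2$ and $c_2=2$.
\begin{itemize}
\item \emph{Constancy of $(p,q)$ over $V(\Gamma)$.} Adjacency does show that an edge is of triangle type at one end if and only if it is at the other. It says nothing about how many hexagons the two endpoints carry.
\item \emph{A hexagon direction closes up into a Shrikhande graph.} Large quotients of the triangular lattice are also locally $C_6$ with $a_1=2$. What singles out the $\mathbb{Z}_4^2$ quotient is that a \emph{straight} distance-$2$ pair $v,v+2e$ gets its second common neighbour $v-e$ inside the subgraph. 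In $\Gamma$ that neighbour could a priori lie outside the subgraph generated by the hexagon. You would have to prove that this subgraph is closed and that hexagon directions at adjacent vertices match.
\end{itemize}
These points are the actual content of Egawa's paper; without them the proposal does not establish the theorem.
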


In \cref{sec:hamming} we will define coordinate graphs $\{\Gamma_\omega\}_{\omega\in \Omega}$ and show that each of them has the same intersection numbers as $H(D_\omega,Q_\omega)$.
The case $\Gamma_\omega\cong H(D_\omega,Q_\omega)$ is used to establish the dichotomy for $\PlGH(M)$ in \cref{sec:dichotomy}, and the Doob case is excluded by directly showing $\PlGH(M)$ is \#P-hard in \autoref{app:doob}.

\subsection{Known Dichotomies and Hardness Results}

\begin{theorem}[Guo--Williams~{\cite[Theorem 2.11]{guo-williams}}]\label{thm:boolean-source}
Let $W=\left(\begin{smallmatrix}x&y\\y&z\end{smallmatrix}\right)\in \Sym{2}{F}{\Rp}$. 
Then $\PlGH(W)$ is \#P-hard unless $x=z$, in which case $\PlGH(W)$ is tractable.
\end{theorem}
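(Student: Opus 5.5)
The plan is to place $\PlGH(W)$ inside the Boolean planar counting-CSP framework and read off the answer from the known planar \#CSP dichotomy. As observed above, $\PlGH(W)$ is the same problem as $\PlHolant(\mathcal{EQ}\mid W)$. Over the Boolean domain, with all equalities available on one side, this is exactly planar $\#\mathrm{CSP}(W)$ for the single symmetric binary signature $W=[x,y,z]$. We may also normalize by a positive scalar, since $Z_{cW}(G)=c^{|E|}Z_W(G)$. For planar Boolean \#CSP there is a three-way classification: the problem is tractable iff the signature set is of affine type, of product type, or becomes matchgate-realizable after the holographic transformation by $H=\frac{1}{\sqrt2}\left(\begin{smallmatrix}1&1\\1&-1\end{smallmatrix}\right)$ (the only relevant transformation when $\mathcal{EQ}$ is present). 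Otherwise it is \#P-hard. The proof therefore reduces to checking these three classes for $W$.

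\textbf{Tractable side ($x=z$).} Write $W=\left(\begin{smallmatrix}x&y\\y&x\end{smallmatrix}\right)$. Apply the holographic transformation by $H$. Each equality $=_k$ becomes the signature that is nonzero exactly on even-weight inputs, which is a matchgate signature. The edge signature becomes $H^{\tt T}WH=\mathrm{diag}(x+y,\,x-y)$, again a generalized equality and hence matchgate-realizable. Holant's theorem then expresses $Z_W(G)$ as a planar matchgate computation, which FKT evaluates in P-time. Equivalently, this is the planar zero-field Ising model.

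\textbf{Hard side ($x\neq z$).} Full rank gives $xz\neq y^2$, so $W$ is non-degenerate.
\begin{enumerate}
\item Product type fails: a non-degenerate product-type binary signature must be supported on $\{00,11\}$ or on $\{01,10\}$, but all entries of $W$ are positive.
\item Affine type fails: up to a scalar, an affine signature takes values in $\{0,\pm1,\pm i\}$ on an affine support. Since $W$ is positive, it would have to be a constant multiple of $J_2$, which is degenerate.
\item Matchgate-transformable fails: after the transformation by $H$, the parity condition for matchgates forces $(H^{\tt T}WH)_{01}=0$. A direct computation gives $(H^{\tt T}WH)_{01}=\frac{x-z}{2}$, which is nonzero.
\end{enumerate}
Hence $W$ lies in none of the tractable classes, and hardness follows.

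The main obstacle is the third check: one must make sure that no other admissible holographic basis compatible with $\mathcal{EQ}$ could render $W$ matchgate-realizable, since the dichotomy is stated up to the full set of allowed transformations. If I wanted to avoid relying on the full planar \#CSP dichotomy, I would instead argue directly by interpolation. Stretching realizes $W^k$. Since the Perron eigenvalue strictly dominates (which uses $x\neq z$ and positivity), the normalized powers $W^k$ tend to a rank-one matrix. Interpolating over $k$ then produces arbitrary unary weights $\mathrm{diag}(w_0,w_1)$, and hence the planar Ising model with a nonzero external field. That problem is \#P-hard, e.g.\ by reducing planar weighted independent set or vertex cover to it.
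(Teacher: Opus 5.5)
Your proposal is correct and matches the paper's treatment, which does not prove this result but imports it from the planar Boolean \#CSP literature (Guo--Williams). Specializing that dichotomy to a single positive full-rank $[x,y,z]$ requires exactly your three membership checks, though in the matchgate check you should also exclude the odd-parity alternative, which fails because $(H^{\tt T}WH)_{00}=\tfrac{x+2y+z}{2}>0$; your worry about other holographic bases is also unnecessary, since the tractable classes in that dichotomy are precisely the affine, product-type, and $H$-transformed matchgate signatures.
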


\begin{theorem}[Vertigan~\cite{vertigan}]\label{thm:potts-source}
  The $n$-state Potts model
     $\PlGH({\tt{Potts}}_{n}(x))$ is $\#$P-hard for any integer $n \geq 3$,
    and non-negative real $x \neq 1$,
    where ${\tt{Potts}}_{n}(x) \in \text{Sym}_{n}(\mathbb{R})$
    is the matrix with entries
    $({\tt{Potts}}_{n}(x)_{ij})_{i,j\in[n]}$ such that
    ${\tt{Potts}}_{n}(x)_{ij} = 1$ if $i \neq j$,
    and ${\tt{Potts}}_{n}(x)_{ii} = x$.
    %otherwise.
\end{theorem}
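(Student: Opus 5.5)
Since this is Vertigan's theorem, the plan is to reconstruct its standard proof through the Tutte polynomial. We do not use the entrywise-logarithmic machinery developed later in the paper.

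\textbf{Step 1: pass to the Tutte polynomial.} For a planar multigraph $G$, the Fortuin--Kasteleyn expansion gives
\[
Z_{{\tt Potts}_n(x)}(G)=\sum_{A\subseteq E} n^{k(A)}(x-1)^{|A|},
\]
where $k(A)$ is the number of connected components of $(V,A)$. This is, up to an easily computable factor, the Tutte polynomial $T(G;X,Y)$ at $Y=x$ and $X=1+n/(x-1)$. Because $x\neq1$, the point lies on the hyperbola $H_n:(X-1)(Y-1)=n$ and is well defined. Planar duality, $T(G;X,Y)=T(G^*;Y,X)$, keeps everything inside planar graphs.

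\textbf{Step 2: move along $H_n$ by interpolation.} Replace every edge of $G$ by a $k$-thickening (parallel edges) or a $k$-stretching (a path). Both operations preserve planarity. They realize ${\tt Potts}_n(x')$ for the new parameter $x'$ with $x'-1=$ an explicit rational function of $x$ and $k$. For example, thickening gives $x'=x^k$, and stretching gives the $k$-th power of the Potts transfer matrix, renormalized.

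The map $x\mapsto x^k$ produces infinitely many distinct values unless $x\in\{0,1\}$, or $x=-1$ (excluded because $x\ge0$). When $x=0$ we use stretching instead, which yields infinitely many points. Now write $Z$ along $H_n$ as a polynomial of degree at most $|E|$ in the single variable $x-1$. Evaluating it at $|E|+1$ distinct realized points and solving a Vandermonde system recovers the whole polynomial. Hence we can evaluate $Z_{{\tt Potts}_n(y)}(G)$ for every $y$, including $y=0$.

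\textbf{Step 3: hit a known hard point.} At $y=0$ the quantity $Z_{{\tt Potts}_n(0)}(G)$ counts proper $n$-colorings of the planar graph $G$. The remaining task is to show this count is \#P-hard for every $n\ge3$.

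For $n=3$ this is the classical \#P-completeness of counting planar 3-colorings, via a parsimonious reduction from planar \#SAT variants. For $n\ge4$, a cone reduction (adding an apex vertex) does not preserve planarity, so we do not use it. Instead, we would argue directly on $H_n$ with Vertigan's method: use further planar gadgets, namely series--parallel networks, together with duality. These realize points of $H_n$ accumulating at a point where hardness is already known, for instance via a reduction from \#Perfect Matchings or from the $n=3$ case through planar "gluing" gadgets.

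\textbf{Main obstacle.} Step 3 for general $n\ge4$ is where we expect the difficulty. Establishing \#P-hardness of the one-variable specialization on $H_n$ for planar graphs is the genuinely hard core of Vertigan's theorem. There we would follow his reduction, which realizes a two-parameter family via series--parallel planar gadgets and reduces from the planar case already known to be hard.
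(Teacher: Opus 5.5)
Your Steps 1 and 2 are fine. For $x\ge 0$, $x\ne 1$, the Fortuin--Kasteleyn expansion makes $Z_{{\tt Potts}_n(y)}(G)$ a polynomial of degree $|E|$ in $y$. The $k$-thickenings of $G$, which are planar multigraphs, give its values at the distinct points $y=x^k$. Interpolation then yields the value at $y=0$; for $x=0$ there is nothing to do. So you have correctly shown two things: counting proper $n$-colorings of planar graphs Turing-reduces to $\PlGH({\tt Potts}_n(x))$, and all admissible $x$ are mutually equivalent.

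This only moves the whole difficulty into Step 3, and for $n\ge 4$ Step 3 is not proved. ``We would follow his reduction'' is a pointer, not an argument. The paper does not prove this statement either; it imports it from Vertigan as a black box. That is legitimate for a citation, but a self-contained proof must supply exactly the missing piece.

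Moreover, the route you sketch for $n\ge 4$ cannot work. Thickening, stretching, and any series--parallel (indeed any planar two-terminal) gadget built from ${\tt Potts}_n$ edges give an effective interaction invariant under all permutations of the $n$ colors. Such an interaction again has the form $\alpha I+\beta J$. Planar duality maps $H_n$ to itself. So none of these operations ever leaves $H_n$.

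Step 2 already makes all admissible points of $H_n$ interreducible. Realizing more points of $H_n$ ``accumulating at a point where hardness is already known'' therefore presupposes a known planar \#P-hard point on $H_n$. That is precisely what is missing, so the plan is circular. Your other suggested sources also fail:
\begin{itemize}
\item The $n=3$ result lives on $H_3$ and cannot be transported to $H_n$ by such gadgets.
\item Planar \#Perfect Matchings is in P by FKT, so it cannot be a source of planar hardness at all.
\item The general-graph device of joining with $K_m$ (Linial's trick) is nonplanar, just like the cone you rightly rejected.
\end{itemize}
What is needed is a genuinely different, planarity-preserving ingredient. It must either certify a \#P-hard point on each $H_n$ with $n\ge 4$, or reduce between different hyperbolas. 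Supplying this is the real content of Vertigan's theorem. As it stands, your proposal establishes the statement only for $n=3$, and only modulo the cited hardness of planar \#3-coloring.
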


We say a matrx $M\in \R^{n\times n}$ has nonconstant diagonal, if $|\{M_{ii}\mid 1\le i \le n\}|\ge 2.$

\begin{restatable}{theorem}{nonconstantdiag}\label{thm:nonconstant}
If $M\in\Sym{n}{F}{\Rp}$ and $M$ has nonconstant diagonal, then $\PlGH(M)$ is \#P-hard.
\end{restatable}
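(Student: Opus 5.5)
The plan is to reduce from the two-state problem of Guo and Williams (\cref{thm:boolean-source}). I would project the domain $[n]$ onto two indices $i,j$ with $M_{ii}\neq M_{jj}$, using vertex weights obtained by polynomial interpolation on self-loops. Then I would check that the induced $2\times 2$ problem is one of the \#P-hard cases of \cref{thm:boolean-source}.

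\emph{Step 1 (restricting to a subset of the domain).} Let $c_1,\ldots,c_r$ be the distinct values among $M_{11},\ldots,M_{nn}$; by assumption $r\ge 2$, and all $c_s>0$. Given a planar input $G$, attach $k$ planar self-loops to every vertex. A vertex assigned $x$ then picks up the extra factor $M_{xx}^k$, and planarity is preserved. Doing this with the same $k$ at every vertex only gives a global power. So I would proceed as in the standard Vandermonde interpolation of \cite{cai-maran-young}: group the vertices of $G$ by how many vertices receive each diagonal value. Then $Z$ of the modified graph, for $k=0,1,\ldots,\mathrm{poly}(|V|)$, is a linear combination of the unknowns $N_{(m_1,\ldots,m_r)}$. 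Here $N_{(m_1,\ldots,m_r)}$ is the weighted count of assignments with $m_s$ vertices mapped into the class $\{x: M_{xx}=c_s\}$, and the coefficients are $\prod_s c_s^{k m_s}$. These coefficients are distinct for distinct $(m_s)$ as long as the $c_s$ are multiplicatively independent. If they are not, I would instead use different numbers of loops on the input vertices via a symmetric-polynomial argument, or go through the Holant form $\PlHolant(\mathcal{EQ}\mid M, w)$ with a unary $w$ realized by interpolation. Either way, the goal is oracle access to the partition function with an arbitrary vertex weight $w(x)=\phi(M_{xx})$. Taking $\phi$ to be the indicator of $\{M_{ii},M_{jj}\}$ restricts the sum to $\sigma:V\to S$, where $S=\{x: M_{xx}\in\{M_{ii},M_{jj}\}\}$. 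If $S$ still has more than two elements, I would repeat the process with other separable gadget quantities, for instance loops carrying the stretched matrix $M^2$, whose diagonal entries $(M^2)_{xx}=\sum_y M_{xy}^2$ give another separating function. If that fails, I would induct on $|S|$, using that the principal submatrix $M_S$ still has nonconstant diagonal.

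\emph{Step 2 (the $2\times2$ target).} Once the domain is restricted to $\{i,j\}$ with weight $1$ on both indices, the problem is exactly $\PlGH(W)$ with
\[
W=\begin{pmatrix} M_{ii}&M_{ij}\\ M_{ij}&M_{jj}\end{pmatrix}, \qquad M_{ii}\neq M_{jj}.
\]
By \cref{thm:boolean-source}, $\PlGH(W)$ is \#P-hard provided $W\in\Sym{2}{F}{\Rp}$, that is, provided $M_{ii}M_{jj}\neq M_{ij}^2$. Chaining the reductions then gives the statement.

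\emph{Main obstacle.} I expect the real difficulty to be the degenerate case where every admissible pair has a rank-one principal block, $M_{ii}M_{jj}=M_{ij}^2$; full rank of $M$ does not forbid this. My first attempt would be to replace $M$ by a realizable matrix before projecting. Interpolation gives $M^{\circ t}$ for all real $t$, but this does not help, since $\det$ of the block of $M^{\circ t}$ is $(M_{ii}M_{jj})^t-M_{ij}^{2t}$, which vanishes identically. So I would pass instead to the stretching $M^2$, which is positive definite; its principal blocks are therefore all nonsingular. Its diagonal is $(M^2)_{xx}=\|M_{x\cdot}\|^2$. If some two indices have distinct $(M^2)_{xx}$, Steps 1--2 apply directly to $M^2$. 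If $M^2$ has constant diagonal, I would use the combinations $M^{\circ t}\,M^{\circ t}$, or thickening followed by stretching, and argue with the analytic identity theorem (\cref{thm:analytic-identity}). The aim is to show that for generic $t$ both the diagonal nonconstancy and the nonsingularity of some $2\times2$ principal block hold simultaneously. Each condition fails only on the zero set of a nonzero real-analytic function of $t$, and the condition at $t=1$ is what rules out the first one vanishing identically.
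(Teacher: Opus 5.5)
Your Step~1 has a genuine gap: oracle access with an arbitrary vertex weight $w(x)=\phi(M_{xx})$, in particular the indicator of $\{M_{ii},M_{jj}\}$, is asserted but not justified. The tools you list cannot produce it in general.
\begin{itemize}
\item Loop interpolation only recovers sums grouped by the value of $\prod_v M_{\sigma(v)\sigma(v)}$. Any multiplicative relation among the diagonal values (the lattice $\mathfrak{L}(M_{11},\ldots,M_{nn})$) therefore merges classes.
\item Different loop counts on different vertices give a system with up to $r^{|V|}$ unknowns, so this is not polynomial time.
\item Concretely, take $M=W^{\otimes 2}$ with $W=\left(\begin{smallmatrix}a&b\\b&c\end{smallmatrix}\right)$ positive definite and $a\neq c$. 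The diagonal $a^2,ac,ac,c^2$ is a geometric progression, so loop interpolation only returns sums over assignments of fixed total level. These sums mix assignments using $01,10$ with those using only $00,11$. Every planar gadget yields unaries $f\otimes f$, whose supports are $\{00\}$, $\{11\}$ or all of $\{0,1\}^2$. Your second separating function $(M^2)_{xx}=(W^2)_{x_1x_1}(W^2)_{x_2x_2}$ is again geometric in the level.
\item Likewise, for $M=W\otimes K$ with $K$ a $2\times 2$ Ising block, gadget unaries have the form $f\otimes(1,1)$. The only proper subdomains you can cut out are level sets with \emph{constant} diagonal, so your induction on $|S|$ never starts.
\end{itemize}
For such matrices, no pair with distinct diagonal entries can be isolated by your method, and Step~2 is never reached. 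Of course these matrices are hard, but only because of a tensor-power structure your argument never detects.

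This obstruction is exactly what the paper's proof is built to overcome, and your proposal has no substitute for it. The paper works with a minimal counterexample in $\Sym{n}{pd}{\Rp}$; your restriction-and-induction idea survives there only as \cref{lem:unary-profile}. It shows the following in turn.
\begin{itemize}
\item Every positive realizable unary is $cM_{ii}^{\gamma}$, so the diagonal has the form $\kappa\rho^{k_i}$.
\item The row generating polynomials are $U(z)^{D-k_i}V(z)^{k_i}$, and the level counts are binomial up to a geometric factor.
\item Using the realizable parity unary, the geometric mean $M\#(SMS)$ and the reciprocity of $A(z)$, the extreme levels are singletons, so $n_a=\binom Da$.
\end{itemize}
It then does not restrict the domain at all. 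An unpinning induction gives $Z_M(G)=Z_W(G)^D$ for the $2\times2$ matrix $W$ built from the level-$0$ and level-$D$ indices, and hardness follows from \cref{thm:boolean-source} by taking $D$-th roots.

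Separately, your fallback for full rank $M$ with constant-diagonal $M^2$ is not established. For the family $(M^{\circ t})^2$, the value $t=1$ is precisely the failing case $M^2$, so it cannot rule anything out. The diagonal $\sum_y M_{xy}^{2t}$ can also be constant for every $t$ whenever all rows are permutations of one multiset, e.g.\ $\left(\begin{smallmatrix}1&2&3\\2&3&1\\3&1&2\end{smallmatrix}\right)$. The paper instead uses $\operatorname{diag}(d)M^2\operatorname{diag}(d)$, whose diagonal $d_i^2(M^2)_{ii}$ is nonconstant in exactly that case.
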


\autoref{app:nonconstant} includes the proof of \cref{thm:nonconstant}.

\section{Planar Edge Gadgets and the Homogeneous Setting}\label{sec:reductions}
This section provides basic preliminaries before we define the entrywise logarithmic algebra in \cref{sec:algebra}.
In \cref{subsec:thickening,subsec:stretching}, we discuss
%provide basic knowledge for 
two basic reduction techniques: planar edge gadgets and polynomial interpolation.
In particular, we focus on thicking gadgets and stretching gadgets.
In contrast to
%comparison with 
\cite{CaiM23,CaiM24,cai-maran-young}, we allow realizable matrices to be asymmetric.
This is for the convenience of defining the entrywise logarithmic matrix algebra in the next section.
The matrix product of two symmetric matrices may be asymmetric. 
So even if we start from symmetric matrices, the algebra generated by them may contain asymmetric ones.
We will also prove in \cref{subsec:homogeneous} that every realizable $K\in \mathbb{R}^{n\times n}$ has constant diagonal, constant row sum and constant column sum, which we call the homogeneous setting.

\begin{definition}[realizable]\label{def:realizable}
Fix an integer $n\geq1$ and $M\in\mathbb{R}^{n\times n}$. 
%Let $k(i,j)$ be the binary signature over domain $n$ satisfying $k(i,j)=K_{ij}.$
We say the signature $f$ is realizable (relative to $M$), if for an arbitrary signature set $\mathcal{G}$ containing $M,$
\(
\PlHolant(\mathcal{EQ}\mid f,\mathcal{G})
\leq_T
\PlHolant(\mathcal{EQ}\mid \mathcal{G}).
\)
In particular, if $N\in \mathbb{R}^{n\times n}$, we say $N$ is realizable (relative to $M$) if the binary signature $N(i,j)=N_{ij}$ is realizable.
\end{definition}

Unless otherwise stated, in the following all realizable signatures are relative to the fixed matrix $M\in \Sym{n}{F}{\Rp}$.

\subsection{Thickening Gadgets}\label{subsec:thickening}
\begin{lemma}\label{lem:thickening}
    Let $k\in\{1,2\}$.
    Let $f,g$ and $h = f \circ g$ be real valued signatures of arity $k$ on domain $[n]$, satisfying $h({\bf x})=f({\bf x})\cdot g({\bf x}),\,\forall {\bf x}\in[n]^k.$
    Then for arbitrary signature set $\mathcal{G}$,
    $\PlHolant(\mathcal{EQ}\mid h,\mathcal{G})\le \PlHolant(\mathcal{EQ}\mid f,g,\mathcal{G})$.
\end{lemma}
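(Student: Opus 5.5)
The reduction will be a local, planarity-preserving gadget replacement: every occurrence of $h$ in an input instance becomes a parallel (for $k=2$) or pendant (for $k=1$) combination of $f$ and $g$ joined through equality signatures. Let $\Omega$ be an instance of $\PlHolant(\mathcal{EQ}\mid h,\mathcal{G})$ with bipartite underlying graph: the left side carries signatures from $\mathcal{EQ}$, the right side carries $h$ or members of $\mathcal{G}$. We build $\Omega'$ of $\PlHolant(\mathcal{EQ}\mid f,g,\mathcal{G})$ with $\Holant_{\Omega'}=\Holant_{\Omega}$. Since $\Omega'$ has size linear in $\Omega$, a single oracle call then gives the reduction.

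\emph{Case $k=2$.} Take a vertex $w$ labeled $h$ with incident edges $e_1,e_2$ going to equality vertices $u_1,u_2$ of arities $d_1,d_2$. Delete $w$ and put in two vertices $w_f,w_g$ labeled $f$ and $g$. Connect $w_f$ to $u_1$ and $u_2$, and do the same for $w_g$. The arities of $u_1,u_2$ rise from $d_i$ to $d_i+1$, so they are relabeled by $=_{d_1+1}$ and $=_{d_2+1}$. In any nonzero term, equality at $u_i$ forces both new edges at $u_i$ to carry the value $x_i$ that $e_i$ carried before. The contribution of the gadget is therefore $f(x_1,x_2)g(x_1,x_2)=h(x_1,x_2)$, and summing over assignments gives $\Holant_{\Omega'}=\Holant_\Omega$.

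The degenerate case $u_1=u_2$ (both edges of $w$ go to the same equality vertex) is handled identically. The pair of parallel edges becomes two pairs of parallel edges, and the value is $\sum_x f(x,x)g(x,x)$ as required. If argument order matters, because $f,g$ may be asymmetric, we attach $f$ and $g$ with the same orientation as $h$, meaning the first input of each goes to $u_1$.

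\emph{Case $k=1$.} This is simpler. A unary $h$ at $w$ adjacent to an equality vertex $u$ is replaced by unary vertices $w_f,w_g$, both attached to $u$, whose arity is raised by one.

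\emph{Planarity.} This is the only point requiring care. In a fixed planar embedding, the edges $w u_1$ and $w u_2$ form a path $u_1\,w\,u_2$. We draw $w_f$ and $w_g$ in a thin neighborhood of this path, one on each side, or both on the same side as nested parallel paths. The new edges are inserted consecutively in the rotation at $u_1$ and at $u_2$, in the slot where $e_1$ and $e_2$ used to be, so no crossings arise. For unary gadgets we add pendant vertices inside a face incident to $u$. The graph stays bipartite, with equalities on one side and $f,g,\mathcal{G}$ on the other. The construction runs in polynomial time, which completes the reduction.
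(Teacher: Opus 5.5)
Your proposal is correct and uses the same gadget replacement as the paper: each binary $h$ is thickened into a parallel pair labeled $f$ and $g$, and each unary $h$ is replaced by one copy of $f$ and one of $g$. You simply spell out the bookkeeping the paper leaves implicit, namely the equality arities, the orientation for asymmetric $f,g$, and the planar embedding. The one small slip is that in the degenerate case $u_1=u_2$ the shared equality vertex's arity rises by two rather than one, which does not affect the argument.
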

\begin{proof}
    Suppose $\Omega=(G,\pi)$ is a signature grid of $\PlHolant(\mathcal{EQ}\mid h,\mathcal{G}).$
    For every occurrence of the unary or binary signature $h$, replace it by one copy of $f$ and $g$.
    When $h$ is a binary signature, every edge constraint labeled by $h$ thickens into a pair of edges labeled by $f$ and $g$.
    This preserves planarity and the value of the partition function. 
\end{proof}

\begin{corollary}\label{cor:thickening}
    If $M_1,M_2\in \R^{n\times n}$ are both realizable, so is $M_1\circ M_2.$
\end{corollary}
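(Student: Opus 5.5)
The plan is to unwind \cref{def:realizable} and apply \cref{lem:thickening} with $k=2$, chaining the three reductions by transitivity of $\leq_T$. Let $\mathcal{G}$ be an arbitrary signature set containing $M$. We must show
\[
\PlHolant(\mathcal{EQ}\mid M_1\circ M_2,\mathcal{G})\leq_T\PlHolant(\mathcal{EQ}\mid \mathcal{G}).
\]
View $M_1,M_2$ and $M_1\circ M_2$ as binary signatures $f,g,h$ on $[n]$. Then $h(i,j)=f(i,j)g(i,j)$ for all $i,j$, so \cref{lem:thickening} gives
\[
\PlHolant(\mathcal{EQ}\mid h,\mathcal{G})\leq_T\PlHolant(\mathcal{EQ}\mid f,g,\mathcal{G}).
\]

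Next remove $f$ and $g$ one at a time, each time using realizability with a suitably enlarged signature set. The set $\mathcal{G}\cup\{g\}$ contains $M$. Realizability of $M_1$ is quantified over every such set, so applying it to $\mathcal{G}\cup\{g\}$ gives
\[
\PlHolant(\mathcal{EQ}\mid f,g,\mathcal{G})\leq_T\PlHolant(\mathcal{EQ}\mid g,\mathcal{G}).
\]
Realizability of $M_2$, applied to $\mathcal{G}$ itself, then gives
\[
\PlHolant(\mathcal{EQ}\mid g,\mathcal{G})\leq_T\PlHolant(\mathcal{EQ}\mid\mathcal{G}).
\]
Composing the three reductions yields the claim, because polynomial-time Turing reductions compose.

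The only point needing care is this use of realizability of $M_1$ relative to $\mathcal{G}\cup\{g\}$ rather than to $\mathcal{G}$. That step is exactly why \cref{def:realizable} quantifies over all signature sets containing $M$, so it is not a real obstacle. Everything else is immediate, since thickening preserves planarity, as already handled in \cref{lem:thickening}.
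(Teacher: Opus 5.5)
Your proposal is correct and follows the same argument as the paper. You apply the thickening lemma, then eliminate $M_1$ and then $M_2$ by realizability, and compose the Turing reductions. Your explicit note that realizability of $M_1$ is invoked relative to the enlarged set $\mathcal{G}\cup\{M_2\}$ is a step the paper uses implicitly.
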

\begin{proof}
    Fix an arbitrary signature set $\mathcal{G}$ containing $M.$
    By \cref{lem:thickening}, we have $\PlHolant(\mathcal{EQ}\mid M_1\circ M_2,\mathcal{G})\le_T \PlHolant(\mathcal{EQ}\mid M_1,M_2,\mathcal{G}).$
    Since $M_1$ is realizable, $\PlHolant(\mathcal{EQ}\mid M_1,M_2,\mathcal{G})\le_T \PlHolant(\mathcal{EQ}\mid M_2,\mathcal{G}).$
    Since $M_2$ is realizable, $\PlHolant(\mathcal{EQ}\mid M_2,\mathcal{G})\le_T \PlHolant(\mathcal{EQ}\mid \mathcal{G}).$
    Thus, the reuduction chain gives $\PlHolant(\mathcal{EQ}\mid M_1\circ M_2,\mathcal{G})\le_T \PlHolant(\mathcal{EQ}\mid \mathcal{G}).$
    This shows $M_1\circ M_2$ is realizable.
\end{proof}

The following lemma can be proved using standard polynomial interpolation method (for example, see Lemma 2, \cite{spectral-manuscript}):

\begin{lemma}\label{lem:thickening_interpolation}
    Let $k\in\{1,2\}$.  
    Let $f$ be a real  signature of arity $k$ with positive values on domain $[n].$
    Then for arbitrary signature set $\mathcal{G},$
    $\PlHolant(\mathcal{EQ}\mid f^t,\mathcal{G})\le \PlHolant(\mathcal{EQ}\mid f,\mathcal{G})$ for every $t\in \R$, where $f^t$ is the arity $k$ signature satisfying $f^t({\bf x})=(f({\bf x}))^t,\,\forall {\bf x}\in[n]^k.$
\end{lemma}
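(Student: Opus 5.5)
The plan is a standard polynomial interpolation argument. We group occurrences by the distinct values of $f$ and recover the unknown counts from a Vandermonde system.

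Fix a signature grid $\Omega$ of $\PlHolant(\mathcal{EQ}\mid f^t,\mathcal{G})$ and let $m$ be the number of occurrences of $f^t$ in it. Let $v_1,\dots,v_r$ be the distinct values taken by $f$ on $[n]^k$. These are positive reals and $r\le n^2$ is a constant. For a configuration $\sigma$, let $c_j(\sigma)$ be the number of occurrences of $f^t$ whose inputs under $\sigma$ are evaluated by $f$ to $v_j$. Set
\[
N_{c_1,\dots,c_r}=\sum_{\sigma:\,c_j(\sigma)=c_j\ \forall j}\ \prod_{\text{other vertices}} f_v(\sigma|_{E(v)}),
\]
where the product runs over all $\mathcal{EQ}$ and $\mathcal{G}$ signatures. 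Then
\[
\Holant_\Omega=\sum_{c_1+\dots+c_r=m} N_{c_1,\dots,c_r}\prod_j v_j^{t c_j}.
\]
The number of tuples $(c_1,\dots,c_r)$ is $\binom{m+r-1}{r-1}$, which is polynomial in $m$.

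Next, for $s=1,2,\dots$ we build $\Omega_s$ by replacing each occurrence of $f^t$ by $s$ parallel copies of $f$. In the binary case these are $s$ parallel edges, each labeled by $f$. In the unary case we attach $s$ copies of the unary $f$ to the same equality vertex. Either way planarity is preserved, exactly as in the proof of \cref{lem:thickening}. Then
\[
\Holant_{\Omega_s}=\sum N_{c}\prod_j v_j^{s c_j}=\sum N_c\,\Big(\prod_j v_j^{c_j}\Big)^s,
\]
which is obtained from one oracle call to $\PlHolant(\mathcal{EQ}\mid f,\mathcal{G})$.

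Now group the tuples $c$ by the value $x_c=\prod_j v_j^{c_j}$. For each distinct value $x$ let $A_x=\sum_{c:\,x_c=x} N_c$. There are polynomially many distinct $x$, all positive. The calls for $s=1,\dots,\#\{x\}$ give a Vandermonde system in the distinct nonzero nodes $x$, so it is invertible and we solve for every $A_x$. The target value is
\[
\Holant_\Omega=\sum_x A_x\,x^t,
\]
because every $c$ in the group of $x$ contributes $N_c\prod_j v_j^{tc_j}=N_c\,x^t$. So the grouping loses nothing, and we never need the individual $N_c$.

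The main obstacle is checking that the grouping is consistent. If two tuples $c,c'$ give the same $x$, we need them to give the same weight $x^t$, and this holds since $\prod_j v_j^{tc_j}=x_c^t$ with positive bases. A secondary point is the model of computation. We must be able to compute the distinct nodes $x$, decide equalities among them, and evaluate $x^t$ for arbitrary real $v_j$ and $t$. We defer this to the convention cited in the footnote (Cai--Maran--Young, Appendix A), under which the reals involved are treated within a fixed finitely generated field.
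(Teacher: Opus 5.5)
Your proposal is correct and is exactly the standard thickening-plus-Vandermonde interpolation the paper relies on. The paper gives no proof of its own and only points to Lemma 2 of \cite{spectral-manuscript}. Grouping configurations by the node $x_c=\prod_j v_j^{c_j}$ and using $\prod_j v_j^{tc_j}=x_c^{t}$ for positive bases is precisely the needed observation.

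Deferring the real-number bookkeeping to the Cai--Maran--Young convention matches the paper's footnote. This covers deciding collisions among nodes, solving the system, and evaluating $x^t$. The collision pattern is in fact fixed once $f$ is: $x_c=x_{c'}$ exactly when $c-c'$ lies in the constant lattice of multiplicative relations among $v_1,\dots,v_r$.
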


\begin{corollary}\label{cor:thickening_interpolation}
If $K\in \Rp^{n\times n}$ is realizable, so is $K^{\circ t}$ for every $t\in \R.$
\end{corollary}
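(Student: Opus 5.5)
This follows the proof of \cref{cor:thickening}, now using \cref{lem:thickening_interpolation} in place of \cref{lem:thickening}. Fix $t\in\R$ and an arbitrary signature set $\mathcal{G}$ containing $M$. The aim is the reduction
\[
\PlHolant(\mathcal{EQ}\mid K^{\circ t},\mathcal{G})\le_T \PlHolant(\mathcal{EQ}\mid \mathcal{G}),
\]
which is exactly the definition of $K^{\circ t}$ being realizable (\cref{def:realizable}).

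First, regard $K$ as the binary signature $f(i,j)=K_{ij}$. Since $K\in\Rp^{n\times n}$, all values of $f$ are positive, so \cref{lem:thickening_interpolation} applies with $k=2$. For the signature set $\mathcal{G}$ it gives
\[
\PlHolant(\mathcal{EQ}\mid K^{\circ t},\mathcal{G})\le_T \PlHolant(\mathcal{EQ}\mid K,\mathcal{G}),
\]
where $f^t$ is precisely the binary signature of $K^{\circ t}$.

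Second, because $K$ is realizable and $\mathcal{G}$ is an arbitrary set containing $M$, \cref{def:realizable} gives directly
\[
\PlHolant(\mathcal{EQ}\mid K,\mathcal{G})\le_T \PlHolant(\mathcal{EQ}\mid \mathcal{G}).
\]
Composing the two Turing reductions gives the claim, since polynomial-time Turing reductions are transitive.

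There is no real obstacle here. The only point to check is that \cref{lem:thickening_interpolation} holds for an arbitrary auxiliary set $\mathcal{G}$, so that it can be applied with the same $\mathcal{G}$ used in the realizability of $K$. Positivity of the entries is what makes $K^{\circ t}$ well defined for negative and non-integer $t$.
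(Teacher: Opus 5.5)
Your proof is correct and is the argument the paper intends. The paper gives no separate proof, only a remark that it is analogous to \cref{cor:thickening}: apply \cref{lem:thickening_interpolation} with the same arbitrary $\mathcal{G}\ni M$, then compose with the reduction supplied by the realizability of $K$.
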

%The proof is similar to \cref{cor:thickening}.

\subsection{Stretching Gadgets}\label{subsec:stretching}

\begin{lemma}\label{lem:stretching}
    If $M_1,M_2\in \mathbb{R}^{n\times n}$ are both realizable, so is $M_1M_2.$
\end{lemma}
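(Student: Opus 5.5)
The plan mirrors the proof of \cref{cor:thickening}: first a purely local gadget lemma at the level of signatures, then a chain of Turing reductions using realizability of $M_1$ and $M_2$ separately. The gadget is the stretching (series composition) gadget. For any binary signatures $f,g$ on domain $[n]$, let $h(i,j)=\sum_{k\in[n]} f(i,k)g(k,j)$, which is the binary signature of the matrix product. We will show
\[
\PlHolant(\mathcal{EQ}\mid h,\mathcal{G})\le_T \PlHolant(\mathcal{EQ}\mid f,g,\mathcal{G})
\]
for an arbitrary signature set $\mathcal{G}$.

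Given a planar signature grid $\Omega$ of the left-hand problem, we modify each occurrence of $h$. Such an $h$ sits on the $\mathcal{G}$-side of the bipartite structure, with its first input wired to an equality vertex $u$ and its second input wired to an equality vertex $v$. We replace it by a path $u - f - w - g - v$, where $w$ is a new vertex labeled ${=}_2$.

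\begin{itemize}
\item The input order matters, since $f$ and $g$ may be asymmetric. The edge of $f$ toward $u$ must be its first input and the edge toward $w$ its second. Likewise $g$ reads $w$ first and $v$ second.
\item Summing over the value $k$ assigned at $w$ gives exactly $\sum_k f(\sigma_u,k)g(k,\sigma_v)=h(\sigma_u,\sigma_v)$, so the partition function is unchanged.
\item We are subdividing an edge of a planar graph, so planarity is preserved.
\item The new graph is still bipartite between $\mathcal{EQ}$ and $\{f,g\}\cup\mathcal{G}$: $w$ is an equality vertex adjacent only to the signature vertices $f$ and $g$.
\end{itemize}

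It follows that the reduction is in fact a many-one reduction with polynomial blow-up.

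Now fix any $\mathcal{G}\ni M$. The gadget lemma gives $\PlHolant(\mathcal{EQ}\mid M_1M_2,\mathcal{G})\le_T\PlHolant(\mathcal{EQ}\mid M_1,M_2,\mathcal{G})$. Applying the definition of realizability of $M_1$ to the set $\{M_2\}\cup\mathcal{G}$, which still contains $M$, removes $M_1$. Then realizability of $M_2$ with the set $\mathcal{G}$ removes $M_2$. Composing these Turing reductions yields $\PlHolant(\mathcal{EQ}\mid M_1M_2,\mathcal{G})\le_T\PlHolant(\mathcal{EQ}\mid\mathcal{G})$, so $M_1M_2$ is realizable by \cref{def:realizable}.

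The only delicate point is the orientation bookkeeping for asymmetric signatures in the Holant framework. Each binary signature vertex must have a designated first and second edge, and these must be attached consistently in the planar embedding. If we ensure this, the computation above is exact and nothing else is an obstacle.
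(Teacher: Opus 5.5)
Your proposal is correct and follows the paper's proof: replace each occurrence of $M_1M_2$ by the series path $M_1$ then $M_2$, with the first and second inputs oriented as you describe, then chain the two realizability reductions exactly as in the proof of Corollary~\ref{cor:thickening}. The one difference is that you place a ${=}_2$ vertex on the middle edge, which keeps the signature grid bipartite between $\mathcal{EQ}$ and the other signatures. The paper's path $e,u,e^*,u',e'$ joins the $M_1$ and $M_2$ vertices directly and leaves this point implicit.
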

\begin{proof}
    For every occurrence of the binary signature $M_1M_2$ on a vertex $v$ of degree 2
    with incident edges $e$ and $e'$ labeled by the first and second variable (row and column index of $M_1M_2$ respectively), replace
    %``stretch" it 
    $e,v,e'$ by a path $e, u, e^*, u', e'$ with vertices $u$ and $u'$ labeled by $M_1$ and
    $M_2$ respectively. The incident edges $e$ and $e^*$ of $u$ (respectively,  $e^*$ and $e'$ of $u'$) are labeled by
    the  first and second variables of $M_1$ (respectively, $M_2$).
    %of by a concatenation of two binary signatures $M_1$ and $M_2$.
    This preserves planarity and the value of the partition function.
    Thus, $\PlHolant(\mathcal{EQ}\mid M_1M_2,\mathcal{G})\le_T \PlHolant(\mathcal{EQ}\mid M_1,M_2,\mathcal{G})$, for arbitrary signature set $\mathcal
    G.$
    Similar to the proof of \cref{cor:thickening}, this implies that $M_1 M_2$ is realizable whenever $M_1$ and $M_2$ are both realizable.
\end{proof}

The following lemma is a consequence of standard polynomial interpolation (e.g., see Lemma 4, \cite{spectral-manuscript}; even though $t$ can be non-algebraic, all complexity claims are strictly in the Turing bit model):
\begin{lemma}\label{lem:spectral-powers}
    If $K\in \Sym{n}{pd}{\R}$ is realizable, so is the matrix power $K^t$, for every $t\in\mathbb R$.
\end{lemma}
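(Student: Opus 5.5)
The plan is to use polynomial interpolation along the spectral decomposition of $K$. Since $K\in\Sym{n}{pd}{\R}$, \cref{thm:spectral-theorem} gives
\[
K=\sum_{i=1}^s \mu_i E_i,
\]
where $\mu_1,\ldots,\mu_s>0$ are the distinct eigenvalues and $E_i$ are the orthogonal projections onto the eigenspaces. For any $t\in\R$ we then have $K^t=\sum_i \mu_i^t E_i$. Fix a signature set $\mathcal G$ containing $M$ and an instance $\Omega$ of $\PlHolant(\mathcal{EQ}\mid K^t,\mathcal G)$. Let $m$ be the number of occurrences of $K^t$ in $\Omega$.

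First, I would stretch. For each $\ell\ge 1$, apply \cref{lem:stretching} $\ell-1$ times to replace every occurrence of $K^t$ by a path of $\ell$ copies of $K$. This realizes $K^\ell$, so we obtain instances $\Omega_\ell$ of $\PlHolant(\mathcal{EQ}\mid K,\mathcal G)$, and each $\Holant_{\Omega_\ell}$ can be computed with the oracle, since $K$ is realizable. Each $\Omega_\ell$ is planar and has size polynomial in $|\Omega|$ and $\ell$.

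Second, I would expand by multilinearity. Substitute $K^\ell=\sum_i\mu_i^\ell E_i$ at each of the $m$ occurrences and expand. Grouping terms by how many occurrences receive each index $i$ gives
\[
\Holant_{\Omega_\ell}=\sum_{k_1+\cdots+k_s=m} c_{(k_1,\ldots,k_s)}\prod_{i=1}^s\mu_i^{\ell k_i}.
\]
Here each $c_{\mathbf k}$ is the sum over all placements of the $E_i$'s with multiplicities $k_i$ of the Holant value with those projections in place of $K^t$. The coefficients $c_{\mathbf k}$ do not depend on $\ell$, and the same formula with $\mu_i^{t}$ in place of $\mu_i^\ell$ gives exactly $\Holant_\Omega$. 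There are only $\binom{m+s-1}{s-1}=\mathrm{poly}(m)$ coefficients, since $s\le n$ is a constant.

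Third, I would interpolate. Group the terms by the value $\lambda_{\mathbf k}=\prod_i\mu_i^{k_i}$; several $\mathbf k$ may share a value, which is the main obstacle. Let $a_\lambda=\sum_{\lambda_{\mathbf k}=\lambda}c_{\mathbf k}$, so that $\Holant_{\Omega_\ell}=\sum_\lambda a_\lambda\lambda^\ell$ for $\ell=1,\ldots,N$, where $N$ is the number of distinct $\lambda$. Since the $\lambda$'s are distinct and positive, this is a full-rank Vandermonde system, and solving it recovers every $a_\lambda$. The key observation is that grouping is harmless for the target. If $\prod_i\mu_i^{k_i}=\prod_i\mu_i^{k_i'}$, then $\sum_i (k_i-k_i')\log\mu_i=0$, hence $\prod_i(\mu_i^t)^{k_i}=\prod_i(\mu_i^t)^{k'_i}$. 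Therefore $\Holant_\Omega=\sum_\lambda a_\lambda \lambda^t$, where $\lambda^t$ is well defined because $\lambda>0$.

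Finally, the computational subtleties are handled as the paper states. The values $\lambda$ and $\lambda^t$ are fixed functions of the constant data $\mu_i,t$ raised to exponents at most $m$, so they can be enumerated in polynomial time; grouping equal $\lambda$'s uses the fixed multiplicative relations among the $\mu_i$, which are determined by a finitely generated lattice depending only on $K$. Solving the Vandermonde system is polynomial time in the bit model of \cite[Appendix A]{cai-maran-young}. Together these steps show $\PlHolant(\mathcal{EQ}\mid K^t,\mathcal G)\le_T\PlHolant(\mathcal{EQ}\mid K,\mathcal G)\le_T\PlHolant(\mathcal{EQ}\mid \mathcal G)$, so $K^t$ is realizable.
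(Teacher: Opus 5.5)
Your proposal is correct and is essentially the standard stretching-plus-polynomial-interpolation argument that the paper invokes by citation (Lemma 4 of the spectral manuscript) rather than proving. In particular, the Vandermonde system over grouped values suffices because any coincidence $\prod_i\mu_i^{k_i}=\prod_i\mu_i^{k'_i}$ persists after raising to the $t$-th power, which holds since every $\mu_i>0$.
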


%%% JYC ^pd is needed.

\subsection{The Homogeneous Setting}\label{subsec:homogeneous}
In this subsection, we prove every realizable matrix $K\in \mathbb{R}^{n\times n}$ (relative to the fixed $M\in \Sym{n}{F}{\Rp}$) must have constant diagonal, constant row sum and constant column sum, otherwise $\PlGH(M)$ is \#P-hard.
This is what we called the homogeneous setting.
This homogeneity brings a lot of technical convenience when we define the entrywise logarithmic matrix algebra in
%the following 
\cref{sec:algebra}.

\begin{lemma}\label{lem:realize_diag_K_diag}
    Let $u,v\in \mathbb{R}^n$ be realizable unary signatures, and let $K\in \R^{n\times n}$ be a realizable binary signature.
    Then $N=\mathrm{diag}(u_1,\ldots, u_n)\cdot K\cdot \mathrm{diag}(v_1,\ldots, v_n)$ is a realizable binary signature. 
\end{lemma}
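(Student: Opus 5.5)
The plan is a direct gadget construction, in the same spirit as \cref{lem:thickening} and \cref{lem:stretching}. We attach the unary signatures to the two endpoints of the binary signature $K$, and then use the chain of reductions from the proof of \cref{cor:thickening}.

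First the local identity. We have
\[
N_{ij}=\sum_{x,y\in[n]} (=_3)(i,x,x')\,u_{x'}\cdots,
\]
but it is cleaner to argue at the level of signature grids. Fix an arbitrary signature set $\mathcal G$ containing $M$, and let $\Omega$ be a planar instance of $\PlHolant(\mathcal{EQ}\mid N,\mathcal G)$.

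Each occurrence of $N$ sits on a degree-2 vertex $w$ on the right side of the bipartition. Its incident edges $e,e'$ connect to left-side \textsc{Equality} vertices $a$ and $b$. We replace $w$ by a path $e,\, w_1,\, e_1,\, c,\, e_2,\, w_2,\, e_3,\, d,\, e_4,\, w_3,\, e'$. Here:
\begin{itemize}
\item $w_1$ carries $=_2$, which is the identity matrix and lies in $\mathcal{EQ}$ after re-bipartization; equivalently, one can merge directly into $a$.
\item $c$ and $d$ are new $=_3$ vertices (left side).
\item $w_2$ carries $K$.
\item $c$ and $d$ each get one extra dangling edge to a new right-side vertex carrying $u$ and $v$ respectively.
\end{itemize}
It is simplest to let $c$ be the vertex $a$ itself, i.e.\ increase the arity of the \textsc{Equality} at $a$ by one and attach $u$ there, and likewise attach $v$ at $b$. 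Summing over the value $i$ at $a$ and $j$ at $b$ then contributes $u_iK_{ij}v_j=N_{ij}$. This is exactly the value of the original edge constraint.

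One subtlety is that $a$ may be incident to several $N$-edges, and each attachment must use the correct endpoint of its own edge. So for each $N$-occurrence we insert a fresh $=_3$ vertex between $a$ and $K$ rather than reusing $a$. To keep the grid bipartite, we use a dummy $=_2$ (identity, realizable trivially since $I$ gives the same partition function when a degree-2 equality is contracted) on the right side. This is routine.

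Planarity is preserved because we only subdivide an edge and hang a pendant vertex (the unary $u$ or $v$) off a new degree-3 vertex, which can be drawn in any adjacent face. The partition function is unchanged by the identity above. Hence
\[
\PlHolant(\mathcal{EQ}\mid N,\mathcal G)\le_T \PlHolant(\mathcal{EQ}\mid u,v,K,\mathcal G).
\]

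Finally we remove $u$, $v$ and $K$ one at a time by realizability, exactly as in \cref{cor:thickening}:
\[
\PlHolant(\mathcal{EQ}\mid u,v,K,\mathcal G)\le_T \PlHolant(\mathcal{EQ}\mid v,K,\mathcal G)\le_T\PlHolant(\mathcal{EQ}\mid K,\mathcal G)\le_T\PlHolant(\mathcal{EQ}\mid \mathcal G).
\]
Each step applies \cref{def:realizable} with the enlarged set ($\mathcal G\cup\{v,K\}$, etc.), which still contains $M$.

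The only step needing care is the bookkeeping of the bipartite Holant form: the unary signatures must be placed on the right side, adjacent to \textsc{Equality} vertices, and the new $=_3$ vertices must be inserted so that the structure stays bipartite. Everything else is immediate.
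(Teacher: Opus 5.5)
Your proposal is correct and follows the paper's argument: replace each occurrence of $N$ by $K$, attach $u$ and $v$ at the two endpoint \textsc{Equality} vertices, and then remove $u$, $v$, $K$ using realizability as in \cref{cor:thickening}. The fresh $=_3$ vertices and dummy $=_2$ vertices are unnecessary, and contracting them gives back the direct construction. The ``subtlety'' you flag is not real: every edge at an \textsc{Equality} vertex $a$ carries the same value $i$, so attaching one copy of $u$ to $a$ for each occurrence of $N$ already produces the correct factor $u_i$ for each occurrence, even when $a$ meets several $N$-edges. That direct attachment is exactly what the paper does.
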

\begin{proof}
    %We only need to prove the reduction $\PlHolant(\mathcal{EQ}\mid N)\le_T\PlHolant(\mathcal{EQ}\mid  K,u,v).$
    Fix an arbitrary signature set $\mathcal{G}$ containing $M.$
    Let 
$\Omega=(G,\pi)$ be a signature grid of $\PlHolant(\mathcal{EQ}\mid N,\mathcal{G}).$
    Every occurrence of
    %the edge constraint 
    $N$  labels  a vertex of degree 2. Suppose  its first and second variable 
    are associated to the edges connected to $w_1$ and $w_2$ respectively, both 
    having a label from $\mathcal{EQ}$.
    We replace $N$ by $K$ and attach unaries $u$ and $v$ to $w_1$ and $w_2$ by two new edges
    respectively.
  %  the two endpoints of $K$.
    This preserves planarity, and we obtain a signature grid of $\PlHolant(\mathcal{EQ}\mid K,u,v,\mathcal{G})$ with the partition function unchanged.
\end{proof}

\begin{lemma}\label{lem:realize_diag_rowsum_colsum}
    Suppose $K\in \mathbb{R}^{n\times n}$ is a realizable binary signature, then $d=(d_1,d_2,\ldots,d_n),\,r=(r_1,r_2,\ldots, r_n),$ and $c=(c_1,c_2,\ldots, c_n)$ are all realizable, where $d_i=K_{ii},\, r_i=\sum_{j\in[n]}K_{ij},\,c_i=\sum_{j\in[n]}K_{ji}$, for every $i\in[n].$
\end{lemma}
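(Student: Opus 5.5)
We realize each of the three unary signatures $d$, $r$, $c$ with its own planar gadget built from a copy of $K$. As in the proof of \cref{cor:thickening}, fix an arbitrary signature set $\mathcal{G}$ containing $M$. Since $K$ is realizable, $\PlHolant(\mathcal{EQ}\mid K,\mathcal{G})\le_T\PlHolant(\mathcal{EQ}\mid\mathcal{G})$. So it suffices to show, for each $w\in\{d,r,c\}$,
\[
\PlHolant(\mathcal{EQ}\mid w,\mathcal{G})\le_T\PlHolant(\mathcal{EQ}\mid K,\mathcal{G}),
\]
and then chain the two reductions.

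Let $\Omega$ be a signature grid of $\PlHolant(\mathcal{EQ}\mid w,\mathcal{G})$. Each occurrence of $w$ is a degree-one vertex on the right side, joined by an edge $e$ to some vertex $x$ labeled by an equality. We replace it locally as follows.

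For $w=d$, replace the unary $w$ by a degree-two vertex labeled $K$. Both of its edges go to $x$: the edge $e$ becomes the first variable, and we add a new parallel edge $e'$ from $x$ as the second variable. The arity of the equality at $x$ goes up by one. Because $x$ is an equality, both variables of $K$ are forced to take the common value $i$ of $x$. This contributes $K_{ii}=d_i$, so the Holant value is unchanged. The new edge runs parallel to $e$ and encloses no other part of the graph, so planarity is preserved.

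For $w=r$, replace the unary $w$ by a vertex labeled $K$ whose first variable is $e$ (connected to $x$). Its second variable goes along a new edge to a fresh vertex $y$ labeled $=_1$, the all-ones unary function. Summing over the value $j$ at $y$ contributes $\sum_j K_{ij}=r_i$.

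For $w=c$, do the same but with the roles of the variables swapped: $e$ is attached as the second variable of $K$ and the pendant $=_1$ as the first. This yields $\sum_j K_{ji}=c_i$.

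In the cases $w=r$ and $w=c$ we only attach a pendant path, which is trivially planar. In all three cases the resulting grid is bipartite between $\mathcal{EQ}$ and $\{K\}\cup\mathcal{G}$, as the problem requires. The new grid is also constructed in polynomial time.

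Everything here is routine. The only point needing care is the diagonal gadget: one must check that the two edges from $x$ are correctly labeled by the first and second variables of a possibly asymmetric $K$, and that inserting the parallel edge keeps the rotation system planar. Both hold, because the two edges are adjacent in the cyclic order at $x$.
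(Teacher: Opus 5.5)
Your proposal is correct and follows the paper's proof. The paper uses the same three gadgets: a length-2 loop through a copy of $K$ attached to the equality vertex for $d$, and $K$ with a pendant $=_1$ on its second (respectively first) input for $r$ (respectively $c$). The chaining through the realizability of $K$, which you spell out, is exactly the pattern of \cref{cor:thickening}. The variable-ordering concern you flag for the diagonal gadget is moot, because both inputs of $K$ are forced to the same value and so contribute $K_{ii}$ either way.
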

\begin{proof}
    Fix an arbitrary signature set $\mathcal{G}$ containing $M.$
    We first show that $d$ is realizable.
    We only need to prove the reduction $\PlHolant(\mathcal{EQ}\mid d,\mathcal{G})\le_T\PlHolant(\mathcal{EQ}\mid K,\mathcal{G}).$
    Suppose $\Omega$ is a signature grid of $\PlHolant(\mathcal{EQ}\mid d,\mathcal{G})$.
    Any occurrence of $d$ is attached to a vertex with label from $\mathcal{EQ}$.
    We replace every occurrence of $d$  by a  loop of length 2 with $K$ attached to the 
    mid point of the loop. The {\sc Equality} constraint reproduces $d$ with $d_i=K_{ii}$.
    %endpoint of $d$.
    %We obtain a signature grid of $\PlHolant(\mathcal{EQ}\mid K)$, and this 
    This is a signature grid of $\PlHolant(\mathcal{EQ}\mid K,\mathcal
    G)$.
    This replacement preserves planarity and the value of the partition function.
    
    Next, we prove that $r$ is realizable.
    We only need to show the reduction $\PlHolant(\mathcal{EQ}\mid r,\mathcal{G})\le_T\PlHolant(\mathcal{EQ}\mid K,\mathcal{G}).$
    Suppose $\Omega'$ is a signature grid of $\PlHolant(\mathcal{EQ}\mid r,\mathcal{G})$.
    For every occurrence of $r$, replace it by one copy of $K$ with an {\sc Equality} signature $=_1$ attached to the second input of $K$. 
    We obtain a signature grid of $\PlHolant(\mathcal{EQ}\mid K,\mathcal{G})$, and this replacement preserves planarity and the value of partition function.
    Thus the row sum signature $r$ of $K$ is realizable.
    Similarly, 
    the column sum  $c$ is also realizable.
\end{proof}

\begin{lemma}\label{lem:unary_constanat}
Fix $M\in \Sym{n}{F}{\Rp}$. 
If a unary signature $u\in\mathbb R_{>0}^n$ is realizable relative to $M$, then 
either there exists $c>0$ such that $u_i=c$ for every $i\in[n]$, or
%Otherwise,
$\PlGH(M)$ is \#P-hard.
\end{lemma}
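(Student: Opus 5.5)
Suppose $u\in\mathbb R_{>0}^n$ is realizable and not constant. The plan is to manufacture from $u$ a realizable matrix with nonconstant diagonal and then invoke \cref{thm:nonconstant}.

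\textbf{Step 1: a symmetric matrix carrying $u$ on its diagonal.} By \cref{lem:thickening_interpolation}, $u^{\circ 1/2}$ is realizable, since $u$ has positive entries. Apply \cref{lem:realize_diag_K_diag} with the realizable unaries $u^{\circ 1/2}$ on both sides and $K=M$; $M$ is trivially realizable since $\mathcal{G}\ni M$. This gives the realizable matrix
\[
N=\mathrm{diag}(u)^{1/2}\, M\, \mathrm{diag}(u)^{1/2}.
\]
Because $M$ is symmetric, so is $N$. Since $u>0$, $N$ has positive entries, and $\det N=\det M\prod_i u_i\neq 0$. Hence $N\in\Sym{n}{F}{\Rp}$ and $N_{ii}=u_iM_{ii}$.

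\textbf{Step 2: case split on the diagonal of $N$.} If $N$ has nonconstant diagonal, then $\PlGH(N)$ is \#P-hard by \cref{thm:nonconstant}. Realizability gives $\PlGH(N)\equiv\PlHolant(\mathcal{EQ}\mid N)\le_T\PlHolant(\mathcal{EQ}\mid M)\equiv\PlGH(M)$, taking $\mathcal{G}=\{M\}$, so $\PlGH(M)$ is \#P-hard.

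\textbf{Step 3: the degenerate case.} The main obstacle is when $u_iM_{ii}$ happens to be constant even though $u$ is not. Two cases arise.
\begin{itemize}
\item If $M$ itself has nonconstant diagonal, we are already done by \cref{thm:nonconstant} applied to $M$.
\item Otherwise $M_{ii}=m$ is constant, so $N_{ii}=mu_i$ is nonconstant, because $m>0$ and $u$ is nonconstant. This contradicts the assumption of the degenerate case.
\end{itemize}
So the degenerate case cannot occur once $M$ has constant diagonal, and in every case either $u$ is constant or $\PlGH(M)$ is \#P-hard.

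If one prefers not to branch on $M$'s diagonal, an alternative is to use $N=\mathrm{diag}(u)M\,\mathrm{diag}(u)$, built with $u$ on both sides, together with $N'=\mathrm{diag}(u)^{1/2}M\,\mathrm{diag}(u)^{1/2}$. Their diagonals are $u_i^2M_{ii}$ and $u_iM_{ii}$, and these cannot both be constant unless $u_i$ is constant. Either route reduces the lemma to \cref{thm:nonconstant} plus the gadget closure lemmas.
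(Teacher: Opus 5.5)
Your proof is correct and is essentially the paper's argument: you first dispose of a nonconstant diagonal of $M$ via \cref{thm:nonconstant}, then use \cref{lem:realize_diag_K_diag} to realize a diagonal rescaling of $M$ by $u$, whose diagonal is nonconstant whenever $u$ is. The only difference is that the paper rescales by $u$ itself, $\mathrm{diag}(u)M\,\mathrm{diag}(u)$ with diagonal $a u_i^2$, so your detour through $u^{\circ 1/2}$ and \cref{lem:thickening_interpolation} is harmless but unnecessary.
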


\begin{proof}
\cref{thm:nonconstant} gives $M_{ii}=a$ for some $a>0$ and every $i\in[n]$, otherwise $\PlGH(M)$ is \#P-hard.
Write $u=(u_1,\ldots, u_n).$
By \cref{lem:realize_diag_K_diag}, the binary signature $N=\mathrm{diag}(u_1,\ldots, u_n)\cdot M\cdot \mathrm{diag}(u_1,\ldots,u_n)$ is realizable.
Clearly,  $N\in \Sym{n}{F}{\Rp}$ since $u\in \Rp^n.$
Its diagonal is $N_{ii}=au_i^2$ for every $i\in[n]$. 
If $u$ is non-constant then  $N$ has non-constant diagonal,
and thus $\PlGH(N)$ \#P-hard by \cref{thm:nonconstant}.
It follows that $\PlGH(M)$ is \#P-hard.
\end{proof}

\begin{lemma}[Homogeneous setting]\label{lem:homogeneous}
Fix $M\in \Sym{n}{F}{\Rp}$. Assume $\PlGH(M)$ is not \#P-hard. Then,
for every binary signature $K\in\mathbb R_{>0}^{n\times n}$ realizable relative to $M$, $K$ has constant diagonal, constant row sum and constant column sum.
\end{lemma}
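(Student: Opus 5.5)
The plan is to combine \cref{lem:realize_diag_rowsum_colsum} with \cref{lem:unary_constanat}. Fix a realizable $K\in\mathbb R_{>0}^{n\times n}$ and assume $\PlGH(M)$ is not \#P-hard.

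By \cref{lem:realize_diag_rowsum_colsum}, three unary signatures are realizable relative to $M$:
\begin{itemize}
\item the diagonal $d$, with $d_i=K_{ii}$;
\item the row-sum vector $r$, with $r_i=\sum_j K_{ij}$;
\item the column-sum vector $c$, with $c_i=\sum_j K_{ji}$.
\end{itemize}
All entries of $K$ are strictly positive, so each of these vectors lies in $\mathbb R_{>0}^n$. This positivity is exactly the hypothesis needed to apply \cref{lem:unary_constanat}.

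Apply \cref{lem:unary_constanat} to each of $d$, $r$, $c$ in turn. Each is either a constant vector, or $\PlGH(M)$ is \#P-hard. The second alternative is excluded by assumption. Hence $d$, $r$ and $c$ are all constant, which is precisely the statement that $K$ has constant diagonal, constant row sum and constant column sum.

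The only point needing care is the meaning of ``realizable'' in \cref{def:realizable}. Realizability of $u$ means $\PlHolant(\mathcal{EQ}\mid u,\mathcal G)\le_T \PlHolant(\mathcal{EQ}\mid \mathcal G)$ for every $\mathcal G$ containing $M$. This is the notion used inside the proof of \cref{lem:unary_constanat}, where it is combined with \cref{lem:realize_diag_K_diag} and \cref{thm:nonconstant}. The reduction chains compose exactly as in \cref{cor:thickening}, so the three unaries obtained above qualify.

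I expect no real obstacle: positivity of $K$ is what makes the unaries strictly positive, and the remaining steps are direct applications of earlier results.
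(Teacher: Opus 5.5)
Your proposal is correct and follows the paper's proof exactly: \cref{lem:realize_diag_rowsum_colsum} makes the diagonal, row-sum and column-sum unaries realizable, and \cref{lem:unary_constanat} forces each to be constant. Your explicit remark that positivity of $K$ places these unaries in $\mathbb R_{>0}^n$ is the hypothesis the paper uses only implicitly.
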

\begin{proof}
    By \cref{lem:realize_diag_rowsum_colsum}, the diagonal, row sum and column sum of $K$ are realizable as unary signatures.
    \cref{lem:unary_constanat} then shows that these unaries must be constant, otherwise $\PlGH(M)$ is \#P-hard.
\end{proof}

The purpose of introducing the auxiliary signature set $\mathcal{G}$ in \cref{def:realizable} is to prove realizability of the signature in discuss in rigor (e.g., see the proof of \cref{cor:thickening}).
However, all realizability proofs in this section proceed via planar edge gadgets and polynomial interpolation from $M$, and the auxiliary signature set $\mathcal{G}$ is immatierial.
It is the same case in the reduction proofs in \autoref{app:nonconstant} and \autoref{app: nnextension}.
For simplicity, we omit the auxiliary signature set $\mathcal{G}$ in the lemma statements and proofs there.

Throughout the following proof, by being in 
the \emph{homogeneous setting} we mean
the matrix under discussion has constant diagonal, constant row  and  column sums. 
If $\PlGH(M)$ is not \#P-hard, every entrywise positive realizable matrix is in the homogeneous setting.
For simplicity, the phrase ``Assume $\PlGH(M)$ is not \#P-hard'' is omitted from lemma statements and proofs below.

\section{The Entrywise Logarithmic Matrix Algebra}\label{sec:algebra}

In this section, we define
%formulate the definition of 
the entrywise logarithmic matrix algebra, and explore its basic properties, including the existence of a multiplicative unit and its dimension.
All realizability in this section is relative to the fixed $M\in\Sym{n}{F}{\Rp}$ unless otherwise stated.

\begin{definition}\label{def:mathcalR}
    Let $$\mathcal R:=\{K\in \Rp^{n\times n}\mid K \text{ is realizable from }M\}$$ be the set of positive valued binary signatures realizable relative to $M$. 
\end{definition}
\subsection{Definition of the Entrywise Logarithmic Matrix Algebra}

Recall in \cref{def:matrix-conventions}, $P=I-U$ is the orthogonal projection onto $\mathbf{1}^\perp\subseteq \R^n.$
Instead of directly taking $\log_\circ$ of all realizable matrices $K$, we take the \emph{centered} entrywise logarithm by 
%restricting 
projecting $\log_\circ(K)$ on the subspace $\mathbf{1}^\perp\subseteq\mathbb{R}^n.$
This normalizes the row sums and column sums 
%of $K$ 
to $0$ (\cref{lem:normalize_rumsum_colsum_to_0}), bringing convenience when we prove the centered logarithmic  algebra $\mathcal{A}$ forms an $\mathbb{R}$-algebra, i.e.,
a vector space
with bilinear multiplication, in \cref{lem:universal-algebra}. 
The centralization doesn't affect realizability after entrywise exponentiating back, as shown in \cref{lem:algebra-exponentials}.

\begin{definition}[Entrywise logarithmic matrix algebra]\label{def: entrywise log matrix algebra}
Define
\begin{equation*}
\mathcal A:=\operatorname{Span}_{\mathbb R}
\{P\log_\circ(K)P\mid K\in\mathcal R\}.
\label{eq:algebra}
\end{equation*}
Every member in $\mathcal A$ 
%in this definition  consists of 
is a finite $\mathbb{R}$-linear combination of elements of the form 
$P\log_\circ(K)P$.
\end{definition}

\begin{lemma}\label{lem:algebra-exponentials}
For every $C\in\mathcal A$, $\exp_\circ C$ is realizable.
\end{lemma}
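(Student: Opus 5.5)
The plan is to show that centering by $P$ changes $\log_\circ(K)$ only by a multiple of the all-ones matrix $J$. Once that holds, $\exp_\circ C$ is a positive scalar multiple of an entrywise product of entrywise real powers of realizable matrices, and such a matrix is realizable by the results of \cref{subsec:thickening}. As stipulated at the end of \cref{subsec:homogeneous}, we work under the standing assumption that $\PlGH(M)$ is not \#P-hard, so \cref{lem:homogeneous} is available.

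\textbf{Step 1: the centered logarithm is a shift by $J$.} Fix $K\in\mathcal R$ and write $L=\log_\circ(K)$. For every $t\in\mathbb R$, the matrix $K^{\circ t}$ is realizable by \cref{cor:thickening_interpolation} and is entrywise positive. By \cref{lem:homogeneous}, its row sums $g_i(t)=\sum_{j}K_{ij}^t$ are independent of $i$ for every $t$. Each $g_i$ is smooth in $t$, so $g_i'(0)=\sum_j \log K_{ij}$ is also independent of $i$; hence every row sum of $L$ equals a common constant $\rho$. The same argument applied to column sums shows every column sum of $L$ equals a common constant, which must again be $\rho$, since both sum to the total $\sum_{ij}L_{ij}=n\rho$. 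Therefore $UL=LU=\frac{\rho}{n}J$ and $ULU=\frac{\rho}{n}J$. Expanding $PLP=L-UL-LU+ULU$ gives
\[
P\log_\circ(K)P=\log_\circ(K)-\tfrac{\rho}{n}J .
\]

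\textbf{Step 2: exponentiate back.} Write $C=\sum_{k=1}^m c_k\,P\log_\circ(K_k)P$ with $c_k\in\mathbb R$ and $K_k\in\mathcal R$. By Step 1, $C=\sum_k c_k\log_\circ(K_k)-\beta J$ for some real $\beta$. Exponentiating entrywise,
\[
\exp_\circ C=e^{-\beta}\,K_1^{\circ c_1}\circ\cdots\circ K_m^{\circ c_m}.
\]
Each factor $K_k^{\circ c_k}$ is realizable by \cref{cor:thickening_interpolation}, and their entrywise product is realizable by repeated use of \cref{cor:thickening}.

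\textbf{Step 3: the scalar factor.} It remains to handle $e^{-\beta}$. If $N$ is realizable and $\lambda>0$, then $\lambda N$ is realizable. Given an instance with $s$ occurrences of $\lambda N$, replace each occurrence by $N$; the partition function is multiplied by exactly $\lambda^{-s}$, which we undo by multiplying back by $\lambda^{s}$. This is a trivial polynomial-time reduction.

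The only nontrivial point is Step 1: we need the row and column sums of $\log_\circ K$ to be constant, whereas \cref{lem:homogeneous} only constrains the row and column sums of realizable matrices themselves. Passing from $K$ to $\log_\circ K$ is exactly what the family $K^{\circ t}$ and differentiation at $t=0$ achieve. The remaining care is that the homogeneity must hold for every $t$ in a neighbourhood of $0$, not just at isolated values, so that differentiation is legitimate; \cref{cor:thickening_interpolation} gives realizability of $K^{\circ t}$ for all $t\in\mathbb R$, so this holds.
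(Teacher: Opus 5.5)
Your proof is correct, but it takes a different route from the paper's. You use the standing hypothesis that $\PlGH(M)$ is not \#P-hard and apply \cref{lem:homogeneous} to the whole family $K^{\circ t}$. Differentiating at $t=0$ shows that $\log_\circ K$ has constant row and column sums. The centering then collapses to $P\log_\circ(K)P=\log_\circ K-\mu J$, so $\exp_\circ C$ is a positive scalar times an entrywise product of real entrywise powers of matrices in $\mathcal R$.

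The paper's proof never uses homogeneity, so it holds for arbitrary $M$. It takes the closed real span $\mathcal L$ of entrywise logarithms of positive realizable unaries. It notes that $\log_\circ$ of the row-sum unary of $K^{\circ t}$ traces a curve in $\mathcal L$, and takes the derivative at $t=0$ inside $\mathcal L$. This realizes the geometric-mean unaries $r_x=\exp(\frac1n\sum_y\log K_{xy})$ and $c_y$. Then $\exp_\circ(PLP)=e^{\mu}\,\mathrm{diag}(r)^{-1}K\,\mathrm{diag}(c)^{-1}$ is realized by diagonal scaling via \cref{lem:realize_diag_K_diag}.

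What each route buys:
\begin{itemize}
\item Your version is shorter and avoids the limit/closedness argument in $\mathcal L$. It also establishes the content of \cref{lem:normalize_rumsum_colsum_to_0} along the way.
\item Its cost is a dependence on \cref{lem:homogeneous}, hence on \cref{lem:unary_constanat} and \cref{thm:nonconstant}. No circularity arises, since \autoref{app:nonconstant} never uses this lemma.
\item The paper's ordering keeps this lemma unconditional. As a by-product it produces the realizable unary $r$, which it then feeds into \cref{lem:unary_constanat} to derive \cref{lem:normalize_rumsum_colsum_to_0}.
\end{itemize}
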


\begin{proof}
Let 
$$\mathcal L=\mathrm{Span}_{\R}\{\log_\circ u\mid u\in \Rp^{n} \text{ is a realizable unary signature}\}\subseteq\mathbb R^n$$
be the real span of  entrywise logarithms of positive realizable unaries. 
If $w=\sum_{j=1}^s a_j\log_\circ u^{(j)}\in\mathcal L$, where $s\geq0$, $a_j\in\mathbb R$, and each $u^{(j)}$ is a positive realizable unary, then
\[
\exp_\circ w=\mathop{\circ}_{j=1}^s (u^{(j)})^{\circ a_j}.
\]
\cref{lem:thickening,lem:thickening_interpolation}
guarantee that $\exp_\circ w$ is realizable for every $w\in\mathcal L$.

Fix $K\in\mathcal R$. 
For every $t\in\mathbb R$, $K^{\circ t}$ is realizable by \cref{cor:thickening_interpolation}.
By \cref{lem:realize_diag_rowsum_colsum}, the row sums of $K^{\circ t}$ is realizable as a unary signature.
The unary has entries
$u_x(t)=\sum_{y=1}^n K_{xy}^t$ for $x\in[n]$.
The curve $\log_\circ u(t),\,t\in \mathbb{R}$ lies in $\mathcal L$. 
Let $$v=\lim_{t\to 0}{\frac{1}{t}}\left(\log_\circ u(t)-\log_\circ u(0)\right)=\left(\left.\frac{d}{dt}\log u_1(t)\right|_{t=0},\ldots,\left.\frac{d}{dt}\log u_n(t)\right|_{t=0}\right)$$ be the derivative of $\log_\circ u(t)$ at zero.
For every $t\in \R$, ${\frac{1}{t}}\left(\log_\circ u(t)-\log_\circ u(0)\right)\in \mathcal{L}$ since $\mathcal{L}$ is a $\R$-vector space.
Therefore $v\in \mathcal L$, as $\mathcal{L}$ is closed as a metric space.
Its $x$-th coordinate is
\[
v_x=\left.\frac{d}{dt}\log u_x(t)\right|_{t=0}
=\frac{\sum_{y=1}^n K_{xy}^0\log K_{xy}}{\sum_{y=1}^n K_{xy}^0}
=\frac1n\sum_{y=1}^n\log K_{xy}.
\]
Consequently, $r=\exp_\circ(v)$ with the $x$-coordinate being
$r_x=\exp(n^{-1}\sum_{y=1}^n\log K_{xy})$ is a realizable unary signature. 
Similarly, the unary signature $c$ with
$c_y=\exp(n^{-1}\sum_{x=1}^n\log K_{xy})$ is also realizable.
Their reciprocals are realizable by taking $t=-1$ in \cref{lem:thickening_interpolation}.

Writing $L=\log_\circ K$, the centering operation is explicitly
\[
(PLP)_{xy}=L_{xy}-\frac1n\sum_{b=1}^nL_{xb}
-\frac1n\sum_{a=1}^nL_{ay}
+\frac1{n^2}\sum_{a=1}^n\sum_{b=1}^nL_{ab}
\quad(x,y\in[n]).
\]
It follows that
\[
\bigl[\exp_\circ(PLP)\bigr]_{xy}
=\frac{e^\mu K_{xy}}{r_xc_y}, \text{ where } \mu=\frac1{n^2}\sum_{a=1}^n\sum_{b=1}^n\log K_{ab}.
\]
Thus, $\exp_\circ(PLP)=e^\mu\cdot\mathrm{diag}(r_1^{-1},\ldots,r_n^{-1})\cdot K\cdot \mathrm{diag}(c_1^{-1},\ldots,c_n^{-1})$ is realizable by \cref{lem:realize_diag_K_diag}.

For a finite linear combination
%expression
$C=\sum_{j=1}^s a_jP\log_\circ(K^{(j)})P$ with $K^{(j)}\in\mathcal R$, exponentiation gives the entrywise product of the $a_j$-th powers of 
 realizable signatures $\exp_\circ(P\log_\circ(K^{(j)})P)$.
%these centered signatures.
Another application of \cref{lem:thickening,lem:thickening_interpolation} proves the claim.
\end{proof}

\begin{lemma}\label{lem:normalize_rumsum_colsum_to_0}
%$C=P\log_\circ(K)P\in 
    For every $C \in \mathcal{A}$, $C$ has constant diagonal, zero row sums and zero column sums.
    For $C=P\log_\circ(K)P$, we have
    $C = \log_\circ K-\mu J$, where $\mu$ is the average of all entries of $\log_\circ K.$
    %Otherwise, $\PlGH(M)$ is \#P-hard.
\end{lemma}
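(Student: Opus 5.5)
By linearity it is enough to handle a single generator $C=P\log_\circ(K)P$ with $K\in\mathcal R$. The three target properties are constant diagonal, zero row sums and zero column sums. Each is preserved under real linear combinations, since the constant-diagonal matrices form a subspace, and so do those annihilated by right or left multiplication by $\mathbf 1$. So once the generators have these properties, every element of $\mathcal A$ does.

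Fix $K\in\mathcal R$ and write $L=\log_\circ K$. Zero row and column sums of $C=PLP$ are immediate: $P\mathbf 1=0$ and $P$ is symmetric, so $C\mathbf 1=PL(P\mathbf 1)=0$ and $\mathbf 1^{\tt T}C=(P\mathbf 1)^{\tt T}LP=0$.

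The key input is homogeneity, used to identify $C$ with $L-\mu J$. First I show $L$ has constant row sums. By \cref{cor:thickening_interpolation}, $K^{\circ t}$ is realizable for every $t\in\R$, and it is entrywise positive. So \cref{lem:homogeneous} says each $K^{\circ t}$ has constant row sums: $\sum_y K_{xy}^t=s(t)$ independent of $x$. Differentiating at $t=0$ gives $\sum_y \log K_{xy}$ independent of $x$, which is exactly constant row sums of $L$.

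An alternative route uses the proof of \cref{lem:algebra-exponentials}: the unary $r_x=\exp(n^{-1}\sum_y L_{xy})$ is realizable and positive, so by \cref{lem:unary_constanat} it is constant. The same argument with columns shows that the column sums of $L$ are constant.

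Call these common values $\rho$ (row) and $\gamma$ (column). Then $n\rho=n\gamma=\sum_{a,b}L_{ab}=n^2\mu$, so $\rho=\gamma=n\mu$. Substituting into the explicit centering formula from the proof of \cref{lem:algebra-exponentials},
\[
(PLP)_{xy}=L_{xy}-\tfrac1n\rho-\tfrac1n\gamma+\mu=L_{xy}-\mu,
\]
so $C=\log_\circ K-\mu J$.

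Finally, the diagonal of $C$ is $\log K_{xx}-\mu$. This is constant because $K$ has constant diagonal by \cref{lem:homogeneous}, or equivalently by \cref{thm:nonconstant} applied through the realizable unary $d$ of \cref{lem:realize_diag_rowsum_colsum} together with \cref{lem:unary_constanat}.

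The only real point is the derivative step at $t=0$, i.e.\ passing from constant row sums of every $K^{\circ t}$ to constant row sums of $\log_\circ K$. It is harmless here: the constancy holds for all $t$, and the functions $t\mapsto\sum_y K_{xy}^t$ are smooth. Everything else is bookkeeping, carried out under the standing assumption that $\PlGH(M)$ is not \#P-hard.
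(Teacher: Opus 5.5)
Your proof is correct and follows essentially the paper's argument: homogeneity gives constant row and column sums of $\log_\circ K$ (the paper uses exactly your ``alternative route'' via the realizable unary $r$ and \cref{lem:unary_constanat}), then $\rho=\gamma=n\mu$ and the centering formula give $C=\log_\circ K-\mu J$, with the constant diagonal coming from \cref{lem:homogeneous}. Your primary route, differentiating $\sum_y K_{xy}^t=\sum_y K_{x'y}^t$ at $t=0$, is a slightly more direct shortcut that does not need realizability of the limiting unary, and your remark that the zero row and column sums follow from $P\mathbf 1=0$ alone, without homogeneity, is a worthwhile simplification.
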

\begin{proof}
    Let $L=\log_\circ K$.
    By \cref{lem:homogeneous}, $K$ has constant diagonal, otherwise $\PlGH(M)$ is \#P-hard.
    So $L$ also has constant diagonal.
    By the proof of \cref{lem:algebra-exponentials}, the unary signature $r$ where $r_x=\exp(n^{-1}\sum_{y=1}^n\log K_{xy})$ is realizable.
    By \cref{lem:unary_constanat}, $r$ is a constant unary signature. %otherwise $\PlGH(M)$ is \#P-hard.
    
    Then $L$ has a constant row sum $r_L$.
    Similarly, $L$ also has a constant column sum $c_L$.
    Then $n r_L = n c_L$ is the sum of all entries in $L.$
    Thus, $r_L=c_L=n\mu.$
    By the proof of \cref{lem:algebra-exponentials},
    $$
    C_{xy}=(PLP)_{xy}=L_{xy}-\frac{r_L}{n}-\frac{c_L}{n}+\mu=L_{xy}-\mu.
    $$
    Therefore, $C$ has constant diagonal and $C=\log_\circ K-\mu J.$
    A general $C \in \mathcal{A}$ is a linear combination of the form
    $P\log_\circ(K)P$ and hence also has constant diagonal, zero row  and  column sums.
\end{proof}

\begin{lemma}\label{lem:universal-algebra}
The $\R$-vector space $\mathcal A$ is a transpose-closed matrix algebra.
\end{lemma}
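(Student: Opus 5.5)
Being a vector space is built into \cref{def: entrywise log matrix algebra}, so two things remain: closure under transpose and closure under matrix multiplication.

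\emph{Transpose.} If $K\in\mathcal R$, then $K^{\tt T}$ is realizable: in any signature grid, relabel the degree-2 vertex carrying $K^{\tt T}$ so that its two incident edges swap their roles as first and second variable. This preserves planarity and the partition function. Since $P$ is symmetric, $(P\log_\circ(K)P)^{\tt T}=P\log_\circ(K^{\tt T})P$. Extending by linearity gives $\mathcal A^{\tt T}=\mathcal A$.

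\emph{Products.} By bilinearity it suffices to show $C_1C_2\in\mathcal A$ for generators $C_1,C_2\in\mathcal A$. Each $C_i$ has zero row and column sums by \cref{lem:normalize_rumsum_colsum_to_0}.

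For $s,t\in\mathbb R$ set
\[
K(s,t)=\exp_\circ(sC_1)\,\exp_\circ(tC_2).
\]
By \cref{lem:algebra-exponentials} (using $sC_1,tC_2\in\mathcal A$) and \cref{lem:stretching}, $K(s,t)$ is realizable. Its entries are sums of positive numbers, so $K(s,t)\in\mathcal R$. Hence $G(s,t):=P\log_\circ(K(s,t))P$ lies in $\mathcal A$ for all $s,t$.

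$\mathcal A$ is a finite-dimensional subspace, hence closed, so limits of difference quotients of $G$ stay in $\mathcal A$. Apply this first in $s$ and then in $t$ at $(0,0)$; the function $G$ is real analytic, so the mixed partial is a limit of mixed difference quotients:
\[
\partial_s\partial_t G(0,0)=\lim_{h\to0}h^{-2}\bigl(G(h,h)-G(h,0)-G(0,h)+G(0,0)\bigr)\in\mathcal A.
\]

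Now compute this derivative. Write
\[
K_{xy}(s,t)=\sum_z e^{sC_{1,xz}+tC_{2,zy}},\qquad K(0,0)=J=nU,\qquad \log K_{xy}(0,0)=\log n.
\]
With $\mathbb E_z$ denoting the uniform average over $z\in[n]$,
\[
\log K_{xy}(s,t)=\log n+\log \mathbb E_z\,e^{sC_{1,xz}+tC_{2,zy}}.
\]
The mixed derivative at $0$ of $\log\mathbb E\,e^{sX+tY}$ is the covariance $\mathbb E[XY]-\mathbb E[X]\,\mathbb E[Y]$ with $X=C_{1,xz}$ and $Y=C_{2,zy}$. Since $\mathbb E_zC_{1,xz}=0$ (zero row sums), this reduces to $\frac1n\sum_zC_{1,xz}C_{2,zy}=\frac1n(C_1C_2)_{xy}$. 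Thus
\[
\partial_s\partial_t\log_\circ K(0,0)=\tfrac1n C_1C_2 .
\]
Because $C_1$ has zero column sums and $C_2$ zero row sums, $PC_1C_2P=C_1C_2$, and therefore $\partial_s\partial_tG(0,0)=\frac1nC_1C_2\in\mathcal A$.

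The main obstacle is the derivative bookkeeping: justifying that $G$ is smooth (analytic) near $0$ so the iterated limits are legitimate, and checking that the centering and the zero-row/column-sum facts kill every extra term. Both are routine once the homogeneous setting of \cref{lem:normalize_rumsum_colsum_to_0} is invoked.
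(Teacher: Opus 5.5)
Your proposal is correct and follows essentially the same route as the paper. Transpose-closure comes from swapping the two variables of $K$, and product-closure comes from realizing $\exp_\circ(sC_1)\exp_\circ(tC_2)$ by stretching and taking $\partial_s\partial_t\,P\log_\circ(\cdot)P$ at the origin; this derivative equals $C_1C_2/n$ and lies in $\mathcal A$ because $\mathcal A$ is closed. Your covariance shortcut, and your observation that only the zero column sums of $C_1$ and zero row sums of $C_2$ are needed for $PC_1C_2P=C_1C_2$, are tidier bookkeeping of the same computation.
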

\begin{proof}
    Suppose $K\in \mathcal{R}$ is realizable, then $K^{\tt T}$ is also realizable, simply by swapping two variables of the binary signature $K.$
    Thus $\mathcal{R}$ is transpose closed, 
    %as well as 
    and since $P^{\tt T} =P$, so is $\mathcal{A}$.

    Fix $C,D\in\mathcal A$, we show $CD\in \mathcal{A}.$
    For $s,t\in\mathbb R$, we have $sC, tD\in \mathcal{A}$ since $\mathcal{A}$ is a $\R$-vector space.
    By \cref{lem:algebra-exponentials}, $\exp_\circ(sC)$ and $\exp_\circ(tD)$ are both realizable.
    By stretching (\cref{lem:stretching}), $M(s,t)=\exp_\circ(sC)\exp_\circ(tD)$ is realizable, for every $s,t\in \mathbb{R}$.
    We have
    \[
    M(s,t)_{xy}=\sum_{v=1}^n e^{sC_{xv}+tD_{vy}}
    \quad(x,y\in[n]).
    \]
    At $(s,t)=(0,0)$, $M(0,0)_{xy}=n,$ and direct differentiation gives, for all $x,y\in[n]$,
    \[
    \qquad
    \partial_s M(0,0)_{xy}=\sum_{v=1}^n C_{xv},\qquad
    \partial_t M(0,0)_{xy}=\sum_{v=1}^n D_{vy}.
    \]
    By \cref{lem:normalize_rumsum_colsum_to_0}, $C$ and $D$ have zero row  and column sums, so $\partial_s M(0,0)_{xy}=
    \partial_t M(0,0)_{xy}=0.$
    Differentiating twice at $(s,t)=(0,0)$,
    \[
    \partial_s\partial_tM(0,0)_{xy}=\sum_{v=1}^n C_{xv}D_{vy}=(CD)_{xy}.\]
    Then
    \[
    \partial_s\partial_t\log M(0,0)_{xy}
    =\frac{\partial_s\partial_tM(0,0)_{xy}}{M(0,0)_{xy}}
    -\frac{\partial_sM(0,0)_{xy}\cdot\partial_tM(0,0)_{xy}}{M(0,0)_{xy}^2}
    =\frac{(CD)_{xy}}n.
    \]
    Since $P$ is a constant matrix independent of $s$ and $t$, we have 
    \[
    \partial_s\partial_t(P\log_\circ M(0,0)P)=P\cdot\partial_s\partial_t\log M(0,0)\cdot P=\frac{{PCDP}}{n}.
    \]
    Since $C$ and $D$ have zero row sums and zero column sums, $PC=C$ and $DP=D$.
    Therefore, $\partial_s\partial_t(P\log M(0,0)P)=CD/n.$
    Notice that for every $s,t\in \mathbb{R}$, $P\log_\circ M(s,t) P\in \mathcal{A}$.
    So $\partial_s\partial_t(P\log_\circ M(0,0)P)\in \mathcal{A}$, since $\mathcal{A}$ is closed as a metric space.
    Hence $CD\in\mathcal A$.
    Therefore, $\mathcal{A}$ is a transpose-closed matrix algebra.
\end{proof}

%%%%%%JYC here

\subsection{Basic Properties of the Matrix Algebra}

In this subsection, we study some basic properties of the entrywise logarithmic matrix algebra $\mathcal{A}$ defined in \cref{def: entrywise log matrix algebra}, including its dimension, support, and the existence of a unit.
To study $\mathcal{A}$,
it is convenient to define the real span $\mathcal{B}$ of realizable entrywise positive matrices:
\begin{definition}\label{def:binary-matrix-span}
Let 
\(
\mathcal B=\operatorname{span}_{\mathbb R}\mathcal R.
\)
\end{definition}

Note that $\mathcal{B}$ may contain matrices that are not realizable, since realizability is not necessarily preserved under addition.
However, taking real span preserves the homogeneous setting, namely  matrices in $\mathcal{B}$ have constant diagonal, constant row sum and column sum (\cref{lem:homogeneous-space}).

\begin{lemma}\label{lem:matrix-span-closure}
$\mathcal B$ is a matrix algebra closed under transpose and entrywise product, and it contains $J$.
\end{lemma}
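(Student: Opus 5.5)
Since $\mathcal B$ is the real span of $\mathcal R$, each claimed closure property reduces by bilinearity to a statement about generators. I will check each property on elements of $\mathcal R$ and then extend linearly.

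\emph{Matrix products.} Take $K_1,K_2\in\mathcal R$. By \cref{lem:stretching}, $K_1K_2$ is realizable, and it is entrywise positive because both factors are. Hence $K_1K_2\in\mathcal R$. For general elements $\sum_i a_iK_i$ and $\sum_j b_jL_j$ of $\mathcal B$, bilinearity gives
\[
\Bigl(\sum_i a_iK_i\Bigr)\Bigl(\sum_j b_jL_j\Bigr)=\sum_{i,j}a_ib_jK_iL_j\in\mathcal B .
\]
So $\mathcal B$ is closed under matrix multiplication.

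\emph{Entrywise products.} \cref{cor:thickening} gives $K_1\circ K_2\in\mathcal R$, and the Hadamard product is also bilinear, so the same expansion applies.

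\emph{Transpose.} As in the proof of \cref{lem:universal-algebra}, $K^{\tt T}$ is realizable by swapping the two variables of the binary signature. Positivity is preserved, so $\mathcal R$ is transpose-closed. Transpose is linear, so $\mathcal B$ is transpose-closed as well.

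\emph{The matrix $J$.} Take any $K\in\mathcal R$; the set is nonempty since $M$ itself is in $\mathcal R$. By \cref{cor:thickening_interpolation} with $t=0$, the matrix $K^{\circ 0}=J$ is realizable. It is positive, so $J\in\mathcal R\subseteq\mathcal B$.

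I do not expect a real obstacle here. The one point needing care is that $\mathcal R$ is nonempty, which holds because $M$ is trivially realizable from itself: take the identity reduction.

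The $t=0$ case of \cref{lem:thickening_interpolation} is legitimate, since the lemma is stated for every $t\in\mathbb R$. If one prefers to avoid it, there is a direct gadget: an edge whose two endpoints carry no constraint coupling their values realizes $J$. Alternatively, $J$ is the matrix $\exp_\circ(O)$ with $O\in\mathcal A$, which is realizable by \cref{lem:algebra-exponentials}.
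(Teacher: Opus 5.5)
Your proposal is correct and follows essentially the same route as the paper: check closure on the generators in $\mathcal R$ using stretching (\cref{lem:stretching}), thickening (\cref{cor:thickening}) and swapping the two variables, extend by bilinearity, and obtain $J=M^{\circ 0}\in\mathcal R$ from \cref{cor:thickening_interpolation}. Your explicit remark that products of entrywise positive matrices stay entrywise positive, which membership in $\mathcal R$ requires, is a detail the paper leaves implicit.
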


\begin{proof}
    Let $C=\sum_{i}a_i K_i$ and $D=\sum_{j}b_jN_j$ both in $\mathcal{B}$, where $K_i,N_j\in \mathcal{R}.$
    Then $CD=\sum_{i,j}a_ib_jK_iN_j\in \mathcal{B}$, as $K_iN_j\in \mathcal{R}$ by \cref{lem:stretching}.
    Hence $\mathcal{B}$ is a matrix algebra.
    Because $\mathcal{R}$ is transpose closed, $\mathcal{B}$ inherits this property by taking linear span.
    Also, $C\circ D=\sum_{i,j}a_ib_jK_i\circ N_j\in \mathcal{B}$, as $K_i\circ N_j\in \mathcal{R}$ by \cref{cor:thickening}.
    Finally, by \cref{cor:thickening_interpolation}, $J=M^{\circ 0}\in \mathcal{R}\subseteq \mathcal{B}.$
\end{proof}

\begin{lemma}\label{lem:homogeneous-space}
Every $C\in\mathcal B$ has constant diagonal, constant row sum and column sum.
\end{lemma}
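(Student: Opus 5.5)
The plan is to use linearity. Each generator $K\in\mathcal R$ is entrywise positive and realizable. Under the standing assumption that $\PlGH(M)$ is not \#P-hard, \cref{lem:homogeneous} says every such $K$ has constant diagonal, constant row sum and constant column sum. The three defining properties are preserved under real linear combinations, so every element of $\mathcal B=\operatorname{span}_{\mathbb R}\mathcal R$ inherits them. The proof is therefore short.

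Concretely, I first encode each property as a linear condition on $C$.

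\begin{itemize}
\item Constant diagonal means $C_{ii}-C_{11}=0$ for all $i$.
\item Constant row sum means $C\mathbf 1\in\operatorname{Span}_{\mathbb R}\{\mathbf 1\}$, equivalently $P C\mathbf 1=0$.
\item Constant column sum means $\mathbf 1^{\tt T}C\in\operatorname{Span}_{\mathbb R}\{\mathbf 1^{\tt T}\}$, equivalently $\mathbf 1^{\tt T}CP=0$.
\end{itemize}

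Each condition defines the kernel of a linear map on $\mathbb R^{n\times n}$. Their intersection is therefore a linear subspace $\mathcal H$ of $\mathbb R^{n\times n}$.

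Next, \cref{lem:homogeneous} shows that every $K\in\mathcal R$ lies in $\mathcal H$. Since $\mathcal H$ is a subspace containing $\mathcal R$, it contains $\operatorname{span}_{\mathbb R}\mathcal R=\mathcal B$. Explicitly, if $C=\sum_i a_iK_i$, then $C_{jj}=\sum_i a_i\delta_i$ is independent of $j$, where $\delta_i$ is the common diagonal value of $K_i$. Similarly, the row sums of $C$ all equal $\sum_i a_i\rho_i$ and the column sums all equal $\sum_i a_i\gamma_i$, where $\rho_i$ and $\gamma_i$ are the common row and column sums of $K_i$.

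There is no real obstacle. The only point to state explicitly is that the lemma, like the others in this part of the paper, relies on the standing assumption that $\PlGH(M)$ is not \#P-hard; this is exactly what licenses the appeal to \cref{lem:homogeneous}. One might also note that the row sum and column sum of each $C\in\mathcal B$ coincide, since both equal $\frac1n$ times the total entry sum. This extra fact is not needed for the statement.
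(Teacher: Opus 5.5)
Your proposal is correct and matches the paper's argument: the paper likewise applies \cref{lem:homogeneous} to each generator $K\in\mathcal R$ and observes that constant diagonal, constant row sum and constant column sum are preserved under taking real linear combinations. Your explicit encoding of the three properties as linear conditions, and your remark that the argument relies on the standing assumption that $\PlGH(M)$ is not \#P-hard, are both accurate.
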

\begin{proof}
\cref{lem:homogeneous} gives these properties for each $K\in\mathcal R$.
Taking linear span preserves these properties.
\end{proof}

\begin{lemma}\label{lem:row-isometry}
For every $x\in[n]$ and all $C,D\in\mathcal B$,
\(
\langle C,D\rangle=n\sum_{y=1}^n C_{xy}D_{xy}.
\)
\end{lemma}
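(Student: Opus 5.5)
The identity says each row of $C\circ D$ has the same sum, namely $\langle C,D\rangle/n$. I will deduce this from two facts already proved: $\mathcal B$ is closed under entrywise product (\cref{lem:matrix-span-closure}), and every element of $\mathcal B$ has constant row sum (\cref{lem:homogeneous-space}).

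First, fix $C,D\in\mathcal B$. By \cref{lem:matrix-span-closure}, the entrywise product $E=C\circ D$ also lies in $\mathcal B$. Indeed, if $C=\sum_i a_iK_i$ and $D=\sum_j b_jN_j$ with $K_i,N_j\in\mathcal R$, then $E=\sum_{i,j}a_ib_j\,K_i\circ N_j$. Each $K_i\circ N_j$ is in $\mathcal R$ by \cref{cor:thickening}.

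Next, apply \cref{lem:homogeneous-space} to $E$. It says $E$ has a constant row sum $\rho$, so $\sum_{y=1}^n C_{xy}D_{xy}=\rho$ for every $x\in[n]$.

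Finally, sum this over all rows. This gives
\[
\langle C,D\rangle=\sum_{x=1}^n\sum_{y=1}^n C_{xy}D_{xy}=n\rho .
\]
Substituting $\rho=\sum_{y}C_{xy}D_{xy}$ for an arbitrary fixed row $x$ yields $\langle C,D\rangle=n\sum_{y=1}^n C_{xy}D_{xy}$, which is the claim.

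The only place needing care is the standing convention that $\PlGH(M)$ is not \#P-hard. The constant-row-sum property in \cref{lem:homogeneous-space} rests on \cref{lem:homogeneous}, which holds under that assumption. That is the same convention the paper adopts from this point on, so nothing further is required.
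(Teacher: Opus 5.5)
Your proof is correct and matches the paper's argument: $C\circ D\in\mathcal B$ by \cref{lem:matrix-span-closure}, it has constant row sum by \cref{lem:homogeneous-space}, and summing over the rows gives the identity. As a minor point, avoid naming $C\circ D$ as $E$, since the paper later reserves $E$ for the orthogonal projection onto $\mathcal S$.
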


\begin{proof}
The matrix $C\circ D$ lies in $\mathcal B$ by \cref{lem:matrix-span-closure} and has constant row sum by \cref{lem:homogeneous-space}. Consequently,
\[
\langle C,D\rangle
=\sum_{a=1}^n\sum_{y=1}^n C_{ay}D_{ay}
%=\sum_{a=1}^n\sum_{y=1}^n (C\circ D)_{ay}
=n\sum_{y=1}^n C_{xy}D_{xy}\quad \text{for every }x\in[n].
\]
\end{proof}

\begin{corollary}\label{cor:row-determines-matrix}
For every $x\in[n]$, the map $\iota_x:$ $\mathcal{B}\to \mathbb{R}^n,C\mapsto(C_{x1},C_{x2},\ldots, C_{xn})$ is injective.
In particular, $\dim\mathcal B\leq n$.
\end{corollary}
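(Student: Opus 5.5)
The corollary should follow directly from \cref{lem:row-isometry}, using that the Frobenius inner product of \cref{def:matrix-conventions} is positive definite. Fix $x\in[n]$. The map $\iota_x$ is clearly $\mathbb R$-linear, since it just reads off the $x$-th row. So injectivity amounts to showing that its kernel is trivial.

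Suppose $C\in\mathcal B$ with $\iota_x(C)=0$, meaning $C_{xy}=0$ for every $y\in[n]$. Apply \cref{lem:row-isometry} with $D=C$, which is allowed since both lie in $\mathcal B$. We get
\[
\|C\|^2=\langle C,C\rangle=n\sum_{y=1}^n C_{xy}^2=0.
\]
Hence $C=0$, and $\iota_x$ is injective.

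The dimension bound is then immediate. An injective linear map from the finite-dimensional real vector space $\mathcal B$ into $\mathbb R^n$ forces $\dim\mathcal B\le n$.

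There is no real obstacle here. All the substantive work sits in \cref{lem:row-isometry}, which in turn uses \cref{lem:matrix-span-closure} and \cref{lem:homogeneous-space}: closure of $\mathcal B$ under entrywise product, and the constant row sums of $C\circ C$. The only point to check is that \cref{lem:row-isometry} holds for every $x$ and every pair in $\mathcal B$, including $D=C$, and the lemma as stated covers this.
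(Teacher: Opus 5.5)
Your proposal is correct and follows the paper's proof essentially line for line. Both reduce injectivity of the linear map $\iota_x$ to a trivial kernel, apply \cref{lem:row-isometry} with $D=C$ to get $\|C\|^2=0$, and read off $\dim\mathcal B\le n$ from injectivity.
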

\begin{proof}
Note that $\iota_x$ is a linear map, so we only need to prove that $\ker(\iota_x)=0.$
Suppose $C\in\ker (\iota_x)$.
Then the $x$-th row of $C\in\mathcal B$ vanishes.
\cref{lem:row-isometry} with $D=C$ gives $\|C\|^2=0$.
So $C=0.$
The map is therefore injective.
In particular, $\dim(\mathcal{B})\le\dim(\mathbb{R}^n)=n.$
\end{proof}

\begin{lemma}\label{lem:log-space-in-span}
$\mathcal A\subseteq\mathcal B$.
Consequently, $\dim\mathcal{A}\le n.$
\end{lemma}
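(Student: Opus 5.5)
It suffices to show that every generator $P\log_\circ(K)P$ with $K\in\mathcal R$ lies in $\mathcal B$. Since $\mathcal B$ is a real vector space, this gives $\mathcal A\subseteq\mathcal B$, and then $\dim\mathcal A\le\dim\mathcal B\le n$ by \cref{cor:row-determines-matrix}. By \cref{lem:normalize_rumsum_colsum_to_0}, $P\log_\circ(K)P=\log_\circ K-\mu J$. By \cref{lem:matrix-span-closure}, $J\in\mathcal B$. So the claim reduces to showing $\log_\circ K\in\mathcal B$ for every $K\in\mathcal R$.

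To get $\log_\circ K$ into $\mathcal B$, I would mimic the derivative trick from \cref{lem:algebra-exponentials}, now applied in the linear span $\mathcal B$ instead of a logarithmic span. For every $t\in\mathbb R$, \cref{cor:thickening_interpolation} gives $K^{\circ t}\in\mathcal R\subseteq\mathcal B$, and $K^{\circ 0}=J\in\mathcal B$. Because $\mathcal B$ is a subspace, the difference quotients
\[
\frac{1}{t}\bigl(K^{\circ t}-J\bigr)\in\mathcal B \qquad(t\neq 0)
\]
all lie in $\mathcal B$. Entrywise, $\frac{1}{t}(K_{xy}^t-1)\to\log K_{xy}$ as $t\to0$, since $K_{xy}>0$. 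Thus the curve $t\mapsto K^{\circ t}$ is differentiable at $t=0$ with derivative $\log_\circ K$.

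The last ingredient is that $\mathcal B$ is a finite-dimensional subspace of $\mathbb R^{n\times n}$, hence closed in the norm of \cref{def:matrix-conventions}. The limit $\log_\circ K$ therefore lies in $\mathcal B$, which completes the argument.

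I do not expect a real obstacle here. The only point needing care is the closedness of $\mathcal B$, which follows from finite-dimensionality exactly as it was used for $\mathcal L$ and $\mathcal A$ earlier. An alternative closing step, avoiding \cref{lem:normalize_rumsum_colsum_to_0}, is to note that $P=I-J/n$, so $PXP$ is a polynomial expression in $X$ and $J$. If one also knows $I\in\mathcal B$ (the identity can be realized trivially as an edge, giving the binary equality), then $P\in\mathcal B$ and closure under multiplication (\cref{lem:matrix-span-closure}) would finish the proof directly. However, $I$ is not entrywise positive, so I would rely on the first route via $\log_\circ K-\mu J$.
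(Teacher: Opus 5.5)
Your argument is correct, but it reaches $\log_\circ K\in\mathcal B$ by a different route than the paper.

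The paper uses no limits. It takes the Lagrange polynomial $p$ that interpolates $\log$ at the finitely many distinct entries $a_1,\ldots,a_s$ of $K$. Then $\log_\circ K=p_\circ(K)$ is a finite linear combination of the integer entrywise powers $K^{\circ j}$ ($0\le j\le s-1$, with $K^{\circ 0}=J$). These powers lie in $\mathcal B$ by closure under entrywise product (\cref{lem:matrix-span-closure}).

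You instead use the real entrywise powers $K^{\circ t}$ from \cref{cor:thickening_interpolation} and pass to the derivative at $t=0$. This relies on the closedness of the finite-dimensional subspace $\mathcal B$.

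What each buys:
\begin{itemize}
\item The paper's version is a finite algebraic identity. It needs only finitely many integer powers of $K$ and no topology.
\item Your version reuses the derivative trick already used for $\mathcal L$ in \cref{lem:algebra-exponentials} and for $\mathcal A$ in \cref{lem:universal-algebra}. It needs only that $\mathcal R$ is closed under real entrywise powers and that $\mathcal B$ is a closed subspace, not the entrywise-product closure of $\mathcal B$.
\end{itemize}

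For the centering step, the paper applies \cref{lem:homogeneous-space} to $\log_\circ K\in\mathcal B$ to get a common row and column sum $b$, and writes $P\log_\circ(K)P=\log_\circ K-\frac bnJ$. Citing \cref{lem:normalize_rumsum_colsum_to_0} directly, as you do, gives the same identity with $\mu=b/n$.

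You were right not to rely on your alternative via $P\in\mathcal B$. The matrix $I$ is realizable but not entrywise positive, so $I\in\mathcal B$ is not available at this point.
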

\begin{proof}
Fix $K\in\mathcal R$ and let its distinct entry values be $a_1,\ldots,a_s$, where $s\geq1$. The polynomial
\[
p(t)=\sum_{j=1}^s (\log a_j)
\prod_{\substack{1\leq k\leq s\\k\ne j}}\frac{t-a_k}{a_j-a_k}
\]
interpolates $\log t$ at $a_1,\ldots,a_s$, namely it
satisfies $p(a_j)=\log a_j$ for every $j\in[s]$. 
Thus $\log_\circ K=p_\circ(K)\in\mathcal B$ by entrywise product closure and $J\in\mathcal B$. \cref{lem:homogeneous-space} gives a common row and column sum $b$ for $\log_\circ K$. 
Similar to the proof of \cref{lem:normalize_rumsum_colsum_to_0},
we have
\[
P\log_\circ(K)P=\log_\circ K-\frac bnJ\in\mathcal B.
\]
Taking real spans proves the inclusion.
\end{proof}

\begin{definition}[Kernel, support and unit]\label{def:algebra-support}
We define $$\mathcal{N}:=\bigcap_{C\in \mathcal{A}}\ker C\subseteq \R^n, \quad \text{and}\quad \mathcal{S}:=\mathcal{N}^{\perp}\subseteq \R^n$$
$\mathcal{S}$ is the orthogonal complement of $\mathcal{N}$ in $\R^n.$
We call $\mathcal{N}$ and $\mathcal{S}$ the \emph{kernel} and \emph{support} of $\mathcal{A}$, respectively.
Let $\mathcal{W}$ be the orthogonal complement of $\mathrm{Span}_\mathbb{R}\{\mathbf{1}\}$ in $\mathcal{N}.$
Let $E\in \mathbb{R}^{n\times n}$ and $V\in \R^{n\times n}$ be the orthogonal projections from $\R^n$ onto $\mathcal{S}$ and $\mathcal{W}$, respetively.
%and let $\mathcal W=\ker U\cap\ker E$. Thus $\mathcal W$ is the orthogonal complement of the constant direction and the support of $\mathcal A$.
\end{definition}

Notice that the all-one vector $\mathbf{1}\in \mathcal{N}$, since every matrix in $\mathcal{A}$ has zero row sum by \cref{lem:normalize_rumsum_colsum_to_0}.
So $\mathcal{W}$ is well defined. 
Hence, $\mathcal{N}=\mathrm{Span}_{\mathbb R}\{\mathbf{1}\}\oplus \mathcal{W}$, $\R^n=\mathcal{S}\oplus\mathrm{Span}_{\mathbb R}\{\mathbf{1}\}\oplus \mathcal{W}$ as direct sums of pairwise orthogonal subspaces, and $I=E+U+V.$
We next show that $E$ is a unit in the algebra $\mathcal{A}:$

\begin{lemma}\label{lem:support-identity} 
$E\in\mathcal A$ and $EC=CE=C$ for every $C\in\mathcal A$.
\end{lemma}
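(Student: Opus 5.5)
Since $\mathcal A$ is transpose-closed (\cref{lem:universal-algebra}), it is a finite-dimensional real $*$-algebra of matrices. The plan is the standard argument that such an algebra contains the projection onto its support, which then acts as a unit.

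First I would show that $\mathcal S$ is the sum of ranges, $\mathcal S=\sum_{C\in\mathcal A}\operatorname{range}(C)$. Indeed, $\mathcal N=\bigcap_C\ker C$ and $\ker C=\operatorname{range}(C^{\tt T})^\perp$. Because $\mathcal A$ is transpose closed, $\mathcal N^\perp$ is the span of the ranges of the $C^{\tt T}$, which is the span of the ranges of elements of $\mathcal A$.

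Next I would construct an element of $\mathcal A$ whose range is $\mathcal S$. Choose a basis $C_1,\ldots,C_k$ of $\mathcal A$ and set $S=\sum_i C_iC_i^{\tt T}$. This lies in $\mathcal A$, since $\mathcal A$ is an algebra closed under transpose. It is symmetric and positive semidefinite, and its kernel is $\bigcap_i\ker C_i^{\tt T}$, which equals $\mathcal N$ by transpose closure. Hence $\operatorname{range}(S)=\mathcal S$.

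By \cref{thm:spectral-theorem}, write $S=\sum_{j=1}^m\lambda_jE_j$ with distinct $\lambda_j>0$ and $\sum_jE_j=E$. Take the Lagrange polynomial $q$ with $q(0)=0$ and $q(\lambda_j)=1$ for all $j$. It has no constant term, so $q(S)$ is a polynomial in $S$ without the identity term, and therefore $q(S)\in\mathcal A$ even though we do not yet know $I\in\mathcal A$. By construction $q(S)=\sum_jE_j=E$, so $E\in\mathcal A$.

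Finally, for every $C\in\mathcal A$, $\operatorname{range}(C)\subseteq\mathcal S$ by the first step, so $EC=C$. Applying the same fact to $C^{\tt T}\in\mathcal A$ and using $E^{\tt T}=E$ gives $CE=(EC^{\tt T})^{\tt T}=C$.

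The one point needing care is that $\mathcal A$ is not assumed unital, so the polynomial must vanish at $0$. Everything else is routine linear algebra, and nothing from the homogeneous setting is required.
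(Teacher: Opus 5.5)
Your proof is correct and follows essentially the same route as the paper: a Gram-type sum over a basis of $\mathcal A$ whose kernel is $\mathcal N$ (you use $\sum_i C_iC_i^{\tt T}$, the paper uses $\sum_j C_j^{\tt T}C_j$, which are interchangeable by transpose closure), a Lagrange polynomial with zero constant term to extract $E$, and then $\operatorname{range}(C)\subseteq\mathcal S$ to get $EC=CE=C$. Your remark that no homogeneity (such as $\mathbf 1\in\mathcal N$) is needed is also accurate; the paper's use of $A\mathbf 1=0$ there is inessential.
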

\begin{proof}
Choose a basis $\{C_1,\ldots,C_d\}$ of $\mathcal A$ and put
$A=\sum_{j=1}^s C_j^{\tt T}C_j$.
Since $\mathcal{A}$ is transpose closed matrix algebra by \cref{lem:universal-algebra}, $A\in \mathcal{A}.$
For each $v\in\mathbb R^n$,
\begin{equation}\label{eq: v transpose S v}
    v^{\tt T}Av=\sum_{j=1}^d\|C_jv\|_2^2.
\end{equation}
So $v\in \ker A\implies v^{\tt T}Av=0\implies C_jv=0 \text{ for every }j\in[d] \implies v\in \mathcal{N}$.
%%% JYC I edited S to A. when you changd notation from S to A in \eqref{eq: v transpose S v}
The converse is also true since $A\in \mathcal{A}$.
Thus, $\ker A=\mathcal{N}$. 
The equation \eqref{eq: v transpose S v} shows that $A$ is positive semi-definite.  Since $A\mathbf{1}=0$ by \cref{lem:normalize_rumsum_colsum_to_0}, $0$ is an eigenvalue of $A.$
Let $\lambda_0 = 0,\lambda_1,\ldots,\lambda_m$ be the distinct eigenvalues of $A$.
We have $\lambda_1,\ldots,\lambda_m>0$. 
By Lagrange interpolation, there exists a polynomial
$p(t)\in \R[t]$ such that $p(0)=0$ and $p(\lambda_i)=1\,(\forall i\in [m])$.
The polynomial $p(t)$ has 0 constant term.
Since $\mathcal{A}$ is a matrix algebra by \cref{lem:universal-algebra}, $p(A)\in\mathcal A$. 

We claim that $p(A)=E.$
Let $E_{\lambda_i}=\ker(\lambda_i I-A)\subseteq \R^n$, for every $0\le i\le m.$
By \cref{thm:spectral-theorem}, $\R^n=E_{0}\oplus E_{\lambda_1}\oplus \cdots \oplus E_{\lambda_m}=\mathcal{N}\oplus E_{\lambda_1}\oplus \cdots \oplus E_{\lambda_m}.$
For any $v\in \mathcal{N}$, $p(A)v=p(0)v=0=Ev$.
For any $i\in[m]$ and $v\in E_{\lambda_i}$, $p(A)v=p(\lambda_i)v=v=Ev.$
Thus, $p(A)$ and $E$ act identically on $\mathbb{R}^n$. 
Therefore, $E=p(A)\in \mathcal{A}$.

Fix $C\in \mathcal{A}$, we next show that $CE=EC=C$.
We only need to show that $\mathrm{im}\,C\subseteq \mathcal{S}.$
If so, for every $v\in {\R^n}$, write $v=v_1+v_2$ where $v_1\in \mathcal{N}$ and $v_2\in \mathcal{S}$.
Then $Cv=Cv_2$, $CEv=Cv_2$ and $ECv=ECv_2=C{v_2}$, and thus $C=CE=EC.$
To see $\mathrm{im}\,C\subseteq S=\mathcal{N}^\perp$, pick an arbitrary $x\in \R^n$ and $y\in \mathcal{N}.$
We have $\langle{Cx,y}\rangle=\langle x,C^{\tt T }y\rangle=\langle x,0\rangle=0$, since $C^{\tt T}\in \mathcal{A}$. 
This shows $\mathrm{im}\,C\subseteq \mathcal{N}^\perp =\mathcal{S}.$
\end{proof}

\subsection{Exponentiating Back}

For every $C\in \mathcal{A}$, $\exp_\circ C$ is realizable by \cref{lem:algebra-exponentials}.
The homogeneous setting (\cref{lem:homogeneous}) allows us to denote the constant row sum of $\exp_\circ C$ by $r(C)$. This equals the
constant column sum as shown in~\cref{lem:normalize_rumsum_colsum_to_0}.

\begin{definition}\label{def:stochastic-family}
For every $C\in\mathcal A$, let $r(C)$ be the constant row sum of $\exp_\circ C$.
Define
\[
K(C)=\frac{\exp_\circ C}{r(C)}.
\]
\end{definition}

\begin{lemma}\label{lem:stochastic-transpose}
For every $C\in \mathcal{A}$,
$K(C)^{\tt T}=K(C^{\tt T})$.
\end{lemma}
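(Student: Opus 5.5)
The plan is to compute both sides directly from \cref{def:stochastic-family}, using the fact that entrywise exponentiation commutes with transposition. First, $C^{\tt T}\in\mathcal A$ for every $C\in\mathcal A$, since $\mathcal A$ is transpose-closed by \cref{lem:universal-algebra}. Hence $K(C^{\tt T})$ is well defined. In particular, $\exp_\circ(C^{\tt T})$ is realizable by \cref{lem:algebra-exponentials}, and it lies in the homogeneous setting, so $r(C^{\tt T})$ makes sense.

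Next, observe that $\exp_\circ(C^{\tt T})=(\exp_\circ C)^{\tt T}$ entrywise, because $(\exp_\circ C^{\tt T})_{xy}=e^{C_{yx}}$. Therefore
\[
K(C^{\tt T})=\frac{(\exp_\circ C)^{\tt T}}{r(C^{\tt T})},\qquad K(C)^{\tt T}=\frac{(\exp_\circ C)^{\tt T}}{r(C)},
\]
and it remains only to check that $r(C^{\tt T})=r(C)$.

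The row sums of $(\exp_\circ C)^{\tt T}$ are the column sums of $\exp_\circ C$. By \cref{lem:homogeneous}, $\exp_\circ C\in\mathcal R$ has constant row sum $r(C)$ and a constant column sum $c$. Summing all $n^2$ entries in two ways gives $n\,r(C)=n\,c$, so $c=r(C)$. This is the same counting argument used in the proof of \cref{lem:normalize_rumsum_colsum_to_0}. Consequently $r(C^{\tt T})=c=r(C)$, and the identity $K(C)^{\tt T}=K(C^{\tt T})$ follows.

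No step here is a real obstacle. The only point needing care is the equality of the constant row and column sums, which the double-counting above settles.
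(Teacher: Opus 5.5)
Your proof is correct and is essentially the paper's argument: both reduce the claim to $r(C^{\tt T})=r(C)$, which holds because $\exp_\circ C$ is in the homogeneous setting and its constant row and column sums agree by double counting. You also spell out two points the paper leaves implicit, namely that $C^{\tt T}\in\mathcal A$ (so $K(C^{\tt T})$ is defined) and that $\exp_\circ(C^{\tt T})=(\exp_\circ C)^{\tt T}$.
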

\begin{proof}
The matrices $\exp_\circ C$ and $\exp_\circ(C^{\tt T})$ have the same common row and column sum, so $r(C^{\tt T})=r(C)$. 
Transposition therefore commutes with the normalization in \cref{def:stochastic-family}.
\end{proof}

\begin{lemma}\label{lem:normalized-signatures}
For every $C\in\mathcal{A}$, $K(C)$ is an entrywise positive realizable doubly stochastic matrix.
\end{lemma}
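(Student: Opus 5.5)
The plan is to verify three properties of $K(C)$ in turn: positivity, realizability, and double stochasticity.

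\emph{Positivity.} Since $\exp_\circ C$ has all entries strictly positive, its constant row sum $r(C)$ is a sum of $n$ positive numbers. Hence $r(C)>0$, so $K(C)$ is well defined and entrywise positive.

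\emph{Realizability.} By \cref{lem:algebra-exponentials}, $\exp_\circ C$ is realizable. It remains to see that the positive scalar multiple $K(C)=r(C)^{-1}\exp_\circ C$ is realizable. I would express the scalar as an entrywise product with a constant matrix. Since $J=M^{\circ 0}\in\mathcal R$ by \cref{cor:thickening_interpolation}, the matrix $J^{\circ t}=J$ is of no help directly. Instead, I would use that the all-ones unary is realizable and also that scaling by a constant is harmless.

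The cleanest route is the following. The constant unary $u=(c,\ldots,c)$ with $c=r(C)^{-1/2}$ can be realized by applying \cref{lem:realize_diag_rowsum_colsum} and \cref{lem:thickening_interpolation}. The row sum of $\exp_\circ C$ is the realizable unary with every coordinate equal to $r(C)$, and raising it to the power $t=-1/2$ gives $u$. Then \cref{lem:realize_diag_K_diag} with $\mathrm{diag}(u)\exp_\circ C\,\mathrm{diag}(u)$ yields exactly $r(C)^{-1}\exp_\circ C=K(C)$.

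\emph{Double stochasticity.} The row sums of $K(C)$ equal $r(C)/r(C)=1$ by definition. For the column sums, I would use \cref{lem:homogeneous}: $\exp_\circ C$ is realizable and positive, so it has constant row sum and constant column sum. Since the total sum of entries equals both $n$ times the row sum and $n$ times the column sum, the two constants coincide. Thus every column sum of $K(C)$ is also $1$. (The same fact also follows from $r(C^{\tt T})=r(C)$, as noted for \cref{lem:stochastic-transpose}.)

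The only mildly delicate step is the realizability of the scalar normalization. Everything else is direct from the homogeneous setting. The normalization is handled by realizing the constant unary from the row-sum unary via a real power, which is available because every entry is positive.
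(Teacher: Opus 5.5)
Your proof is correct and follows the paper's argument. The one difference is that you make explicit the normalization step the paper only asserts (``normalizing by $r(C)>0$ does not affect realizability''): you realize the constant unary $r(C)^{-1/2}\mathbf{1}$ from the row-sum unary via \cref{lem:realize_diag_rowsum_colsum,lem:thickening_interpolation} and then apply \cref{lem:realize_diag_K_diag}. Your total-sum argument that the row and column constants coincide is sound. The parenthetical appeal to $r(C^{\tt T})=r(C)$, however, is circular, because the paper's proof of \cref{lem:stochastic-transpose} itself uses that equality, so you should drop it.
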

\begin{proof}
It is clear that $K(C)$ is entrywise positive. 
By \cref{lem:algebra-exponentials}, $\exp_\circ C$ is realizable.
Normalizing by $r(C)>0$ does not affect realizability.
By \cref{lem:homogeneous}, $\exp_\circ C$ has constant row sum and constant column sum.
The two constants are both equal to $r(C).$
Therefore, $K(C)$ has constant row sum and constant column sum both equal to $1,$ i.e., $K(C)$ is a doubly stochastic matrix.
\end{proof}

\begin{lemma}\label{lem:stochastic-family}
For all $C,D\in\mathcal A$, we have $K(C)K(D)=K(H)$ for $H=P\log_\circ(K(C)K(D))P\in\mathcal A$.
%Indeed, $H=P\log_\circ(K(C)K(D))P.$
\end{lemma}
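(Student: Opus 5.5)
Set $N=K(C)K(D)$. By \cref{lem:normalized-signatures}, both $K(C)$ and $K(D)$ are entrywise positive, realizable and doubly stochastic. By \cref{lem:stretching}, $N$ is realizable. It is entrywise positive as a product of positive matrices, and it is doubly stochastic as a product of doubly stochastic matrices. Hence $N\in\mathcal R$, so $\log_\circ N$ is defined and $H=P\log_\circ(N)P\in\mathcal A$ directly from \cref{def: entrywise log matrix algebra}. This settles membership.

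It remains to show $K(H)=N$. By \cref{lem:normalize_rumsum_colsum_to_0} applied to $N\in\mathcal R$,
\[
H=\log_\circ N-\mu J,
\]
where $\mu$ is the average of the entries of $\log_\circ N$. Exponentiating entrywise gives $\exp_\circ H=e^{-\mu}N$. Since $N$ has all row sums $1$, the row sum is $r(H)=e^{-\mu}$. Then \cref{def:stochastic-family} gives $K(H)=\exp_\circ H/r(H)=N$.

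There is no real obstacle. The only point needing care is the use of \cref{lem:normalize_rumsum_colsum_to_0}: it requires the homogeneous setting for $N$ (constant row and column sums of $\log_\circ N$). This is supplied by the standing assumption that $\PlGH(M)$ is not \#P-hard, via \cref{lem:homogeneous} and \cref{lem:unary_constanat}, exactly as in that lemma's proof.
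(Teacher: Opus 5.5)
Your proposal is correct and matches the paper's proof essentially step for step. Stretching makes $K(C)K(D)$ realizable and positive, so it lies in $\mathcal R$ and $H\in\mathcal A$. Then $H=\log_\circ N-\mu J$ together with the unit row sums of the doubly stochastic product gives $r(H)=e^{-\mu}$ and hence $K(H)=N$.
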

\begin{proof}
Let $Q=K(C)K(D)$, $H=P\log_\circ (Q) P.$
By \cref{lem:normalized-signatures}, both $K(C)$ and $K(D)$ are realizable and entrywise positive.
Hence $Q=K(C)K(D)\in \R_{>0}^{n\times n}$ is realizable by \cref{lem:stretching}.
Thus, $Q\in \mathcal{R}$ and $H=P\log_\circ (Q)P\in \mathcal{A}.$
Next we show $K(C)K(D)=K(H).$
By \cref{lem:normalize_rumsum_colsum_to_0}, $H=\log_\circ Q-\mu J$, where $\mu$ is the average of $\log_\circ Q.$
Then $\exp_\circ H=e^{-\mu} Q=e^{-\mu}K(C)K(D)$.
Since $K(C)$ and $K(D)$ are doubly stochastic by \cref{lem:normalized-signatures}, so is $Q=K(C)K(D)$.
In particular, $Q$ has constant row sum 1.
Hence, $\exp_\circ H = e^{-\mu} Q$ has constant row sum $e^{-\mu}$, which is by definition $r(H)$.
Hence, $K(H)=\exp_\circ H/r(H) = Q = K(C)K(D)$.
%Since $K(H)=\exp_\circ H/r(H)$ also has constant row sum 1, we know $r(H)=e^{-\mu}$ and $K(H)=K(C)K(D).$
\end{proof}

\begin{lemma}\label{lem:stochastic-derivative}
$K(0)=U$ and, for every $C\in\mathcal A$,
\begin{equation}\label{eq: KtC derivative}
    \left.\frac{d}{dt}K(tC)\right|_{t=0}=\frac{C}{n}.
\end{equation}
\end{lemma}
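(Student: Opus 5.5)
The plan is to compute directly from \cref{def:stochastic-family}, using the zero row sums of elements of $\mathcal A$ from \cref{lem:normalize_rumsum_colsum_to_0}.

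First, the value at $t=0$. We have $0\in\mathcal A$ and $\exp_\circ 0=J$, whose constant row sum is $r(0)=n$. Hence $K(0)=J/n=U$.

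Next, fix $C\in\mathcal A$. Since $\mathcal A$ is a vector space, $tC\in\mathcal A$ for every $t\in\mathbb R$. The constant row sum of $\exp_\circ(tC)$ can be computed from any row $x$:
\[
r(tC)=\sum_{y=1}^n e^{tC_{xy}} .
\]
This is a smooth function of $t$ with $r(0)=n$. Its derivative at $0$ is
\[
r'(0)=\sum_{y=1}^n C_{xy}=0,
\]
because $C$ has zero row sums by \cref{lem:normalize_rumsum_colsum_to_0}.

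Now write each entry as $K(tC)_{xy}=e^{tC_{xy}}/r(tC)$ and differentiate it with the quotient rule at $t=0$:
\[
\left.\frac{d}{dt}K(tC)_{xy}\right|_{t=0}
=\frac{C_{xy}\,r(0)-1\cdot r'(0)}{r(0)^2}
=\frac{nC_{xy}}{n^2}
=\frac{C_{xy}}{n}.
\]
This holds for all $x,y\in[n]$, which gives \eqref{eq: KtC derivative}.

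The only point needing care is that $r(tC)$ really is a well-defined smooth function of $t$. This follows because $\exp_\circ(tC)$ lies in the homogeneous setting for every $t$ (\cref{lem:normalized-signatures}), so its row sum does not depend on the choice of $x$ and is an explicit finite sum of exponentials. No step here should pose a real obstacle.
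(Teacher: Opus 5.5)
Your proposal is correct and follows essentially the same route as the paper: both use $\exp_\circ 0 = J$ for $K(0)=U$, and both use the zero row sums of $C$ (\cref{lem:normalize_rumsum_colsum_to_0}) to show that the normalizer $r(tC)$ has vanishing first-order term, so the derivative is $C/n$. The only difference is cosmetic: you apply the quotient rule with $r'(0)=0$, while the paper writes the first-order expansion $r(tC)=n+O(t^2)$ and reads off $K(tC)_{ij}=\frac1n+\frac{tC_{ij}}{n}+O(t^2)$.
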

\begin{proof}
For the all-0 matrix 0, we have $\exp_\circ(0) = J$, which has row sum $n$.
Hence $r(0)=n$ and $K(0)=\exp_\circ(0)/r(0)=J/n=U.$ 
Fix $C\in \mathcal{A}$.
By definition, $K(tC)_{ij}=e^{tC_{ij}}/r(tC)$ for every $t\in \R$ and $i,j\in [n].$
We have
$r(tC)=\sum_{j=1}^ne^{tC_{ij}}$ for arbitrary $i\in [n]$.
Then $r(tC)=\sum_{j=1}^n (1+tC_{ij}+O(t^2))=n+O(t^2)$ as $t\to 0$, since $C$ has zero row sum by \cref{lem:normalize_rumsum_colsum_to_0}.
Thus for every $i,j\in [n]$, $$K(tC)_{ij}=\frac{e^{tC_{ij}}}{r(tC)}=\frac{1+tC_{ij}+O(t^2)}{n+O(t^2)}=\frac{1}{n}+\frac{tC_{ij}}{n}+O(t^2).$$
This proves \eqref{eq: KtC derivative}.
\end{proof}

\begin{lemma}\label{lem:algebra-invariance}
For all $C,D\in\mathcal A$, $K(C)D\in \mathcal{A}$ and $DK(C)\in\mathcal A.$ 
\end{lemma}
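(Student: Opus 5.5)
The plan is to reuse the derivative trick from the proof of \cref{lem:universal-algebra}, this time with a fixed realizable left factor. Fix $C,D\in\mathcal A$. By \cref{lem:normalized-signatures}, $K(C)$ is realizable, entrywise positive and doubly stochastic. By \cref{lem:algebra-exponentials}, $\exp_\circ(tD)$ is realizable for every $t\in\mathbb R$. Stretching (\cref{lem:stretching}) then shows that
\[
N(t)=K(C)\exp_\circ(tD)
\]
is realizable and entrywise positive, so $N(t)\in\mathcal R$. Hence $P\log_\circ(N(t))P\in\mathcal A$ for every $t$. Since $\mathcal A$ is a finite-dimensional, hence closed, subspace, the derivative $\frac{d}{dt}P\log_\circ(N(t))P\big|_{t=0}$ also lies in $\mathcal A$, being a limit of difference quotients in $\mathcal A$.

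Next I compute this derivative. We have $N(0)=K(C)J=J$, because $K(C)$ has row sums $1$. Moreover $\frac{d}{dt}N(t)\big|_{t=0}=K(C)D$, since $\frac{d}{dt}\exp_\circ(tD)\big|_{t=0}=D$. So entrywise
\[
\frac{d}{dt}\log N(t)_{xy}\Big|_{t=0}=\frac{(K(C)D)_{xy}}{N(0)_{xy}}=(K(C)D)_{xy},
\]
and the derivative equals $P\,K(C)D\,P$. Finally, $D$ has zero row and column sums by \cref{lem:normalize_rumsum_colsum_to_0}, so $DP=D$. Also $PK(C)=K(C)-UK(C)=K(C)-U$, because $K(C)$ has column sums $1$. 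Since $UD=0$, this gives $PK(C)D=K(C)D$. Therefore $P\,K(C)D\,P=K(C)D\in\mathcal A$.

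The claim $DK(C)\in\mathcal A$ follows by transposition. Applying the first part to $C^{\tt T},D^{\tt T}\in\mathcal A$ (\cref{lem:universal-algebra}) gives $K(C^{\tt T})D^{\tt T}\in\mathcal A$. By \cref{lem:stochastic-transpose} this matrix equals $(DK(C))^{\tt T}$, and transpose-closure of $\mathcal A$ finishes the argument. Alternatively, one can repeat the computation directly with $N(t)=\exp_\circ(tD)K(C)$.

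There is no real obstacle here. The only delicate point is checking that the centering by $P$ disappears, which uses both the double stochasticity of $K(C)$ and the zero row and column sums of $D$. The closedness argument is identical to the one in \cref{lem:universal-algebra}.
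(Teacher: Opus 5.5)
Your proof is correct and is essentially the paper's argument. The paper differentiates $P\log_\circ(K(C)K(tD))P$ at $t=0$; this coincides with your $P\log_\circ(K(C)\exp_\circ(tD))P$, because the normalizing scalar $r(tD)$ only adds a multiple of $J$, which $P$ annihilates. The paper then closes with the same cancellation $PK(C)DP=K(C)D$. The only cosmetic difference is that you obtain $DK(C)\in\mathcal A$ by transposition via \cref{lem:stochastic-transpose}, whereas the paper repeats the computation with the factors reversed.
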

\begin{proof}
Set $Q(t)=K(C)K(tD)$ and $H(t)=P\log_\circ( Q(t)) P$ for every $t\in \R.$ 
%Since $C,tD\in \mathcal{A}$ for every $t\in \R$, \cref{lem:normalized-signatures} shows that both $K(C)$ and $K(tD)$ are realizable.
%Hence $Q(t)=K(C)K(tD)$ is realizable by \cref{lem:stretching}.
%Also $Q(t)$ is entry
Similar to the proof of \cref{lem:stochastic-family}, we can show that $Q(t)\in \mathcal{R}$ and $H(t)\in \mathcal{A}$ for every $t\in \R.$
Denote by $Q'(t)$ the entrywise derivative matrix whose $ij$ entry is $\frac{d}{dt}
(Q(t)_{ij})$, for all $i,j\in [n]$. 
Similarly for $H'(t)$. 
Then 
\[H'(t)_{ij}= \sum_{k, \ell \in [n]} P_{ik} \frac{d}{dt} (\log_\circ( Q(t)))_{k\ell} P_{\ell j}
= \sum_{k, \ell \in [n]} P_{ik} (\log ( Q(t)_{k\ell}))' P_{\ell j}
= \sum_{k, \ell \in [n]} P_{ik} \frac{Q'(t)_{k\ell}}{Q(t)_{k\ell}} P_{\ell j}\]
We have 
\[
H'(t) =P {\tilde Q(t)} P,\qquad H'(0)=P{\tilde Q(0)} P.
\]
where ${\tilde Q(t)}_{ij}= \frac{Q'(t)_{ij}}{Q(t)_{ij}}$.
\cref{lem:stochastic-derivative} gives $Q(0)=K(C)U$ and $Q'(0)=K(C)D/n$. 
Since $K(C)$ has row sum 1 by \cref{lem:normalized-signatures}, $K(C)U=U.$
Therefore ${\tilde Q(0)} = n Q'(0) = K(C)D$.
Thus,
\[
H'(0)=PK(C)DP.
\]
Since $K(C)$ is doubly stochastic, $PK(C)=K(C)P$.
Since $D$ has zero row sum and column sum by \cref{lem:normalize_rumsum_colsum_to_0}, $PDP=D.$
Thus, $H'(0)=K(C)PDP=K(C)D.$
For every nonzero $t\in \R$, ${(H(t)-H(0))}/{t}\in \mathcal{A}$ since $\mathcal{A}$ is a vector space.
Then $H'(0)=\lim_{t\to 0}{(H(t)-H(0))}/{t}\in \mathcal{A}$, since $\mathcal{A}$ is closed as a metric space.
Therefore, $K(C)D\in \mathcal{A}.$
Similarly, we can prove $DK(C)\in \mathcal{A}$ by defining $Q(t)=K(tD)K(C)$.
\end{proof}

\begin{corollary}\label{cor: EK=KE=EKE}
   For every $C\in \mathcal{A}$, \(K(C)E=EK(C)E=EK(C)\in \mathcal{A}.\)
\end{corollary}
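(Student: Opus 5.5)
We need to show $K(C)E=EK(C)E=EK(C)\in\mathcal A$. The plan is to use \cref{lem:algebra-invariance} with $D=E$, which is legitimate because $E\in\mathcal A$ by \cref{lem:support-identity}. This gives $K(C)E\in\mathcal A$ and $EK(C)\in\mathcal A$ immediately, so membership in $\mathcal A$ is settled. What remains is to prove the equalities $K(C)E=EK(C)E=EK(C)$.

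For the equalities I will work with the two elements $X:=K(C)E$ and $Y:=EK(C)$, both already known to lie in $\mathcal A$. By \cref{lem:support-identity}, every element $Z\in\mathcal A$ satisfies $EZ=ZE=Z$. Applying this to $X$ gives $EX=X$, that is,
\[
EK(C)E=K(C)E .
\]
Applying it to $Y$ gives $YE=Y$, that is,
\[
EK(C)E=EK(C).
\]
Chaining the two identities yields $K(C)E=EK(C)E=EK(C)$. Since $E$ is idempotent as an orthogonal projection, no further manipulation is needed.

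I do not expect a real obstacle here. The substantive input is \cref{lem:algebra-invariance}, which places $K(C)E$ and $EK(C)$ in $\mathcal A$, combined with $E$ being the unit of $\mathcal A$. The one point to check is that \cref{lem:algebra-invariance} is invoked with $D=E$, which is allowed because $E\in\mathcal A$. We do not need to know that $K(C)$ itself lies in $\mathcal A$; in general it does not, since $K(C)$ has row sums equal to $1$.
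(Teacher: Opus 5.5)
Your proposal is correct and is essentially the paper's proof: you take $D=E$ in \cref{lem:algebra-invariance} (legitimate since $E\in\mathcal A$) to get $K(C)E,\,EK(C)\in\mathcal A$, and then use that $E$ is the unit of $\mathcal A$ to obtain $EK(C)E=K(C)E$ and $EK(C)E=EK(C)$. Splitting this into two separate applications of the unit property only spells out the paper's one-line conclusion.
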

\begin{proof}
    Taking $D=E$ in \cref{lem:algebra-invariance} gives $K(C)E\in \mathcal{A}$ and $EK(C)\in \mathcal{A}.$
    By \cref{lem:support-identity}, $E$ is a multiplicative unit in
    $\mathcal{A}.$ So,
    $EK(C)E=K(C)E=EK(C).$
\end{proof}

Recall that $U=J/n$ is the orthogonal projection from $\R^n$ onto
$\mathrm{Span}_{\R}\{\mathbf{1}\}$. 
For every $C\in \mathcal{A}$,
we have $K(C)U=UK(C)=U$ since $K(C)$ is doubly stochastic. 
Together with $K(C)E=EK(C)$ from \cref{cor: EK=KE=EKE},
this shows that $K(C)$ commutes with the three pairwise orthogonal projections
$U$, $E$, and $V = I  -U - E$. 
Therefore, $K(C)$ preserves each summand in the
orthogonal decomposition
\(
\R^n=\mathrm{Span}_{\R}\{\mathbf{1}\}\oplus\mathcal S\oplus\mathcal W.
\)
In an orthonormal basis  adapted to this decomposition, the projections have the forms
\[
U=\begin{pmatrix}
1&0&0\\
0&0&0\\
0&0&0
\end{pmatrix},
\qquad
E=\begin{pmatrix}
0&0&0\\
0&I_{\mathcal S}&0\\
0&0&0
\end{pmatrix},
\qquad
V=\begin{pmatrix}
0&0&0\\
0&0&0\\
0&0&I_{\mathcal W}
\end{pmatrix},
\]
and $K(C)$ is block diagonal:
\[
K(C)=\begin{pmatrix}
1&0&0\\
0&K(C)|_{\mathcal S}&0\\
0&0&K(C)|_{\mathcal W}
\end{pmatrix}.
\]
\section{Polynomial Homomorphism on the Matrix Algebra}\label{sec:polyhom}

The normalized entrywise exponential map $K(\cdot)$ is not a
homomorphism: in general, $K(CD)\neq K(C)K(D)$ for $C,D\in\mathcal A$.
However, by pulling back through the $\mathcal S$-block of $K(\cdot)$
and then applying $K(\cdot)$, we obtain a multiplicative homomorphism $F(\cdot)$ (\cref{lem: local homomorphism,lem: homomorphism property}).
This map is first defined in a neighborhood of $0$ and then
analytically extended to the entire vector space $\mathcal A$ (\cref{def: local F,def: polynomial homomorphism}).
It is still block diagonal under any orthonormal basis respecting  the orthogonal decomposition
\(
\R^n=\mathrm{Span}_{\R}\{\mathbf{1}\}\oplus\mathcal S\oplus\mathcal W
\)
(\cref{lem:block decomposition}).
Moreover, this process does not substantially complicate the form
of $F(\cdot)$. 
In particular, after choosing a basis for the real
vector space $\mathcal A$, we show that $F(X)$ is a multivariate
polynomial in the coordinates of $X\in\mathcal A$
(\cref{lem: hkE mutually orth proj}).
These are results in \cref{subsec:def_of_polyhom}.

In \cref{subsec: when is F realizable}, we specify a domain $\operatorname{int}\mathcal{C}$ in $\mathcal{A}$ in which $F(X)$ is realizable.
In particular, we show that $F(tE)$ is realizable for every $t\in (0,1)$ (\cref{lem:polytope-positivity}).
\cref{sec:dichotomy} uses this to prove \#P-hardness of $\PlGH(M)$, when $M$ is not an Ising tensor.
The domain $\operatorname{int}\mathcal C$ is defined as the interior in
$\mathcal A$ of the convex hull of the ``statistic matrices" $T_{xy}$ (\cref{def:statistic-polytope}),
each equal to $n$ times the orthogonal projection of the matrix unit
$\mathbf e_x\mathbf e_y^{\tt T}$ onto $\mathcal A$ (\cref{def:entry-statistics}).
In \cref{lem:row-normalization,lem:mean-identities}, we show that
each row of $F(X)$ sums to one and that the corresponding weighted
mean of the statistic matrices $T_{xy}$ is $X$. These two identities
are crucial to proving the realizability of $F(X)$ for
$X\in \operatorname{int}\mathcal C$ and establishing
the affine factorization of $F(X)_{xy}$ in
\cref{subsec: affine factorization}.

In \cref{subsec: affine factorization},
we show that every entry of $F(X)$ can be factored into affine forms, possibly with multiplicities (\cref{lem:affine-factors}).
Further, we show that all entries $F(X)_{xy}$ have common degree $d$ (\cref{lem:common-degree}).
Finally, we show that taking transpose on the input $X$ 
%affine factorization 
induces an involution map $\tau$ on the set of irreducible affine linear factors (\cref{lem:factor-transpose}).

The homomorphism property (\cref{lem: homomorphism property}) and the affine factorization (\cref{lem:affine-factors}) are the main results of this section. The following sections rely heavily on these two properties to show that the integer exponents in the factorization induce graph distances.
%Also, the multiplicity of the affine form $\ell_a$ in $F(X)_{xy}$ is equal to the $\ell_a(T_{xy})$
%The homomorphism property (\cref{lem: homomorphism property}) and the affine factorization are main conclusions in this section.

\subsection{Definition of the Polynomial Homomorphism}\label{subsec:def_of_polyhom}

\begin{definition}\label{def:projected-coordinate}
Define $\Phi:\mathcal A\to\mathcal A$ by
\[
\Phi(C)=EK(C)E\qquad(C\in\mathcal A).
\]
\cref{cor: EK=KE=EKE} ensures that its values lie in $\mathcal A$.
\end{definition}

\begin{lemma}\label{lem:local-coordinate}
$\Phi$ has a real analytic inverse $\Phi^{-1}$ between neighborhoods of $0$ in $\mathcal A$.
\end{lemma}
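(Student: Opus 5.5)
The plan is to apply the analytic inverse function theorem (\cref{thm:analytic-inverse}) to $\Phi:\mathcal A\to\mathcal A$ at $C=0$. This needs three things: $\Phi$ is real analytic on all of $\mathcal A$, $\Phi(0)=0$, and $d\Phi(0)$ is invertible as a linear map $\mathcal A\to\mathcal A$.

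\emph{Analyticity.} Fix a basis $C_1,\ldots,C_m$ of $\mathcal A$ and write $C=\sum_j c_jC_j$. The entries of $\exp_\circ C$ are then exponentials of linear forms in $(c_1,\ldots,c_m)$, so they are real analytic. The normalizer $r(C)=\sum_{y}e^{C_{1y}}$ is the row sum of row $1$; since every row sum of $\exp_\circ C$ equals $r(C)$, any fixed row may be used. This normalizer is a finite sum of exponentials, hence analytic and strictly positive. So $K(C)=\exp_\circ C/r(C)$ is analytic, and $\Phi(C)=EK(C)E$ is analytic because $E$ is a fixed matrix. By \cref{cor: EK=KE=EKE}, $\Phi$ maps $\mathcal A$ into $\mathcal A$, so $\Phi$ is an analytic self-map of the finite-dimensional space $\mathcal A$, with source and target of equal dimension.

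\emph{Value at $0$.} By \cref{lem:stochastic-derivative}, $K(0)=U$. Since $\mathcal S\perp\mathbf 1$, we have $EU=UE=0$, so $\Phi(0)=EUE=0$.

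\emph{Derivative at $0$.} For $D\in\mathcal A$, \cref{lem:stochastic-derivative} gives $d\Phi(0)(D)=\frac{d}{dt}EK(tD)E\big|_{t=0}=E\frac{D}{n}E$. By \cref{lem:support-identity}, $E$ is the unit of $\mathcal A$, so $EDE=D$ and therefore $d\Phi(0)=\frac1n\operatorname{id}_{\mathcal A}$. This map is invertible. \cref{thm:analytic-inverse} then yields a real analytic inverse $\Phi^{-1}$ between neighborhoods of $0=\Phi(0)$ and $0$ in $\mathcal A$, with $d\Phi^{-1}(0)=n\,\operatorname{id}_{\mathcal A}$.

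The only delicate point is the derivative computation. It depends on $r(tC)=n+O(t^2)$, which uses the zero row sums from \cref{lem:normalize_rumsum_colsum_to_0} and is already handled in \cref{lem:stochastic-derivative}. It also depends on the fact that $E$ acts as the identity on $\mathcal A$, and not merely that $E\in\mathcal A$. That fact is exactly \cref{lem:support-identity}, so I expect no real obstacle.
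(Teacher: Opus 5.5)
Your proposal is correct and takes the same route as the paper. You compute $\Phi(0)=EUE=0$ and $d\Phi(0)=\frac{1}{n}\operatorname{id}_{\mathcal A}$ using $K(0)=U$, the first-order expansion of $K(tC)$, and the fact that $E$ is the unit of $\mathcal A$, and then you apply the analytic inverse function theorem. Your explicit check that $\Phi$ is real analytic, via $r(C)$ being a strictly positive finite sum of exponentials, is a detail the paper leaves implicit.
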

\begin{proof}
First note that $\Phi(0)=EK(0)E=EUE=0$ by \cref{lem:stochastic-derivative} and the fact that $E$ and $U$ are mutually orthogonal projections.
We next compute the derivative of $\Phi$ at $0.$
Fix an arbitrary $C\in \mathcal{A}.$
By \cref{lem:stochastic-derivative}, $$\left.\frac{d}{dt}K(tC)\right|_{t=0}=\frac{C}{n}.$$
Consequently, 
$$\left.\frac{d}{dt}\Phi(tC)\right|_{t=0}=E\left.\frac{d}{dt}K(tC)\right|_{t=0}E=\frac{ECE}{n}=\frac{C}{n}.$$
The last equality holds because $E$ is unit in $\mathcal{A}$ by \cref{lem:support-identity}.
%Also $\Phi(0)=EUE=0$. By Lemma~\ref{lem:stochastic-derivative}, its derivative sends $D\in\mathcal A$ to $EDE/n=D/n$.
Thus, $d\Phi(0)(C)=\frac{C}{n}$ for every $C\in \mathcal{A}$.
This is invertible, so the inverse function theorem (\cref{thm:analytic-inverse}) gives the local analytic inverse.
\end{proof}

\begin{definition}\label{def: local F}
Let $\Phi^{-1}:\mathcal{O}\to \mathcal{O}$ be the analytic inverse of $\Phi$ supplied by \cref{lem:local-coordinate}, where $\mathcal{O}\subseteq \mathcal{A}$ is a neighborhood of $0$ in $\mathcal{A}.$ 
Define
$$F:\mathcal{O}\to \mathcal{B},\qquad X\mapsto K(\Phi^{-1}(X)).$$
\end{definition}
Note that for $X \in \mathcal{O}$, by \cref{lem:normalized-signatures}, $K(\Phi^{-1}(X))$ is  realizable,
and thus $F(X) \in \mathcal B=\operatorname{span}_{\mathbb R}\mathcal R$.

%By definition, it is not yet clear that $F(\cdot)$ is a polynomial
\begin{lemma}\label{lem: F0 and dF0}
    $F$ is real analytic.
    $F(0)=U$ and $dF(0)=\mathrm{id}.$
\end{lemma}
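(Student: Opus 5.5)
The plan is to treat $F=K\circ\Phi^{-1}$ as a composition of real analytic maps and then read off the value and derivative at $0$ using the chain rule together with \cref{lem:stochastic-derivative} and \cref{lem:local-coordinate}.

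First, I will establish analyticity of $K$ on $\mathcal A$. Fix a basis of the finite-dimensional space $\mathcal A$. The map $C\mapsto\exp_\circ C$ is entrywise exponential of linear functions of the coordinates, hence real analytic. The normalizer is
\[
r(C)=\sum_{j=1}^n e^{C_{1j}},
\]
taking the constant row sum computed from row $1$, which is legitimate by \cref{def:stochastic-family}. This is analytic and strictly positive, so $1/r(C)$ is analytic, and therefore $K(C)=\exp_\circ C/r(C)$ is real analytic from $\mathcal A$ into $\mathbb R^{n\times n}$. By \cref{lem:local-coordinate}, $\Phi^{-1}$ is real analytic on the neighborhood $\mathcal O$ of $0$. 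Since compositions of real analytic maps are real analytic, $F=K\circ\Phi^{-1}$ is real analytic. Its values lie in $\mathcal B$, which is a linear subspace, so $F$ may equally be regarded as an analytic map into $\mathcal B$.

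Second, I will compute $F(0)$. We have $\Phi(0)=EK(0)E=EUE=0$, as already computed in the proof of \cref{lem:local-coordinate}. Since $\Phi^{-1}$ is the local inverse near $0$ with $\Phi^{-1}(0)=0$, it follows that $F(0)=K(0)=U$ by \cref{lem:stochastic-derivative}.

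Third, I will compute $dF(0)$ by the chain rule:
\[
dF(0)=dK(0)\circ d\Phi^{-1}(0).
\]
By \cref{thm:analytic-inverse}, $d\Phi^{-1}(0)=(d\Phi(0))^{-1}$. From the proof of \cref{lem:local-coordinate}, $d\Phi(0)(C)=C/n$, so $d\Phi^{-1}(0)(X)=nX$. Because $K$ is smooth, $dK(0)(C)=\frac{d}{dt}K(tC)\big|_{t=0}=C/n$ by \cref{lem:stochastic-derivative}. Combining these gives $dF(0)(X)=nX/n=X$, that is, $dF(0)=\mathrm{id}$ as a map $\mathcal A\to\mathcal A\subseteq\mathcal B$.

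The only mildly delicate point is justifying that $K$ is analytic as a function on all of $\mathcal A$ rather than only along lines, which is what permits using the Fréchet chain rule. This is settled by the explicit formula above, which uses a fixed row for $r(C)$.
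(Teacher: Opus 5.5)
Your proposal is correct and matches the paper's proof: $F=K\circ\Phi^{-1}$ is analytic as a composition, $F(0)=K(0)=U$, and the chain rule with $d\Phi^{-1}(0)=(d\Phi(0))^{-1}$, $d\Phi(0)(C)=C/n$, and $dK(0)(C)=C/n$ gives $dF(0)=\mathrm{id}$. The paper only asserts that $K$ is real analytic; your explicit formula $r(C)=\sum_{j}e^{C_{1j}}$, which uses the constancy of row sums from the homogeneous setting, supplies that justification.
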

\begin{proof}
    Both $\Phi^{-1}$ and $K$ are  real analytic, so is $F.$
    Since $\Phi(0)=0$,
    $F(0)=K(\Phi^{-1}(0))=K(0)=U.$
    By the chain rule and  inverse function theorem (\cref{thm:analytic-inverse}), $dF(0)=dK(\Phi^{-1}(0))\circ d\Phi^{-1}(0)=dK(0)\circ (d\Phi(0))^{-1}.$
    Here ``$\circ$'' means composition of linear maps instead of entrywise product.
    In \cref{lem:local-coordinate}, we proved $d\Phi(0)(C)=\frac{C}{n}$.
    So $(d\Phi(0))^{-1}(C)=nC.$
    By \cref{lem:stochastic-derivative}, $dK(0)(C)=\frac{C}{n}.$
    Thus, $dK(0)\circ (d\Phi(0))^{-1}(C)=\frac{nC}{n}=C$, for every $C\in \mathcal{O}.$
    Therefore, $dF(0)=\mathrm{id}.$  
\end{proof}

We view $F(\cdot)$ as a function of vector coordinates on $\mathcal A$.
More precisely, choose an arbitrary basis $\{C_1,\ldots,C_m\}$ of
$\mathcal A$ and write each $X\in\mathcal A$ as
$X=\sum_{i=1}^m x_iC_i$. We then regard
$F(X)=F(x_1,\ldots,x_m)$ as a multivariate analytic function of the
coordinates $(x_1,\ldots,x_m)$ near 0. 
Since $F(0)=U$ and $dF(0)=\mathrm{id}$, we can write the Taylor expansion (\cref{thm:taylor-expansion}) of $F$ at $0$ as
\[
F(X)=U+X+h(X)=U+X+\sum_{k\geq2}h_k(X),
\]
where
\[
h_k(X)=\frac{1}{k!}d^kF(0) (X,\ldots,X)
\]
is a homogeneous multivariate polynomial (in the coordinates of $X$) of degree $k$.

The following lemma shows that the three summands $U,X,$ and $h(X)$ in $F(X)$ respect the three subspaces $\mathrm{Span}_{\mathbb{R}}\{\mathbf{1}\}$, $\mathcal{S}$, and $\mathcal{W}$, respectively. 
\begin{lemma}[Block decomposition]\label{lem:block decomposition}
    For every $X\in\mathcal{O}$, write $F(X)=U+X+h(X)$.
    Then
    \begin{enumerate}
        \item\label{item: block decomposition 1} $UF(X)U=UF(X)=F(X)U=U$;
        \item \label{item: block decomposition 2} $EF(X)E=EF(X)=F(X)E=X$;
        \item \label{item: block decomposition 3} $VF(X)V=VF(X)=F(X)V=h(X).$
    \end{enumerate}
    Thus, $F(X)=U+X+h(X)$ is the decomposition of the operator $F(X)$ into its actions
    in the pairwise orthogonal subspaces $\mathrm{Span}_{\mathbb{R}}\{\mathbf{1}\}$, $\mathcal{S}$, and $\mathcal{W}$.
\end{lemma}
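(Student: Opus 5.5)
We work with $C=\Phi^{-1}(X)\in\mathcal A$ for $X\in\mathcal O$, so that $F(X)=K(C)$. Each item then follows from block properties of $K(C)$ already recorded after \cref{cor: EK=KE=EKE}, combined with the way $C$ was chosen through $\Phi$.

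\emph{Item \ref{item: block decomposition 1}.} Since $K(C)$ is doubly stochastic (\cref{lem:normalized-signatures}), we have $K(C)U=UK(C)=U$. It follows that $UK(C)U=U^2=U$.

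\emph{Item \ref{item: block decomposition 2}.} \cref{cor: EK=KE=EKE} gives $K(C)E=EK(C)=EK(C)E$. By definition of $\Phi$, this common value is $\Phi(C)=\Phi(\Phi^{-1}(X))=X$.

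\emph{Item \ref{item: block decomposition 3}.} We use $V=I-U-E$ and the fact that $K(C)$ commutes with $U$ and $E$. Hence $K(C)$ commutes with $V$, and $VK(C)V=VK(C)=K(C)V$. To identify this matrix with $h(X)$, we expand
\[
F(X)=(U+E+V)F(X)(U+E+V).
\]
Because $F(X)$ commutes with all three projections and the projections are pairwise orthogonal, the cross terms vanish: for example, $UF(X)E=F(X)UE=0$. This leaves
\[
F(X)=UF(X)U+EF(X)E+VF(X)V=U+X+VF(X)V.
\]
Comparing with the defining relation $F(X)=U+X+h(X)$ gives $h(X)=VF(X)V$.

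The final sentence of the lemma then holds: $U$, $X$ and $h(X)$ are the compressions of $F(X)$ to $\mathrm{Span}_{\mathbb R}\{\mathbf 1\}$, $\mathcal S$ and $\mathcal W$, respectively. For $X$ this is consistent with $EX=XE=X$ (\cref{lem:support-identity}). There is no real obstacle in this argument; the only point needing care is that the identity in item \ref{item: block decomposition 2} uses $\Phi\circ\Phi^{-1}=\mathrm{id}$ on $\mathcal O$, which is where the local inverse from \cref{lem:local-coordinate} is needed.
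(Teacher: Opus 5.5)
Your proof is correct and takes the same route as the paper. Item 1 comes from double stochasticity of $K(\Phi^{-1}(X))$, item 2 from \cref{cor: EK=KE=EKE} together with $\Phi(\Phi^{-1}(X))=X$, and item 3 from $I=U+E+V$ with pairwise orthogonal projections. The only cosmetic difference is that you expand $F(X)$ two-sidedly as $(U+E+V)F(X)(U+E+V)$, whereas the paper uses the one-sided identity $(U+E+V)F(X)=U+X+VF(X)$ and then reads off $h(X)=VF(X)$.
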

\begin{proof}
    By \cref{def: local F} and \cref{lem:stochastic-family}, $F(X)=K(\Phi^{-1}(X))$ is doubly stochastic. 
    Hence $F(X)U=U$, $UF(X)=U$ and $UF(X)U= U$.
    By \cref{cor: EK=KE=EKE}, $EF(X)=F(X)E=EF(X)E$.
    Indeed, $EF(X)E=EK(\Phi^{-1}(X))E=\Phi(\Phi^{-1}(X))=X.$
    Finally, $U+X+h(X)=F(X)=(U+E+V)F(X)=U+X+VF(X)$, so $h(X)=VF(X)$.
    Similarly, $VF(X)V=F(X)V=h(X).$
\end{proof}
In particular, $EU = 0$ and $E h(X) =0$. Also $EX = XE = X$.

Intuitively, in a basis adapted to the decomposition $\R^n=\mathrm{Span}_{\mathbb{R}}\{\mathbf{1}\}\oplus\mathcal{S}\oplus\mathcal{W}$, we have
\[
U=\begin{pmatrix}
1&0&0\\
0&0&0\\
0&0&0
\end{pmatrix},
\qquad
X=\begin{pmatrix}
0&0&0\\
0&X_{\mathcal S}&0\\
0&0&0
\end{pmatrix},
\qquad
h(X)=\begin{pmatrix}
0&0&0\\
0&0&0\\
0&0&h(X)_{\mathcal W}
\end{pmatrix},
\]
where $X_{\mathcal S}$ and $h(X)_{\mathcal W}$ are the restrictions of $F(X)$
on the subspaces  $\mathcal{S}$ and $\mathcal{W}$, respectively, and $F(X)$ is block diagonal:
\[
F(X)=\begin{pmatrix}
1&0&0\\
0&X_{\mathcal S}&0\\
0&0&h(X)_{\mathcal W}
\end{pmatrix}.
\]
The following lemma shows that $F(\cdot)$ is a homomorphism near $0.$ 
\begin{lemma}[Local homomorphism]\label{lem: local homomorphism}
$F(X)F(Y)=F(XY)$ for all $X,Y$ near 0 in $\mathcal{A}$.
\end{lemma}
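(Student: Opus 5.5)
The plan is to reduce the claim to \cref{lem:stochastic-family} by pulling the product back through $\Phi$. Take $X,Y\in\mathcal O$ small enough that $XY\in\mathcal O$; this is possible by continuity of multiplication, since $0\cdot 0=0$. Put $C=\Phi^{-1}(X)$ and $D=\Phi^{-1}(Y)$, so that $F(X)=K(C)$ and $F(Y)=K(D)$. By \cref{lem:stochastic-family}, $F(X)F(Y)=K(C)K(D)=K(H)$ with
\[
H=P\log_\circ(K(C)K(D))P\in\mathcal A .
\]
It therefore suffices to show $H=\Phi^{-1}(XY)$, i.e.\ that $H\in\mathcal O$ and $\Phi(H)=XY$.

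The identity $\Phi(H)=XY$ is a short computation. By definition, $\Phi(H)=EK(H)E=EK(C)K(D)E$. By \cref{cor: EK=KE=EKE}, $K(C)$ commutes with $E$ and $E^2=E$, so
\[
EK(C)K(D)E=\bigl(EK(C)E\bigr)\bigl(EK(D)E\bigr)=\Phi(C)\Phi(D)=XY .
\]

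The remaining point, which I expect to be the main (though mild) obstacle, is showing $H$ lies in the neighborhood $\mathcal O$ where $\Phi$ is injective. Only then does $\Phi(H)=XY$ force $H=\Phi^{-1}(XY)$, rather than some other preimage. I would use continuity. The map $(X,Y)\mapsto P\log_\circ\bigl(K(\Phi^{-1}(X))K(\Phi^{-1}(Y))\bigr)P$ is continuous near $(0,0)$, since all the entries involved stay positive. At $(0,0)$ it equals $P\log_\circ(UU)P=P\log_\circ(U)P=0$, because $\log_\circ U=-(\log n)J$ and $PJP=0$. So for $X,Y$ in a sufficiently small neighborhood of $0$, $H$ lies in the domain $\mathcal O$ on which $\Phi$ restricts to a bijection onto $\mathcal O$ with inverse $\Phi^{-1}$. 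Then $H=\Phi^{-1}(XY)$, and hence $F(X)F(Y)=K(\Phi^{-1}(XY))=F(XY)$.

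If the neighborhood in \cref{lem:local-coordinate} is only given as a pair of open sets $\mathcal O_1\ni 0$ (the domain) and $\mathcal O_2\ni 0$ (the image), the same argument applies with $H\in\mathcal O_1$ and $XY\in\mathcal O_2$. Both hold after shrinking the neighborhood of $(X,Y)$.
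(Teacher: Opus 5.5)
Your proof is correct and follows the paper's argument: you invoke \cref{lem:stochastic-family}, show $\Phi(H)=XY$, and use continuity ($H\to P\log_\circ(U)P=0$) to place $H$ in the neighborhood where $\Phi$ is injective, so that $H=\Phi^{-1}(XY)$. The only cosmetic difference is in how you get $\Phi(H)=XY$: you use the commutation $EK(C)=K(C)E$ from \cref{cor: EK=KE=EKE} directly, whereas the paper reads it off the block decomposition in \cref{lem:block decomposition}, which is itself derived from that commutation; your remark about separate domain and image neighborhoods correctly tightens the paper's slightly loose notation $\Phi^{-1}:\mathcal{O}\to\mathcal{O}$.
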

\begin{proof}
For $X,Y\in \mathcal{O}$, let $C=\Phi^{-1}(X)$ and $D=\Phi^{-1}(Y)$.
\cref{lem:stochastic-family} gives
$K(C)K(D)=K(H)$ for $H=P\log_\circ (K(C)K(D))P\in \mathcal{A}$.
When $X,Y$ tend to zero, $H$ tends to $P\log_\circ (K(0)K(0))P=P\log_\circ (U)P=0.$
Note that $$K(C)K(D)=K(\Phi^{-1}(X))K(\Phi^{-1}(Y))=F(X)F(Y).$$
So
\[
\Phi(H)=EK(H)E=EK(C)K(D)E=EF(X)F(Y)E.
\]
By \cref{lem:block decomposition}, in view of the block decomposition,
\[
EF(X)F(Y)E=E(U+X+h(X))(U+Y+h(Y))E=E(U+XY+h(X)h(Y))E=XY.
\]
Thus, $\Phi(H)=XY.$
Without loss of generality, we may assume that $XY,H\in\mathcal{O}$, possibly after the neighborhood $\mathcal{O}$ is reduced. 
Then $H=\Phi^{-1}(XY)$, and thus
\[
F(XY)=K(\Phi^{-1}(XY))=K(H)=K(C)K(D)=F(X)F(Y).
\]
The lemma is proved.
\end{proof}

We have proved that $F(\cdot)$ is a local multiplicative homomorphism. 
Next, we will show that $F(\cdot)$ is a polynomial near $0$, which allows us to extend $F(\cdot)$ from a neighborhood of $0$ to the entire space $\mathcal{A}$ while preserving the homomorphism property. 

Recall $h(X)=\sum_{k=2}^{+\infty}h_k(X)$, where $h_k(X)$ is a multivariate homogeneous polynomial of degree $k$ in the coordinates of $X$.
Therefore, $h_k(X)$ is well-defined on the entire vector space $\mathcal{A}.$
To establish that $F(\cdot)$ is a polynomial, we will demonstrate that $\{h_k(E)\}_{k \ge 2}$ are mutually orthogonal projections (\cref{lem: hkE mutually orth proj}), thereby ensuring that the Taylor expansion of $F$ at $0$ terminates after finitely many terms.

\begin{lemma}\label{lem:homogeneous-products}
For all $X,Y\in\mathcal A$ and all integers $k,l\geq2$, we have
\(
h_k(XY)=h_k(X)h_k(Y)\)
and
\(h_k(X)h_l(Y)=0\) if $k\ne l$.
\end{lemma}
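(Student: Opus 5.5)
The plan is to compare homogeneous components in the local homomorphism identity of \cref{lem: local homomorphism}, after introducing independent scalings of $X$ and $Y$. Fix $X,Y\in\mathcal A$. For real $s,t$ small enough that $sX$, $tY$ and $stXY$ all lie in $\mathcal O$ (and the identity of \cref{lem: local homomorphism} holds), we have
\[
F(sX)F(tY)=F(stXY).
\]
Expanding both sides via $F(Z)=U+Z+\sum_{k\ge2}h_k(Z)$ and the homogeneity $h_k(sX)=s^kh_k(X)$ gives
\[
\Bigl(U+sX+\sum_{k\ge2}s^kh_k(X)\Bigr)\Bigl(U+tY+\sum_{l\ge2}t^lh_l(Y)\Bigr)=U+stXY+\sum_{k\ge2}(st)^kh_k(XY).
\]
The product on the left simplifies using \cref{lem:block decomposition}. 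Since $U$, $E$, $V$ are pairwise orthogonal projections, and $X,Y$ live in the $E$-block while $h_k(\cdot)$ lives in the $V$-block, we get $UX=XU=0$, $Uh_l=h_kU=0$, and $Xh_l(Y)=h_k(X)Y=0$. One subtlety is that the block relations were stated for the full sum $h(Z)=VF(Z)V$, not for each $h_k$ individually. To pass to each $h_k$, I will replace $Z$ by $sZ$ in $VF(sZ)V=h(sZ)$ and $Eh(sZ)=0$, and compare coefficients of $s^k$. This shows $h_k(Z)=Vh_k(Z)V$ for each $k$, by uniqueness of the Taylor expansion (\cref{thm:taylor-expansion}). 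With these relations the left side becomes
\[
U+stXY+\sum_{k,l\ge2}s^kt^l\,h_k(X)h_l(Y).
\]

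Both sides are now convergent double power series in $(s,t)$ near $0$, so by uniqueness of Taylor coefficients (\cref{thm:taylor-expansion}, applied to the analytic map $(s,t)\mapsto F(sX)F(tY)$) the coefficients of each monomial $s^kt^l$ must agree. Comparing the monomials $s^kt^k$ gives $h_k(X)h_k(Y)=h_k(XY)$. Comparing $s^kt^l$ with $k\neq l$ gives $h_k(X)h_l(Y)=0$, since the right side contains only diagonal monomials $(st)^k$. Both identities are polynomial identities in the coordinates of $X,Y$. They were derived for each fixed pair $(X,Y)$ from small $(s,t)$, so they hold for all $X,Y\in\mathcal A$, without needing \cref{thm:analytic-identity}.

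The main obstacle is justifying the coefficient comparison rigorously. This requires that the double series converges absolutely near $(0,0)$, so that the products can be rearranged. That follows from real analyticity of $F$ in a ball around $0$: the series $\sum_k s^kh_k(X)$ converges absolutely for $|s|\|X\|<\varepsilon$. The other point needing care is the componentwise block relations for each $h_k$, handled by the scaling argument above.
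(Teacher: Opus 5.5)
Your proposal is correct and takes essentially the same route as the paper: apply the local homomorphism identity to the scaled inputs $sX,tY$, simplify with the block decomposition, and compare coefficients of $s^kt^l$ in the absolutely convergent double series. The only difference is the order of steps: the paper applies the block decomposition to the full series $h=VF(\cdot)V$ to obtain $h(XY)=h(X)h(Y)$ before expanding, so it does not need your componentwise relations $h_k(Z)=Vh_k(Z)V$; your derivation of those relations by scaling is nonetheless valid.
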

\begin{proof}
For $X,Y$ sufficiently close to zero, \cref{lem: local homomorphism} gives $F(X)F(Y)=F(XY).$
By block decomposition (\cref{lem:block decomposition}), $F(X)F(Y)=(U+X+h(X))(U+Y+h(Y))=U+XY+h(X)h(Y).$
Thus, $F(XY)=U+XY+h(XY)=U+XY+h(X)h(Y)$.
It follows that $h(XY)=h(X)h(Y)$.

Fix arbitrary $X,Y\in\mathcal A$ and substitute $sX,tY$ for sufficiently small real $s,t>0$. 
We obtain \(h(stXY)=h(sX)h(tY)\).
Because $h(\cdot)=\sum_{k=2}^{+\infty}h_k(\cdot)$, where $h_k(\cdot)$ is a homogeneous polynomial of degree $k$, we have
\[
\sum_{k=2}^{+\infty}s^kt^k h_k(XY)=h(stXY)=h(sX)h(tY)=\sum_{i=2}^{+\infty}\sum_{j=2}^{+\infty}s^it^j h_i(X)h_j(Y).
\]

This series is absolutely convergent when $s,t$ are sufficiently small.
The coefficient of $s^kt^k$ gives the first identity; that of $s^kt^l$ with $k\ne l$ gives the second. 
These calculations hold for arbitrary fixed $X,Y\in \mathcal{A}$, not necessarily close to 0.
\end{proof}

\begin{lemma}\label{lem:homogeneous-transpose}
For every $X\in\mathcal A$ and every integer $k\geq2$, $h_k(X^{\tt T})=h_k(X)^{\tt T}.$ 
\end{lemma}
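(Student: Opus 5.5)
We plan to derive the identity from a transpose-compatibility property of $F$ near $0$, and then compare homogeneous Taylor components.

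\textbf{Step 1: $F(X^{\tt T})=F(X)^{\tt T}$ for $X$ near $0$.} By \cref{lem:stochastic-transpose}, $K(C^{\tt T})=K(C)^{\tt T}$ for every $C\in\mathcal A$. Since $E$ is an orthogonal projection, $E^{\tt T}=E$, so
\[
\Phi(C^{\tt T})=EK(C)^{\tt T}E=(EK(C)E)^{\tt T}=\Phi(C)^{\tt T}.
\]
Here $\mathcal A$ is transpose closed by \cref{lem:universal-algebra}, and transposition is a linear isometry of $\mathcal A$. Shrinking $\mathcal O$ if necessary, we may assume $\mathcal O$ is transpose invariant and that $\Phi^{-1}$ is defined on it. Take $X\in\mathcal O$ and put $C=\Phi^{-1}(X)$. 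Then $\Phi(C^{\tt T})=X^{\tt T}$. Provided $C^{\tt T}$ lies in the local domain where $\Phi$ is injective, this gives $\Phi^{-1}(X^{\tt T})=\Phi^{-1}(X)^{\tt T}$. The small bookkeeping point is to choose the domain of the local inverse as a transpose-invariant neighborhood; this is harmless, e.g.\ intersect it with its transpose image. Consequently
\[
F(X^{\tt T})=K(\Phi^{-1}(X)^{\tt T})=K(\Phi^{-1}(X))^{\tt T}=F(X)^{\tt T}.
\]

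\textbf{Step 2: separate the $h$ part.} Write $F(X)=U+X+h(X)$ and use $U^{\tt T}=U$. Step 1 then gives $h(X^{\tt T})=h(X)^{\tt T}$ for $X$ near $0$. Alternatively, use \cref{lem:block decomposition}: $h(X)=VF(X)V$, and $V$ is symmetric.

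\textbf{Step 3: compare Taylor components.} Fix an arbitrary $X\in\mathcal A$ and substitute $tX$ for small real $t$. This yields
\[
\sum_{k\ge2}t^k h_k(X^{\tt T})=\sum_{k\ge2}t^k h_k(X)^{\tt T}.
\]
Here we use that transposition is linear, so $X\mapsto h_k(X^{\tt T})$ is again a homogeneous polynomial of degree $k$. Both series converge absolutely for small $t$. Uniqueness of the expansion (\cref{thm:taylor-expansion}), or simply equating coefficients of the power series in $t$, gives $h_k(X^{\tt T})=h_k(X)^{\tt T}$ for every $k\ge2$. Since $h_k$ is a polynomial, the identity holds on all of $\mathcal A$, not only near $0$.

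\textbf{Main obstacle.} The only delicate point is Step 1, namely that the local inverse commutes with transpose. This requires the injectivity domain of $\Phi$ to be transpose symmetric, which we arrange by restricting to a small ball around $0$; balls are transpose invariant because transposition preserves the norm of \cref{def:matrix-conventions}.
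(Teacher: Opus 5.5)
Your proposal is correct and follows the paper's proof essentially step for step. You get $\Phi(C^{\tt T})=\Phi(C)^{\tt T}$ from \cref{lem:stochastic-transpose} and $E^{\tt T}=E$, deduce $F(X^{\tt T})=F(X)^{\tt T}$ and hence $h(X^{\tt T})=h(X)^{\tt T}$ near $0$, and then compare coefficients of $t^k$ in $h(tX^{\tt T})=h(tX)^{\tt T}$; your explicit choice of a transpose-invariant ball as the domain of the local inverse makes rigorous a point the paper passes over silently.
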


\begin{proof}
\cref{lem:stochastic-transpose} and $E^{\tt T}=E$ imply $\Phi(C^{\tt T})=\Phi(C)^{\tt T}$ for every $C\in\mathcal A$. 
Fix an arbitrary $X\in \mathcal{A}$ near 0, let $C=\Phi^{-1}(X).$
Then $\Phi(C)=X$ and $\Phi(C^{\tt T})=X^{\tt T}.$
Hence $\Phi^{-1}(X^{\tt T})=C^{\tt T}=(\Phi^{-1}(X))^{\tt T}.$
By \cref{lem:stochastic-transpose}, $K(\Phi^{-1}(X^{\tt T}))=K((\Phi^{-1}(X))^{\tt T})=K(\Phi^{-1}(X))^{\tt T}.$
Thus, $F(X^{\tt T})=F(X)^{\tt T}.$
This implies $h(X^{\tt T})=h(X)^{\tt T}$, for every $X\in \mathcal{A}$ near 0.

Now fix an arbitrary $X\in\mathcal A$ (not necessarily near 0), and substitute $sX$ for sufficiently small $s>0.$
We have $h(sX^{\tt T})=h(sX)^{\tt T}$, for every $X\in \mathcal{A}$ and sufficiently small $s.$
Then
\[
\sum_{k=2}^{\infty}s^k h_k(X^{\tt T})
=\sum_{k=2}^{\infty}s^k h_k(X)^{\tt T}.
\]
Comparing coefficients of $s^k$ gives $h_k(X^{\tt T})=h_k(X)^{\tt T}.$
\end{proof}

\begin{lemma}\label{lem: hkE mutually orth proj}
The nonzero matrices among $\{h_k(E)\}_{k\geq2}$ are mutually orthogonal projections.
%, and\[h_k(X)=e_kh_k(X)e_k\quad\text{for every }X\in\mathcal A\text{ and integer }k\geq2.\]
In particular, only finitely many $h_k$ are nonzero.
\end{lemma}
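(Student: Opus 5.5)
Set $e_k:=h_k(E)$ for each $k\ge 2$. The plan is to derive the projection properties from \cref{lem:homogeneous-products,lem:homogeneous-transpose}, and then show finiteness by a rank count inside $\mathcal W$.

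First, $E^2=E$ because $E$ is an orthogonal projection. Applying \cref{lem:homogeneous-products} with $X=Y=E$ gives $e_k^2=h_k(E^2)=h_k(E)=e_k$, so each $e_k$ is idempotent. Next, $E^{\tt T}=E$, so \cref{lem:homogeneous-transpose} gives $e_k^{\tt T}=h_k(E^{\tt T})=e_k$, so each $e_k$ is symmetric. A symmetric idempotent is an orthogonal projection. For $k\ne l$, the second identity of \cref{lem:homogeneous-products} with $X=Y=E$ gives $e_ke_l=0$. Hence the nonzero $e_k$ are mutually orthogonal projections.

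For finiteness, note that by \cref{lem:block decomposition} we have $Vh(X)V=h(X)$ near $0$. Comparing homogeneous Taylor components under the scaling $X\mapsto sX$ gives $Vh_k(X)V=h_k(X)$ for all $X\in\mathcal A$. In particular, every $e_k$ has image inside $\mathcal W$. Mutually orthogonal nonzero projections have pairwise orthogonal, nonzero images in $\mathcal W$, so at most $\dim\mathcal W\le n$ of the $e_k$ are nonzero.

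To pass from $h_k(E)$ to $h_k$ itself, use that $E$ is the unit of $\mathcal A$ (\cref{lem:support-identity}): $EX=X$ for every $X\in\mathcal A$. Then \cref{lem:homogeneous-products} gives $h_k(X)=h_k(EX)=h_k(E)h_k(X)=e_kh_k(X)$, and similarly $h_k(X)=h_k(X)e_k$. So if $e_k=0$, then $h_k\equiv 0$ on all of $\mathcal A$. Combined with the previous step, only finitely many $h_k$ are nonzero. I do not expect a real obstacle here. The only delicate point is extending the identities from a neighborhood of $0$ to all of $\mathcal A$, and that is exactly what the homogeneity argument in \cref{lem:homogeneous-products} already supplies.
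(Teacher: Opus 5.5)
Your proposal is correct and follows the paper's proof essentially step for step. Idempotence, symmetry and $h_k(E)h_l(E)=0$ come from \cref{lem:homogeneous-products,lem:homogeneous-transpose} with $X=Y=E$, and $h_k(X)=h_k(E)h_k(X)$ comes from $EX=X$. The only difference is that you bound the number of nonzero $h_k(E)$ by $\dim\mathcal W$ via the block decomposition; this is valid but unnecessary, because mutually orthogonal nonzero projections on $\mathbb{R}^n$ already number at most $n$.
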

\begin{proof}
Since $E^2=E=E^{\tt T}$, \cref{lem:homogeneous-products,lem:homogeneous-transpose}  give
$h_k(E)^2=h_k(E)=h_k(E)^{\tt T}$ and $h_k(E)h_l(E)=0$ for distinct $k,l\geq2$. 
So the nonzero matrices among $(h_k(E))_{k\geq2}$ are mutually orthogonal projections.
Consequently, there are finitely many nonzero matrices among $\{h_k(E)\}_{k\geq2}$.
%All $h_k$ are supported on $\mathcal W$, hence so are these projections. 
Using $EX=X$ in the same lemma gives
\(h_k(X)=h_k(E)h_k(X).\)
Thus $h_k\ne0$ implies $h_k(E)\ne0$. 
So there are finitely many nonzero $h_k.$
\end{proof}

\begin{definition}[Polynomial homomorphism on $\mathcal{A}$]\label{def: polynomial homomorphism}
    Let $d=\max\{\{1\}\cup\{k\mid h_k(E)\neq 0\}\}$.
    Then $d<+\infty$ by \cref{lem: hkE mutually orth proj}.
    Define the polynomial (in coordinates of $X$)
    $$F:\mathcal{A}\to \R^{n\times n},\qquad X\mapsto U+X+\sum_{k=2}^dh_k(X).$$
\end{definition}

By definition, $F(\cdot)$ is a polynomial, and it agrees with \cref{def: local F} near 0.
Next, we extend the local properties of $F(\cdot)$ to the entire space $\mathcal{A}.$

\begin{lemma}\label{lem: block decomposition global}
    The block decomposition in 
    \cref{lem:block decomposition} holds for every $X\in \mathcal{A}.$
\end{lemma}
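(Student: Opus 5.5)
The plan is to pass from the local identities to global ones using homogeneity of the $h_k$ and polynomial identities, much as in the proofs of \cref{lem:homogeneous-products,lem:homogeneous-transpose}. For $X\in\mathcal A$, $F(X)=U+X+\sum_{k=2}^d h_k(X)$ is a polynomial in the coordinates of $X$, so each of the three claimed identities can be checked term by term.

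The $U$ and $X$ terms are handled directly. We have $UX=XU=0$ and $EX=XE=X$, since $X\in\mathcal A$ has zero row and column sums (\cref{lem:normalize_rumsum_colsum_to_0}) and $E$ is the unit of $\mathcal A$ (\cref{lem:support-identity}). Moreover $VX=XV=0$, because $\mathrm{im}\,X\subseteq\mathcal S$ and $X^{\tt T}\in\mathcal A$ kills $\mathcal N\supseteq\mathcal W$. Together with $U^2=U$, $EU=VU=0$, this takes care of $U$ and $X$ in all three items.

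For the $h_k$ terms, the key step is to prove the global identities
\[
Uh_k(X)=h_k(X)U=0,\qquad Eh_k(X)=h_k(X)E=0,\qquad Vh_k(X)=h_k(X)V=h_k(X)
\]
for all $X\in\mathcal A$ and $k\ge2$. By \cref{lem:block decomposition}, for $X$ near $0$ we have $Uh(X)=h(X)U=0$ and $Eh(X)=h(X)E=0$. Substituting $sX$ for small $s$ turns these into identities of power series $\sum_k s^k Uh_k(X)=0$, and similarly for the others. Comparing coefficients of $s^k$ then gives $Uh_k(X)=0$, and likewise the other three products, for every fixed $X\in\mathcal A$. Since $V=I-U-E$, we get $Vh_k(X)=h_k(X)=h_k(X)V$, and hence $VF(X)V=VF(X)=F(X)V=\sum_k h_k(X)=h(X)$.

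Summing all the pieces yields items 1--3 for every $X\in\mathcal A$. An alternative route is the analytic identity theorem (\cref{thm:analytic-identity}) applied coordinatewise: the entries of the polynomials $UF(X)-U$, $EF(X)-X$, and so on vanish on the open set $\mathcal O$, hence everywhere. I expect no real obstacle. The only subtlety is that the local statement holds on a neighborhood of $0$, which is why the scaling $sX$ with small $s$ (or the identity theorem) is needed to transfer it to all of $\mathcal A$.
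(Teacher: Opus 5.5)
Your proof is correct and is essentially the paper's argument. The paper fixes $X\in\mathcal A$, applies the analytic identity theorem to the entries of $s\mapsto UF(sX)U-U$ (and the analogous expressions), which vanish for small $s$, and then sets $s=1$; your coefficient comparison in $s$ for each $h_k$ is the same mechanism organized term by term, and it additionally shows that each piece $h_k(X)$ individually satisfies $Uh_k(X)=Eh_k(X)=0$ and $Vh_k(X)=h_k(X)V=h_k(X)$.
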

\begin{proof}
    Fix $X\in \mathcal{A}.$
    Each entry of $UF(sX)U-U$ is an analytic function in $s$ vanishing in a small neighborhood of $0$.
    The analytic identity theorem (\cref{thm:analytic-identity}) implies $UF(sX)U-U=0$ for every $s\in \R$.
    In particular, letting $s=1$ proves $UF(X)U=U$.
    The other identities follow similarly.
\end{proof}

\begin{lemma}\label{lem: homomorphism property}
    For every $X,Y\in \mathcal{A}$, $F(X)F(Y)=F(XY).$
\end{lemma}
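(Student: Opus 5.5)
Since both sides are polynomial maps on $\mathcal A$ and agree near $0$ by \cref{lem: local homomorphism}, the identity extends to all of $\mathcal A$. The cleanest way to carry this out is to work with the homogeneous pieces directly, which avoids any analytic-continuation bookkeeping.

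Write $F(X)=U+X+\sum_{k=2}^d h_k(X)$ as in \cref{def: polynomial homomorphism}. By \cref{lem: block decomposition global}, the block decomposition of \cref{lem:block decomposition} holds globally. Hence $UX=XU=0$, $Uh(X)=h(X)U=0$, and $Xh(Y)=XEVh(Y)=0$, with the analogous identities in the other order. We also have $U^2=U$. Expanding the product then gives
\[
F(X)F(Y)=U+XY+\Bigl(\sum_{k=2}^d h_k(X)\Bigr)\Bigl(\sum_{l=2}^d h_l(Y)\Bigr).
\]
Now \cref{lem:homogeneous-products}, which holds for all $X,Y\in\mathcal A$ and not only near $0$, kills the cross terms $h_k(X)h_l(Y)$ with $k\ne l$ and turns each diagonal term into $h_k(X)h_k(Y)=h_k(XY)$. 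Therefore $F(X)F(Y)=U+XY+\sum_{k=2}^d h_k(XY)=F(XY)$. Here $XY\in\mathcal A$ by \cref{lem:universal-algebra}, so $F(XY)$ is defined.

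The only point needing care is that the vanishing of the mixed products involving $U$, $X$ and $h(\cdot)$ must hold for arbitrary $X,Y$, not just near $0$. This is exactly what \cref{lem: block decomposition global} provides: $XE=X$, $Eh(Y)=0$, and $UX=0$ because $X$ has zero column sums.

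An alternative route is to note that each entry of $F(sX)F(tY)-F(stXY)$ is a polynomial in $(s,t)$ vanishing on a neighborhood of $(0,0)$. It is therefore identically zero by \cref{thm:analytic-identity}, and setting $s=t=1$ gives the claim. I would present the direct computation and mention this as a remark.
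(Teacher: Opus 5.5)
Your proof is correct, but your main route differs from the paper's. The paper argues exactly as in your closing remark, in one variable. For fixed $X,Y$, each entry of $g(s)=F(sX)F(sY)-F(s^2XY)$ is a polynomial in $s$ that vanishes for small $s$ by \cref{lem: local homomorphism}, so it vanishes identically, and $s=1$ gives the claim.

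Your direct computation instead uses the global block decomposition to reduce $F(X)F(Y)$ to $U+XY+h(X)h(Y)$, and then applies \cref{lem:homogeneous-products} term by term. Every step checks out:
\begin{itemize}
\item $UX=XU=0$ follows from the zero row and column sums.
\item The identities $h(X)=VF(X)=F(X)V$ and $X=XE$, together with $EV=UV=0$, kill the mixed terms.
\item Both sides sum $h_k$ only up to $k=d$, so they match.
\end{itemize}

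Your version shows the identity holding degree by degree, with the three blocks multiplying separately. The paper's version needs only the polynomiality of $F$ and the local homomorphism, not the global block decomposition.

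One correction to your framing: the direct computation does not really avoid analytic continuation. \cref{lem: block decomposition global} is itself proved with \cref{thm:analytic-identity}, and \cref{lem:homogeneous-products} by comparing power-series coefficients. The continuation has only been moved into earlier lemmas.
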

\begin{proof}
    Fix $X,Y\in \mathcal{A}.$
    For sufficiently small $s>0$, \cref{lem: local homomorphism} gives $F(sX)F(sY)=F(s^2 XY).$
    Each entry of $g(s)=F(sX)F(sY)-F(s^2XY)$ is an analytic function vanishing near $0.$
    By the analytic identity theorem (\cref{thm:analytic-identity}), $g(s)\equiv 0$ for $s\in \R$. 
    In particular, $g(1)=0$.
    That is, $F(X)F(Y)=F(XY).$
\end{proof}

\begin{lemma}\label{lem:polynomial-agreement}
$K(C)=F(\Phi(C))$ for every $C\in\mathcal A$.
\end{lemma}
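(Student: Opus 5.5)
The identity already holds near $0$ by construction; we extend it to all of $\mathcal A$ by analyticity along rays.

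First, for $C$ in a small neighborhood of $0$, \cref{lem:local-coordinate} gives $\Phi(C)\in\mathcal O$ and $\Phi^{-1}(\Phi(C))=C$, with $\Phi^{-1}$ the local analytic inverse. Then \cref{def: local F} gives $F(\Phi(C))=K(\Phi^{-1}(\Phi(C)))=K(C)$. The polynomial $F$ of \cref{def: polynomial homomorphism} agrees with the local $F$ near $0$. So $K(C)=F(\Phi(C))$ holds on a neighborhood of $0$ in $\mathcal A$.

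Second, fix an arbitrary $C\in\mathcal A$ and consider $g(s)=K(sC)-F(\Phi(sC))$ for $s\in\mathbb R$. Each entry of $K(sC)=\exp_\circ(sC)/r(sC)$ is real analytic in $s$, because $r(sC)=\sum_y e^{sC_{xy}}>0$ is analytic and nonvanishing. Since $\Phi(sC)=EK(sC)E$, the map $s\mapsto\Phi(sC)$ is analytic into $\mathcal A$ (\cref{cor: EK=KE=EKE}). Its coordinates in a fixed basis of $\mathcal A$ are linear functions of the matrix entries, hence analytic in $s$. The map $F$ is a polynomial in these coordinates, so every entry of $g$ is real analytic on all of $\mathbb R$. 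By the first step, $g$ vanishes for $|s|$ small. The analytic identity theorem (\cref{thm:analytic-identity}) then gives $g\equiv0$, and setting $s=1$ yields $K(C)=F(\Phi(C))$.

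The one point needing care is that $F$ is now defined globally as a polynomial, so $F\circ\Phi$ makes sense at every $sC$ and not only where $\Phi(sC)\in\mathcal O$. This is exactly why the extension in \cref{def: polynomial homomorphism} was done first; otherwise the argument would need analytic continuation of $\Phi^{-1}$, which could fail. With the polynomial $F$, the identity-theorem argument goes through in the same way as \cref{lem: block decomposition global,lem: homomorphism property}.
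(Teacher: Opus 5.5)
Your proof is correct and takes the same route as the paper. The identity holds on a neighborhood of $0$ by construction, since the global polynomial $F$ agrees with the local one there. The analytic identity theorem, applied entrywise to $s\mapsto K(sC)-F(\Phi(sC))$ along the ray through $C$, then extends it to $s=1$. Your checks that $\Phi^{-1}(\Phi(C))=C$ near $0$ and that $F\circ\Phi$ is analytic in $s$ simply spell out what the paper's short proof leaves implicit.
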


\begin{proof}
Fix $C\in\mathcal A$. 
Each entry of
$K(sC)-F(\Phi(sC))$ is real analytic for every $s\in\mathbb R$.
The difference vanishes for $s$ near zero by \cref{def: local F}.
\cref{thm:analytic-identity} makes it zero for every real $s$, in particular for $s=1$.
\end{proof}

\begin{lemma}\label{lem:polynomial-in-span}
$F(X)\in\mathcal B$ for every $X\in\mathcal A$.
\end{lemma}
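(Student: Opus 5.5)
The plan is to use that $\mathcal B$ is a finite-dimensional subspace, hence closed, and to express $F(X)$ as a limit or polynomial identity of elements already known to lie in $\mathcal B$. The first step is the local case. For $X\in\mathcal O$ we already know from \cref{def: local F} that $F(X)=K(\Phi^{-1}(X))$. This is realizable and entrywise positive by \cref{lem:normalized-signatures}, so it lies in $\mathcal R\subseteq\mathcal B$. Thus $F(X)\in\mathcal B$ for all $X$ in the neighborhood $\mathcal O$ of $0$.

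The second step extends this to all of $\mathcal A$ using the polynomial structure. Fix $X\in\mathcal A$ and consider the matrix-valued polynomial $s\mapsto F(sX)=U+sX+\sum_{k=2}^d s^k h_k(X)$. For all sufficiently small $|s|$, we have $sX\in\mathcal O$, so $F(sX)\in\mathcal B$. Let $\pi$ be the orthogonal projection of $\mathbb R^{n\times n}$ onto $\mathcal B^\perp$. Then each entry of $\pi(F(sX))$ is a polynomial in $s$ vanishing on an interval, hence identically zero; this can also be seen via \cref{thm:analytic-identity}. Taking $s=1$ gives $F(X)\in\mathcal B$.

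Alternatively, and more informatively, one can extract coefficients. Choose distinct small $s_0,\dots,s_d$ and invert the Vandermonde system. This shows that each coefficient $U$, $X$, $h_k(X)$ is a linear combination of the $F(s_jX)\in\mathcal B$, hence lies in $\mathcal B$. Consistency is guaranteed by $U\in\mathcal B$ (since $J\in\mathcal B$ by \cref{lem:matrix-span-closure}) and $X\in\mathcal A\subseteq\mathcal B$ by \cref{lem:log-space-in-span}. Summing these coefficients gives $F(X)\in\mathcal B$.

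The only subtle point is making sure the local definition and the global polynomial \cref{def: polynomial homomorphism} agree on $\mathcal O$, which the paper already asserts. After that the argument is just closedness of a linear subspace under polynomial identities, so I expect no real obstacle.
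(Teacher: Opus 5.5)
Your proposal is correct and is essentially the paper's argument. The paper also starts from $F(X)=K(\Phi^{-1}(X))\in\mathcal R\subseteq\mathcal B$ near $0$ and uses that the finite-dimensional subspace $\mathcal B$ is closed, but it extracts the multivariate coefficients $M_\alpha=\frac{1}{\alpha!}\partial^\alpha F(0)$ as limits of elements of $\mathcal B$; you instead restrict to the line $s\mapsto F(sX)$ and annihilate it with the projection onto $\mathcal B^\perp$ (or invert a Vandermonde system), which is the same idea in one variable.

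One small precision: the polynomial $F$ of \cref{def: polynomial homomorphism} is only known to agree with the local $F$ on some ball around $0$ where the Taylor series converges, not necessarily on all of $\mathcal O$. You only use sufficiently small $|s|$, so this does not affect your argument.
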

\begin{proof}
Choose a basis $\{C_1,\ldots,C_m\}$ of $\mathcal A$.
For $X\mathcal{\in \mathcal{A}}$, write $X=\sum_{i=1}^mx_iC_i.$
Since $F$ is a polynomial in $(x_1,\ldots, x_m)$, we have
\(
F(X)=F(x_1,\ldots, x_m)=\sum_{\alpha\in\mathbb Z_{\geq0}^m}x^\alpha M_\alpha\),
where
$x^\alpha=\prod_{i=1}^d x_i^{\alpha_i}$ and $M_\alpha$ are coefficient matrices.
Only finitely many $M_{\alpha}$
are nonzero. 
We will show that every $M_\alpha$ belongs to $\mathcal B$.
On a neighborhood of zero, $F(X)=K(\Phi^{-1}(X))\in\mathcal B$ by \cref{lem:normalized-signatures}. 
Each coefficient matrix satisfies $M_\alpha=\frac{1}{\alpha!}\partial^{\alpha}F(0)$, where $\alpha!=\prod_{i=1}^m\alpha_i!$ and $\partial^{\alpha}=\partial_{1}^{\alpha_1}\cdots \partial_{d}^{\alpha_d}$. 
Therefore, each $M_\alpha$ is a limit of finite linear combinations of matrices in $\mathcal{B}$.
Since $\mathcal B$ is a finite-dimensional
linear subspace of $\mathbb R^{n\times n}$, it is closed. 
Hence $M_\alpha \in \mathcal{B}$ for every $\alpha.$
Finally,
\(
F(X)=\sum_\alpha x^\alpha M_\alpha\in\mathcal B,
\)
as a finite linear combination of matrices in $\mathcal B$.
\end{proof}

\begin{lemma}\label{lem: F transpose}
    For every $X\in \mathcal{A}$, $F(X^{\tt T})=F(X)^{\tt T}$.
\end{lemma}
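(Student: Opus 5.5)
We use the decomposition $F(X)=U+X+\sum_{k=2}^d h_k(X)$ from \cref{def: polynomial homomorphism} and check each summand separately. The constant term is fine because $U^{\tt T}=U$. The linear term transposes to $X^{\tt T}$, which is exactly the linear term of $F(X^{\tt T})$. For the homogeneous parts, \cref{lem:homogeneous-transpose} gives $h_k(X^{\tt T})=h_k(X)^{\tt T}$ for every $X\in\mathcal A$ and every $k\ge 2$. This identity was already extended from a neighborhood of $0$ to all of $\mathcal A$ by comparing coefficients of $s^k$. Summing over $k$ gives
\[
F(X^{\tt T})=U+X^{\tt T}+\sum_{k=2}^d h_k(X^{\tt T})=U^{\tt T}+X^{\tt T}+\sum_{k=2}^d h_k(X)^{\tt T}=F(X)^{\tt T}.
\]

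We must also confirm that $X^{\tt T}$ lies in the domain, i.e.\ $X^{\tt T}\in\mathcal A$. This holds because $\mathcal A$ is transpose-closed by \cref{lem:universal-algebra}.

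An alternative route gives the same result. In the proof of \cref{lem:homogeneous-transpose} we already established $F(X^{\tt T})=F(X)^{\tt T}$ for $X$ near $0$. For a fixed $X$, each entry of $F(sX^{\tt T})-F(sX)^{\tt T}$ is a polynomial in $s$ that vanishes for small $s$. By \cref{thm:analytic-identity} it vanishes identically, and setting $s=1$ gives the claim.

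There is no real obstacle here. The only point needing care is that the polynomial $F$ on all of $\mathcal A$ has the same homogeneous components $h_k$ as the local analytic $F$, and this holds by construction in \cref{def: polynomial homomorphism}.
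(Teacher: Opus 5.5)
Your proposal is correct and follows the paper's own proof: decompose $F(X)=U+X+\sum_{k=2}^d h_k(X)$, use $U^{\tt T}=U$, and apply \cref{lem:homogeneous-transpose} termwise. Your alternative route, applying the analytic identity theorem to $F(sX^{\tt T})-F(sX)^{\tt T}$ starting from the local identity, is also valid.
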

\begin{proof}
    By \cref{lem:homogeneous-transpose}, $F(X^{\tt T})=U+X^{\tt T}+\sum_{k=2}^d h_k(X^{\tt T})=U+X^{\tt T}+\sum_{k=2}^d h_k(X)^{\tt T}=F(X)^{\tt T}.$
\end{proof}

\begin{lemma}\label{lem:polynomial-identity}
$F(E)=I$.
\end{lemma}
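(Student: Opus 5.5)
We must show $F(E)=U+E+\sum_{k=2}^d h_k(E)=I$. Since $I=U+E+V$, this amounts to showing $\sum_{k\ge2}h_k(E)=V$. By the block decomposition (\cref{lem: block decomposition global}) we already have $VF(E)V=h(E)$. By \cref{lem: hkE mutually orth proj}, $h(E)=\sum_k h_k(E)$ is a sum of mutually orthogonal projections, hence itself an orthogonal projection onto a subspace of $\mathcal W$. The remaining task is to show that this projection is all of $V$, i.e.\ that $h(E)w=w$ for every $w\in\mathcal W$.

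To prove this, I would use the realizability and stochastic structure near $0$ along the ray $tE$. For small $t>0$, $F(tE)=K(\Phi^{-1}(tE))$ is an entrywise positive doubly stochastic matrix. By the homomorphism property (\cref{lem: homomorphism property}), $F(tE)F(sE)=F(tsE)$, so $t\mapsto F(tE)$ is a multiplicative semigroup in $t$. Expanding,
\[
F(tE)=U+tE+\sum_{k=2}^d t^k h_k(E).
\]
The key step is to show that $\mathcal W$ is contained in the range of $h(E)$. Suppose instead that there is a nonzero $w\in\mathcal W$ with $h_k(E)w=0$ for all $k$. Then $F(tE)w=0$ for every $t$, in particular for small $t$, where $F(tE)=K(C_t)$ with $C_t=\Phi^{-1}(tE)\in\mathcal A$. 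Since $w\in\mathcal N$, we also have $C_t w=0$.

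I would then derive a contradiction from $K(C_t)w=0$ by passing to $t\to0$ via Taylor expansion in $C$ rather than in $t$. Write $C=C_t$. We have $K(C)=\exp_\circ C/r(C)$ and
\[
\exp_\circ C=\sum_{j\ge0}\frac{C^{\circ j}}{j!},
\]
where every Hadamard power $C^{\circ j}$ lies in $\mathcal B$. The plan is to use polynomial interpolation as in \cref{lem:log-space-in-span}: the entrywise powers of $C$ can be recovered from entrywise exponentials $\exp_\circ(sC)$ for varying $s$. So I would consider $K(sC)w=0$ for all $s$, using that $sC$ corresponds to a point on a curve through $\Phi^{-1}$. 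This is the main obstacle, because $\Phi^{-1}(stE)\ne s\,\Phi^{-1}(tE)$ in general.

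I expect this step to be cleaner if I first show that $\exp_\circ(C)w=0$ for all $C$ in a neighborhood of $0$ in $\mathcal A$. That would follow from $F(X)w=0$ for all $X$ near $0$, since $F(X)w=h(X)w=h(E)h(X)w$ and $h(X)=h(X)h(E)$ by \cref{lem:homogeneous-products} with $XE=X$. Taking derivatives in $C$ at $0$ then gives $(C_1\circ\cdots\circ C_j)w=0$ for all $C_i\in\mathcal A$. Together with $J w=0$, this yields $Bw=0$ for all $B$ in the subalgebra generated under $\circ$ by $\mathcal A$ and $J$. I would then show that this subalgebra contains $M$ itself: the Hadamard powers of $P\log_\circ(M)P$ together with $J$ span $\log_\circ M$ and, by interpolation, $M=\exp_\circ(\log_\circ M)$. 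Since $M$ has full rank, $Mw=0$ forces $w=0$, a contradiction. Hence $h(E)=V$ and $F(E)=U+E+V=I$.
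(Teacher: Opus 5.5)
Your argument is correct, but it takes a longer path than the paper. The paper uses $F(E)F(X)=F(X)$ for all $X\in\mathcal A$ and exhibits one invertible value, $F(\Phi(C_0))=K(C_0)=M/r_M$ with $C_0=P\log_\circ(M)P$. That step relies on the global agreement $K=F\circ\Phi$ on all of $\mathcal A$ (\cref{lem:polynomial-agreement}).

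Your identity $h(X)=h(X)h(E)$ is exactly the $\mathcal W$-block of $F(X)F(E)=F(X)$, so the cancellation idea is the same. What differs is how you produce an injective value. You use only the local identity $F=K\circ\Phi^{-1}$ near $0$ and differentiate $\exp_\circ(C)w=0$ at $C=0$, which shows that $J$ and all Hadamard products $C_1\circ\cdots\circ C_j$ kill $w$. Lagrange interpolation on the finitely many entries of $C_0$ then places $M=e^{\mu}\exp_\circ(C_0)$ in the span of $J$ and the Hadamard powers of $C_0$, so $Mw=0$ contradicts full rank.

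This is a concrete substitute for analytic continuation of $s\mapsto\exp_\circ(sC_0)w$, and it avoids \cref{lem:polynomial-agreement}. The cost is extra bookkeeping the paper does not need: showing $h(E)$ is an orthogonal projection into $\mathcal W$ and reducing to $\ker h(E)\cap\mathcal W=0$. With \cref{lem:polynomial-agreement} available, you could conclude at once from $K(C_0)w=F(\Phi(C_0))w=0$.

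Three small clean-ups:
\begin{itemize}
\item Delete the abandoned middle paragraph about the semigroup along $tE$ and $\Phi^{-1}(stE)\neq s\,\Phi^{-1}(tE)$; it plays no role.
\item Write the needed identity as $h(X)w=h(X)h(E)w$. The other order, $h(E)h(X)w$, is true but is not what gives $h(X)w=0$.
\item Note explicitly that $Uw=0$ and $Xw=0$, since $w\perp\mathbf 1$ and $w\in\mathcal N$; these give $F(X)w=h(X)w$.
\end{itemize}
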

\begin{proof}
By \cref{lem: homomorphism property}, $F(E)F(X)=F(X)$ for every $X\in \mathcal{A}.$
We claim there exists $X\in \mathcal{A}$ such that $F(X)$ is invertible. 
Then multiplying $F(X)^{-1}$ on both sides of $F(E)F(X)=F(X)$ gives $F(E)=I.$
Let $C_0=P\log_\circ(M)P\in\mathcal A$, where $M\in \Sym{n}{F}{\Rp}$ is the originally fixed matrix.
By \cref{lem:normalize_rumsum_colsum_to_0}, there is $\mu\in\mathbb R$ with $C_0=\log_\circ M-\mu J$. 
If $r_M>0$ is the common row sum of $M$, then
\(
K(C_0)=\frac{M}{r_M}.
\)
This matrix is invertible. \cref{lem:polynomial-agreement} implies that $F(\Phi(C_0))=K(C_0)$ is invertible, where $\Phi(C_0)\in \mathcal{A}.$
Therefore $F(E)=I$.
\end{proof}

%%%%%%%JYC here.

\iffalse
For all $X,Y\in\mathcal A$, the established identities can be summarized as
\begin{equation}
\begin{aligned}
F(0)=U,& F(E)=I, &F(X)\in \mathcal{B} \\
F(X)F(Y)&=F(XY),&F(X^{\tt T})&=F(X)^{\tt T}.
\end{aligned}
\label{eq:polynomial-monoid}
\end{equation}
\fi

\subsection{When is \texorpdfstring{$F(\cdot)$}{F(·)} Realizable}\label{subsec: when is F realizable}

We proved that $F$ is a map from $\mathcal{A}$ to $\mathcal{B}$ by \cref{lem:polynomial-in-span}.
However, not every element in $\mathcal{B}$ is realizable.
Nevertheless, \cref{lem:polynomial-agreement} shows that $K(C) = F(\Phi(C))$ for all $C \in \mathcal{A}$, and $K(C)$ is realizable by \cref{lem:normalized-signatures}.
So $F(X)$ is realizable whenever $\Phi^{-1}(X)$ exists.
In this subsection, we will specify a subregion $\mathcal{C}$ of $\mathcal{A}$ in which $\Phi$ has an inverse, and thus $F(X)=K(\Phi^{-1}(X))$ for $X\in \mathcal{C}$ (\cref{lem: Phi has inverse in C}).
In particular, we will prove that $F(tE)$ is realizable for $t\in (0,1)$ (\cref{lem:polytope-positivity}).
%The ultimate \#P-hardness comes from $\PlGH(F(tE))$.

\begin{definition}\label{def:entry-statistics}
Let $\pi_{\mathcal A}:\mathbb R^{n\times n}\to\mathcal A$ be the orthogonal projection onto $\mathcal{A}$. 
For every $x\in[n]$, let $\mathbf e_x\in\mathbb R^n$ be the $x$-th coordinate vector. 
For all $x,y\in[n]$, define
\(
T_{xy}=n\pi_{\mathcal A}(\mathbf e_x\mathbf e_y^{\tt T}).\)
\end{definition}

\begin{lemma}\label{lem:statistic-pairing}
For every $C\in\mathcal A$ and every $x,y\in[n]$, $\langle C,T_{xy}\rangle=nC_{xy}$. 
\end{lemma}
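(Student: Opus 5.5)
The identity follows directly from the defining property of orthogonal projection together with the definition of the inner product in \cref{def:matrix-conventions}. No structure of $\mathcal A$ beyond its being a linear subspace of $\mathbb R^{n\times n}$ is needed.

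Fix $C\in\mathcal A$ and $x,y\in[n]$, and write $N=\mathbf e_x\mathbf e_y^{\tt T}$. Since $\pi_{\mathcal A}$ is the orthogonal projection onto $\mathcal A$ with respect to $\langle\cdot,\cdot\rangle$, the residual $N-\pi_{\mathcal A}(N)$ lies in $\mathcal A^{\perp}$. Hence $\langle C,N-\pi_{\mathcal A}(N)\rangle=0$ because $C\in\mathcal A$. Equivalently, $\pi_{\mathcal A}$ is self-adjoint and fixes $C$, so
\[
\langle C,\pi_{\mathcal A}(N)\rangle=\langle \pi_{\mathcal A}(C),N\rangle=\langle C,N\rangle .
\]

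It then remains to evaluate $\langle C,N\rangle$. By \cref{def:matrix-conventions},
\[
\langle C,\mathbf e_x\mathbf e_y^{\tt T}\rangle=\sum_{a,b}C_{ab}\,\mathbb I[a=x]\,\mathbb I[b=y]=C_{xy}.
\]
Multiplying by $n$ and using $T_{xy}=n\pi_{\mathcal A}(\mathbf e_x\mathbf e_y^{\tt T})$ from \cref{def:entry-statistics} gives $\langle C,T_{xy}\rangle=nC_{xy}$.

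I do not expect any real obstacle here. The only point to state explicitly is that the projection is taken with respect to the same Frobenius inner product used in the pairing, so that self-adjointness of $\pi_{\mathcal A}$ applies.
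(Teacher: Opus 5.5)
Your proof is correct and follows the paper's argument: you use that orthogonal projection onto $\mathcal A$ preserves the inner product with any $C\in\mathcal A$, and then evaluate $\langle C,\mathbf e_x\mathbf e_y^{\tt T}\rangle=C_{xy}$ and multiply by $n$.
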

\begin{proof}
Fix $C\in \mathcal{A}.$
Orthogonal projection onto $\mathcal{A}$ preserves inner product with $C$, so
\(
\langle C,T_{xy}\rangle
=n\langle C,\pi_{\mathcal A}(\mathbf e_x\mathbf e_y^{\tt T})\rangle
=n\langle C,\mathbf e_x\mathbf e_y^{\tt T}\rangle=nC_{xy}.
\)
\end{proof}

\begin{lemma}\label{lem:diagonal-statistic}
For all $x,y\in[n]$, $T_{xx}=E$ and $T_{yx}=T_{xy}^{\tt T}$.
\end{lemma}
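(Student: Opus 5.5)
To show both claims, I will identify each $T_{xy}$ through its inner products with $\mathcal A$. By \cref{lem:statistic-pairing}, $T_{xy}$ is the unique element of $\mathcal A$ with $\langle C,T_{xy}\rangle=nC_{xy}$ for every $C\in\mathcal A$, since $\langle\cdot,\cdot\rangle$ is nondegenerate on $\mathcal A$. So it suffices to show that the candidate matrices $E$ and $T_{xy}^{\tt T}$ lie in $\mathcal A$ and have the right pairings.

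For the diagonal claim, I will compute $\langle C,E\rangle$ for $C\in\mathcal A$. By \cref{lem:support-identity}, $E\in\mathcal A\subseteq\mathcal B$. \cref{lem:row-isometry} then gives $\langle C,E\rangle=n\sum_y C_{xy}E_{xy}$ for every row index $x$. This row-by-row expression does not reduce directly to $C_{xx}$, so I will use a trace identity instead:
\[
\langle C,E\rangle=\operatorname{Tr}(C^{\tt T}E)=\operatorname{Tr}((EC)^{\tt T})=\operatorname{Tr}(C),
\]
using $E^{\tt T}=E$ and $EC=C$ from \cref{lem:support-identity}. By \cref{lem:normalize_rumsum_colsum_to_0}, $C$ has constant diagonal, so $\operatorname{Tr}(C)=nC_{xx}$ for every $x$. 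Hence $\langle C,E\rangle=nC_{xx}=\langle C,T_{xx}\rangle$ for all $C\in\mathcal A$. Since both $E$ and $T_{xx}$ lie in $\mathcal A$, we get $T_{xx}=E$.

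For the transpose claim, $\mathcal A$ is transpose closed by \cref{lem:universal-algebra}, so $T_{xy}^{\tt T}\in\mathcal A$. For any $C\in\mathcal A$, transposition preserves the inner product, so
\[
\langle C,T_{xy}^{\tt T}\rangle=\langle C^{\tt T},T_{xy}\rangle=n(C^{\tt T})_{xy}=nC_{yx}=\langle C,T_{yx}\rangle,
\]
where the middle step applies \cref{lem:statistic-pairing} to $C^{\tt T}\in\mathcal A$. Uniqueness within $\mathcal A$ then gives $T_{yx}=T_{xy}^{\tt T}$. A direct alternative is to note that orthogonal projection onto a transpose-closed subspace commutes with transposition, and $(\mathbf e_x\mathbf e_y^{\tt T})^{\tt T}=\mathbf e_y\mathbf e_x^{\tt T}$.

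The only substantive step is the diagonal identity. It depends on combining the unit property of $E$ (to get the trace) with the constant diagonal of every $C\in\mathcal A$ (to turn the trace into $nC_{xx}$). Everything else is routine.
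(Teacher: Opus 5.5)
Your proof is correct and essentially the paper's. The diagonal claim is proved the same way, via $\langle C,E\rangle=\operatorname{Tr}(C^{\tt T}E)=\operatorname{Tr}C=nC_{xx}$ using the unit property of $E$ and the constant diagonal. For the transpose claim, the paper uses the route you list as your ``direct alternative'' (transposition commutes with $\pi_{\mathcal A}$), which is equivalent to your pairing-uniqueness argument.
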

\begin{proof}
For every $C\in\mathcal A$, by constant diagonal (\cref{lem:normalize_rumsum_colsum_to_0}), \cref{lem:statistic-pairing}, and the fact that $E$ is a multiplicative unit (\cref{lem:support-identity}), we have
\(
\langle C,T_{xx}\rangle=nC_{xx}=\operatorname{Tr}C
=\operatorname{Tr}(C^{\tt T}E)=\langle C,E\rangle.
\)
Since $T_{xx},E\in\mathcal A$, equality against every $C\in\mathcal A$ proves $T_{xx}=E$.
Also, taking transpose is an isometry for the inner product $\langle \cdot, \cdot\rangle$,
and it preserves $\mathcal A$ and its orthogonal complement, thus
it commutes with the projection operation $\pi_{\mathcal A}$. Hence,
$T_{yx}=n\pi_{\mathcal{A}}(\mathbf{e}_y\mathbf{e}_x^{\tt T})=n\pi_{\mathcal{A}}((\mathbf{e}_x\mathbf{e}_y^{\tt T})^{\tt T})=
n(\pi_{\mathcal{A}}(\mathbf{e}_x\mathbf{e}_y^{\tt T}))^{\tt T}= T_{xy}^{\tt T}$.
\end{proof}

\begin{lemma}\label{lem:row-normalization}
For every $X\in\mathcal A$ and every $x\in[n]$,
$\sum_{y=1}^nF(X)_{xy}=1$.
\end{lemma}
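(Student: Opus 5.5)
The row sums of $F(X)$ are the entries of $F(X)\mathbf 1$, so the claim is equivalent to $F(X)\mathbf 1=\mathbf 1$ for every $X\in\mathcal A$. I will read this off the block decomposition, which by \cref{lem: block decomposition global} holds on all of $\mathcal A$, not only near $0$.

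By item~\ref{item: block decomposition 1} of \cref{lem:block decomposition}, now valid globally, we have $F(X)U=U$ for every $X\in\mathcal A$. Apply both sides to $\mathbf 1$. Since $U\mathbf 1=\mathbf 1$, this gives $F(X)\mathbf 1=F(X)U\mathbf 1=U\mathbf 1=\mathbf 1$. The $x$-th coordinate of $F(X)\mathbf 1$ is $\sum_{y=1}^n F(X)_{xy}$, so each row of $F(X)$ sums to $1$.

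As a sanity check, the decomposition $F(X)=U+X+h(X)$ gives the same result term by term. We have $U\mathbf 1=\mathbf 1$. Next, $X\mathbf 1=0$ because $X\in\mathcal A$ has zero row sums by \cref{lem:normalize_rumsum_colsum_to_0}. Finally, $h(X)\mathbf 1=h(X)V\mathbf 1=0$, because $\mathbf 1\perp\mathcal W$ gives $V\mathbf 1=0$.

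The only point that needs care is extending from a neighbourhood of $0$ to arbitrary $X$. Near $0$, the identity is immediate because $F(X)=K(\Phi^{-1}(X))$ is doubly stochastic by \cref{lem:normalized-signatures}. The global case then follows either from \cref{lem: block decomposition global} as above, or directly: each entry of $F(sX)\mathbf 1-\mathbf 1$ is a polynomial in $s$ that vanishes for small $s$, so it vanishes identically by \cref{thm:analytic-identity}, and setting $s=1$ gives the claim.
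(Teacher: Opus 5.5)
Your proof is correct and is essentially the paper's own: the paper argues exactly as in your last paragraph, that row sums equal $1$ near $0$ because $F(X)=K(\Phi^{-1}(X))$ is doubly stochastic, and the analytic identity theorem then extends this to all of $\mathcal A$. Your main route through the global identity $F(X)U=U$ from \cref{lem: block decomposition global} is the same continuation step already carried out in that lemma, which is proved independently of this statement, so there is no circularity.
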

\begin{proof}
By \cref{def: local F}, $F(X)=K(\Phi^{-1}(X))$ for $X$ near 0.
So $F(X)$ has row sum one for $X$ near $0.$ by \cref{lem:normalized-signatures}.
Each row sum is a polynomial in the coordinates of $X$ by \cref{def: polynomial homomorphism}, so the lemma is proved by the analytic identity theorem (\cref{thm:analytic-identity}).
\end{proof}

\begin{lemma}\label{lem:mean-identities}
For every $X\in\mathcal A$ and every $x\in[n]$, $\sum_{y=1}^nF(X)_{xy}T_{xy}=X$. 
\end{lemma}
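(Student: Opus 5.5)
The plan is to test both sides against every element of $\mathcal A$. Both sides lie in $\mathcal A$: the right side is $X\in\mathcal A$, and the left side is a linear combination of the $T_{xy}=n\pi_{\mathcal A}(\mathbf e_x\mathbf e_y^{\tt T})\in\mathcal A$. Since the inner product of \cref{def:matrix-conventions} is nondegenerate on $\mathcal A$, it suffices to show that
\[
\Bigl\langle C,\sum_{y=1}^nF(X)_{xy}T_{xy}\Bigr\rangle=\langle C,X\rangle
\]
for every $C\in\mathcal A$ and every fixed $x\in[n]$.

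\textbf{Step 1 (left side).} By \cref{lem:statistic-pairing}, the left side equals $n\sum_{y}F(X)_{xy}C_{xy}$. Now $F(X)\in\mathcal B$ by \cref{lem:polynomial-in-span}, and $C\in\mathcal A\subseteq\mathcal B$ by \cref{lem:log-space-in-span}. Hence \cref{lem:row-isometry} applies to the pair $C,F(X)$ and turns this row-$x$ sum into the full inner product:
\[
n\sum_{y=1}^nF(X)_{xy}C_{xy}=\langle C,F(X)\rangle .
\]

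\textbf{Step 2 (decompose $F(X)$).} Using the global block decomposition (\cref{lem: block decomposition global}), write $F(X)=U+X+h(X)$. I then show that the two terms other than $X$ pair to zero with $C$.
\begin{itemize}
\item For the $U$ term, $\langle C,U\rangle=\frac1n\langle C,J\rangle$ is $\frac1n$ times the sum of all entries of $C$. This vanishes because $C$ has zero row sums (\cref{lem:normalize_rumsum_colsum_to_0}).
\item For the $h(X)$ term, I use $h(X)=Vh(X)V$ and $C=ECE$, the latter holding because $E$ is the unit of $\mathcal A$ (\cref{lem:support-identity}). Then
\[
\langle C,h(X)\rangle=\operatorname{Tr}(EC^{\tt T}E\,Vh(X)V)=0,
\]
since $EV=0$: these are orthogonal projections onto the orthogonal subspaces $\mathcal S$ and $\mathcal W$.
\end{itemize}
This leaves $\langle C,F(X)\rangle=\langle C,X\rangle$, which is what we need.

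\textbf{Main obstacle.} The only delicate point is justifying \cref{lem:row-isometry} for $F(X)$ when $X$ is not near $0$. That lemma needs both matrices in $\mathcal B$, where constant row sums of entrywise products are available. This is exactly what \cref{lem:polynomial-in-span} provides globally, so no analytic-continuation argument should be needed.
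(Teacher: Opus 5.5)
Your proof is correct and follows the paper's argument essentially step for step. You pair both sides against an arbitrary $C\in\mathcal A$, apply \cref{lem:statistic-pairing} and \cref{lem:row-isometry} (valid because $C,F(X)\in\mathcal B$) to reach $\langle C,F(X)\rangle$, and then kill the $U$ and $h(X)$ terms using the global block decomposition; your explicit justifications ($\langle C,U\rangle=0$ via zero row sums, and $\langle C,h(X)\rangle=0$ via $C=ECE$, $h(X)=Vh(X)V$, $EV=0$) simply spell out the paper's one-line trace computations.
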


\begin{proof}
Fix arbitrary $C\in\mathcal A, X\in \mathcal{A}$ and $x\in [n]$. 
By \cref{lem:log-space-in-span,lem:polynomial-in-span}, both $C$ and $F(X)$ are in $\mathcal B$.
By \cref{lem:row-isometry,lem:statistic-pairing},
\[
\langle C,F(X)\rangle
=n\sum_{y=1}^nF(X)_{xy}C_{xy}
=\sum_{y=1}^nF(X)_{xy}\langle C,T_{xy}\rangle
=\left\langle C,\sum_{y=1}^nF(X)_{xy}T_{xy}\right\rangle.
\]
On the other hand, we have the block decomposition $F(X)=U+X+h(X)$.
We have $\braket{C,U}=\mathrm{Tr}(C^{\tt T}U)=0$ and $\braket{C,h(X)}=\mathrm{Tr}(C^{\tt T}h(X))=\mathrm{Tr}(C^{\tt T}EVF(X))=0$ by \cref{lem: block decomposition global}.
This gives $\langle C,F(X)\rangle=\langle C,X\rangle$. 
Thus, $\langle C,X \rangle=\left\langle C,\sum_{y=1}^nF(X)_{xy}T_{xy}\right\rangle$ for every $C\in \mathcal{A}$.
This implies $\sum_{y=1}^nF(X)_{xy}T_{xy}=X$. 
\end{proof}

\begin{lemma}\label{lem:statistics-span}
For every $x\in[n]$, $\operatorname{Aff}_{\mathbb{R}}\{T_{x1},\ldots,T_{xn}\}=\mathcal{A}$, and $\sum_{y=1}^n\frac{1}{n}T_{xy}=0.$
\end{lemma}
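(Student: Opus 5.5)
We prove the two claims separately, starting with the centering identity $\sum_{y}\frac1n T_{xy}=0$. By linearity of $\pi_{\mathcal A}$,
\[
\sum_{y=1}^n T_{xy}=n\,\pi_{\mathcal A}\Bigl(\mathbf e_x\mathbf 1^{\tt T}\Bigr).
\]
It therefore suffices to show that $\mathbf e_x\mathbf 1^{\tt T}$ is orthogonal to $\mathcal A$. For every $C\in\mathcal A$ we have $\langle C,\mathbf e_x\mathbf 1^{\tt T}\rangle=\sum_{y}C_{xy}$. This vanishes because every $C\in\mathcal A$ has zero row sums (\cref{lem:normalize_rumsum_colsum_to_0}). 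Hence $\sum_y T_{xy}=0$.

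For the affine span, fix $x$. Each $T_{xy}$ lies in $\mathcal A$, so $\operatorname{Aff}_{\mathbb R}\{T_{x1},\ldots,T_{xn}\}\subseteq\mathcal A$. Since $0=\sum_y\frac1nT_{xy}$ is an affine combination, the affine span contains $0$. An affine span containing the origin is a linear subspace, namely $\operatorname{Span}_{\mathbb R}\{T_{x1},\ldots,T_{xn}\}$. It remains to show that this span is all of $\mathcal A$.

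Suppose $C\in\mathcal A$ is orthogonal to every $T_{xy}$, $y\in[n]$. By \cref{lem:statistic-pairing}, $nC_{xy}=\langle C,T_{xy}\rangle=0$ for all $y$, so the $x$-th row of $C$ vanishes. Since $\mathcal A\subseteq\mathcal B$ (\cref{lem:log-space-in-span}), the injectivity of $\iota_x$ on $\mathcal B$ (\cref{cor:row-determines-matrix}) gives $C=0$. Thus the orthogonal complement of $\operatorname{Span}_{\mathbb R}\{T_{xy}\}_y$ inside $\mathcal A$ is trivial, and the span equals $\mathcal A$.

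Alternatively, \cref{lem:mean-identities} exhibits every $X$ directly as $\sum_y F(X)_{xy}T_{xy}$. By \cref{lem:row-normalization} the coefficients sum to $1$, so this is already an affine combination, which gives the first claim immediately. There is no real obstacle in either route; the only care needed is to note that the row-injectivity argument applies to $\mathcal A$ via $\mathcal A\subseteq\mathcal B$.
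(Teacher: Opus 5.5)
Your proof is correct. The span half is exactly the paper's argument: $0$ lies in the affine span, so the affine span is the linear span, and you kill its orthogonal complement inside $\mathcal A$ using \cref{lem:statistic-pairing} and row-injectivity on $\mathcal B\supseteq\mathcal A$.

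You differ from the paper in the centering identity. The paper evaluates the mean identity of \cref{lem:mean-identities} at $X=0$ and uses $F(0)=U$, so it passes through the whole polynomial-homomorphism construction. You instead note that $\sum_yT_{xy}=n\,\pi_{\mathcal A}(\mathbf e_x\mathbf 1^{\tt T})$, and that $\mathbf e_x\mathbf 1^{\tt T}\perp\mathcal A$ because every element of $\mathcal A$ has zero row sums. Your route is more elementary. Zero row sums of each generator $P\log_\circ(K)P$ already follow from $P\mathbf 1=0$, so this half of the lemma holds even without the homogeneity hypothesis. The paper's choice is just economical, since it reuses a lemma it already has.

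Your alternative argument is also valid and not circular, because \cref{lem:mean-identities,lem:row-normalization} are proved before, and independently of, this lemma. It shows the orthogonal-complement step is redundant: every $X\in\mathcal A$ is already the affine combination $\sum_yF(X)_{xy}T_{xy}$, and at $X=0$ this also returns the centering identity. Since the proof of \cref{lem:mean-identities} itself uses the row-isometry of \cref{lem:row-isometry}, this route repackages the same ingredients rather than avoiding them.
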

\begin{proof}
Fix $x\in[n]$.
Take $X=0$ in \cref{lem:mean-identities}, we have $\sum_{y=1}^nF(0)_{xy}T_{xy}=0.$ 
Since $F(0)=U$ by \cref{lem: F0 and dF0},
$\sum_{y=1}^n \frac{1}{n}T_{xy}=0$. 
This implies $0\in \operatorname{Aff}_\R\{T_{x1},\ldots,T_{xn}\}.$
Therefore, $\operatorname{Aff}_\R\{T_{x1},\ldots,T_{xn}\}=\operatorname{Span}_\R\{T_{x1},\ldots,T_{xn}\}.$
Suppose that $\mathrm{Span}_{\mathbb{R}}\{T_{x1},\ldots,T_{xn}\}$ is a proper subspace of $\mathcal{A}$.
Then take a nonzero $C$ in its orthogonal complement in $\mathcal{A}$, we have $\langle C,T_{xy}\rangle=0$ for every $y\in[n]$. 
\cref{lem:statistic-pairing} would make the $x$-th row of $C$ zero.
By \cref{cor:row-determines-matrix}, $C=0$, a contradiction.
Therefore, $\mathrm{Aff}_{\mathbb{R}}\{T_{x1},\ldots,T_{xn}\}=\mathrm{Span}_{\mathbb{R}}\{T_{x1},\ldots,T_{xn}\}=\mathcal{A}$.
\end{proof}

\begin{lemma}\label{lem:row-statistic-multisets}
The multisets $\{T_{xy}\}_{y=1}^n$ agree for all $x\in[n]$, including multiplicities.
\end{lemma}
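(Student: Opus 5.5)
The plan is to compare the two multisets through their moments. Fix $x,x'\in[n]$ and regard $\mu_x=\sum_{y=1}^n\delta_{T_{xy}}$ as a finite counting measure on the finite-dimensional space $\mathcal A$. I will show that $\int p\,d\mu_x=\int p\,d\mu_{x'}$ for every polynomial function $p$ on $\mathcal A$. Since a finitely supported measure is determined by its integrals against polynomials, this gives $\mu_x=\mu_{x'}$, which is exactly equality of the multisets including multiplicities.

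\textbf{Step 1 (linear forms).} Every linear form on $\mathcal A$ has the shape $Z\mapsto\langle C,Z\rangle$ with $C\in\mathcal A$. By \cref{lem:statistic-pairing}, $\langle C,T_{xy}\rangle=nC_{xy}$. Hence polynomials on $\mathcal A$ are spanned by products $\prod_{i=1}^k\langle C_i,\cdot\rangle$ with $C_i\in\mathcal A$, together with constants. So it suffices to show that
\[
\sum_{y=1}^n\prod_{i=1}^k\langle C_i,T_{xy}\rangle=n^k\sum_{y=1}^n\bigl(C_1\circ\cdots\circ C_k\bigr)_{xy}
\]
is independent of $x$. The case $k=0$ is trivial, since both sides then equal $n$.

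\textbf{Step 2 (homogeneity).} By \cref{lem:log-space-in-span}, each $C_i\in\mathcal A\subseteq\mathcal B$. By \cref{lem:matrix-span-closure}, $\mathcal B$ is closed under entrywise product, so $C_1\circ\cdots\circ C_k\in\mathcal B$. By \cref{lem:homogeneous-space}, this product then has constant row sum. Therefore the displayed quantity does not depend on $x$.

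\textbf{Step 3 (moments determine the multiset).} Let $S$ be the finite union of the supports of $\mu_x$ and $\mu_{x'}$. For each $Z_0\in S$, I will construct a polynomial $p$ on $\mathcal A$ with $p(Z_0)=1$ and $p=0$ on $S\setminus\{Z_0\}$. For each $Z\ne Z_0$ in $S$, pick $C\in\mathcal A$ with $\langle C,Z-Z_0\rangle\neq0$. Form the affine factor
\[
\frac{\langle C,\cdot\rangle-\langle C,Z\rangle}{\langle C,Z_0\rangle-\langle C,Z\rangle},
\]
which equals $1$ at $Z_0$ and $0$ at $Z$, and let $p$ be the product of these factors over all such $Z$. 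Integrating this $p$ shows that the multiplicity of $Z_0$ in $\{T_{xy}\}_y$ equals its multiplicity in $\{T_{x'y}\}_y$.

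No step is a real obstacle. The only point needing care is Step 2, namely that the entrywise product stays inside $\mathcal B$, where the homogeneous setting applies. This is exactly what \cref{lem:matrix-span-closure} and \cref{lem:log-space-in-span} provide. Note that we use $\mathcal B$ here rather than $\mathcal A$, because $\mathcal A$ itself need not be closed under $\circ$.
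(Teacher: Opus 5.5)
Your argument is correct, but it takes a different route from the paper.

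The paper fixes $x,x'$ and lists the distinct matrices $S_1,\dots,S_m$ occurring in the two multisets. It then chooses a single generic $C\in\mathcal A$ outside the finitely many proper subspaces $\{C:\langle C,S_i-S_j\rangle=0\}$, so that the numbers $\lambda_j=\frac1n\langle C,S_j\rangle$ are pairwise distinct. Realizability of $\exp_\circ(tC)$ (\cref{lem:algebra-exponentials}) and the homogeneous setting (\cref{lem:homogeneous}) give $\sum_j(n_j-n_j')e^{\lambda_j t}=0$ for all $t$. Linear independence of exponentials with distinct rates then finishes the proof.

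You instead test against all polynomials on $\mathcal A$. You get constancy of $\sum_y\prod_i\langle C_i,T_{xy}\rangle$ from two facts: $\mathcal B\supseteq\mathcal A$ is closed under $\circ$, and every element of $\mathcal B$ has constant row sum. You then separate the finitely many points with products of affine factors.

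The two routes are closely related. Expanding the paper's identity in powers of $t$ gives exactly your moments with $C_1=\cdots=C_k=C$, namely the row sums of $C^{\circ k}$. Your version is purely algebraic and needs only the span-level homogeneity of $\mathcal B$. It avoids realizability of a one-parameter family, analytic independence of exponentials, and the generic-direction step. The paper's version is more economical in that one direction suffices. You could likewise avoid mixed products: take a single generic $C$ and solve the Vandermonde system $\sum_j(n_j-n_j')\lambda_j^k=0$ for $0\le k<m$.
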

\begin{proof}
Fix $x,x'\in [n]$ and $x\neq x'.$
Let $\{S_1,\ldots, S_m\}$ be all distinct elements among the multisets $\{T_{xy}\}_{y\in[n]}\cup \{T_{x'y}\}_{y\in[n]}.$
For every $j\in [m]$, define $n_j:=|\{y\in [n]\mid T_{xy}=S_j\}|$ and $n'_j:=|\{y\in [n]\mid T_{x'y}=S_j\}|.$
We prove $n_j=n'_j$ for every $j\in[m].$

Note that for every distinct $i,j\in [m]$, $\mathcal{H}_{ij}:=\{C\in\mathcal{A}\mid \braket{C,S_i-S_j}=0\}$ is a proper subspace of $\mathcal{A}.$
It is well-known that no vector space over an infinite field is a finite union of proper subspaces.
Therefore, there exists $C\in \mathcal{A}\setminus \bigcup_{i,j\in [m],i\neq j}\mathcal{H}_{ij}.$
Define $\lambda_{i}:=\frac{1}{n}\braket{C,S_i},$
then $\lambda_1,\lambda_2,\ldots, \lambda_m$ are multually distinct.

For every $t\in \R$, $tC\in \mathcal{A}.$
The matrix $\exp_\circ(tC)$ is realizable by \cref{lem:algebra-exponentials}.
Hence it has constant row sum by \cref{lem:homogeneous}.
Combined with \cref{lem:statistic-pairing}, we have
\[
\sum_{y=1}^n \exp\left(\frac{t}{n}\braket{C,T_{xy}}\right)=\sum_{y=1}^n \exp(tC_{xy})=\sum_{y=1}^n \exp(tC_{x'y})=\sum_{y=1}^n \exp\left(\frac{t}{n}\braket{C,T_{x'y}}\right)\quad(t\in \R).
\]
Grouping the sum by distinct elements $S_1,\ldots, S_m$, we have
\[
\sum_{j=1}^m (n_j-n_j')\exp\left(\lambda_j t\right)=0\qquad(t\in \R).
\]
Since $\lambda_1,\lambda_2,\ldots, \lambda_m$ are distinct, the only possibility is $n_j=n_j',\,\forall j\in [m].$
Therefore, $\{T_{xy}\}_{y\in [n]}$ and $\{T_{x'y}\}_{y\in [n]}$ agree as multisets.
\end{proof}

\begin{definition}\label{def:statistic-polytope}
Define the convex hull of $\{T_{x1},\ldots, T_{xn}\}$
\[
\mathcal C:=\operatorname{Conv}_\R\{T_{x1},\ldots, T_{xn}\}=\left\{\sum_{y=1}^n a_yT_{xy}\mid a_y\ge 0\text{ and }\sum_{y=1}^n a_y=1\right\}.
\]
using any fixed $x\in[n]$. 
\cref{lem:row-statistic-multisets} makes the choice of $x$ immaterial. 
Define $\operatorname{int}\mathcal C$ be the interior of $\mathcal{C}$ in the real vector space $\mathcal A$. %Interior is taken in the real vector space $\mathcal A$; Lemma~\ref{lem:statistics-span} makes $\mathcal C$ full dimensional there.
\end{definition}

\begin{lemma}\label{lem: int C nonempty}         $\operatorname{int}\mathcal{C}$ is a nonempty open subset of $\mathcal{A}.$
In particular, $0\in \operatorname{int}\mathcal{C}.$
\end{lemma}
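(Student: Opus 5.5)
The plan is to show directly that $0$ lies in $\operatorname{int}\mathcal{C}$; since the interior of any set is open, this gives both claims at once. Fix $x\in[n]$. By \cref{lem:statistics-span}, we have
\[
0=\sum_{y=1}^n\frac1n T_{xy}
\quad\text{and}\quad
\operatorname{Aff}_\R\{T_{x1},\ldots,T_{xn}\}=\mathcal{A}.
\]
So $0$ is a convex combination of $T_{x1},\ldots,T_{xn}$ in which every coefficient equals $1/n$.

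If $n\ge2$, each coefficient lies in $(0,1)$. \cref{thm: convex hull nonempty} then says that $0$ lies in the relative interior of $\mathcal{C}=\operatorname{Conv}_\R\{T_{x1},\ldots,T_{xn}\}$, taken inside $\operatorname{Aff}_\R\{T_{x1},\ldots,T_{xn}\}$. This affine span is all of $\mathcal{A}$, so relative interior and interior in $\mathcal{A}$ coincide. Hence $0\in\operatorname{int}\mathcal{C}$, and in particular $\operatorname{int}\mathcal{C}$ is nonempty.

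The degenerate case $n=1$ is handled separately. There $P=0$, so $\mathcal{A}=\{0\}$ and $\mathcal{C}=\{0\}$. The interior of $\{0\}$ in the zero-dimensional space $\{0\}$ is $\{0\}$ itself, so the statement holds trivially.

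The only point needing care is identifying the relative interior with the interior in $\mathcal{A}$. This rests on the affine hull being exactly $\mathcal{A}$, which was established in \cref{lem:statistics-span} using the row-injectivity result \cref{cor:row-determines-matrix}.
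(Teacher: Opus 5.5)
Your proof is correct and follows the paper's argument: both use $0=\sum_{y}\frac1n T_{xy}$ and $\operatorname{Aff}_\R\{T_{x1},\ldots,T_{xn}\}=\mathcal{A}$ from \cref{lem:statistics-span}, then apply \cref{thm: convex hull nonempty} to identify the relative interior with $\operatorname{int}\mathcal{C}$. The paper additionally invokes $T_{xx}=E\neq 0$ to guarantee $\dim\mathcal{C}\geq 1$, whereas you split off the case $n=1$ explicitly. Your treatment is slightly more careful here, since the paper's justification of $E\neq0$ (via $F(0)=U\neq I=F(E)$) tacitly requires $n\geq 2$.
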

\begin{proof}
It is observed in the proof of \cref{lem:statistics-span} that,
for any $x \in [n]$,
$0 = \sum_{y=1}^n \frac{1}{n}T_{xy}$, and 
 $\mathrm{Aff}_{\mathbb{R}}\{T_{x1},\ldots,T_{xn}\}=\mathrm{Span}_{\mathbb{R}}\{T_{x1},\ldots,T_{xn}\}=\mathcal{A}$.
%This implies $0\in \operatorname{Aff}_\R\{T_{x1},\ldots,T_{xn}\}.$
We know that $T_{xx} = E \ne 0$ as $F(0) = U$ and  $F(E) = I$, by \cref{lem:diagonal-statistic} and 
\cref{lem:polynomial-identity}. 
%Also, By \cref{lem:statistics-span}, $0=\sum_{y=1}^n \frac{1}{n}T_{xy}$.
%So the affine span $\mathrm{Aff}_{\mathbb{R}}\{T_{x1},\ldots,T_{xn}\}$
%is the same as the linear span 
 Then $\mathcal{C}$ has dimension at least 1, and therefore
  $\operatorname{int}\mathcal{C}$ is open in its affine span.
%Also by \cref{lem:mean-identities}, $\{T_{x1},\ldots, T_{xn}\}$ linearly spans $\mathcal{A}.$
%   We only need to show $0\in \operatorname{int}\mathcal{C}$.
    By the expression $0=\sum_{y=1}^n \frac{1}{n}T_{xy}$, and
     $\{T_{x1},\ldots, T_{xn}\}$ is a  spanning set, 
     $0$ belongs to the relative interior of $\mathcal{C}$ in $\mathrm{Aff}_{\mathbb{R}}\{T_{x1},\ldots,T_{xn}\}$ by \cref{thm: convex hull nonempty}, i.e., $0\in \operatorname{int}\mathcal{C}$.
 %   But \cref{lem:statistics-span} also gives $\mathrm{Aff}_{\mathbb{R}}\{T_{x1},\ldots,T_{xn}\}=\mathcal{A}$.
 %   Thus, the relative interior of $\mathcal{C}$ in $\mathrm{Aff}_{\mathbb{R}}\{T_{x1},\ldots,T_{xn}\}$ is $\operatorname{int}\mathcal{C}$.
 %   So $0\in \operatorname{int}\mathcal{C}$.
\end{proof}

%%% JYC edited till here

\begin{lemma}\label{lem: Phi has inverse in C}
For every $X\in \operatorname{int}\mathcal C$, there exists $C\in \mathcal{A}$ such that $\Phi(C)=X.$
\end{lemma}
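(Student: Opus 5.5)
The plan is to recognize $\Phi$ as the gradient of a log-sum-exp (log-partition) function on $\mathcal A$, and then solve $\Phi(C)=X$ by a convex minimization whose first-order condition is exactly this equation. Fix any $x\in[n]$.

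\textbf{Step 1: an explicit formula for $\Phi$.} For $C,D\in\mathcal A$ we have $D=EDE$ by \cref{lem:support-identity}, so $\langle D,\Phi(C)\rangle=\langle D,EK(C)E\rangle=\langle D,K(C)\rangle$. Since $K(C)\in\mathcal R\subseteq\mathcal B$ and $D\in\mathcal B$ (\cref{lem:log-space-in-span}), we can argue as in the proof of \cref{lem:mean-identities}. Combining \cref{lem:row-isometry} with \cref{lem:statistic-pairing} gives
\[
\langle D,\Phi(C)\rangle=n\sum_{y=1}^n K(C)_{xy}D_{xy}=\Big\langle D,\sum_{y=1}^n K(C)_{xy}T_{xy}\Big\rangle .
\]
Both $\Phi(C)$ and the sum on the right lie in $\mathcal A$, so $\Phi(C)=\sum_y K(C)_{xy}T_{xy}$. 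Moreover $C_{xy}=\frac1n\langle C,T_{xy}\rangle$ by \cref{lem:statistic-pairing}, and $r(C)=\sum_y e^{C_{xy}}$. Hence $K(C)_{xy}$ is the softmax weight of $\frac1n\langle C,T_{xy}\rangle$ over $y$.

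\textbf{Step 2: $\Phi$ is a gradient.} Define $\psi(C)=n\log\sum_{y=1}^n\exp\bigl(\tfrac1n\langle C,T_{xy}\rangle\bigr)$ on $\mathcal A$. A direct computation gives
\[
d\psi(C)(D)=\sum_y K(C)_{xy}\langle D,T_{xy}\rangle=\langle D,\Phi(C)\rangle .
\]
So $\nabla\psi=\Phi$. Also $\psi$ is smooth and convex, since it is a log-sum-exp composed with a linear map.

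\textbf{Step 3: minimize.} Given $X\in\operatorname{int}\mathcal C$, set $f(C)=\psi(C)-\langle C,X\rangle$, a continuous convex function on the finite-dimensional space $\mathcal A$. If $f$ attains a minimum at some $C$, then $\nabla f(C)=0$, i.e. $\Phi(C)=X$, which proves the lemma. So it suffices to show $f$ is coercive.

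\textbf{Step 4: coercivity (the main obstacle).} For $D\ne0$ and $t>0$, keeping only the largest exponential in $\psi$ gives
\[
f(tD)\ge t\Big(\max_y\langle D,T_{xy}\rangle-\langle D,X\rangle\Big).
\]
Let $g(D)=\max_y\langle D,T_{xy}\rangle-\langle D,X\rangle$. I claim $g(D)>0$ for every $D\ne 0$. Since $X$ is interior, $X+\varepsilon D\in\mathcal C$ for small $\varepsilon>0$. Writing $X+\varepsilon D$ as a convex combination of the $T_{xy}$ shows $\langle D,X+\varepsilon D\rangle\le\max_y\langle D,T_{xy}\rangle$, so $g(D)\ge\varepsilon\|D\|^2>0$.

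The function $g$ is continuous, so on the compact unit sphere of $\mathcal A$ it has a minimum $\delta>0$. By positive homogeneity of $g$, $f(C)\ge\delta\|C\|+f(0)-\psi(0)$ for all $C$; more directly, $f(C)\ge\delta\|C\|$ up to an additive constant. Hence $f\to\infty$ as $\|C\|\to\infty$, so $f$ attains its minimum on a sufficiently large closed ball. By Step 3 this produces the required $C$.
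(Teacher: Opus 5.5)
Your proof is correct and follows the paper's argument. You use the same log-sum-exp potential whose gradient is $\Phi$, the same coercivity bound coming from $X$ being interior to $\mathcal C$, and the same first-order condition at a minimizer. The differences are minor. You get $\Phi(C)=\sum_y K(C)_{xy}T_{xy}$ directly from $E$ being the unit of $\mathcal A$ together with \cref{lem:row-isometry,lem:statistic-pairing}, whereas the paper goes through $K(C)=F(\Phi(C))$ (\cref{lem:polynomial-agreement}) and \cref{lem:mean-identities}. You also use compactness of the unit sphere, where the paper uses the fixed radius $\varepsilon$ of a ball $B(X,\varepsilon)\subseteq\mathcal C$ to get the linear lower bound directly.
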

\begin{proof}
Fix $X\in \operatorname{int}\mathcal C$.
The proof constructs a function $g(C)$ with gradient $\nabla g(C) = \Phi(C) - X$ and shows that $g$ attains a minimum at some $C^* \in \operatorname{int}\mathcal{C}$, thereby proving $\Phi(C^*) = X$.

Fix $x\in[n]$. 
For every $C\in \mathcal{A}$, we have
\[
K(C)_{xy}=\frac{\exp(C_{xy})}{\sum_{z=1}^n\exp(C_{xz})}=\frac{\exp(\frac{1}{n}\braket{C,T_{xy}})}{\sum_{z=1}^n\exp(\frac{1}{n}\braket{C,T_{xz}})} \qquad(y\in[n])
\]
by \cref{def:stochastic-family,lem:statistic-pairing}.
Set $L(C):=\log\left(\sum_{z=1}^n\exp(\frac{1}{n}\braket{C,T_{xz}})\right)$ within this proof.
For an arbitrary direction $D\in \mathcal{A}$, the directional derivative of $L$ at $C$ is
\[
\begin{aligned}
    \left.\frac{d}{dt}L(C+tD)\right|_{t=0}
    =&\left.\frac{d}{dt}\log\left(\sum_{z=1}^n\exp\left(\frac{1}{n}\braket{C+tD,T_{xz}}\right)\right)\right|_{t=0}\\
    =&\frac{\sum_{z=1}^n\exp\left(\frac{1}{n}\braket{C,T_{xz}}\right)\frac{1}{n}\braket{D,T_{xz}}}{\sum_{z=1}^n\exp\left(\frac{1}{n}\braket{C,T_{xz}}\right)}\\
    =&\left\langle\frac{1}{n}\frac{\sum_{z=1}^n\exp\left(\frac{1}{n}\braket{C,T_{xz}}\right){T_{xz}}}{\sum_{z=1}^n\exp\left(\frac{1}{n}\braket{C,T_{xz}}\right)},D\right\rangle\\
    =& \left\langle \frac{1}{n}\sum_{z=1}^n K(C)_{xz}T_{xz},D \right\rangle\\
    =& \left\langle \frac{1}{n}\sum_{z=1}^n F(\Phi(C))_{xz}T_{xz},D \right\rangle\\
    =&\left\langle \frac{1}{n}\Phi(C),D \right\rangle
\end{aligned}
\]
The second-to-last equality holds by \cref{lem:polynomial-agreement}, and the final equality follows from \cref{lem:mean-identities}.
Therefore, $\nabla L(C)=\frac{1}{n}\Phi(C)$.

We define $g(C):=nL(C)-\braket{C,X}$ for $C\in \mathcal{A}.$
Note that 
\[
L(C)=\log\left(\sum_{z=1}^n\exp\left(\frac{1}{n}\braket{C,T_{xz}}\right)\right)
\ge \log\left(\max_{y\in [n]}\exp\left(\frac{1}{n}\braket{C,T_{xy}}\right)\right)
= \max_{y\in [n]}\frac{1}{n}\braket{C,T_{xy}}.
\]
Thus, 
\[
g(C)\ge \max_{y\in[n]}\braket{C,T_{xy}}-\braket{C,X}=\max_{y\in[n]}\braket{C,T_{xy}-X}.
\]
Since $X\in \operatorname{int}\mathcal C$, there exists $\epsilon>0$ such that
$B(X,\epsilon)\subseteq \mathcal{C}$, where
$B(X,\epsilon)$ is the ball of radius $\epsilon$ centered at $X$ inside the linear space $\mathcal{A}$. 
In particular, $Y=X+\epsilon\frac{C}{\|C\|}\in \mathcal{C}.$
%Suppose 
We can write $Y=\sum_{y=1}^na_yT_{xy}$, for some $a_y\ge 0 \,(y\in [n])$ and $\sum_{y=1}^n a_y=1.$
Then 
\[
\braket{C,Y-X}=\sum_{y=1}^n a_y\braket{C,T_{xy}-X}\le \max_{y\in [n]}\braket{C,T_{xy}-X}.
\]
Consequently,
\[g(C)\ge \braket{C,Y-X}=\left\langle{C,\epsilon \frac{C}{\|C\|}}\right\rangle=\epsilon \|C\| \qquad (C\in \mathcal{A}).\]
Thus, $g$ is a continuous coercive function in $\mathcal{A}$. 
So it attains a minimum at some $C^*\in \mathcal{A}$.
At the minimum point $C^*$ its gradient is 0, so
we have $\nabla g(C^*)=n\nabla L(C^*)-X=\Phi(C^*)-X=0$.
Therefore, for every $X\in \operatorname{int}\mathcal{C}$, there exists $C^*\in \mathcal{A}$ such that $\Phi(C^*)=X.$
\end{proof}

\begin{lemma}\label{lem:polytope-positivity}
$F(X)$ is entrywise positive and realizable for every $X\in\operatorname{int}\mathcal C$.
In particular, $F(tE)$ is entrywise positive and realizable for every $t\in (0,1)$.
\end{lemma}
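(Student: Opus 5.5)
The plan is to combine \cref{lem: Phi has inverse in C} with \cref{lem:polynomial-agreement}. Fix $X\in\operatorname{int}\mathcal C$. By \cref{lem: Phi has inverse in C} there is $C\in\mathcal A$ with $\Phi(C)=X$. Then \cref{lem:polynomial-agreement} gives $F(X)=F(\Phi(C))=K(C)$. By \cref{lem:normalized-signatures}, $K(C)=\exp_\circ C/r(C)$ is entrywise positive and realizable. This proves the first claim. Note that $F$ here is the global polynomial of \cref{def: polynomial homomorphism}, which is exactly the map appearing in \cref{lem:polynomial-agreement}, so no extension issue arises even if $C$ lies far from $0$.

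For the second claim, I only need to show $tE\in\operatorname{int}\mathcal C$ for every $t\in(0,1)$. By \cref{lem:diagonal-statistic}, $E=T_{xx}$ is one of the vertices generating $\mathcal C$, so $E\in\mathcal C$. By \cref{lem: int C nonempty}, $0\in\operatorname{int}\mathcal C$, so there is $\varepsilon>0$ with $B(0,\varepsilon)\subseteq\mathcal C$ inside $\mathcal A$. For $t\in(0,1)$, convexity gives
\[
tE+(1-t)B(0,\varepsilon)=B(tE,(1-t)\varepsilon)\subseteq\mathcal C.
\]
Hence $tE$ is an interior point of $\mathcal C$ in $\mathcal A$. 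This is the standard fact that the open segment between an interior point and a point of a convex set lies in the interior. Applying the first claim to $X=tE$ then finishes the proof.

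The substantive work has already been done in \cref{lem: Phi has inverse in C}. The only point needing care is that \cref{lem:polynomial-agreement} holds for all $C\in\mathcal A$ via the analytic identity theorem, not merely near $0$; it does, as stated.
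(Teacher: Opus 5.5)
Your proposal is correct and follows essentially the same argument as the paper: you obtain $C$ with $\Phi(C)=X$, pass to $F(X)=K(C)$ via \cref{lem:polynomial-agreement}, and get positivity and realizability from \cref{lem:normalized-signatures}. You then place $tE$ in the interior using $E=T_{xx}\in\mathcal C$, $0\in\operatorname{int}\mathcal C$, and the convexity inclusion $B(tE,(1-t)\varepsilon)\subseteq\mathcal C$.
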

\begin{proof}
For $X\in \operatorname{int}\mathcal{C}$, \cref{lem: Phi has inverse in C} supplies $C\in\mathcal A$ with $\Phi(C)=X$. 
Then $F(X)=K(C)$ by \cref{lem:polynomial-agreement}.
So $F(X)$ is entrywise positive and realizable by \cref{lem:normalized-signatures}.
By \cref{lem: int C nonempty}, $0\in \operatorname{int}\mathcal{C}.$
This implies
there exists $\epsilon>0$ such that $B(0,\epsilon)\subseteq \mathcal{C}.$
Fix an arbitrary $t\in (0,1).$
As $E = T_{xx} \in \mathcal{C}$ for any $x \in [n]$,
and $0 \in \mathcal{C}$, then $tE = (1-t)0 + t E \in \mathcal{C}$.
In fact, $B(tE, (1-t)\epsilon)\subseteq \mathcal{C}$.
To see this, let $X\in \mathcal{A}$ with $\|X\|\le (1-t)\epsilon$ and let $Y=\frac{X}{1-t}\in B(0,\epsilon)\subseteq \mathcal{C}.$
Then $tE+X=tE+(1-t)Y\in \mathcal{C}$, as a convex combination of two elements in $\mathcal{C}.$
Therefore, $tE\in \operatorname{int}\mathcal{C}$ for every $t\in (0,1)$.
Hence $F(tE)$ is entrywise positive and realizable.
\end{proof}

\subsection{Affine Factorization of the Polynomial Homomorphism}\label{subsec: affine factorization}

In this subsection, we  derive an explicit form of the polynomial homomorphism $F(\cdot).$ We  show that every irreducible polynomial factor of every entry of $F(\cdot)$ is an affine linear form (\cref{lem:affine-factors}).

Fix a basis of $\mathcal A$.
Write $F(X)=F(x_1,\ldots,x_m)$ for $X\in \mathcal{A}$, where $(x_1,\ldots,x_m)$ is the coordinate of $X$ under this basis.
Then $F(X)_{xy}\in \R[x_1,\ldots,x_m]$, for every $x,y\in [n]$.
We define the irreducible factors of $F(X)$ and their multiplicities:

\begin{definition}[Irreducible factors and multiplicities]\label{def:entry-factors}
For all $x,y\in [n]$, in 
the unique factorization domain $\mathbb{R}[x_1,\ldots,x_m]$ 
factor $F(X)_{xy}=F(x_1,\ldots,x_m)_{xy}\in \mathbb{R}[x_1,\ldots,x_m]$
 as  multivariate polynomials.
%$ \subset \mathbb{C}[x_1,\ldots,x_m]$ as complex multivariate polynomials.
Let $\mathcal{F}$ denote the set of indices corresponding to all 
nonconstant irreducible factors of $F(X)_{xy}$ across all $x, y \in [n]$.
Let $\{f_a\mid a\in \mathcal{F}\}$ denote the set of all irreducible factors.
For every $a\in \mathcal{F}$ and all $x,y\in[n]$, let $h_a(x,y)\in \Z_{\ge 0}$ be the multiplicity of $f_a$ in $F(X)_{xy}.$
\end{definition}

%Note that we factor $F(X)_{xy}$ over $\mathbb{C}[x_1,\ldots,x_m]$ rather than $\mathbb{R}[x_1,\ldots, x_m]$, because $\mathbb{C}[x_1,\ldots, x_m]$ is a unique factorization domain, which ensures the irreducible factors are well-defined.
%However, later we will show that every irreducible factor is a real affine form.
Note that $\mathbb{R}[x_1,\ldots, x_m]$ 
%%%% or $\mathbb{R}[x_1,\ldots, x_m]$ 
is a unique factorization domain, so every irreducible factor $f_a$ ($a \in \mathcal{F}$) is prime. Each irreducible factor is defined up to a nonzero 
%%% real?
real scalar, so we choose an arbitrary representative for each. Furthermore, $F(X)$ is non-zero since $F(0) = U$, and $dF(0)=\mathrm{id}$
(\cref{lem: F0 and dF0}) and so $F$ is not a constant function.
    This implies that $\mathcal{F} \neq \emptyset$.

\begin{lemma}\label{lem: affine multiplicities pre}
    For all real $(b_{xy})_{x,y\in[n]}$ satisfying $\sum_{x,y\in [n]}b_{xy}=0$ and $\sum_{x,y\in [n]}b_{xy}T_{xy}=0$, we have $\sum_{x,y\in [n]}b_{xy}h_a(x,y)=0$ for every $a\in \mathcal{F}.$
\end{lemma}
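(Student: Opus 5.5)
The plan is to pass to logarithms on the open set $\operatorname{int}\mathcal C$, where $F$ is realizable and positive. There, the two linear hypotheses on $(b_{xy})$ should make $\sum_{x,y}b_{xy}\log F(X)_{xy}$ vanish identically. Unique factorization then turns this identity into the required statement about multiplicities, via a logarithmic-derivative argument.

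\emph{Step 1 (an analytic identity).} Fix $X\in\operatorname{int}\mathcal C$. By \cref{lem: Phi has inverse in C} and \cref{lem:polynomial-agreement}, there is $C\in\mathcal A$ with $F(X)=K(C)$. By \cref{def:stochastic-family} and \cref{lem:statistic-pairing}, every entry is positive and satisfies
\[
\log F(X)_{xy}=C_{xy}-\log r(C)=\tfrac1n\langle C,T_{xy}\rangle-\log r(C).
\]
Multiplying by $b_{xy}$ and summing gives
\[
\sum_{x,y}b_{xy}\log F(X)_{xy}=\tfrac1n\Bigl\langle C,\sum_{x,y}b_{xy}T_{xy}\Bigr\rangle-\log r(C)\sum_{x,y}b_{xy}=0,
\]
using both hypotheses. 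So $G(X):=\sum_{x,y}b_{xy}\log F(X)_{xy}$ vanishes on the nonempty open set $\operatorname{int}\mathcal C$ (\cref{lem: int C nonempty}).

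\emph{Step 2 (rewrite in terms of irreducible factors).} By \cref{def:entry-factors}, each entry factors as $F(X)_{xy}=c_{xy}\prod_{a\in\mathcal F}f_a^{h_a(x,y)}$ with constants $c_{xy}\neq0$. On $\operatorname{int}\mathcal C$ the entries are positive, so no $f_a$ vanishes there. Setting $\beta_a:=\sum_{x,y}b_{xy}h_a(x,y)$, Step 1 becomes
\[
\sum_{a\in\mathcal F}\beta_a\log|f_a(X)|=\text{const}\qquad\text{on }\operatorname{int}\mathcal C.
\]
Differentiating with respect to each coordinate $x_i$ gives
\[
\sum_a\beta_a\,\frac{\partial_i f_a}{f_a}=0
\]
on an open set. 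Clearing denominators yields the polynomial identity
\[
\sum_a\beta_a\,(\partial_i f_a)\prod_{b\neq a}f_b=0,
\]
which holds on an open set and hence identically in $\mathbb R[x_1,\ldots,x_m]$.

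\emph{Step 3 (extract each $\beta_a$).} Fix $a$, and reduce the identity modulo the prime $f_a$. Every term with index $\neq a$ contains the factor $f_a$, so $f_a\mid\beta_a(\partial_i f_a)\prod_{b\neq a}f_b$. The $f_b$ with $b\ne a$ are pairwise non-associate primes, so $f_a$ divides none of them. Since $f_a$ is nonconstant, some $\partial_i f_a$ is a nonzero polynomial of strictly smaller degree, so $f_a\nmid\partial_i f_a$. Primality then forces $\beta_a=0$, which is the claim.

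The only delicate point is Step 1: it needs the positive realizable representation $F(X)=K(C)$ on an open set, which is exactly what \cref{lem: Phi has inverse in C} supplies. After that, Steps 2 and 3 are routine algebra in a UFD.
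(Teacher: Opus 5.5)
Your proposal is correct and follows essentially the same route as the paper. On $\operatorname{int}\mathcal C$ you write $F(X)=K(C)$, use the two linear hypotheses to make $\sum_{x,y}b_{xy}\log F(X)_{xy}$ vanish, pass through the prime factorization via logarithmic derivatives, clear denominators to obtain a polynomial identity, and deduce $f_a\mid\beta_a$ from primality and the degree drop of $\partial_i f_a$; writing the normalizer as the constant row sum $r(C)$ makes explicit that it is independent of $x$, which the paper leaves implicit.
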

\begin{proof}
    For every $X\in \operatorname{int}\mathcal{C}$, there exists $C\in \mathcal{A}$ such that $\Phi(C)=X$ by \cref{lem: Phi has inverse in C}.
    Then
    \[    F(X)_{xy}=K(C)_{xy}=\frac{\exp(C_{xy})}{\sum_{z=1}^n\exp (C_{xz})}=\frac{\exp(\frac{1}{n}\braket{C,T_{xy}})}{\sum_{z=1}^n\exp(\frac{1}{n}\braket{C,T_{xz}})}\qquad (x,y\in[n])
    \]
    by \cref{lem:polynomial-agreement,lem:statistic-pairing}.
    Hence
    \[
    \log F(X)_{xy}=\frac{1}{n}\braket{C,T_{xy}}-\log\left(\sum_{z=1}^n\exp\left(\frac{1}{n}\braket{C,T_{xz}}\right)\right),
    \]
    and then 
    \[
    \sum_{x,y=1}^nb_{xy}\log F(X)_{xy}=\frac{1}{n}\braket{C,\sum_{x,y=1}^nb_{xy}T_{xy}}-\left(\sum_{x,y=1}^nb_{xy}\right)\cdot\log\left(\sum_{z=1}^n\exp\left(\frac{1}{n}\braket{C,T_{xz}}\right)\right)=0
    \]
    for every $x,y\in [n]$, by the assumption that $\sum_{x,y\in[n]}b_{xy}=0$ and $\sum_{x,y\in [n]}b_{xy}T_{xy}=0.$
    Taking the differential with respect to $(x_1,\ldots, x_m)$ yields
    \begin{equation}\label{eq: differentiate F}
        \sum_{x,y=1}^nb_{xy} d\log F(X)_{xy}=0\qquad \text{for every }X\in \operatorname{int}\mathcal{C}\text{ and } x,y\in [n].
    \end{equation}
    
    On the other hand, for every $X\in \operatorname{int}\mathcal{C}$ and $x,y\in [n]$, we have the unique prime factorization
    \[
    F(X)_{xy}=c_{xy}\prod_{a\in \mathcal{F}}f_a(X)^{h_a(x,y)},\qquad c_{xy}\neq 0.
    \]
    Then 
    \begin{equation}\label{eq: dF over F}
        d\log F(X)_{xy}=\frac{dF(X)_{xy}}{F(X)_{xy}}=\sum_{a\in \mathcal{F}}h_a(x,y)\frac{df_a(X)}{f_a(X)} \qquad (X\in \operatorname{int}\mathcal{C}).
    \end{equation}
    Note that $F(X)$ is entrywise positive when $X\in \operatorname{int}\mathcal{C}$ by \cref{lem:polytope-positivity}.
    Combined with \eqref{eq: differentiate F},
    \[
    \sum_{x,y\in [n]}\sum_{a\in \mathcal{F}}b_{xy}h_a(x,y)\frac{df_a(X)}{f_a(X)}=0.
    %\sum_{x,y}b_{xy}d\log F(X)_{xy}=0.
    \]
    Therefore,
    \begin{equation}\label{eq: sum of dfaX}
        \sum_{a\in \mathcal{F}}
    \left(\sum_{x,y=1}^nb_{xy}h_a(x,y)\right)\frac{df_a(X)}{f_a(X)}=0\quad \text{for every }X\in \operatorname{int}\mathcal{C}\text{ and } x,y\in [n].
    \end{equation}
    Fix $a_0\in \mathcal{F}$.
    Since $f_{a_0}$ is nonconstant, there exists $j\in [m]$ such that $\partial_jf_{a_0}\neq 0.$
    %For these fixed $a_0$ and $j$, equation \eqref{eq: sum of dfaX} gives
    %%% JYC the next eqn doesn't isolate a_0
    Then equation \eqref{eq: sum of dfaX} implies
    \[
       \sum_{a\in \mathcal{F}}
    \left(\sum_{x,y=1}^nb_{xy}h_a(x,y)\right)\frac{\partial_jf_a(X)}{f_a(X)}=0.\]
    After multiplying the polynomial $\prod_{b\in \mathcal{F}}f_b(X)$, we have 
    \[
    \sum_{a\in \mathcal{F}}
    \left(\sum_{x,y=1}^nb_{xy}h_a(x,y)\right){\partial_jf_a(X)}\prod_{b\in \mathcal{F},b\neq a}f_b(X)=0.
    \]
    The above identity holds for every $X\in \operatorname{int}\mathcal{C}$.
    Since $\operatorname{int}\mathcal{C}$ is a nonempty open subset of $\mathcal{A}$ by \cref{lem: int C nonempty}, this extends to $X\in \mathcal{A}$ by the analytic identity theorem (\cref{thm:analytic-identity}).
    Therefore, we obtain the polynomial divisibility relation
    \begin{equation}\label{eq: polynomial divisibility}
        f_{a_0}(X) \Biggm|
        %\mid 
    \left(\sum_{x,y=1}^nb_{xy}h_{a_0}(x,y)\right){\partial_j f_{a_0}(X)}\prod_{b\in \mathcal{F},b\neq a_0} f_b(X).\\
    \end{equation}
    Since $f_a$ is prime, it divides one of the factors on the RHS of \eqref{eq: polynomial divisibility}.
    However, because $\partial_j f_{a_0}$ is a nonzero polynomial of total degree one less than $f_{a_0}$, and $f_b$ is coprime to $f_{a_0}$ for all $b \neq a_0$, the only possibility is that $f_{a_0}\mid \sum_{x,y=1}^nb_{xy}h_{a_0}(x,y).$
    Since $f_{a_0}$ is a nonconstant polynomial, this implies $\sum_{x,y=1}^nb_{xy}h_{a_0}(x,y)=0.$
    Since $a_0\in \mathcal{F}$ is arbitrary,
    the lemma is proved.
\end{proof}

\begin{definition}\label{def:affine-factors}
For every $X\in \mathcal{A}$, we may write $X=\sum_{x,y\in [n]}b_{xy}T_{xy}$ with $b_{xy}\in \R$ for all $x,y\in[n]$ and $\sum_{x,y\in [n]}b_{xy}=1$, since $\mathcal{A}=\operatorname{Aff}_\R\{T_{xy}\mid x,y\in [n]\}$ by \cref{lem:statistics-span,lem:row-statistic-multisets}.  
    For every $a\in \mathcal{F}$,
    we define $$\ell_a:\mathcal{A}\to \R,\qquad X\mapsto \sum_{x,y\in [n]}b_{xy}h_a(x,y).$$
\end{definition}
    By \cref{lem: affine multiplicities pre}, $\ell_a(X)$ is well-defined.
    We can check that if $X$ has two expressions 
    $X=\sum_{x,y\in [n]}b_{xy}T_{xy}
    = \sum_{x,y\in [n]}b'_{xy}T_{xy}$, where $\sum_{x,y\in [n]}b_{xy}=
    \sum_{x,y\in [n]}b'_{xy}=1$, then we get the same value 
    $\sum_{x,y\in [n]}b_{xy}h_a(x,y) = \sum_{x,y\in [n]}b'_{xy}h_a(x,y)$.

\begin{lemma}\label{lem:affine-multiplicities}
For every $a\in \mathcal{F}$, $\ell_a$ is an affine form such that $\ell_a(T_{xy})=h_a(x,y)$ for all $x,y\in[n]$.
\end{lemma}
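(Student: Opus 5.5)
The plan is to verify the three claims in turn: $\ell_a$ is well defined, it takes the value $h_a(x,y)$ at each $T_{xy}$, and it is affine. Nothing new is needed beyond \cref{def:affine-factors} and \cref{lem: affine multiplicities pre}.

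\emph{Well-definedness.} Suppose $X=\sum b_{xy}T_{xy}=\sum b'_{xy}T_{xy}$ with $\sum b_{xy}=\sum b'_{xy}=1$. Put $\beta_{xy}=b_{xy}-b'_{xy}$. Then $\sum\beta_{xy}=0$ and $\sum\beta_{xy}T_{xy}=0$. By \cref{lem: affine multiplicities pre}, $\sum\beta_{xy}h_a(x,y)=0$, so the two expressions give the same value of $\ell_a(X)$. Existence of at least one expression is already recorded in \cref{def:affine-factors}: by \cref{lem:statistics-span}, the affine span of $\{T_{xy}\}$ (even for a single fixed row $x$) is all of $\mathcal A$.

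\emph{Values at the generators.} For fixed $x_0,y_0$, take $b_{xy}=\mathbb I[(x,y)=(x_0,y_0)]$. These coefficients sum to $1$ and represent $T_{x_0y_0}$. Hence $\ell_a(T_{x_0y_0})=h_a(x_0,y_0)$.

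\emph{Affineness.} I will use condition~2 of \cref{def: affine form}. Take $X=\sum b_{xy}T_{xy}$ and $Y=\sum b'_{xy}T_{xy}$, each with coefficient sum $1$, and let $t\in\mathbb R$. Then $tX+(1-t)Y=\sum\big(tb_{xy}+(1-t)b'_{xy}\big)T_{xy}$, and the new coefficients again sum to $t+(1-t)=1$. Since $\ell_a$ does not depend on the representation, $\ell_a(tX+(1-t)Y)=\sum\big(tb_{xy}+(1-t)b'_{xy}\big)h_a(x,y)=t\ell_a(X)+(1-t)\ell_a(Y)$, which is the required identity.

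The only substantive step is well-definedness, and it is supplied entirely by \cref{lem: affine multiplicities pre}. That lemma is exactly the statement that every affine dependence among the $T_{xy}$ is also satisfied by the multiplicities $h_a(x,y)$. No further obstacle is expected.
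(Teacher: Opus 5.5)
Your proposal is correct and follows essentially the same route as the paper. It takes well-definedness from \cref{lem: affine multiplicities pre} (which the paper records right after \cref{def:affine-factors}), uses the indicator representation to get $\ell_a(T_{x_0y_0})=h_a(x_0,y_0)$, and obtains affineness via condition~2 of \cref{def: affine form}, with the convex-combination computation that the paper leaves as ``easy to check'' written out explicitly.
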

\begin{proof}
    Fix $x_0,y_0\in [n]$.
    Let $b_{x_0y_0}=1$ and $b_{xy}=0$ for all $x,y\in [n]$ with $(x,y)\neq (x_0,y_0)$.
    Then for every $a\in \mathcal{F},$  $\ell_a(T_{x_0y_0})=\ell_a(\sum_{x,y\in [n]}b_{xy}T_{xy})=\sum_{x,y\in[n]}b_{xy}h_a(x,y)=h_a(x_0,y_0).$
    Next fix arbitrary $a\in \mathcal{F}$, we show that $\ell_a$ is an affine form.
    By the definition of $\ell_a$, it is easy to check that $\ell_a(tX+(1-t)Y)=t\ell_a(X)+(1-t)\ell_a(Y)$ for all $X,Y\in \mathcal{A}$ and every $t\in \R$.
    Therefore, $\ell_a$ is an affine form by \cref{def: affine form}.
\end{proof}

\begin{lemma}\label{lem:irreducible-affine}
For every $a\in\mathcal F$, $f_a$ is a nonzero real scalar multiple of $\ell_a$.
\end{lemma}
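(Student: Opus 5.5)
The plan is to find a single polynomial identity in which the factor $f_a$ is paired with the affine form $\ell_a$ itself, rather than with the constants $\sum b_{xy}h_a(x,y)$ as in \cref{lem: affine multiplicities pre}. The same divisibility argument will then force $f_a\mid\ell_a$. The identity comes from the row normalization (\cref{lem:row-normalization}) combined with the mean identity (\cref{lem:mean-identities}).

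\emph{Step 1: a logarithmic-derivative identity.} Fix $x\in[n]$ and work on the open set $\operatorname{int}\mathcal C$. There every entry $F(X)_{xy}$ is positive by \cref{lem:polytope-positivity}. Since $\sum_y F(X)_{xy}=1$ for all $X$, differentiating gives $\sum_y dF(X)_{xy}=0$. Rewrite this as $\sum_y F(X)_{xy}\,d\log F(X)_{xy}=0$ and insert the prime factorization as in \eqref{eq: dF over F}. This yields
\[
\sum_{a\in\mathcal F}\Bigl(\sum_{y=1}^n F(X)_{xy}h_a(x,y)\Bigr)\frac{df_a(X)}{f_a(X)}=0 \qquad (X\in\operatorname{int}\mathcal C).
\]

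\emph{Step 2: identify the coefficients as $\ell_a(X)$.} By \cref{lem:affine-multiplicities}, $h_a(x,y)=\ell_a(T_{xy})$. The weights $F(X)_{xy}$ ($y\in[n]$) sum to $1$ and $\ell_a$ is affine, so
\[
\sum_y F(X)_{xy}h_a(x,y)=\ell_a\Bigl(\sum_y F(X)_{xy}T_{xy}\Bigr)=\ell_a(X)
\]
by \cref{lem:mean-identities}. Hence $\sum_a \ell_a(X)\,\partial_j f_a(X)/f_a(X)=0$ for each coordinate $j$. Clearing denominators by multiplying by $\prod_{b}f_b$ gives a polynomial identity. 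It holds on the nonempty open set $\operatorname{int}\mathcal C$ (\cref{lem: int C nonempty}), so by \cref{thm:analytic-identity} it holds on all of $\mathcal A$.

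\emph{Step 3: divisibility.} Fix $a_0$ and choose $j$ with $\partial_j f_{a_0}\ne0$. Exactly as in the proof of \cref{lem: affine multiplicities pre}, we get
\[
f_{a_0}\mid \ell_{a_0}\,\partial_jf_{a_0}\prod_{b\ne a_0}f_b .
\]
Since $f_{a_0}$ is prime, it cannot divide $\partial_jf_{a_0}$, which is nonzero of lower degree. It also cannot divide any $f_b$ with $b\ne a_0$, since these are non-associate irreducibles. Therefore $f_{a_0}\mid\ell_{a_0}$, where $\ell_{a_0}$ is viewed as a polynomial of degree at most one in the coordinates.

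\emph{Step 4: conclude.} It remains to exclude the case that $\ell_{a_0}$ is a constant. This is the one subtle point, and we handle it in two sub-cases. If $\ell_{a_0}$ were a nonzero constant, divisibility by the nonconstant $f_{a_0}$ would fail. If $\ell_{a_0}\equiv0$, then $h_{a_0}(x,y)=\ell_{a_0}(T_{xy})=0$ for all $x,y$, which contradicts $f_{a_0}$ being a factor of some entry. So $\ell_{a_0}$ is a nonconstant affine form, and a nonconstant $f_{a_0}$ dividing it must be a nonzero real scalar multiple of it.
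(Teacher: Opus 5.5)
Your proposal is correct and follows the paper's argument: row normalization plus the mean identity turn the logarithmic-derivative identity into $\sum_a \ell_a(X)\,df_a/f_a=0$, followed by the same divisibility step. Two differences are immaterial. The paper averages over all rows $x$ rather than fixing one. Your Step 4 makes explicit the nonconstancy of $\ell_{a_0}$, which the paper simply asserts when it says $\ell_{a_0}$ ``has total degree one.''
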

\begin{proof}
    By \cref{lem:row-normalization}, $\sum_{y\in [n]}F(X)_{xy}=1$ for every $X\in \mathcal{A}$ and every $x\in[n]$.
    Averaging over $x\in [n]$ gives $\frac{1}{n}\sum_{x,y\in [n]}F(X)_{xy}=1$.
    Take differential on both sides, we obtain $\frac{1}{n}\sum_{x,y\in[n]}dF(X)_{xy}=0$.
    On the other hand, for every $X\in\mathcal{A}$ and $x,y\in [n]$, we have the unique prime factorization
    \(
    F(X)_{xy}=c_{xy}\prod_{a\in \mathcal{F}}f_a(X)^{h_a(x,y)}, c_{xy}\neq 0.
    \)
    Then by \eqref{eq: dF over F}, 
    \[
    dF(X)_{xy}=F(X)_{xy}\sum_{a\in \mathcal{F}}h_a(x,y)\frac{df_a(X)}{f_a(X)}\qquad (X\in \operatorname{int}\mathcal{C}).
    \]
    Thus, 
    \begin{equation}\label{eq: irreducible-affine}
        0=\frac{1}{n}\sum_{x,y\in [n]}dF(X)_{xy}=\frac{1}{n}\sum_{x,y\in[n]}\sum_{a\in \mathcal{F}}h_a(x,y)F(X)_{xy}\frac{df_a(X)}{f_a(X)} \qquad (X\in \operatorname{int}\mathcal{C}).
    \end{equation}
    \begin{claim}\label{claim: affine forms and multiplicities}
        For every $X\in \mathcal{A}$ and $a\in \mathcal{F}$, $\ell_a(X)=\frac{1}{n}\sum_{x,y\in[n]}h_a(x,y)F(X)_{xy}.$
    \end{claim}
    \begin{claimproof}{\ref{claim: affine forms and multiplicities}}
        Fix $X\in\mathcal A$ and $a\in \mathcal{F}.$
        By \cref{lem:mean-identities}, $\sum_{y=1}^nF(X)_{xy}T_{xy}=X$ for every $x\in[n]$.
        Averaging over $x\in[n],$ we have $X=\frac{1}{n}\sum_{x,y\in[n]}F(X)_{xy}T_{xy}.$
        By definition of $\ell_a$, we have $\ell_a(X)=\frac{1}{n}\sum_{x,y\in [n]}F(X)_{xy}h_a(x,y).$
    \end{claimproof}
    Combining \eqref{eq: irreducible-affine} and Claim~\ref{claim: affine forms and multiplicities},
    \begin{equation}\label{eq: fa and la}
        \sum_{a\in \mathcal{F}}\ell_a(X)\frac{df_a(X)}{f_a(X)}=0 \qquad (X\in\operatorname{int}\mathcal{C}).
    \end{equation}
    Hence
    \[
    \sum_{a\in \mathcal{F}}\ell_a(X){df_a(X)}\prod_{b\in \mathcal{F},b\neq a}{f_b(X)}=0 \qquad (X\in\mathcal{A}).
    \]
    The above equation first holds on $\operatorname{int}\mathcal{C}$, then extends to $\mathcal{A}$ by the analytic identity theorem (\cref{thm:analytic-identity}).
    Now fix arbitrary $a_0\in \mathcal{F}.$
    The irreducible factor $f_{a_0}$ is not constant, so there exists $j\in[m]$ such that $\partial_j f_{a_0}\neq 0.$
    Fix one such $j,$ we have the following polynomial divisibility relation:
    \[
    f_{a_0}(X)\mid \ell_{a_0}(X){\partial_jf_{a_0}(X)}\prod_{b\in \mathcal{F},b\neq a_0}{f_b(X)}.
    \]
    Similar to the proof of \cref{lem: affine multiplicities pre}, the only possibility is that $f_{a_0}(X)\mid \ell_{a_0}(X).$
    Since $\ell_{a_0}(X)$ has total degree one and $f_{a_0}(X)$ is a nonconstant polynomial, we conclude that $f_{a_0}(X)$ is a nonzero real scalar multiple of $\ell_{a_0}(X)$.
    Since $a_0\in \mathcal{F}$ is arbitary, the lemma is proved.
\end{proof}

\begin{lemma}\label{lem:factor-positivity}
For every $a\in\mathcal F$, $\ell_a(0)\in \Q_{> 0}$.
\end{lemma}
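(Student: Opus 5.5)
We need $\ell_a(0)\in\mathbb{Q}_{>0}$. From \cref{lem:statistics-span}, $0=\sum_{y=1}^n \frac1n T_{xy}$ for any fixed $x\in[n]$. This is an affine combination, since the coefficients sum to $1$. By \cref{def:affine-factors} and \cref{lem:affine-multiplicities} we then get
\[
\ell_a(0)=\frac1n\sum_{y=1}^n h_a(x,y).
\]
Each $h_a(x,y)\in\mathbb{Z}_{\ge0}$, so $\ell_a(0)$ is a nonnegative rational with denominator dividing $n$. Equivalently, Claim~\ref{claim: affine forms and multiplicities} at $X=0$ gives $\ell_a(0)=\frac1n\sum_{x,y}h_a(x,y)F(0)_{xy}=\frac1{n^2}\sum_{x,y}h_a(x,y)$, with $F(0)=U$. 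Rationality and nonnegativity are therefore immediate.

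It remains to show strict positivity. Since $a\in\mathcal F$, the factor $f_a$ divides some entry $F(X)_{x_0y_0}$, so $h_a(x_0,y_0)\ge1$ for some pair $(x_0,y_0)$. The second formula above then gives
\[
\ell_a(0)\ge \frac{1}{n^2}h_a(x_0,y_0)>0,
\]
since all the other terms are nonnegative. If one prefers the row formula, \cref{lem:row-statistic-multisets} lets us pick the row $x=x_0$. Either way, $\ell_a(0)$ is a positive rational.

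The main obstacle is already handled: we need the affine combination $0=\frac1n\sum_y T_{xy}$ to be legitimate input to the well-defined $\ell_a$, which \cref{lem: affine multiplicities pre} guarantees. As a sanity check, one can also argue via positivity. By \cref{lem:polytope-positivity}, $F(0)=U$ is entrywise positive, so no $f_a$ vanishes at $0$. Since $f_a$ is a scalar multiple of $\ell_a$ (\cref{lem:irreducible-affine}), this gives $\ell_a(0)\ne0$. That is consistent with the computed positive value.
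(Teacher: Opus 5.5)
Your proof is correct and is essentially the paper's argument: write $0=\frac{1}{n^2}\sum_{x,y}T_{xy}$ to get $\ell_a(0)=\frac{1}{n^2}\sum_{x,y}h_a(x,y)\in\mathbb{Q}_{\ge 0}$, and conclude strict positivity because $f_a$ occurs with positive multiplicity in some entry $F(X)_{x_0y_0}$. One minor point: for the row version you do not need \cref{lem:row-statistic-multisets}, since $0=\frac1n\sum_{y}T_{xy}$ holds for every $x$ by \cref{lem:statistics-span}, so you can take $x=x_0$ directly.
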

\begin{proof}
    Fix $a\in \mathcal{F}.$
    By \cref{lem:statistics-span}, $0=\frac{1}{n^2}\sum_{x,y=1}^n T_{xy}$.
    Then $\ell_a(0)=\sum_{x,y=1}^n \frac{1}{n^2}\ell_a(T_{xy})=\frac{1}{n^2}\sum_{x,y\in[n]}h_a(x,y)$ by \cref{lem:affine-multiplicities}.
    So $\ell_a(0)\in \Q_{\ge 0}$, and $\ell_a(0)=0\iff h_a(x,y)=0$ for all $x,y\in [n]\iff f_a $ has multiplicity $0$ in $F(X)_{xy}$ for all $x,y\in [n].$
    This is impossible.
    Therefore, $\ell_a(0)>0.$
\end{proof}

\begin{lemma}[Affine factorization]\label{lem:affine-factors}
For every $X\in\mathcal A$ and all $x,y\in[n]$,
\begin{equation*}
F(X)_{xy}=\frac1n\prod_{a\in\mathcal F}
\left(\frac{\ell_a(X)}{\ell_a(0)}\right)^{h_a(x,y)}.
\end{equation*}
A factor with exponent zero is interpreted as the constant 1 (a  polynomial
of degree 0).
\end{lemma}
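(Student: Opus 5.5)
Starting from the unique prime factorization in \cref{def:entry-factors}, we have, for every $x,y\in[n]$ and $X\in\mathcal A$,
\[
F(X)_{xy}=c_{xy}\prod_{a\in\mathcal F}f_a(X)^{h_a(x,y)},\qquad c_{xy}\neq 0 .
\]
By \cref{lem:irreducible-affine}, each $f_a$ is a nonzero real multiple of $\ell_a$. Absorbing these scalars into the constant gives
\[
F(X)_{xy}=c'_{xy}\prod_{a}\ell_a(X)^{h_a(x,y)}
\]
for some $c'_{xy}\neq 0$. By \cref{lem:factor-positivity}, $\ell_a(0)>0$, so we may renormalize and write
\[
F(X)_{xy}=c''_{xy}\prod_a\bigl(\ell_a(X)/\ell_a(0)\bigr)^{h_a(x,y)} .
\]
This is an identity of polynomials on all of $\mathcal A$, with $c''_{xy}$ a real constant.

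It remains only to identify the constant $c''_{xy}$. Evaluate at $X=0$. Every ratio $\ell_a(0)/\ell_a(0)$ equals $1$, so the product is $1$ and $F(0)_{xy}=c''_{xy}$. By \cref{lem: F0 and dF0}, $F(0)=U=J/n$, hence $c''_{xy}=1/n$ for all $x,y$. This yields the displayed formula. Factors with $h_a(x,y)=0$ contribute $1$, consistent with the stated convention.

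The only delicate point is making sure the factorization is legitimate as a polynomial identity valid on all of $\mathcal A$, not just on $\operatorname{int}\mathcal C$. This holds because the factorization in \cref{def:entry-factors} is taken in $\mathbb R[x_1,\ldots,x_m]$ and the scalar relation between $f_a$ and $\ell_a$ is an identity of polynomials. A second point is that no entry $F(X)_{xy}$ is the zero polynomial, since $F(0)_{xy}=1/n\neq 0$; this guarantees the factorization with a nonzero leading constant exists. With these noted, the lemma is a direct bookkeeping consequence of \cref{lem:irreducible-affine,lem:factor-positivity,lem: F0 and dF0}.
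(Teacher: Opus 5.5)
Your proposal is correct and is essentially the paper's proof: combine the prime factorization with the fact that each $f_a$ is a nonzero multiple of $\ell_a$ to get $F(X)_{xy}=c_{xy}\prod_{a}\ell_a(X)^{h_a(x,y)}$, then evaluate at $X=0$ using $F(0)=U$ and $\ell_a(0)>0$ to identify the constant as $1/n$. The only cosmetic difference is that you normalize by $\ell_a(0)$ before evaluating, whereas the paper evaluates first and divides the two identities.
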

\begin{proof}
For all $x,y\in \mathcal{A}$,
the polynomial factorization from \cref{def:entry-factors,lem:irreducible-affine} give
\[
F(X)_{xy}=c_{xy}\prod_{a\in\mathcal F}\ell_a(X)^{h_a(x,y)}
\]
for some nonzero constant $c_{xy}$. At $X=0$, $F(0)_{xy}=1/n$ by \cref{lem: F0 and dF0}, so
\[
\frac{1}{n}=c_{xy}\prod_{a\in\mathcal F}\ell_a(0)^{h_a(x,y)}.
\]
Since $c_{xy} \neq 0$ and $\ell_a(0) > 0$ for all $a \in \mathcal{F}$, dividing these two equations yields the desired result.
\end{proof}

The following lemma shows that $F(X)_{xy}$ have common degree $d$ for all $x,y\in[n].$
\begin{lemma}\label{lem:common-degree}
There exists $d\in \Zp$ such that
$\sum_{a\in\mathcal F}\ell_a(X)=d$ for every $X\in \mathcal{A}$, and
$\sum_{a\in\mathcal F}h_a(x,y)=d$ for all $x,y\in [n].$
\end{lemma}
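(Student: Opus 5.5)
The plan is to get the common degree from the row-sum identity in \cref{lem:row-normalization}, and then read off the constant value of $\sum_a\ell_a$ by averaging over the statistic matrices.

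\textbf{Step 1: compare degrees.} Each entry $F(X)_{xy}$ is a polynomial, and by \cref{lem:affine-factors} it has the form $c_{xy}\prod_a\ell_a(X)^{h_a(x,y)}$. The forms $\ell_a$ are pairwise non-proportional and each has degree exactly one. So $F(X)_{xy}$ has total degree $d_{xy}:=\sum_a h_a(x,y)$, and its top homogeneous part is $c_{xy}\prod_a L_a(X)^{h_a(x,y)}$, where $L_a$ is the linear part of $\ell_a$.

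The key tool is \cref{lem:row-normalization}, which gives $\sum_y F(X)_{xy}=1$ for each $x$. For each $x$, let $D_x=\max_y d_{xy}$. If $D_x\ge1$, the degree-$D_x$ parts of the summands attaining the maximum must cancel. This cancellation alone does not force all $d_{xy}$ to be equal, so I would bring in positivity.

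\textbf{Step 2: use positivity along a ray.} Fix $X\in\operatorname{int}\mathcal C$ with $X\neq0$. This is possible since $\dim\mathcal A\ge1$, because $E\neq0$. Consider the ray $t\mapsto tX$. By \cref{lem:polytope-positivity} the entries are positive for small $t$, but this does not directly control leading terms.

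A cleaner route works with the affine forms directly. Define the affine form $\sigma(X):=\sum_a\ell_a(X)$. By \cref{lem:affine-multiplicities}, $\sigma(T_{xy})=d_{xy}$. Claim~\ref{claim: affine forms and multiplicities} inside \cref{lem:irreducible-affine} gives
\[
\sigma(X)=\frac1n\sum_{x,y}d_{xy}F(X)_{xy}.
\]
It therefore suffices to show that $\sigma$ is constant, for then $d_{xy}=\sigma(T_{xy})$ is the same for all $x,y$.

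\textbf{Step 3: show $\sigma$ is constant.} Consider the ``Euler'' derivative of $\log F(X)_{xy}$ along $X$. Summing \eqref{eq: dF over F} against the weights $F(X)_{xy}$ gives
\[
\sum_{x,y}\,dF(X)_{xy}=\sum_a \Big(\sum_{x,y}h_a(x,y)F(X)_{xy}\Big)\frac{df_a}{f_a}=n\sum_a \ell_a\frac{d\ell_a}{\ell_a}=n\sum_a d\ell_a .
\]
The left side vanishes by \cref{lem:row-normalization}. Hence $d\sigma=\sum_a d\ell_a=0$, so $\sigma$ is a constant $d$.

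This is just \eqref{eq: fa and la} rewritten using $f_a\propto\ell_a$, and it applies on $\operatorname{int}\mathcal C$. Since $\sigma$ is affine, it is then constant on all of $\mathcal A$.

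\textbf{Step 4: conclude.} From Step 3, $d=\sigma(T_{xy})=\sum_a h_a(x,y)$ for every $x,y$, so $d\in\mathbb Z_{\ge0}$. Also $d=\sigma(0)=\sum_a\ell_a(0)>0$ by \cref{lem:factor-positivity}, since $\mathcal F\neq\emptyset$. Hence $d\in\Zp$.

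The only delicate point is the identification of the gradient of $\sigma$ with the sum in \eqref{eq: fa and la}, and this has already been established in \cref{lem:irreducible-affine}. The degree-comparison route of Step 1 is not needed.
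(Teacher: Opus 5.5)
Your proposal is correct. Once you discard Steps 1--2, as you yourself note, it is exactly the paper's proof: substitute $f_a=c_a\ell_a$ into \eqref{eq: fa and la} to get $\sum_{a\in\mathcal F}d\ell_a=0$ on $\operatorname{int}\mathcal C$, conclude that the affine form $\sum_{a}\ell_a$ is constant on $\mathcal A$, and evaluate at $T_{xy}$. Your added observation that $d=\sum_a\ell_a(0)>0$, using \cref{lem:factor-positivity} and $\mathcal F\neq\emptyset$, justifies $d\in\Zp$, which the paper asserts without comment.
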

\begin{proof}
By \cref{lem:irreducible-affine}, for every $a\in \mathcal{A}$, there exists $c_a\neq 0$ such that $f_a=c_a\ell_a.$
In \eqref{eq: fa and la}, replace $f_a(X)$ by $c_a\ell_a(X)$, we obtain
\[
\sum_{a\in \mathcal{F}}d\ell_a(X)=\sum_{a\in \mathcal{F}}\ell_a(X)\frac{d\ell_a(X)}{\ell_a(X)}=0 \qquad(X\in \operatorname{int}\mathcal{C}).
\]
This identity extends to $X\in \mathcal{A}$.
Thus $\sum_{a\in\mathcal F}\ell_a(X)=d$ is constant on $\mathcal A$. 
Evaluating at $T_{xy}$ gives $\sum_{a\in \mathcal{F}}h_a(x,y)=\sum_{a\in \mathcal{F}}\ell_a(T_{xy})=d$ for all $x,y\in[n]$ by \cref{lem:affine-multiplicities}.
%Since each $h_a(x,y)\in \Z_{\ge 0}$, $d\in \Z_{\ge 0}.$
Therefore, $d\in \Zp$ is the common degree of $F(X)_{xy}$ for all $x,y\in[n]$.
\end{proof}

\begin{lemma}\label{lem:factor-transpose}
There is an involution map $\tau:\mathcal F\to\mathcal F$ such that, for every $a\in\mathcal F$, every $X\in\mathcal A$, and all $x,y\in[n]$,
$\ell_{\tau(a)}(X)=\ell_a(X^{\tt T})$ and 
$h_{\tau(a)}(x,y)=h_a(y,x).$
\end{lemma}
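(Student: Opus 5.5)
The plan is to transport the factorization of Lemma~\ref{lem:affine-factors} through the transpose identity $F(X^{\tt T})=F(X)^{\tt T}$ of Lemma~\ref{lem: F transpose}, and then use unique factorization to match factors. Transposition $X\mapsto X^{\tt T}$ is a linear automorphism of $\mathcal A$ by Lemma~\ref{lem:universal-algebra}. Hence for every $a\in\mathcal F$ the map $X\mapsto \ell_a(X^{\tt T})$ is again an affine form of total degree one, in particular an irreducible polynomial in the coordinates $(x_1,\ldots,x_m)$. Here degree one holds because $\ell_a$ is a nonzero scalar multiple of the nonconstant $f_a$ (Lemma~\ref{lem:irreducible-affine}).

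First, for fixed $x,y$, evaluate Lemma~\ref{lem:affine-factors} at $X^{\tt T}$:
\[
F(X)_{yx}=F(X^{\tt T})_{xy}=\frac1n\prod_{a\in\mathcal F}\left(\frac{\ell_a(X^{\tt T})}{\ell_a(0)}\right)^{h_a(x,y)} .
\]
This gives a factorization of the polynomial $F(X)_{yx}$ into the irreducible factors $\ell_a(\cdot^{\tt T})$. By uniqueness of prime factorization in $\mathbb R[x_1,\ldots,x_m]$, each nonconstant $\ell_a(X^{\tt T})$ with $h_a(x,y)>0$ must be a scalar multiple of some $\ell_b(X)$ with $b\in\mathcal F$. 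Every $a\in\mathcal F$ has $h_a(x,y)>0$ for some $(x,y)$ by the proof of Lemma~\ref{lem:factor-positivity}, so this defines $\tau(a)=b$. The index $b$ is unique because distinct $f_b$ are non-associate.

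Next, I will upgrade proportionality to equality. Write $\ell_a(X^{\tt T})=c\,\ell_{\tau(a)}(X)$. Evaluating at $X=0$ gives $c=\ell_a(0)/\ell_{\tau(a)}(0)>0$. To get $c=1$, I compare multiplicities. The identity above shows that the multiplicity of $\ell_{\tau(a)}$ in $F(X)_{yx}$ equals the sum of $h_{a'}(x,y)$ over those $a'$ with $\tau(a')=\tau(a)$. Injectivity of $\tau$ reduces this sum to $h_a(x,y)$, and injectivity holds because $\ell_{a'}(X^{\tt T})\propto\ell_a(X^{\tt T})$ forces $\ell_{a'}\propto\ell_a$ after substituting $X^{\tt T}$, hence $a'=a$. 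So $h_{\tau(a)}(y,x)=h_a(x,y)$ for all $x,y$.

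Now use $T_{yx}=T_{xy}^{\tt T}$ from Lemma~\ref{lem:diagonal-statistic}:
\[
\ell_a(T_{xy}^{\tt T})=\ell_a(T_{yx})=h_a(y,x)=h_{\tau(a)}(x,y)=\ell_{\tau(a)}(T_{xy}).
\]
The two affine forms $\ell_a(\cdot^{\tt T})$ and $\ell_{\tau(a)}$ therefore agree on all $T_{xy}$, whose affine span is $\mathcal A$ (Lemma~\ref{lem:statistics-span}). So they coincide, which gives $c=1$ automatically.

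Finally, $\tau$ is an involution. Indeed $\ell_{\tau(\tau(a))}(X)=\ell_{\tau(a)}(X^{\tt T})=\ell_a(X)$, so $\tau^2(a)=a$ by uniqueness of indices. The main point requiring care is the multiplicity bookkeeping: one must make sure that no two distinct $a$ map to associate factors. The injectivity argument above handles this, and as a side effect it also gives surjectivity, since $\mathcal F$ is finite.
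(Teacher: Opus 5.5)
Your proposal is correct and follows essentially the same route as the paper. You transport the affine factorization through $F(X^{\tt T})=F(X)^{\tt T}$, match irreducible factors by unique factorization to define $\tau$ and obtain $h_{\tau(a)}(y,x)=h_a(x,y)$, upgrade proportionality to equality by evaluating at $T_{yx}=T_{xy}^{\tt T}$ (the $T_{xy}$ affinely span $\mathcal A$), and apply the identity twice for the involution. Your explicit check that distinct $a$ give non-associate factors $\ell_a(X^{\tt T})$ is a worthwhile addition, since the paper uses this injectivity implicitly when it reads $h_a(x,y)=h_{\tau(a)}(y,x)$ off the two prime factorizations.
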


\begin{proof}
For every $a\in \mathcal{F}$, the irreducible factor $\ell_a(X)$ appears in some $F(X)_{xy}.$
By \cref{lem: F transpose}, $F(X^{\tt T})=F(X)^{\tt T}.$
Thus, $\ell_a(X^{\tt T})\mid F(X^{\tt T})_{xy}\implies \ell_a(X^{\tt T})\mid F(X)_{yx}$. By its form, $\ell_a(X^{\tt T})$ is also affine linear.
%I mean by Definition 2.4 , we check X --> l_a(X^t)  satifies
% item 2 of def 2.4 :ℓ(tu + (1 − t)v) = tℓ(u) + (1 − t)ℓ(v)
Let $\tau(a)\in \mathcal{F}$ be the index of the irreducible factor in $F(X)_{yx}$ that is proportional to $\ell_a(X^{\tt T})$.
Note that the definition of $\tau$ is independent of the choice of $x,y.$
This is because the index of $\ell_a(X^{\tt T})$ in $\mathcal{F}$ only depends on $a.$
Now by \cref{lem:affine-factors}, 
\[
\frac1n\prod_{a\in\mathcal F}
\left(\frac{\ell_a(X^{\tt T})}{\ell_a(0)}\right)^{h_a(x,y)}=\frac1n\prod_{a\in\mathcal F}
\left(\frac{\ell_a(X)}{\ell_a(0)}\right)^{h_a(y,x)}
\]
for every $X\in \mathcal{A}$ and all $x,y\in[n]$.
Both sides of the equation are unique prime factorizations.
Thus, for every $a\in \mathcal{F}$ and all $x,y\in[n]$, $\ell_a(X^{\tt T})=c_a \ell_{\tau(a)}(X)$, and $h_a(x,y)=h_{\tau(a)}(y,x)$, for some $c_a\neq 0.$ 
Evaluating $\ell_a(X^{\tt T})=c_a \ell_{\tau(a)}(X)$ at $X=T_{xy}$ for arbitrary $x,y\in[n]$, we have
\[
\ell_a(T_{xy}^{\tt T})=\ell_a(T_{yx})=h_a(y,x)=h_{\tau(a)}(x,y)=\ell_{\tau(a)}(T_{xy})
\]
by \cref{lem:diagonal-statistic,lem:affine-multiplicities}.
Since $\{T_{xy}\}_{x,y\in [n]}$ affinely span $\mathcal{A}$ by \cref{lem:statistics-span,lem:row-statistic-multisets}, $\ell_a(X^{\tt T})=\ell_{\tau(a)}(X)$, for every $X\in \mathcal{A}.$

Finally, applying transpose twice, we have $\ell_a(X)=\ell_{\tau(a)}(X^{\tt T})=\ell_{\tau^2(a)}(X)$ for every $a\in \mathcal{F}$ and $X\in \mathcal{A}.$
Thus, $\tau^2(a)=a$ for every $a\in \mathcal{F}$, i.e., $\tau$ is an involution.
\end{proof}

\section{Distances and Color Bijection Induced by the Polynomial Homomorphism}\label{sec:graph_distance}

In this section, we will show that from the exponents in the affine factorization of $F(\cdot)$, we can define some binary functions $\delta(\cdot,\cdot)$ and $\Delta_\omega(\cdot,\cdot)$.
%These distances turn out to serve as graph distances for highly regular graphs. 
In \cref{subsec: endpoint indices},
we first partition the set of affine factors into ``endpoint factors" and ``complementary factors", according to their evaluation at $E$ is zero or not (\cref{def: endpoint and complementary indices}).
The function $\delta(x,y)$ is defined as the sum of the multiplicities of endpoint factors in $F(X)_{xy}$ (\cref{def: delta and alpha}).
By $F(E)=I$, the function $\delta(x,y)$ captures the asymptotic order at which the perturbation $F(E - \epsilon X)_{xy}$ approaches $I_{xy}$ (\cref{lem:endpoint-leading-term}).
Expand the asymptotics in the homomorphism identity (\cref{lem: homomorphism property}) and compare the lowest degree terms, we prove that $\delta$ is a distance on $[n]$ (\cref{lem:endpoint-distance}).

Each $\omega$ is an orbit induced by the involution $\tau$ in \cref{lem:factor-transpose} restricted 
%on complementary indices.
on the set of indices for complementary factors.
The function $\Delta_\omega(x,y)$ is defined as the decrease in the total
multiplicity of the complementary factors in the orbit $\omega$
when passing from the diagonal entry $F(X)_{xx}$ to the entry
$F(X)_{xy}$.
In \cref{subsec: complenmentray indices}, we will prove that each zero relation  $\Delta_\omega(\cdot,\cdot)=0$, denoted by $L_\omega$, is an equivalence relation (\cref{lem: expression for Delta omega}).

In \cref{subsec: color bijection}, we show that there is a bijection from the set of colors $[n]$ to the Cartesian product of the equivalent classes of $L_\omega$ (\cref{lem: phi is bijection}).
We call this the ``color bijection."
This captures the bijection between the indices of $M$ and its tensor factors, when $M$ is a tensor product.
Further, we show that the value of $\Delta_\omega(x,y)$ depends only on the $L_\omega$-equivalence classes of $x$ and $y$ (\cref{lem:coordinate-locality}).
This defines a binary function $\Delta_\omega(\cdot,\cdot)$ on the set of $L_\omega$-equivalence classes.
In \cref{sec:hamming},
we will show that they must be highly regular graph distances 
 if $\PlGH(M)$ is not \#P-hard.

\subsection{Endpoint Indices and the Distance \texorpdfstring{$\delta(\cdot,\cdot)$}{δ(.,.)}}\label{subsec: endpoint indices}

\begin{definition}[Endpoint and complementary indices]\label{def: endpoint and complementary indices}
    Let $\mathcal{F}$ be the index  set of irreducible factors in \cref{def:entry-factors}.
    Partition $\mathcal{F}$ into \(\mathcal I=\{a\in\mathcal F\mid \ell_a(E)=0\}\) and \(\mathcal F_+=\mathcal F\setminus\mathcal I.\)
    We call $\mathcal{I}$ the set of \emph{endpoint indices} and $\mathcal{F}_+$ the set of \emph{complementary indices}.
\end{definition}

Note that for every $a\in \mathcal{F},$ $\ell_a(E)=\ell_a(T_{xx})=h_a(x,x)\in \mathbb{Z}_{\ge 0}$ by \cref{lem:diagonal-statistic,lem:affine-multiplicities}.
Thus, $\mathcal{F}_+=\{a\in \mathcal{F}\mid \ell_a(E)\in \Zp\}.$

\begin{definition}\label{def: delta and alpha}
    For all $x,y\in [n]$,
    define $\delta(x,y)=\sum_{i\in\mathcal I}h_i(x,y)$ and $\mathbf{\alpha}(x,y)=(h_i(x,y))_{i\in\mathcal I}$.
\end{definition}

We will show that $\delta$ is a distance on $[n].$
First we can prove positivity:

\begin{lemma}\label{lem: positivity of delta}
For all $x, y \in [n]$, $\delta(x, y) \ge 0$, with equality holding if and only if $x = y$.
\end{lemma}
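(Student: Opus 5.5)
Nonnegativity is immediate, because each $h_i(x,y)\in\mathbb Z_{\ge0}$ is a multiplicity. What has to be proved is that $\delta(x,y)=0$ holds exactly when $x=y$. I would use the identity $F(E)=I$ (\cref{lem:polynomial-identity}) together with the affine factorization of \cref{lem:affine-factors}, evaluated at $X=E$:
\[
F(E)_{xy}=\frac1n\prod_{a\in\mathcal F}\left(\frac{\ell_a(E)}{\ell_a(0)}\right)^{h_a(x,y)} .
\]
By \cref{lem:factor-positivity} every $\ell_a(0)>0$. By the remark after \cref{def: endpoint and complementary indices}, $\ell_a(E)=h_a(x,x)\in\mathbb Z_{\ge0}$, and this is positive exactly when $a\in\mathcal F_+$. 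So the product vanishes if and only if some endpoint index $i\in\mathcal I$ has $h_i(x,y)>0$, that is, if and only if $\delta(x,y)>0$. (The convention $0^0=1$ is harmless here: a factor with exponent zero is the constant $1$ by the statement of \cref{lem:affine-factors}.)

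For $x\ne y$ we have $F(E)_{xy}=I_{xy}=0$, so some endpoint factor occurs in $F(X)_{xy}$ and $\delta(x,y)\ge1$.

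For $x=y$ I would argue directly. If $i\in\mathcal I$, then $h_i(x,x)=\ell_i(T_{xx})=\ell_i(E)=0$ by \cref{lem:diagonal-statistic,lem:affine-multiplicities}. Summing over $i\in\mathcal I$ gives $\delta(x,x)=0$. This agrees with $F(E)_{xx}=1\neq0$. In fact that value also yields the normalization $\prod_{a\in\mathcal F_+}(\ell_a(E)/\ell_a(0))^{h_a(x,x)}=n$, but this is not needed.

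I expect no real obstacle, since the proof is bookkeeping on the factorization. The one point that needs care is using $h_a(x,x)=\ell_a(E)$ to see that the endpoint factors are exactly those vanishing at $E$. This identity is what makes the diagonal case immediate, rather than having to rely on the polynomial evaluation alone.
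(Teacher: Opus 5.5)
Your proposal is correct and follows essentially the same route as the paper: nonnegativity from the multiplicities, $\delta(x,x)=0$ from $h_i(x,x)=\ell_i(E)=0$ for $i\in\mathcal I$, and the case $x\neq y$ by evaluating the affine factorization at $X=E$ with $F(E)=I$. There the complementary factors are positive, so $F(E)_{xy}=0$ forces some $h_i(x,y)>0$. The only cosmetic difference is that you state the vanishing criterion as a direct equivalence, whereas the paper phrases the $x\neq y$ step as a contradiction.
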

\begin{proof}
By definition, $\delta(x,y)=\sum_{i\in\mathcal I}h_i(x,y)\in \Z_{\ge 0}.$
If $x=y$, then $\delta(x,x)=\sum_{i\in \mathcal{I}}h_i(x,x)=\sum_{i\in \mathcal{I}}\ell_i(E)=0.$
Now suppose $\delta(x,y)=0$ for some $x,y\in [n]$.
We next show $x=y$.
\cref{lem:polynomial-identity} gives $F(E)=I.$
Substituting $X=E$ into the affine factorization \cref{lem:affine-factors}, we obtain
\[
I_{xy}=F(E)_{xy}=\frac{1}{n}\prod_{i\in \mathcal{I}}\left(\frac{\ell_i(E)}{\ell_i(0)}\right)^{h_i(x,y)}\prod_{r\in \mathcal{F}_+}\left(\frac{\ell_r(E)}{\ell_r(0)}\right)^{h_r(x,y)}.
\]
Suppose for a contradiction that $x\neq y.$
Then the left-hand side of the equation evaluates to $0$.
In the right-hand side the second product over $\mathcal{F}_+$ is positive,
the first product over $\mathcal{I}$ must be nonempty and not all exponents
$h_i(x,y)$ are zero. Then 
$\delta(x,y)=\sum_{i\in\mathcal I}h_i(x,y) >0$, a 
contradiction.
%
%Hence $\prod_{i\in\mathcal{I}}\left(\frac{\ell_i(X)}{\ell_i(0)}\right)^{h_i(x,y)}$ is not the constant one polynomial.
%Thus, there exists $i\in \mathcal{I}$ such that $h_i(x,y)>0.$
%This implies $\delta(x,y)>0$, a 
%contradiction.
Therefore, $\delta(x,y)=0$ if and only if $x=y$.
\end{proof}

To prove symmetry and the triangle inequality, we rely on the asymptotics in \cref{lem:endpoint-leading-term}. 
We then expand the homomorphism identity $F(X)F(Y) = F(XY)$ for perturbations $X$ and $Y$ at $E$, and compare the lowest degree terms (\cref{lem:endpoint-distance}).

\begin{definition}\label{def: dual basis}
For every $i\in\mathcal I$ and every $X\in\mathcal A$, define
$g_i(X)=\ell_i(E-X)$ and $g(X)=(g_i(X))_{i\in\mathcal I}$.
Let $\mathbf u_i\in\mathbb Z^{\mathcal I}$ denote the unit vector at $i \in {\mathcal I}$. 
For $g=(g_i)_{i\in\mathcal I}$ and $\beta\in\mathbb Z_{\geq0}^{\mathcal I}$, write $g^\beta=\prod_{i\in\mathcal I}g_i^{\beta_i}$ and $|\beta|=\sum_{i\in\mathcal I}\beta_i$. 
\end{definition}

Note that each $g_i(\cdot)$ is a nonzero linear form because $\ell_i(E)=0$ and $\ell_i$ is nonconstant. 
%%% I added a \cdot b/c you had in the def another $g=(g_i)_{i\in\mathcal I}$ 
%%% without specifying its type.

\begin{lemma}\label{lem:endpoint-leading-term}
For every $X\in\mathcal A$ and all $x,y\in[n]$, there exists $\kappa_{xy}>0$ such that
\begin{equation*}
F(E-\varepsilon X)_{xy}
=\varepsilon^{\delta(x,y)}\kappa_{xy}g(X)^{\alpha(x,y)}
+O\!\left(\varepsilon^{\delta(x,y)+1}\right), \quad\text{as }\varepsilon\to 0.
%\label{eq:endpoint-leading-term}
\end{equation*}
%The implicit constant may depend on $A,x,y$ and the fixed target.
\end{lemma}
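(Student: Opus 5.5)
Substituting $E-\varepsilon X$ into the affine factorization of \cref{lem:affine-factors} and handling endpoint and complementary factors separately should give the claim directly. Since each $\ell_a$ is affine, we have $\ell_a(E-\varepsilon X)=\ell_a(E)-\varepsilon\,(\ell_a(E)-\ell_a(E-X))$, i.e.\ $\ell_a(E-\varepsilon X)=(1-\varepsilon)\ell_a(E)+\varepsilon\,\ell_a(E-X)$. For $i\in\mathcal I$ we have $\ell_i(E)=0$, so $\ell_i(E-\varepsilon X)=\varepsilon\, g_i(X)$ exactly (with $g_i$ as in \cref{def: dual basis}). Hence the endpoint part of the product is
\[
\prod_{i\in\mathcal I}\left(\frac{\varepsilon g_i(X)}{\ell_i(0)}\right)^{h_i(x,y)}=\varepsilon^{\delta(x,y)}\, g(X)^{\alpha(x,y)}\prod_{i\in\mathcal I}\ell_i(0)^{-h_i(x,y)} .
\]

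For the complementary part, $r\in\mathcal F_+$ gives $\ell_r(E)=h_r(x,x)\in\mathbb Z_{>0}$ by \cref{lem:diagonal-statistic,lem:affine-multiplicities}. So $\ell_r(E-\varepsilon X)=\ell_r(E)+O(\varepsilon)$, and
\[
\prod_{r\in\mathcal F_+}\left(\frac{\ell_r(E-\varepsilon X)}{\ell_r(0)}\right)^{h_r(x,y)}=\prod_{r\in\mathcal F_+}\left(\frac{\ell_r(E)}{\ell_r(0)}\right)^{h_r(x,y)}+O(\varepsilon).
\]
This is a polynomial in $\varepsilon$ whose constant term is strictly positive, since $\ell_r(0)>0$ by \cref{lem:factor-positivity}.

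Define
\[
\kappa_{xy}=\frac1n\prod_{i\in\mathcal I}\ell_i(0)^{-h_i(x,y)}\prod_{r\in\mathcal F_+}\left(\frac{\ell_r(E)}{\ell_r(0)}\right)^{h_r(x,y)}>0 .
\]
It is independent of $X$, which is stronger than the statement requires. Multiplying the two parts gives
\[
F(E-\varepsilon X)_{xy}=\varepsilon^{\delta(x,y)}\Bigl(\kappa_{xy}\,g(X)^{\alpha(x,y)}+O(\varepsilon)\Bigr).
\]
The $O(\varepsilon)$ term is a polynomial in $\varepsilon$ with coefficients depending on $X$, so it is bounded for fixed $X$ as $\varepsilon\to0$. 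This yields the claimed expansion.

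No real obstacle is expected: everything is a finite polynomial identity in $\varepsilon$. The points to watch are, first, the sign and positivity bookkeeping, where we must use that $\ell_a(0)>0$ and $\ell_r(E)>0$ for complementary indices so that $\kappa_{xy}>0$. Second, the leading term may vanish when some $g_i(X)=0$; the statement still holds in that case, with the expansion read as an upper-order statement.
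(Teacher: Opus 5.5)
Your proposal is correct and follows essentially the same route as the paper's proof. You substitute $E-\varepsilon X$ into the affine factorization of \cref{lem:affine-factors}, use $\ell_i(E-\varepsilon X)=\varepsilon g_i(X)$ for $i\in\mathcal I$ and $\ell_r(E-\varepsilon X)=\ell_r(E)+O(\varepsilon)$ for $r\in\mathcal F_+$, and take the same $X$-independent constant $\kappa_{xy}$, which is positive because $\ell_a(0)>0$ and $\ell_r(E)>0$.
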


\begin{proof}
Fix $X\in \mathcal{A}$ and $x,y\in [n]$.
By affine factorization (\cref{lem:affine-factors}),
\begin{equation}\label{eq: endpoint-leading-term}
    F(E-\varepsilon X)_{xy}=\frac{1}{n}\prod_{i\in\mathcal{I}}\left(\frac{\ell_i(E-\varepsilon X)}{\ell_i(0)}\right)^{h_i(x,y)}\prod_{r\in \mathcal{F}_+}\left(\frac{\ell_r(E-\varepsilon X)}{\ell_r(0)}\right)^{h_r(x,y)}.
\end{equation}
Since $g_i$ is a linear form for $i\in \mathcal{I}$, $\ell_i(E-\varepsilon X)=g_i(\varepsilon X)=\varepsilon g_i(X)$.
While for $r\in\mathcal F_+$, 
\(
\ell_r(E-\varepsilon X)
=\ell_r(E)+O(\varepsilon).
\)
Substitution in \eqref{eq: endpoint-leading-term} yields
\[
\begin{aligned}
    F(E-\varepsilon X)_{xy}=&\frac{1}{n}\prod_{i\in\mathcal{I}}\left(\frac{\varepsilon g_i(X)}{\ell_i(0)}\right)^{h_i(x,y)}\prod_{r\in \mathcal{F}_+}\left(\frac{\ell_r(E)+O(\varepsilon)}{\ell_r(0)}\right)^{h_r(x,y)}\\
    =&\frac{1}{n}\varepsilon^{\delta(x,y)} g(X)^{\alpha(x,y)}\prod_{i\in\mathcal{I}}{\ell_i(0)}^{-h_i(x,y)}\prod_{r\in \mathcal{F}_+}\left(\frac{\ell_r(E)}{\ell_r(0)}\right)^{h_r(x,y)}+O\left(\varepsilon^{\delta(x,y)+1}\right)
\end{aligned}
\]
as $\varepsilon\to 0$.
Define $\kappa_{xy}$ as
\[
\kappa_{xy}:=\frac1n
\prod_{i\in\mathcal I}\ell_i(0)^{-h_i(x,y)}
\prod_{r\in\mathcal F_+}\left(\frac{\ell_r(E)}{\ell_r(0)}\right)^{h_r(x,y)}.
\]
Then $\kappa_{xy}>0$ and $F(E-\varepsilon X)_{xy}
=\varepsilon^{\delta(x,y)}\kappa_{xy}g(X)^{\alpha(x,y)}
+O\!\left(\varepsilon^{\delta(x,y)+1}\right)$, as $\varepsilon\to 0.$
\end{proof}

\begin{lemma}\label{lem:endpoint-distance}
$\delta$ is a distance function on $[n]$: for all $x,y,z\in[n]$,
\begin{enumerate}
    \item $\delta(x,y)\geq0,$ and {\rm [}\,$ \delta(x,y)=0\iff x=y$\,{\rm ]};
    \item $\delta(x,y)=\delta(y,x);$
    \item $\delta(x,y)\leq\delta(x,z)+\delta(z,y).$
\end{enumerate}
\end{lemma}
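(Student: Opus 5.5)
Item 1 is \cref{lem: positivity of delta}, so only symmetry and the triangle inequality remain. For both I will expand the homomorphism identity of \cref{lem: homomorphism property} around $E$ with \cref{lem:endpoint-leading-term} and compare the lowest powers of $\varepsilon$.

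\emph{Triangle inequality.} Fix $X\in\mathcal A$ and take $Y=X$. Since $E$ is the unit of $\mathcal A$ (\cref{lem:support-identity}),
\[
(E-\varepsilon X)(E-\varepsilon X)=E-\varepsilon(2X-\varepsilon X^2).
\]
The linear forms $g_i$ are homogeneous, so $g_i(2X-\varepsilon X^2)=2g_i(X)+O(\varepsilon)$. Exactly as in the proof of \cref{lem:endpoint-leading-term}, this gives
\[
F(E-\varepsilon X)^2_{xy}=F\bigl(E-\varepsilon(2X-\varepsilon X^2)\bigr)_{xy}=\varepsilon^{\delta(x,y)}\kappa_{xy}2^{\delta(x,y)}g(X)^{\alpha(x,y)}+O(\varepsilon^{\delta(x,y)+1}).
\]
On the other side,
\[
F(E-\varepsilon X)^2_{xy}=\sum_z F(E-\varepsilon X)_{xz}F(E-\varepsilon X)_{zy}.
\]
Choose $X$ so that every $g_i(X)$ has a prescribed sign. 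Concretely, take $X=E-Y$ with $Y\in\operatorname{int}\mathcal C$ near $E$, so that $\ell_i(Y)>0$ for all $i$. This is possible because $F(tE)$ is entrywise positive for $t\in(0,1)$ (\cref{lem:polytope-positivity}). Then, for small $\varepsilon>0$, $E-\varepsilon X$ lies in $\operatorname{int}\mathcal C$, so all entries are positive and all $g_i(X)>0$. Each summand is then positive, of exact order $\varepsilon^{\delta(x,z)+\delta(z,y)}$, and no cancellation can occur. Hence
\[
\min_z\bigl(\delta(x,z)+\delta(z,y)\bigr)=\delta(x,y).
\]
In particular $\delta(x,y)\le\delta(x,z)+\delta(z,y)$ for every $z$. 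This positivity argument is the step I expect to need the most care: one must check that $E-\varepsilon(E-Y)=(1-\varepsilon)E+\varepsilon Y$ is a convex combination of points of $\operatorname{int}\mathcal C$ (and $E\in\mathcal C$), and that $g_i(E-Y)=\ell_i(Y)$, which holds since $\ell_i(E)=0$.

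\emph{Symmetry.} Use \cref{lem:factor-transpose}. By definition $\ell_{\tau(a)}(E)=\ell_a(E^{\tt T})=\ell_a(E)$, so $\tau$ preserves $\mathcal I$. Since $h_{\tau(i)}(x,y)=h_i(y,x)$, summing over $i\in\mathcal I$ and using that $\tau$ is a bijection of $\mathcal I$ gives
\[
\delta(x,y)=\sum_{i\in\mathcal I}h_{\tau(i)}(x,y)=\sum_{i\in\mathcal I}h_i(y,x)=\delta(y,x).
\]
This step is purely combinatorial. Together with item 1 from \cref{lem: positivity of delta}, the two arguments above establish that $\delta$ is a distance on $[n]$.
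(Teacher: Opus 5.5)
Your proof is correct. For the triangle inequality you follow the paper's route: expand $F(E-\varepsilon X)^2=F\bigl((E-\varepsilon X)^2\bigr)$ via \cref{lem:endpoint-leading-term} and match lowest orders in $\varepsilon$. The paper, however, simply takes $X=E$. Then $(E-\varepsilon E)^2=(1-2\varepsilon+\varepsilon^2)E$ and $g_i(E)=\ell_i(0)>0$ by \cref{lem:factor-positivity}, so all leading coefficients are positive with no detour through $\operatorname{int}\mathcal C$.

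The step you flagged as delicate is therefore avoidable. If you keep it, entrywise positivity of $F(Y)$ does not by itself make each factor $\ell_i(Y)$ positive, since a priori a negative factor could enter with even multiplicity. The identity $\ell_a(Y)=\frac1n\sum_{x,y}h_a(x,y)F(Y)_{xy}$ from the proof of \cref{lem:irreducible-affine} does give $\ell_a(Y)>0$ for every $Y\in\operatorname{int}\mathcal C$. For $Y=tE$ one also has directly $\ell_i(tE)=(1-t)\ell_i(0)$.

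For symmetry your argument is genuinely different. The paper uses that $F((1-\varepsilon)E)$ is a symmetric matrix (\cref{lem: F transpose}) whose $(x,y)$ and $(y,x)$ entries vanish to the exact orders $\delta(x,y)$ and $\delta(y,x)$. You instead observe that the involution $\tau$ of \cref{lem:factor-transpose} preserves $\mathcal I$ because $E^{\tt T}=E$, and then reindex the sum. Your version is purely combinatorial and even shows that $\alpha(y,x)$ is $\alpha(x,y)$ with coordinates permuted by $\tau$. The paper's version avoids the unique-factorization bookkeeping behind $\tau$ and reuses the same asymptotic machinery as the triangle inequality.
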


\begin{proof}
The first item is proved in \cref{lem: positivity of delta}. 
Next we prove symmetry.
Fix arbitrary $x,y\in [n]$.
By \cref{lem:endpoint-leading-term}, 
\begin{equation}\label{eq: delta triangle 1}
    F((1-\varepsilon)E)_{xy}=\kappa_{xy}g(E)^{\alpha(x,y)}\varepsilon^{\delta(x,y)}+O(\varepsilon^{\delta(x,y)+1})
\quad \text{as }\varepsilon\to 0,
\end{equation}
where $\kappa_{xy}>0.$
Note that $g(E)^{\alpha(x,y)}=\prod_{i\in \mathcal{I}}g_i(E)^{h_i(x,y)}=\prod_{i\in \mathcal{I}}\ell_i(0)^{h_i(x,y)}>0.$
The matrix $F((1-\varepsilon) E)$ is symmetric by \cref{lem: F transpose} and $E=E^{\tt T}$.
So $F((1-\varepsilon E))_{xy}$ and $F((1-\varepsilon E))_{yx}$ have the same vanishing order as $\epsilon\to 0$. 
Thus, $\delta(x,y)=\delta(y,x)$.

Next we show the triangle inequality.
By \cref{lem: homomorphism property}, $F((1-\varepsilon)E)^2=F((1-\varepsilon)^2 E).$
For fixed $x,y\in [n]$,
\begin{equation}\label{eq: delta triangle 2}
    \sum_{z=1}^nF((1-\varepsilon)E)_{xz}F((1-\varepsilon)E)_{zy}
=F((1-2\varepsilon+\epsilon^2)E)_{xy}.
\end{equation}
Let $p=\min_{z\in[n]}(\delta(x,z)+\delta(z,y))$. 
Plug \eqref{eq: delta triangle 1} into \eqref{eq: delta triangle 2} as $\epsilon\to 0$.
The coefficient of the lowest degree term $\varepsilon^p$ on the left is
$$\sum_{\substack{z\in[n],\delta(x,z)+\delta(z,y)=p}}\kappa_{xz}\kappa_{zy}g(E)^{\alpha(x,z)}g(E)^{\alpha(z,y)}>0.$$
On the right, the lowest degree term is
$\kappa_{xy}g(E)^{\alpha(x,y)}(2\epsilon)^{\delta(x,y)}$,
whose coefficient is also nonzero. 
When $\epsilon$ approaches to 0, the lowest degree term on both sides of \eqref{eq: delta triangle 2} must match.
Therefore, $\delta(x,y)=p=\min_{z\in[n]}(\delta(x,z)+\delta(z,y)).$
\end{proof}

\begin{remark}
    In \cref{lem:endpoint-distance}, we establish not only the triangle inequality, but also that for every $x, y \in [n]$, there exists some $z \in [n]$ satisfying $\delta(x,y) = \delta(x,z) + \delta(z,y)$. Moreover, for this choice of $z$, we have $\alpha(x,y) = \alpha(x,z) + \alpha(z,y)$. In \cref{lem:endpoint-box}, we generalize this result: for every $0\le\beta\le \alpha(x,y)$ (entrywise), there exists $z \in [n]$ such that $\delta(x,y) = \delta(x,z) + \delta(z,y)$, with $\alpha(x,z) = \beta$ and $\alpha(z,y) = \alpha(x,y) - \beta$.
    To establish this, we first prove the following preliminary \cref{lem:endpoint-basis}.
\end{remark}

\begin{lemma}\label{lem:endpoint-basis}
Let $\mathcal{A}^*$ be the dual space of $\mathcal{A}.$
Then
$\{g_i\}_{i\in\mathcal I}$ is a basis of $\mathcal A^*$. 
%In particular, $|\mathcal I|=\dim\mathcal A$.
\end{lemma}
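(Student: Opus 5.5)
The statement has two halves: the linear forms $g_i$ ($i\in\mathcal I$) span $\mathcal A^*$, and they are linearly independent. I would prove them separately.

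\emph{Spanning.} It suffices to show that $X\in\mathcal A$ with $g_i(X)=0$ for all $i\in\mathcal I$ forces $X=0$. Fix such an $X$. Since $g_i$ is linear, $\ell_i(E-tX)=g_i(tX)=t\,g_i(X)=0$ for every $t\in\mathbb R$ and every $i\in\mathcal I$.
\begin{itemize}
\item Take $x\neq y$. By \cref{lem: positivity of delta} some $i\in\mathcal I$ has $h_i(x,y)>0$.
\item The affine factorization (\cref{lem:affine-factors}) then gives $F(E-tX)_{xy}=0$ for every $t$.
\item Hence $F(E-tX)$ is diagonal. By \cref{lem:row-normalization} its rows sum to one, so $F(E-tX)=I$ for all $t$.
\item The block decomposition (\cref{lem: block decomposition global}) gives $E-tX=EF(E-tX)E=E$, so $X=0$.
\end{itemize}
Thus $\bigcap_i\ker g_i=0$, and $\{g_i\}$ spans $\mathcal A^*$.

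\emph{Linear independence.} Suppose $\sum_{i\in\mathcal I}c_ig_i=0$.
\begin{itemize}
\item Write each $\ell_i$ as $\ell_i(Y)=g_i(E-Y)$. Then $\sum_ic_i\ell_i\equiv0$ on $\mathcal A$.
\item Evaluating at $T_{xy}$ with \cref{lem:affine-multiplicities} gives $\sum_i c_i h_i(x,y)=0$ for all $x,y$, i.e.\ $\langle c,\alpha(x,y)\rangle=0$.
\item So it is enough to show that every unit vector $\mathbf u_i$ occurs as some $\alpha(x,y)$, or at least that the $\alpha(x,y)$ span $\mathbb R^{\mathcal I}$.
\end{itemize}
To get there I would sharpen the triangle-inequality argument of \cref{lem:endpoint-distance}, using $F((1-\varepsilon)E)^2=F((1-\varepsilon)^2E)$ and comparing coefficients of $\varepsilon^{\delta(x,y)}$.
\begin{itemize}
\item Since $F(E)=I$, we have $\kappa_{xx}=1$ and $\alpha(x,x)=0$.
\item The terms $z=x$ and $z=y$ on the left together contribute exactly $2\kappa_{xy}g(E)^{\alpha(x,y)}$.
\item The right side contributes $2^{\delta(x,y)}\kappa_{xy}g(E)^{\alpha(x,y)}$.
\item If $\delta(x,y)\ge2$, then $2^{\delta(x,y)}>2$. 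Since all left-hand contributions are positive, there must be an intermediate $z\notin\{x,y\}$ with $\delta(x,z)+\delta(z,y)=\delta(x,y)$ and both summands at least $1$.
\end{itemize}
Iterating this splits any pair $(x,y)$ into a path of $\delta(x,y)$ steps, each step of $\delta$-value $1$. A pair with $\delta=1$ has $\alpha=\mathbf u_j$ for a single $j$.

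The remaining, and I expect hardest, step is to show that $\alpha$ is additive along these splittings: $\alpha(x,z)+\alpha(z,y)=\alpha(x,y)$ whenever $\delta$ is additive. Given additivity, the splitting writes $\alpha(x,y)$ as a sum of unit vectors. Since each $i\in\mathcal I$ has $h_i(x,y)>0$ for some pair (it is a factor), every $\mathbf u_i$ then appears as an $\alpha$-value, so $c_i=0$.

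To obtain additivity I would redo the coefficient comparison with a general direction, via the expansion
\[
F(E-\varepsilon X)^2=F\bigl(E-\varepsilon(2X-\varepsilon X^2)\bigr),
\]
for which \cref{lem:endpoint-leading-term} gives leading terms $g(X)^{\alpha}$. Matching the $\varepsilon^{\delta(x,y)}$ coefficients yields a polynomial identity in $X$:
\[
\sum_{z}\kappa_{xz}\kappa_{zy}\,g(X)^{\alpha(x,z)+\alpha(z,y)}=2^{\delta(x,y)}\kappa_{xy}\,g(X)^{\alpha(x,y)},
\]
where the sum runs over $z$ with $\delta(x,z)+\delta(z,y)=\delta(x,y)$. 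Reading off exponents from this requires the monomials $g^\beta$ to be distinct functions, which is essentially the independence being proved.

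So I would first try to break this circularity by unique factorization. The $g_i$ are pairwise non-proportional linear forms, because distinct $\ell_i$ are non-associate primes (\cref{def:entry-factors}). A product $g^\beta$ therefore has the unique prime factorization $\prod_i g_i^{\beta_i}$. Restricting the identity to $X$ near $E$, where all $g_i>0$, I would compare the divisibility of both sides by each prime $g_i$, after first isolating the $z$-terms with extremal exponent in that prime.
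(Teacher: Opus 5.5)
Your spanning half is correct, and it argues differently from the paper. You show $\bigcap_i\ker g_i=0$: if $g(X)=0$, every off-diagonal entry of $F(E-X)$ contains a vanishing endpoint factor (\cref{lem: positivity of delta} and \cref{lem:affine-factors}). So $F(E-X)=I$ by row normalization, and $E-X=EF(E-X)E=E$.

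The independence half, however, has a genuine gap. Your reduction to ``every $\mathbf u_i$ occurs as some $\alpha(x,y)$'' is sound. But the route through additivity of $\alpha$ along $\delta$-geodesics is circular, as you say, and the unique-factorization repair cannot close it. Pairwise non-proportional linear primes, positive coefficients, and positivity of all $g_i$ near $E$ do not exclude an identity $\sum_z c_z g^{\beta_z}=C\,g^{\alpha}$ with some $\beta_z\neq\alpha$. For instance, if $g_3=g_1+g_2$, then $g_1^2+2g_1g_2+g_2^2=g_3^2$. Here the terms of minimal $g_3$-exponent make up the whole left side, and that side \emph{is} divisible by $g_3$, so isolating extremal terms yields no contradiction. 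What rules such identities out is precisely the linear independence being proved. Hence comparing the degree-$\delta(x,y)$ terms of $F(X)^2=F(X^2)$ and then factoring does not decide the question.

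The missing ingredient is the \emph{linear} identity from \cref{lem:row-normalization,lem:mean-identities}: for a fixed row $x$ and every $A\in\mathcal A$,
\[
\varepsilon A=\sum_{y}F(E-\varepsilon A)_{xy}(E-T_{xy}).
\]
Compare $\varepsilon^1$ coefficients using \cref{lem:endpoint-leading-term}. The $y=x$ term vanishes since $T_{xx}=E$, and terms with $\delta(x,y)\ge2$ are $O(\varepsilon^2)$. This gives $A=\sum_{i\in\mathcal I}g_i(A)V_i$ with $V_i=\sum_{y:\,\alpha(x,y)=\mathbf u_i}\kappa_{xy}(E-T_{xy})$.

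Since $g_j(E-T_{xy})=\ell_j(T_{xy})=h_j(x,y)$, you get $g_j(V_i)=0$ for $i\neq j$. Hence $g_j(A)=g_j(A)\sum_{y:\,\alpha(x,y)=\mathbf u_j}\kappa_{xy}$ for all $A$. As $g_j\neq0$, that sum equals $1$, so some $y_j$ has $\alpha(x,y_j)=\mathbf u_j$. Then $B_j=E-T_{xy_j}$ satisfies $g_i(B_j)=\mathbb I[i=j]$. This is exactly your sufficient condition, obtained without any additivity. The paper also reads spanning off $A=\sum_i g_i(A)V_i$, so this single expansion proves both halves and supplies the dual basis $\{B_i\}$ used later.
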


\begin{proof}
Fix a row $x\in[n]$. 
\cref{lem:row-normalization,lem:mean-identities} give $\sum_{y=1}^nF(X)_{xy}=1$ and $\sum_{y=1}^nF(X)_{xy}T_{xy}=X.$
Substituting $X=E-\varepsilon A$ for arbitrary $A\in \mathcal{A}$, we obtain
\begin{equation}\label{eq: endpoint-basis}
    \begin{aligned}
        E-\varepsilon A&=\sum_{y=1}^nF(E-\varepsilon A)_{xy}T_{xy}\\
        \implies
        \varepsilon A&=\sum_{y=1}^nF(E-\varepsilon A)_{xy}(E-T_{xy}).
    \end{aligned}
\end{equation}
Let $\varepsilon$ approach to 0. 
By \cref{lem:endpoint-leading-term}, the RHS of \eqref{eq: endpoint-basis} is 
\[
\text{RHS}=\sum_{y=1}^n\left(\varepsilon^{\delta(x,y)}\kappa_{xy}g(A)^{\alpha(x,y)}(E-T_{xy})
+O\!\left(\varepsilon^{\delta(x,y)+1}\right)\right)
\]
where $\kappa_{xy}>0$ for every $y\in[n].$
However, the LHS of \eqref{eq: endpoint-basis} has vanishing order one.
Thus, only the terms with $\delta(x,y)=1$ contribute to the RHS of \eqref{eq: endpoint-basis}.
When $\delta(x,y)=\sum_{i\in \mathcal{I}}h_i(x,y)=1$, there exists a unique $i$ such that $h_i(x,y)=1$ and other coordinates are $0$; that is, $\alpha(x,y)=\mathbf{u}_i$ and $g(A)^{\alpha(x,y)}=g_i(A).$
Grouping \eqref{eq: endpoint-basis} by $\mathbf{u}_i,$ we have
\begin{equation}\label{eq: endpoint-basis 2}
    A=\sum_{i\in \mathcal{I}}g_i(A)\sum_{y\in[n],\alpha(x,y)=\mathbf{u}_i}\kappa_{xy}(E-T_{xy})\qquad (A\in \mathcal{A}).
\end{equation}
For every $i\in \mathcal{I}$, let $V_i=\sum_{y\in[n],\alpha(x,y)=\mathbf{u}_i}\kappa_{xy}(E-T_{xy})\in \mathcal{A}.$
Then $A=\sum_{i\in \mathcal{I}}g_i(A)V_i,\,\forall A\in \mathcal{A}.$
Consequently, any linear functional $f \in \mathcal{A}^*$ satisfies $f = \sum_{i \in \mathcal{I}} f(V_i) g_i$, which proves that $\{g_i\}_{i \in \mathcal{I}}$ generates $\mathcal{A}^*$.

Next we show that $\{g_i\}_{i\in \mathcal{A}}$ are linearly independent.
For every $j\in \mathcal{I}$, apply $g_j$ on both sides of \eqref{eq: endpoint-basis 2},
\[
g_j(A)=\sum_{i\in \mathcal{I}}g_i(A)\sum_{y\in [n],\alpha(x,y)=\mathbf{u}_i}\kappa_{xy}g_j(E-T_{xy}) \qquad(A\in\mathcal{A}).
\]
When $\alpha(x,y)=\mathbf{u}_i$, $$g_j(E-T_{xy})=\ell_j(T_{xy})=h_j(x,y)=\begin{cases}
    1, &j=i;\\
    0, &j\neq i.
\end{cases}$$
Thus,
\[
g_j(A)=g_j(A)\sum_{y\in [n],\alpha(x,y)=\mathbf{u}_j}\kappa_{xy}\qquad (A\in \mathcal{A}).
\]
Since each $g_j$ is a nonzero linear form, $\sum_{y\in [n],\alpha(x,y)=\mathbf{u}_j}\kappa_{xy}=1.$
In particular, for every $j\in \mathcal{I}$, there exists $y_j\in[n]$ such that $\alpha(x,y_j)=\mathbf{u}_j.$
Fix any such $y_j$ and
define $B_j:=E-T_{xy_j}$. 
Then for all $i,j\in\mathcal{I}$, $$g_i(B_j)=\ell_i(T_{xy_j})=h_i(x,y_j)=\begin{cases}
    1, &j=i;\\
    0, &j\neq i.
\end{cases}$$
Therefore, $\{g_i\}_{i\in \mathcal{I}}$ is linearly independent, 
and therefore a basis of  $\mathcal{A}^*$,
and $\{B_i\}_{i\in \mathcal{I}}$ is a  basis 
of  $\mathcal{A}$ dual to $\{g_{i}\}_{i\in\mathcal{I}}.$
\end{proof}

\begin{definition}\label{def:endpoint-basis}
Let $\{B_i\}_{i\in\mathcal{I}}$ be the dual basis to $\{g_i\}_{i\in \mathcal{I}}$ defined in \cref{lem:endpoint-basis}.
\end{definition}

\begin{corollary}\label{lem:endpoint-coordinates}
For every $i\in\mathcal{I}$, there exist $x,y\in [n]$ such that $\alpha(x,y)=\mathbf{u}_i$ and $B_i=E-T_{xy}.$
\end{corollary}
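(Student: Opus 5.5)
This is a direct extraction from the proof of \cref{lem:endpoint-basis}, so the plan is to rerun the relevant part of that argument with a fixed row $x\in[n]$ and read off the conclusion.

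First, take $A$ arbitrary in \eqref{eq: endpoint-basis}, expand the right-hand side with \cref{lem:endpoint-leading-term}, and compare the order-$\varepsilon$ terms. This gives \eqref{eq: endpoint-basis 2}.

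Second, apply $g_j$ to \eqref{eq: endpoint-basis 2}. Using $g_j(E-T_{xy})=\ell_j(T_{xy})=h_j(x,y)$ from \cref{lem:affine-multiplicities}, we get
\[
\sum_{y\in[n],\,\alpha(x,y)=\mathbf u_j}\kappa_{xy}=1 .
\]
The sum is therefore nonempty, so for each $j\in\mathcal I$ there is some $y_j$ with $\alpha(x,y_j)=\mathbf u_j$.

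Third, set $B'_j:=E-T_{xy_j}$. Then
\[
g_i(B'_j)=\ell_i(T_{xy_j})=h_i(x,y_j)=[i=j],
\]
so $\{B'_j\}_{j\in\mathcal I}$ is dual to $\{g_i\}_{i\in\mathcal I}$. Since $\{g_i\}$ is a basis of $\mathcal A^*$ by \cref{lem:endpoint-basis}, its dual basis in $\mathcal A$ is unique. Hence $B_j=B'_j=E-T_{xy_j}$, with $\alpha(x,y_j)=\mathbf u_j$, and taking $x$ together with $y=y_j$ proves the corollary for $i=j$.

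The only point needing care is the well-definedness of $B_i$ in \cref{def:endpoint-basis}. The dual basis is defined from $\{g_i\}$ alone, so it does not depend on the choices of $x$ and $y_j$ made in the proof. This uniqueness is what makes the identification legitimate, and it is immediate from finite-dimensional duality: a linear functional vanishing on a basis of $\mathcal A^*$ is zero, and since $\dim\mathcal A=|\mathcal I|$, any family dual to $\{g_i\}$ coincides with $\{B_i\}$. There is no real obstacle, and the argument even shows the conclusion holds for every choice of row $x$.
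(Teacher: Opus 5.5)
Your proposal is correct and follows the paper's route: the paper gives no separate proof of this corollary. It reads the result off the proof of \cref{lem:endpoint-basis}, where a fixed row $x$ yields $y_j$ with $\alpha(x,y_j)=\mathbf{u}_j$ and $\{E-T_{xy_j}\}_{j\in\mathcal I}$ is shown to be dual to $\{g_i\}_{i\in\mathcal I}$. Your explicit appeal to uniqueness of the dual basis is the point the paper leaves implicit, and it is exactly what makes $B_i$ in \cref{def:endpoint-basis} independent of the choices of $x$ and $y_j$.
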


\begin{lemma}\label{lem: E-Txy}
    For all $x,y\in [n]$, $E-T_{xy}=\sum_{i\in \mathcal{I}}h_i(x,y)B_i.$
\end{lemma}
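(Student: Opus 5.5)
Since $\{g_i\}_{i\in\mathcal I}$ is a basis of $\mathcal A^*$ and $\{B_i\}_{i\in\mathcal I}$ is its dual basis (\cref{lem:endpoint-basis}), every $A\in\mathcal A$ has the expansion $A=\sum_{i\in\mathcal I}g_i(A)B_i$. The plan is to apply this with $A=E-T_{xy}$ and then compute each coordinate $g_i(E-T_{xy})$.

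First, I will check that $E-T_{xy}\in\mathcal A$. This holds because $E\in\mathcal A$ by \cref{lem:support-identity}, and $T_{xy}\in\mathcal A$ since it is defined as $n$ times the orthogonal projection onto $\mathcal A$.

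Next, I will compute the coordinates. By \cref{def: dual basis}, $g_i(E-T_{xy})=\ell_i(E-(E-T_{xy}))=\ell_i(T_{xy})$. By \cref{lem:affine-multiplicities}, $\ell_i(T_{xy})=h_i(x,y)$. Note that $\ell_i$ is affine rather than linear, but that causes no trouble here: $g_i$ is defined directly by evaluating $\ell_i$ at the point $E-A$, and $g_i$ is linear because $\ell_i(E)=0$ for $i\in\mathcal I$.

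Substituting these coordinates into the dual-basis expansion gives $E-T_{xy}=\sum_{i\in\mathcal I}h_i(x,y)B_i$, as claimed. The argument is a direct computation, and I do not expect any real obstacle.
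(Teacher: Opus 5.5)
Your proposal is correct and is essentially the paper's own argument: expand $E-T_{xy}$ in the dual basis $\{B_i\}_{i\in\mathcal I}$, read off the coordinates $g_i(E-T_{xy})=\ell_i(T_{xy})$, and then apply \cref{lem:affine-multiplicities} to get $h_i(x,y)$. Your extra checks, that $E-T_{xy}\in\mathcal A$ and that $g_i$ is linear because $\ell_i(E)=0$, are valid details the paper leaves implicit.
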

\begin{proof}
    Expand $E-T_{xy}$ in the dual basis $\{B_i\}_{i\in \mathcal{I}},$ we have
    $E-T_{xy}=\sum_{i\in\mathcal{I}}g_i(E-T_{xy})B_i=\sum_{i\in \mathcal{I}}\ell_i(T_{xy})B_i=\sum_{i\in \mathcal{I}}h_i(x,y)B_i.$
    The last equality holds by \cref{lem:affine-multiplicities}.
\end{proof}
\begin{lemma}\label{lem:endpoint-box}
For all $x,y\in[n]$ and every $\beta\in\mathbb Z_{\geq0}^{\mathcal I}$ with $0\leq\beta\leq\alpha(x,y)$ coordinate wise, there exists $z\in[n]$ such that
$\alpha(x,z)=\beta$ and
$\alpha(z,y)=\alpha(x,y)-\beta$.
\end{lemma}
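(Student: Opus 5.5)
The plan is to redo the triangle-inequality argument of \cref{lem:endpoint-distance}, but with \emph{independent} perturbation directions on the two sides of the homomorphism identity. Using the dual basis of \cref{def:endpoint-basis}, I can read off the exponent vectors $\alpha(\cdot,\cdot)$ as monomials. Introduce formal real variables $s=(s_i)_{i\in\mathcal I}$ and $t=(t_i)_{i\in\mathcal I}$ and set $A=\sum_{i\in\mathcal I}s_iB_i$ and $B=\sum_{i\in\mathcal I}t_iB_i$. Then $g_i(A)=s_i$ and $g_i(B)=t_i$ by \cref{lem:endpoint-basis}. Apply \cref{lem: homomorphism property} to $X=E-\varepsilon A$ and $Y=E-\varepsilon B$. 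Since $EA=AE=A$ and $EB=BE=B$ by \cref{lem:support-identity},
\[
\sum_{z=1}^n F(E-\varepsilon A)_{xz}F(E-\varepsilon B)_{zy}=F\bigl(E-\varepsilon(A+B)+\varepsilon^2AB\bigr)_{xy}.
\]
Because $F$ is a polynomial (\cref{def: polynomial homomorphism}), both sides are polynomials in $(\varepsilon,s,t)$. Hence I may compare the coefficients of a fixed power of $\varepsilon$ as polynomials in $(s,t)$.

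For the right-hand side, I use the affine factorization of \cref{lem:affine-factors}. Each endpoint factor evaluates to
\[
\ell_i\bigl(E-\varepsilon(A+B)+\varepsilon^2AB\bigr)=g_i\bigl(\varepsilon(A+B)-\varepsilon^2AB\bigr)=\varepsilon(s_i+t_i)-\varepsilon^2g_i(AB).
\]
Each complementary factor equals $\ell_r(E)+O(\varepsilon)$ with $\ell_r(E)>0$. So the coefficient of $\varepsilon^{\delta(x,y)}$ on the right is $\kappa_{xy}\prod_{i\in\mathcal I}(s_i+t_i)^{h_i(x,y)}$, and all lower powers of $\varepsilon$ vanish. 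This is exactly the computation of \cref{lem:endpoint-leading-term}, now carried out exactly as a polynomial identity in $(\varepsilon,s,t)$.

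For the left-hand side, \cref{lem:endpoint-leading-term} gives that the $z$-th summand is
\[
\varepsilon^{\delta(x,z)+\delta(z,y)}\kappa_{xz}\kappa_{zy}\,s^{\alpha(x,z)}t^{\alpha(z,y)}+(\text{higher powers of }\varepsilon),
\]
again as a polynomial expansion. By the triangle inequality in \cref{lem:endpoint-distance}, $\delta(x,z)+\delta(z,y)\ge\delta(x,y)$ for every $z$. So the coefficient of $\varepsilon^{\delta(x,y)}$ on the left is $\sum_{z\in Z}\kappa_{xz}\kappa_{zy}s^{\alpha(x,z)}t^{\alpha(z,y)}$, where $Z$ is the set of $z$ with $\delta(x,z)+\delta(z,y)=\delta(x,y)$. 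Equating this with the right-hand side gives the polynomial identity
\[
\sum_{z\in Z}\kappa_{xz}\kappa_{zy}\,s^{\alpha(x,z)}t^{\alpha(z,y)}=\kappa_{xy}\prod_{i\in\mathcal I}(s_i+t_i)^{h_i(x,y)}.
\]
Expand the right side by the binomial theorem. For each $0\le\beta\le\alpha(x,y)$, the monomial $s^\beta t^{\alpha(x,y)-\beta}$ appears with coefficient $\kappa_{xy}\prod_i\binom{h_i(x,y)}{\beta_i}>0$. Monomials in the distinct variables $s,t$ are linearly independent, so some $z\in Z$ must satisfy $\alpha(x,z)=\beta$ and $\alpha(z,y)=\alpha(x,y)-\beta$. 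This proves the lemma.

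The main point needing care is the bookkeeping that makes the coefficient comparison legitimate. One needs the $O(\varepsilon^{\delta+1})$ terms of \cref{lem:endpoint-leading-term} to be genuine polynomial terms in $(\varepsilon,s,t)$, and not merely asymptotic errors for fixed $s,t$. This holds because $F$ is polynomial: in the affine factorization, each factor is affine in $\varepsilon$ with coefficients that are linear in $(s,t)$. One also needs that terms with $z\notin Z$ contribute only to powers of $\varepsilon$ strictly above $\delta(x,y)$, which is where the triangle inequality is used.
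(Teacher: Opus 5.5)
Your proposal is correct and follows the paper's proof essentially step for step. The paper also expands $F(E-X)_{xz}F(E-Y)_{zy}=F(E-X-Y+XY)_{xy}$ in the dual-basis coordinates $u_i=g_i(X)$, $v_i=g_i(Y)$, and extracts the lowest-degree parts $\kappa_{xz}u^{\alpha(x,z)}$, $\kappa_{zy}v^{\alpha(z,y)}$, $\kappa_{xy}(u+v)^{\alpha(x,y)}$ via \cref{lem:endpoint-leading-term}. It then uses the triangle inequality to keep only those $z$ with $\delta(x,z)+\delta(z,y)=\delta(x,y)$, and reads off the positive binomial coefficient of $u^\beta v^{\alpha(x,y)-\beta}$. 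Your explicit parameter $\varepsilon$ is the same total-degree grading the paper uses (substitute $u=\varepsilon s$, $v=\varepsilon t$), so the two arguments coincide.
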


\begin{proof}
For arbitrary $X,Y\in \mathcal{A}$, write
$X=\sum_{i\in\mathcal I}u_iB_i$ and $Y=\sum_{i\in\mathcal I}v_iB_i$ by \cref{lem:endpoint-basis}, where $u_i=g_i(X)$ and $v_i=g_i(Y)$ for every $i\in \mathcal{I}.$
Since $X,Y\in \mathcal{A}$ are arbitrary $\{u_i\}_{i\in\mathcal{I}}$ and $\{v_i\}_{i\in\mathcal{I}}$ are independent real variables.
By \cref{lem: homomorphism property},
\begin{equation}\label{eq: endpoint-box 1}
    \sum_{z=1}^n F(E-X)_{xz}F(E-Y)_{zy}=F(E-X-Y+XY)_{xy}.
\end{equation}
First fix $x,y,z\in[n].$
After picking the basis $\{B_i\}_{i\in \mathcal{I}}$, $F(E-X)_{xz}$ becomes a multivariate polynomial in $\{u_i\}_{i\in \mathcal{I}}$.
Let $P(u)$ denote this polynomial, and write $P(u)=\sum_{k=0}^d P_k(u)$, where each $P_k(u)$ is a homogeneous polynomial of degree $k.$
By \cref{lem:endpoint-leading-term},
\[
P(\varepsilon u)=F(E-\varepsilon X)_{xz}=\varepsilon^{\delta(x,z)}\kappa_{xz}u^{\alpha(x,z)}+O\left(\varepsilon^{\delta(x,z)+1}\right)\quad \text{as }\varepsilon\to 0,
\]
where $\kappa_{xz}>0.$
Also, $P(\varepsilon u)=\sum_{k=0}^d \varepsilon^k P_k(u).$
Thus, $P_k(u)\equiv 0$ whenever $k<\delta(x,z)$, and $P_{\delta(x,z)}(u)=\kappa_{xz}u^{\alpha(x,z)}.$
Therefore,
\begin{equation}\label{eq: endpoint-box 2}
    F(E- X)_{xz}=\sum_{k=0}^d P_k(u)=\kappa_{xz}u^{\alpha(x,z)}+\text{terms of total degree greater than $\delta(x,z)$}.
\end{equation}
Similarly,
\begin{equation}\label{eq: endpoint-box 3}
    F(E-Y)_{zy}=\kappa_{zy}v^{\alpha(z,y)}+\text{terms of total degree greater than $\delta(z,y)$},
\end{equation}
and
\begin{equation}\label{eq: endpoint-box 4}
    F(E-(X+Y-XY))_{xy}=\kappa_{xy}(u+v)^{\alpha(x,y)}+\text{terms of total degree greater than $\delta(x,y)$},
\end{equation}
where $\kappa_{zy},\kappa_{xy}>0.$

Substitute equations \eqref{eq: endpoint-box 2} - \eqref{eq: endpoint-box 4} into \eqref{eq: endpoint-box 1} and compare the lowest total degree term on both sides.
On the LHS, the term indexed by $z$ has total degree $\delta(x,z)+\delta(z,y)\ge \delta(x,y)$ by \cref{lem:endpoint-distance}.
Also, by the proof of \cref{lem:endpoint-distance}, there exists $z\in [n]$ such that $\delta(x,z)+\delta(z,y)= \delta(x,y).$
Thus, the lowest total degree term on the LHS equals
\[
\sum_{\substack{z\in[n]\\\delta(x,z)+\delta(z,y)=\delta(x,y)}}
\kappa_{xz}\kappa_{zy}u^{\alpha(x,z)}v^{\alpha(z,y)}.
\]
On the other hand, the lowest total degree term on the RHS equals $\kappa_{xy}(u+v)^{\alpha(x,y)}.$
Therefore,
\begin{equation}
\kappa_{xy}(u+v)^{\alpha(x,y)}
=\sum_{\substack{z\in[n]\\\delta(x,z)+\delta(z,y)=\delta(x,y)}}
\kappa_{xz}\kappa_{zy}u^{\alpha(x,z)}v^{\alpha(z,y)}.
\label{eq:box-polynomial}
\end{equation}
For a prescribed $0\leq\beta\leq\alpha(x,y)$, the coefficient of
$u^\beta v^{\alpha(x,y)-\beta}$ on the LHS of \eqref{eq:box-polynomial} equals
\[
\kappa_{xy}\prod_{i\in\mathcal I}
\binom{h_i(x,y)}{\beta_i}>0.
\]
Hence the monomial $u^\beta v^{\alpha(x,y)-\beta}$ also appears on the RHS of \eqref{eq:box-polynomial}.
Thus, there exists $z\in[n]$ such that $\alpha(x,z)=\beta$ and $\alpha(z,y)=\alpha(x,y)-\beta.$
\end{proof}

\subsection{Complementary Indices and the Distance \texorpdfstring{$\Delta_\omega(\cdot,\cdot)$}{Delta w(.,.)}}\label{subsec: complenmentray indices}

In this subsection, we will define the binary functions $\Delta_\omega(\cdot,\cdot)$ (\cref{def:central-gradients}) and show that the zero relation 
$\Delta_\omega(\cdot,\cdot) =0$ is an equivalence relation (\cref{lem: expression for Delta omega}).
This is achieved by deriving a closed form for $\Delta_\omega(x,y)$ (see \eqref{eq:squared-distance}).
\cref{lem: complementary degree} to \cref{lem:positive-gradients} are all preparations for this closed form.
Note that $\Delta_\omega(\cdot,\cdot)$ may not be a distance on $[n].$
In the following section, we will prove that $\Delta_\omega$ is a distance on the set of equivalence classes of its zero relation.

\begin{lemma}\label{lem: complementary degree}
    $\sum_{r\in \mathcal{F}_+}\ell_r(E)=d$, where $d$ is the common degree in \cref{lem:common-degree}.
\end{lemma}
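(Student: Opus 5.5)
We want $\sum_{r\in\mathcal F_+}\ell_r(E)=d$. The plan is to use the common degree identity of \cref{lem:common-degree}, $\sum_{a\in\mathcal F}\ell_a(X)=d$ for every $X\in\mathcal A$. We evaluate it at $X=E$, and then split the sum according to the partition $\mathcal F=\mathcal I\sqcup\mathcal F_+$ of \cref{def: endpoint and complementary indices}.

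Concretely, evaluating at $X=E$ gives
\[
d=\sum_{a\in\mathcal F}\ell_a(E)=\sum_{i\in\mathcal I}\ell_i(E)+\sum_{r\in\mathcal F_+}\ell_r(E).
\]
By the definition of endpoint indices, $\ell_i(E)=0$ for each $i\in\mathcal I$, so the first sum vanishes. The identity then reduces to the claim.

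As a sanity check, we can also argue through multiplicities. Since $E=T_{xx}$ by \cref{lem:diagonal-statistic}, \cref{lem:affine-multiplicities} gives $\ell_a(E)=h_a(x,x)$. The claim is therefore equivalent to $\sum_{a\in\mathcal F}h_a(x,x)=d$ together with $\delta(x,x)=0$. The first of these is the second statement of \cref{lem:common-degree}, and the second is \cref{lem: positivity of delta}. Both routes are immediate.

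There is no real obstacle here. The only point needing care is that the identity $\sum_a\ell_a(X)=d$ in \cref{lem:common-degree} holds on all of $\mathcal A$, not just on $\operatorname{int}\mathcal C$; this is needed because $E$ lies on the boundary of $\mathcal C$. That lemma already extends the identity to all of $\mathcal A$ via the analytic identity theorem.
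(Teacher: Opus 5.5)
Your proof is correct and is essentially the paper's own argument: evaluate the identity $\sum_{a\in\mathcal F}\ell_a(X)=d$ from \cref{lem:common-degree} at $X=E$, and drop the endpoint terms because $\ell_i(E)=0$ for $i\in\mathcal I$. Your remark that the identity is needed on all of $\mathcal A$, and not only on $\operatorname{int}\mathcal C$, is also correct, since $F(E)=I$ is not entrywise positive and so $E\notin\operatorname{int}\mathcal C$.
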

\begin{proof}
    By \cref{lem:common-degree}, $\sum_{a\in \mathcal{F}}\ell_a(X)=d$ for every $X\in\mathcal{A}.$
    Since $\ell_i(E)=0$ for every $i\in \mathcal{I}$,
    evaluating the identity at $E$ gives $\sum_{r\in\mathcal F_+}\ell_r(E)=d$.
\end{proof}

\begin{lemma}\label{lem: trace FX}
For every $X\in \mathcal{A}$,
$$\operatorname{Tr}F(X)
=\prod_{r\in\mathcal F_+}\left(\frac{\ell_r(X)}{\ell_r(0)}\right)^{\ell_r(E)}.$$
\end{lemma}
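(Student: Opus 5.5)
The plan is to compute the trace entry by entry, using the affine factorization of \cref{lem:affine-factors}, and to observe that every diagonal entry of $F(X)$ is the same polynomial.

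First, fix $x\in[n]$ and look at the exponents $h_a(x,x)$ for $a\in\mathcal F$. By \cref{lem:diagonal-statistic}, $T_{xx}=E$. By \cref{lem:affine-multiplicities}, $h_a(x,x)=\ell_a(T_{xx})=\ell_a(E)$ for every $a\in\mathcal F$. So the multiplicity vector on the diagonal does not depend on $x$. Splitting $\mathcal F=\mathcal I\sqcup\mathcal F_+$ as in \cref{def: endpoint and complementary indices}, every endpoint index $i\in\mathcal I$ has exponent $h_i(x,x)=\ell_i(E)=0$. Each such factor is therefore the constant $1$, following the convention stated in \cref{lem:affine-factors}.

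Second, substitute these exponents into \cref{lem:affine-factors}. This gives, for every $X\in\mathcal A$ and every $x\in[n]$,
\[
F(X)_{xx}=\frac1n\prod_{r\in\mathcal F_+}\left(\frac{\ell_r(X)}{\ell_r(0)}\right)^{\ell_r(E)} .
\]
The right-hand side is well defined because $\ell_r(0)>0$ by \cref{lem:factor-positivity}.

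Finally, sum over the $n$ diagonal entries. Since they are all equal to the same expression, the factor $\frac1n$ cancels and $\operatorname{Tr}F(X)$ equals the claimed product. This holds for all $X\in\mathcal A$ because \cref{lem:affine-factors} is a polynomial identity on all of $\mathcal A$.

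There is no real obstacle. The only point needing care is that $h_a(x,x)$ is the same for every $x$, and this comes directly from $T_{xx}=E$. Everything else is substitution.
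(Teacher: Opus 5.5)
Your proof is correct and is essentially the paper's own argument: it applies the affine factorization (\cref{lem:affine-factors}) to each diagonal entry, uses $h_a(x,x)=\ell_a(T_{xx})=\ell_a(E)$ to make every diagonal entry the same and to drop the endpoint factors, and then sums the $n$ equal diagonal entries so that the factor $\frac1n$ cancels.
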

\begin{proof}
    By the affine factorization (\cref{lem:affine-factors}), 
    \[
    F(X)_{xx}=\frac{1}{n}\prod_{a\in \mathcal{F}}\left(\frac{\ell_a(X)}{\ell_a(0)}\right)^{h_a(x,x)}
    =\frac{1}{n}\prod_{a\in \mathcal{F}}\left(\frac{\ell_a(X)}{\ell_a(0)}\right)^{\ell_a(E)}
    =\frac{1}{n}\prod_{r\in\mathcal{F}_+}\left(\frac{\ell_r(X)}{\ell_r(0)}\right)^{\ell_r(E)}.
    \]
    The second equality holds because $h_a(x,x)=\ell_a(T_{xx})=\ell_a(E)$ for every $a\in \mathcal{F}.$
    Taking trace on $F(X)$ completes the proof.
\end{proof}

\begin{definition}[Gradient, symmetrization and center]\label{def: gradients of affine form}
    For each $r\in\mathcal F_+$, let $G_r\in\mathcal A$ be the unique matrix satisfying
    \(
    \ell_r(X)=\ell_r(0)+\langle G_r,X\rangle
    \)
    for every $X\in\mathcal A$, and let $S_r:=\frac{G_r+G_r^{\tt T}}{2}.$
    We call $G_r$ the \emph{gradient} of $\ell_r$ and $S_r$ its \emph{symmetrization}.
    Define the \emph{center} of $\mathcal A$ by $Z(\mathcal{A}):=\{C\in \mathcal{A}\mid CX=XC\text{ for every }X\in \mathcal{A}\}$.
\end{definition}

\begin{lemma}\label{lem:central-gradients}
For every $r\in\mathcal F_+$, $G_r\in Z(\mathcal A)$.
\end{lemma}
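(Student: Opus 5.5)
The plan is to exploit that the trace of $F$ is invariant under cyclic permutation of products. By \cref{lem: homomorphism property}, for all $X,Y\in\mathcal A$ we have $\operatorname{Tr}F(XY)=\operatorname{Tr}(F(X)F(Y))=\operatorname{Tr}(F(Y)F(X))=\operatorname{Tr}F(YX)$. Then \cref{lem: trace FX} gives the polynomial identity
\[
\prod_{r\in\mathcal F_+}\ell_r(XY)^{\ell_r(E)}=\prod_{r\in\mathcal F_+}\ell_r(YX)^{\ell_r(E)}\qquad(X,Y\in\mathcal A),
\]
where the normalizing constants $\ell_r(0)$ cancel. Every exponent $\ell_r(E)$ is a positive integer. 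I then fix $Y$ and read both sides as polynomials in the coordinates of $X$.

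First step: take $Y=E+tB$ with $B\in\mathcal A$ and $|t|$ small. Since $E$ is the unit of $\mathcal A$ (\cref{lem:support-identity}), the maps $X\mapsto XY$ and $X\mapsto YX$ are invertible linear maps of $\mathcal A$ for such $t$ (for instance, use a Neumann series inside the algebra). Hence $X\mapsto\ell_r(XY)$ and $X\mapsto\ell_r(YX)$ are nonconstant affine forms, because each $\ell_r$ is nonconstant. They are therefore irreducible, and unique factorization in $\mathbb R[x_1,\ldots,x_m]$ applies to both sides.

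Second step: the factors $\ell_r$ for distinct $r$ are pairwise non-proportional, since they represent distinct irreducible factors in \cref{def:entry-factors}. Unique factorization then yields, for each small $t$, a bijection $\sigma_t$ of $\mathcal F_+$ such that $\ell_r(XY)$ is proportional to $\ell_{\sigma_t(r)}(YX)$. As $t\to0$ both affine forms converge to $\ell_r(X)$, and non-proportional forms stay a positive distance apart in projective space. So for $t$ small enough, $\sigma_t$ is the identity. The proportionality constant is then fixed by evaluating at $X=0$: both forms equal $\ell_r(0)>0$ there (\cref{lem:factor-positivity}). Hence $\ell_r(XY)=\ell_r(YX)$, which means $\langle G_r,XY-YX\rangle=0$. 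This holds for all $X\in\mathcal A$ and all $Y$ in an open neighborhood of $E$. Because the expression is linear in $Y$, it extends to all $Y\in\mathcal A$.

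Final step, converting the identity into centrality. The condition $\langle G_r,XY\rangle=\langle G_r,YX\rangle$ rewrites, using traces, as $\langle X^{\tt T}G_r-G_rX^{\tt T},Y\rangle=0$ for all $Y\in\mathcal A$. By \cref{lem:universal-algebra}, $\mathcal A$ is a transpose-closed algebra and $G_r\in\mathcal A$, so $X^{\tt T}G_r-G_rX^{\tt T}\in\mathcal A$; being orthogonal to all of $\mathcal A$, it is zero. As $X^{\tt T}$ ranges over all of $\mathcal A$, this gives $G_r\in Z(\mathcal A)$.

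The main obstacle is the second step. The argument there has to justify that the factor matching is the identity for $Y$ near $E$. This needs care with the fact that the exponents $\ell_r(E)$ may differ: since the unique factorization also matches multiplicities, a factor can only be matched with one of equal exponent. That is consistent with the continuity argument, so I do not expect it to cause trouble.
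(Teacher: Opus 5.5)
Your proof is correct. It uses the same ingredients as the paper's proof: trace invariance coming from the homomorphism property, the factorization $\operatorname{Tr}F(X)=\prod_{r\in\mathcal F_+}(\ell_r(X)/\ell_r(0))^{\ell_r(E)}$, unique factorization into the affine forms $\ell_r$, a rigidity argument that forces the factor matching to be the identity, and a final orthogonality argument inside the transpose-closed algebra $\mathcal A$. The two proofs differ in how they produce the invariance.

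The paper works by conjugation:
\begin{enumerate}
\item It builds $V(t)=E+\sum_{k\ge1}t^kY^k/k!\in\mathcal A$, which needs closedness of $\mathcal A$, and checks $V(t)V(-t)=E$.
\item It matches the factors of $\operatorname{Tr}F(V(t)XV(-t))=\operatorname{Tr}F(X)$ and fixes the scalar by evaluating at $X=E$.
\item It uses connectedness of $\mathbb R$ and discreteness of $\{\ell_r\}$ to get $\ell_r(V(t)XV(-t))=\ell_r(X)$ for all $t$.
\item It differentiates at $t=0$ to obtain $\langle G_r,YX-XY\rangle=0$.
\end{enumerate}

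You instead use the cyclic identity $\operatorname{Tr}F(XY)=\operatorname{Tr}F(YX)$ with $Y=E+tB$. This needs no exponential series, no inverse element, and no derivative. The maps $X\mapsto XY$ and $X\mapsto YX$ are linear, so you only need them to be invertible for small $t$. You justify this explicitly, whereas the paper leaves the nonconstancy of $X\mapsto\ell_r(V(t)XV(-t))$ implicit. You fix the scalar at $X=0$ using $\ell_r(0)>0$, and you only need small $t$.

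Your extension step is also simpler than you make it. Since $XE=EX$, you have $\langle G_r,XY-YX\rangle=t\langle G_r,XB-BX\rangle$, so a single nonzero $t$ for each direction $B$ already gives the identity for all $B$. You do not need an open neighborhood of $E$.

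One wording slip: in the rigidity step, the right-hand form converges to $\ell_{\sigma_t(r)}$, not to $\ell_r$. State it this way instead. Suppose $\sigma_{t_k}(r)=s$ along a sequence $t_k\to0$. Evaluating at $X=0$ gives $\ell_r(X(E+t_kB))=\frac{\ell_r(0)}{\ell_s(0)}\,\ell_s((E+t_kB)X)$ for all $X$. Letting $k\to\infty$ makes $\ell_r$ proportional to $\ell_s$, hence $s=r$.
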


\begin{proof}
To prove commutativity, we form Lie  bracket from 
infinitesimal conjugations as in Lie theory.

Fix arbitrary $Y\in \mathcal{A}$. 
For each real $t$, define
\[
V(t):=E+\sum_{k=1}^{\infty}\frac{t^kY^k}{k!}.
\]
This series is absolutely convergent.
Every partial sum is in $\mathcal{A}$, so $V(t)\in \mathcal{A}$ since the finite dimensional real vector space $\mathcal{A}$ is closed.
Also, it is easy to verify that $V(t)V(-t)=E$ for every $t\in \R.$
\cref{lem: homomorphism property,lem:polynomial-identity} give
\[
F(V(t))F(V(-t))=F(E)=I,
\qquad F(V(t)XV(-t))=F(V(t))F(X)F(V(-t))
\]
for every $t\in \R$ and every $X\in \mathcal{A}.$
Taking traces,
\[
\operatorname{Tr}(F(V(t)XV(-t)))=\operatorname{Tr}(F(V(-t))F(V(t))F(X))=\operatorname{Tr}(F(X)).
\]
Using \cref{lem: trace FX},
\begin{equation}\label{eq: central-gradients}
    \prod_{r\in\mathcal F_+}\left(\frac{\ell_r(V(t)XV(-t))}{\ell_r(0)}\right)^{\ell_r(E)}=\prod_{r\in\mathcal F_+}\left(\frac{\ell_r(X)}{\ell_r(0)}\right)^{\ell_r(E)}\qquad (\forall t\in \R,\,\forall X\in \mathcal{A}).
\end{equation}
Note that for every $t\in \R$, $\ell_r(V(t)XV(-t))$ is also an affine form in the coordinates of $X$.
Hence it is irreducible.
Thus, both sides of \eqref{eq: central-gradients} are unique prime factorizations.
Therefore, for every $t\in \R$ and every $r\in \mathcal{F}_+$, there exists a unique $\rho(r,t)\in \mathcal{F}_+$ and some $c(r,t)\neq 0$ such that $\ell_r(V(t)XV(-t))=c(r,t)\ell_{\rho(r,t)}(X)$ for every $X\in\mathcal{A}$, and $\ell_r(E)=\ell_{\rho(r,t)}(E).$
Taking $X=E$,
%in $\ell_r(V(t)XV(-t))=c(r,t)\ell_{\rho(r,t)}(X)$, 
we obtain $\ell_r(E)=c(r,t)\ell_{\rho(r,t)}(E)$.
Since $\ell_r(E)=\ell_{\rho(r,t)}(E)>0$, $c(r,t)=1$ for every $t\in \R$ and every $r\in \mathcal{F}_+.$
Now fix $r\in \mathcal{F}_+$.
For every $t\in \R$, $\ell_r(V(t)XV(-t))=\ell_{\rho(r,t)}(X)\in \{\ell_r(X)\}_{r\in \mathcal{F}_+}.$
As $t$ varies continuously in $\R$, $\ell_r(V(t)XV(-t))$ varies continuously in the space of real affine forms on $\mathcal{A}.$
However, $\{\ell_r(X)\}_{r\in \mathcal{F}_+}$ is a discrete set in this space.
This implies that $\ell_r(V(t)XV(-t))$ is a fixed point in $\{\ell_r(X)\}_{r\in \mathcal{F}_+}$ for every $t\in \R$.
At $t=0$, it is $\ell_r (V(0)XV(0))=\ell_r(X)$. Therefore,
%Indeed, 
$\ell_r(V(t)XV(-t))=
%\ell_r (V(0)XV(0))=
\ell_r(X),\,\forall t\in \R,\forall X\in \mathcal{A}.$
Thus, for every $t\in \R$ and every $X\in \mathcal{A}$,
\[
\ell_r(0)+\braket{G_r,V(t)XV(-t)}=\ell_r(V(t)XV(-t))=\ell_r(X)=\ell_r(0)+\braket{G_r,X}.
\]
This implies
\[
\left.\frac{d}{dt}\braket{G_r,V(t)XV(-t)}\right|_{t=0}=0\qquad  (\forall X\in\mathcal{A}).
\]
Note that $V(t)XV(-t)=(E+tY+O(t^2))X(E-tY+O(t^2))=E+t(YX-XY)+O(t^2)$ as $t\to 0.$
So $\left.\frac{d}{dt}V(t)XV(-t)\right|_{t=0}=YX-XY.$
Therefore,
\[
0=\langle G_r,YX-XY\rangle=\operatorname{Tr}(G_r^{\tt T}(YX-XY))
=\operatorname{Tr}\bigl((G_r^{\tt T}Y-YG_r^{\tt T})X\bigr)
\quad(\forall X,Y\in\mathcal A).
\]
It follows that $G_r^{\tt T}Y-YG_r^{\tt T}=0$ for every $Y\in \mathcal{A}.$
Since $\mathcal{A}$ is transpose closed, $G_r$ commutes with every $Y\in \mathcal{A}$. 
That is, $G_r\in Z(\mathcal{A}).$
\end{proof}

\begin{lemma}\label{lem: Gr and Sr}
    If $X\in \mathcal{A}$ is symmetric, then $\braket{G_r,X}=\braket{S_r,X}.$
\end{lemma}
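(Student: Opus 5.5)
The argument is a short trace computation using only symmetry of $X$ and the definition of $S_r$; nothing beyond \cref{def: gradients of affine form} and \cref{def:matrix-conventions} is needed. The plan is to split $G_r$ into its symmetric and antisymmetric parts and show that $X$ pairs to zero with the antisymmetric part.

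First, write
\[
G_r=S_r+\frac{G_r-G_r^{\tt T}}{2},
\]
where $S_r=\frac{G_r+G_r^{\tt T}}{2}$ as in \cref{def: gradients of affine form}. Bilinearity of $\langle\cdot,\cdot\rangle$ then gives
\[
\langle G_r,X\rangle=\langle S_r,X\rangle+\tfrac12\bigl(\langle G_r,X\rangle-\langle G_r^{\tt T},X\rangle\bigr).
\]
It therefore suffices to show $\langle G_r^{\tt T},X\rangle=\langle G_r,X\rangle$ whenever $X=X^{\tt T}$.

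For this, I use that transposition is an isometry for the inner product of \cref{def:matrix-conventions}, namely $\langle C^{\tt T},D^{\tt T}\rangle=\langle C,D\rangle$. This holds because
\[
\langle C^{\tt T},D^{\tt T}\rangle=\operatorname{Tr}(CD^{\tt T})=\operatorname{Tr}\bigl((CD^{\tt T})^{\tt T}\bigr)=\operatorname{Tr}(DC^{\tt T})=\operatorname{Tr}(C^{\tt T}D).
\]
Taking $C=G_r$ and $D=X$, and using $X^{\tt T}=X$, gives
\[
\langle G_r^{\tt T},X\rangle=\langle G_r^{\tt T},X^{\tt T}\rangle=\langle G_r,X\rangle,
\]
and the antisymmetric term vanishes.

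There is no real obstacle here. The only point worth noting is that $S_r\in\mathcal A$, since $\mathcal A$ is transpose closed by \cref{lem:universal-algebra}. This membership is not needed for the identity itself, but it makes the statement meaningful in later uses, where $\langle S_r,\cdot\rangle$ is regarded as a functional on $\mathcal A$.
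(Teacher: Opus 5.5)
Your proof is correct and is essentially the paper's argument. Both expand by bilinearity and use $\langle G_r^{\tt T},X\rangle=\langle G_r,X^{\tt T}\rangle=\langle G_r,X\rangle$ for symmetric $X$; your split into symmetric and antisymmetric parts is just a slightly longer packaging of the same one-line computation.
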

\begin{proof}
    Direct computation shows $\braket{S_r,X}=\braket{\frac{G_r+G_r^{\tt T}}{2},X}=\frac{1}{2}\braket{G_r,X}+\frac{1}{2}\braket{G_r,X^{\tt T}}=\braket{G_r,X}$.
\end{proof}

\begin{lemma}\label{lem: nonzero Sr}
    For every $r\in \mathcal{F}_+$, $S_r\neq 0.$
\end{lemma}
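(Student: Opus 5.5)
Suppose for contradiction that $S_r=0$ for some $r\in\mathcal F_+$. Then $G_r^{\tt T}=-G_r$, so by \cref{lem: Gr and Sr} we get $\langle G_r,X\rangle=\langle S_r,X\rangle=0$ for every symmetric $X\in\mathcal A$. Hence $\ell_r$ is constant, equal to $\ell_r(0)>0$ (\cref{lem:factor-positivity}), on the symmetric part of $\mathcal A$. The plan is to test this against the symmetric statistic matrices $T_{xy}+T_{yx}$ and against symmetric elements of $\mathcal C$.

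\textbf{Step 1: averaging over the diagonal.} By \cref{lem:diagonal-statistic} we have $T_{yx}=T_{xy}^{\tt T}$, so the matrix $\frac12(T_{xy}+T_{yx})$ is symmetric and lies in $\mathcal A$. Since $\ell_r$ is affine, \cref{lem:affine-multiplicities} gives
\[
\ell_r\bigl(\tfrac12(T_{xy}+T_{yx})\bigr)=\tfrac12\bigl(h_r(x,y)+h_r(y,x)\bigr).
\]
By the first paragraph this equals $\ell_r(0)$. Now take $x=y$: then $T_{xx}=E$, so $h_r(x,x)=\ell_r(E)=\ell_r(0)$, with no contradiction yet. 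So I must show that $\ell_r$ separates $E$ from some other symmetric point.

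\textbf{Step 2: a symmetric point where $\ell_r$ differs.} Consider $\ell_r(tE)=\ell_r(0)+t\langle G_r,E\rangle$. Since $E$ is symmetric, $\langle G_r,E\rangle=\langle S_r,E\rangle=0$, so $\ell_r$ is constant along the line through $0$ and $E$. I will use the identity $\sum_{a\in\mathcal F}\ell_a(X)=d$ (\cref{lem:common-degree}) together with \cref{lem: complementary degree}, which gives $\sum_{r'\in\mathcal F_+}\ell_{r'}(E)=d$. It follows that $\sum_{i\in\mathcal I}\ell_i(E)=0$ and $\sum_{i\in\mathcal I}\ell_i(0)=d-\sum_{r'\in\mathcal F_+}\ell_{r'}(0)$.

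Since $\ell_i(0)>0$ for every $i\in\mathcal I$, if $\mathcal I\neq\emptyset$ then $\sum_{r'\in\mathcal F_+}\ell_{r'}(0)<d=\sum_{r'\in\mathcal F_+}\ell_{r'}(E)$. So at least one complementary form strictly increases from $0$ to $E$, but this does not force it to be our particular $r$. I therefore need a finer argument for each fixed $r$.

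\textbf{Step 3: the main obstacle, using the trace.} Restrict \cref{lem: trace FX} to the line $X=tE$. The maps $\ell_{r'}(tE)$ are affine in $t$, and $F(tE)$ is symmetric. Since $E$ is central, $F(tE)$ commutes with everything, and $F(tE)=U+tE+h(tE)$, where $h(tE)=\sum_k t^k h_k(E)$ with the $h_k(E)$ orthogonal projections (\cref{lem: hkE mutually orth proj}). Therefore
\[
\operatorname{Tr}F(tE)=1+t\dim\mathcal S+\sum_k t^k\operatorname{rank}h_k(E),
\]
a polynomial in $t$ with nonnegative coefficients whose real roots are all negative. Comparing with the factorization $\prod_{r'\in\mathcal F_+}\bigl(\ell_{r'}(tE)/\ell_{r'}(0)\bigr)^{\ell_{r'}(E)}$, a factor with $S_r=0$ would be constant in $t$ and thus contribute nothing.

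That alone is not contradictory, so I will instead perturb along a symmetric direction. By Step 1, $\frac12(h_r(x,y)+h_r(y,x))=\ell_r(0)$ for all $x,y$. Averaging over all $(x,y)$ and using \cref{lem:factor-positivity} is consistent. So I would exploit integrality instead: $\ell_r(0)=h_r(x,x)=\ell_r(E)\in\mathbb Z_{>0}$, while $h_r(x,y)+h_r(y,x)=2\ell_r(0)$ for all pairs. I would combine this with \cref{lem:central-gradients}, which says $G_r$ is central and skew. Write $\ell_r(X)=\ell_r(0)+\langle G_r,X\rangle$ and evaluate at $X=E-B_i$ for $B_i$ from \cref{lem:endpoint-coordinates}. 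A skew central $G_r$ in a transpose-closed semisimple algebra has purely imaginary spectrum. I would then show that the homomorphism identity $F(X)F(Y)=F(XY)$ with $X=Y=E-\varepsilon B_i$, compared at order $\varepsilon$ on the $\mathcal F_+$-part of a diagonal entry, forces $\langle G_r,B_i+B_i^{\tt T}\rangle\neq0$ for some $i$. This contradicts the vanishing of $\langle G_r,\cdot\rangle$ on symmetric matrices.

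This last comparison is the step I expect to be hardest; my first attempt would be to expand $\operatorname{Tr}F\bigl((E-\varepsilon B)(E-\varepsilon B^{\tt T})\bigr)$, which has a symmetric argument, to second order.
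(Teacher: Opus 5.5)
As written, your proposal has a genuine gap: it never reaches a contradiction. The final paragraph is a plan, not an argument. Nothing justifies that the order-$\varepsilon$ comparison in $F(X)F(Y)=F(XY)$ ``forces $\langle G_r,B_i+B_i^{\tt T}\rangle\neq0$''. Since $B_i+B_i^{\tt T}$ is symmetric, that conclusion merely restates $S_r\neq0$. The remark about the spectrum of a skew central $G_r$ plays no role.

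The missing step is the one you set aside in Step 3, where you say a constant factor is ``not contradictory''. It is contradictory, by a degree count you already have the ingredients for.

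\emph{Upper bound.} If $S_r=0$, you showed $\ell_r(tE)=\ell_r(0)$ for all $t$; equivalently $\ell_r(E)=\ell_r(0)$, which you also found in Step 1. Then \cref{lem: trace FX} gives
\[
\operatorname{Tr}F(tE)=\prod_{r'\in\mathcal F_+\setminus\{r\}}\left(\frac{\ell_{r'}(tE)}{\ell_{r'}(0)}\right)^{\ell_{r'}(E)},
\]
a polynomial in $t$ of degree at most $\sum_{r'\neq r}\ell_{r'}(E)=d-\ell_r(E)\le d-1$. This uses \cref{lem: complementary degree} and $\ell_r(E)\in\mathbb Z_{>0}$.

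\emph{Exact degree.} Your own expansion $\operatorname{Tr}F(tE)=1+t\operatorname{rank}E+\sum_{k=2}^{d}t^k\operatorname{rank}h_k(E)$ has degree exactly $d$. By \cref{def: polynomial homomorphism}, $h_d(E)$ is a nonzero projection when $d\ge2$. When $d=1$, $E\neq0$, since otherwise $\mathcal A=0$ and no such $r$ exists. This $d$ is the common degree of \cref{lem:common-degree}, because the top homogeneous part of $F$ has degree $d$ and all entries share one degree.

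This contradiction finishes the proof. It is the same count the paper later uses in \cref{lem:integer-factors} to rule out $Q_r=1$.

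The paper's proof of this lemma runs the same degree argument on all of $\mathcal A$ rather than on the line $tE$:
\[
\|F(X)\|^2=\operatorname{Tr}F(X^{\tt T}X)=\prod_{r'\in\mathcal F_+}\bigl(1+\langle S_{r'},X^{\tt T}X\rangle/\ell_{r'}(0)\bigr)^{\ell_{r'}(E)}.
\]
The left side has degree $2d$, being a sum of squares of degree-$d$ entries. The right side would have degree below $2d$ if some $S_r=0$. Either version closes the argument, so the second-order expansion you anticipated as the hardest step is unnecessary.
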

\begin{proof}
    By \cref{lem: F transpose,lem: homomorphism property}, $F(X)^{\tt T}F(X)=F(X^{\tt T})F(X)=F(X^{\tt T}X)$ for every $X\in \mathcal{A}.$
    Taking traces and applying \cref{lem: trace FX},
    \begin{equation}\label{eq: nonzero Sr}
        \|F(X)\|^2=\operatorname{Tr}(F(X)^{\tt T}F(X))=\operatorname{Tr}F(X^{\tt T}X)=\prod_{r\in\mathcal F_+}\left(\frac{\ell_r(X^{\tt T}X)}{\ell_r(0)}\right)^{\ell_r(E)}.
    \end{equation}
    Since $X^{\tt T}X$ is symmetric, $\ell_r(X^{\tt T}X)=\ell_r(0)+\braket{G_r,X^{\tt T}X}=\ell_r(0)+\braket{S_r,X^{\tt T}X}$ by \cref{lem: Gr and Sr}.
    Substitution into \eqref{eq: nonzero Sr} gives
    \[
    \|F(X)\|^2=\prod_{r\in\mathcal F_+}\left(1+\frac{\braket{S_r,X^{\tt T}X}}{\ell_r(0)}\right)^{\ell_r(E)}.
    \]
    The LHS is a polynomial of degree $2d$ in the coordinates of $X$, where $d$ is the common degree in \cref{lem:common-degree}.
    By \cref{lem: complementary degree}, $\sum_{r\in\mathcal F_+}\ell_r(E)=d$.
    If some $S_r=0$, the RHS would have degree less than $2\sum_{r\in \mathcal{F}_+}\ell_r(E)=2d$, a contradiction.
    Therefore, $S_r\neq 0$ for every $r\in \mathcal{F}_+.$
\end{proof}

\begin{lemma}\label{lem:positive-gradients}
For every $r\in\mathcal F_+$, $S_r$ is positive semi-definite.
\end{lemma}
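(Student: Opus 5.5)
The plan is to test $S_r$ against orthogonal projections lying in $\mathcal A$, and to read off the sign of $\langle S_r,Q\rangle$ from the trace formula in \cref{lem: trace FX}.

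\emph{Step 1: spectral projections of $S_r$ lie in $\mathcal A$.} By \cref{lem:central-gradients}, $G_r\in Z(\mathcal A)$. Since $\mathcal A$ is transpose closed, $G_r^{\tt T}\in\mathcal A$, and $S_r$ is a real symmetric matrix in $\mathcal A$. Let $\lambda$ be any nonzero eigenvalue of $S_r$, and let $Q_\lambda$ be the orthogonal projection onto $\ker(\lambda I-S_r)$. By Lagrange interpolation, as in the proof of \cref{lem:support-identity}, $Q_\lambda=p(S_r)$ for some polynomial $p$ with $p(0)=0$. Hence $Q_\lambda\in\mathcal A$. Then
\[
\langle S_r,Q_\lambda\rangle=\operatorname{Tr}(S_rQ_\lambda)=\lambda\operatorname{rank}Q_\lambda .
\]
So it suffices to prove $\langle S_r,Q\rangle\ge0$ for every orthogonal projection $Q\in\mathcal A$. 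This forces every nonzero eigenvalue of $S_r$ to be positive.

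\emph{Step 2: $\operatorname{Tr}F(tQ)$ has nonnegative coefficients.} Fix an orthogonal projection $Q\in\mathcal A$. Write
\[
F(tQ)=U+tQ+\sum_{k=2}^d t^k h_k(Q).
\]
By \cref{lem:homogeneous-products}, $h_k(Q)^2=h_k(Q^2)=h_k(Q)$. By \cref{lem:homogeneous-transpose}, $h_k(Q)^{\tt T}=h_k(Q)$. So each $h_k(Q)$ is an orthogonal projection, and its trace is its rank. Consequently
\[
\operatorname{Tr}F(tQ)=1+t\operatorname{rank}Q+\sum_{k\ge2}t^k\operatorname{rank}h_k(Q).
\]
This is a polynomial in $t$ with constant term $1$ and nonnegative coefficients, so it is strictly positive for every $t\ge0$.

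\emph{Step 3: compare with the factored form.} By \cref{lem: trace FX} and \cref{lem: Gr and Sr} (note $Q$ is symmetric),
\[
\operatorname{Tr}F(tQ)=\prod_{r\in\mathcal F_+}\left(1+\frac{t\langle S_r,Q\rangle}{\ell_r(0)}\right)^{\ell_r(E)} .
\]
Here $\ell_r(0)>0$ by \cref{lem:factor-positivity}, and $\ell_r(E)\ge1$ for $r\in\mathcal F_+$. Suppose $\langle S_r,Q\rangle<0$ for some $r$. Then the right side vanishes at $t=-\ell_r(0)/\langle S_r,Q\rangle>0$, contradicting Step 2. Hence $\langle S_r,Q\rangle\ge0$ for every $r\in\mathcal F_+$, and Step 1 gives that $S_r$ is positive semi-definite.

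\emph{Main obstacle.} The delicate point is Step 1: we need the spectral projections of $S_r$ to lie inside $\mathcal A$, so that \cref{lem: trace FX} applies. This relies on $S_r\in\mathcal A$, which in turn uses transpose closure of $\mathcal A$ from \cref{lem:universal-algebra}, and on interpolation polynomials without constant term. No unit is needed for this, since $\lambda\ne0$. The rest is the observation that a nonnegative-coefficient polynomial with constant term $1$ has no positive real root.
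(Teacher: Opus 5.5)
Your proof is correct, and it takes a leaner route than the paper's.

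\textbf{The paper's route.} It first builds the symmetric center $Z_{\operatorname{sym}}(\mathcal A)$ and its joint eigenspace decomposition $\mathcal S=\mathcal H_1\oplus\cdots\oplus\mathcal H_b$. It then picks a generic element $T$ with distinct nonzero eigenvalues and forms the minimal central projections $R_j=p_j(T)$. Next it uses centrality of $G_r$ (\cref{lem:central-gradients}) to write all the $S_r=\sum_j\sigma_{rj}R_j$ simultaneously. Finally it expands the multi-parameter family $F(\sum_j t_jR_j)=\sum_\alpha B_\alpha t^\alpha$ and proves, via the homomorphism property, that every $B_\alpha$ is an orthogonal projection.

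\textbf{Your route.} You take spectral projections of the single matrix $S_r$ by interpolation polynomials with zero constant term. This only needs $S_r\in\mathcal A$, which holds because $G_r\in\mathcal A$ by the definition of the gradient and $\mathcal A$ is transpose closed. So your appeal to \cref{lem:central-gradients} is superfluous. You then get the projection property of the coefficients of the one-parameter family $F(tQ)$ directly from \cref{lem:homogeneous-products,lem:homogeneous-transpose}, exactly as the paper does for $Q=E$ in the proof of \cref{lem:integer-factors}.

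\textbf{Comparison.} The paper's final contradiction only sets one coordinate $t_j$ positive and the rest zero, so it is effectively using the one-parameter family $F(t_jR_j)$ too. Nothing is lost by your simplification. Your version avoids centrality entirely and proves the slightly stronger intermediate fact that $\langle G_r,Q\rangle\ge 0$ for every orthogonal projection $Q\in\mathcal A$. The paper's version additionally gives a simultaneous diagonalization of all $S_r$ in one family of central projections. That is extra structure, not needed for this lemma.
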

\begin{proof}
Let $Z_{\operatorname{sym}}(\mathcal{A}):=\{C\in Z(\mathcal{A})\mid C^{\tt T}=C\}$ be the symmetric center of $\mathcal{A}.$
Matrices in $Z_{\operatorname{sym}}(\mathcal{A})$ commute with each other.
Apply the simultaneous spectral theorem (\cref{thm:simultaneous-spectral}), we obtain a common orthonormal eigenbasis $\{v_1,v_2,\ldots,v_n\}$ for matrices in $Z_{\operatorname{sym}}(\mathcal{A})$.
We want to obtain a ``common eigenspace decomposition" of $\mathbb{R}^n$ for all matrices in $Z_{\operatorname{sym}}(\mathcal{A})$, defined as the finest joint eigenspace decomposition across all $C \in Z_{\operatorname{sym}}(\mathcal{A})$.
Specifically, define the relation ``$\sim$" on $[n]$ by: $i\sim j$ iff $v_i$ and $v_j$ belong to the same eigenspace for every $C\in Z_{\operatorname{sym}}(\mathcal{A}).$
By definition, ``$\sim$" is an equivalence relation.
Let $I_0,I_1,\ldots, I_b$ be the equivalent classes of ``$\sim$''.
Note that $E\in Z_{\operatorname{sym}}(\mathcal{A})$.
Also, $\mathcal{S}$ and $\mathcal{N}$ are eigenspaces corresponding to the eigenvalues $1$ and $0$ of $E$, respectively.
After reordering, we may assume that $\mathcal{S}=\operatorname{Span}_{\R}\{v_1,\ldots, v_m\}$ and $\mathcal{N}=\operatorname{Span}_{\R}\{v_{m+1},\ldots, v_n\}$.
\begin{claim}\label{claim: 0 eq class}
    The subset $\{m+1,m+2,\ldots,n\}$
    is an equivalent class of $\sim$. We will name it  $I_0$.
    %, such that $I_0=\{m+1,m+2,\ldots,n\}.$
\end{claim}
\begin{claimproof}{\ref{claim: 0 eq class}}
    For every $m<j\le n$, $v_j\in \mathcal{N}$.
    So $Cv_j=0$ for every $C\in \mathcal{A}.$
    Hence $v_j$ belongs to the 0-eigenspace of $C$, for every $C\in  Z_{\operatorname{sym}}(\mathcal{A})$.
    Thus, $\{m+1,\ldots, n\}$ are in the same equivalent class.
    For every $1\le i\le m$, $v_i$ belong the 1-eigenspace of $E$, so $i$ is not equivalent to $j$ where $m<j\le n.$
    Therefore, we may assume that  $I_0=\{m+1,m+2,\ldots,n\}.$
\end{claimproof}

By Claim~\ref{claim: 0 eq class}, we may assume that $[m]=I_1\sqcup I_2\sqcup \ldots\sqcup I_b.$
For every $i\in [b]$,
define $\mathcal{H}_i:=\operatorname{Span}_{\R}\{v_j\mid j\in I_i\}.$
Then $\mathcal{S}=\mathcal{H}_1\oplus\cdots\oplus\mathcal{H}_b$ and $\R^n=\mathcal{N}\oplus \mathcal{H}_1\oplus\cdots\oplus\mathcal{H}_b$.
This is the common eigenspace decomposition of $\R^n.$
\begin{claim}\label{claim: generic T}
    There exists $T\in Z_{\operatorname{sym}}(\mathcal{A})$ such that $Tv=\lambda_j v$ for every $v\in \mathcal{H}_j$ and $j\in [b]$, where $\lambda_1,\lambda_2,\ldots,\lambda_b\in \R$ are nonzero and 
    %mutually 
    pairwise distinct.
\end{claim}
\begin{claimproof}{\ref{claim: generic T}}
    Let $\{A_1,\ldots, A_k\}$ be a basis of $Z_{\operatorname{sym}}(\mathcal{A})$.
    Then there exists $\mu_{ij}\in \R$ such that $A_iv=\mu_{ij}v$ for every $v\in \mathcal{H}_j$ and all $i\in[k],j\in[b].$
    Let $T=\sum_{i=1}^kc_iA_i$, where $c_1,\ldots, c_k$ are to be determined.
    We have $Tv=\sum_{i=1}^kc_i\mu_{ij}v$, for every $v\in \mathcal{H}_j$ and $j\in[b].$
    For $c=(c_1,\ldots, c_k)$ and $j\in[b],$
    let $\lambda_j(c):=\sum_{i=1}^kc_i\mu_{ij}.$
    Then $Tv=\lambda_j(c)v$, for every $v\in \mathcal{H}_j$ and $j\in[b].$
    We want to find some $c\in \R^k$ such that $\{\lambda_j(c)\}_{j\in[b]}$ are nonzero and mutually distinct.
    To see this, let $\mathcal{U}_j:=\{c\in \R^k\mid \lambda_j(c)=0\}$ and $\mathcal{V}_{j\ell}:=\{c\in \R^k \mid \lambda_j(c)=\lambda_\ell(c)\}$, for all $j,\ell\in [b]$.
    For every $j\in[b]$, $(\mu_{1j},\ldots,\mu_{kj})\neq 0$. 
    Otherwise, for some $j\in[b]$, $(\mu_{1j},\ldots,\mu_{kj})= 0$.
    For this $j$ and for every $v\in \mathcal{H}_j$, we have $A_iv=\mu_{ij}v=0 \,(\forall i\in [k])$, and thus $v\in \mathcal{N}$.
    Contradiction.
    Therefore, $\mathcal{U}_j$ is a proper subspace of $\R^k$ for every $j\in [b]$.
    For all distinct $j,\ell\in[b]$, $(\mu_{1j},\ldots,\mu_{kj})\neq (\mu_{1\ell},\ldots,\mu_{k\ell})$.
    Otherwise, $A_iv=\mu_{ij}v=\mu_{i\ell}v$ for every $v\in \mathcal{H}_j$ or $v\in \mathcal{H}_{\ell}$.
    Since $\{A_i\}_{i\in [k]}$ is a basis of $Z_{\operatorname{sym}}(\mathcal{A})$, this would imply that
    every $C\in  Z_{\operatorname{sym}}(\mathcal{A})$ acts the same way
    on $\mathcal{H}_j \cup \mathcal{H}_{\ell}$, which would contradict
    that $I_j$ and $I_{\ell}$ are distinct equivalence classes of $\sim$.
    %that $j=\ell,$ a contradiction.
    Therefore, $\mathcal{V}_{j\ell}$ is a proper subspace of $\R^k$ for all distinct $j,\ell\in [b]$.
    Because a finite dimensional real vector space cannot be the union of finitely many proper subspaces, there exists $c\in\R^k$ such that $\lambda_j(c)\neq 0$ for every $j\in [b]$ and $\lambda_j(c)\neq \lambda_{\ell}(c)$ for all distinct $j,\ell\in[b].$
    Letting $T=\sum_{i=1}^kc_i A_i$ finishes the proof.
\end{claimproof}

For every $j\in [b]$, let $R_j$ be the orthogonal projection from $\R^n$ onto $\mathcal{H}_j.$
\begin{claim}\label{claim: Rj in symmetric center}
    For every $j\in[b]$, $R_j\in Z_{\operatorname{sym}}(\mathcal{A}).$
\end{claim}
\begin{claimproof}{\ref{claim: Rj in symmetric center}}
    Let $T\in Z_{\operatorname{sym}}(\mathcal{A})$ be the matrix in Claim~\ref{claim: generic T}.
    Then $Tv=\lambda_j v$ for every $v\in \mathcal{H}_j$ and $j\in [b]$, where $\lambda_1,\lambda_2,\ldots,\lambda_b\in \R$ are nonzero and mutually distinct.
    For every $j\in [b]$, define the polynomial $$p_j(t):=\frac{t}{\lambda_j}\prod_{1\le i\le b,i\neq j}\frac{t-\lambda_i}{\lambda_j-\lambda_i}.
    %\qquad (\forall X\in Z_{\operatorname{sym}}(\mathcal{A})).
    $$
    Then $p_j(0)=0$,
    and $p_j(\lambda_i)=\mathbb{I}[i=j]$ for all $i,j\in[b].$
    For every $j\in[b]$,
    $p_j(T)$ is a polynomial in $T$ with a zero constant term.
    Consequently, for every $v \in \mathcal{N}$ we have
    $p_j(T)v=0$, and for   every $v\in \mathcal{H}_i$, $p_j(T)v=p_j(\lambda_i)v=\mathbb{I}[i=j]v$ for all $i,j\in[b]$.
    Hence $p_j(T)=R_j$ for every $j\in [b].$
    Because $p_j$ is a polynomial vanishing at $0$, and $T$ belongs to the algebra $Z_{\operatorname{sym}}(\mathcal{A})$, we have $R_j=p_j(T)\in Z_{\operatorname{sym}}(\mathcal{A})$, for every $j\in [b].$
\end{claimproof}
Recall that $S_r=\frac{G_r+G_r^{\tt T}}{2}$ is the symmetrization of the gradient matrix $G_r$, for every $r\in \mathcal{F}_+.$
By \cref{lem:central-gradients}, $G_r\in Z(\mathcal{A})$.
Since $\mathcal{A}$ is transpose closed by \cref{lem:universal-algebra}, $G_r^{\tt T}\in Z(\mathcal{A}).$
Thus, $S_r\in Z_{\operatorname{sym}}(\mathcal{A})$, for every $r\in \mathcal{F}_+.$
%The matrices $\{S_r\}_{r\in \mathcal{F}_+}$ have a common orthonormal eigenbasis $\{v_1,\ldots, v_n\}.$
Therefore, for every $r\in \mathcal{F}_+$ and every $j\in[b]$, there exists $\sigma_{rj}\in \R$ such that $S_rv=\sigma_{rj}v$ for every $v\in \mathcal{H}_j.$
For every $r\in\mathcal{F}_+$, $S_r=\sum_{j=1}^b\sigma_{rj}R_j$.
Next we show the eigenvalues $\sigma_{rj}$ are nonnegative.

For every $t=(t_1,\ldots, t_b)\in \R^b$, let $X(t):=\sum_{j=1}^b t_j R_j.$
By Claim~\ref{claim: Rj in symmetric center}, $R_j\in Z_{\operatorname{sym}}(\mathcal{A})\subseteq \mathcal{A}.$
Hence $X(t)\in \mathcal{A}$ for every $t\in \R^b.$
Note that each entry of $F(X(t))$ is a multivariate polynomial of degree $d$ in the coordinates of $X(t)$, so it is also a multivariate polynomial of the same degree in $t=(t_1,\ldots,t_b).$
Expand $F(X(t))$ as
\begin{equation}\label{eq: FXt}
    F(X(t))=\sum_{\alpha\in\mathbb Z_{\geq0}^{b},\,|\alpha|\leq d}B_\alpha t_1^{\alpha_1}t_2^{\alpha_2}\cdots t_b^{\alpha_b} ,
\end{equation}
where $\alpha=(\alpha_1,\ldots,\alpha_b),\,B_\alpha \in \R^{n\times n}$ and $|\alpha|=\sum_{j=1}^{b}\alpha_j$.
By \cref{lem: homomorphism property}, $F(X(s)X(t))=F(X(s))F(X(t))$ for every $s,t\in \R^b$.
Since $X(s)X(t)=\sum_{j=1}^{b}s_jt_jR_j$, substitution into \eqref{eq: FXt} and using $F(X(s)X(t))=F(X(s))F(X(t))$ give
\[
\sum_{\alpha\in \Z_{\ge 0}^b,\,|\alpha|\le d}B_\alpha(s_1t_1)^{\alpha_1}(s_2t_2)^{\alpha_2}\cdots (s_bt_b)^{\alpha_b} =\sum_{\substack{\alpha\in \Z_{\ge 0}^b,\,|\alpha|\le d\\\beta\in \Z_{\ge 0}^b,\,|\beta|\le d}}B_\alpha B_{\beta}s_1^{\alpha_1}s_2^{\alpha_2}\cdots s_b^{\alpha _b}t_1^{\beta_1}t_2^{\beta_2}\cdots t_b^{\alpha_b},
\]
for every $s,t\in \R^b$.
Comparing coefficients gives
\[
B_\alpha B_\beta=
\begin{cases}B_\alpha,&\alpha=\beta,\\0,&\alpha\ne\beta,
\end{cases}
\]
for all $\alpha,\beta$ displayed in the above equation. 
Also, $F(X(t))^{\tt T}=F(X(t)^{\tt T})$ by \cref{lem: F transpose}.
Expanding shows $B_\alpha=B_{\alpha}^{\tt T}$ for every $\alpha\in \Z_{\ge 0}^b$, $|\alpha|\le d.$
Therefore, $\{B_{\alpha}\mid \alpha\in \Z_{\ge 0}^b,|\alpha|\le d\}$ are orthogonal projections.
Then $\operatorname{Tr}(B_\alpha)=\operatorname{rank}(B_{\alpha})$ for each $\alpha.$
%Since $F(X(t))$ is not the zero polynomial, there exists some $\alpha$ such that $B_\alpha\neq 0.$
At $t = (0,\ldots, 0)\in \R^b$, we have $X(0) = 0$, and $B_0 = F(0) = U$ of rank 1.
Taking trace on \eqref{eq: FXt},
\begin{equation}\label{eq: TrFXt 1}
    \operatorname{Tr}F(X(t))
=\sum_{\alpha\in\mathbb Z_{\geq0}^{b},\,|\alpha|\leq d}
(\operatorname{rank}B_\alpha)t_1^{\alpha_1}t_2^{\alpha_2}\cdots t_b^{\alpha_b}\geq1
\qquad(\forall t\in\mathbb R_{\geq0}^{b}).
\end{equation}
On the other hand, \cref{lem: trace FX} gives 
\begin{equation}\label{eq: TrFXt}
    \operatorname{Tr}F(X(t))
=\prod_{r\in\mathcal F_+}\left(\frac{\ell_r(X(t))}{\ell_r(0)}\right)^{\ell_r(E)}\qquad (\forall t\in \R^b).
\end{equation}
Since $X(t)$ is symmetric for every $t\in \R^b$, $\braket{G_r,X(t)}=\braket{S_r,X(t)}$ by \cref{lem: Gr and Sr}.
Hence 
\[
\begin{aligned}  \ell_r(X(t))=&\ell_r(0)+\braket{G_r,X(t)}=\ell_r(0)+\braket{S_r,X(t)}\\
=&\ell_r(0)+\left\langle{\sum_{j=1}^b \sigma_{rj}R_j,\,\sum_{j=1}^bt_jR_j}\right\rangle\\
=& \ell_r(0)+\sum_{j=1}^b\sigma_{rj}\operatorname{rank}(R_j)t_j.
\end{aligned}
\]
Plugging into \eqref{eq: TrFXt} gives
\[
\operatorname{Tr}F(X(t))=
\prod_{r\in\mathcal F_+}
\left(1+\sum_{j=1}^{b}\frac{\sigma_{rj}\operatorname{rank}(R_j)}{\ell_r(0)}t_j\right)^{\ell_r(E)}.
\]
If $\sigma_{rj}<0$ for some $r\in\mathcal{F}_+$ and $j\in [b]$, 
set $t_j=-\ell_r(0)/(\sigma_{rj}\operatorname{rank}(R_j))>0$ and $t_i=0$ for every $i\in[b]\setminus\{j\}$.
Then $\operatorname{Tr}F(X(t))=0.$
This contradicts \eqref{eq: TrFXt 1}. Thus, $\sigma_{rj}\geq0$ for every $r\in \mathcal{F}_+$ and every $j\in [b]$.
This proves $S_r$ is positive semi-definite for every $r\in \mathcal{F}_+.$
\end{proof}

Recall that $\tau:\mathcal{F}\to \mathcal{F}$ is the involution in \cref{lem:factor-transpose} such that $\ell_{\tau(a)}(X)=\ell_a(X^{\tt T})$ for every $a\in \mathcal{F}.$
Evaluating at $X=E$, $\ell_{\tau(a)}(E)=\ell_a(E),\,\forall a\in\mathcal{F}.$
Thus, $\ell_{\tau(a)}(E)>0\iff \ell_a(E)>0.$
This implies $\tau|_{\mathcal{F}_+}$ is an involution on $\mathcal{F}_+.$
It partitions
 $\mathcal{F}_+$  into orbits of sizes one or two.
 Let $\Omega$ be the set of such orbits. 
We are ready to
%to partition the set of complementary indices $\mathcal{F}_+$ into a set of orbits under the involution $\tau:\mathcal{F}\to \mathcal{F}$ is the 
%%% this espression is kind of funny, see \Omega appears outside and inside
%$\Omega=\{\omega\mid \omega\in \Omega\}$ 
%$\Omega$, such that 
%and 
define the  binary function
%%% distance 
%%% you actually said they are not quite distances on [n], (but on L_\omega classes. so here we don't say "distance".
$\Delta_{\omega}(\cdot,\cdot)$ for every $\omega\in \Omega.$

\begin{definition}\label{def:central-gradients}
Let $\omega_r$ 
% $\omega_r=\{r,\tau(r)\}$ 
%%% this notation suggests it is of size 2
be the orbit of $r\in \mathcal{F}_+$ induced by the involution $\tau|_{\mathcal{F}_+}.$ 
% It has size one or two.
Let $\Omega=\{\omega_r\mid r\in \mathcal{F}_+\}$ be the set of orbits. 
For $\omega\in\Omega$ and $x,y\in[n]$, define
\begin{equation}
W_\omega=\sum_{r\in\omega}G_r,\qquad
\Delta_\omega(x,y)=\sum_{r\in\omega}\bigl(\ell_r(E)-h_r(x,y)\bigr).
\label{eq:decrements}
\end{equation}
\end{definition}

\begin{remark}\label{rmk: Ww}
    If for an  $r\in \mathcal{F}_+$, $\tau(r)=r$, then $\omega_r=\{r\}$ has size one; otherwise, $\omega_r$ has size two.
    In the former case, $W_{\omega_r}=G_r=G_r^{\tt T}=S_r$; in the latter case, $W_{\omega_r}=G_r+G_r^{\tt T}=2S_r.$
    In either case,
     $W_\omega\in Z_{\operatorname{sym}}(\mathcal{A})$ is nonzero and positive semi-definite for every $\omega\in \Omega$ by \cref{lem: nonzero Sr,lem:positive-gradients}.
     Thus, there exists a unique positive semi-definite matrix $W_{\omega}^{\frac{1}{2}}$ such that $(W_{\omega}^{\frac{1}{2}})^2=W_{\omega}$, $\forall \omega\in \Omega.$
\end{remark}

\begin{definition}
    For every $\omega\in \Omega$, define the relation $L_\omega$ on $[n]$ by: $xL_\omega y\iff \Delta_\omega(x,y)=0.$
\end{definition}

\begin{lemma}\label{lem: expression for Delta omega}
For every $\omega\in\Omega$, the relation $L_\omega$ is an equivalence relation on $[n]$. 
Furthermore, there are at least two equivalent classes.
More precisely, 
\begin{equation}
\Delta_\omega(x,y)=\operatorname{Tr}W_\omega-n(W_\omega)_{xy}
=\frac n2\left\|W_\omega^{1/2}\mathbf e_x-W_\omega^{1/2}\mathbf e_y\right\|^2
\quad(\forall x,y\in[n]).
\label{eq:squared-distance}
\end{equation}
%where $W_\omega$ is a nonzero positive semidefinite matrix in the symmetric center of $\mathcal A$.
\end{lemma}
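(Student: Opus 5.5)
The plan is to compute $\Delta_\omega(x,y)$ directly from the affine forms, using their gradients $G_r$ from \cref{def: gradients of affine form}. For each $r\in\omega\subseteq\mathcal F_+$, \cref{lem:affine-multiplicities} and \cref{lem:statistic-pairing} give
\[
h_r(x,y)=\ell_r(T_{xy})=\ell_r(0)+\langle G_r,T_{xy}\rangle=\ell_r(0)+n(G_r)_{xy}.
\]
Similarly, $\ell_r(E)=\ell_r(0)+\langle G_r,E\rangle=\ell_r(0)+\operatorname{Tr}(G_r^{\tt T}E)$. By \cref{lem:support-identity}, $G_r^{\tt T}\in\mathcal A$ satisfies $G_r^{\tt T}E=G_r^{\tt T}$, so $\ell_r(E)=\ell_r(0)+\operatorname{Tr}G_r$. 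Subtracting the two displays gives $\ell_r(E)-h_r(x,y)=\operatorname{Tr}G_r-n(G_r)_{xy}$. Summing over $r\in\omega$ and using $W_\omega=\sum_{r\in\omega}G_r$ yields the first equality $\Delta_\omega(x,y)=\operatorname{Tr}W_\omega-n(W_\omega)_{xy}$.

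For the second equality, I use \cref{rmk: Ww}: $W_\omega$ is symmetric, positive semi-definite and lies in $\mathcal A$. By \cref{lem:normalize_rumsum_colsum_to_0}, $W_\omega$ has constant diagonal, so $(W_\omega)_{xx}=(W_\omega)_{yy}=\operatorname{Tr}W_\omega/n$. Expanding the square gives
\[
\tfrac n2\|W_\omega^{1/2}(\mathbf e_x-\mathbf e_y)\|^2=\tfrac n2\bigl((W_\omega)_{xx}+(W_\omega)_{yy}-2(W_\omega)_{xy}\bigr)=\operatorname{Tr}W_\omega-n(W_\omega)_{xy}.
\]
This proves \eqref{eq:squared-distance}.

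The equivalence-relation claim then follows at once. We have $x\,L_\omega\,y$ iff $W_\omega^{1/2}\mathbf e_x=W_\omega^{1/2}\mathbf e_y$. This is the relation of having equal images under the map $x\mapsto W_\omega^{1/2}\mathbf e_x$, which is reflexive, symmetric and transitive.

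The one step needing an actual argument is showing there are at least two classes; I would argue by contradiction. Suppose there is only one class, so $W_\omega^{1/2}\mathbf e_x=v$ for every $x\in[n]$. Then $W_\omega^{1/2}=v\mathbf 1^{\tt T}$, and hence $W_\omega^{1/2}\mathbf 1=nv$. On the other hand, $W_\omega\in\mathcal A$ has zero row sums by \cref{lem:normalize_rumsum_colsum_to_0}, so $W_\omega\mathbf 1=0$. Since $W_\omega^{1/2}$ is positive semi-definite with the same kernel as $W_\omega$, also $W_\omega^{1/2}\mathbf 1=0$. This forces $v=0$, hence $W_\omega=0$, contradicting the fact from \cref{rmk: Ww} that $W_\omega\neq0$.
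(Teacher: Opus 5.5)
Your proposal is correct and follows essentially the same route as the paper: you rewrite $\Delta_\omega(x,y)$ via the gradients $G_r$ together with \cref{lem:statistic-pairing}, then expand the square using the constant diagonal of $W_\omega$, read off the equivalence relation from $W_\omega^{1/2}\mathbf e_x=W_\omega^{1/2}\mathbf e_y$, and rule out a single class using $W_\omega\mathbf 1=0$ and $W_\omega\neq 0$. The differences are only cosmetic: you justify $\langle G_r,E\rangle=\operatorname{Tr}G_r$ explicitly via the unit $E$, and you use $\ker W_\omega^{1/2}=\ker W_\omega$ where the paper computes $\|W_\omega^{1/2}\mathbf 1\|^2=\mathbf 1^{\tt T}W_\omega\mathbf 1=0$.
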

%%%% JYC here here edited.
\begin{proof}
Fix $\omega\in \Omega$.
For all $x,y\in[n]$, 
\[
\begin{aligned}
    \Delta_\omega(x,y)=&\sum_{r\in \omega}(\ell_r(E)-h_r(x,y))\\
    =&\sum_{r\in \omega}(\ell_r(E)-\ell_r(T_{xy}))&&\text{by \cref{lem:affine-multiplicities}}\\
    =&\sum_{r\in \omega}\braket{G_r,E-T_{xy}}&&\text{by \cref{def: gradients of affine form}}\\
    =&\langle W_\omega,E-T_{xy}\rangle=\operatorname{Tr}W_\omega-\braket{W_{\omega},T_{xy}}\\
    =&\operatorname{Tr}(W_\omega)-n(W_\omega)_{xy} &&\text{by \cref{lem:statistic-pairing}}.
\end{aligned}
\]
On the other hand,
\[
\begin{aligned}
    \left\|W_\omega^{1/2}\mathbf e_x-W_\omega^{1/2}\mathbf e_y\right\|^2
    =&(\mathbf{e}_x-\mathbf{e}_y)^{\tt T}(W_{\omega}^{\frac{1}{2}})^{\tt T}W_{\omega}^{\frac{1}{2}}(\mathbf{e}_x-\mathbf{e}_y)\\
    =&(\mathbf{e}_x-\mathbf{e}_y)^{\tt T}W_{\omega}(\mathbf{e}_x-\mathbf{e}_y)\\
    =&(W_\omega)_{xx}+(W_\omega)_{yy}-2(W_\omega)_{xy}.
\end{aligned}
\]
Since $W_\omega\in\mathcal A$ has constant diagonal by \cref{lem:normalize_rumsum_colsum_to_0}, 
$(W_\omega)_{xx}=(W_\omega)_{yy}=\frac{1}{n}\operatorname{Tr}W_\omega$.
Thus, 
\[
\frac n2\left\|W_\omega^{1/2}\mathbf e_x-W_\omega^{1/2}\mathbf e_y\right\|^2
%=\frac{n}{2}\left(\frac{2\operatorname{Tr}W_\omega}{n}-2(W_\omega)_{xy}\right)
=\operatorname{Tr}W_\omega-n(W_\omega)_{xy}=
\Delta_\omega(x,y),
\]
for all $x,y\in [n].$

Thus $xL_\omega y\iff W_\omega^{1/2}\mathbf e_x = W_\omega^{1/2}\mathbf{e}_y$. This proves self-reflexivity, symmetry and transitivity of the relation $L_\omega$. 
So $L_\omega$ is an equivalence relation.
If $L_\omega$ had only one equivalent class, then there exists $\mathbf{v}\in\R^n$ such that $(W_\omega)^{\frac{1}{2}}\mathbf{e}_x=\mathbf{v}$ for every $x\in [n].$
Since $W_\omega\in \mathcal{A}$ has zero row sum by \cref{lem:normalize_rumsum_colsum_to_0}, $W_\omega\mathbf{1}=0$.
It follows that $\|(W_\omega)^{\frac{1}{2}}\mathbf{1}\|^2=\mathbf{1}^{\tt T}W_\omega\mathbf{1}=0$.
So $(W_\omega)^{\frac{1}{2}}\mathbf{1}=0$.
Hence $n\mathbf{v}=\sum_{x\in [n]}(W_\omega)^{\frac{1}{2}}\mathbf{e}_x=(W_\omega)^{\frac{1}{2}}\mathbf{1}=0.$
Consequently, $(W_\omega)^{\frac{1}{2}}\mathbf{e}_x=\mathbf{v}=0$ for every $x\in [n].$
This means every column of $(W_\omega)^{\frac{1}{2}}$ is 0.
This would imply $W_{\omega}=W_{\omega}^{\frac{1}{2}}=0$, contradicting \cref{rmk: Ww}.
\end{proof}

\begin{definition}\label{def: eq classes}
    For every $\omega\in\Omega$, let $X_\omega=[n]/L_\omega$ be the set of equivalence classes of the equivalence relation $L_\omega.$
    We have $|X_\omega|\ge 2$ by \cref{lem: expression for Delta omega}.
    For every $x \in [n]$ and $\omega \in \Omega$, we denote  by $[x]_{L_{\omega}}$ the equivalence class of $x$ modulo the relation $L_\omega$. Where no confusion can arise, we write simply $x_\omega$.
\end{definition}

Recall that $\{B_i\}_{i\in \mathcal{I}}$ is a basis of $\mathcal{A}$ from \cref{def:endpoint-basis}.
\begin{definition}\label{def:owned-endpoints}
For every $\omega\in\Omega$, let
\(
\mathcal I_\omega=\{i\in\mathcal I:\langle W_\omega,B_i\rangle=1\},
\)
\end{definition}

\begin{lemma}\label{lem:owned-atoms}
The sets $(\mathcal I_\omega)_{\omega\in\Omega}$ form a partition of $\mathcal I$. 
Moreover, for every $\omega\in\Omega$ and $x,y\in[n]$,
\begin{equation*}
\Delta_\omega(x,y)=\sum_{i\in\mathcal I_\omega}h_i(x,y),
\qquad
\sum_{\omega\in\Omega}\Delta_\omega(x,y)=\delta(x,y).
\end{equation*}
\end{lemma}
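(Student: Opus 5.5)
The plan is to express $\Delta_\omega$ linearly in the endpoint multiplicities $h_i$ through the dual basis $\{B_i\}_{i\in\mathcal I}$, and then pin down the coefficients $\langle W_\omega,B_i\rangle$ by evaluating at the special pairs supplied by \cref{lem:endpoint-coordinates}.

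\emph{Step 1: a linear formula for $\Delta_\omega$.} The proof of \cref{lem: expression for Delta omega} gives $\Delta_\omega(x,y)=\langle W_\omega,E-T_{xy}\rangle$. By \cref{lem: E-Txy}, $E-T_{xy}=\sum_{i\in\mathcal I}h_i(x,y)B_i$. Writing $c_{\omega i}:=\langle W_\omega,B_i\rangle$, we therefore get
\[
\Delta_\omega(x,y)=\sum_{i\in\mathcal I}c_{\omega i}\,h_i(x,y)\qquad(x,y\in[n]).
\]
It thus suffices to show that, for each $i\in\mathcal I$, the numbers $(c_{\omega i})_{\omega\in\Omega}$ are $0/1$ with exactly one entry equal to $1$.

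\emph{Step 2: each $c_{\omega i}$ is a nonnegative integer.} Fix $i\in\mathcal I$. By \cref{lem:endpoint-coordinates} there are $x,y$ with $\alpha(x,y)=\mathbf u_i$ and $B_i=E-T_{xy}$. For this pair, $c_{\omega i}=\Delta_\omega(x,y)$. This value is a nonnegative integer for two reasons. It is an integer because it equals $\sum_{r\in\omega}(\ell_r(E)-h_r(x,y))$, where $\ell_r(E)=h_r(x,x)\in\mathbb Z_{\ge0}$ and $h_r(x,y)\in\mathbb Z_{\ge0}$. It is nonnegative by the squared-norm expression \eqref{eq:squared-distance}.

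\emph{Step 3: the $c_{\omega i}$ sum to $1$.} Keep the same pair $(x,y)$. Since the orbits $\Omega$ partition $\mathcal F_+$,
\[
\sum_{\omega}c_{\omega i}=\sum_{r\in\mathcal F_+}\bigl(\ell_r(E)-h_r(x,y)\bigr).
\]
By \cref{lem: complementary degree} the first part sums to $d$. By \cref{lem:common-degree}, $\sum_{r\in\mathcal F_+}h_r(x,y)=d-\sum_{j\in\mathcal I}h_j(x,y)=d-\delta(x,y)$. Hence $\sum_\omega c_{\omega i}=\delta(x,y)=|\mathbf u_i|=1$.

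\emph{Conclusion.} A family of nonnegative integers summing to $1$ has exactly one entry equal to $1$ and the rest $0$. So each $i\in\mathcal I$ lies in exactly one $\mathcal I_\omega$, i.e.\ the sets $(\mathcal I_\omega)_{\omega\in\Omega}$ partition $\mathcal I$. Substituting into Step 1 gives $\Delta_\omega(x,y)=\sum_{i\in\mathcal I_\omega}h_i(x,y)$. Summing over $\omega$ and using the partition then yields $\sum_\omega\Delta_\omega(x,y)=\sum_{i\in\mathcal I}h_i(x,y)=\delta(x,y)$.

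The only delicate point is Step 2, specifically the integrality and nonnegativity of $c_{\omega i}$. This is exactly why I evaluate at the specific pair from \cref{lem:endpoint-coordinates}, rather than computing $\langle W_\omega,B_i\rangle$ abstractly.
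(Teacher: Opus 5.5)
Your proposal is correct and follows essentially the same route as the paper's proof. Like the paper, you evaluate $\langle W_\omega,B_i\rangle$ at the pair $(x,y)$ from \cref{lem:endpoint-coordinates}, getting the nonnegative integer $\Delta_\omega(x,y)$; you show these values sum to $\sum_{j\in\mathcal I}h_j(x,y)=1$ via \cref{lem: complementary degree,lem:common-degree}; and you expand $E-T_{xy}$ in the dual basis via \cref{lem: E-Txy}, with the only difference being that you write the linear formula $\Delta_\omega=\sum_i\langle W_\omega,B_i\rangle h_i$ first rather than last.
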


\begin{proof}
By definition, $\Delta_\omega(x,y)\in\Z$ for every $\omega\in \Omega$ and all $x,y\in [n].$
By \cref{lem: expression for Delta omega}, $\Delta_\omega(x,y)\in\Z_{\ge 0}.$
Also, in \cref{lem: expression for Delta omega} we proved
\(
\Delta_{\omega}(x,y)=\braket{W_\omega,E-T_{xy}}.
\)
\begin{claim}\label{claim: owned-atoms}
    For every $\omega\in \Omega$ and $i\in \mathcal{I}$, $\braket{W_\omega,B_i}\in \{0,1\}.$
    Moreover, $\sum_{\omega\in \Omega}\braket{W_\omega,B_i}=1,\,\forall i\in\mathcal{I}.$
\end{claim}
\begin{claimproof}{\ref{claim: owned-atoms}}
    Fix $i\in \mathcal{I}$.
    By \cref{lem:endpoint-coordinates}, there exist $x,y\in [n]$ such that $E-T_{xy}=B_i$ and $\alpha(x,y)=\mathbf{u}_i$, where $\mathbf{u}_i\in \mathbb{Z}_{\ge 0}^{|\mathcal{I}|}$ is the unit vector at $i.$
    Hence \(\braket{W_\omega,B_i}=\braket{W_\omega,E-T_{xy}}=\Delta_\omega(x,y)\in \mathbb{Z}_{\ge 0}\).
    Also,
    \[
    \begin{aligned}
        \sum_{\omega\in \Omega}\braket{W_\omega,B_i}=&\sum_{\omega\in \Omega}\Delta_\omega(x,y)\\
        =&\sum_{r\in \mathcal{F}_+}(\ell_r(E)-h_r(x,y))&&\text{by \cref{def:central-gradients}}\\
        =&~ d-\sum_{r\in \mathcal{F}_+}h_r(x,y)&&\text{by \cref{lem: complementary degree}}\\
        =&\sum_{j\in \mathcal{I}}h_j(x,y)&&\text{by \cref{lem:common-degree}}\\
        =&~ 1 &&\text{by $\alpha(x,y)=\mathbf{u}_i$}.
    \end{aligned}
    \]
    Therefore, $\braket{W_\omega,B_i}\in \{0,1\}$ for every $\omega\in \Omega$ and every $i\in \mathcal
    I$.
\end{claimproof}
By Claim~\ref{claim: owned-atoms}, each $i\in \mathcal{I}$ falls into exactly one $\mathcal{I}_\omega$, for some $\omega\in \Omega.$
Thus, $(\mathcal{I}_\omega)_{\omega\in\Omega}$ forms a partition of $\mathcal{I}.$
For every $\omega\in \Omega$ and all $x,y\in [n]$,
\[
\begin{aligned}
    \Delta_\omega(x,y)=&\braket{W_\omega,E-T_{xy}}\\
    =& \left\langle{W_\omega,\sum_{i\in \mathcal{I}}h_i(x,y)B_i}\right\rangle&&\text{by \cref{lem: E-Txy}}\\
    =& \sum_{i\in\mathcal{I}}h_i(x,y)\braket{W_\omega,B_i}\\
    =&\sum_{i\in \mathcal{I}_\omega}h_i(x,y) &&\text{by definition of $\mathcal{I}_\omega$ and Claim~\ref{claim: owned-atoms}}.
\end{aligned}
\]
This proves the first statement of the lemma.
For the second, $$\sum_{\omega\in \Omega}\Delta_\omega(x,y)=\sum_{\omega\in \Omega}\sum_{i\in \mathcal{I}_\omega}h_i(x,y)=\sum_{i\in \mathcal{I}}h_i(x,y)=\delta(x,y).$$
\end{proof}

\subsection{The Color Bijection}\label{subsec: color bijection}

In this subsection, we show that the map $\phi:[n]\to \prod_{\omega\in \Omega}X_\omega,\,x\mapsto ([x]_{L_{\omega}})_{\omega\in \Omega}$ is a bijection (\cref{lem: phi is bijection}).
Also, we show that $\Delta_\omega(x,y)$ only depends on $[x]_{L_\omega}$ and $[y]_{L_\omega}$ (\cref{lem:coordinate-locality}).
Thus, $\Delta_\omega(\cdot,\cdot)$ is a well-defined binary function on $X_\omega.$
In the next section, we show that it is a distance on $X_\omega$ and, more specifically, a highly regular graph distance.

\begin{definition}\label{def:quotient-coordinates}
For every $S\subseteq\Omega$, define
\(
R_S=\bigcap_{\omega\in\Omega\setminus S}L_\omega,
\)
where an empty intersection is the universal relation $[n]\times[n]$. For relations $R,T$ on $[n]$, write $x(R\circ T)y$ if there is $z\in[n]$ with $xRz$ and $zTy$.
\end{definition}

\begin{lemma}\label{lem:coordinate-composition}
For all $S,T\subseteq\Omega$,  $R_S\circ R_T=R_{S\cup T}$. If $S\cap T=\varnothing$, for $x,y\in[n]$ with $xR_{S\cup T}y$, there exists a unique $z\in[n]$ 
%such that $xRz$ and $zTy$, 
%%%% I thinbk you had a typo there.
such that $xR_Sz$ and $zR_Ty$.
\end{lemma}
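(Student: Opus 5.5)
The plan is to translate each relation $L_\omega$ into a support condition on the exponent vector $\alpha(\cdot,\cdot)$. The bridge is \cref{lem:owned-atoms}: $\Delta_\omega(x,y)=\sum_{i\in\mathcal I_\omega}h_i(x,y)$ and the $\mathcal I_\omega$ partition $\mathcal I$. Since every $h_i\ge 0$, this gives
\[
x\,L_\omega\,y\iff h_i(x,y)=0\ \text{for all } i\in\mathcal I_\omega .
\]
Consequently $x\,R_S\,y$ holds iff $\alpha(x,y)$ is supported on $\bigcup_{\omega\in S}\mathcal I_\omega$. With this dictionary, the three claims reduce to \cref{lem:endpoint-box} and \cref{lem:endpoint-distance}.

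\emph{Inclusion $R_S\circ R_T\subseteq R_{S\cup T}$.} Suppose $x\,R_S\,z$ and $z\,R_T\,y$. For each $\omega\notin S\cup T$ we have $x\,L_\omega\,z$ and $z\,L_\omega\,y$. Transitivity of $L_\omega$ (\cref{lem: expression for Delta omega}) then gives $x\,L_\omega\,y$, so $x\,R_{S\cup T}\,y$.

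\emph{Reverse inclusion.} Suppose $x\,R_{S\cup T}\,y$, so $\alpha(x,y)$ is supported on $\bigcup_{\omega\in S\cup T}\mathcal I_\omega$. Define $\beta\in\mathbb Z_{\ge0}^{\mathcal I}$ by $\beta_i=h_i(x,y)$ for $i\in\bigcup_{\omega\in S}\mathcal I_\omega$ and $\beta_i=0$ otherwise. Then $0\le\beta\le\alpha(x,y)$, so \cref{lem:endpoint-box} supplies $z$ with $\alpha(x,z)=\beta$ and $\alpha(z,y)=\alpha(x,y)-\beta$.
\begin{itemize}
\item The vector $\alpha(x,z)=\beta$ is supported on the $S$-indices, hence $x\,R_S\,z$.
\item The vector $\alpha(z,y)$ is supported on $\bigcup_{\omega\in T\setminus S}\mathcal I_\omega\subseteq\bigcup_{\omega\in T}\mathcal I_\omega$, hence $z\,R_T\,y$.
\end{itemize}
This step relies on $\mathcal I_\omega$ being disjoint for distinct $\omega$, which is exactly where the partition statement in \cref{lem:owned-atoms} is needed. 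I expect this to be the only substantive step.

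\emph{Uniqueness when $S\cap T=\varnothing$.} Suppose $z$ and $z'$ both satisfy $x\,R_S\,z$, $z\,R_T\,y$, and likewise for $z'$. We show $z\,L_\omega\,z'$ for every $\omega\in\Omega$.
\begin{itemize}
\item If $\omega\notin S$, then $x\,L_\omega\,z$ and $x\,L_\omega\,z'$, so $z\,L_\omega\,z'$.
\item If $\omega\in S$, then $\omega\notin T$ by disjointness, so $z\,L_\omega\,y$ and $z'\,L_\omega\,y$, again giving $z\,L_\omega\,z'$.
\end{itemize}
Hence $\Delta_\omega(z,z')=0$ for all $\omega$. By \cref{lem:owned-atoms}, $\delta(z,z')=\sum_\omega\Delta_\omega(z,z')=0$, and \cref{lem:endpoint-distance} (item 1) gives $z=z'$.
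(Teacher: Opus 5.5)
Your proposal is correct and follows essentially the same route as the paper: the same choice of $\beta$ fed into \cref{lem:endpoint-box} for the reverse inclusion, and the same reduction of uniqueness to $\delta(z,z')=0$ via \cref{lem:owned-atoms} and \cref{lem:endpoint-distance}. The only difference is packaging. The paper phrases your case split as $R_S\cap R_T=R_{S\cap T}=R_\varnothing$, and your reading of $R_S$ as a support condition on $\alpha$ is a tidier bookkeeping of its computations with $\Delta_\omega=\sum_{i\in\mathcal I_\omega}h_i$.
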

\begin{proof}
Fix arbitrary $S,T\subseteq \Omega.$
We first show $R_S\circ R_T\subseteq R_{S\cup T}$.
The binary relation
$R_S$ is to impose the relations of $L_{\omega}$, for all $\omega \in \Omega$ \emph{not} in $S$.
For any $S \subseteq S'$, we have $\Omega \setminus S' \subseteq \Omega \setminus S$.
Thus, $R_S=\bigcap_{\omega\in\Omega\setminus S}L_\omega
\subseteq \bigcap_{\omega\in\Omega\setminus S'}L_\omega  = R_{S'}$.
As both $S, T \subseteq S\cup T$, we have both $R_S \subseteq R_{S\cup T}$ and 
$R_T  \subseteq R_{S\cup T}$. Therefore their composition $R_S\circ R_T\subseteq R_{S\cup T}$,
since $R_{S\cup T}$ is an equivalence relation, which is closed under composition. 

%Suppose $x(R_S\circ R_T )y.$
%Then there exists $z\in [n]$ with $xR_Sz$ and $zR_T y.$
%If $S\cup T=\Omega$, then $R_{S\cup T}=[n]\times [n]$ and the inclusion becomes trivial.
%If $S\cup T\subsetneq\Omega,$ pick $\omega\in \Omega\setminus(S\cup T).$
%Then $\omega\in \Omega\setminus S$ and $\omega\in \Omega\setminus T.$ 
%Hence $xR_Sz$ implies $xL_\omega z.$
%Similarly, $zR_Ty$ implies $zL_\omega y.$
%Since $L_\omega$ is transitive, $xL_\omega y.$
%This holds for arbitrary $\omega\in \Omega\setminus (S\cup T).$
%Thus, $R_S\circ R_T\subseteq R_{S\cup T}.$

Next we show $R_{S\cup T}\subseteq R_S\circ R_T.$
Suppose $xR_{S\cup T} y.$
Then $\Delta_\omega(x,y)=0$ for every $\omega\in \Omega\setminus (S\cup T).$
For every $i\in \mathcal{I},$ define
\[
\beta_i=\begin{cases}
h_i(x,y),&i\in\bigcup_{\omega\in S}\mathcal I_\omega,\\
0,&\text{otherwise}.
\end{cases}
\]
Then $0\le \beta\le \alpha(x,y)$, entrywise.
\cref{lem:endpoint-box} gives  $z\in [n]$ with
$\alpha(x,z)=\beta$ and $\alpha(z,y)=\alpha(x,y)-\beta$.
We claim $xR_S z$ and $zR_Ty.$
Indeed, for every $\omega\in \Omega\setminus S,$ 
\[
\Delta_\omega(x,z)=\sum_{i\in \mathcal{I}_\omega}h_i(x,z)=\sum_{i\in \mathcal{I}_\omega}\beta_i=0.
\]
The first equality holds by \cref{lem:owned-atoms}; the second holds by $\alpha(x,z)=\beta$; while the last holds by the definition of $\beta.$
Thus, $xL_\omega z$ for every $\omega \in \Omega\setminus S$.
That is, $xR_S z.$
Now we check that $zR_T y.$
Fix arbitrary $\omega\in \Omega\setminus T$.
If $\omega\in S$, then 
\[
\Delta_{\omega}(z,y)=\sum_{i\in \mathcal{I}_\omega}h_i(z,y)=\sum_{i\in \mathcal{I}_\omega}(h_i(x,y)-\beta_i)=0.
\]
The first equality holds by \cref{lem:owned-atoms}; the second holds by $\alpha(z,y)=\alpha(x,y)-\beta$; while the last holds by the definition of $\beta.$

If $\omega\notin S,$ then since we assumed that $\omega\notin T$,
we have $\omega\notin S\cup T.$ 
For every $i \in  \mathcal{I}_\omega$, $\beta_i =0$, and $h_i(z,y)= h_i(x,y)$.
We have
\[
\Delta_{\omega}(z,y)=\sum_{i\in \mathcal{I}_\omega}h_i(z,y)=
%\sum_{i\in \mathcal{I}_\omega}(h_i(x,y)-\beta_i)=
\sum_{i\in \mathcal{I}_\omega}h_i(x,y)=\Delta_\omega(x,y)=0.
\]
In either case, $\Delta_\omega(z,y)=0$ for arbitrary $\omega\in \Omega\notin T.$
Hence $zR_T y.$

To prove uniqueness, we first claim $R_S\cap R_T=R_{S\cap T}.$
To see this,
    $$R_S\cap R_T=\bigcap_{\omega\in\Omega\setminus S}L_\omega\cap \bigcap_{\omega\in\Omega\setminus T}L_\omega=\bigcap_{\omega\in(\Omega\setminus S)\cup (\Omega\setminus T)}L_\omega=\bigcap_{\omega\in \Omega\setminus(S\cap T)}L_\omega=R_{S\cap T}.$$
Now suppose $S\cap T=\emptyset$, and there exist $z,w\in [n]$ such that $xR_SzR_Ty$ and $xR_SwR_Ty.$
By symmetry and transitivity, $zR_SxR_Sw$ and $zR_TyR_T w$.
Hence $z(R_S\cap R_T)w.$
By $R_S\cap R_T=R_{S\cap T}=R_{\emptyset}$, $zR_{\emptyset }w.$
That is, $\Delta_\omega(z,w)=0$ for every $\omega\in \Omega.$
By \cref{lem:owned-atoms}, $\delta(z,w)=\sum_{\omega\in \Omega}\Delta_\omega(z,w)=0.$
Since $\delta$ is a distance by \cref{lem:endpoint-distance}, $z=w.$
This finishes the proof of uniqueness.
\end{proof}

\begin{lemma}[Color bijection]\label{lem: phi is bijection}
The map
\begin{equation*}
\phi: [n]\to \prod_{\omega\in \Omega}X_\omega,\qquad x\longmapsto\bigl([x]_{L_\omega}\bigr)_{\omega\in\Omega}
\end{equation*}
is a bijection.
\end{lemma}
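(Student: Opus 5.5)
Injectivity and surjectivity are handled separately, and both rest on the results just established.

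For injectivity, suppose $\phi(x)=\phi(y)$. Then $x L_\omega y$ for every $\omega\in\Omega$, so $\Delta_\omega(x,y)=0$ for all $\omega$. By \cref{lem:owned-atoms}, $\delta(x,y)=\sum_{\omega\in\Omega}\Delta_\omega(x,y)=0$. Since $\delta$ is a distance by \cref{lem:endpoint-distance}, this gives $x=y$. This is the same argument as the uniqueness part of \cref{lem:coordinate-composition}.

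For surjectivity, fix a target tuple $(c_\omega)_{\omega\in\Omega}$ with each $c_\omega\in X_\omega$. For each $\omega$ choose a representative $x^{(\omega)}\in[n]$ of $c_\omega$.

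Enumerate $\Omega=\{\omega_1,\ldots,\omega_k\}$ and build $z_1,\ldots,z_k\in[n]$ inductively so that $[z_j]_{L_{\omega_i}}=c_{\omega_i}$ for all $i\le j$. Start with $z_1=x^{(\omega_1)}$.

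Given $z_{j-1}$, set $S=\{\omega_1,\ldots,\omega_{j-1}\}$ and $T=\Omega\setminus S$. Since $R_{\Omega}=[n]\times[n]$, we have $z_{j-1}\,R_{S\cup T}\,x^{(\omega_j)}$. Because $S\cap T=\varnothing$, \cref{lem:coordinate-composition} supplies $z_j$ with
\[
z_{j-1}\,R_T\,z_j\qquad\text{and}\qquad z_j\,R_S\,x^{(\omega_j)}.
\]
Take care with the direction here: we want $z_j$ to agree with $z_{j-1}$ on the coordinates in $S$, and with $x^{(\omega_j)}$ on coordinate $\omega_j$.

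Unwinding the definition $R_S=\bigcap_{\omega\notin S}L_\omega$, the relation $z_{j-1}R_T z_j$ says $z_{j-1}L_\omega z_j$ for all $\omega\notin T$, that is, for all $\omega\in S$. So the first $j-1$ coordinates are preserved. Likewise $z_j R_S x^{(\omega_j)}$ says $z_j L_\omega x^{(\omega_j)}$ for all $\omega\notin S$, in particular for $\omega=\omega_j$. Hence $[z_j]_{L_{\omega_j}}=c_{\omega_j}$, and the induction step holds.

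After $k$ steps, $\phi(z_k)=(c_\omega)_{\omega\in\Omega}$, which proves surjectivity.

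The only delicate point is getting the roles of $S$ and $T$ right when applying \cref{lem:coordinate-composition}, since $R_S$ fixes the coordinates outside $S$. Otherwise the argument is routine. An equivalent route is to use the composition identity $R_{\{\omega_1\}}\circ\cdots\circ R_{\{\omega_k\}}=R_\Omega$, obtained by iterating the lemma, and read off the coordinates along the resulting chain.
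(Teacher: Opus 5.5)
Your proof is correct and follows the paper's route: injectivity via $\delta(x,y)=\sum_{\omega\in\Omega}\Delta_\omega(x,y)=0$, and surjectivity by fixing the coordinates one at a time with \cref{lem:coordinate-composition}. The only difference is cosmetic. The paper applies the lemma with $S=\{\omega_j\}$ and $T=\Omega\setminus\{\omega_j\}$, so each step changes only coordinate $\omega_j$. You instead split off the already-fixed coordinates $\{\omega_1,\ldots,\omega_{j-1}\}$ and use the composition in the order $R_T\circ R_S$, which is equally valid since the lemma holds for every pair of subsets.
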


\begin{proof}
    We first prove injectivity.
    Suppose $x,y\in [n]$ satisfy $[x]_{L_\omega}=[y]_{L_\omega}$ for every $\omega\in \Omega.$
    Then $\Delta_\omega(x,y)=0 \,(\forall \omega\in \Omega).$
    By \cref{lem:owned-atoms}, $\delta(x,y)=\sum_{\omega\in \Omega}\Delta_\omega(x,y)=0.$
    Since $\delta$ is a distance, $x=y.$
    Next we prove surjectivity.
    Enumerate $\Omega$ by $\Omega=\{\omega_1,\ldots, \omega_k\}.$
    For every $j\in [k]$, pick an arbitrary $[y_j]_{L_{\omega_j}}\in X_{\omega_j}$ with representative $y_j\in [n].$
    We want to show that there exists $x\in [n]$ such that $xL_{\omega_j}y_j$ for every $j\in [k].$
    For arbitrary $\omega\in \Omega$, apply \cref{lem:coordinate-composition} for $S=\{\omega\}$ and $T=\Omega\setminus \{\omega\}.$
    Then $R_{\{\omega\}}\circ L_\omega=R_{\Omega}=[n]\times [n].$
    Thus, for every $x,y\in [n],$ $x(R_{\{\omega\}}\circ L_\omega)y.$
    That is, there exists $z\in [n]$ such that $xR_{\{\omega\}}zL_\omega y.$

    Start from an arbitrary $x_0\in [n],$ the above argument gives $x_1\in [n]$ such that $x_0R_{\{\omega_1\}}x_1L_{\omega_1} y_1.$
    By the same argument, there exists $x_2\in [n]$ with $x_1R_{\{\omega_2\}}x_2L_{\omega_2} y_2.$
    Then $x_2L_{\omega_2}y_2$, and $x_2L_{\omega_1}x_1L_{\omega_1}y_1.$
    Repeat the process, the $k$-th step gives $x_k\in [n]$ with $x_{k-1}R_{\{\omega_k\}}x_kL_{\omega_k} y_k.$
    Then $x_kL_{\omega_k}y_k$, and $x_kL_{\omega_j}x_{k-1}L_{\omega_j}y_j$ for every $1\le j<k.$
    Letting $x=x_k$ proves surjectivity.
\end{proof}

\begin{lemma}\label{lem:coordinate-locality}
For every $\omega\in\Omega$ and $i\in\mathcal I_\omega$, for all $x,y\in[n]$ the value $h_i(x,y)$ depends only on $([x]_{L_\omega},[y]_{L_\omega})$. The same holds for $\Delta_\omega(x,y)$.
\end{lemma}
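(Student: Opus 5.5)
The plan is to prove invariance in each argument separately, for fixed $\omega\in\Omega$ and $i\in\mathcal I_\omega$. For a vector $\gamma\in\mathbb Z_{\ge0}^{\mathcal I}$, write $\gamma|_\omega$ for its restriction to the coordinates in $\mathcal I_\omega$, and $\gamma|_{\neg\omega}$ for the vector that keeps the coordinates outside $\mathcal I_\omega$ and sets the coordinates in $\mathcal I_\omega$ to $0$. Since $h_i(x,y)$ is a coordinate of $\alpha(x,y)|_\omega$, it suffices to show the following for all $x,x',y,y'\in[n]$:
\[
x L_\omega x' \implies \alpha(x,y)|_\omega=\alpha(x',y)|_\omega,
\qquad
y L_\omega y' \implies \alpha(x,y)|_\omega=\alpha(x,y')|_\omega .
\]
The statement for $\Delta_\omega$ then follows immediately from \cref{lem:owned-atoms}, since $\Delta_\omega(x,y)=\sum_{i\in\mathcal I_\omega}h_i(x,y)$.

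\emph{First argument.} Take $\beta=\alpha(x,y)|_{\neg\omega}$, which satisfies $0\le\beta\le\alpha(x,y)$. \cref{lem:endpoint-box} gives $z$ with $\alpha(x,z)=\beta$ and $\alpha(z,y)=\alpha(x,y)-\beta$. By \cref{lem:owned-atoms}, $\Delta_\omega(x,z)=\sum_{i\in\mathcal I_\omega}\beta_i=0$, so $zL_\omega x$. For every $\omega'\neq\omega$, $\Delta_{\omega'}(z,y)=0$ because the coordinates of $\alpha(z,y)$ in $\mathcal I_{\omega'}$ vanish; hence $zR_{\{\omega\}}y$. Moreover $\alpha(z,y)|_\omega=\alpha(x,y)|_\omega$. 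Doing the same for $x'$ yields $z'$ with $z'L_\omega x'$, $z'R_{\{\omega\}}y$ and $\alpha(z',y)|_\omega=\alpha(x',y)|_\omega$. Since $L_\omega$ is an equivalence relation (\cref{lem: expression for Delta omega}) and $xL_\omega x'$, we get $zL_\omega z'$, that is, $zR_{\Omega\setminus\{\omega\}}z'$. Also $zR_{\{\omega\}}z'$ through $y$. Therefore $z\,(R_{\Omega\setminus\{\omega\}}\cap R_{\{\omega\}})\,z'$, and this intersection is $R_\varnothing$, as computed in the proof of \cref{lem:coordinate-composition}. So $\delta(z,z')=\sum_{\omega''}\Delta_{\omega''}(z,z')=0$, and hence $z=z'$ by \cref{lem:endpoint-distance}. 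This gives $\alpha(x,y)|_\omega=\alpha(z,y)|_\omega=\alpha(x',y)|_\omega$.

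\emph{Second argument.} The construction is mirrored: take $\beta=\alpha(x,y)|_\omega$, so that \cref{lem:endpoint-box} gives $z$ with $\alpha(x,z)|_\omega=\alpha(x,y)|_\omega$, $zR_{\{\omega\}}x$ and $zL_\omega y$. The corresponding $z'$ for $y'$ satisfies $z'R_{\{\omega\}}x$ and $z'L_\omega y'$. Then $zL_\omega z'$ via $y L_\omega y'$, and $zR_{\{\omega\}}z'$ via $x$, so again $z=z'$. Hence $\alpha(x,y)|_\omega=\alpha(x,z)|_\omega=\alpha(x,y')|_\omega$. Alternatively, one could invoke symmetry through the transpose involution, but the direct mirror argument avoids checking how $\tau$ interacts with the sets $\mathcal I_\omega$.

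The only delicate step is the bookkeeping that turns vanishing coordinates of $\alpha$ into the relations $L_\omega$ and $R_{\{\omega\}}$ via \cref{lem:owned-atoms}. After that, the uniqueness-of-midpoint argument, exactly as in \cref{lem:coordinate-composition}, finishes the proof.
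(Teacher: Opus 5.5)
Your proof is correct and is essentially the paper's argument. In each direction you apply \cref{lem:endpoint-box} with $\beta$ equal to the $\mathcal I_\omega$-part, or the complementary part, of $\alpha(x,y)$ to produce a midpoint $z$. You then conclude $z=z'$ from $R_{\{\omega\}}\cap R_{\Omega\setminus\{\omega\}}=R_\varnothing$, which is exactly the uniqueness clause of \cref{lem:coordinate-composition} that the paper cites; you simply reprove it inline. One cosmetic point: in your second argument, $\beta$ should be the zero-extension of $\alpha(x,y)|_\omega$ to $\mathbb Z_{\ge0}^{\mathcal I}$, not the restriction itself.
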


\begin{proof}
Fix $\omega\in \Omega$ and $i\in \mathcal{I}_\omega.$
We want to show that $h_i(x,y) = h_i(x',y')$ for all $x, y, x', y' \in [n]$ satisfying $x L_\omega x'$ and $y L_\omega y'$.
It suffices to show the following:
\begin{enumerate}
    \item\label{item: locality 1} Fix any $x\in[n]$, $h_i(x,y)=h_i(x,y')$ for all $y,y'\in[n]$ satisfying $yL_\omega y';$ and
    \item\label{item: locality 2} Fix any $y\in[n]$, $h_i(x,y)=h_i(x',y)$ for all $x,x'\in[n]$ satisfying $xL_\omega x';$ 
\end{enumerate}
We only show the first, the other is similar.
Fix any $x\in[n]$, for every $y\in [n]$ and $j\in \mathcal{I}$ define
\[\beta_j(y)=
\begin{cases}
    h_j(x,y),& j\in \mathcal{I}_\omega;\\
    0,& \text{otherwise.}
\end{cases}
\]
Then $0\le \beta(y)\le \alpha(x,y)$ coordinate wise.
Now fix arbitrary $y\in[n].$
By \cref{lem:endpoint-box}, there exists $z\in [n]$ such that $\alpha(x,z)=\beta(y)$ and $\alpha(z,y)=\alpha(x,y)-\beta.$
Hence for $\nu \in \Omega\setminus\{\omega\}$, $$\Delta_{\nu}(x,z)=\sum_{j\in \mathcal{I}_\nu}h_j(x,z)=\sum_{j\in \mathcal{I}_{\nu}}\beta_j(y)=0.$$
The first equality holds by \cref{lem:owned-atoms}; the second holds by $\alpha(x,z)=\beta(y);$ while the third holds because $\nu\neq \omega$ and $\{\mathcal{I}_{\eta}\}_{\eta\in \Omega}$ is a partition of $\mathcal{I}.$
Thus, $xL_\nu z$ for every $\nu\in \Omega\setminus \{\omega\}.$
This implies $xR_{\{\omega\}}z.$
Meanwhile, 
\[
\Delta_\omega(z,y)=\sum_{j\in \mathcal{I}_\omega}h_j(z,y)=\sum_{j\in \mathcal{I}_\omega}(h_j(x,y)-\beta_j(y))=0.
\]
The first equality holds by \cref{lem:owned-atoms}; the second holds by $\alpha(z,y)=\alpha(x,y)-\beta(y);$ while the third holds by the definition of $\beta(y).$
Thus, $zL_\omega y.$
%This implies $zR_{\Omega\setminus \{\omega\}}y.$
This is the same as $zR_{\Omega\setminus \{\omega\}}y.$
To summarize, $xR_{\{\omega\}}zR_{\Omega\setminus \{\omega\}}y$ and $h_i(x,y)=\beta_i(y)=h_i(x,z),$ (for the initially fixed $i\in \mathcal{I}_\omega$).

Now suppose $y'\in [n]$ with $yL_\omega y'.$
Repeat the above argument, there exists $z'\in [n]$ such that $xR_{\{\omega\}}z'R_{\Omega\setminus \{\omega\}}y'$ and $h_i(x,y')=\beta_i(y')=h_i(x,z').$
Since $yL_\omega y'$, which is the same as $y' R_{\Omega\setminus \{\omega\}}y$, we have $xR_{\{\omega\}}z'R_{\Omega\setminus \{\omega\}}y.$
Since $\{\omega\}$ and $\Omega\setminus\{\omega\}$ are disjoint, $z=z'$ by \cref{lem:coordinate-composition}.
Therefore, $h_i(x,y)=h_i(x,z)=h_i(x,z')=h_i(x,y').$
This proves statement~\ref{item: locality 1}.
For statement~\ref{item: locality 2}, fix any $y\in [n]$ and define 
\[\beta_j(x)=
\begin{cases}
    0,& j\in \mathcal{I}_\omega;\\
    h_j(x,y),& j\in \mathcal{I}\setminus \mathcal{I}_{\omega}.
\end{cases}
\]
Similarly we can prove $h_i(x,y)=h_i(x',y)$ for any fixed $y\in [n]$ and all $x,x'\in [n]$ with $xL_\omega x'.$
Finally, since $\Delta_\omega(x,y)=\sum_{i\in \mathcal{I}_\omega}h_i(x,y)$, $\Delta_\omega(x,y)$ also depends only on $([x]_{L_\omega},[y]_{L_\omega})$.
\end{proof}

\begin{definition}\label{def:quotient-distances}
For $\omega\in\Omega$ and $a,b\in X_\omega$, define $\Delta_\omega(a,b)$ to be $\Delta_\omega(x,y)$ for any representatives with $[x]_{L_\omega}=a$ and $[y]_{L_\omega}=b$. \cref{lem:coordinate-locality} makes this well-defined.
\end{definition}

\begin{lemma}\label{lem: Delta omega satisfies positivity and symmetry}
    For every $\omega\in \Omega$, $\Delta_\omega(\cdot,\cdot)$ on $X_\omega$ satisfies positivity and symmetry, i.e., for every $a,b\in X_\omega$, (1) $\Delta_\omega(a,b)\ge 0$, and $\Delta_\omega(a,b)=0 \iff a=b$;  (2) $\Delta_\omega(a,b)=\Delta_\omega(b,a).$
\end{lemma}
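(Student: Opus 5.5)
We argue directly from the closed form \eqref{eq:squared-distance} in \cref{lem: expression for Delta omega}, transported to $X_\omega$ via \cref{def:quotient-distances}. Fix $\omega\in\Omega$ and $a,b\in X_\omega$, and pick representatives $x,y\in[n]$ with $[x]_{L_\omega}=a$ and $[y]_{L_\omega}=b$. By \cref{lem:coordinate-locality}, $\Delta_\omega(a,b)=\Delta_\omega(x,y)$ does not depend on this choice, so every claim reduces to a statement about $\Delta_\omega(x,y)$ for elements of $[n]$.

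For nonnegativity, \eqref{eq:squared-distance} gives
\[
\Delta_\omega(x,y)=\frac n2\left\|W_\omega^{1/2}\mathbf e_x-W_\omega^{1/2}\mathbf e_y\right\|^2\ge 0 .
\]
For the zero condition, $\Delta_\omega(a,b)=0$ means $\Delta_\omega(x,y)=0$, which is the definition of $xL_\omega y$, i.e.\ $[x]_{L_\omega}=[y]_{L_\omega}$, i.e.\ $a=b$. Conversely, if $a=b$ we may choose the same representative $x=y$, and then $\Delta_\omega(x,x)=0$ by the same formula, or directly from \cref{def:central-gradients} since $h_r(x,x)=\ell_r(T_{xx})=\ell_r(E)$ by \cref{lem:diagonal-statistic,lem:affine-multiplicities}.

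For symmetry, the norm expression in \eqref{eq:squared-distance} is visibly symmetric in $x$ and $y$. Alternatively, $\Delta_\omega(x,y)=\operatorname{Tr}W_\omega-n(W_\omega)_{xy}$ and $W_\omega$ is symmetric by \cref{rmk: Ww}. Either way $\Delta_\omega(x,y)=\Delta_\omega(y,x)$, and hence $\Delta_\omega(a,b)=\Delta_\omega(b,a)$.

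There is no real obstacle here: the substantive input is already contained in \cref{lem: expression for Delta omega} (which uses positive semidefiniteness of $W_\omega$) and in the well-definedness result \cref{lem:coordinate-locality}. The only point needing care is to state explicitly that the quotient value is independent of the chosen representatives before invoking the formula.
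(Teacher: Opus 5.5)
Your proof is correct and matches the paper's argument: choose representatives $x,y$, read off nonnegativity and symmetry from \cref{lem: expression for Delta omega}, and get the zero condition from $\Delta_\omega(x,y)=0\iff xL_\omega y\iff a=b$. The only cosmetic differences are that the paper picks representatives via the bijection $\phi$ of \cref{lem: phi is bijection}, whereas you simply use that equivalence classes are nonempty, and that you cite \cref{lem:coordinate-locality} explicitly for independence of the representatives.
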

\begin{proof}
    Fix $\omega\in \Omega$ and arbitrary $a,b\in X_\omega$.
    Since $\phi:[n]\to \prod_{\eta\in\Omega}X_\eta, \, x\mapsto ([x]_{L_\eta})_{\eta\in \Omega}$ is a bijection by \cref{lem: phi is bijection}, we may let $x,y\in [n]$ such that $[x]_{L_\omega}=a$ and $[y]_{L_\omega}=b.$
    Then $\Delta_\omega(a,b)=\Delta_\omega(x,y) \ge 0$ by \cref{def:quotient-distances}.
    Thus, $\Delta_\omega(a,b)=0\iff \Delta_\omega(x,y)=0 \iff xL_\omega y\iff a=b.$
    Also, $\Delta_\omega(x,y)=\Delta_\omega(y,x)$ by \cref{lem: expression for Delta omega}.
    Hence $\Delta_\omega(a,b)=\Delta_\omega(b,a).$
\end{proof}
\section{Recover Hamming Intersection Numbers}\label{sec:hamming}

In the last section, we proved that $\Delta_\omega(\cdot,\cdot)$ is a well-defined binary function on $X_\omega$ for every $\omega\in \Omega$.
This allows us to define the coordinate graphs $\Gamma_\omega$ for $\omega\in\Omega$:

\begin{definition}[Coordinate graphs]\label{def:coordinate-graphs}
For every $\omega\in\Omega$, define the \emph{coordinate graph} $\Gamma_\omega$ as the simple graph on $X_\omega$ in which $a,b\in X_\omega$ are adjacent exactly when $\Delta_\omega(a,b)=1$.
\end{definition}

In this section, we will prove that each coordinate graph $\Gamma_\omega$ is distance-regular with graph distance $\Delta_\omega$ (\cref{lem:coordinate-graph-distance}). 
We then show that the intersection numbers of $\Gamma_\omega$ coincide with those of the Hamming graph $H(D_\omega,Q_\omega)$, by proving that the corresponding generating functions agree (\cref{lem:hamming-numbers}). 
To prepare for this comparison, we begin in \cref{subsec: distance polynomials} by introducing the distance polynomials of Hamming graphs and studying the generating functions for their intersection numbers.
Then in \cref{subsec: Distance Polynomials of Coordinate Graphs}, we similarly define the distance polynomials of the coordinate graphs and study their basic properties.
Finally in \cref{subsec: Hamming Intersection Numbers of Coordinate  Graphs}, we show that each $\Gamma_\omega$ has the same intersection numbers as $H(D_\omega,Q_\omega).$

\subsection{Distance Polynomials of Hamming Graphs}\label{subsec: distance polynomials}

\begin{definition}\label{def:distance-poly-for-GH}
For integers $D\geq1$, $Q\geq2$, and $0\leq k\leq D$, define the \emph{distance polynomial}
\[
f_{D,Q,k}(t):=Q^{-D}(1+(Q-1)t)^{D-k}(1-t)^k
\quad(\forall t\in\mathbb R).
\]
\end{definition}

\begin{lemma}\label{lem: Hamming intersection number generating polynomial}
    Let $p_{ij}^k \,(0\le i,j,k\le D)$ be the intersection numbers of the Hamming graph $H(D,Q)$.
    Then 
    \[\sum_{i=0}^{D}\sum_{j=0}^{D}p^k_{ij}f_{D,Q,i}(s)f_{D,Q,j}(t)=f_{D,Q,k}(st)
\qquad(\forall s,t\in\mathbb R).\]
\end{lemma}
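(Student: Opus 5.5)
The plan is to realize both sides through the adjacency algebra of $H(D,Q)$ and reduce the identity to a one-coordinate computation. Let $A_0,\ldots,A_D$ be the distance matrices of $H(D,Q)$, indexed by $[Q]^D$, so $(A_k)_{uv}=\mathbb{I}[d(u,v)=k]$. We use the standard fact that $A_iA_j=\sum_k p_{ij}^k A_k$: the $(u,v)$ entry of $A_iA_j$ counts $w$ with $d(u,w)=i$ and $d(w,v)=j$, which is $p^{d(u,v)}_{ij}$. Next set
\[
N(t):=\sum_{k=0}^D f_{D,Q,k}(t)\,A_k ,
\]
so $N(t)_{uv}=f_{D,Q,d(u,v)}(t)$. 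The claimed identity is then equivalent to the matrix identity $N(s)N(t)=N(st)$. Indeed, $N(s)N(t)=\sum_{i,j}f_{D,Q,i}(s)f_{D,Q,j}(t)\sum_k p^k_{ij}A_k$, and the $A_k$ are linearly independent 0--1 matrices with disjoint supports, each nonzero since $D\ge1$ and every distance $0,\ldots,D$ occurs. So comparing coefficients of $A_k$ gives exactly the statement.

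To prove $N(s)N(t)=N(st)$, use the tensor structure. For $D=1$, define the $Q\times Q$ matrix $N_1(t)=Q^{-1}\bigl((1+(Q-1)t)I+(1-t)(J-I)\bigr)$; rewriting, $N_1(t)=tI+(1-t)U_Q$, where $U_Q=J_Q/Q$. Because $f_{D,Q,k}(t)$ factors as $\prod_{\text{coordinates}}$ of $Q^{-1}(1+(Q-1)t)$ on agreeing coordinates and $Q^{-1}(1-t)$ on differing ones, and $d(u,v)$ is the number of differing coordinates, we get $N(t)=N_1(t)^{\otimes D}$. By the mixed product property, it then suffices to show $N_1(s)N_1(t)=N_1(st)$.

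This last identity follows from $U_Q^2=U_Q$:
\[
(sI+(1-s)U)(tI+(1-t)U)=stI+\bigl(s(1-t)+t(1-s)+(1-s)(1-t)\bigr)U=stI+(1-st)U .
\]
The coefficient simplifies to $1-st$ on expansion, so the right side is $N_1(st)$.

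No step is a serious obstacle. The only point requiring care is justifying the coefficient comparison: this relies on the linear independence of $A_0,\ldots,A_D$, which holds because their supports are disjoint and each is nonempty, and on the fact that $p^k_{ij}$ is well defined for the Hamming graph, which is distance-regular.
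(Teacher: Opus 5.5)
Your proposal is correct and follows essentially the same route as the paper: it writes $N(t)=M_Q(t)^{\otimes D}$ with $M_Q(t)=U_Q+t(I_Q-U_Q)$, uses $M_Q(s)M_Q(t)=M_Q(st)$, and reads off the identity from $N(s)N(t)=N(st)$. The one small difference is the last step. The paper looks directly at the $(a,b)$ entry for a pair at Hamming distance $k$ and groups the sum over $c$ by $(d_H(a,c),d_H(c,b))$. You instead compare coefficients of the distance matrices $A_k$; this is equivalent and equally valid.
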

\begin{proof}
Let $I_Q$ be the identity matrix of order $Q$, $J_Q$ be the all one matrix of order $Q$, and $U_Q=J_Q/Q.$
For $t\in \R$, define the $Q\times Q$ matrix $M_Q(t)$ by
\[M_Q(t)_{pq}=\begin{cases}
    \frac{1+(Q-1)t}{Q},&p=q\\
    \frac{1-t}{Q}, &p\neq q.
\end{cases}\]
In other words, $M_Q(t)=U_Q+t(I_Q-U_Q).$
Define $N(t):=M_Q(t)^{\otimes D}$, and identify rows and columns of $N(t)$ with $[Q]^D.$
For $a=(p_1,\ldots, p_D),b=(q_1,\ldots,q_D)\in [Q]^D$, let $d_H(a,b)=|\{1\le i \le D\mid p_i\neq q_i\}|$ be the Hamming distance between $a$ and $b$.
We have
\begin{equation}\label{eq: Hamming intersection number generating polynomial}
    N(t)_{ab}=\left(\frac{1+(Q-1)t}{Q}\right)^{D-d_H(a,b)}\left(\frac{1-t}{Q}\right)^{d_H(a,b)}=f_{D,Q,d_H(a,b)}(t) \quad (\forall t
    \in \R).
\end{equation}
Note that for every $s,t\in \R,$ $$M_Q(s)M_Q(t)=(U_Q+s(I_Q-U_Q))(U_Q+t(I_Q-U_Q))=U_Q+st(I_Q-U_Q)=M_Q(st).$$ 
Then $N(s)N(t)=N(st)$.
Now fix $a=(p_1,\ldots, p_D),b=(q_1,\ldots, q_D)\in[Q]^D$ at Hamming distance $k$, we have 
\[
\sum_{c\in [Q]^D}N(s)_{ac}N(t)_{cb}=N(st)_{ab} \qquad (\forall s,t\in \R).
\]
Plug in \eqref{eq: Hamming intersection number generating polynomial} and group the sum by $(i,j)=(d_H(a,c),d_H(c,b))$, we obtain
\[
\sum_{i=0}^D\sum_{j=0}^D p_{ij}^kf_{D,Q,i}(s)f_{D,Q,j}(t)=f_{D,Q,k}(st) \qquad (\forall s,t\in \R).
\]
\end{proof}

\begin{lemma}\label{lem: distance polynomial are linearly independent}
Fix integers $D\ge 1,Q\ge 2$.
The polynomials $\{f_{D,Q,i}(s)f_{D,Q,j}(t)\mid 0\le i,j\le D\}$ are linearly independent in $\R[s,t]$.
\end{lemma}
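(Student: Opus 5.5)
The products factor as $f_{D,Q,i}(s)\,f_{D,Q,j}(t)$, and $\R[s]\otimes\R[t]\cong\R[s,t]$ via $p\otimes q\mapsto p(s)q(t)$. So it suffices to prove the one-variable statement: the $D+1$ polynomials $f_{D,Q,0},\ldots,f_{D,Q,D}$ are linearly independent in $\R[t]$. The two-variable claim then follows.

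For that reduction, suppose $\sum_{i,j}c_{ij}f_{D,Q,i}(s)f_{D,Q,j}(t)=0$. Fix $t$ and write the left side as $\sum_i\bigl(\sum_j c_{ij}f_{D,Q,j}(t)\bigr)f_{D,Q,i}(s)$. One-variable independence in $s$ gives $\sum_j c_{ij}f_{D,Q,j}(t)=0$ for every $t$ and every $i$. Applying independence again, now in $t$, gives $c_{ij}=0$.

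For the one-variable statement, drop the nonzero constant $Q^{-D}$ and substitute $u=1+(Q-1)t$ and $v=1-t$. Since $Q\ge2$, the linear forms $u$ and $v$ are linearly independent: the determinant of $\begin{pmatrix}1&Q-1\\1&-1\end{pmatrix}$ is $-Q\neq0$.

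Homogenize with a second variable $w$, setting $u=w+(Q-1)t$ and $v=w-t$. The map $(w,t)\mapsto(u,v)$ is then an invertible linear change of variables. Under it, the homogeneous degree-$D$ polynomials $u^{D-k}v^k$, $0\le k\le D$, are exactly the images of the monomial basis $w^{D-k}t^k$ of the degree-$D$ forms, so they are linearly independent.

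Dehomogenizing at $w=1$ is a linear isomorphism from degree-$D$ forms in $(w,t)$ onto polynomials in $t$ of degree at most $D$. It sends $u^{D-k}v^k$ to $(1+(Q-1)t)^{D-k}(1-t)^k$, so these are linearly independent, which gives the claim.

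An alternative argument for the same step: if $\sum_k c_kf_{D,Q,k}=0$ and $k_0$ is the least index with $c_{k_0}\neq0$, divide by $(1-t)^{k_0}$ and evaluate at $t=1$. Every term with $k>k_0$ vanishes, leaving $c_{k_0}Q^{D-k_0}=0$, a contradiction.

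No step here should be a real obstacle. The only point needing care is that $Q\ge2$ keeps $u$ and $v$ independent, equivalently $1+(Q-1)\cdot1=Q\neq0$ in the evaluation argument.
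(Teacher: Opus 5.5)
Your proof is correct, and its outer structure matches the paper's. The reduction of the two-variable claim to one-variable independence (fix $t$, apply independence in $s$, then in $t$) is exactly the argument given there.

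The difference is in the one-variable step. The paper divides $\sum_k c_k f_{D,Q,k}(t)=0$ by $Q^{-D}(1+(Q-1)t)^D$ wherever this is nonzero and sets $z=\frac{1-t}{1+(Q-1)t}$. It then observes that $z$ takes infinitely many values, so $\sum_k c_k z^k$ is the zero polynomial. In your notation this is the identity $\sum_k c_k u^{D-k}v^k=u^D\sum_k c_k (v/u)^k$, combined with the fact that $v/u$ is nonconstant. That is the same nonproportionality of $u$ and $v$ that your determinant $-Q$ certifies. The paper leaves this point implicit, whereas your homogenization makes it explicit as an invertible change of coordinates on binary forms.

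Your vanishing-order alternative at $t=1$ is the most elementary of the three. It is a triangularity argument that needs only $Q\neq 0$, and it avoids both the change of variables and the ``infinitely many values'' step.
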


\begin{proof}
We first prove that the $D+1$ distance polynomials $\{f_{D,Q,k}(t) \mid0\leq k\leq D\}$ are linearly independent in $\R[t]$. 
%JYC, no need for contradiction... just show c_i =0.
%Suppose for a contradiction that, 
Suppose there are $c_0,\ldots,c_D\in \R$ such that
$\sum_{k=0}^{D}c_kf_{D,Q,k}(t)=0\,(t\in \R).$  We show that all $c_k=0$.
By \cref{def:distance-poly-for-GH}, we have 
%By \cref{def: distance polynomial}, we have 
%%% was a typo for this ref. it refers to a later def. 7.4
\[
\sum_{k=0}^D c_kQ^{-D}(1+(Q-1)t)^{D-k}(1-t)^k=0 \qquad (\forall t\in \R).
\]
Dividing by $Q^{-D}(1+(Q-1)t)^D$ when it is non-vanishing, we obtain the following:
\[\sum_{k=0}^D c_k \left(\frac{1-t}{1+(Q-1)t}\right)^k=0 \quad (\forall t\in \R,\,1+(Q-1)t\neq 0).\]
Let $z=\frac{1-t}{1+(Q-1)t}$, then $z$ takes infinitely many distinct values when $t\in \R$ and $1+(Q-1)t\neq 0.$
Thus, the polynomial $P(z):=\sum_{k=0}^D c_kz^k$ vanishes on infinitely many distinct points.
Hence it is the zero polynomial.
It follows that $c_0=\cdots=c_D=0.$

Now suppose for some $(c_{ij})_{0\le i,j\le D}$, $\sum_{i,j=0}^D c_{ij}f_{D,Q,i}(s)f_{D,Q,j}(t)=0 \,~(\forall s,t\in \R)$.
Grouping the sum, we have $\sum_{i=0}^D\left(\sum_{j=0}^D c_{ij}f_{D,Q,j}(t)\right)f_{D,Q,i}(s)=0.$
Linear independence of $\{f_{D,Q,i}(s)\mid 0\le i\le D\}$ gives $\sum_{j=0}^D c_{ij}f_{D,Q,j}(t)=0, \,\forall 0\le i\le D$, and for all $t\in \R$.
Then linear independence of $\{f_{D,Q,j}(t)\mid 0\le j\le D\}$ gives $c_{ij}=0,\,\forall 0\le i,j\le D.$ 
\end{proof}

\subsection{Distance Polynomials of Coordinate Graphs}\label{subsec: Distance Polynomials of Coordinate Graphs}

\begin{definition}\label{def:scalar-parameters}
For every $r\in\mathcal F_+$, define $Q_r=\frac{\ell_r(E)}{\ell_r(0)}$.
\end{definition}

\begin{lemma}\label{lem:integer-factors}
For every $r\in\mathcal F_+$, $Q_r\in \mathbb{Z}_{\ge 2}$. 
Further,
$n=\prod_{r\in\mathcal F_+}Q_r^{\ell_r(E)}.$
\end{lemma}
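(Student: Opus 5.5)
The plan is to evaluate the trace identity of \cref{lem: trace FX} along the ray $X=tE$. Two facts drive it: $\ell_r(tE)$ is affine in $t$, and $F(tE)$ splits into orthogonal projections. Since $\ell_r$ is affine, $\ell_r(tE)=\ell_r(0)+t(\ell_r(E)-\ell_r(0))$, so $\ell_r(tE)/\ell_r(0)=1+(Q_r-1)t$. \cref{lem: trace FX} then gives
\[
\operatorname{Tr}F(tE)=\prod_{r\in\mathcal F_+}\bigl(1+(Q_r-1)t\bigr)^{\ell_r(E)}\qquad(t\in\mathbb R).
\]
Setting $t=1$ and using $F(E)=I$ (\cref{lem:polynomial-identity}) yields $n=\operatorname{Tr}I=\prod_{r}Q_r^{\ell_r(E)}$. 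This is the second claim.

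For integrality I will show that $\operatorname{Tr}F(tE)$ has nonnegative integer coefficients. I copy the argument in the proof of \cref{lem:positive-gradients}, with the single projection $E$ in place of the $R_j$. Write $F(tE)=\sum_{k=0}^{d}B_kt^k$. Since $(sE)(tE)=stE$, \cref{lem: homomorphism property} gives $F(sE)F(tE)=F(stE)$. Comparing coefficients of $s^kt^l$ gives $B_kB_l=\mathbb I[k=l]\,B_k$. Since $E^{\tt T}=E$, \cref{lem: F transpose} gives $B_k^{\tt T}=B_k$. Hence each $B_k$ is an orthogonal projection, so $\operatorname{Tr}B_k=\operatorname{rank}B_k\in\mathbb Z_{\ge0}$, and $B_0=F(0)=U$ has trace $1$. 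Therefore $P(t):=\operatorname{Tr}F(tE)\in\mathbb Z[t]$ with $P(0)=1$, so $P$ is primitive.

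Next I check that $Q_r$ is rational and strictly larger than $1$. Each $\ell_r(E)=h_r(x,x)$ is a positive integer, and $\ell_r(0)\in\mathbb Q_{>0}$ by \cref{lem:factor-positivity}, so $Q_r\in\mathbb Q_{>0}$. Moreover
\[
Q_r-1=\frac{\langle G_r,E\rangle}{\ell_r(0)}=\frac{\langle S_r,E\rangle}{\ell_r(0)}=\frac{\operatorname{Tr}(S_rE)}{\ell_r(0)}=\frac{\operatorname{Tr}S_r}{\ell_r(0)}.
\]
Here the second equality is \cref{lem: Gr and Sr}, since $E$ is symmetric. The last equality uses $S_rE=S_r$, because $E$ is the unit of $\mathcal A$ (\cref{lem:support-identity}). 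The matrix $S_r$ is nonzero (\cref{lem: nonzero Sr}) and positive semidefinite (\cref{lem:positive-gradients}), so $\operatorname{Tr}S_r>0$ and hence $Q_r>1$.

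The remaining step, which I expect to be the main (though mild) obstacle, is upgrading $Q_r$ from a rational to an integer. I will use Gauss's lemma. Write $Q_r-1=p_r/q_r$ in lowest terms with $p_r,q_r>0$. Then
\[
\prod_{r}\bigl(q_r+p_rt\bigr)^{\ell_r(E)}=\Bigl(\prod_r q_r^{\ell_r(E)}\Bigr)P(t).
\]
The left side is a product of primitive integer polynomials, so it is primitive by Gauss's lemma. The content of the right side is $\prod_r q_r^{\ell_r(E)}$, because $P$ is primitive. Hence every $q_r=1$, so $Q_r-1\in\mathbb Z_{>0}$, i.e.\ $Q_r\in\mathbb Z_{\ge2}$.
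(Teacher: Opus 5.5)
Your proof is correct and follows the paper's route: restrict the trace identity of \cref{lem: trace FX} to the ray $tE$, use that the coefficients of $F(tE)$ are orthogonal projections so that $\operatorname{Tr}F(tE)\in\mathbb Z[t]$ has constant term $1$, get rationality of $Q_r$ from $\ell_r(0)\in\mathbb Q_{>0}$, and evaluate at $t=1$ using $F(E)=I$. The differences are minor. The paper gets $Q_r>1$ from the degree count together with $\operatorname{Tr}F(tE)\ge 1$ for $t\ge 0$, where you use $\operatorname{Tr}S_r>0$, and it concludes integrality with the rational root theorem, which is equivalent here to your Gauss's lemma argument.
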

\begin{proof}
Recall the definition of $d$ in \cref{def: polynomial homomorphism}. It is
%Let $d$ be 
also the common degree of $F(X)_{xy}$ from  \cref{lem:common-degree}.
Put $p(t)=\operatorname{Tr}F(tE),\,t\in \R$.
By \cref{def: polynomial homomorphism}, write 
\[
F(tE)=U+tE+\sum_{k=2}^d t^kh_k(E).
\]
If $d =1$ then the sum over $k$ is empty.
Since $E$ and $\{h_k(E)\}_{k=2}^d$ are all orthogonal projections by \cref{lem: hkE mutually orth proj}, $\operatorname{Tr}(E)=\operatorname{rank}(E)$ and $\operatorname{Tr}(h_k(E))=\operatorname{rank}(h_k(E))$ for every $2\le k\le d.$
Taking trace,
\begin{equation}\label{eq: integer factors 1}
    p(t)=1+(\operatorname{rank}E)t+
\sum_{k=2}^{d}(\operatorname{rank}h_k(E))t^k \qquad (\forall t\in \R).
\end{equation}
So $\deg(p)=d.$
Every coefficient is a nonnegative integer, so $p(t)\geq1$ for $t\geq0$.

On the other hand, \cref{lem: trace FX} gives 
\[
p(t)=\prod_{r\in \mathcal{F}_+}\left(\frac{\ell_r(tE)}{\ell_r(0)}\right)^{\ell_r(E)}\qquad (\forall t\in \R).
\]
By being an affine function, $\ell_r(tE)=\ell_r(tE+(1-t)0)=t\ell_r(E)+(1-t)\ell_r(0).$
So
\begin{equation}\label{eq: integer factors 2}
    p(t)=\prod_{r\in \mathcal{F}_+}\left(\frac{t\ell_r(E)+(1-t)\ell_r(0)}{\ell_r(0)}\right)^{\ell_r(E)}=\prod_{r\in \mathcal{F}_+}\left(1+(Q_r-1)t\right)^{\ell_r(E)}\qquad (\forall t\in \R).
\end{equation}
By \cref{lem: complementary degree}, $\sum_{r\in\mathcal F_+}\ell_r(E)=d$.
If $Q_r=1$ for some $r\in \mathcal{F}_+$, then $\deg(p)<\sum_{r\in \mathcal{F}_+}\ell_r(E)=d$, a contradiction.
Hence $Q_r\neq 1$ for every $r\in \mathcal{F}_+.$
Consequently, $p(t)$ has zeros $-\frac{1}{Q_r-1},$ for all $r\in \mathcal{F}_+.$
We claim these zeros are all rationals.
To see this, $\ell_r(E)=h_r(T_{xx})=h_r(x,x)\in \Z$ and $\ell_r(0)\in \Q_{>0}$ by \cref{lem:factor-positivity}.
So $Q_r=\frac{\ell_r(E)}{\ell_r(0)}\in \Q.$
It follows that $-\frac{1}{Q_r-1}\in \Q$ for every $r\in \mathcal{F}$.
Moreover, $-\frac{1}{Q_r-1}<0$ for every $r\in \mathcal{F}_+$, since $p(t)\ge 1$ for $t\ge 0.$
Write $\frac{1}{Q_r-1}=\frac{a_r}{b_r}$, where $a_r, b_r \in \Zp$,
%%% I changes to name + \frac{1}{Q_r-1}= a_r/b_r
%and $b_r\in \Z_{<0}$, 
for every $r\in \mathcal{F}_+$.
Since $-a_r/b_r$ is a root of $p(t) \in \Z[t]$,
$a_r$ divides the constant term of $p(t)$, which is $1$ by \eqref{eq: integer factors 1}.
Thus, $a_r=1$ for every $r\in \mathcal{F}_+.$
This implies $Q_r=b_r+1\in \Z_{\ge 2}$ for every $r\in \mathcal{F}_+.$
Finally, $F(E)=I$ (\cref{lem:polynomial-identity}) gives $p(1)=n$.
By \eqref{eq: integer factors 2}, this is $n=\prod_{r\in \mathcal{F}_+}Q_r^{\ell_r(E)}.$
\end{proof}
    Note that $Q_r=Q_s$, if $r,s$ are in the same orbit $\omega.$
    We denote their common value by $Q_\omega.$

\begin{definition}\label{def: distance polynomial}
    For each $\omega\in\Omega$, let $Q_\omega$ be the common integer $Q_r$ for $r\in\omega$, and let $D_\omega=\sum_{r\in\omega}\ell_r(E)$. For $a,b\in X_\omega$, define the polynomial
\begin{equation}
F_\omega(t)_{ab}:=f_{D_\omega,Q_\omega,\Delta_\omega(a,b)}(t)=Q_\omega^{-D_\omega}
\bigl(1+(Q_\omega-1)t\bigr)^{D_\omega-\Delta_\omega(a,b)}
(1-t)^{\Delta_\omega(a,b)} \quad (\forall t\in \R).
\label{eq:scalar-kernel}
\end{equation}
\end{definition}
%Here $D_\omega$ is a positive integer.
\begin{remark}\label{rmk:distance polynomial}
    The exponents in \eqref{eq:scalar-kernel} are nonnegative integers because
    \[
    D_\omega-\Delta_\omega(x,y)=\sum_{r\in \omega}\ell_r(E)-\sum_{r\in \omega}(\ell_r(E)-h_r(x,y))=\sum_{r\in\omega}h_r(x,y)\geq0
    \quad(\forall \omega\in\Omega,\forall x,y\in[n]).
    \]
    Also, $F_\omega(t)_{ab}>0$ for $0<t<1$.
    Moreover, $F_\omega(1)=I$ because $\Delta_\omega(a,b)=0 \iff a=b$ by \cref{lem: Delta omega satisfies positivity and symmetry}.
\end{remark}

\begin{lemma}\label{lem:scalar-tensor}
Under the bijection $\phi$ in  \cref{lem: phi is bijection},
\(
F(tE)=\bigotimes_{\omega\in\Omega}F_\omega(t),(t\in\mathbb R).
\)
Entrywise,
\[
F(tE)_{xy}=\prod_{\omega\in \Omega}F_\omega(t)_{x_\omega y_\omega}\qquad (\forall x,y\in[n],\,\forall t\in \R).
\]
Here $x_w$ and $y_w$ are short for $[x]_{L_{\omega}}$ and $[y]_{L_{\omega}}$.
\end{lemma}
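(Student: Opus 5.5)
The plan is to compute $F(tE)_{xy}$ directly from the affine factorization (\cref{lem:affine-factors}) and group the factors by orbit. Since $F(tE)_{xy}$ is a polynomial identity in $t$, it suffices to compare the two sides entrywise for every real $t$.

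First, substitute $X=tE$ into \cref{lem:affine-factors}. For an endpoint index $i\in\mathcal I$, we have $\ell_i(E)=0$, so by affinity
\[
\frac{\ell_i(tE)}{\ell_i(0)}=\frac{t\ell_i(E)+(1-t)\ell_i(0)}{\ell_i(0)}=1-t .
\]
For a complementary index $r\in\mathcal F_+$, the same computation gives $\ell_r(tE)/\ell_r(0)=1+(Q_r-1)t$, as in the proof of \cref{lem:integer-factors}. Hence
\[
F(tE)_{xy}=\frac1n(1-t)^{\delta(x,y)}\prod_{r\in\mathcal F_+}\bigl(1+(Q_r-1)t\bigr)^{h_r(x,y)} .
\]

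Next, split everything according to orbits $\omega\in\Omega$. By \cref{lem:owned-atoms}, $\delta(x,y)=\sum_{\omega}\Delta_\omega(x,y)$. Since $Q_r=Q_\omega$ for every $r\in\omega$, the factors in the complementary product combine into $(1+(Q_\omega-1)t)^{\sum_{r\in\omega}h_r(x,y)}$, and by \cref{rmk:distance polynomial} this exponent equals $D_\omega-\Delta_\omega(x,y)$. For the normalizing constant, \cref{lem:integer-factors} gives $n=\prod_{r}Q_r^{\ell_r(E)}=\prod_\omega Q_\omega^{D_\omega}$. Assembling these pieces yields
\[
F(tE)_{xy}=\prod_{\omega\in\Omega}Q_\omega^{-D_\omega}\bigl(1+(Q_\omega-1)t\bigr)^{D_\omega-\Delta_\omega(x,y)}(1-t)^{\Delta_\omega(x,y)} .
\]

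Finally, by \cref{lem:coordinate-locality} and \cref{def:quotient-distances}, $\Delta_\omega(x,y)=\Delta_\omega(x_\omega,y_\omega)$, so the $\omega$-factor above is exactly $F_\omega(t)_{x_\omega y_\omega}$. Because $\phi$ is a bijection (\cref{lem: phi is bijection}), indexing rows and columns of $F(tE)$ by $\prod_\omega X_\omega$ through $\phi$ identifies $F(tE)$ with the Kronecker product $\bigotimes_\omega F_\omega(t)$.

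There is no serious obstacle here; the only point requiring care is the bookkeeping of exponents. Specifically, one must check that $\ell_i(E)=0$ forces $\ell_i(tE)/\ell_i(0)=1-t$ uniformly in $i$, so that the entire endpoint product collapses to $(1-t)^{\delta(x,y)}$ regardless of which individual $h_i$ contribute. One must also check that the constant $1/n$ splits as the product of the $Q_\omega^{-D_\omega}$, which is exactly \cref{lem:integer-factors}.
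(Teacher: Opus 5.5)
Your proposal is correct and matches the paper's proof step for step: substitute $X=tE$ into \cref{lem:affine-factors}, use affinity to get the factors $1-t$ and $1+(Q_r-1)t$, regroup by orbits using $Q_r=Q_\omega$, split $1/n$ via \cref{lem:integer-factors}, and pass to classes via \cref{lem:coordinate-locality}. The only cosmetic difference is in the endpoint factors: you first collapse them to $(1-t)^{\delta(x,y)}$ and then use $\delta=\sum_\omega\Delta_\omega$, whereas the paper groups the endpoint indices directly by the partition $(\mathcal I_\omega)_{\omega\in\Omega}$; both rest on the same identity from \cref{lem:owned-atoms}.
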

\begin{proof}
Plug $X=tE$ into \cref{lem:affine-factors}, 
\begin{equation}
F(tE)_{xy}=\frac1n\prod_{a\in\mathcal F}
\left(\frac{\ell_a(tE)}{\ell_a(0)}\right)^{h_a(x,y)}.
\label{eq:F(tE)xy}
\end{equation}
For every $a\in \mathcal{F}$, write $$\ell_a(tE)=\ell_a(tE+(1-t)0)=t\ell_a(E)+(1-t)\ell_a(0).$$
For $i\in\mathcal I$, $\ell_i(E)=0$ gives $\ell_i(tE)/\ell_i(0)=1-t$. 
For $r\in\mathcal F_+$,
\(
{\ell_r(tE)}/{\ell_r(0)}
=tQ_r+1-t=1+(Q_r-1)t.
\)
Substitute into \eqref{eq:F(tE)xy},
\begin{equation*}
    \begin{aligned}
        F(tE)_{xy}=&\frac{1}{n}\prod_{i\in \mathcal{I}}(1-t)^{h_i(x,y)}\prod_{r\in \mathcal{F}_+}(1+(Q_r-1)t)^{h_r(x,y)}\\
        =&\frac{1}{n}\prod_{\omega\in \Omega}\prod_{i\in \mathcal{I}_\omega}(1-t)^{h_i(x,y)}\prod_{\omega\in\Omega}\prod_{r\in \omega}(1+(Q_r-1)t)^{h_r(x,y)}\\
        =&\frac{1}{n}\prod_{\omega\in \Omega}(1-t)^{\Delta_\omega(x,y)}\prod_{\omega\in \Omega}(1+(Q_\omega-1)t)^{D_{\omega}-\Delta_{\omega}(x,y)}\\
        =&\prod_{\omega\in \Omega}Q_{\omega}^{-D_{\omega}}(1+(Q_\omega-1)t)^{D_{\omega}-\Delta_\omega(x,y)}(1-t)^{\Delta_{\omega}(x,y)}\\
        =&\prod_{\omega\in \Omega}F_{\omega}(t)_{x_\omega y_\omega}.
    \end{aligned}
\end{equation*}
The second and third equalities hold because $(\mathcal{I}_{\omega})_{\omega\in \Omega}$ forms a partition of $\mathcal{I}$, and $\Delta_{\omega}(x,y)=\sum_{i\in \mathcal{I}_\omega}h_i(x,y)$ by \cref{lem:owned-atoms}; 
The fourth equality holds because $n=\prod_{\omega\in \Omega}Q_{\omega}^{D_{\omega}}$ by \cref{lem:integer-factors};
The last equality holds because $\Delta_{\omega}(x,y)=\Delta_{\omega}(x_\omega,y_\omega)$ by \cref{def:quotient-distances}.
The lemma is proved.
\end{proof}

\begin{lemma}\label{lem:scalar-normalization}
For every $\omega\in\Omega$, $a\in X_\omega$, and $t\in\mathbb R$,
\(
\sum_{b\in X_\omega}F_\omega(t)_{ab}=1.
\)
In particular, $|X_\omega|=Q_\omega^{D_\omega}$.
\end{lemma}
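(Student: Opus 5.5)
The plan is to deduce the row normalization of each factor $F_\omega(t)$ from the row normalization of $F(tE)$ (\cref{lem:row-normalization}) and the tensor decomposition of \cref{lem:scalar-tensor}. Fix $\omega\in\Omega$, $a\in X_\omega$ and $t\in(0,1)$. Write $s_\eta(t)$ for the row sum $\sum_{b\in X_\eta}F_\eta(t)_{c b}$ with $c\in X_\eta$. The first step is to check that $s_\eta(t)$ does not depend on $c$.

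To show this, pick any $x\in[n]$ with $x_\eta=c$. By \cref{lem: phi is bijection}, summing $F(tE)_{xy}$ over $y\in[n]$ is the same as summing over all tuples $(y_\nu)_{\nu}\in\prod_\nu X_\nu$. Then \cref{lem:scalar-tensor} factorizes the total as
\[
1=\sum_{y}F(tE)_{xy}=\prod_{\nu\in\Omega}\Bigl(\sum_{b\in X_\nu}F_\nu(t)_{x_\nu b}\Bigr).
\]
Now vary $x$ in coordinate $\eta$ only, which is possible by the bijectivity of $\phi$, keeping every other coordinate fixed. All factors are positive for $0<t<1$ by \cref{rmk:distance polynomial}. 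Taking ratios therefore shows that $\sum_b F_\eta(t)_{cb}$ is independent of $c$, so $s_\eta(t)$ is well defined and $\prod_\eta s_\eta(t)=1$.

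The product identity alone does not force each $s_\eta(t)=1$, so a second ingredient is needed. I would evaluate at $t=1$: there $F_\eta(1)=I$, so $s_\eta(1)=1$. That is not enough on its own, and here the explicit form helps. By \cref{lem: Delta omega satisfies positivity and symmetry}, $\Delta_\eta(a,a)=0$. For $b\neq a$, $\Delta_\eta(a,b)\ge1$, so those terms carry a factor $(1-t)$. This gives
\[
s_\eta(t)=Q_\eta^{-D_\eta}\sum_{b}(1+(Q_\eta-1)t)^{D_\eta-\Delta_\eta(a,b)}(1-t)^{\Delta_\eta(a,b)},
\]
a polynomial of degree at most $D_\eta$ in $t$.

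The cleanest route is to show that $s_\eta(t)$ is a unit-type homomorphism. Use $F(sE)F(tE)=F(stE)$ (\cref{lem: homomorphism property}, since $E^2=E$) together with the tensor structure: in each coordinate the same kind of ratio argument gives $F_\eta(s)F_\eta(t)=c_\eta(s,t)F_\eta(st)$ with $\prod_\eta c_\eta=1$. Multiplying by the all-ones vector on the right then yields $s_\eta(s)s_\eta(t)=c_\eta\, s_\eta(st)$. I expect this bookkeeping to be the main obstacle. A simpler alternative, which I would try first, is to use $t=0$: $F(0)=U$ means every entry equals $1/n$. For $t\to0$, the row sum becomes $s_\eta(0)=|X_\eta|\,Q_\eta^{-D_\eta}$.

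One option is to compare degrees: $\prod_\eta s_\eta(t)\equiv1$ as polynomials, and a polynomial whose product with others is the constant $1$ must itself be constant. Since every $s_\eta$ is positive on $(0,1)$, each $s_\eta$ is a nonzero constant. That constant equals its value at $t=1$, which is $1$. Hence $s_\eta\equiv1$ on $(0,1)$, and by polynomial identity on all of $\mathbb R$. Finally, setting $t=0$ gives $|X_\omega|Q_\omega^{-D_\omega}=1$, that is, $|X_\omega|=Q_\omega^{D_\omega}$.
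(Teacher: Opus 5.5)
Your proposal is correct and follows the paper's proof essentially step for step. You factor the global row sum $\sum_y F(tE)_{xy}=1$ through the color bijection into $\prod_{\eta}\sum_{b}F_\eta(t)_{x_\eta b}$, and use positivity on $(0,1)$ with ratios to make each factor's row sum independent of the row. You then conclude from the polynomial identity $\prod_\eta s_\eta(t)\equiv 1$ that each $s_\eta$ is a constant equal to $s_\eta(1)=1$ (since $F_\eta(1)=I$), and read off $|X_\omega|=Q_\omega^{D_\omega}$ at $t=0$.

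The homomorphism detour via $F(sE)F(tE)=F(stE)$ that you contemplated is not needed. Even the row-independence step could be skipped: for each fixed $x$, the factor polynomials already multiply to $1$ identically, so each is a unit in $\mathbb{R}[t]$.
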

\begin{proof}
For $\omega\in \Omega$, $a\in X_{\omega}$ and $t\in \R$, denote the row sum $\sum_{b\in X_\omega}F_{\omega}(t)_{ab}$ by $r_\omega(a,t)$. 
We have $r_{\omega}(a,t)>0$ for every $t\in (0,1).$
For $x\in [n]$, abbreviate $[x]_{L_{\omega}}$ as $x_\omega$.
Now fix an arbitrary $x\in [n]$.
The global row sum identity \cref{lem:row-normalization} and \cref{lem:scalar-tensor} give
\[\begin{aligned}
    1=&\sum_{y=1}^nF(tE)_{xy}
    =\sum_{y=1}^n\prod_{\omega\in \Omega}F_{\omega}(t)_{x_\omega y_\omega} \quad (\forall t\in \R).
\end{aligned}\]
Under the bijection $\phi:[n]\to \prod_{\omega\in \Omega}X_{\omega},\,[y]\mapsto (y_\omega)_{\omega\in \Omega}$, when $y$ ranges from $1$ to $n$, $(y_\omega)_{\omega\in \Omega}$ ranges over $\prod_{\omega\in \Omega}X_{\omega}.$
So
\begin{equation}
    1=\sum_{(y_\omega)_{\omega\in \Omega}\in \prod_{\omega\in \Omega}X_{\omega}}\prod_{\omega\in \Omega}F_{\omega}(t)_{x_\omega y_\omega}
=\prod_{\omega\in \Omega}\sum_{y_\omega\in X_\omega}F_\omega(t)_{x_\omega y_\omega}=\prod_{\omega\in \Omega}r_\omega(x_{\omega},t)\quad (\forall t\in \R).
\label{eq:product of rw}
\end{equation}

Since $x\in [n]$ is arbitrarily fixed, we may choose the coordinates $(x_{\omega})_{\omega\in \omega}\in \prod_{\omega\in \Omega}X_\omega$ freely under the bijection $\phi.$
Now fix an arbitrary $\omega\in \Omega$, let $a,a'\in X_{\omega}$.
We may choose $x,x'\in [n]$ such that $x_\omega=a$ and $x'_\omega=a'$, and $x_{\eta}=x'_\eta\in X_{\eta}$ are arbitrary for $\eta\in \Omega\setminus \{\omega\}$.
Compare \eqref{eq:product of rw} for $x$ and $x'$, $r_\eta(x_\eta,t)$  cancels for every $\eta\in \Omega\setminus\{\omega\}$ when $t\in (0,1)$.
This leaves us with $r_\omega(a,t)=r_\omega(a',t),\,\forall t\in (0,1)$.
Thus, for every $\omega\in \Omega$ and $t\in (0,1)$, $F_\omega(t)$ has constant row sum $r_\omega(t)$.
When $t\in (0,1)$, \eqref{eq:product of rw} gives the polynomial identity $\prod_{\omega\in \Omega}r_\omega(t)=1$.
Consequently, $r_\omega(t)$ is a positive constant polynomial for every $\omega\in \Omega.$
%%% I added the word "positive" so that the product =1 ==> each factor=1
Since $F_\omega(1)=I$ (see \cref{rmk:distance polynomial}), $r_\omega(t)=r_\omega(1)=1.$
That is, $F_\omega(t)$ has constant row sum 1.

At $t=0$, 
%every entry of 
%%% I removed the word "of".   $F_\omega(t)_{ab}$  is already just an entry.
every entry $F_\omega(t)_{ab}$ is $Q_\omega^{-D_\omega}$, so
$1=r_\omega(0)=|X_\omega|Q_\omega^{-D_\omega}$, as required.
\end{proof}

\begin{lemma}\label{lem: F omega is homomorphism}
For every $\omega\in\Omega$ and every $s,t\in\mathbb R$,
\(
F_\omega(s)F_\omega(t)=F_\omega(st).
\)
\end{lemma}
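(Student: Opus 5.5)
The plan is to transfer the global homomorphism identity $F(sE)F(tE)=F(stE)$ (\cref{lem: homomorphism property}, with $E^2=E$) through the tensor decomposition of \cref{lem:scalar-tensor}. The factorwise version will then be obtained by cancelling the other tensor factors, much as in the proof of \cref{lem:scalar-normalization}.

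First I expand the global identity entrywise. For $x,y\in[n]$,
\[
\sum_{z=1}^n F(sE)_{xz}F(tE)_{zy}=F(stE)_{xy}.
\]
By \cref{lem:scalar-tensor} each entry factors as $\prod_{\eta\in\Omega}F_\eta(\cdot)_{x_\eta z_\eta}$. Since $\phi$ is a bijection (\cref{lem: phi is bijection}), summing over $z\in[n]$ is the same as summing over $(z_\eta)_\eta\in\prod_\eta X_\eta$. The sum therefore factors, giving
\[
\prod_{\eta\in\Omega}\bigl(F_\eta(s)F_\eta(t)\bigr)_{x_\eta y_\eta}=\prod_{\eta\in\Omega}F_\eta(st)_{x_\eta y_\eta}
\]
for all $x,y$ and all $s,t$.

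Next I isolate the factor $\omega$. Using surjectivity of $\phi$, I choose $x,y$ with prescribed $\omega$-coordinates $a,b\in X_\omega$. For every $\eta\neq\omega$ I take $x_\eta=y_\eta=c_\eta$, a fixed element. I then want to divide by the $\eta$-factors, which requires them to be nonzero. Here I restrict to $s,t\in(0,1)$. In that range $F_\eta(t)$ and $F_\eta(s)$ are entrywise positive by \cref{rmk:distance polynomial}, so $(F_\eta(s)F_\eta(t))_{c_\eta c_\eta}>0$ and $F_\eta(st)_{c_\eta c_\eta}>0$.

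The main obstacle is that the cancellation needs each $\eta$-factor on the left to equal its counterpart on the right, not merely be nonzero. To get this, I compare two choices of $(a,b)$ that share all $\eta$-coordinates. Take the first choice with $a=b$ and the second with general $(a,b)$, keeping the same $c_\eta$. Dividing the two identities makes the common $\eta$-factors cancel, which yields
\[
\frac{(F_\omega(s)F_\omega(t))_{ab}}{(F_\omega(s)F_\omega(t))_{aa}}=\frac{F_\omega(st)_{ab}}{F_\omega(st)_{aa}}.
\]
So $F_\omega(s)F_\omega(t)$ and $F_\omega(st)$ agree row by row up to a positive scalar $\lambda_a(s,t)$. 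Summing over $b$, both matrices have row sums $1$: for $F_\omega(st)$ this is \cref{lem:scalar-normalization}, and for the product it follows since each factor has row sums $1$. Hence $\lambda_a=1$.

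This gives $F_\omega(s)F_\omega(t)=F_\omega(st)$ for $s,t\in(0,1)$. Both sides are polynomial in $(s,t)$, so the identity extends to all $s,t\in\mathbb R$ by \cref{thm:analytic-identity} (or simply by polynomial identity).
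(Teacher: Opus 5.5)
Your proof is correct and follows essentially the paper's route: factor the global identity $F(sE)F(tE)=F(stE)$ through \cref{lem:scalar-tensor} and the bijection $\phi$, then isolate the $\omega$-factor using the row sums from \cref{lem:scalar-normalization}. The only difference is the isolation step. The paper fixes $x$ with $x_\omega=a$ and sums over all $y$ with $y_\omega=b$, so the other factors collapse to their row sums $1$ and the identity holds for all real $s,t$ at once. You instead freeze the other coordinates and divide, which forces the detour through positivity on $(0,1)$, normalization of the $\omega$-factor, and polynomial continuation.
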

\begin{proof}
The global homomorphism property \cref{lem: homomorphism property} and \cref{lem:scalar-tensor} give
\[
\bigotimes_{\eta\in\Omega}\bigl(F_\eta(s)F_\eta(t)\bigr)
=F(sE)F(tE)=F(stE)=\bigotimes_{\eta\in\Omega}F_\eta(st).
\]
Entrywise,
\begin{equation}\label{eq:F omega hom}
    \prod_{\eta\in \Omega}(F_\eta(s)F_\eta(t))_{x_\eta y_\eta}=\prod_{\eta\in \Omega}F_\eta(st)_{x_\eta y_\eta}\qquad (\forall x,y\in [n]).
\end{equation}
Now fix an arbitrary $\omega\in \Omega$ and $a,b\in X_\omega$, we want to show $(F_\omega(s)F_\omega(t))_{ab}=F_{\omega}(st)_{ab}.$
Since $\phi:[n]\to \prod_{\eta\in \Omega}X_\eta$ is a bijection, we may fix some $x\in [n]$ such that $x_\omega=a.$
Summing \eqref{eq:F omega hom} over all $y\in [n]$ with $y_\omega=b$, we obtain 
\[
\begin{aligned}
    &\sum_{\substack{y\in [n] \\ y_\omega=b}} \prod_{\eta\in \Omega}(F_\eta(s)F_\eta(t))_{x_\eta y_\eta} = \sum_{\substack{y\in [n] \\ y_\omega=b}} \prod_{\eta\in \Omega}F_\eta(st)_{x_\eta y_\eta}\\
    \iff &
    \sum_{\substack{(y_\eta)_{\eta\in \Omega} \\ y_\omega=b}} \prod_{\eta\in \Omega}(F_\eta(s)F_\eta(t))_{x_\eta y_\eta} = \sum_{\substack{(y_\eta)_{\eta\in\Omega} \\ y_\omega=b}} \prod_{\eta\in \Omega}F_\eta(st)_{x_\eta y_\eta}\\
    \iff &
    (F_\omega(s)F_\omega(t))_{ab}\sum_{(y_\eta)_{\eta\in \Omega\setminus\{\omega\}}} \prod_{\eta\in \Omega\setminus\{\omega\}}(F_\eta(s)F_\eta(t))_{x_\eta y_\eta} 
    = F_\omega(st)_{ab}\sum_{(y_\eta)_{\eta\in\Omega\setminus\{\omega\}} } \prod_{\eta\in \Omega\setminus\{\omega\}}F_\eta(st)_{x_\eta y_\eta}\\
    \iff &
    (F_\omega(s)F_\omega(t))_{ab}\prod_{\eta\in \Omega\setminus\{\omega\}}\sum_{y_\eta\in X_\eta}(F_\eta(s)F_\eta(t))_{x_\eta y_\eta} 
    = F_\omega(st)_{ab}\prod_{\eta\in \Omega\setminus\{\omega\}} \sum_{y_\eta\in X_\eta}F_\eta(st)_{x_\eta y_\eta}.\\
\end{aligned}
\]
By \cref{lem:scalar-normalization}, for every $\eta\in \Omega\setminus\{\omega\}$ and $s,t\in \R$, $F_\eta(s),F_\eta(t)$ and $F_\eta(st)$ have row sum 1.
Thus $F_\eta(s)F_\eta(t)$ also has row sum $1$, as $F_\eta(s)F_\eta(t)\mathbf{1}=F_\eta(s)\mathbf{1}=\mathbf{1}$.
So the above equation simplifies to $(F_\omega(s)F_\omega(t))_{ab}=F_{\omega}(st)_{ab}.$
The lemma is then proved.
\end{proof}

\subsection{Hamming Intersection Numbers of Coordinate  Graphs}\label{subsec: Hamming Intersection Numbers of Coordinate  Graphs}

\begin{lemma}\label{lem:delta omega is a distance}
    For every $\omega\in \Omega$, $\Delta_\omega(\cdot,\cdot)$ is a distance on $X_\omega$.
\end{lemma}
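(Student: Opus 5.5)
Positivity and symmetry of $\Delta_\omega$ on $X_\omega$ are already established in \cref{lem: Delta omega satisfies positivity and symmetry}, so only the triangle inequality $\Delta_\omega(a,b)\le\Delta_\omega(a,c)+\Delta_\omega(c,b)$ for $a,b,c\in X_\omega$ remains. I will mimic the lowest-order argument used for $\delta$ in \cref{lem:endpoint-distance}, but now applied to the one-parameter family $F_\omega(t)$ near $t=1$. The needed ingredients are the homomorphism identity $F_\omega(s)F_\omega(t)=F_\omega(st)$ (\cref{lem: F omega is homomorphism}) and the explicit form \eqref{eq:scalar-kernel}. Together these give
\[
\sum_{c\in X_\omega}F_\omega(1-\varepsilon)_{ac}F_\omega(1-\varepsilon)_{cb}=F_\omega\bigl((1-\varepsilon)^2\bigr)_{ab}.
\]

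From \eqref{eq:scalar-kernel}, as $\varepsilon\to0$ we have
\[
F_\omega(1-\varepsilon)_{ab}=Q_\omega^{-D_\omega}\bigl(Q_\omega-(Q_\omega-1)\varepsilon\bigr)^{D_\omega-\Delta_\omega(a,b)}\varepsilon^{\Delta_\omega(a,b)},
\]
so its leading term is $\kappa_{ab}\varepsilon^{\Delta_\omega(a,b)}$ with $\kappa_{ab}=Q_\omega^{-\Delta_\omega(a,b)}>0$. Similarly, the right-hand side is $F_\omega(1-2\varepsilon+\varepsilon^2)_{ab}=\kappa_{ab}(2\varepsilon)^{\Delta_\omega(a,b)}+O(\varepsilon^{\Delta_\omega(a,b)+1})$, whose leading coefficient is nonzero. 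On the left, each summand indexed by $c$ has leading term $\kappa_{ac}\kappa_{cb}\varepsilon^{\Delta_\omega(a,c)+\Delta_\omega(c,b)}$ with a strictly positive coefficient.

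Set $p=\min_{c}\bigl(\Delta_\omega(a,c)+\Delta_\omega(c,b)\bigr)$. The coefficient of $\varepsilon^p$ on the left is a sum of positive terms, so there is no cancellation and the left side has exact vanishing order $p$. Comparing vanishing orders of the two sides (both are polynomials in $\varepsilon$) gives $\Delta_\omega(a,b)=p\le\Delta_\omega(a,c)+\Delta_\omega(c,b)$ for every $c$. This is the triangle inequality, and it also shows the minimum is attained.

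The one point to be careful about is the positivity of every leading coefficient. This holds because $Q_\omega\ge2$ (\cref{lem:integer-factors}), so the factor $(1+(Q_\omega-1)t)$ evaluated at $t=1$ equals $Q_\omega\neq0$. Everything else is a direct transcription of the argument in \cref{lem:endpoint-distance}. As an alternative route, the triangle inequality could be deduced from \cref{lem:endpoint-distance} through the bijection $\phi$, by choosing $x,y,z$ that agree outside $\omega$ and using $\delta=\sum_\eta\Delta_\eta$ from \cref{lem:owned-atoms}; this also works, but the direct argument above is cleaner.
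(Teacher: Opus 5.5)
Your proposal is correct and essentially matches the paper's proof. The paper also expands $F_\omega(t)$ near $t=1$ with leading term $Q_\omega^{-\Delta_\omega(a,b)}u^{\Delta_\omega(a,b)}$, applies $F_\omega(s)F_\omega(t)=F_\omega(st)$ at $s=t=1-u$, and matches lowest-order terms using the positivity of the left-hand coefficients. Your alternative route through $\delta$ and the bijection $\phi$ is also valid.
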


\begin{proof}
Fix $\omega\in\Omega$, and in this proof write $D=D_\omega$, $Q=Q_\omega$, and $\Delta=\Delta_\omega$. 
By \cref{lem: Delta omega satisfies positivity and symmetry}, $\Delta(\cdot,\cdot)$ satisfies positivity (nonnegativity and 0 only at equal inputs) and symmetry.
We only need to prove the triangle inequality.
Recall \cref{def: distance polynomial},

\begin{equation*}
    F_\omega(t)_{ab}=Q^{-D}\bigl(1+(Q-1)t\bigr)^{D-\Delta(a,b)}(1-t)^{\Delta(a,b)} \quad (t\in \R).
\end{equation*}
Let $u=1-t$, we have
\begin{equation}\label{eq: asymptotic for F omega}
    \begin{aligned}
        F_\omega(t)_{ab}
        =&Q^{-D}\bigl(1+(Q-1)(1-u)\bigr)^{D-\Delta(a,b)}u^{\Delta(a,b)} \\
        =&Q^{-D}(Q-(Q-1)u)^{D-\Delta(a,b)}u^{\Delta(a,b)}\\
        =&Q^{-\Delta(a,b)}(1-\frac{Q-1}{Q}u)^{D-\Delta(a,b)}u^{\Delta(a,b)}\\
        =&Q^{-\Delta(a,b)}u^{\Delta(a,b)}+O(u^{\Delta(a,b)+1}),\quad \text{as } u\to 0.
    \end{aligned}
\end{equation}
Now fix arbitrary $a,b\in X_\omega.$
By \cref{lem: F omega is homomorphism}, $F_\omega(s)F_\omega(t)=F_\omega(st)$ for every $s,t\in \R$.
Set $s=t=1-u$ and let $u$ approaches to $0$,
we have
\begin{equation}\label{eq: F omega 1-u}
    \sum_{c\in X_\omega}F_\omega(1-u)_{ac}F_\omega(1-u)_{cb}=F_\omega(1-(2u-u^2)),\quad \text{as } u\to 0.
\end{equation}
Plug in \eqref{eq: asymptotic for F omega} and compare the lowest degree term on both sides of \eqref{eq: F omega 1-u}.
On the RHS, the lowest degree term is $Q^{-\Delta(a,b)}(2u)^{\Delta(a,b)},$
with a nonzero coefficient.
Let $m=\min_{c\in X_\omega}\{\Delta(a,c)+\Delta(c,b)\}$.
Then the lowest degree term on the LHS is 
\[
\sum_{\substack{c\in X_\omega :\Delta(a,c)+\Delta(c,b)=m}}Q^{-\Delta(a,c)}u^{\Delta(a,c)}Q^{-\Delta(c,b)}u^{\Delta(c,b)}=\sum_{\substack{c\in X_\omega :\Delta(a,c)+\Delta(c,b)=m}}Q^{-m}u^{m},
\]
where all coefficients are positive and therefore do not cancel.
The equation \eqref{eq: F omega 1-u} is a polynomial identity, thus the lowest degree on both sides must agree.
Therefore
\(
\Delta(a,b)=m\le \Delta(a,c)+\Delta(c,b),\,(\forall a,b,c\in X_\omega).
\)
\end{proof}

\begin{lemma}\label{lem:coordinate-graph-distance}
For every $\omega\in\Omega$, the graph $\Gamma_\omega$ is connected and its graph distance is $\Delta_\omega$.
\end{lemma}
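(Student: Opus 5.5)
We want: $\Gamma_\omega$ is connected, and its graph distance equals $\Delta_\omega$. Since $\Delta_\omega$ is already a metric on $X_\omega$ (\cref{lem:delta omega is a distance}) taking nonnegative integer values, the standard criterion applies. Let $d_\Gamma$ denote the graph distance.

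First, $d_\Gamma(a,b)\ge\Delta_\omega(a,b)$. Any path $a=c_0,c_1,\ldots,c_k=b$ in $\Gamma_\omega$ has $\Delta_\omega(c_{j-1},c_j)=1$ for each step. The triangle inequality then gives $\Delta_\omega(a,b)\le k$.

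Second, and this is the main step, we produce geodesic paths. For $a\ne b$ in $X_\omega$ we need some $c$ with $\Delta_\omega(a,c)=1$ and $\Delta_\omega(c,b)=\Delta_\omega(a,b)-1$. Induction on $\Delta_\omega(a,b)$ then yields a path of length exactly $\Delta_\omega(a,b)$. This simultaneously proves connectivity and $d_\Gamma\le\Delta_\omega$.

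To find $c$, lift to $[n]$ via the bijection $\phi$ (\cref{lem: phi is bijection}). Choose $x,y\in[n]$ with $x_\omega=a$ and $y_\omega=b$, and with $x_\eta=y_\eta$ for all $\eta\ne\omega$. Then $\Delta_\eta(x,y)=0$ for $\eta\neq\omega$, so by \cref{lem:owned-atoms}
\[
\delta(x,y)=\Delta_\omega(x,y)=\sum_{i\in\mathcal I_\omega}h_i(x,y)=\Delta_\omega(a,b)\ge1,
\]
and $\alpha(x,y)$ is supported on $\mathcal I_\omega$.

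Pick $i\in\mathcal I_\omega$ with $h_i(x,y)\ge1$ and set $\beta=\mathbf u_i\le\alpha(x,y)$. \Cref{lem:endpoint-box} gives $z\in[n]$ with $\alpha(x,z)=\mathbf u_i$ and $\alpha(z,y)=\alpha(x,y)-\mathbf u_i$. By \cref{lem:owned-atoms} this yields:
\begin{itemize}
\item $\Delta_\omega(x,z)=1$ and $\Delta_\eta(x,z)=0$ for $\eta\ne\omega$;
\item $\Delta_\omega(z,y)=\Delta_\omega(a,b)-1$.
\end{itemize}
Take $c=z_\omega$. By \cref{lem:coordinate-locality}, $\Delta_\omega(a,c)=1$ and $\Delta_\omega(c,b)=\Delta_\omega(a,b)-1$, as required.

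The only delicate point is making sure the box lemma is applied in a form whose increments are confined to $\mathcal I_\omega$. The choice of $x,y$ agreeing in all other coordinates makes this automatic, so no obstacle remains beyond bookkeeping.

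Combining the two inequalities gives $d_\Gamma=\Delta_\omega$. Connectivity follows because every $\Delta_\omega(a,b)$ is finite.
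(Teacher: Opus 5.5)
Your proof is correct. The lower bound $\Delta_\omega\le d_\Gamma$, via the triangle inequality of \cref{lem:delta omega is a distance}, is exactly the paper's argument, but your upper bound takes a different route.

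The paper stays inside $X_\omega$. It expands the coordinate semigroup identity $F_\omega(1-s)F_\omega(1-t)=F_\omega(1-(s+t-st))$ as $s,t\to0$, using $F_\omega(1-u)_{ab}=Q^{-\Delta(a,b)}u^{\Delta(a,b)}+O(u^{\Delta(a,b)+1})$. Matching the lowest total-degree terms gives $\sum_{c:\,\Delta(a,c)+\Delta(c,b)=k}s^{\Delta(a,c)}t^{\Delta(c,b)}=(s+t)^k$. The needed neighbor $c$ is then read off from the coefficient of $st^{k-1}$.

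You instead lift to $[n]$ through $\phi$ and reuse the box lemma (\cref{lem:endpoint-box}) for the global $F$. You combine it with the partition $(\mathcal I_\omega)_{\omega\in\Omega}$ and the identity $\Delta_\omega=\sum_{i\in\mathcal I_\omega}h_i$ from \cref{lem:owned-atoms}, and with \cref{lem:coordinate-locality} to descend back to $X_\omega$.

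Your version is shorter, because the lowest-degree comparison was already carried out once, in the proof of the box lemma. The paper's version is intrinsic to $X_\omega$: it needs only the homomorphism law and explicit form of $F_\omega$, and it mirrors the proof of \cref{lem:delta omega is a distance}.

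One small remark: requiring $x_\eta=y_\eta$ for $\eta\neq\omega$ is unnecessary. For arbitrary representatives $x,y$ of $a,b$ you still have $\Delta_\omega(x,y)=\sum_{i\in\mathcal I_\omega}h_i(x,y)=\Delta_\omega(a,b)$. Choosing $\beta=\mathbf u_i$ with $i\in\mathcal I_\omega$ already confines the decrement to $\mathcal I_\omega$, so the point you flagged as delicate is not actually an issue.
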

\begin{proof}
Fix $\omega\in\Omega$, and in this proof write $D=D_\omega$, $Q=Q_\omega$, and $\Delta=\Delta_\omega$.
Let $d_\Gamma$ be the graph distance of $\Gamma_\omega$.
Fix $a,b\in X_\omega$,
we want to prove $d_\Gamma(a,b)=\Delta(a,b).$

Suppose $a$ and $b$ are connected in $\Gamma_\omega$ by a path,
%%% I added the word "by a path" as we also used the word to say there is an edge (a,b)
%and there is a 
and let $a_0a_1\ldots a_k$,  from   $a_0=a$ to $a_k=b$
be a path of the shortest length $k$.
\cref{lem:delta omega is a distance} gives the triangle inequality for $\Delta$.
So 
\[\Delta(a,b)=\Delta(a_0,a_k)\le\sum_{i=0}^{k-1}\Delta(a_i,a_{i+1})=k=d_\Gamma(a,b).\]

On the other hand, for $k=\Delta(a,b)$, we want to construct a path from $a$ to $b$ of length $k.$
By \cref{lem: F omega is homomorphism}, $F_\omega(1-s)F_\omega(1-t)=F_\omega((1-s)(1-t))$ for every $s,t\in \R$.
Entrywise,
\begin{equation}\label{eq: F omega 1-s}
%    \sum_{c\in X_\omega}F_\omega(1-s)_{ac}F_\omega(1-t)_{cb}=F_\omega(1-(s+t-st)).
%typo, corrected, add subscrot ab
    \sum_{c\in X_\omega}F_\omega(1-s)_{ac}F_\omega(1-t)_{cb}=F_\omega(1-(s+t-st))_{ab}.
\end{equation}
Let $s,t\to 0$, substitute the asymptotic expansion \eqref{eq: asymptotic for F omega} into \eqref{eq: F omega 1-s}, and compare the lowest total degree terms on both sides.
On the RHS, the lowest total degree term is $Q^{-\Delta(a,b)}(s+t)^k.$
On the LHS, every term indexed by $c\in X_\omega$ has total degree $\Delta(a,c)+\Delta(c,b)\ge \Delta(a,b)=k$ by the triangle inequality.
Since \eqref{eq: F omega 1-s} is a polynomial identity in $\R[s,t]$, the lowest total degree on the LHS must equal to $k$, and 
\begin{equation}\label{eq: s+t to k}
    \sum_{c\in X_\omega:\,\Delta(a,c)+\Delta(c,b)=k}s^{\Delta(a,c)}t^{\Delta(c,b)}=(s+t)^{k}.
\end{equation}
For $k\geq1$, the coefficient of $st^{k-1}$ on the RHS of \eqref{eq: s+t to k} is $k>0$, so on the LHS the term $st^{k-1}$ also appears.
Thus, there is $c\in X_\omega$ with $\Delta(a,c)=1$ and $\Delta(c,b)=k-1$.
Induction produces a path from $a$ to $b$ of length $k$. 
Hence $d_\Gamma(a,b)\le \Delta(a,b)$.
\end{proof}

\begin{lemma}\label{lem:hamming-numbers}
For every $\omega\in\Omega$, the coordinate graph $\Gamma_\omega$ is distance regular, and has the same intersection numbers as the Hamming graph $H(D_\omega,Q_\omega).$
\end{lemma}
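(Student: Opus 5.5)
Fix $\omega\in\Omega$ and write $D=D_\omega$, $Q=Q_\omega$, $\Delta=\Delta_\omega$, $f_k=f_{D,Q,k}$. The plan is to mimic, at the level of the coordinate graph, the proof of \cref{lem: Hamming intersection number generating polynomial}: read off intersection counts from the homomorphism identity of \cref{lem: F omega is homomorphism}, then compare with the Hamming generating identity using linear independence (\cref{lem: distance polynomial are linearly independent}).

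\textbf{Step 1: the range of $\Delta$.} By \cref{lem:coordinate-graph-distance}, $\Gamma_\omega$ is connected and its graph distance is $\Delta$. Since $D-\Delta(a,b)=\sum_{r\in\omega}h_r\ge 0$ (\cref{rmk:distance polynomial}), every distance satisfies $\Delta(a,b)\le D$. So the diameter $D'$ of $\Gamma_\omega$ is at most $D$.

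\textbf{Step 2: the generating identity.} Fix $a,b\in X_\omega$ with $\Delta(a,b)=k$. For $0\le i,j\le D$, set
\[
p_{ij}(a,b)=|\{c\in X_\omega : \Delta(a,c)=i,\ \Delta(c,b)=j\}|.
\]
Take the $(a,b)$ entry of $F_\omega(s)F_\omega(t)=F_\omega(st)$ and group the sum over $c$ by the pair $(\Delta(a,c),\Delta(c,b))$. By \eqref{eq:scalar-kernel} this gives
\[
\sum_{i,j=0}^{D}p_{ij}(a,b)\,f_i(s)f_j(t)=f_k(st)\qquad(\forall s,t\in\mathbb R).
\]
\cref{lem: Hamming intersection number generating polynomial} gives the same right-hand side with the Hamming numbers $p^k_{ij}$ of $H(D,Q)$. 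Subtracting the two identities and applying \cref{lem: distance polynomial are linearly independent}, we get $p_{ij}(a,b)=p^k_{ij}$ for all $0\le i,j\le D$. In particular these counts depend only on $k$, which is exactly distance-regularity of $\Gamma_\omega$, with the Hamming values.

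\textbf{Step 3: the diameter equals $D$.} This is the one point that needs care, because \cref{def:distance-regular} compares intersection arrays indexed up to the diameter. For $k=0$ (so $a=b$), the count $p_{DD}(a,a)=p^0_{DD}$ equals the number of vertices of $H(D,Q)$ at distance $D$ from a fixed vertex, namely $(Q-1)^D>0$. Hence some $c$ has $\Delta(a,c)=D$, so $D'=D$. Since both graphs have $Q^D$ vertices (\cref{lem:scalar-normalization}), the index ranges also match. Therefore $\Gamma_\omega$ is distance-regular with diameter $D$ and the same intersection numbers as $H(D_\omega,Q_\omega)$.

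I expect no serious obstacle. The substantive inputs are the connectivity and graph-distance identification in \cref{lem:coordinate-graph-distance} and the integrality $Q_\omega\in\Z_{\ge2}$ from \cref{lem:integer-factors}, which is needed so that the Hamming graph $H(D_\omega,Q_\omega)$ is defined at all.
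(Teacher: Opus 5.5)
Your proposal is correct and is essentially the paper's proof: you read off $p_{ij}(a,b)$ from the $(a,b)$ entry of $F_\omega(s)F_\omega(t)=F_\omega(st)$, compare with the Hamming generating identity, and conclude by linear independence of the products $f_{D,Q,i}(s)f_{D,Q,j}(t)$. Your Steps 1 and 3 make explicit two details the paper leaves implicit: that $\Delta_\omega\le D_\omega$, and that the diameter is exactly $D_\omega$ because $p^0_{DD}=(Q-1)^D>0$.
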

\begin{proof}
Fix $\omega\in \Omega$ and abbreviate $D=D_\omega$, $Q=Q_\omega$. 
Fix arbitrary vertices $a,b\in X_\omega$ at distance $k$, let 
$$p_{ij}^k(a,b):=|\{c\in X_\omega\mid \Delta(a,c)=i\text{ and }\Delta(c,b)=j\}|.$$
By \cref{lem: F omega is homomorphism},
\begin{equation}\label{eq: hamming intersection numbers}
    \sum_{c\in X_\omega}F_\omega(s)_{ac}F_\omega(t)_{cb}=F_\omega(st)_{ab}.
\end{equation}
By \cref{def: distance polynomial}, $F_\omega(s)_{ac}=f_{D,Q,\Delta(a,c)}(s)$, $F_\omega(t)_{cb}=f_{D,Q,\Delta(c,b)}(t)$ and $F_\omega(st)_{ab}=f_{D,Q,k}(st)$.
Grouping the sum \eqref{eq: hamming intersection numbers} by $(i,j)=(\Delta(a,c),\Delta(c,b))$ gives
\[
\sum_{i=0}^{D}\sum_{j=0}^{D}p^k_{ij}(a,b)f_{D,Q,i}(s)f_{D,Q,j}(t)=f_{D,Q,k}(st)
\qquad(s,t\in\mathbb R).
\]
Combined with \cref{lem: Hamming intersection number generating polynomial},
\[
\sum_{i=0}^{D}\sum_{j=0}^{D}(p^k_{ij}(a,b)-p_{ij}^k)f_{D,Q,i}(s)f_{D,Q,j}(t)=0
\qquad(s,t\in\mathbb R)
\]
where $(p_{ij}^k)_{0\le i,j,k\le D}$ are intersection numbers of the Hamming graph $H(D,Q).$
\cref{lem: distance polynomial are linearly independent} gives $p_{ij}^k(a,b)=p_{ij}^k.$
Since $a,b\in X_\omega$ are arbitrary vertices at distance $k$, we conclude that $\Gamma_\omega$ is a distance regular graph and it has the same intersection numbers as $H(D_\omega,Q_\omega).$
\end{proof}

\section{Completing the Dichotomy}\label{sec:dichotomy}

In this section, we will complete the proof for the dichotomy of $\PlGH(M)$ for entrywise positive and full rank matrix $M$.
The integers $Q_\omega\in \Z_{\ge 0}$ obtained in the previous section capture the sizes of the tensor factors of $M$. 
%Technically, they serve as the alphabet sizes for the Hamming graphs $H(D_\omega,Q_\omega)$, each associated with a group of tensor factors of $M$ with the same size. 
We will prove that $\PlGH(M)$ is \#P-hard whenever at least one $Q_\omega \ge 3$ (\cref{lem:potts-hardness}), whereas $M$ is an Ising tensor under a suitable simultaneous permutation of
the rows and columns when $Q_\omega = 2$ for all $\omega\in \Omega$ (\cref{lem:binary}).

\subsection{Hardness Proof}
In this subsection, we prove that $\PlGH(M)$ is \#P-hard if some $Q_\omega\ge 3$ (\cref{lem:potts-hardness}).
\begin{lemma}\label{lem:coordinate-extraction}
For every $\omega\in\Omega$ and real $t\in(0,1)$,
\(
\PlGH(F_\omega(t))\leq_T\PlGH(M)
\).
\end{lemma}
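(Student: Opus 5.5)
By \cref{lem:polytope-positivity}, $F(tE)$ is entrywise positive and realizable for every $t\in(0,1)$. Hence $\PlGH(F(tE))\le_T\PlGH(M)$. It therefore suffices to show $\PlGH(F_\omega(t))\le_T\PlGH(F(tE))$. Under the color bijection $\phi$ of \cref{lem: phi is bijection}, \cref{lem:scalar-tensor} gives $F(tE)=\bigotimes_{\eta\in\Omega}F_\eta(t)$ after a simultaneous permutation of rows and columns. Consequently $Z_{F(tE)}(G)=\prod_{\eta\in\Omega}Z_{F_\eta(t)}(G)$ for every planar $G$.

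The plan is to isolate the factor $F_\omega(t)$ by marginalizing over the other coordinates, and I will do this at the level of realizable unary and binary signatures. First I would realize the projection onto the $\omega$-coordinate using the algebra itself. Concretely, I want a realizable matrix of the form $F_\omega(t)\otimes\bigotimes_{\eta\ne\omega}U_{X_\eta}$ (up to the bijection). Applying the homomorphism, one checks $F(s_\omega,\,(s_\eta)_{\eta\neq\omega})$ along a suitable element $X\in\mathcal A$ with $X=\sum_\eta s_\eta E_\eta$, where $E_\eta$ is the projection onto the $\mathcal S$-part belonging to coordinate $\eta$. Setting $s_\eta=0$ for $\eta\ne\omega$ and $s_\omega=t$ gives exactly $F_\omega(t)\otimes\bigotimes_{\eta\ne\omega}U$. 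That point lies in $\operatorname{int}\mathcal C$, as a convex combination of $0$ and $E_\omega$ with the ball argument from \cref{lem:polytope-positivity}, provided $E_\omega\in\mathcal C$. It is then realizable by \cref{lem:polytope-positivity}.

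Second, given the matrix $N=F_\omega(t)\otimes\bigotimes_{\eta\ne\omega}U$, I would reduce $\PlGH(F_\omega(t))$ to $\PlGH(N)$ by a direct counting argument. For a planar graph $G$, a coloring $\sigma:V\to[n]$ splits into coordinate colorings $(\sigma_\eta)$. Each $U$ factor contributes $|X_\eta|^{-1}$ per edge independently of the coloring. Summing over all $\sigma_\eta$ for $\eta\neq\omega$ gives $\prod_{\eta\ne\omega}|X_\eta|^{|V|-|E|}$, which yields
\[Z_N(G)=Z_{F_\omega(t)}(G)\prod_{\eta\ne\omega}|X_\eta|^{|V|-|E|}.\]
This is an exact P-time relation, so the reduction follows. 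Along with $\PlGH(N)\le_T\PlGH(M)$, this proves the lemma.

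The main obstacle is the first step: showing that the coordinate projection $E_\omega$ lies in $\mathcal A$ and in $\mathcal C$. It is also needed that $F(sE_\omega)$ has the claimed tensor form, because \cref{lem:scalar-tensor} is only stated along the ray $tE$. I would try to obtain $E_\omega$ from the symmetric center, via $W_\omega$ and the spectral projections $R_j$ of \cref{lem:positive-gradients}, and redo the factorization computation of \cref{lem:scalar-tensor} for $X=tE_\omega$ using $\ell_i(tE_\omega)$, $\ell_r(tE_\omega)$.

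If that fails, the fallback is a purely Holant-level marginalization. I would use pinning gadgets or interpolation in $t$ on $F(tE)$ directly: vary $t_\eta$ by stretching, since $F(sE)F(tE)=F(stE)$ gives powers, and interpolate so that the other factors tend to $U$ as their parameter goes to $0$. The point is that $F_\eta(t)\to U$ as $t\to0$ polynomially in $t$, and extracting the $\omega$-part amounts to interpolating over the independent parameters.
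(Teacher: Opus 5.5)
\textbf{Gap: the coordinate projection $E_\omega$ is never constructed.} Your second step is fine: $Z_N(G)=Z_{F_\omega(t)}(G)\prod_{\eta\ne\omega}|X_\eta|^{|V|-|E|}$. But the whole proposal rests on the first step, which you identify as the main obstacle and then leave open. You never say which element $E_\omega\in\mathcal A$ you mean. None of the three facts you need follows from the lemmas you cite: that $E_\omega\in\mathcal A$; the tensor form $F(sE_\omega)=F_\omega(s)\otimes\bigotimes_{\eta\ne\omega}U_{X_\eta}$; and that $tE_\omega\in\operatorname{int}\mathcal C$.

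The homomorphism property alone does not ``check'' the tensor form. \cref{lem:scalar-tensor} only covers the ray $tE$. By \cref{lem:affine-factors} you would need $\ell_a(sE_\omega)/\ell_a(0)$ for every factor $a$, including the endpoint factors $i\in\mathcal I$, and \cref{lem:central-gradients} says nothing about their gradients.

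Your fallback does not repair this. Stretching $F(tE)$ only produces $F(t^kE)$, which moves all coordinates at once. So it cannot send the factors $\eta\ne\omega$ to $U$ while keeping $F_\omega(t)$. ``Interpolating over the independent parameters'' presupposes exactly the coordinate-wise families you have not built.

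\textbf{How your route could be completed.} Take $E_\omega:=Q_\omega W_\omega$, which lies in $\mathcal A$. From $\Delta_\omega(x,y)=\operatorname{Tr}W_\omega-n(W_\omega)_{xy}$ and the unit row sums of $F_\omega(t)$, one gets, under $\phi$, $Q_\omega W_\omega=e_1^{(\omega)}\otimes\bigotimes_{\eta\ne\omega}U_{X_\eta}$. Here $e_1^{(\omega)}=\frac{d}{dt}F_\omega(t)|_{t=0}$ is a projection by \cref{lem: F omega is homomorphism}. Differentiating \cref{lem:scalar-tensor} at $t=0$ shows $\sum_\eta E_\eta=E$.

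For endpoint factors, $\langle G_i,E_\eta\rangle=\frac1n\sum_{x,y}(E_\eta)_{xy}h_i(x,y)$. This vanishes for $i\notin\mathcal I_\eta$ by \cref{lem:coordinate-locality} together with the zero row sums of $e_1^{(\eta)}$. The central gradients are handled through $W_{\omega_r}$ and the orthogonality of the $E_\eta$. Finally, positivity of $F(tE_\omega)$ together with \cref{lem:mean-identities} places $tE_\omega$ in $\operatorname{int}\mathcal C$. That is a substantial argument you have not given.

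\textbf{The paper's route.} The paper avoids all of this. It notes that $\Theta_\omega=\sum_{\eta\ne\omega}\Delta_\eta$ is an affine combination of $J$ and the $W_\eta\in\mathcal A$, so $\exp_\circ(s\Theta_\omega)$ is realizable. Thickening this onto $F(tE)$ makes $Z_{M(z)}(G)=R_G(z)$ a polynomial in $z$. Interpolating to $z=0$ keeps only colorings whose $\eta$-coordinates ($\eta\ne\omega$) are constant on the connected graph $G$. This gives $R_G(0)=c_\omega(t)^m\,\frac{n}{|X_\omega|}\,Z_{F_\omega(t)}(G)$.
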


\begin{proof}
Fix $\omega\in \Omega$ and $t\in (0,1).$
For $x,y\in[n]$, put
\[
(\Theta_\omega)_{xy}=\sum_{\eta\in\Omega\setminus\{\omega\}}\Delta_\eta(x,y).
\]
\cref{lem: expression for Delta omega} gives the matrix identity
\[
\Theta_\omega=
\left(\sum_{\eta\in\Omega\setminus\{\omega\}}\operatorname{Tr}W_\eta\right)J
-n\sum_{\eta\in\Omega\setminus\{\omega\}}W_\eta.
\]
From \cref{rmk: Ww}, for every $\eta\in \Omega$,
either  $W_\eta = S_r$ or $2 S_r$ for $r \in \eta$,
and so from \cref{def: gradients of affine form}
%{lem:positive-gradients} 
we have that $W_\eta\in\mathcal{A}$.
Thus $\exp_\circ{(s\cdot\Theta_\omega)}$ is a positive scalar multiple of the entrywise exponential of an element of $\mathcal A$, for every $s\in \R$. 
It is then realizable by \cref{lem:algebra-exponentials}. 
\cref{lem:polytope-positivity} also makes $F(tE)$ realizable.

Let $G$ be a planar input graph with $m$ edges. 
We next show that $Z_{F_\omega(t)}(G)$ is computable in polynomial time 
given oracle access to $\PlGH(M).$
%%% ZZT 10-19 resolved.
%if $\PlGH(M)$ is tractable.
%%% JYC 10-9: we have to phrase it as a reducibility. logically, without
% condition "P \ne #P", we can't say the opposite of "not #P-hard" is tractable.
%%% so, we should say $Z_{F_\omega(t)}(G)$ is reducible to   $\PlGH(M)$ 
%%%
Without loss of generality, we may assume that $G$ is connected.
Define the polynomial
\[
R_G(z)=\sum_{\sigma:V(G)\to[n]}
\left(\prod_{uv\in E(G)}F(tE)_{\sigma(u)\sigma(v)}\right)
 z^{\sum_{uv\in E(G)}(\Theta_\omega)_{\sigma(u)\sigma(v)}}.
\]
Its degree in $z$ is at most $B=m\sum_{\eta\in\Omega\setminus\{\omega\}}D_\eta$
(by \cref{rmk:distance polynomial}), which is polynomial size in terms of $G$.
Note that for $z>0$, $R_G(z)=Z_{M(z)}(G)$, where $M(z)=F(tE)\circ \exp_\circ(\log z \cdot\Theta_\omega)$.
By \cref{cor:thickening}, $M(z)$ is realizable for every $z>0$.
So $R_G(z)$ is computable in polynomial time for every $z>0$ with oracle access to $\PlGH(M).$
%%% ZZT 10-19: resolved
%%% JYC 10-9: 
%%% have to say, computable in polynomial time with oracle access to $\PlGH(M)$ 
Polynomial interpolation then recovers $R_G(0)$ in polynomial time.
For $\sigma:V(G)\to [n]$,
let $e(\sigma)=\sum_{uv\in E(G)}(\Theta_\omega)_{\sigma(u)\sigma(v)}.$
We have 
\[
R_G(0)=\sum_{\substack{\sigma:V(G)\to[n] \\ e(\sigma)=0}}
\prod_{uv\in E(G)}F(tE)_{\sigma(u)\sigma(v)}.
\]
Note that $e(\sigma)=0\iff \Delta_{\eta}(\sigma(u),\sigma(v))=0$ for every $uv\in E(G)$ and $\eta\in \Omega\setminus\{\omega\}$ $\iff [\sigma(u)]_{L_\eta}=[\sigma(v)]_{L_\eta}$ for every $uv\in E(G)$ and $\eta\in \Omega\setminus\{\omega\}$.
Suppose $\Omega=\{\omega_1,\ldots,\omega_k\}.$
Without loss of generality, we may assume $\omega=\omega_1$.
Under the bijection $\phi:[n]\to \prod_{\eta\in \Omega} X_\eta,\, x\mapsto (x_\eta)_{\eta\in \Omega}$, write $\sigma=(\sigma_1,\ldots, \sigma_k)$, where $\sigma_i:V(G)\to X_{\omega_i}, \, x\mapsto x_{\omega_i}$.
So $e(\sigma)=0$ if and only if $\sigma_i(u)=\sigma_i(v)$ for every $uv\in E(G)$ and $2\le i\le k.$

By \cref{lem:scalar-tensor}, $F(tE)_{\sigma(u)\sigma(v)}=\prod_{i=1}^kF_{\omega_i}(t)_{\sigma_i(u)\sigma_i(v)}$.
Plug into the expression of $R_G(0)$,
we have
\begin{equation}\label{eq: RG0}
    \begin{aligned}
    R_G(0)=&\sum_{\substack{\sigma_i:V(G)\to X_{\omega_i} (1\le i\le k)\\ e(\sigma_1,\ldots,\sigma_k)=0}}\prod_{uv\in E(G)}\prod_{i=1}^k F_{\omega_i}(t)_{\sigma_i(u)\sigma_i(v)}\\
\end{aligned}
\end{equation}
For $2\le i\le k$, since $\sigma_i(u)=\sigma_i(v)$ in every term of the sum \eqref{eq: RG0}, by \cref{def: distance polynomial} we have $F_{\omega_i}(t)_{\sigma_i(u)\sigma_i(v)}=Q_{\omega_i}^{-D_{\omega_i}}(1+(Q_{\omega_i}-1)t)^{D_{\omega_i}}.$
Let $c_\omega(t)=\prod_{i=2}^k Q_{\omega_i}^{-D_{\omega_i}}(1+(Q_{\omega_i}-1)t)^{D_{\omega_i}}.$
We obtain
\[
R_G(0)=\sum_{\sigma_1:V(G)\to X_{\omega} }\prod_{uv\in E(G)}F_\omega(t)_{\sigma_1(u)\sigma_1(v)}\cdot c_\omega(t)^{m}\cdot \frac{n}{|X_\omega|}.\\
\]
The factor $\frac{n}{\vert{}X_\omega\vert{}}$ arises from the fact that $G$ is connected. 
Note that for $2\le i\le k$ the choices for $\sigma_i$ must satisfy
$\sigma_i(u)=\sigma_i(v)$ for every edge $uv$, which is equivalent to
satisfying $\Delta_{\eta}(\sigma(u),\sigma(v))=0$ for every $uv\in E(G)$ and $\eta\in \Omega\setminus\{\omega\}$.
Thus this amounts to picking one vertex
 in each $\Gamma_{\omega_i}$ for connected $G$.
% Specifically, 
So, there are $\vert{}X_{\omega_i}\vert{}$ choices for each $\sigma_i$ (for $2 \le i \le k$), yielding a total of $\prod_{i=2}^k \vert{}X_{\omega_i}\vert{} = \frac{n}{\vert{}X_\omega\vert{}}$ choices for the tuple $(\sigma_2, \ldots, \sigma_k)$.
Therefore, $R_G(0)=c_\omega(t)^{m}\cdot \frac{n}{|X_\omega|}\cdot Z_{F_\omega(t)}(G).$
Oracle access to $R_G(0)$ gives $Z_{F_\omega(t)}(G)$.
The lemma is proved.
\end{proof}

\begin{lemma}\label{lem:potts-hardness}
If $Q_\omega\geq3$ for some $\omega\in\Omega$, then $\PlGH(M)$ is hard.
\end{lemma}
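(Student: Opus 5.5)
By \cref{lem:coordinate-extraction} it suffices to show that $\PlGH(F_\omega(t))$ is \#P-hard for some $t\in(0,1)$ and some $\omega$ with $Q=Q_\omega\ge 3$; write $D=D_\omega$. The plan is to reduce again, this time within the domain $X_\omega$, down to a single Hamming coordinate, whose matrix is the $Q$-state Potts matrix, and then invoke \cref{thm:potts-source}.

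The construction starts from the fact that $F_\omega(t)_{ab}=Q^{-D}(1+(Q-1)t)^{D}\,z^{\Delta_\omega(a,b)}$ with $z=(1-t)/(1+(Q-1)t)\in(0,1)$. Thus $F_\omega(t)$ is, up to a positive scalar, the entrywise exponential of $\log z$ times the graph distance matrix of $\Gamma_\omega$. By \cref{lem:hamming-numbers} and \cref{thm:hamming-classification}, $\Gamma_\omega$ is either $H(D,Q)$ or a Doob graph. The Doob case requires $Q=4$ and is handled separately by the appendix argument (\autoref{app:doob}), which shows $\PlGH(M)$ is \#P-hard directly. So assume $\Gamma_\omega\cong H(D,Q)$.

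In the Hamming case, $F_\omega(t)=\bigotimes_{j=1}^D M_Q(t)$ after relabeling, with $M_Q(t)=U_Q+t(I_Q-U_Q)$ as in the proof of \cref{lem: Hamming intersection number generating polynomial}. Each factor $M_Q(t)$ is proportional to ${\tt Potts}_Q(x)$ with $x=(1+(Q-1)t)/(1-t)\neq 1$, since $t>0$. Since $Z_{A\otimes B}=Z_AZ_B$, we get $Z_{F_\omega(t)}(G)=Z_{M_Q(t)}(G)^D$ exactly. The values are positive, because every factor is entrywise positive, so we can recover $Z_{M_Q(t)}(G)$ by taking the real $D$-th root. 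This root must be taken to polynomially many bits. Positivity and a polynomial bit-size bound make this feasible in the Turing model; alternatively, we can avoid root extraction by following the convention of \cite{cai-maran-young} for real-valued reductions.

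Chaining the reductions, $\PlGH({\tt Potts}_Q(x))\le_T\PlGH(F_\omega(t))\le_T\PlGH(M)$. Since $Q\ge3$ and $x\ne1$, \cref{thm:potts-source} gives \#P-hardness. The main obstacle is the root-extraction step, which must be made rigorous in bit complexity. If it proves delicate, the fallback is to interpolate directly on $Z_{F_\omega(t)}$ by varying $t$ to isolate a single coordinate factor.
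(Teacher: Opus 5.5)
Your proposal is correct and follows the paper's proof essentially step for step: you set aside the Doob case (handled by \autoref{app:doob}), identify $F_\omega(t)=M_Q(t)^{\otimes D}$ in the Hamming case, recover $Z_{M_Q(t)}(G)$ as the positive $D$-th root of $Z_{F_\omega(t)}(G)$ using \cref{lem:coordinate-extraction}, and conclude with \cref{thm:potts-source}. The only cosmetic difference is that the paper fixes $t=1/(Q+1)$, which makes $M_Q(t)$ a scalar multiple of ${\tt Potts}_Q(2)$, whereas you keep $t\in(0,1)$ general with $x=(1+(Q-1)t)/(1-t)>1$; the paper treats root extraction as routine, so your interpolation fallback is unnecessary.
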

\begin{proof}
Fix $\omega\in \Omega$ such that $Q_\omega\ge 3$, and write $Q=Q_\omega$, $D=D_\omega$.
By \cref{lem:hamming-numbers}, the coordinate graph $\Gamma_\omega$ is distance-regular and has the same intersection numbers as the Hamming graph $H(D_\omega,Q_\omega)$. 
By \cref{thm:hamming-classification}, either $\Gamma_\omega\cong H(D_\omega,Q_\omega)$, or $\Gamma_\omega$ is isomorphic to a Doob graph.
In the latter case, we can prove $\PlGH(M)$ is \#P-hard.
See \cref{lem:doob_is_hard} in \autoref{app:doob}.
In the following we assume $\Gamma_\omega\cong H(D_\omega,Q_\omega)$.

We identify the vertex set $X_\omega$ with $[Q_\omega]^{D_\omega} = [Q]^D$.
For $a,b\in X_\omega$ at Hamming distance $k$, we have
\begin{equation*}
F_\omega(t)_{ab}=Q^{-D}
\bigl(1+(Q-1)t\bigr)^{D-k}
(1-t)^{k} \quad (t\in \R).
\end{equation*}
For $t\in \R$, define the $Q\times Q$ matrix $M_Q(t)$ by
\[M_Q(t)_{pq}=\begin{cases}
    \frac{1+(Q-1)t}{Q},&p=q\\
    \frac{1-t}{Q}, &p\neq q.
\end{cases}\]
Then $F_\omega(t)=M_Q(t)^{\otimes D}.$
\cref{lem:coordinate-extraction} proves that $F_\omega(t)$ is realizable for every $t\in (0,1).$
Note that for any input planar graph $G$, $Z_{F_{\omega}(t)}(G)=(Z_{M_Q(t)}(G))^D.$
Since $F_\omega(t)$ and $M_Q(t)$ are entrywise positive for $t\in (0,1)$, $Z_{F_{\omega}(t)}(G)$ and $Z_{M_Q(t)}$ are both positive.
Then oracle access to $Z_{F_{\omega}(t)}(G)$ recovers $Z_{M_Q(t)}$ by taking the $D$-th root.
Thus, $\PlGH(M_Q(t))\le_T \PlGH(F_\omega(t))\le_T \PlGH(M)$, for every $t\in (0,1).$

At $t=1/(Q+1)$, $M_Q(t)$ has diagonal value $\frac2{Q+1}$
and off-diagonal value $\frac1{Q+1}$. By \cref{thm:potts-source}, $\PlGH(M_Q(t))$ is \#P-hard if $Q\ge 3.$
Therefore, $\PlGH(M)$ is \#P-hard.
\end{proof}

\subsection{Recover Ising Tensor}

In this subsection, we will prove that if $Q_\omega=2$ for every $\omega\in {\Omega}$ (this is equivalent to $Q_r=2$ for every $r\in\mathcal{F}_+$), then $M$ is a tensor product of Ising factors $\left(\begin{smallmatrix}
    a_i & b_i\\
    b_i & a_i
\end{smallmatrix}\right)$ up to a simultaneous permutation of rows and columns
(\cref{lem:binary}), in which case $\PlGH(M)$ is tractable.

\begin{lemma}\label{lem:binary-scalar}
If $Q_\omega=2$ for every $\omega\in\Omega$, 
%then $n=2^d$, where $d$ is the common degree in \cref{def:probability-degree}, and
then
\begin{equation*}
F(tE)_{xy}=\frac1n(1+t)^{d-\delta(x,y)}(1-t)^{\delta(x,y)}
\qquad(x,y\in[n],\ t\in\mathbb R).
\end{equation*}
\end{lemma}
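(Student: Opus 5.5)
The plan is to start from the explicit computation already carried out in the proof of \cref{lem:scalar-tensor}. Substituting $X=tE$ into the affine factorization (\cref{lem:affine-factors}) and using $\ell_a(tE)=t\ell_a(E)+(1-t)\ell_a(0)$ gives $\ell_i(tE)/\ell_i(0)=1-t$ for endpoint indices $i\in\mathcal I$. For complementary indices $r\in\mathcal F_+$ it gives $\ell_r(tE)/\ell_r(0)=1+(Q_r-1)t$. Under the hypothesis $Q_\omega=2$ for every $\omega\in\Omega$, hence $Q_r=2$ for every $r\in\mathcal F_+$, each complementary factor becomes $1+t$. Therefore
\[
F(tE)_{xy}=\frac1n(1-t)^{\sum_{i\in\mathcal I}h_i(x,y)}(1+t)^{\sum_{r\in\mathcal F_+}h_r(x,y)} .
\]

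Next I identify the two exponents. By \cref{def: delta and alpha}, the first exponent is exactly $\delta(x,y)$. For the second, \cref{lem:common-degree} gives $\sum_{a\in\mathcal F}h_a(x,y)=d$ for all $x,y$, where $d$ is the common degree. Splitting $\mathcal F=\mathcal I\sqcup\mathcal F_+$ then yields $\sum_{r\in\mathcal F_+}h_r(x,y)=d-\delta(x,y)$. Substituting both exponents gives the claimed formula for all real $t$, since both sides are polynomials in $t$.

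There is also a cross-check through the tensor form in \cref{lem:scalar-tensor}. Each factor $F_\omega(t)_{x_\omega y_\omega}$ equals $2^{-D_\omega}(1+t)^{D_\omega-\Delta_\omega}(1-t)^{\Delta_\omega}$. By \cref{lem:owned-atoms}, $\sum_\omega\Delta_\omega=\delta$. By \cref{lem: complementary degree}, $\sum_\omega D_\omega=\sum_{r\in\mathcal F_+}\ell_r(E)=d$. By \cref{lem:integer-factors}, $n=\prod_\omega 2^{D_\omega}=2^d$. This reproduces the same expression.

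I do not expect a real obstacle. The only point needing care is bookkeeping: the exponent of $(1+t)$ must equal $d-\delta(x,y)$ for every pair $(x,y)$, not just on the diagonal. The per-entry degree identity from \cref{lem:common-degree}, rather than the trace identity, is what guarantees this.
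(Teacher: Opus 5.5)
Your proposal is correct and follows the paper's proof: it specializes the factored expression for $F(tE)_{xy}$ from the proof of \cref{lem:scalar-tensor} to $Q_r=2$. It then reads off the exponents $\delta(x,y)$ (by definition) and $d-\delta(x,y)$ (by the per-entry identity in \cref{lem:common-degree}). Your tensor-form cross-check is a valid extra consistency check, not a different route.
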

\begin{proof}
In the proof of \cref{lem:scalar-tensor} we have shown
\[F(tE)_{xy}=\frac{1}{n}\prod_{i\in \mathcal{I}}(1-t)^{h_i(x,y)}\prod_{r\in \mathcal{F}_+}(1+(Q_r-1)t)^{h_r(x,y)}\qquad(x,y\in[n],\ t\in\mathbb R).\]
By $Q_r=2 \,(\forall r\in \mathcal{F}_+)$,
\[F(tE)_{xy}=\frac{1}{n}(1-t)^{\sum_{i\in \mathcal{I}}h_i(x,y)}(1+t)^{\sum_{r\in \mathcal{F}_+}h_r(x,y)}\qquad(x,y\in[n],\ t\in\mathbb R).\]

We have $\sum_{i\in\mathcal{I}}h_i(x,y)=\delta(x,y)$ by \cref{def: delta and alpha}, and $\sum_{r\in \mathcal{F}_+}h_r(x,y)=d-\delta(x,y)$ by \cref{lem:common-degree}. 
Therefore, the lemma is proved.
\end{proof}

\begin{lemma}\label{lem:binary-cube}
If $Q_\omega=2$ for every $\omega\in\Omega$, there is a bijection $x\mapsto s_x$ from $[n]$ onto $\{-1,1\}^d$ such that
\[
E_{xy}=\frac1n\sum_{i=1}^{d}s_{xi}s_{yi}
\qquad(x,y\in[n]).
\]
\end{lemma}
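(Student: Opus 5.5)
We plan to extract $E$ as the linear coefficient of $F(tE)$ and to build the cube labelling out of the coordinate graphs. By \cref{def: polynomial homomorphism}, $F(tE)=U+tE+\sum_{k\ge2}t^kh_k(E)$. So $E_{xy}$ is the coefficient of $t$ in the closed form of \cref{lem:binary-scalar}:
\[
E_{xy}=\frac1n\bigl[(d-\delta(x,y))-\delta(x,y)\bigr]=\frac{d-2\delta(x,y)}{n}.
\]
It therefore suffices to find a bijection $x\mapsto s_x\in\{-1,1\}^d$ with $\sum_i s_{xi}s_{yi}=d-2\delta(x,y)$. Equivalently, $\delta(x,y)$ must be the Hamming distance between $s_x$ and $s_y$, since $\sum_i s_{xi}s_{yi}=d-2\,d_H(s_x,s_y)$.

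\textbf{Step 1 (Hamming structure).} Fix $\omega\in\Omega$. By \cref{lem:hamming-numbers}, $\Gamma_\omega$ has the intersection numbers of $H(D_\omega,2)$. By \cref{thm:hamming-classification}, the Doob alternative requires $Q=4$, so it cannot occur, and $\Gamma_\omega\cong H(D_\omega,2)$. This gives a bijection $X_\omega\to\{-1,1\}^{D_\omega}$ under which $\Delta_\omega$, the graph distance by \cref{lem:coordinate-graph-distance}, becomes Hamming distance.

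\textbf{Step 2 (concatenation).} Compose with the color bijection $\phi$ of \cref{lem: phi is bijection}, and concatenate the coordinates over $\omega\in\Omega$. This gives a bijection $[n]\to\{-1,1\}^{\sum_\omega D_\omega}$. By \cref{lem:owned-atoms}, $\delta(x,y)=\sum_\omega\Delta_\omega(x,y)$, which is the total Hamming distance of the concatenated vectors.

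\textbf{Step 3 (dimension count).} By \cref{def: distance polynomial,lem: complementary degree}, $\sum_\omega D_\omega=\sum_{r\in\mathcal F_+}\ell_r(E)=d$. Hence the target is $\{-1,1\}^d$. This also matches $n=2^d$ from \cref{lem:integer-factors}, which confirms the map is onto.

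The main point to verify carefully is Step 1, namely that the Doob case is excluded purely by $Q_\omega=2$. Since \cref{thm:hamming-classification} requires $Q=4$ for Doob graphs, this is immediate. The remaining work is bookkeeping: $\delta$ must decompose additively over $\omega$ under $\phi$, which \cref{lem:coordinate-locality} and \cref{lem:owned-atoms} supply.
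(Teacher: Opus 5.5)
Your proposal is correct and follows the paper's proof essentially step for step. You exclude the Doob case since $Q_\omega=2\neq 4$, concatenate the Hamming labellings of the $X_\omega$ through the color bijection $\phi$, use $\delta=\sum_\omega\Delta_\omega$ together with $\sum_\omega D_\omega=d$, and read off $E_{xy}=(d-2\delta(x,y))/n$ as the $t$-coefficient of $F(tE)_{xy}$ from \cref{lem:binary-scalar}. (The appeal to $n=2^d$ for surjectivity is redundant: composing $\phi$ with a product of bijections already gives a bijection onto $\{-1,1\}^d$.)
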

\begin{proof}
    For every $\omega\in \Omega$, we have $|X_\omega|=Q_\omega^{D_\omega}=2^{D_\omega}$ by \cref{lem:scalar-normalization} and $Q_\omega=2.$
    So we may identify $X_\omega$ with $\{-1,1\}^{D_\omega}$.
    Since $\sum_{\omega\in \Omega}D_\omega=\sum_{r\in \mathcal{F}_+}\ell_r(E)=d$ by \cref{lem:common-degree}, we may identify $[n]=\prod_{\omega\in \Omega}X_\omega$ by $\{-1,1\}^d.$
    The bijection $\phi:[n]\to \prod_{\omega}X_\omega,\,x\mapsto(x_\omega)_{\omega\in \Omega}$
    may be written as $\phi=(\phi_\omega)_{\omega\in \Omega}$, where $\phi_\omega:[n]\mapsto \{-1,1\}^{D_\omega},\,x\mapsto x_\omega$.

    In \cref{lem:coordinate-graph-distance,lem:hamming-numbers}, we proved that $\Gamma_\omega$ has the same intersection numbers as $H(D_\omega,Q_\omega)$, and  $\Delta_\omega(\cdot,\cdot)$ is its graph distance.
    By \cref{thm:hamming-classification}, and since a Doob exception would require $Q_\omega=4$,
    it follows that $\Gamma_{\omega}\cong H(D_\omega,Q_\omega).$
    Thus, $\Delta_\omega(\cdot,\cdot)$ is the Hamming distance on $X_\omega=\{-1,1\}^{D_\omega}$.
    In particular, for every $x,y\in [n]$, we have
    \(\Delta_{
    \omega}(\phi_\omega(x), \phi_\omega(y))=|\{i\in [D_\omega]\mid \phi_\omega(x)_i\neq \phi_{\omega}(y)_i\}|.\)
 %%% I added a few lines
 By \cref{lem:coordinate-locality}
and \cref{def:quotient-distances}, 
$\Delta_\omega(x,y)=\Delta_\omega(x_\omega,y_\omega) =\Delta_\omega(\phi_\omega(x),\phi_\omega(y))$.
    Then by \cref{lem:owned-atoms}, 
    \begin{equation}
        \label{eq: delta x y in binary cube}\delta(x,y)=\sum_{\omega\in \Omega}\Delta_\omega(x,y)=\sum_{\omega\in \Omega}\Delta_\omega(\phi_\omega(x),\phi_\omega(y))=|\{i\in[d]\mid \phi(x)_i\neq \phi(y)_i\}|
    \end{equation}
    For $x\in [n]$, let $s_x=\phi(x)\in\{-1,1\}^d.$
    Then $\sum_{i=1}^d s_{xi}s_{y_i}=d-2\delta(x,y)$ by \eqref{eq: delta x y in binary cube}, 
    as the sum $\sum_{i=1}^d s_{xi}s_{y_i}$ 
    is the number $|\{i\in[d]\mid \phi(x)_i = \phi(y)_i\}| -
    |\{i\in[d]\mid \phi(x)_i\neq \phi(y)_i\}|$, 
    and the sum of the two counts is $d$, for every $x,y\in [n].$
    On the other hand, $F(tE)=U+tE+\sum_{k=2}^dt^ke_k \,(t\in \R)$.
    So $E_{xy}=\left.\frac{d}{dt}F(tE)_{xy}\right|_{t=0}.$
    By \cref{lem:binary-scalar}, $F(tE)_{xy}=\frac1n(1+t)^{d-\delta(x,y)}(1-t)^{\delta(x,y)}$.
    So $\left.\frac{d}{dt}F(tE)_{xy}\right|_{t=0}=\frac{1}{n}(d-2\delta(x,y))$.
    Thus, $E_{xy}=\frac{1}{n}(d-2\delta(x,y))=\frac{1}{n}\sum_{i=1}^d s_{xi}s_{yi}$ as required.
\end{proof}

\begin{lemma}\label{lem:binary}
If $Q_\omega=2$ for every $\omega\in\Omega$, then $M$ has the form~\eqref{eq:ising-tensor} up to a simultaneous permutation of rows and columns.
\end{lemma}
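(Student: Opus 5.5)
The plan is to read off the entries of $M$ directly from the affine factorization of $F$, using the hypercube structure supplied by \cref{lem:binary-cube}. First I would locate $M$ inside the image of $F$. Set $C_0=P\log_\circ(M)P\in\mathcal A$. As in the proof of \cref{lem:polynomial-identity}, $K(C_0)=M/r_M$, and by \cref{lem:polynomial-agreement} this equals $F(X_0)$ with $X_0=\Phi(C_0)\in\mathcal A$. By \cref{lem:affine-factors},
\[
M_{xy}=\frac{r_M}{n}\prod_{a\in\mathcal F}\left(\frac{\ell_a(X_0)}{\ell_a(0)}\right)^{h_a(x,y)} .
\]
So everything reduces to understanding the exponent functions $h_a(x,y)$ in terms of the bijection $x\mapsto s_x\in\{-1,1\}^d$ from \cref{lem:binary-cube}.

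Second, I would show that $|\mathcal I|=d$ and that, after relabelling, $\mathcal I=[d]$ with $h_i(x,y)=\mathbb I[s_{xi}\neq s_{yi}]$. By \eqref{eq: delta x y in binary cube}, $\delta$ is the Hamming distance on $\{-1,1\}^d$, and $\delta(x,y)=|\alpha(x,y)|$.
\begin{itemize}
\item Take $x,y$ with $\delta(x,y)=1$, i.e.\ differing in exactly one coordinate. Then $\alpha(x,y)=\mathbf u_i$ for a single endpoint index $i$.
\item I would show that $i$ depends only on the differing coordinate $j$, not on the base point. Two Hamming edges in the same direction $j$ can be joined through squares. Applying \cref{lem:endpoint-box} to a square, the two paths from $x$ to the opposite corner give the same $\alpha$, and distance-$1$ steps along geodesics are forced. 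This should show that parallel edges of a square carry the same $\mathbf u_i$.
\item Additivity of $\alpha$ along geodesics (\cref{lem:endpoint-box} together with \cref{lem:endpoint-distance}) then gives $\alpha(x,y)=\sum_{j:\,s_{xj}\ne s_{yj}}\mathbf u_{i(j)}$.
\item The map $j\mapsto i(j)$ is onto $\mathcal I$ by \cref{lem:endpoint-coordinates}. It is injective because otherwise $\delta$ would exceed the Hamming distance. Hence $|\mathcal I|=d$.
\end{itemize}

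Third, I would treat the complementary indices. By \cref{lem:common-degree}, $\sum_{r\in\mathcal F_+}h_r(x,y)=d-\delta(x,y)$, the number of agreeing coordinates. I want to show that each $r\in\mathcal F_+$ with $\ell_r(E)=D$ splits its multiplicity so that $h_r(x,y)$ counts agreements on a fixed block of coordinates. This is where \cref{lem:owned-atoms} and \cref{lem:coordinate-locality} come in: $\Delta_\omega=\sum_{i\in\mathcal I_\omega}h_i$, so $D_\omega-\Delta_\omega(x,y)=\sum_{r\in\omega}h_r(x,y)$ counts agreements on the coordinates of $\mathcal I_\omega$. I would then refine this to individual coordinates. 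The idea is to apply the homomorphism identity (\cref{lem: homomorphism property}) to the product $F(X)F(Y)$, with each $X,Y$ taken as $E$ minus a multiple of a single $B_i$. Comparing prime factorizations on both sides should show that each complementary factor pairs with a single endpoint coordinate $j$, through a factor $c_j(X)$.

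The conclusion would be that $F(X)_{xy}$ factors as $\frac1n\prod_{j=1}^d \phi_j(X)^{[s_{xj}\ne s_{yj}]}\psi_j(X)^{[s_{xj}=s_{yj}]}$. Evaluated at $X_0$, this says $M$ is $\lambda\bigotimes_j\left(\begin{smallmatrix}a_j&b_j\\b_j&a_j\end{smallmatrix}\right)$. Positivity of $M$ gives $a_j,b_j>0$, and full rank gives $a_j\ne b_j$.

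I expect the main obstacle to be this third step. The difficulty is showing that the complementary factors, and not just the endpoint ones, are attached to individual coordinates rather than merely to orbits $\omega$. If the factor-matching argument is too delicate, my fallback is to use the Hamming structure of each $\Gamma_\omega$ together with the rank count from $\operatorname{Tr}F(tE)=(1+t)^d$. The aim would be to show that $\mathcal A$ is exactly the span of the $d$ matrices $\frac{1}{n}s_{\cdot j}s_{\cdot j}^{\tt T}$. Once that is known, $\log_\circ M-\mu J\in\mathcal A$ would force $\log M_{xy}=\mu+\sum_j c_j s_{xj}s_{yj}$. That is precisely the Ising tensor form, with $a_j/b_j=e^{2c_j}$.
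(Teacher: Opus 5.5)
Your Step 2 rests on an invalid inference, and the claim it targets is false in general. Because $\delta(x,y)=|\alpha(x,y)|$ counts multiplicity, two directions $j\neq j'$ sharing one endpoint index $i$ cause no contradiction. If $y$ differs from $x$ exactly in $j,j'$, then $\alpha(x,y)=2\mathbf u_i$ and $\delta(x,y)=2$, which is exactly the Hamming distance.

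In fact $|\mathcal I|=\dim\mathcal A$ by \cref{lem:endpoint-basis}, and this can be smaller than $d$. Suppose all Ising factors of $M$ coincide. Then the configuration $\mathcal A=\operatorname{span}\{E\}$, $F(tE)_{xy}=\frac1n(1-t)^{\delta(x,y)}(1+t)^{d-\delta(x,y)}$, with a single endpoint form and a single complementary form, is consistent with every preceding lemma, yet $|\mathcal I|=1<d$. So the refinement you seek in Step 3 (each complementary factor tied to one coordinate) is false in general. So is the goal of your fallback ($\mathcal A$ equal to the full $d$-dimensional span).

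Your final factorization of $F(X)_{xy}$ would still hold if the $\phi_j,\psi_j$ may repeat across coordinates, but Step 3 gives no argument for it. The available lemmas control only the orbit sums $\sum_{r\in\omega}h_r=D_\omega-\Delta_\omega$. They neither determine the individual $h_r$ when $|\omega|=2$ nor split $h_i$ ($i\in\mathcal I_\omega$) coordinatewise. Even your square argument yields only $\mathbf u_a+\mathbf u_b=\mathbf u_c+\mathbf u_e$. That also allows the two edges at a common corner to share a label, instead of the parallel pair.

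What you are missing is that only the inclusion $\mathcal A\subseteq\operatorname{span}_{\mathbb R}\{s_{\cdot j}s_{\cdot j}^{\tt T}\}_{j\in[d]}$ is needed, and it follows with no analysis of the factors. Since $E$ is the unit of $\mathcal A$ (\cref{lem:support-identity}) and $E=\frac1nSS^{\tt T}$ by \cref{lem:binary-cube} (with $S_{xi}=s_{xi}$), you get $C_0=EC_0E=STS^{\tt T}$, where $T=\frac1{n^2}S^{\tt T}C_0S$ is symmetric. This alone still allows cross terms $s_{xi}s_{yj}$.

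The decisive point is that $C_0$ has constant diagonal (\cref{lem:normalize_rumsum_colsum_to_0}) and $x\mapsto s_x$ is onto $\{-1,1\}^d$. Hence $s^{\tt T}Ts$ is constant on the cube. The alternating sum over the four sign patterns in coordinates $i<j$, with all other coordinates fixed, equals $8T_{ij}$, so $T$ is diagonal. Then $\log M_{xy}=\mu+\sum_iT_{ii}s_{xi}s_{yi}$, which is the Ising form, with $T_{ii}\neq0$ by full rank. This is the paper's proof.
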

\begin{proof}
Let $\phi:[n]\to \{-1,1\}^d,\,x\mapsto s_x$ be the bijection in \cref{lem:binary-cube}, and let $S$ be the $n$ by $d$ matrix with entries $S_{xi}=s_{xi}$.
\cref{lem:binary-cube} gives $E=\frac{1}{n} SS^{\tt T}$. 
Let $C_0=P\log_\circ(M)P$.
It is clear that $C_0\in \mathcal{A}$.
So $$C_0=EC_0E=\frac{1}{n^2}(SS^{\tt T})C_0(SS^{\tt T})=STS^{\tt T},\quad \text{ where } T=\frac{1}{n^2}S^{\tt T}C_0S\in \Sym{d}{}{\R}.$$
Entrywise, 
\begin{equation}\label{eq: C0 xy}
     (C_0)_{xy}=\sum_{i=1}^d\sum_{j=1}^dT_{ij}s_{xi}s_{yj}\qquad (x,y\in[n]).
\end{equation}
%%% i added a line to justify 
% why s^{\tt T}Ts is constant for all s \in {-1, 1}^d
We note that $\phi$ is a bijection, so that every vector $s \in \{-1,1\}^d$
appears as some $\phi(x) = (s_{x1}, \ldots, s_{xd})$.
$C_0$ has constant diagonal by \cref{lem:normalize_rumsum_colsum_to_0}.
This means that $(C_0)_{xx}=\sum_{i=1}^d\sum_{j=1}^dT_{ij}s_{xi}s_{xj}$
is the same  for all $x \in [n]$. Thus,
$s^{\tt T}Ts$ is a constant $c$ for all $s\in\{-1,1\}^d$.
We claim that $T$ is a diagonal matrix.
Fix $1\le i<j\le d$ and group the quadratic form $s^{\tt T}Ts$ by $s_i$ and $s_j$, we have
\(
s^{\tt T}Ts=A+B s_i+Cs_j+2T_{ij}s_is_j.
\)
Let $s^{++}, s^{+-}, s^{-+}, s^{--} \in \{-1,1\}^d$ be the four vectors whose $i$-th and $j$-th entries are $(1,1), (1,-1), (-1,1),$ and $(-1,-1)$, respectively, with all other entries equal to $1$.
Then $$8T_{ij}=(s^{++})^{\tt T} Ts^{++}
 -(s^{+-})^{\tt T} Ts^{+-}
 -(s^{-+})^{\tt T} Ts^{-+}
 +(s^{--})^{\tt T} Ts^{--}=c-c-c+c=0.$$
Thus, $T_{ij}=0$ whenever $1\le i<j\le d$, i.e., $T$ is diagonal. 

Then \eqref{eq: C0 xy} becomes $(C_0)_{xy}=\sum_{i=1}^d T_{ii}s_{xi}s_{yi}$, for every $x,y\in[n].$
Let $\gamma_i=T_{ii}$ for every $i\in[d]$.
%Then $(C_0)_{xy}=\sum_{i=1}^d \gamma_i s_{xi}s_{yj}$, for every $x,y\in[n].$
Since $C_0\in \mathcal{A}$, \cref{lem:normalize_rumsum_colsum_to_0} gives $\log_\circ M=C_0+\mu J$ for some $\mu\in\mathbb R$. 
We have
\[
M_{xy}=e^\mu\exp\!\left(\sum_{i=1}^{d}\gamma_i s_{xi}s_{yi}\right)
=e^\mu\prod_{i=1}^{d}e^{\gamma_i s_{xi}s_{yi}} \quad (x,y\in[n]).
\]
Under the bijection $\phi:[n]\to \{-1,1\}^d,\,x\mapsto s_x$, we may rearrange the rows and columns of $M$ by a simultaneous permutation and write
\[
M=e^\mu\bigotimes_{i=1}^{d}
\begin{pmatrix}e^{\gamma_i}&e^{-\gamma_i}\\
e^{-\gamma_i}&e^{\gamma_i}\end{pmatrix}.
\]
Since $M$ is full rank, $\gamma_i\neq 0$ for every $i\in[d]$. 
Taking $\lambda=e^\mu$, $a_i=e^{\gamma_i}$, and $b_i=e^{-\gamma_i}$ gives \eqref{eq:ising-tensor} with $a_i,b_i>0,\,a_i\neq b_i$.
\end{proof}

\subsection{Completing the Dichotomy}
We are ready to prove the main theorem in the body text:
\begin{proof}[Proof of \cref{thm:main}]
    Fix the matrix $M\in \Sym{n}{F}{\Rp}.$
    Suppose we are in the homogeneous setting, otherwise $\PlGH(M)$ is already \#P-hard by \cref{lem:homogeneous}.
    Let $Q_\omega$ be the numbers in \cref{def: distance polynomial}.
    \cref{lem:integer-factors} proves $Q_{\omega}\in \mathbb{Z}_{\ge 2}$ for every $\omega\in \Omega.$
    If $Q_\omega\ge 3$ for some $\omega\in \Omega$, then $\PlGH(M)$ is \#P-hard by \cref{lem:potts-hardness}.
    Otherwise, $Q_\omega=2$ for every $\omega\in \Omega$. then by \cref{lem:binary}, $M$ has the form \eqref{eq:ising-tensor}, in which case $\PlGH(M)$ is tractable.
\end{proof}

\newpage
\appendix
\section{Exclude the Doob Case}\label{app:doob}
In \cref{lem:hamming-numbers}, we proved that every coordinate graph $\Gamma_\omega$ $(\omega\in \Omega)$ has the same intersection numbers as the Hamming graph $H(D_\omega,Q_\omega)$.
By \cref{thm:hamming-classification}, either $\Gamma_\omega\cong H(D_\omega, Q_\omega)$, or $\Gamma_\omega$ is isomorphic to the Doob graph $I(p,q)$ for $p\ge1, q\ge 0$ and $2p+q=D_\omega$.
In this section, we assume that for some $\omega\in \Omega$, it is the latter case.
We will prove that $\PlGH(M)$ is \#P-hard (\cref{lem:doob_is_hard}).

Recall \cref{def:hamming-doob}, the Doob graph $I(p,q)$ is defined as the Cartisian product of $p$ Shrikhande graphs and $q$ complete graphs $K_4$.
We use $SH$ to denote the Shrikhande graph.
It is a distance-regular graph of diameter $2$ (see \cref{fig:shrikhande}).
\begin{figure}[htbp]
    \centering
    \includegraphics[width=0.5\linewidth]{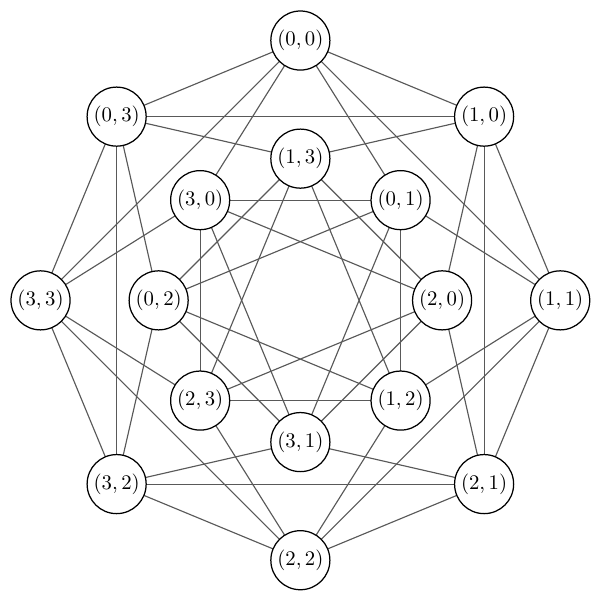}
    \caption{The Shrikhande graph $SH$. Vertices are elements in $\mathbb Z_4\times \mathbb Z_4$ and edges joining $x,y$ exactly when $y-x\in\mathcal D$, where
\(
\mathcal D=\{(1,0),(3,0),(0,1),(0,3),(1,1),(3,3)\}.
\)}
    \label{fig:shrikhande}
\end{figure}

\providecommand{\DoobMatrixUseColor}{1}
\definecolor{DoobMatrixBlue}{HTML}{DCEAF7}
\definecolor{DoobMatrixOrange}{HTML}{FCE5C5}
\definecolor{DoobMatrixGreen}{HTML}{D5EBDD}
\definecolor{DoobMatrixPurple}{HTML}{E7DDF1}
\definecolor{DoobMatrixNeutral}{HTML}{F2F3F5}
\newcommand{\DoobMatrixEntry}[1]{%
  \ifnum\DoobMatrixUseColor=1\relax
    \cellcolor{DoobMatrixNeutral}%
    \ifnum#1=4\relax\cellcolor{DoobMatrixBlue}\fi
    \ifnum#1=4624\relax\cellcolor{DoobMatrixBlue}\fi
    \ifnum#1=68\relax\cellcolor{DoobMatrixBlue}\fi
    \ifnum#1=2\relax\cellcolor{DoobMatrixOrange}\fi
    \ifnum#1=3126\relax\cellcolor{DoobMatrixOrange}\fi
    \ifnum#1=2116\relax\cellcolor{DoobMatrixGreen}\fi
    \ifnum#1=2118\relax\cellcolor{DoobMatrixPurple}\fi
  \fi
  \makebox[2.25em][c]{#1}%
}

Fix $\omega\in \Omega$ such that $\Gamma_\omega\cong I(p,q).$
Abbreviate $X=X_{\omega},D=D_\omega,Q=Q_\omega$ and $\Delta=\Delta_\omega.$
Then $Q=4$ and $2p+q=D$ by \cref{thm:hamming-classification}.
In \cref{lem:coordinate-extraction}, we proved that $F_\omega(t)$ is realizable for every $t\in(0,1)$.
Recall in \cref{def: distance polynomial}, for every $ a,b\in X_\omega$, we have

\begin{equation*}
\begin{aligned}
    F_\omega(t)_{ab}
    =&Q_\omega^{-D_\omega}\bigl(1+(Q_\omega-1)t\bigr)^{D_\omega-\Delta_\omega(a,b)}(1-t)^{\Delta_\omega(a,b)}
    =4^{-D}\bigl(1+3t\bigr)^{D-\Delta(a,b)}(1-t)^{\Delta(a,b)}.
\end{aligned}
\end{equation*}
Let $t=\frac{1}{5}$. 
Then
\begin{equation*}
\begin{aligned}
    F_\omega\left(\frac{1}{5}\right)_{ab}
    =&4^{-D}\left(\frac{8}{5}\right)^{D-\Delta(a,b)}\left(\frac{4}{5}\right)^{\Delta(a,b)}=5^{-D}2^{D-\Delta(a,b)},
\end{aligned}
\end{equation*}
and $\PlGH(F_\omega(1/5))\leq_T\PlGH(M)$ by \cref{lem:coordinate-extraction}.

Define the matrix $N\in \Sym{|X|}{}{\Zp}$ by $N_{ab}:=2^{D-\Delta(a,b)}, \,\forall a,b\in X.$
Then $N$ is realizable, since scaling by the constant $5^{-D}$ doesn't change the complexity of the $\PlGH$ problem.
Because $\Gamma_\omega\cong I(p,q)$, we may identify $X=V(\Gamma_\omega)$ by $$V(I(p,q))=\{(u_1,\ldots,u_p,x_1,\ldots, x_q)\mid u_i\in V(SH),x_j\in V(K_4),i\in[p],j\in [q]\}.$$
Let $a=(u_1,\ldots,u_p,x_1,\ldots, x_q)\in V(\Gamma_\omega)$ and $b=(v_1,\ldots,v_p,y_1,\ldots, y_q)\in V(\Gamma_\omega)$.
By the property of Cartesian product,
$$
\Delta(a,b)=d_{\Gamma_{\omega}}(a,b)=\sum_{i=1}^p d_{SH}(u_i,v_i)+\sum_{j=1}^q d_{K_4}(x_j,y_j).
$$
Thus, 
\begin{equation}\label{eq:Nab}
    N_{ab}=2^{D-\Delta(a,b)}=\prod_{i=1}^p 2^{2-d_{SH}(u_i,v_i)}\prod_{j=1}^q 2^{1-\mathbb{I}[x_j\neq y_j]}.
\end{equation}
%where $\mathbb{I}[x_j\neq y_j]$ is the indicator of $x_j\neq y_j.$

Define the matrix $S\in \Sym{16}{}{\Zp}$ by $S_{uv}:=2^{2-d_{SH}(u,v)},\,\forall u,v\in V(SH).$
Also define the matrix $K\in \Sym{4}{}{\R_{\ge 0}}$ by $K_{xy}=2^{1-\mathbb{I}[x_j\neq y_j]},\,\forall x,y\in V(K_4).$
By \eqref{eq:Nab}, we have $N=S^{\otimes p}\otimes K^{\otimes q}.$
Therefore, we proved:
\begin{lemma}
    $\PlGH(N)\equiv_T\PlGH(S^{\otimes p}\otimes K^{\otimes q})\le _T\PlGH(M).$
\end{lemma}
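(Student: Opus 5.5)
The statement has two parts. The first is the equivalence $\PlGH(N)\equiv_T\PlGH(S^{\otimes p}\otimes K^{\otimes q})$. The second is the reduction $\PlGH(N)\le_T\PlGH(M)$. I will prove them separately, each with tools already available in the paper.

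\textbf{The equivalence.} I will show that $N$ and $S^{\otimes p}\otimes K^{\otimes q}$ are the same matrix up to a simultaneous permutation of rows and columns. Any such permutation is just a relabeling of the domain $[|X|]$, so it leaves $Z(G)$ unchanged for every input graph and gives both directions of the reduction trivially.

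\begin{itemize}
\item First, fix the isomorphism $\Gamma_\omega\cong I(p,q)$ supplied by \cref{thm:hamming-classification}. This identifies $X$ with the tuple set $V(SH)^p\times V(K_4)^q$.
\item Second, by \cref{lem:coordinate-graph-distance}, $\Delta$ is the graph distance of $\Gamma_\omega$. Distance in a Cartesian product of graphs is the sum of the coordinate distances, which gives the expression for $\Delta(a,b)$ stated before the lemma.
\item Third, $D=2p+q$ and $SH$ has diameter $2$. So the exponent $D-\Delta(a,b)$ splits as $\sum_i\bigl(2-d_{SH}(u_i,v_i)\bigr)+\sum_j\bigl(1-\mathbb{I}[x_j\neq y_j]\bigr)$, and every summand is nonnegative. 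This yields the product formula \eqref{eq:Nab}.
\item Finally, \eqref{eq:Nab} is exactly the entry of the Kronecker product $S^{\otimes p}\otimes K^{\otimes q}$ at the pair of tuples $(a,b)$, once the tuple coordinates are ordered lexicographically to match the tensor ordering.
\end{itemize}

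\textbf{The reduction to $\PlGH(M)$.} The chain is
\[
\PlGH(N)\le_T\PlGH\bigl(F_\omega(1/5)\bigr)\le_T\PlGH(M).
\]
\begin{itemize}
\item The second step is \cref{lem:coordinate-extraction} applied at $t=1/5\in(0,1)$.
\item For the first step, note that $F_\omega(1/5)=5^{-D}N$. For any graph $G$ with $m$ edges, $Z_{cN}(G)=c^{m}Z_N(G)$. Take $c=5^{-D}$, a fixed nonzero rational, so that $Z_{F_\omega(1/5)}(G)=5^{-Dm}Z_N(G)$. One oracle call to $Z_{F_\omega(1/5)}(G)$ plus one multiplication by the polynomial-size rational $5^{Dm}$ recovers $Z_N(G)$.
\end{itemize}

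\textbf{Main obstacle.} The only delicate point is bookkeeping: making sure the vertex identification of $X$ used to define the tensor product matches the coordinates in \eqref{eq:Nab}. In particular, the factor for each $K_4$ coordinate must be $2^{1-\mathbb{I}[x\neq y]}$, meaning $2$ on the diagonal and $1$ off it. The variable name $x_j$ appearing inside the definition of $K$ should be read as the generic pair $x,y$ of $K_4$ vertices. Beyond that, everything is a direct combination of lemmas already proved.
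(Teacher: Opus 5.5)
Your proposal is correct and follows the paper's route. It identifies $X$ with $V(SH)^p\times V(K_4)^q$ via $\Gamma_\omega\cong I(p,q)$, factors $N_{ab}=2^{D-\Delta(a,b)}$ into $S^{\otimes p}\otimes K^{\otimes q}$ using additivity of distance in Cartesian products, and gets $\PlGH(N)\le_T\PlGH(M)$ from $F_\omega(1/5)=5^{-D}N$ together with the coordinate-extraction lemma at $t=1/5$. You are also right that the $x_j,y_j$ in the paper's definition of $K$ is a typo for a generic pair $x,y$.
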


The matrices $S$ and $K$ are presented in \cref{fig: S and K}. 
In the following, we show that $\PlGH(S^{\otimes p}\otimes K^{\otimes q})$ is \#P-hard.
We want to make use of the following lemma:

\begin{lemma}[Lemma 13, \cite{CaiM23}]\label{lemma: full_thickeningInterpolation}
	Let $A \in \Sym{n}{}{\Zp}$ 
	with entries $(a_{ij})_{i,j\in[n]}.$
    Suppose that there exist distinct primes $p_1,\ldots, p_d$ and nonnegative integers $e_{ijk} \,(i,j\in[n],\,k\in[d])$ such that $a_{ij}=p_1^{e_{ij1}}p_2^{e_{ij2}}\cdots p_d^{e_{ijd}}$, for every $i,j\in[n].$
	Define
	$\mathcal{M}: \mathbb{R}^{d} \rightarrow \mathbb{R}^{n \times n}$ where
	$\mathcal{M}(\mathbf{g})_{ij} = \mathcal{M}(g_{1}, \dots, g_{d})_{ij} = g_{1}^{e_{ij1}}g_2^{e_{ij2}} \cdots g_{d}^{e_{ijd}}$ 
	for every $i, j \in [n]$. 
	Then, $\PlGH(\mathcal{M}(\mathbf{g})) \leq_T \PlGH(A)$ 
	for every $\mathbf{g} \in \mathbb{R}^{d}$.
\end{lemma}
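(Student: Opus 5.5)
This is a cited interpolation lemma (Lemma 13 of \cite{CaiM23}). The plan is a multivariate thickening interpolation. Fix a planar input graph $G=(V,E)$ with $m$ edges. For each tuple $(i,j)$, let $e_{ij}=(e_{ij1},\ldots,e_{ijd})\in\Z_{\ge0}^d$ be the exponent vector. Group the assignments $\sigma:V\to[n]$ according to the total exponent vector $\mathbf{k}(\sigma)=\sum_{uv\in E}e_{\sigma(u)\sigma(v)}\in\Z_{\ge0}^d$. Define $c_{\mathbf k}$ to be the number of assignments with $\mathbf k(\sigma)=\mathbf k$. Then
\[
Z_{\mathcal M(\mathbf g)}(G)=\sum_{\mathbf k}c_{\mathbf k}\,g_1^{k_1}\cdots g_d^{k_d}.
\]
Each $k_\ell$ is at most $m\cdot\max e_{ij\ell}$, so there are only polynomially many $\mathbf k$ (since $d$ is fixed). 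It therefore suffices to recover all $c_{\mathbf k}$ using the oracle for $\PlGH(A)$.

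To obtain linear equations in the $c_{\mathbf k}$, I would use thickening. Replace every edge of $G$ by $s$ parallel edges, which preserves planarity, to get $G_s$. This gives
\[
Z_A(G_s)=\sum_{\mathbf k}c_{\mathbf k}\prod_\ell p_\ell^{s k_\ell}=\sum_{\mathbf k}c_{\mathbf k}\,x_{\mathbf k}^s,\qquad x_{\mathbf k}=\prod_\ell p_\ell^{k_\ell}.
\]
By unique prime factorization, distinct $\mathbf k$ give distinct positive integers $x_{\mathbf k}$. Querying $s=1,\ldots,N$, where $N$ is the number of possible $\mathbf k$, yields a Vandermonde system with distinct nodes. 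This system is nonsingular, so it determines the $c_{\mathbf k}$ uniquely. Finally, evaluate the displayed polynomial at the given $\mathbf g$.

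The main obstacle is the bit-complexity. The $x_{\mathbf k}$ are integers of polynomial bit length (at most $m\sum_\ell \max e_{ij\ell}\log p_\ell$ bits). The Vandermonde solve over $\mathbb Q$ is polynomial time, and the oracle answers $Z_A(G_s)$ are integers of polynomial size for $s\le N$. For real, possibly non-algebraic $\mathbf g$, the output is a fixed polynomial expression in $\mathbf g$ with integer coefficients $c_{\mathbf k}$. The treatment is as in \cite[Appendix A]{cai-maran-young}, the same model the paper uses throughout. The only remaining care is that the node distinctness used above comes from the primes $p_\ell$ being distinct, which is the hypothesis of the lemma.
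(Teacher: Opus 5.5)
Your proposal is correct and is the standard argument. The paper gives no proof of its own and simply cites \cite{CaiM23}, whose proof is exactly this thickening interpolation: you group assignments by the total exponent vector $\mathbf k$, use distinctness of the primes $p_\ell$ to make the Vandermonde nodes $\prod_\ell p_\ell^{k_\ell}$ distinct, solve for the $c_{\mathbf k}$ from $Z_A$ on the $s$-fold thickenings, and evaluate $\sum_{\mathbf k} c_{\mathbf k}\,g_1^{k_1}\cdots g_d^{k_d}$, which is the same technique the paper itself uses in \cref{lem: realize TA} and \cref{lem: interpolation extract principle submatrix}.
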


The prime $2$ is the only generator for entries in $S$ and $K.$
Setting the generator to a real number $g$ 
%in $N=S^{\otimes p}\otimes K^{\otimes q}$ 
doesn't give us \#P-hardness directly.
Instead, we will first apply a ``Wheatstone gadget" (see \cref{fig:wheatstone}) to disrupt the values in $S$ and $K$, and then appeal to \cref{lemma: full_thickeningInterpolation}.

\begin{figure}[htbp]
\centering
\begin{tikzpicture}[scale=.85,every node/.style={font=\small},
  vertex/.style={circle,draw,fill=white,inner sep=2.1pt}]
  \node[vertex,label=left:$a$] (x) at (0,0) {};
  \node[vertex,label=right:$b$] (y) at (4,0) {};
  \node[vertex,label=above:$u$] (u) at (2,1.25) {};
  \node[vertex,label=below:$v$] (v) at (2,-1.25) {};
  \draw (x)--(u)--(y)--(v)--(x);
  \draw (u)--(v);
\end{tikzpicture}
\caption{The Wheatstone gadget $\mathbf{W}$.
$a$ and $b$ are distinguished vertices.
The name comes from the Wheatstone bridge in electrical circuits.}
\label{fig:wheatstone}
\end{figure}

\begin{figure}[htbp]
\centering
\begingroup
\footnotesize
\setlength{\tabcolsep}{1.5pt}
\renewcommand{\arraystretch}{1.25}
{\large $S=J+3I+A_{SH}=$}\par\medskip
\resizebox{\linewidth}{!}{%
\begin{tabular}{r|rrrr|rrrr|rrrr|rrrr}
 & \multicolumn{4}{c|}{$H$} & \multicolumn{4}{c|}{$(1,0)+H$} & \multicolumn{4}{c|}{$(0,1)+H$} & \multicolumn{4}{c}{$(1,1)+H$} \\
 & $(0,0)$ & $(0,2)$ & $(2,0)$ & $(2,2)$ & $(1,0)$ & $(1,2)$ & $(3,0)$ & $(3,2)$ & $(0,1)$ & $(0,3)$ & $(2,1)$ & $(2,3)$ & $(1,1)$ & $(1,3)$ & $(3,1)$ & $(3,3)$ \\ \hhline{-|----|----|----|----}
$(0,0)$ & \DoobMatrixEntry{4} & \DoobMatrixEntry{1} & \DoobMatrixEntry{1} & \DoobMatrixEntry{1} & \DoobMatrixEntry{2} & \DoobMatrixEntry{1} & \DoobMatrixEntry{2} & \DoobMatrixEntry{1} & \DoobMatrixEntry{2} & \DoobMatrixEntry{2} & \DoobMatrixEntry{1} & \DoobMatrixEntry{1} & \DoobMatrixEntry{2} & \DoobMatrixEntry{1} & \DoobMatrixEntry{1} & \DoobMatrixEntry{2} \\
$(0,2)$ & \DoobMatrixEntry{1} & \DoobMatrixEntry{4} & \DoobMatrixEntry{1} & \DoobMatrixEntry{1} & \DoobMatrixEntry{1} & \DoobMatrixEntry{2} & \DoobMatrixEntry{1} & \DoobMatrixEntry{2} & \DoobMatrixEntry{2} & \DoobMatrixEntry{2} & \DoobMatrixEntry{1} & \DoobMatrixEntry{1} & \DoobMatrixEntry{1} & \DoobMatrixEntry{2} & \DoobMatrixEntry{2} & \DoobMatrixEntry{1} \\
$(2,0)$ & \DoobMatrixEntry{1} & \DoobMatrixEntry{1} & \DoobMatrixEntry{4} & \DoobMatrixEntry{1} & \DoobMatrixEntry{2} & \DoobMatrixEntry{1} & \DoobMatrixEntry{2} & \DoobMatrixEntry{1} & \DoobMatrixEntry{1} & \DoobMatrixEntry{1} & \DoobMatrixEntry{2} & \DoobMatrixEntry{2} & \DoobMatrixEntry{1} & \DoobMatrixEntry{2} & \DoobMatrixEntry{2} & \DoobMatrixEntry{1} \\
$(2,2)$ & \DoobMatrixEntry{1} & \DoobMatrixEntry{1} & \DoobMatrixEntry{1} & \DoobMatrixEntry{4} & \DoobMatrixEntry{1} & \DoobMatrixEntry{2} & \DoobMatrixEntry{1} & \DoobMatrixEntry{2} & \DoobMatrixEntry{1} & \DoobMatrixEntry{1} & \DoobMatrixEntry{2} & \DoobMatrixEntry{2} & \DoobMatrixEntry{2} & \DoobMatrixEntry{1} & \DoobMatrixEntry{1} & \DoobMatrixEntry{2} \\ \hhline{-|----|----|----|----}
$(1,0)$ & \DoobMatrixEntry{2} & \DoobMatrixEntry{1} & \DoobMatrixEntry{2} & \DoobMatrixEntry{1} & \DoobMatrixEntry{4} & \DoobMatrixEntry{1} & \DoobMatrixEntry{1} & \DoobMatrixEntry{1} & \DoobMatrixEntry{1} & \DoobMatrixEntry{2} & \DoobMatrixEntry{2} & \DoobMatrixEntry{1} & \DoobMatrixEntry{2} & \DoobMatrixEntry{2} & \DoobMatrixEntry{1} & \DoobMatrixEntry{1} \\
$(1,2)$ & \DoobMatrixEntry{1} & \DoobMatrixEntry{2} & \DoobMatrixEntry{1} & \DoobMatrixEntry{2} & \DoobMatrixEntry{1} & \DoobMatrixEntry{4} & \DoobMatrixEntry{1} & \DoobMatrixEntry{1} & \DoobMatrixEntry{2} & \DoobMatrixEntry{1} & \DoobMatrixEntry{1} & \DoobMatrixEntry{2} & \DoobMatrixEntry{2} & \DoobMatrixEntry{2} & \DoobMatrixEntry{1} & \DoobMatrixEntry{1} \\
$(3,0)$ & \DoobMatrixEntry{2} & \DoobMatrixEntry{1} & \DoobMatrixEntry{2} & \DoobMatrixEntry{1} & \DoobMatrixEntry{1} & \DoobMatrixEntry{1} & \DoobMatrixEntry{4} & \DoobMatrixEntry{1} & \DoobMatrixEntry{2} & \DoobMatrixEntry{1} & \DoobMatrixEntry{1} & \DoobMatrixEntry{2} & \DoobMatrixEntry{1} & \DoobMatrixEntry{1} & \DoobMatrixEntry{2} & \DoobMatrixEntry{2} \\
$(3,2)$ & \DoobMatrixEntry{1} & \DoobMatrixEntry{2} & \DoobMatrixEntry{1} & \DoobMatrixEntry{2} & \DoobMatrixEntry{1} & \DoobMatrixEntry{1} & \DoobMatrixEntry{1} & \DoobMatrixEntry{4} & \DoobMatrixEntry{1} & \DoobMatrixEntry{2} & \DoobMatrixEntry{2} & \DoobMatrixEntry{1} & \DoobMatrixEntry{1} & \DoobMatrixEntry{1} & \DoobMatrixEntry{2} & \DoobMatrixEntry{2} \\ \hhline{-|----|----|----|----}
$(0,1)$ & \DoobMatrixEntry{2} & \DoobMatrixEntry{2} & \DoobMatrixEntry{1} & \DoobMatrixEntry{1} & \DoobMatrixEntry{1} & \DoobMatrixEntry{2} & \DoobMatrixEntry{2} & \DoobMatrixEntry{1} & \DoobMatrixEntry{4} & \DoobMatrixEntry{1} & \DoobMatrixEntry{1} & \DoobMatrixEntry{1} & \DoobMatrixEntry{2} & \DoobMatrixEntry{1} & \DoobMatrixEntry{2} & \DoobMatrixEntry{1} \\
$(0,3)$ & \DoobMatrixEntry{2} & \DoobMatrixEntry{2} & \DoobMatrixEntry{1} & \DoobMatrixEntry{1} & \DoobMatrixEntry{2} & \DoobMatrixEntry{1} & \DoobMatrixEntry{1} & \DoobMatrixEntry{2} & \DoobMatrixEntry{1} & \DoobMatrixEntry{4} & \DoobMatrixEntry{1} & \DoobMatrixEntry{1} & \DoobMatrixEntry{1} & \DoobMatrixEntry{2} & \DoobMatrixEntry{1} & \DoobMatrixEntry{2} \\
$(2,1)$ & \DoobMatrixEntry{1} & \DoobMatrixEntry{1} & \DoobMatrixEntry{2} & \DoobMatrixEntry{2} & \DoobMatrixEntry{2} & \DoobMatrixEntry{1} & \DoobMatrixEntry{1} & \DoobMatrixEntry{2} & \DoobMatrixEntry{1} & \DoobMatrixEntry{1} & \DoobMatrixEntry{4} & \DoobMatrixEntry{1} & \DoobMatrixEntry{2} & \DoobMatrixEntry{1} & \DoobMatrixEntry{2} & \DoobMatrixEntry{1} \\
$(2,3)$ & \DoobMatrixEntry{1} & \DoobMatrixEntry{1} & \DoobMatrixEntry{2} & \DoobMatrixEntry{2} & \DoobMatrixEntry{1} & \DoobMatrixEntry{2} & \DoobMatrixEntry{2} & \DoobMatrixEntry{1} & \DoobMatrixEntry{1} & \DoobMatrixEntry{1} & \DoobMatrixEntry{1} & \DoobMatrixEntry{4} & \DoobMatrixEntry{1} & \DoobMatrixEntry{2} & \DoobMatrixEntry{1} & \DoobMatrixEntry{2} \\ \hhline{-|----|----|----|----}
$(1,1)$ & \DoobMatrixEntry{2} & \DoobMatrixEntry{1} & \DoobMatrixEntry{1} & \DoobMatrixEntry{2} & \DoobMatrixEntry{2} & \DoobMatrixEntry{2} & \DoobMatrixEntry{1} & \DoobMatrixEntry{1} & \DoobMatrixEntry{2} & \DoobMatrixEntry{1} & \DoobMatrixEntry{2} & \DoobMatrixEntry{1} & \DoobMatrixEntry{4} & \DoobMatrixEntry{1} & \DoobMatrixEntry{1} & \DoobMatrixEntry{1} \\
$(1,3)$ & \DoobMatrixEntry{1} & \DoobMatrixEntry{2} & \DoobMatrixEntry{2} & \DoobMatrixEntry{1} & \DoobMatrixEntry{2} & \DoobMatrixEntry{2} & \DoobMatrixEntry{1} & \DoobMatrixEntry{1} & \DoobMatrixEntry{1} & \DoobMatrixEntry{2} & \DoobMatrixEntry{1} & \DoobMatrixEntry{2} & \DoobMatrixEntry{1} & \DoobMatrixEntry{4} & \DoobMatrixEntry{1} & \DoobMatrixEntry{1} \\
$(3,1)$ & \DoobMatrixEntry{1} & \DoobMatrixEntry{2} & \DoobMatrixEntry{2} & \DoobMatrixEntry{1} & \DoobMatrixEntry{1} & \DoobMatrixEntry{1} & \DoobMatrixEntry{2} & \DoobMatrixEntry{2} & \DoobMatrixEntry{2} & \DoobMatrixEntry{1} & \DoobMatrixEntry{2} & \DoobMatrixEntry{1} & \DoobMatrixEntry{1} & \DoobMatrixEntry{1} & \DoobMatrixEntry{4} & \DoobMatrixEntry{1} \\
$(3,3)$ & \DoobMatrixEntry{2} & \DoobMatrixEntry{1} & \DoobMatrixEntry{1} & \DoobMatrixEntry{2} & \DoobMatrixEntry{1} & \DoobMatrixEntry{1} & \DoobMatrixEntry{2} & \DoobMatrixEntry{2} & \DoobMatrixEntry{1} & \DoobMatrixEntry{2} & \DoobMatrixEntry{1} & \DoobMatrixEntry{2} & \DoobMatrixEntry{1} & \DoobMatrixEntry{1} & \DoobMatrixEntry{1} & \DoobMatrixEntry{4} \\
\end{tabular}%
}
\endgroup
\medskip
\[
K=
\left(\begin{array}{rrrr}
\DoobMatrixEntry{2} & \DoobMatrixEntry{1} & \DoobMatrixEntry{1} & \DoobMatrixEntry{1} \\
\DoobMatrixEntry{1} & \DoobMatrixEntry{2} & \DoobMatrixEntry{1} & \DoobMatrixEntry{1} \\
\DoobMatrixEntry{1} & \DoobMatrixEntry{1} & \DoobMatrixEntry{2} & \DoobMatrixEntry{1} \\
\DoobMatrixEntry{1} & \DoobMatrixEntry{1} & \DoobMatrixEntry{1} & \DoobMatrixEntry{2} \\
\end{array}\right).
\]
\caption{The matrices $S$ and $K$. 
Here $A_{SH}$ is the adjacency matrix
of the Shrikhande graph $SH$.
In $S$, the diagonal entries are $4$, adjacent pairs have weight $2$, and all
remaining pairs have weight $1$.
The row and column labels are specific to the subgroup $H=2\mathbb Z_4\times 2\mathbb{Z}_4$ of $\mathbb{Z}_4\times \mathbb{Z}_4$.
}
\label{fig: S and K}
\end{figure}

\begin{figure}[htbp]
\centering
\begingroup
\footnotesize
\setlength{\tabcolsep}{1.5pt}
\renewcommand{\arraystretch}{1.25}
{\large $\mathbf{W}(S)=$}\par\medskip
\resizebox{\linewidth}{!}{%
\begin{tabular}{r|rrrr|rrrr|rrrr|rrrr}
 & \multicolumn{4}{c|}{$H$} & \multicolumn{4}{c|}{$(1,0)+H$} & \multicolumn{4}{c|}{$(0,1)+H$} & \multicolumn{4}{c}{$(1,1)+H$} \\
 & $(0,0)$ & $(0,2)$ & $(2,0)$ & $(2,2)$ & $(1,0)$ & $(1,2)$ & $(3,0)$ & $(3,2)$ & $(0,1)$ & $(0,3)$ & $(2,1)$ & $(2,3)$ & $(1,1)$ & $(1,3)$ & $(3,1)$ & $(3,3)$ \\ \hhline{-|----|----|----|----}
$(0,0)$ & \DoobMatrixEntry{4624} & \DoobMatrixEntry{2116} & \DoobMatrixEntry{2116} & \DoobMatrixEntry{2116} & \DoobMatrixEntry{3126} & \DoobMatrixEntry{2118} & \DoobMatrixEntry{3126} & \DoobMatrixEntry{2118} & \DoobMatrixEntry{3126} & \DoobMatrixEntry{3126} & \DoobMatrixEntry{2118} & \DoobMatrixEntry{2118} & \DoobMatrixEntry{3126} & \DoobMatrixEntry{2118} & \DoobMatrixEntry{2118} & \DoobMatrixEntry{3126} \\
$(0,2)$ & \DoobMatrixEntry{2116} & \DoobMatrixEntry{4624} & \DoobMatrixEntry{2116} & \DoobMatrixEntry{2116} & \DoobMatrixEntry{2118} & \DoobMatrixEntry{3126} & \DoobMatrixEntry{2118} & \DoobMatrixEntry{3126} & \DoobMatrixEntry{3126} & \DoobMatrixEntry{3126} & \DoobMatrixEntry{2118} & \DoobMatrixEntry{2118} & \DoobMatrixEntry{2118} & \DoobMatrixEntry{3126} & \DoobMatrixEntry{3126} & \DoobMatrixEntry{2118} \\
$(2,0)$ & \DoobMatrixEntry{2116} & \DoobMatrixEntry{2116} & \DoobMatrixEntry{4624} & \DoobMatrixEntry{2116} & \DoobMatrixEntry{3126} & \DoobMatrixEntry{2118} & \DoobMatrixEntry{3126} & \DoobMatrixEntry{2118} & \DoobMatrixEntry{2118} & \DoobMatrixEntry{2118} & \DoobMatrixEntry{3126} & \DoobMatrixEntry{3126} & \DoobMatrixEntry{2118} & \DoobMatrixEntry{3126} & \DoobMatrixEntry{3126} & \DoobMatrixEntry{2118} \\
$(2,2)$ & \DoobMatrixEntry{2116} & \DoobMatrixEntry{2116} & \DoobMatrixEntry{2116} & \DoobMatrixEntry{4624} & \DoobMatrixEntry{2118} & \DoobMatrixEntry{3126} & \DoobMatrixEntry{2118} & \DoobMatrixEntry{3126} & \DoobMatrixEntry{2118} & \DoobMatrixEntry{2118} & \DoobMatrixEntry{3126} & \DoobMatrixEntry{3126} & \DoobMatrixEntry{3126} & \DoobMatrixEntry{2118} & \DoobMatrixEntry{2118} & \DoobMatrixEntry{3126} \\ \hhline{-|----|----|----|----}
$(1,0)$ & \DoobMatrixEntry{3126} & \DoobMatrixEntry{2118} & \DoobMatrixEntry{3126} & \DoobMatrixEntry{2118} & \DoobMatrixEntry{4624} & \DoobMatrixEntry{2116} & \DoobMatrixEntry{2116} & \DoobMatrixEntry{2116} & \DoobMatrixEntry{2118} & \DoobMatrixEntry{3126} & \DoobMatrixEntry{3126} & \DoobMatrixEntry{2118} & \DoobMatrixEntry{3126} & \DoobMatrixEntry{3126} & \DoobMatrixEntry{2118} & \DoobMatrixEntry{2118} \\
$(1,2)$ & \DoobMatrixEntry{2118} & \DoobMatrixEntry{3126} & \DoobMatrixEntry{2118} & \DoobMatrixEntry{3126} & \DoobMatrixEntry{2116} & \DoobMatrixEntry{4624} & \DoobMatrixEntry{2116} & \DoobMatrixEntry{2116} & \DoobMatrixEntry{3126} & \DoobMatrixEntry{2118} & \DoobMatrixEntry{2118} & \DoobMatrixEntry{3126} & \DoobMatrixEntry{3126} & \DoobMatrixEntry{3126} & \DoobMatrixEntry{2118} & \DoobMatrixEntry{2118} \\
$(3,0)$ & \DoobMatrixEntry{3126} & \DoobMatrixEntry{2118} & \DoobMatrixEntry{3126} & \DoobMatrixEntry{2118} & \DoobMatrixEntry{2116} & \DoobMatrixEntry{2116} & \DoobMatrixEntry{4624} & \DoobMatrixEntry{2116} & \DoobMatrixEntry{3126} & \DoobMatrixEntry{2118} & \DoobMatrixEntry{2118} & \DoobMatrixEntry{3126} & \DoobMatrixEntry{2118} & \DoobMatrixEntry{2118} & \DoobMatrixEntry{3126} & \DoobMatrixEntry{3126} \\
$(3,2)$ & \DoobMatrixEntry{2118} & \DoobMatrixEntry{3126} & \DoobMatrixEntry{2118} & \DoobMatrixEntry{3126} & \DoobMatrixEntry{2116} & \DoobMatrixEntry{2116} & \DoobMatrixEntry{2116} & \DoobMatrixEntry{4624} & \DoobMatrixEntry{2118} & \DoobMatrixEntry{3126} & \DoobMatrixEntry{3126} & \DoobMatrixEntry{2118} & \DoobMatrixEntry{2118} & \DoobMatrixEntry{2118} & \DoobMatrixEntry{3126} & \DoobMatrixEntry{3126} \\ \hhline{-|----|----|----|----}
$(0,1)$ & \DoobMatrixEntry{3126} & \DoobMatrixEntry{3126} & \DoobMatrixEntry{2118} & \DoobMatrixEntry{2118} & \DoobMatrixEntry{2118} & \DoobMatrixEntry{3126} & \DoobMatrixEntry{3126} & \DoobMatrixEntry{2118} & \DoobMatrixEntry{4624} & \DoobMatrixEntry{2116} & \DoobMatrixEntry{2116} & \DoobMatrixEntry{2116} & \DoobMatrixEntry{3126} & \DoobMatrixEntry{2118} & \DoobMatrixEntry{3126} & \DoobMatrixEntry{2118} \\
$(0,3)$ & \DoobMatrixEntry{3126} & \DoobMatrixEntry{3126} & \DoobMatrixEntry{2118} & \DoobMatrixEntry{2118} & \DoobMatrixEntry{3126} & \DoobMatrixEntry{2118} & \DoobMatrixEntry{2118} & \DoobMatrixEntry{3126} & \DoobMatrixEntry{2116} & \DoobMatrixEntry{4624} & \DoobMatrixEntry{2116} & \DoobMatrixEntry{2116} & \DoobMatrixEntry{2118} & \DoobMatrixEntry{3126} & \DoobMatrixEntry{2118} & \DoobMatrixEntry{3126} \\
$(2,1)$ & \DoobMatrixEntry{2118} & \DoobMatrixEntry{2118} & \DoobMatrixEntry{3126} & \DoobMatrixEntry{3126} & \DoobMatrixEntry{3126} & \DoobMatrixEntry{2118} & \DoobMatrixEntry{2118} & \DoobMatrixEntry{3126} & \DoobMatrixEntry{2116} & \DoobMatrixEntry{2116} & \DoobMatrixEntry{4624} & \DoobMatrixEntry{2116} & \DoobMatrixEntry{3126} & \DoobMatrixEntry{2118} & \DoobMatrixEntry{3126} & \DoobMatrixEntry{2118} \\
$(2,3)$ & \DoobMatrixEntry{2118} & \DoobMatrixEntry{2118} & \DoobMatrixEntry{3126} & \DoobMatrixEntry{3126} & \DoobMatrixEntry{2118} & \DoobMatrixEntry{3126} & \DoobMatrixEntry{3126} & \DoobMatrixEntry{2118} & \DoobMatrixEntry{2116} & \DoobMatrixEntry{2116} & \DoobMatrixEntry{2116} & \DoobMatrixEntry{4624} & \DoobMatrixEntry{2118} & \DoobMatrixEntry{3126} & \DoobMatrixEntry{2118} & \DoobMatrixEntry{3126} \\ \hhline{-|----|----|----|----}
$(1,1)$ & \DoobMatrixEntry{3126} & \DoobMatrixEntry{2118} & \DoobMatrixEntry{2118} & \DoobMatrixEntry{3126} & \DoobMatrixEntry{3126} & \DoobMatrixEntry{3126} & \DoobMatrixEntry{2118} & \DoobMatrixEntry{2118} & \DoobMatrixEntry{3126} & \DoobMatrixEntry{2118} & \DoobMatrixEntry{3126} & \DoobMatrixEntry{2118} & \DoobMatrixEntry{4624} & \DoobMatrixEntry{2116} & \DoobMatrixEntry{2116} & \DoobMatrixEntry{2116} \\
$(1,3)$ & \DoobMatrixEntry{2118} & \DoobMatrixEntry{3126} & \DoobMatrixEntry{3126} & \DoobMatrixEntry{2118} & \DoobMatrixEntry{3126} & \DoobMatrixEntry{3126} & \DoobMatrixEntry{2118} & \DoobMatrixEntry{2118} & \DoobMatrixEntry{2118} & \DoobMatrixEntry{3126} & \DoobMatrixEntry{2118} & \DoobMatrixEntry{3126} & \DoobMatrixEntry{2116} & \DoobMatrixEntry{4624} & \DoobMatrixEntry{2116} & \DoobMatrixEntry{2116} \\
$(3,1)$ & \DoobMatrixEntry{2118} & \DoobMatrixEntry{3126} & \DoobMatrixEntry{3126} & \DoobMatrixEntry{2118} & \DoobMatrixEntry{2118} & \DoobMatrixEntry{2118} & \DoobMatrixEntry{3126} & \DoobMatrixEntry{3126} & \DoobMatrixEntry{3126} & \DoobMatrixEntry{2118} & \DoobMatrixEntry{3126} & \DoobMatrixEntry{2118} & \DoobMatrixEntry{2116} & \DoobMatrixEntry{2116} & \DoobMatrixEntry{4624} & \DoobMatrixEntry{2116} \\
$(3,3)$ & \DoobMatrixEntry{3126} & \DoobMatrixEntry{2118} & \DoobMatrixEntry{2118} & \DoobMatrixEntry{3126} & \DoobMatrixEntry{2118} & \DoobMatrixEntry{2118} & \DoobMatrixEntry{3126} & \DoobMatrixEntry{3126} & \DoobMatrixEntry{2118} & \DoobMatrixEntry{3126} & \DoobMatrixEntry{2118} & \DoobMatrixEntry{3126} & \DoobMatrixEntry{2116} & \DoobMatrixEntry{2116} & \DoobMatrixEntry{2116} & \DoobMatrixEntry{4624} \\
\end{tabular}%
}
\endgroup
\medskip
\[
\mathbf{W}(K)=
\left(\begin{array}{rrrr}
\DoobMatrixEntry{68} & \DoobMatrixEntry{46} & \DoobMatrixEntry{46} & \DoobMatrixEntry{46} \\
\DoobMatrixEntry{46} & \DoobMatrixEntry{68} & \DoobMatrixEntry{46} & \DoobMatrixEntry{46} \\
\DoobMatrixEntry{46} & \DoobMatrixEntry{46} & \DoobMatrixEntry{68} & \DoobMatrixEntry{46} \\
\DoobMatrixEntry{46} & \DoobMatrixEntry{46} & \DoobMatrixEntry{46} & \DoobMatrixEntry{68} \\
\end{array}\right)
\]
\caption{$\mathbf{W}(S)$ and $\mathbf{W}(K)$ computed from
$\mathbf{W}(A)_{ab}=\sum_{u,v\in V(\mathbf{W})}A_{au}A_{ub}A_{av}A_{vb}A_{uv}$.
The row and column order is the same as \cref{fig: S and K}. 
In $\mathbf{W}(S)$, the diagonal $4\times4$ blocks contain $4624$ and $2116$, neither is a multiple of $3$; the off-diagonal blocks contain $3126$ and $2118$, both are divisible by $3$, and the order of $3$ is $1.$, i.e.,
$3^1 \parallel  3126, 2118$.
In $\mathbf{W}(K)$, $68$ and $46$ are not multiples of $3.$
\label{fig: S and K after Wheatstone}}
\end{figure}

For a planar input graph $G$, we obtain the planar graph $\mathbf{W}G$ by replacing every edge in $G$ by a Wheatstone gadget, with $a$ and $b$ attached to the two endpoints of an original edge.
Define the matrix $\mathbf{W}(N)\in \Sym{n}{}{\Zp}$ by $$\mathbf{W}(N)_{ab}=\sum_{u,v\in[n]}N_{au}N_{ub}N_{av}N_{vb}N_{uv}.$$
Then $Z_{\mathbf{W}(N)}(G)=Z_N(\mathbf{W}G)$ (see Definition 1, \cite{cai-maran-young}).
Oracle access to $Z_{N}(\mathbf{W}G)$ gives $Z_{\mathbf{W}(N)}(G).$
In other words, $\PlGH(\mathbf{W}(N))\le_T\PlGH(N).$

It is immediate from the definition that tensor products are preserved by edge gadgets. (This is true even for  nonplanar gadget.)
So $\mathbf{W}(N)=\mathbf{W}(S)^{\otimes p}\otimes \mathbf{W}(K)^{\otimes q}$,
for some $p\ge1, q\ge 0$.
%%% This is just writing out, for A \otimes B, every assignment from V(G) to
% V(A \otimes B) is made up of a pair of maps from  V(G) to
% V(A) and to V(B)
A direct computation yields $\mathbf{W}(S)$ and $\mathbf{W}(K)$, see \cref{fig: S and K after Wheatstone}.
Notice that In $\mathbf{W}(S)$, the diagonal $4\times4$ blocks contain $4624$ and $2116$, neither is a multiple of $3$; the off-diagonal blocks contain $3126$ and $2118$, both are divisible by $3$, and the exact order of $3$ is $1$,
i.e.,
$3^1 \parallel  3126, 2118$.
In $\mathbf{W}(K)$, $68$ and $46$ are not multiples of $3.$
Therefore, there exists a set of distinct prime generators $\{3, p_2,\ldots,p_d\}$ for entries of $N$.
Suppose $\mathbf{W}(N)_{ij}=3^{e_{ij1}}p_2^{e_{ij2}}\cdots p_d^{e_{ijd}}$, for every $i,j\in X.$
We define $\mathcal{M}(\mathbf{g})_{ij}=\mathcal{M}(g_1,\ldots,g_d)_{ij}=g_1^{e_{ij1}}g_2^{e_{ij2}}\cdots g_d^{e_{ijd}}$, for every $i,j\in X.$
By \cref{lemma: full_thickeningInterpolation}, $\PlGH(\mathcal{M}(3,1,\ldots,1))\le_T\PlGH(\mathbf{W}(N))$.
Note that
$\mathcal{M}(3,1,\ldots,1)=(P_4\otimes J_4)^{\otimes p}\otimes J_4^{\otimes q}=P_4^{\otimes p}\otimes J_4^{\otimes (p+q)}$, where $P_4=\left(\begin{smallmatrix}
    1 & 3 & 3 & 3\\
    3 & 1 & 3 & 3\\
    3 & 3 & 1 & 3\\
    3 & 3 & 3 & 1\\
\end{smallmatrix}\right)$ and $J_4$ is the 4 by 4 all one matrix.
So $Z_{\mathcal{M}(3,1,\ldots,1)}(G)=Z_{P_4}(G)^pZ_{J_4}(G)^{p+q}=4^{(p+q)|V(G)|}Z_{P_4}(G)^p$, for any input planar graph $G.$
Oracle access to $Z_{\mathcal{M}(3,1,\ldots,1)}(G)$ gives $Z_{P_4}(G).$
Thus, $\PlGH(P_4)\le_T\PlGH(\mathcal{M}(3,1,\ldots,1))\le_T\PlGH(\mathbf{W}(N))\le_T \PlGH(N)\le_T \PlGH(M)$.
Finally, by \cref{thm:potts-source} and a constant scaling, $\PlGH(P_4)$ is \#P-hard.
Therefore, we have proved the following lemma:
\begin{lemma}\label{lem:doob_is_hard}
    If there exists some $\omega\in\Omega$ such that $\Gamma_\omega$ is isomorphic to a Doob graph, then $\PlGH(M)$ is \#P-hard.
\end{lemma}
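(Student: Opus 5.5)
The statement to prove is \cref{lem:doob_is_hard}: if some $\Gamma_\omega$ is a Doob graph $I(p,q)$ with $p\ge1$, then $\PlGH(M)$ is \#P-hard. The plan is a chain of reductions ending at a 4-state Potts model.

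First I pass from $M$ to a concrete integer matrix. By \cref{lem:coordinate-extraction}, $F_\omega(t)$ is realizable for every $t\in(0,1)$. Choosing $t=1/5$ makes $F_\omega(1/5)_{ab}=5^{-D}2^{D-\Delta(a,b)}$, so after a harmless constant scaling we obtain
\[
N_{ab}=2^{D-\Delta(a,b)}, \qquad \PlGH(N)\le_T\PlGH(M).
\]
Because $\Delta$ is the graph distance of $\Gamma_\omega\cong I(p,q)$ (\cref{lem:coordinate-graph-distance}), and distance in a Cartesian product is the sum of coordinate distances, we get $N=S^{\otimes p}\otimes K^{\otimes q}$. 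Here $S_{uv}=2^{2-d_{SH}(u,v)}$ on the Shrikhande graph and $K=J_4+I_4$.

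The difficulty is that every entry of $N$ is a power of $2$. Specializing that single prime via \cref{lemma: full_thickeningInterpolation} only produces $2^{D-\Delta}$-type matrices again, which look like the Hamming case. I therefore first apply a planar edge gadget whose output has a different prime structure and actually distinguishes the Shrikhande graph from $K_4\times K_4$. My first choice is the Wheatstone gadget $\mathbf{W}$ ($K_4$ minus an edge, with the two degree-2 vertices as terminals). It is planar, so $\PlGH(\mathbf{W}(N))\le_T\PlGH(N)$. Since gadgets respect tensor products, $\mathbf{W}(N)=\mathbf{W}(S)^{\otimes p}\otimes\mathbf{W}(K)^{\otimes q}$.

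The key step, and the one I expect to be the main obstacle, is computing $\mathbf{W}(S)$ and $\mathbf{W}(K)$ and finding a prime that separates cosets. The aim is a prime, $3$ being the natural guess, such that:
\begin{itemize}
\item in $\mathbf{W}(S)$, it divides exactly the entries outside the diagonal $4\times4$ blocks indexed by cosets of $H=2\mathbb Z_4\times2\mathbb Z_4$, and each such entry has $3$-adic valuation exactly one;
\item it divides no entry of $\mathbf{W}(K)$.
\end{itemize}
This is a finite computation over $16\times16$ entries. The expected values are: diagonal blocks $4624$ and $2116$, off-diagonal blocks $3126$ and $2118$, and $\mathbf{W}(K)$ entries $68$ and $46$. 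If the Wheatstone gadget failed, I would try other small planar gadgets until the coset structure appears in some prime.

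Given such a prime, I apply \cref{lemma: full_thickeningInterpolation} with generator value $3$ for that prime and $1$ for all other primes. This realizes $P_4^{\otimes p}\otimes J_4^{\otimes(p+q)}$, where $P_4$ has diagonal $1$ and off-diagonal $3$. Then
\[
Z(G)=4^{(p+q)|V(G)|}\,Z_{P_4}(G)^p,
\]
and since $Z_{P_4}(G)>0$ we recover $Z_{P_4}(G)$ by taking a $p$-th root. Finally, $P_4$ is $3\cdot{\tt Potts}_4(1/3)$, so $\PlGH(P_4)$ is \#P-hard by \cref{thm:potts-source}. Chaining the reductions shows $\PlGH(M)$ is \#P-hard.
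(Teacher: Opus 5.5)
Your proposal is correct and follows the paper's proof essentially step for step: specialize $F_\omega(1/5)$ to $N=S^{\otimes p}\otimes K^{\otimes q}$, apply the Wheatstone gadget, single out the prime $3$ (the values $4624,2116,3126,2118$ and $68,46$ you anticipate are exactly what the gadget produces), and use \cref{lemma: full_thickeningInterpolation} to reach $P_4^{\otimes p}\otimes J_4^{\otimes(p+q)}$ and hence the $4$-state Potts model. The only piece you leave implicit is the finite computation of $\mathbf{W}(S)$ and $\mathbf{W}(K)$, which the paper likewise presents as a direct computation.
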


\section{Proof of Nonconstant Diagonal Hardness}\label{app:nonconstant}
In this section, we prove:
\nonconstantdiag*

\paragraph{Proof Overview}
We prove \#P-hardness for positive definite $M$.
Generalizing to full rank $M$ is relatively easy.
The proof first assumes that a minimal counterexample $M\in \Sym{n}{pd}{\Rp}$ (\cref{def:minimal-counterexample}) exists.
We then prove every realizable unary signature must satisfy stringent conditions, otherwise we may obtain a smaller counterexample (\cref{lem:unary-profile}).
Following this, we prove that the diagonals of $M$ lie in a geometric progression $M_{ii}=\kappa\rho^{k_i}$ (\cref{lem:geometric-levels}).
We call the exponents $(k_1, \ldots, k_n)$ the \emph{diagonal levels} of $M$, and the number of indices at each level,
$n_0, n_1, \ldots, n_D$, the \emph{diagonal level data} (\cref{def:diagonal-levels}).
%%% I added these names b/c later in this Proof Overview we state n_0, n_D
%%% alsdo stating these out make the plural usage "have" to be correct.
The main body of \autoref{app:nonconstant} is proving the diagonal level data have the binomial form (\cref{cor:binary-levels}).
Then in \cref{lem:boolean-power}, we reduce a $2$ by $2$ \#P-hard matrix $W$ to $M$ by a successively \emph{unpinning} process, obtaining a contradiction.

To prove that the diagonal level data have the binomial form, we show that $M$ has the $i$-th row generating polynomial in the form \(U(z)^{D-k_i}V(z)^{k_i}\), where \(U,V\) are two nonproportional linear polynomials with positive coefficients (\cref{lem:level-polynomials}).
Comparing the coefficients shows that the diagonal level data are nearly the binomial form (\cref{lem:binomial-levels}).
These are results in \cref{subsec: diagonal levels and their data}.

To squeeze out the exact binomial form, we appeal to the parity unary $u=((-1)^{k_1},\ldots,(-1)^{k_n})$.
We show that $u$ is realizable in \cref{lem:parity-unary}.
In parallel, we show that every realizable $H\in \Sym{n}{pd}{\Rp}$ respecting parity must be diagonal (\cref{lem:parity-quotient}).
We apply this to the matrix $G$ obtained by taking the geometric mean of $M$ and its parity modification, thereby showing that $G$ is diagonal (\cref{lem:geometric-mean}).
This allows us to define \(A(z)\) as the matrix-valued generating polynomial for the diagonal levels (\cref{def:normalized-family}), and show that $A(z)$ satisfies the reciprocal rule (\cref{lem:reciprocal}).
Suppose for a contradiction that there are distinct indices $i,j$ such that $k_i=k_j=0$ (or $k_i=k_j=D$).
Inspecting on $A(z)_{ii},A(z)_{jj}$ and $A(z)_{ij}$ shows a contradiction.
Therefore, the endpoint diagonal level data satisfy $n_0=n_D=1$ (\cref{lem:singleton-endpoints}).
Combined with the nearly binomial form \cref{lem:binomial-levels}, we conclude that the diagonal level data have the exact binomial form (\cref{cor:binary-levels}).
These are results in \cref{subsec: binomial diag level data}.

\begin{definition}[Minimal counterexample]\label{def:minimal-counterexample}
A \emph{minimal counterexample} is a matrix $M\in \Sym{n}{pd}{\Rp}$ with nonconstant diagonal, such that $\PlGH(M)$ is not \#P-hard and $n$ is minimal.
\end{definition}

Notice that in \cref{def:minimal-counterexample}, we require that $\PlGH(M)$ is not \#P-hard, instead of $\PlGH(M)$ is tractable.
Assume that a counterexample exists; we  fix a minimal one. 
\cref{thm:boolean-source} implies $n\geq3$. 
All realizability in \cref{app:nonconstant}
%this section 
is relative to this minimal counterexample $M.$

\subsection{Preliminaries}
\subsubsection{Polynomial Interpolation}
We collect some polynomial interpolation results that will be used in this section.
In particular, we build \cref{lem: realize unary by conformal lattice interpolation,lem: support interpolation} based on the conformal lattice interpolation (\cref{lm:conformal lattice interpolation}, \cite{CaiFST26}).

\begin{lemma}\label{lem: interpolation extract principle submatrix}
    Let $M\in \Sym{n}{pd}{\Rp}$ and $w\in \Rp^{n}.$
    Suppose $0 < w_i \le 1$, and let $C = \{i \in [n] \mid w_i = 1\}$ be non-empty. 
    Let $M_C$ denote the principal submatrix of $M$ indexed by $C$.
    Then $\PlGH(M_C)\le_T \PlHolant(\mathcal{EQ}\mid M,w).$
\end{lemma}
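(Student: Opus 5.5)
We reduce by attaching the unary $w$ many times and interpolating in the number of copies. Given a planar instance $G=(V,E)$ of $\PlGH(M_C)$, we form for each integer $k\ge 0$ the planar signature grid $G_k$ of $\PlHolant(\mathcal{EQ}\mid M,w)$. It has the same vertices (labelled by {\sc Equality}) and edges (labelled by $M$) as $G$. In addition, each vertex $v\in V$ receives $k$ extra pendant edges, each ending in a copy of the unary $w$. Pendant edges preserve planarity.

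The partition function is
\[
\Holant(G_k)=\sum_{\sigma:V\to[n]}\Bigl(\prod_{uv\in E}M_{\sigma(u)\sigma(v)}\Bigr)\prod_{v\in V}w_{\sigma(v)}^{k}.
\]
We group the assignments by the value $\lambda_\sigma=\prod_{v}w_{\sigma(v)}$. Since each $w_i\in(0,1]$, the product is determined by the multiset of values $\{w_{\sigma(v)}\}$. Let $c_1,\dots,c_s$ be the distinct values of $w$, with $c_1=1$ attained exactly on $C$. Then $\lambda_\sigma=\prod_j c_j^{m_j}$, where $m_j=|\{v:w_{\sigma(v)}=c_j\}|$. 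The number of such exponent tuples $(m_1,\dots,m_s)$ with $\sum m_j=|V|$ is at most $\binom{|V|+s-1}{s-1}$, which is polynomial since $s\le n$ is fixed.

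It follows that
\[
\Holant(G_k)=\sum_{\mathbf m}\lambda_{\mathbf m}^{k}\,Z_{\mathbf m},
\]
where $Z_{\mathbf m}$ is the weighted count of assignments with profile $\mathbf m$. The profile $\mathbf m=(|V|,0,\dots,0)$ gives $\lambda=1$ and collects exactly the assignments $\sigma:V\to C$, so $Z_{(|V|,0,\dots,0)}=Z_{M_C}(G)$.

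We then query the oracle for $k=0,1,\dots,N-1$, where $N$ is the number of distinct values $\lambda_{\mathbf m}$, and solve the resulting Vandermonde system for the sums $\sum_{\mathbf m:\lambda_{\mathbf m}=\lambda}Z_{\mathbf m}$, one for each distinct $\lambda$. The main obstacle is that different profiles may give the same $\lambda$, because of multiplicative relations among the $c_j$. This does not affect the coefficient we need. Every $c_j<1$ for $j\ge2$, so any profile with some $m_j>0$ for $j\ge 2$ has $\lambda_{\mathbf m}<1$. Hence the value $\lambda=1$ is attained only by the profile supported on $C$, and its coefficient is exactly $Z_{M_C}(G)$. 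The Vandermonde matrix is invertible because the $\lambda$'s are distinct by construction. The rest is bit-model bookkeeping for real entries, which we handle as in the cited interpolation lemmas (e.g.\ Lemma 2 of \cite{spectral-manuscript} and \cite[Appendix A]{cai-maran-young}).
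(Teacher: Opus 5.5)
Your proposal is correct and follows the paper's proof. Both attach $k$ copies of $w$ to every vertex, group the assignments by the product $\prod_v w_{\sigma(v)}$ over the polynomially many possible values, and solve a Vandermonde system to recover the coefficient of $1$, which by $0<w_i\le 1$ collects exactly the assignments $\sigma:V\to C$. Grouping by distinct values of $w$ rather than by indices, and querying $k=0,\dots,N-1$ instead of $k=1,\dots,|X|$, are immaterial differences.
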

\begin{proof}
    Given an input planar graph $G=(V,E)$,
    define 
    \[
    F_G(k)=\sum_{\sigma:V\to [n]}\prod_{(u,v)\in E}M_{\sigma(u)\sigma(v)}\prod_{v\in V}w_{\sigma(v)}^k.
    \]
    Attach $k$ copies of $w$ to every vertex of $G$,
    oracle access to $\PlHolant(\mathcal{EQ}\mid M,w)$ gives $F_G(k)$ for any fixed $k\in \mathbb{Z}_{>0}$ in polynomial time (in terms of the size of $G$). 
    Define $X$ to be the set of all possible (distinct) product values $\prod_{v\in V}w_{\sigma(v)}$ appearing in $F_G(k)$:
    $$
    X=\left\{\prod_{i=1}^n w_i^{t_i}\mid t_i\in \mathbb{Z}_{\ge 0}\text{ and } \sum_{i=1}^n t_i=|V|\right\}. 
    $$
    There are $\binom{|V|+n-1}{n-1}=|V|^{O(1)}$ many integer sequences $(t_i)_{i\in[n]}$ such that $t_i\in \mathbb Z_{\ge 0}$ and $\sum_{i=1}^n t_i=|V|.$
    Note that $n$ is a constant.
    So $X$ has polynomial size in $G$, and elements in $X$ are enumerable in polynomial time.
    In particular, $1\in X$ since $C=\{i\in[n]\mid w_i=1\}$ is nonempty.
    Define $A_x$ to be the coefficient of $x\in X$ appearing in $F_G(k):$
    \[
    A_x=\sum_{\substack{\sigma: V\to [n]\\\prod_{v\in V}w_{\sigma(v)}=x}}\prod_{(u,v)\in E}M_{\sigma(u)\sigma(v)} \quad (x\in X).
    \]
    Then $F_G(k)=\sum_{x\in X}A_x x^{k}, k\in \Zp.$
    Query $F_G(1),F_G(2),\ldots, F_G(|X|)$, we obtain a full rank Vandermonde system in variables $(A_x)_{x\in X}.$
    Solving the Vandermonde system, we recover $A_1$ in polynomial time.
    Every term in the sum expression of $A_1$ indexed by $\sigma$ satisfies $\prod_{v\in V}w_{\sigma(v)}=1.$
    Since $0<w_i\le 1$, $\prod_{v\in V}w_{\sigma(v)}=1\iff w_{\sigma(v)}=1$ for every $v\in V\iff \sigma(v)\in C$ for every $v\in V$.
    Thus,
    \[
    A_1=\sum_{\sigma: V\to C}\prod_{(u,v)\in E}M_{\sigma(u)\sigma(v)}=Z_{M_C}(G).
    \]
    Therefore, $\PlGH(M_C)\le_T \PlHolant(\mathcal{EQ}\mid M,w).$
\end{proof}

\begin{definition}\label{definition: latticeSet}
    Let $(a_1,a_2,\ldots,a_n)\in \R_{\neq 0}^n$.
    Define
    $${\mathfrak{L}}(a_1,a_2,\ldots,a_n)
    =\left\{(x_1,x_2,\ldots, x_n)\in\Z^n \mid \prod_{i=1}^n a_i^{x_i}=1 \text{ and } \sum_{i=1}^n x_i=0 \right\}.$$
\end{definition}

\begin{lemma}[Conformal lattice interpolation \cite{CaiFST26}]\label{lm:conformal lattice interpolation}
    Let $n\in \Zp$, ${x}=(x_1,x_2,\ldots,x_n)\in \R_{\neq 0}^n$ and  ${y}=(y_1,y_2,\ldots,y_n)\in \R_{\neq 0}^n$.
    Suppose 
    ${\mathfrak{L}}(x_1,x_2\ldots,x_n) \subseteq {\mathfrak{L}}(y_1,y_2\ldots,y_n)$.
    Given the numbers $$N_\ell(x_1, x_2, \ldots, x_n) := \sum\limits_{\substack{j_1,j_2,\dots, j_n \geq 0 \\ j_1+j_2+\dots+j_n= m}} (x_1^{j_1}x_2^{j_2}\cdots x_n^{j_n})^\ell \cdot  z_{j_1,j_2,\dots,j_n}$$ for $\ell = 1,2,\dots,  \binom{m+n-1}{n-1}$, we can compute $N_1(y_1, y_2, \ldots, y_n)$ in time polynomial in $m$. (In this time complexity statement, $n$, $x$ and $y$ are considered fixed.)
\end{lemma}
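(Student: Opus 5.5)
The statement to prove is the conformal lattice interpolation lemma (\cref{lm:conformal lattice interpolation}). I would prove it by grouping the unknowns by monomial value and then reducing to a Vandermonde system, followed by a lattice argument.

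\textbf{Grouping and recovering the group sums.} Let $\mathcal J=\{(j_1,\ldots,j_n)\in\Z_{\ge0}^n:\sum j_i=m\}$; note $|\mathcal J|=\binom{m+n-1}{n-1}$ is polynomial in $m$. For $j\in\mathcal J$ write $x^j=\prod_i x_i^{j_i}$. Partition $\mathcal J$ into classes according to the value of $x^j$, and for each distinct value $\xi$ set $S_\xi=\sum_{j: x^j=\xi} z_j$. Then
\[
N_\ell(x)=\sum_\xi \xi^\ell S_\xi ,\qquad \ell=1,\ldots,|\mathcal J| .
\]
The $\xi$ are distinct and nonzero, so the first (number of classes) $\le|\mathcal J|$ equations form an invertible Vandermonde system (after scaling columns by $\xi$). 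Solving it gives all $S_\xi$ in time polynomial in $m$, with arithmetic on the fixed numbers $x_i$ handled in the bit model as in the paper's footnote.

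\textbf{Each class carries a single $y$-value.} Next I would show that within one class the value $y^j$ is constant. Suppose $x^j=x^{j'}$ with $j,j'\in\mathcal J$. Then $j-j'\in\Z^n$ has coordinate sum $m-m=0$, and $\prod_i x_i^{j_i-j'_i}=1$. Hence $j-j'\in\mathfrak L(x)\subseteq\mathfrak L(y)$, which gives $y^j=y^{j'}$. Write $\eta(\xi)$ for this common value. Consequently
\[
N_1(y)=\sum_{j} y^j z_j=\sum_\xi \eta(\xi) S_\xi ,
\]
and the right-hand side can be computed once the classes and the $\eta(\xi)$ are known.

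\textbf{Making the class structure effective.} Finally, deciding whether $x^j=x^{j'}$ must be done efficiently. Since $x$ is fixed, the lattice $\Lambda=\{v\in\Z^n:\prod_i x_i^{v_i}=1\}$ is a fixed finitely generated group, so a basis of it can be hardwired. Membership of $j-j'$ in $\Lambda$ is then an integer linear-algebra test on vectors of size $O(\log m)$ bits. Equivalently, $|x_i|$ and the signs determine the value, and the classes are cosets of $\Lambda$.

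The main obstacle is this effectiveness and bit-complexity point: handling possibly non-algebraic reals and sign issues, since $x_i$ may be negative so that $-1$ factors enter $\Lambda$. Both are resolved by treating $\Lambda$ and its basis as fixed data depending only on $x$. With that in place, the Vandermonde step and the final recombination are routine.
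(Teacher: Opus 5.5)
Your proof is correct. You group the $z_j$ by the value of $x^j$ and recover the class sums from the column-scaled Vandermonde system; you then note that $x^j=x^{j'}$ with $\sum_i j_i=\sum_i j'_i=m$ puts $j-j'$ in $\mathfrak{L}(x)\subseteq\mathfrak{L}(y)$, so $y^j$ is constant on each class, and you make the classes effective with a fixed lattice basis. The paper does not prove this lemma but imports it from \cite{CaiFST26}; your argument is the standard proof of that result, and it uses the same Vandermonde-over-distinct-values idea the paper uses in \cref{lem: interpolation extract principle submatrix} and \cref{lem: realize TA}.
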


\begin{lemma}\label{lem: realize unary by conformal lattice interpolation}
    Suppose ${x}=(x_1,x_1,\ldots, x_n)\in \R_{\neq 0}^n$ and $y=(y_1,y_1,\ldots, y_n)\in \R_{\neq 0}^n$ are two unary signatures.
    If ${\mathfrak{L}}(x_1,x_2\ldots,x_n) \subseteq {\mathfrak{L}}(y_1,y_2\ldots,y_n)$ and ${x}$ is realizable, then ${y}$ is also realizable.
\end{lemma}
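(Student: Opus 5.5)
The plan is to use the conformal lattice interpolation (\cref{lm:conformal lattice interpolation}) to simulate each occurrence of the unary $y$ by many copies of the unary $x$. Fix an arbitrary signature set $\mathcal{G}$ containing $M$. We reduce $\PlHolant(\mathcal{EQ}\mid y,\mathcal{G})$ to $\PlHolant(\mathcal{EQ}\mid x,\mathcal{G})$. Since $x$ is realizable, the latter reduces to $\PlHolant(\mathcal{EQ}\mid \mathcal{G})$, and chaining the reductions as in \cref{cor:thickening} shows that $y$ is realizable.

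Let $\Omega$ be a planar signature grid of $\PlHolant(\mathcal{EQ}\mid y,\mathcal{G})$, and let $m$ be the number of occurrences of $y$. Each occurrence of $y$ hangs off a vertex labeled by an {\sc Equality}, so the partition function groups as
\[
\Holant_\Omega=\sum_{\substack{j_1,\dots,j_n\ge 0\\ j_1+\cdots+j_n=m}} (y_1^{j_1}\cdots y_n^{j_n})\, z_{j_1,\dots,j_n}.
\]
Here $z_{j_1,\dots,j_n}$ is the total weight, over $\mathcal{EQ}$ and $\mathcal{G}$ only, of those edge assignments in which exactly $j_i$ of the $m$ unary positions take color $i$. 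This expression is exactly $N_1(y_1,\dots,y_n)$. For $\ell=1,\dots,\binom{m+n-1}{n-1}$, we form $\Omega_\ell$ by replacing each occurrence of $y$ with $\ell$ copies of $x$ attached to the same {\sc Equality} vertex, which just enlarges its arity. This preserves planarity and yields $\Holant_{\Omega_\ell}=N_\ell(x_1,\dots,x_n)$ with the same coefficients $z_{j}$. There are polynomially many such $\ell$ (since $n$ is a constant), and each $\Omega_\ell$ has polynomial size, so all of these values are obtainable from the oracle.

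Because $\mathfrak{L}(x)\subseteq\mathfrak{L}(y)$, \cref{lm:conformal lattice interpolation} computes $N_1(y)=\Holant_\Omega$ from these values in time polynomial in $m$, which completes the reduction. The main thing to verify carefully is that the coefficients $z_{j}$ are truly common to all the $\Omega_\ell$ and depend only on the multiset of colors at the unary positions. This holds because the unaries enter the product as the factor $\prod_k u_{\sigma(e_k)}$, so the grouping by the color counts $(j_1,\dots,j_n)$ is identical whether $u=y$ or $u=x^{\ell}$. Everything else is bookkeeping.
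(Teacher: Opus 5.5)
Your proposal is correct and is essentially the paper's proof. Both write the partition function as $N_1(y_1,\ldots,y_n)$, grouped by the color counts at the unary positions, then replace each occurrence of $y$ by $\ell$ copies of $x$ to get $N_\ell(x_1,\ldots,x_n)$ with the same coefficients, and finally invoke conformal lattice interpolation. The only difference is that you carry the auxiliary signature set $\mathcal{G}$ explicitly, whereas the paper writes the reduction as $\PlHolant(\mathcal{EQ}\mid M,y)\le_T\PlHolant(\mathcal{EQ}\mid M,x)$ and suppresses $\mathcal{G}$, as it says it will do in that appendix.
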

\begin{proof}
    We prove $\PlHolant(\mathcal{EQ}\mid M, {y})\le_T \PlHolant(\mathcal{EQ}\mid M, {x}).$
    Consider a planar graph $G=(V,E)$ as an instance of the problem $\PlHolant(\mathcal{EQ} \mid M, y)$, where each vertex in $V$ is labeled by an {\sc Equality} signature, each edge is assigned the binary signature $M$, and a multiset $S \subseteq V$ is attached to the unary signature $y$.
The partition function of the instance is 
\begin{equation}\label{eq:N1y}
    \begin{aligned}
    N_1(y_1,y_2,\ldots, y_n)
    =&\sum_{\sigma:V\to [n]}\prod_{(u,v)\in E}M_{\sigma(u)\sigma(v)}\prod_{v\in S}y_{\sigma(v)}\\
    =& \sum_{\substack{j_1,j_2,\ldots, j_n\ge 0\\j_1+j_2+\cdots+j_n=|S|}}y_1^{j_1}y_2^{j_2}\cdots y_n^{j_n}\cdot z_{j_1,j_2,\ldots, j_n}.
\end{aligned}
\end{equation}
Here the coefficients $z_{j_1,j_2,\ldots, j_n}$ only depend on $G,S$ and $M$, but not depend on $y$. Indeed, $z_{j_1,j_2,\ldots, j_n}$ is the sum of the product values
$\prod_{(u,v)\in E} M_{\sigma(u)\sigma(v)}$ where the 
assignment $\sigma$ assigns values to the variables $V$ in such a way that
exactly $j_{i}$ many unaries (counting multiplicity) are attached
to variables assigned value $i \in [n]$.
Now we replace every occurrence of $y$ by $\ell$ copies of $x$, and obtain an instance of $\PlHolant(\mathcal{EQ}\mid M,x)$ with partition function
\begin{equation}\label{eq:Nld}
    \begin{aligned}
    N_\ell(x_1,x_2,\ldots, x_n)
    =&\sum_{\sigma:V\to [n]}\prod_{(u,v)\in E}M_{\sigma(u)\sigma(v)}\prod_{v\in S}x_{\sigma(v)}^{\ell}\\
    =& \sum_{\substack{j_1,j_2,\ldots, j_n\ge 0\\j_1+j_2+\cdots+j_n=|S|}}(x_1^{j_1}x_2^{j_2}\cdots x_n^{j_n})^{\ell}\cdot z_{j_1,j_2,\ldots, j_n}.
\end{aligned}
\end{equation}
The coefficients $z_{j_1,j_2,\ldots, j_n}$ in \eqref{eq:Nld} are the same as in \eqref{eq:N1y}.

By \cref{lm:conformal lattice interpolation}, we may query $N_{\ell}(d_1,\ldots, d_n)$ for $\ell=1,2,\ldots, \binom{|S|+n-1}{n-1}$ and recover $N_1(y_1,\ldots, y_n)$ in polynomial time in $|S|$.
Note that $n$ is a constant. 
So this takes polynomial time in the input size.
Therefore, $y$ is realizable.
\end{proof}

\begin{lemma}\label{lem: support interpolation}
    Let $M\in \Sym{n}{}{\R_{\ge 0}}.$
    Let $K\in \Sym{n}{}{\R_{\ge 0}}$ be the matrix obtained from $M$ by replacing every nonzero entry in $M$ by $1$.
    Then $\PlGH(K)\le_T\PlGH(M).$
\end{lemma}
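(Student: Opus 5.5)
We prove the reduction directly by interpolation on the values of the nonzero entries. Let $a_1,\ldots,a_s$ be the distinct nonzero entry values of $M$; they are positive reals. For an input planar graph $G=(V,E)$, group the assignments $\sigma:V\to[n]$ that are supported on nonzero entries (every edge mapped to a positive entry of $M$) according to the exponent vector $(j_1,\ldots,j_s)$, where $j_k$ is the number of edges $uv$ with $M_{\sigma(u)\sigma(v)}=a_k$. Let $z_{j_1,\ldots,j_s}$ be the number of such $\sigma$. Assignments touching a zero entry contribute nothing to $Z_M$. Hence
\[
Z_M(G)=\sum_{j_1+\cdots+j_s=|E|} a_1^{j_1}\cdots a_s^{j_s}\, z_{j_1,\ldots,j_s},
\qquad
Z_K(G)=\sum_{j_1+\cdots+j_s=|E|} z_{j_1,\ldots,j_s}.
\]

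The plan is to obtain, from the $\PlGH(M)$ oracle, the values $N_\ell=\sum (a_1^{j_1}\cdots a_s^{j_s})^\ell z_{j}$ for $\ell=1,\ldots,\binom{|E|+s-1}{s-1}$. To do this we query $Z_M(G^{(\ell)})$, where $G^{(\ell)}$ replaces each edge of $G$ by $\ell$ parallel edges (thickening). This preserves planarity, and $Z_M(G^{(\ell)})=Z_{M^{\circ \ell}}(G)=N_\ell$, since $0^\ell=0$ still kills the unsupported assignments. We then apply \cref{lm:conformal lattice interpolation} with $x=(a_1,\ldots,a_s)$ and $y=(1,\ldots,1)$, with $m=|E|$. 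The lattice containment ${\mathfrak L}(a_1,\ldots,a_s)\subseteq{\mathfrak L}(1,\ldots,1)$ is immediate, because every $y_i=1$ and so ${\mathfrak L}(y)$ is the whole sum-zero sublattice of $\Z^s$. The lemma then returns $N_1(1,\ldots,1)=Z_K(G)$ in time polynomial in $|E|$, with $s\le n^2$ a constant.

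The one point needing care is that $M$ might have no nonzero entry. In that case $K=M=0$ and the reduction is trivial.

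Alternatively, one could avoid the conformal lattice lemma. Group by the product value $x=\prod a_k^{j_k}$, which takes polynomially many distinct values, and solve a Vandermonde system in the powers $x^\ell$, exactly as in \cref{lem: interpolation extract principle submatrix}. Summing all the recovered coefficients $A_x$ then gives $Z_K(G)$. I would present this Vandermonde version since it is self-contained. The main (mild) obstacle is just checking that the number of distinct product values is polynomially bounded and that they are enumerable, which follows from the count $\binom{|E|+s-1}{s-1}$.
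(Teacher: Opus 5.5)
Your proposal is correct and matches the paper's argument: thicken to obtain $Z_{M^{\circ\ell}}(G)$ for $\ell=1,\ldots,\binom{|E|+s-1}{s-1}$, then apply the conformal lattice interpolation lemma with $y=(1,\ldots,1)$, using the trivial containment $\mathfrak{L}(a_1,\ldots,a_s)\subseteq\mathfrak{L}(1,\ldots,1)$. Your self-contained Vandermonde variant also works: group by the distinct positive product values, solve the nonsingular Vandermonde system, and sum the recovered $A_x$. It is that lemma specialized to $y=\mathbf{1}$, and it is the same device the paper uses in \cref{lem: interpolation extract principle submatrix} and \cref{lem: realize TA}.
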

\begin{proof}
    The case $M=0$ is trivial.
    Now suppose $M$ contains nonzero entries.
    Let \{$x_1,\ldots, x_d$\} be all distinct nonzero entries of $M$.
    Consider a planar graph $G=(V,E)$.
    By thickening \cref{cor:thickening}, we have oracle access to
    \begin{equation}\label{eq: support interpolation}
    \begin{aligned}
        Z_{M^{\circ \ell}}(G)
        =&\sum_{\sigma:V\to [n]}\prod_{(u,v)\in E}M_{\sigma(u)\sigma(v)}^{\ell }\\
        =&\sum_{\substack{j_1,j_2,\ldots, j_d\ge 0\\ j_1+j_2+\cdots +j_d=|E|}}(x_{1}^{j_1}x_2^{j_2}\cdots x_d^{j_d})^{\ell}\cdot z_{j_1,j_2,\ldots, j_d}
    \end{aligned} 
    \end{equation}
    for every $\ell \in \Zp.$
    Here, the coefficient $z_{j_1,j_2, \ldots, j_d}$ is the number of $\sigma:V\to [n]$ such that for every $i \in [d]$, exactly $j_i$ elements in $(M_{\sigma(u)\sigma(v)})_{(u,v)\in E}$ are equal to $x_i$.
    On the other hand,
    \begin{equation*}
    \begin{aligned}
        Z_{K}(G)
        =&\sum_{\sigma:V\to [n]}\prod_{(u,v)\in E}K_{\sigma(u)\sigma(v)}\\
        =&\sum_{\substack{j_1,j_2,\ldots, j_d\ge 0\\ j_1+j_2+\cdots +j_d=|E|}}(1^{j_1}1^{j_2}\cdots 1^{j_d})^{\ell}\cdot z_{j_1,j_2,\ldots, j_d}.
    \end{aligned} 
    \end{equation*}
    Obviously, $\mathfrak{L}(x_1,x_2,\ldots, x_d)\subseteq \mathfrak{L}(1,1,\ldots, 1)$.
    By \cref{lm:conformal lattice interpolation}, we may query $Z_{M^{\circ \ell}}(G)$ for $\ell = 1,2,\dots, \binom{|E|+d-1}{d-1}$ and recover $Z_K(G).$
    This takes polynomial time in $|E|.$
    Therefore, $\PlGH(K)\le_T \PlGH(M).$
\end{proof}

\subsubsection{Matrix Analysis}

\begin{theorem}[Schur product theorem; {\cite[Theorem 7.5.3]{horn-johnson}}]\label{thm:schur-product}
For every integer $n\geq1$ and $B,C\in\Sym{n}{pd}{\R}$, the entrywise product $B\circ C$ is still positive definite.
\end{theorem}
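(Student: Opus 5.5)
We argue by diagonalizing both matrices simultaneously through a square root, which turns the Hadamard product into a nonnegative combination of positive semidefinite rank-one Hadamard products. Fix $B,C\in\Sym{n}{pd}{\R}$. By the real spectral theorem (\cref{thm:spectral-theorem}), write $C=\sum_{k=1}^n \mu_k w_k w_k^{\tt T}$ with $\mu_k>0$ and $\{w_k\}$ an orthonormal basis of $\R^n$. Since $\circ$ is bilinear,
\[
B\circ C=\sum_{k=1}^n \mu_k\, B\circ (w_kw_k^{\tt T})=\sum_{k=1}^n \mu_k\, D_k B D_k,
\]
where $D_k=\mathrm{diag}(w_{k1},\ldots,w_{kn})$. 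This identity is the entrywise check $(D_kBD_k)_{ij}=w_{ki}B_{ij}w_{kj}$. The product $B\circ C$ is clearly symmetric, so only definiteness remains.

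Next we test against an arbitrary nonzero $v\in\R^n$. We have
\[
v^{\tt T}(B\circ C)v=\sum_{k=1}^n \mu_k\,(D_kv)^{\tt T}B(D_kv).
\]
Each summand is at least $0$ because $B$ is positive definite and $\mu_k>0$. Hence the quadratic form is nonnegative, and it is strictly positive as soon as $D_kv\neq0$ for some $k$.

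The main (and only real) obstacle is to rule out $D_kv=0$ for all $k$. Suppose $D_kv=0$ for every $k$. Since $v\neq0$, pick a coordinate $i$ with $v_i\neq0$. Then $w_{ki}v_i=0$ forces $w_{ki}=0$ for every $k$. This says that the $i$-th coordinate of every orthonormal basis vector vanishes, so $\mathbf e_i$ is orthogonal to all $w_k$, i.e.\ $\mathbf e_i=0$, a contradiction. (Equivalently, $C_{ii}=\sum_k\mu_kw_{ki}^2>0$ shows some $w_{ki}\neq0$.) Therefore $v^{\tt T}(B\circ C)v>0$ for all $v\neq0$, and $B\circ C$ is positive definite.
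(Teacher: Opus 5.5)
Your argument is correct and complete. The identity $B\circ C=\sum_k \mu_k D_kBD_k$ gives nonnegativity of the quadratic form, and the fact that $\{w_k\}$ spans $\R^n$ (equivalently $C_{ii}>0$) rules out $D_kv=0$ for every $k$. The paper gives no proof of this statement; it only cites Horn--Johnson. Your spectral-decomposition argument is the standard textbook proof, so there is nothing to add.
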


\begin{definition}[Matrix geometric mean; {\cite[Section 4.1]{Bhatia2007}}]\label{def:geometric-mean}
For integer $n\geq1$ and $X,Y\in \Sym{n}{pd}{\R}$, define
\(
X\#Y=X^{1/2}\bigl(X^{-1/2}YX^{-1/2}\bigr)^{1/2}X^{1/2}.
\)
\end{definition}

\begin{lemma}\label{lem:geometric-mean-equation}
For every integer $n\geq1$ and $X,Y\in \Sym{n}{pd}{\R}$, $X\#Y$ is the unique solution $G\in\Sym{n}{pd}{\R}$ of $GX^{-1}G=Y$.
\end{lemma}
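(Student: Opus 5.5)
We need to prove that for $X,Y$ positive definite, $G = X\#Y$ is the unique positive definite solution of $GX^{-1}G = Y$. The plan is to reduce to the case $X = I$ by the congruence $G \mapsto X^{-1/2} G X^{-1/2}$, and there to use uniqueness of the positive definite square root.

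\emph{Existence.} Put $W = X^{-1/2}YX^{-1/2}$. It is symmetric positive definite, since $v^{\tt T}Wv = (X^{-1/2}v)^{\tt T}Y(X^{-1/2}v) > 0$ for $v \neq 0$. By \cref{thm:spectral-theorem}, $W$ has a unique positive definite square root $W^{1/2}$. Then $G = X^{1/2}W^{1/2}X^{1/2}$ is symmetric, and it is positive definite as a congruence of a positive definite matrix. Direct multiplication gives
\[
GX^{-1}G = X^{1/2}W^{1/2}X^{1/2}X^{-1}X^{1/2}W^{1/2}X^{1/2} = X^{1/2}WX^{1/2} = Y.
\]

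\emph{Uniqueness.} Suppose $G' \in \Sym{n}{pd}{\R}$ satisfies $G'X^{-1}G' = Y$, and set $H = X^{-1/2}G'X^{-1/2}$. Then $H$ is symmetric positive definite by the same congruence argument. Moreover
\[
H^2 = X^{-1/2}G'X^{-1}G'X^{-1/2} = X^{-1/2}YX^{-1/2} = W.
\]
So $H$ is a positive definite square root of $W$.

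The main (mild) obstacle is justifying that the positive definite square root is unique. Diagonalize $H$ via \cref{thm:spectral-theorem} with distinct positive eigenvalues $\mu_j$ and eigenspaces $E_{\mu_j}$. Then $W = H^2$ acts on $E_{\mu_j}$ as $\mu_j^2$. Because $t \mapsto t^2$ is injective on $(0,\infty)$, the $\mu_j^2$ are distinct, so the eigenspaces of $W$ are exactly the $E_{\mu_j}$. Hence $H$ is determined by $W$: it acts as $\sqrt{\lambda}$ on the $\lambda$-eigenspace of $W$.

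Therefore $H = W^{1/2}$, and so $G' = X^{1/2}W^{1/2}X^{1/2} = X\#Y$.
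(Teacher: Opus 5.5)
Your proof is correct and takes the same route as the paper: conjugating by $X^{-1/2}$ turns $GX^{-1}G=Y$ into $(X^{-1/2}GX^{-1/2})^2=X^{-1/2}YX^{-1/2}$, and uniqueness of the positive definite square root then pins down $G$. The only difference is that you also write out the existence check and a proof that the positive definite square root is unique, both of which the paper takes for granted.
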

\begin{proof}
Multiplying the equation $GX^{-1}G=Y$ on both sides by $X^{-1/2}$ gives
\[
\bigl(X^{-1/2}GX^{-1/2}\bigr)^2=X^{-1/2}YX^{-1/2}.
\]
The RHS is positive definite and has a unique positive-definite square root. Solving for $G$ gives exactly \cref{def:geometric-mean}.
\end{proof}

\subsection{Diagonal Levels and Their Data}\label{subsec: diagonal levels and their data}

\begin{lemma}\label{lem:unary-profile}
For every minimal counterexample $M$ and every positive realizable unary $u\in\mathbb R_{>0}^n$, there exist $c>0$ and $\gamma\in\mathbb R$ such that $u_i=c\,M_{ii}^\gamma$ for every $i\in[n]$.
\end{lemma}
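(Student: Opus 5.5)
We will prove the statement by contradiction via \cref{lem: interpolation extract principle submatrix}, combined with minimality of the counterexample $M$. Let $u\in\mathbb R_{>0}^n$ be a positive realizable unary. The claim $u_i=cM_{ii}^\gamma$ is equivalent to saying that the vector $(\log u_i)_i$ is an affine function of $(\log M_{ii})_i$. Equivalently, whenever a weight $w_i=u_i^{\alpha}M_{ii}^{\beta}$ (for reals $\alpha,\beta$) attains its maximum, it must do so only on a set $C$ on which $M_{ii}$ is constant. Our plan is to show that otherwise we can extract a smaller counterexample.

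\textbf{Realizing the weights.} First we will realize unaries of the form $w=\lambda\, u^{\circ\alpha}\circ d^{\circ\beta}$ for all real $\alpha,\beta$, where $d=(M_{11},\ldots,M_{nn})$. The diagonal $d$ is realizable by \cref{lem:realize_diag_rowsum_colsum} applied to $K=M$. Real powers of $u$ and $d$ are realizable by \cref{lem:thickening_interpolation}, and entrywise products by \cref{lem:thickening}. Scaling by $\lambda>0$ is harmless. Choosing $\lambda$ so that $\max_i w_i=1$, \cref{lem: interpolation extract principle submatrix} gives
\[
\PlGH(M_C)\le_T\PlHolant(\mathcal{EQ}\mid M,w)\le_T\PlGH(M),
\]
where $C=\arg\max_i w_i$. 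Since $M_C$ is a principal submatrix of a positive definite, entrywise positive matrix, $M_C\in\Sym{|C|}{pd}{\Rp}$. If $1<|C|<n$ and $M_C$ has nonconstant diagonal, then by minimality of $n$ the problem $\PlGH(M_C)$ is \#P-hard, hence so is $\PlGH(M)$, contradicting the choice of $M$. (If $|C|=1$, no contradiction arises and that choice of $w$ gives no information.)

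\textbf{Geometric reduction.} Consider the points $p_i=(\log u_i,\log M_{ii})\in\mathbb R^2$. Maximizing $w_i$ amounts to maximizing a linear functional $\alpha x+\beta y$ over these points, so the achievable sets $C$ are exactly the sets of indices lying on faces of the convex hull of $\{p_i\}$. Suppose the $p_i$ do not all lie on a line of the form $x=\gamma y+\text{const}$, that is, $\log u$ is not affine in $\log d$. We need to produce a face, taken with all its index multiplicities, that contains two indices with distinct $M_{ii}$ but is not the whole set $[n]$. Since the diagonal of $M$ is nonconstant, the $y$-coordinates of the points are not all equal.

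Two cases arise. If the $p_i$ are not collinear, the hull is a $2$-dimensional polygon with a nonhorizontal edge. The indices on that edge form a proper subset $C$ (the hull is $2$-dimensional), and $|C|\ge2$ with distinct diagonals, because the edge is not horizontal. If instead the $p_i$ are collinear on a nonhorizontal line that is not of the required form, that line must be vertical: $\log u$ is constant while the diagonal varies. But a constant $u$ is already of the form $cM_{ii}^0$, so the claim holds with $\gamma=0$, and this case needs no contradiction. A horizontal line is excluded because the diagonal is nonconstant.

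\textbf{Main obstacle.} The delicate point is the polygon case. An edge might contain only points whose indices share the same diagonal value, with points of different diagonals separated in other ways; and a vertex may carry several coincident indices. We will handle this by choosing a nonhorizontal edge, which exists because the polygon is not contained in a horizontal strip of zero width. Its two endpoints have different $y$-coordinates, so $M_{ii}$ differs between them, while any polygon vertex off the edge guarantees $|C|<n$.
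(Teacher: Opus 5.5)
Your argument is correct and is essentially the paper's. The paper picks one explicit supporting line: it passes through the index with minimal $M_{ii}$ and, among those, maximal $u_i$, and its slope $\gamma$ is the maximal slope to the other points. In your picture this is a nonhorizontal hull edge.

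The only structural difference is that the paper concludes $C=[n]$ directly from minimality, rather than arguing by contradiction with a collinear/non-collinear case split. Your collinear case is slightly muddled: every nonhorizontal line, including a vertical one, already has the form $x=\gamma y+\mathrm{const}$, so that case is immediate. This does not affect correctness.
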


\begin{proof}
    Let $m=\min_{i\in[n]}M_{ii}$ and $a=\max_{i\in[n],M_{ii}=m}u_i.$
    Since $M$ has nonconstant diagonal, there exists $i\in[n]$ such that $M_{ii}>m.$
    Define 
    \[\gamma=\max_{i\in[n],\,M_{ii}>m}\frac{\log(u_i/a)}{\log(M_{ii}/m)}\quad \text{and}\quad w_{i}=\frac{u_i}{a}\left(\frac{m}{M_{ii}}\right)^\gamma,\forall i\in [n].\]
    By \cref{lem:realize_diag_rowsum_colsum,lem:thickening_interpolation,lem:thickening}, $w=a^{-1}m^\gamma \cdot u\circ(M_{ii}^{-\gamma})_{i\in [n]}$ is a realizable signature.
    
    We claim that $0< w_i\le 1$ for every $i\in [n].$
    It is clear that $w_i>0$ for every $i\in [n]$.
    If $M_{ii}=m$, then $u_i\le a$ by the definition of $a$, and thus $w_i\le 1.$
    If $M_{ii}>m$, then $\log(u_i/a)\le \gamma\log(M_{ii}/m)$ by the definition of $\gamma$, and thus $w_i\le 1$.

    Next we show there are two distinct indices $i,j\in [n]$ such that $w_i=w_j=1$ and $M_{ii}\neq M_{jj}.$
    Let $i\in [n]$ satisfies $M_{ii}=m$ and $u_i=a$, then $w_i=1.$
    Let $j\in [n]$ be the index attaining the maximum of $\gamma$, then $w_j=1$.
    Also, $M_{jj}>m=M_{ii}$.
    
    Let $C=\{k\in [n]\mid w_k=1\}$.
    Then $i,j\in C.$
    By \cref{lem: interpolation extract principle submatrix}, $\PlGH(M_C)\le_T \PlHolant(\mathcal{EQ}\mid M,w)\le_T \PlGH(M),$
    where $M_C$ is the principle submatrix of $M$ indexed by $C$. $M_C$ is also entrywise positive and positive definite, and has nonconstant diagonal.
    Since $M$ is minimal counterexample, $M_C=M$ and $C=[n].$
    So $w_i=1$ for every $i\in [n].$
    Therefore, $u_i=am^{-\gamma}M_{ii}^\gamma =cM_{ii}^\gamma$ for every $i\in [n]$, where $c=am^{-\gamma}>0.$
\end{proof}

The following lemma says that
the distinct diagonal entries of a minimal counterexample $M$ lie in a geometric progression, possibly with some terms missing.

\begin{lemma}\label{lem:geometric-levels}
For every minimal counterexample $M$, there exist $\kappa>0$, $\rho\neq 1$, an integer $D\geq1$, and numbers $(k_1,k_2,\ldots, k_n)\in \Z_{\ge0}^n$ such that
\begin{enumerate}
    \item $M_{ii}=\kappa \rho^{k_i}$ for every $i\in [n]$;
    \item $\min_{i\in [n]}k_i=0$, $\max_{i\in [n]}k_i=D$ and $\gcd\{k_i\mid i\in [n]\}=1$.
\end{enumerate}
\end{lemma}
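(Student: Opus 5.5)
The plan is to feed suitable realizable unaries into \cref{lem:unary-profile}. The key point is that the exponents $\log(M_{ii})$ must span a rank-one $\mathbb{Q}$-module up to a common offset, and this is extracted from the conformal lattice interpolation.

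First, I fix the minimum $m=\min_i M_{ii}$ and set $\theta_i=\log(M_{ii}/m)\ge 0$. At least one $\theta_i$ is positive, since the diagonal is nonconstant. I want to show that all $\theta_i$ are rational multiples of one another, that is, $\dim_{\mathbb Q}\operatorname{Span}_{\mathbb Q}\{\theta_i\}=1$.

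Suppose instead that this span has dimension at least $2$. Then I choose a $\mathbb{Q}$-linear functional $\psi$ on $\operatorname{Span}_{\mathbb Q}\{\theta_i\}$ that is not proportional to the inclusion map on $\{\theta_i\}$; this is possible because a space of dimension $\ge 2$ has more than one projective class of functionals. I then define $y_i=\exp(\psi(\theta_i))>0$.

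Next I check the hypothesis of \cref{lem: realize unary by conformal lattice interpolation} with $x=(M_{ii}/m)_i$, which is realizable by \cref{lem:realize_diag_rowsum_colsum} and scaling. Suppose $\sum_i x_i'\theta_i=0$ with integer coefficients $x_i'$ summing to $0$. Applying $\mathbb{Q}$-linearity gives $\sum_i x_i'\psi(\theta_i)=0$, so $\mathfrak{L}(x)\subseteq\mathfrak{L}(y)$. Hence $y$ is realizable.

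By \cref{lem:unary-profile}, $y_i=c\,M_{ii}^\gamma$, so $\psi(\theta_i)=\log c'+\gamma\theta_i$ for all $i$. Evaluating at an index with $\theta_i=0$ forces $\log c'=\psi(0)=0$, so $\psi(\theta_i)=\gamma\theta_i$ on every $\theta_i$. Since the $\theta_i$ span the space, $\psi=\gamma\cdot\mathrm{id}$, which is proportional to the inclusion and contradicts the choice of $\psi$. This step is the main subtlety: the construction of $\psi$ must be arranged so that proportionality on the spanning set is exactly what fails. I would make this concrete by taking two $\mathbb Q$-independent $\theta$'s and letting $\psi$ kill one but not the other.

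Given one-dimensionality, write $\theta_i=q_i\theta$ with $q_i\in\mathbb{Q}_{\ge0}$ and clear denominators to make all $q_i$ nonnegative integers. Dividing by their gcd gives integers $k_i$ with $\gcd=1$ and $\min k_i=0$. Then $M_{ii}=m\rho^{k_i}$ with $\rho=e^{\theta'}\neq1$. I set $\kappa=m$ and $D=\max k_i\ge1$, the latter holding because the diagonal is nonconstant. If the rescaling yields a negative $\theta'$, I flip the sign, since the minimum-zero normalization already fixes the orientation.
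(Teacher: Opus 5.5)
Your proof is correct and follows the paper's route. In both, a unary $y_i=\exp(\psi(\theta_i))$ is built from a $\mathbb{Q}$-linear functional $\psi$ on the $\mathbb{Q}$-span of the diagonal logarithms, shown realizable via the lattice inclusion and \cref{lem: realize unary by conformal lattice interpolation}, and then \cref{lem:unary-profile} forces $\psi$ to be a real multiple of the identity on the $\theta_i$. The paper applies this directly to a nonzero integer-valued coordinate functional, which gives $c_{ij}=\gamma b_i$ and integer exponents at once; you instead argue by contradiction for $\mathbb{Q}$-dimension one and then clear denominators, and basing at the minimum spares you the paper's final shift to nonnegative exponents.
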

\begin{proof}
For $i\in [n]$, let $d_i=M_{ii}$ and  $b_i=\log(d_i/d_1)$.
Choose a basis $\{t_1,\ldots, t_r\}$ for $\mathrm{span}_{\Q}\{b_1,\ldots, b_n\}$.
After clearing denominators, we can write $b_i= \sum_{j=1}^r c_{ij}t_j,\,c_{ij}\in \Z$ for every $i\in [n].$
Since $b_1=0$, $c_{1j}=0$ for all $j\in [r].$
Since $M$ has nonconstant diagonal, there exists $j\in [r]$ with $c_{ij}\neq 0$ for at least one \(i\in [n]\).
Fix such an index $j\in [r]$, and define $y_i=e^{c_{ij}} \,(i\in [n]).$
Then $y=(y_1,\ldots, y_n)$ is nonconstant.

Next we prove that $y$ is a realizable unary signature.
First note that $d=(d_1,\ldots, d_n)$ is realizable by \cref{lem:realize_diag_rowsum_colsum}.
We want to apply \cref{lem: realize unary by conformal lattice interpolation}, so we need to check $\mathfrak{L}(d_1,d_2,\ldots, d_n)\subseteq \mathfrak{L}(y_1,y_2,\ldots, y_n)$.
Suppose $(x_1,x_2,\ldots, x_n)\in \mathfrak{L}(d_1,d_2,\ldots, d_n)$.
We have 
\[
\begin{aligned}
    &d_1^{x_1}d_2^{x_2}\cdots d_n^{x_n}=1 \text{ and } x_1+x_2+\cdots x_n=0\\
    \implies &\left(\frac{d_2}{d_1}\right)^{x_2}\cdots \left(\frac{d_n}{d_1}\right)^{x_n}=1\\
    \implies & x_2b_2+\cdots +x_nb_n=0  && \text{by } b_i=\log\left(\frac{d_i}{d_1}\right)\\
    \implies & \sum_{k=1}^r\left(\sum_{i=2}^nx_ic_{ik}\right) t_k=0 && \text{by } b_i=\sum_{k=1}^r c_{ik} t_k\\
    \implies & \sum_{i=2}^n x_i c_{ik}=0 ,\, \forall k\in [r] && \text{by rational independence of } \{t_k\}_{k\in [r] }
\end{aligned}
\]
In particular, $\sum_{i=2}^n x_i c_{ij}=0$ for the fixed index $j\in [r]$.
Since $y_1=e^{c_{1j}}=1$,
we have $\prod_{i=1}^n y_i^{x_i}=\prod_{i=2}^n (e^{c_{ij}})^{x_i}=1$.
So $(x_1,\ldots,x_n)\in \mathfrak{L}(y_1,\ldots, y_n)$.
Thus by \cref{lem: realize unary by conformal lattice interpolation}, $y$ is realizable.

By \cref{lem:unary-profile}, there exist $c>0$ and $\gamma\in \R$ such that $e^{c_{ij}}=y_i=c d_i^\gamma$ for every $i\in [n].$
By $1=y_1=cd_1^\gamma$, we have $e^{c_{ij}}=y_i=(\frac{d_i}{d_1})^\gamma$ for every $i\in [n].$
This implies $d_i=d_1(e^{1/\gamma})^{c_{ij}}$, $\forall i\in [n].$
Let $c_*=\min_{i\in [n]}\{c_{ij}\}$ and absorb $(e^{1/\gamma})^{c_*}$ into the constant $d_1$, we may assume $c_{ij}\in \mathbb{Z}_{\ge 0}$ for every $i\in [n]$ without loss of generality.
Let $\kappa =d_1>0$, $g=\gcd\{c_{_{ij}}\mid i\in [n]\}\in \mathbb{Z}_{>0}$, $\rho=(e^{1/\gamma})^g\neq 1$ and $k_i=c_{ij}/g\in \mathbb{Z}_{\ge 0}$ for every $i\in [n]$, we have $M_{ii}=d_i=\kappa \rho^{k_i}$ and $\gcd\{k_i\mid i\in [n]\}=1.$
Letting $D=\max_{i\in [n]}k_i$ finishes the proof.
\end{proof}

\begin{definition}[Diagonal levels and their data]\label{def:diagonal-levels}
For a minimal counterexample $M$, fix parameters $\kappa,\rho,D$ and $(k_1,k_2,\ldots, k_n)$ satisfying \cref{lem:geometric-levels}. 
For every integer $a\in\{0,\ldots,D\}$, let
\(
n_a=|\{i\in[n]:k_i=a\}|.
\)
We call $(k_1,\ldots,k_n)$ diagonal levels and $(n_0,\ldots, n_D)$ diagonal level data.
%Thus $n_a>0$ exactly when $a\in\mathcal E$. 
%Define the parity matrix to be\(S=\operatorname{diag}\bigl((-1)^{k_1},(-1)^{k_2},\ldots, (-1)^{k_n}\bigr).\)
\end{definition}

In the remainder of this subsection, we will show that the diagonal level data $(n_a)_{0\le a\le D}$ have a near binomial form (\cref{lem:binomial-levels}).
This is proved by showing that the $i$-th row generating polynomial of $M$ has the form $U(z)^{D-k_i}V(z)^{k_i}$ where $U(z)$ and $V(z)$ are two nonproportional linear polynomials with positive coefficients (\cref{lem:level-polynomials}), then comparing the coefficients.
\cref{lem:level-row-sums} guarantees that the row generating polynomials are well-defined.
In parallel, \cref{lem:polynomial-chain} serves as a technical preparation for \cref{lem:level-polynomials}.

\begin{lemma}\label{lem:level-row-sums}
Let $M$ be a minimal counterexample of order $n$, and let $K\in\mathbb R_{>0}^{n\times n}$ be any realizable matrix. 
For every diagonal value $\delta\in\{M_{jj}\mid j\in[n]\}$ and every $i,i'\in[n]$ with $M_{ii}=M_{i'i'}$,
\[
\sum_{\substack{j\in[n]\\M_{jj}=\delta}}K_{ij}
=
\sum_{\substack{j\in[n]\\M_{jj}=\delta}}K_{i'j}.
\]
\end{lemma}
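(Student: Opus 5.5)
The plan is to realize, for each diagonal value $\delta$, the unary signature recording the row sums of $K$ restricted to the columns $j$ with $M_{jj}=\delta$. We then feed it into \cref{lem:unary-profile}. That lemma forces the unary to have the form $c\,M_{ii}^\gamma$, which depends on $i$ only through $M_{ii}$, and this is exactly the claim.

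To realize the restricted row sums, we first isolate the columns with diagonal $\delta$ by a unary weight. The diagonal unary $d=(M_{11},\ldots,M_{nn})$ is realizable by \cref{lem:realize_diag_rowsum_colsum}. By \cref{lem:thickening_interpolation}, so is every power $d^{\circ t}$ for $t\in\mathbb R$.

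Consider the realizable matrix $K\cdot\mathrm{diag}(d)^{t}$, obtained from \cref{lem:realize_diag_K_diag} with left unary $\mathbf 1=M^{\circ 0}$ row sums, or more simply by attaching $d^{\circ t}$ to the second endpoint. Its row-sum unary is realizable by \cref{lem:realize_diag_rowsum_colsum}, with entries
\[
r_i(t)=\sum_{\delta'}\delta'^{\,t}\,S_{i,\delta'}, \qquad S_{i,\delta'}:=\sum_{j:\,M_{jj}=\delta'}K_{ij}.
\]
This vector is positive, so \cref{lem:unary-profile} applies for each $t$.

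The conclusion is that for each fixed $t$ we have $r_i(t)=r_{i'}(t)$ whenever $M_{ii}=M_{i'i'}$. Thus for such $i,i'$ we get
\[
\sum_{\delta'}\delta'^{\,t}\bigl(S_{i,\delta'}-S_{i',\delta'}\bigr)=0 \quad\text{for all } t\in\mathbb R.
\]
The values $\delta'$ are distinct positive reals, so the functions $t\mapsto\delta'^{\,t}=e^{t\log\delta'}$ are exponentials with distinct rates and hence linearly independent, as in the proof of \cref{lem:row-statistic-multisets}. Therefore $S_{i,\delta'}=S_{i',\delta'}$ for every $\delta'$, which is the statement.

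The main obstacle is the rigor of realizability. Attaching a unary to a vertex must keep the instance in $\PlHolant(\mathcal{EQ}\mid\cdot)$ form, but the row-sum gadget of \cref{lem:realize_diag_rowsum_colsum} together with \cref{lem:realize_diag_K_diag} handles this directly. Positivity of the unary is immediate because $K>0$ and $d>0$.
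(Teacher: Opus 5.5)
Your proposal is correct and matches the paper's argument: you realize $Kd^{\circ t}$ by attaching the powered diagonal unary to the second variable of $K$, then apply \cref{lem:unary-profile} to conclude that these unaries agree on rows with equal diagonal, and finally separate the distinct diagonal values. The only cosmetic difference is that the paper uses the exponents $k=1,\ldots,r$ and inverts a Vandermonde system, whereas you use all real $t$ together with the linear independence of exponentials with distinct rates.
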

\begin{proof}
Let $\delta_1,\ldots,\delta_r$ be distinct diagonal values in $M$. 
Let $d=(d_1,\ldots,d_n)^{\tt T}=(M_{11},\ldots, M_{nn})^{\tt T}.$
By \cref{lem:realize_diag_rowsum_colsum,lem:thickening_interpolation}, $d^{\circ k}$ is a realizable unary signature for every $k\in \R$.
Connecting $d^{\circ k}$ and the second variable of $K$ using $=_2$, we realize the unary signature $Kd^{\circ k}$.
By \cref{lem:unary-profile}, for every $k\in[r]$, there exists $c_k>0$ and $\gamma_k\in \R$ such that $Kd^{\circ k}=c_k \cdot d^{\circ \gamma_k}.$
That is, $\sum_{j=1}^n K_{ij}d_j^k=c_kd_i^{\gamma_k}$, for every $i\in[n]$ and $k\in [r]$.
Consequently, whenever \(d_i=d_{i'}\),
\begin{equation}\label{eq:level-row-sums}
    \sum_{j=1}^n K_{ij}d_j^k=\sum_{j=1}^n K_{i'j}d_j^k\qquad (k\in[r]).
\end{equation}
For $\ell\in[r]$,
define
\(
A_\ell=
\sum_{j:d_j=\delta_\ell}K_{ij}
-
\sum_{j:d_j=\delta_\ell}K_{i'j}.
\)
Grouping \eqref{eq:level-row-sums} by diagonal value gives
\[
\sum_{\ell=1}^r\delta_\ell^k A_\ell=0
\qquad(k\in[r]).
\]
This is a full rank Vandermonde system in variables $(A_\ell)_{\ell\in [n]}$.
Therefore $A_\ell=0$ for every $\ell\in [r]$, proving the lemma.
\end{proof}

\begin{lemma}\label{lem:polynomial-chain}
Let $\mathcal{E} \subseteq \Z_{\ge 0}$ be a finite set (not a multiset)
of nonnegative
integers, with $D=\max\{e \mid e \in \mathcal{E}\} \ge 1$
and $\gcd\{e\mid e\in \mathcal{E}\}=1$.
%Let $(k_1,\ldots, k_n)\in \Z_{\ge 0}^n$ be diagonal levels of $M$ and $D=\max_{i\in [n]}k_i\ge 1$.
%Let $\mathcal{E}=\{k_1,\ldots,k_n\}.$ ($\mathcal{E}$ is a set, not a multiset.)
For every $e\in\mathcal{E}$, let $P_e(z)\in\mathbb R[z]$ have degree $D$ and positive coefficients.
Suppose $\{P_e(z)\mid z\in\mathcal{E}\}$ are not all proportional and
$P_e(z)^D=P_0(z)^{D-e}P_D(z)^e$ for every $e\in\mathcal E$.
Then there exist nonproportional linear polynomials $U(z)=u_0+u_1z$ and $V(z)=v_0+v_1z$ with $u_0,u_1,v_0,v_1>0$ such that
\[
P_e(z)=U(z)^{D-e}V(z)^e \quad(\forall z\in \R)
\qquad\text{for every }e\in\mathcal E.
\]
\end{lemma}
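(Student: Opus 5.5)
The identity $P_e^D=P_0^{D-e}P_D^e$ reduces everything to the factorizations of $P_0$ and $P_D$. I would first show that $P_0$ and $P_D$ are perfect $D$-th powers of linear polynomials, then read off $P_e$ by taking $D$-th roots.

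\textbf{Step 1: the endpoint polynomials.} Note first that $0,D\in\mathcal E$ is implicit in the statement, since $P_0,P_D$ are indexed by $\mathcal E$. Factor $P_0$ and $P_D$ over $\mathbb C$ into monic irreducibles with leading constants. Since $\gcd(\mathcal E)=1$, we can write $1=\sum_{e\in\mathcal E}c_e e$ with $c_e\in\Z$. For each $e$, the polynomial $P_e^D/P_0^D=(P_D/P_0)^e$ is an identity of rational functions. Combining these with exponents $c_e$ gives
\[
\prod_e (P_e/P_0)^{Dc_e}=P_D/P_0 .
\]
So $P_D/P_0$ is the $D$-th power of a rational function $R$.

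Next I will show that $P_0$ and $P_D$ share no root. If a root $\alpha$ had multiplicity $m_0$ in $P_0$ and $m_D$ in $P_D$, then from $P_e^D=P_0^{D-e}P_D^e$ its multiplicity in $P_e$ is $\bigl((D-e)m_0+em_D\bigr)/D$, which must be an integer for every $e\in\mathcal E$. Combined with the gcd condition, this gives $D\mid (m_D-m_0)$, which is the same information as $R$ existing. This does not by itself force $P_0$ to be a $D$-th power, so the argument has to use degrees.

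\textbf{Step 2: using degree $D$.} This is the key and hardest step. The multiplicity of $\alpha$ in $P_0$ satisfies $m_0\equiv m_D \pmod D$, and both $m_0,m_D$ lie in $[0,D]$. So for each root, either $m_0=m_D$, or $\{m_0,m_D\}=\{0,D\}$.

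If some root has $m_0=D$, then $P_0$ is the $D$-th power of a linear polynomial. Otherwise every root has $m_0=m_D$. Then $P_0$ and $P_D$ have the same roots with the same multiplicities, hence are proportional, and therefore all $P_e$ are proportional to them, contradicting the hypothesis. By the same argument, $P_D$ is also the $D$-th power of a linear polynomial.

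So $P_0=c_0(z-\alpha)^D$ and $P_D=c_D(z-\beta)^D$. Here $\alpha\neq\beta$ by non-proportionality, and $\alpha,\beta$ are real, since each is the unique root of a real polynomial. Positive coefficients force $c_0,c_D>0$ and $\alpha,\beta<0$. Set $U(z)=c_0^{1/D}(z-\alpha)$ and $V(z)=c_D^{1/D}(z-\beta)$. These have positive coefficients and are nonproportional.

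\textbf{Step 3: recovering $P_e$.} We now have $P_e^D=(U^{D-e}V^e)^D$. Both $P_e$ and $U^{D-e}V^e$ are real polynomials that are positive for $z>0$. Their ratio is a polynomial whose $D$-th power is $1$, hence a root of unity. The ratio is positive at $z>0$, so it equals $1$, giving $P_e=U^{D-e}V^e$.

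The main obstacle is Step 2, the multiplicity dichotomy. It depends on the fact that the multiplicities live in $[0,D]$, which is exactly where the degree-$D$ hypothesis enters. I would double-check the congruence derivation, in particular how the gcd condition yields $D\mid(m_D-m_0)$.
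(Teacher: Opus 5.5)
Your overall route is the paper's. You compare the multiplicity of each root in $P_0$ and $P_D$, and use B\'ezout with $\gcd(\mathcal E)=1$ to get $D\mid m_D-m_0$. The bound $\deg P_0=\deg P_D=D$ then forces $m_D-m_0\in\{-D,0,D\}$. From this you conclude that both endpoints are $D$-th powers of linear forms and take positive $D$-th roots. Working over $\mathbb C$ with roots rather than over $\mathbb R$ with monic irreducibles is harmless, and the rational function $R$ in Step 1 is never used.

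The gap is in the case split of Step 2. By your dichotomy, a root with $m_0\neq m_D$ can have either $(m_0,m_D)=(D,0)$ or $(m_0,m_D)=(0,D)$. So ``otherwise every root has $m_0=m_D$'' does not follow from ``no root has $m_0=D$'', and the symmetric ``by the same argument'' for $P_D$ has the same hole. What is missing is the cross-argument linking the two endpoints, which is the paper's key step.

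\begin{enumerate}
\item First establish that $P_0$ and $P_D$ are not proportional, as you essentially do. Proportionality would make $(P_e/P_0)^D$ constant, hence $P_e\propto P_0$ for every $e$.
\item Non-proportionality gives some root $\alpha$ with $\{m_0(\alpha),m_D(\alpha)\}=\{0,D\}$. The constraints are symmetric in $P_0,P_D$, so you may assume $m_0(\alpha)=D$ and $m_D(\alpha)=0$. Then $P_0=c_0(z-\alpha)^D$.
\item Any root $\beta$ of $P_D$ exists since $D\ge 1$, and it satisfies $\beta\neq\alpha$. Hence $m_0(\beta)=0<m_D(\beta)$, which forces $m_D(\beta)=D$ and $P_D=c_D(z-\beta)^D$.
\end{enumerate}

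With this inserted, the rest of your Step 2 and your Step 3 go through.
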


\begin{proof}
First, we claim that $P_0$ and $P_D$ are nonproportional.
Otherwise, $P_D(z)=\lambda P_0(z)$ for some $\lambda\in \R$ and every $z\in \R.$
By assumption, $P_0$ and $P_D$ have positive coefficients, so $\lambda>0.$
The equation $P_e(z)^D=P_0(z)^{D-e}P_D(z)^e$ gives $P_e(z)^D=\lambda^{e}P_0(z)^D$, for every $e\in \mathcal{E}.$
For every \(z>0\), both \(P_e(z)\) and \(P_0(z)\) are positive. 
Taking the positive \(D\)-th root gives
\(
P_e(z)=\lambda^{e/D}P_0(z)
\)
for every $e\in \mathcal{E}.$
This holds for every $z>0$, so it is a polynomial identity. 
This contradicts the assumption that $\{P_e(z)\mid e\in \mathcal{E}\}$ are not all proportional.

Let $\mathcal F$ be the finite set of monic nonconstant irreducible factors of $P_0P_D$ in $\mathbb R[z]$.
For every $f\in\mathcal F$ and $e\in\mathcal E$, let $\nu_e(f) \in \Z_{\ge 0}$ be the multiplicity of $f$ in $P_e$,
i.e., in the UFD $R[z]$, we have $f^{\nu_e(f)} \mid P_e$ but
$f^{\nu_e(f)+1} \nmid P_e$.
%(For example, if \(f^3\mid P_e\) but \(f^4\nmid P_e\), then \(\nu_e(f)=3\).)
Fix an arbitrary $f\in \mathcal{F}$,
take multiplicity on both sides of $P_e(z)^D=P_0(z)^{D-e}P_D(z)^e$, we obtain
\[
D\nu_e(f)=(D-e)\nu_0(f)+e\nu_D(f)\implies  D\mid e\bigl(\nu_D(f)-\nu_0(f)\bigr)
\quad(\forall e\in\mathcal E).
\]
By hypothesis,
%\cref{lem:geometric-levels}, 
$\gcd\{e\mid e\in \mathcal{E}\}=1$.
%B\'ezout's identity for the integers in $\mathcal E$
Thus an integer linear combination $\sum_{e \in \mathcal E} a_e e =1$. This
implies  $D\mid\nu_D(f)-\nu_0(f)$. 
By assumption, both $P_0$ and $P_D$ have degree $D$, so $0\le \nu_0(f),\nu_D(f)\le D.$
Combined with $D\mid\nu_D(f)-\nu_0(f)$, we have \({\nu_D(f)-\nu_0(f)\in\{-D,0,D\}.}\)
That is, an irreducible factor $f\in \mathcal{F}$ either has the same multiplicity in both $P_0$ and $P_D$, or occurs exactly \(D\) times in one and is absent from the other.

Since $P_0$ and $P_D$ are nonproportional, there exists $f\in \mathcal{F}$ with different multiplicities in $P_0$ and $P_D$.
Without loss of generality, we may assume $\nu_0(f)=D$ and $\nu_D(f)=0$; the reverse case is identical.
Now $f^D\mid P_0$, so $D\deg(f)\le \deg(P_0)=D$.
Since $f$ is nonconstant, $\deg(f)=1$, and $P_0=c_0f^D$ for some $c_0\in \R.$
Choose any monic nonconstant irreducible factor $g$ of $P_D$.
Since $\nu_D(f)=0$, $f\neq g$.
So $\nu_0(g)=0$ and $\nu_D(g)=D.$
The same degree argument gives $\deg(g)=1$ and $P_D=c_Dg^D$, for some $c_D\in \R.$
Write $f(z)=z+r$ and $g(z)=z+s$, we obtain $P_0(z)=c_0(z+r)^D$ and $P_D(z)=c_D(z+s)^D.$
Since $P_0$ and $P_D$ have positive coefficients, $c_0,c_D,r,s>0$.

Now the equation $P_e(z)^D=P_0(z)^{D-e}P_D(z)^e$ gives $P_e(z)^D=(c_0(z+r)^D)^{D-e}(c_D(z+s)^D)^{e}$ for every $e\in \mathcal{E}.$
Taking the $D$-th root gives
$P_e(z)=
%(c_0^{1/D}(z+r))^{D-e}(c_D^{1/D}(z+s))^e=
%%% JYC I think omit the middle expression makes it easier to read
U(z)^{D-e}V(z)^e$ for every $e\in \mathcal{E}$, where $U(z)=c_0^{1/D}(z+r)$ and $V(z)=c_D^{1/D}(z+s).$ 
\end{proof}

%%%%%%%here

\begin{lemma}\label{lem:level-polynomials}
Let $M$ be a minimal counterexample with diagonal levels $(k_1,\ldots,k_n)$ and $D=\max_{i\in[n]}k_i.$ 
For any realizable $K\in \Sym{n}{pd}{\Rp}$, there exist nonproportional linear polynomials $U(z)=u_0+u_1z$ and $V(z)=v_0+v_1z$  with $u_0,u_1,v_0,v_1>0$ such that
\begin{equation*}
\sum_{j=1}^n K_{ij}z^{k_j}
=U(z)^{D-k_i}V(z)^{k_i} \quad(\forall z\in \R),
\qquad \text{for every }i\in[n].
%\label{eq:level-polynomials}
\end{equation*}
\end{lemma}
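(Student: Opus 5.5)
Set $P_i(z)=\sum_{j}K_{ij}z^{k_j}$ and reduce to \cref{lem:polynomial-chain} with $\mathcal E=\{k_1,\ldots,k_n\}$. First, $P_i$ depends only on the level $k_i$. By \cref{lem:level-row-sums}, for $i,i'$ with $M_{ii}=M_{i'i'}$ (equivalently $k_i=k_{i'}$, since $\rho\neq1$), the level sums $\sum_{j:k_j=a}K_{ij}$ agree for every $a$. So we may write $P_e$ for $e\in\mathcal E$. Each $P_e$ has nonnegative coefficients, and the coefficient of $z^a$ is positive exactly when $n_a>0$. Its degree is $D$, because level $D$ is occupied.

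The key identity $P_e^D=P_0^{D-e}P_D^e$ comes from the unary $u^{(z)}=(z^{k_1},\ldots,z^{k_n})$ for $z>0$. This is $(M_{ii}/\kappa)^{\gamma}$ with $\rho^\gamma=z$, so it is realizable by \cref{lem:realize_diag_rowsum_colsum,lem:thickening_interpolation}, up to scaling. Attaching it to the second variable of $K$ realizes the unary $(P_{k_i}(z))_i$. \cref{lem:unary-profile} then gives $P_{k_i}(z)=c(z)\,\kappa^{\gamma'}\rho^{\gamma'(z)k_i}$, that is, $P_e(z)=A(z)B(z)^e$ with $A(z),B(z)>0$. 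Hence $P_e=P_0\,(P_D/P_0)^{e/D}$ for $z>0$. Raising to the $D$-th power gives the polynomial identity for $z>0$, and therefore identically.

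Positivity of all coefficients of each $P_e$ needs a separate argument; it is the main obstacle I expect. Missing levels ($n_a=0$ for some $0<a<D$) would leave zero coefficients. Once $P_0$ is shown to be $c(z+r)^D$, positivity is automatic. So I would rerun the factor-multiplicity argument of \cref{lem:polynomial-chain} assuming only nonnegative coefficients with positive constant and leading terms, namely $n_0,n_D>0$. The roots are then nonzero, and the final forms $c(z+r)^D$ with $c,r>0$ follow because the coefficients are nonnegative and the polynomials are not monomials. Alternatively, one can check that the proof of \cref{lem:polynomial-chain} only uses positivity of the values for $z>0$ and of the extreme coefficients. The conclusion of that lemma then forces all coefficients of each $U^{D-e}V^e$ to be positive.

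It remains to show the $P_e$ are not all proportional. If they were, the identity would force $P_D=\lambda P_0$, hence $P_e=\lambda^{e/D}P_0$. Comparing the coefficient of $z^0$ gives $\sum_{j:k_j=0}K_{ij}=\lambda^{k_i/D}\sum_{j:k_j=0}K_{0j}$, and similarly at every level. In matrix form, with $\Lambda=\operatorname{diag}(\lambda^{k_i/2D})$, this says $\Lambda^{-1}K\Lambda^{-1}$ has level-block sums depending only on the column level. I would derive a contradiction from symmetry of $K$: proportionality gives $s_{ab}=\lambda^{a/D}s_{0b}$ for the block sums $s_{ab}=\sum_{k_i=a,k_j=b}K_{ij}/n_a$. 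Symmetry gives $n_as_{ab}=n_bs_{ba}$. Positive definiteness (Schur/Perron arguments) forbids the resulting rank-one level structure unless $\lambda=1$ and there is a single level, contradicting $D\ge1$. If this step resists, I would instead take $K=M$, whose diagonal entries themselves differ across levels.
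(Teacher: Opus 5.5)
Your skeleton is the paper's. You define $P_e$ level-wise via \cref{lem:level-row-sums}, obtain $P_e(z)=A(z)B(z)^e$ on $\mathcal E$ from \cref{lem:unary-profile}, and conclude with \cref{lem:polynomial-chain}. You use all $z>0$ through $d^{\circ\gamma}$ where the paper uses $z=\rho^\ell$; that difference is immaterial.

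Your worry about positivity is well founded, and the paper passes over it silently. A priori only $n_0,n_D\ge1$ is known, so the $P_e$ have nonnegative coefficients with positive constant and leading terms. You are right that the proof of \cref{lem:polynomial-chain} uses nothing more. It needs positivity of the values for $z>0$ and degree $D$. For $r,s>0$ it needs only that $c_0(z+r)^D$ has nonnegative coefficients and nonzero constant term.

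The gap is the non-proportionality step. You assert that positive definiteness forbids the ``rank-one level structure'' by unspecified Schur/Perron arguments, but you never derive the contradiction. Your normalization is also off. Write $P_a=\lambda_aQ$ with $Q=\sum_bq_bz^b$, so $\lambda_a=\lambda^{a/D}$ in your notation, and let $p_{ab}$ be the coefficient of $z^b$ in $P_a$. Then $p_{ab}=\lambda_aq_b$, and symmetry gives $q_b=c\,n_b\lambda_b$. So the two-sided scaling that makes the level sums depend only on the column level is $\operatorname{diag}(\lambda^{k_i/D})$, not $\operatorname{diag}(\lambda^{k_i/2D})$.

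Your fallback $K=M$ would not prove the lemma as stated. The general case is genuinely needed, because \cref{lem:normalized-diagonal-polynomials} applies the lemma to $K=B\circ B$.

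The missing observation is one line and uses only invertibility of $K$. Since $|\mathcal E|\ge2$ (both $0$ and $D\ge1$ are levels), choose $x\in\mathbb R^{\mathcal E}\setminus\{0\}$ with $\sum_bq_bx_b=0$. Lift it to $\hat x_j=x_{k_j}$. Then $(K\hat x)_i=\sum_bp_{k_ib}x_b=\lambda_{k_i}\sum_bq_bx_b=0$ with $\hat x\neq0$, contradicting invertibility. No symmetry, Schur or Perron argument is needed.

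The paper runs exactly this argument with the two level-constant vectors $\mathbf 1$ and $d$. We have $P_{k_i}(1)=(K\mathbf 1)_i$ and $P_{k_i}(\rho)=(Kd)_i/\kappa$. Proportionality makes $P_e(\rho)/P_e(1)$ independent of $e$, i.e.\ $Kd=cK\mathbf 1$. Hence $d=c\mathbf 1$, contradicting the nonconstant diagonal.
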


\begin{proof}
    Let $d=(d_1,\ldots,d_n)^{\tt T}=(M_{11},\ldots, M_{nn})^{\tt T}.$
    For every $e\in\{k_1,\ldots,k_n\}$, define the polynomial 
    \[P_e(z):=\sum_{j=1}^n K_{ij}z^{k_j},\quad 
    %\text{for every } i\in [n] \text{ such that }k_i=e.\]
    \text{by choosing any } i\in [n] \text{ such that }k_i=e.\]
    By \cref{lem:level-row-sums}, $\sum_{j\in[n],k_j=t}K_{ij}=\sum_{j\in[n],\,k_j=t}K_{i'j}$ for every $i,i'\in[n]$ such that $k_i=k_i'$ and every $t\in \{k_1,\ldots, k_n\}$.
    So $P_e(z)$ is well-defined, i.e., we have defined $P_e(z)$
    for each $e$ in the \emph{set} $\mathcal{E}$ of distinct members
    of (the multiset) $\{k_1,\ldots,k_n\}$.
    In \cref{lem:level-row-sums}, we also proved that $u^{(\ell)}=Kd^{\circ \ell}$ is a realizable unary for every $\ell\in \Zp.$ 
    By \cref{lem:unary-profile}, for every $\ell\in \Zp$, there exist $c_\ell>0$ and $\gamma_\ell\in \R$ such that $u^{(\ell)}_i=c_\ell d_i^{\gamma_\ell}\,(i\in [n]).$
    By \cref{lem:geometric-levels}, there exist $\kappa>0$ and $\rho\neq 1$ such that $d_i=\kappa \rho^{k_i}  \,(i\in [n]).$
    Thus, for $i\in[n]$ with $k_i=e$, we have
    $$\kappa^{\ell}P_e(\rho^{\ell})=\sum_{j=1}^n K_{ij}\kappa^{\ell}\rho^{k_j\ell}=\sum_{j=1}^n K_{ij}d_j^\ell=u^{(\ell)}_i=c_\ell d_i^{\gamma_\ell}=c_{\ell}\kappa^{\gamma_\ell}\rho^{k_i\gamma_\ell}=c_{\ell}\kappa^{\gamma_\ell}\rho^{e\gamma_\ell}.$$
    This implies $P_e(\rho^{\ell})=c_{\ell}\kappa^{\gamma_\ell-\ell}\rho^{e\gamma_\ell}.$
    For any fixed $\ell\in \Zp$, the sequence
    $(P_e(\rho^{\ell}))_e$ lies in a geometric progression.
    %%% I don't say it is a geometric progression b/c e may not be consecutive
    Hence 
    \[
    P_e(\rho^{\ell})^D=P_0(\rho^{\ell})^{D-e}P_D(\rho^{\ell})^e\qquad (\forall \ell\in\Zp).
    \]
    Since $\rho\neq 1$, the above equation implies the polynomial identity \(P_e(z)^D=P_0(z)^{D-e}P_D(z)^e\,(\forall z\in \R).\)
    
    To apply \cref{lem:polynomial-chain}, we need to check that $\{P_e(z)\mid e\in \mathcal{E}\}$ are not all proportional.
    Suppose for a contradiction that $P_e(z)=\lambda_eQ(z)$ for some $Q(z)\in \R[z]$ and $\lambda_e\in \R  \,(\forall e\in\mathcal{E}).$
 %   It is clear that for every $e$, $P_e(z)$ is not the zero polynomial, otherwise $M$ has constant diagonal.
     Note that for every $i\in [n]$ with $k_i=e$, we have $P_e(1)=\sum_{j=1}^n K_{ij}=(K\mathbf{1})_i>0$, 
    in particular  $\lambda_e\neq 0$ for all $e$.
     It is also clear that for every $e$, $P_e(z)$ is a polynomial of degree at least 1, otherwise $M$ has constant diagonal.
 %   So $\lambda_e\neq 0$ for all $e.$
 %%% this should follow after saying $P_e(1) >0$
%
    Also, $P_e(\rho)=\sum_{j=1}^n  K_{ij}\rho^{k_j}=\frac{1}{\kappa}\sum_{j=1}^n K_{ij}d_j=\frac{1}{\kappa}(Kd)_i$.
    Then $P_e(\rho)/P_e(1)=Q(\rho)/Q(1)$ is independent of $e$.
    Therefore, $Kd=cK\mathbf{1}$ for some $c\in\R$.
    Since $K$ is invertible, $d=c\mathbf{1}$ is constant.
    Contradiction.

    Therefore, \cref{lem:polynomial-chain} provides two nonproportional linear polynomials $U(z)$ and $V(z)$ with positive coefficients such that $P_e(z)=U(z)^{D-e}V(z)^e$ for every $e\in \{k_1,\ldots, k_n\}$, proving the lemma.
\end{proof}

%%%%%%%% here

\begin{lemma}\label{lem:binomial-levels}
For every minimal counterexample $M$ with diagonal level data $(n_a)_{0\le a\le D}$, there exist $u_1,v_0>0$ such that
\begin{equation*}
n_a=n_0\binom Da\left(\frac{u_1}{v_0}\right)^a
\qquad\text{for every }a\in\{0,\ldots,D\}.
%\label{eq:binomial-levels}
\end{equation*}
In particular, $n_a>0$ for every $0\le a\le D.$
\end{lemma}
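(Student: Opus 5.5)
The plan is to apply \cref{lem:level-polynomials} to a well-chosen realizable positive definite matrix $K$ and compare coefficients of the row generating polynomials. The natural candidate is $K=M$ itself, which is realizable, lies in $\Sym{n}{pd}{\Rp}$, and has diagonal $M_{ii}=\kappa\rho^{k_i}$. \cref{lem:level-polynomials} then gives nonproportional $U(z)=u_0+u_1z$ and $V(z)=v_0+v_1z$ with positive coefficients such that $\sum_j M_{ij}z^{k_j}=U(z)^{D-k_i}V(z)^{k_i}$ for every $i$. However, these row polynomials only encode row sums of $M$ restricted to levels, and they do not obviously involve the counts $n_a$. To bring in the $n_a$, I would sum the identity over all rows $i$ and use symmetry of $M$, or, more cleanly, apply the identity to a column sum.

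The main device is to exploit the symmetry $M_{ij}=M_{ji}$ through a double generating function. Set $\Phi(w,z)=\sum_{i,j}M_{ij}w^{k_i}z^{k_j}$. Grouping by $i$ gives
\[\Phi(w,z)=\sum_{a}\Bigl(\sum_{i:k_i=a}1\Bigr)w^aU(z)^{D-a}V(z)^a=\sum_a n_a w^aU(z)^{D-a}V(z)^a.\]
Symmetry gives $\Phi(w,z)=\Phi(z,w)=\sum_a n_a z^aU(w)^{D-a}V(w)^a$. Next I set $w=0$. The left side becomes $n_0U(z)^D$, since only the level-$0$ term survives. The right side becomes $\sum_a n_a z^a u_0^{D-a}v_0^a$. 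Expanding $U(z)^D=\sum_a\binom Da u_0^{D-a}u_1^a z^a$ and comparing the coefficient of $z^a$ yields
\[n_0\binom Da u_0^{D-a}u_1^a=n_a u_0^{D-a}v_0^a,\]
that is, $n_a=n_0\binom Da(u_1/v_0)^a$. Positivity of $n_a$ follows because $n_0\ge1$ (by $\min k_i=0$) and $u_1,v_0>0$.

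The step needing care is checking that $\Phi$ is well defined as a polynomial identity and that the substitution $w=0$ is legitimate. Both hold because all objects are polynomials in $w$ and $z$ with the level exponents $k_i\ge0$, and \cref{lem:level-polynomials} holds for all real $z$. One should also confirm that the hypotheses of \cref{lem:level-polynomials} apply with $K=M$, which is immediate since $M$ is trivially realizable relative to itself. I expect the main obstacle to be noticing the symmetric double-sum trick, since the coefficient comparison itself is routine.
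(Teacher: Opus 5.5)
Your proof is correct and is essentially the paper's argument: the paper also applies \cref{lem:level-polynomials} with $K=M$, takes the $z^a$-coefficient of the rows at level $0$ and the constant term of the rows at level $a$, and sums each over its level. It then equates the two resulting expressions for $\sum_{k_i=0,\,k_j=a}M_{ij}$ using symmetry of $M$, and your identity $\Phi(0,z)=\Phi(z,0)$ packages exactly this double count into a single generating-function statement.
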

\begin{proof}
Let $(k_1,\ldots,k_n)$ be the diagonal levels of $M.$
Use \cref{lem:level-polynomials} with $K=M$, and write $U(z)=u_0+u_1z$ and $V(z)=v_0+v_1z$. 
Fix arbitrary $a\in \{0,\ldots, D\}.$
For every $i$ with $k_i=0$, \cref{lem:level-polynomials} gives 
\[
\sum_{j=1}^n M_{ij}z^{k_j}=(u_0+u_1z)^D.
\]
Extract the coefficient of $z^a$ on both sides, we have
\begin{equation}\label{eq:binomial-levels sum 1}
    \sum_{j\in [n],\,k_j=a}M_{ij}=\binom{D}{a}u_0^{D-a}u_1^a.
\end{equation}
Also, for every $j\in[n]$ with $k_j=a$, \cref{lem:level-polynomials} gives
\[
\sum_{i=1}^n M_{ji}z^{k_i}=(u_0+u_1z)^{D-a}(v_0+v_1z)^a.
\]
Extract the constant term, we have
\begin{equation}\label{eq:binomial-levels sum 2}
    \sum_{i\in[n],\,k_i=0}M_{ji}=u_0^{D-a}v_0^a.
\end{equation}
Sum \eqref{eq:binomial-levels sum 1} over the $n_0$ choices of $i$ such that $k_i=0$, we have 
\begin{equation*}
    \sum_{\substack{i,j\in[n]\\k_i=0, k_j=a}}M_{ij}=n_0\binom{D}{a}u_0^{D-a}u_1^a.
\end{equation*}
Sum \eqref{eq:binomial-levels sum 2} over the $n_a$ choices of $j$ such that $k_j=a$, we obtain
\begin{equation*}
    \sum_{\substack{i,j\in[n]\\k_i=0, k_j=a}}M_{ji}=n_a u_0^{D-a}v_0^a.
\end{equation*}
Symmetry of $M$ gives
\[
n_0\binom Da u_0^{D-a}u_1^a
=n_a u_0^{D-a}v_0^a,
\qquad
n_a=n_0\binom Da\left(\frac{u_1}{v_0}\right)^a.
\]
\end{proof}

\subsection{Binomial Diagonal Level Data}\label{subsec: binomial diag level data}

In this subsection, we will show that the diagonal level data have the exact binomial form (\cref{cor:binary-levels}).
By \cref{lem:binomial-levels}, it suffices to show $n_0=n_D=1$ (\cref{lem:singleton-endpoints}). 

\begin{lemma}\label{lem:parity-unary}
For every minimal counterexample $M$ with diagonal levels $(k_1,\ldots, k_n)$, the following unary signatures are all realizable: $u=((-1)^{k_1},\ldots, (-1)^{k_n})$, $v=(\mathbb{I}[k_1=0],\ldots, \mathbb{I}[k_n=0])$, $w=(\mathbb{I}[k_1=D],\ldots, \mathbb{I}[k_n=D]).$
\end{lemma}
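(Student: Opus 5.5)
The plan is to realize all three unaries from the diagonal unary $d=(M_{11},\ldots,M_{nn})$ by conformal lattice interpolation. The diagonal unary is realizable by \cref{lem:realize_diag_rowsum_colsum}, and \cref{lem:geometric-levels} gives $d_i=\kappa\rho^{k_i}$. The tool is \cref{lem: realize unary by conformal lattice interpolation}: a nowhere-zero unary $y$ is realizable once $\mathfrak L(d_1,\ldots,d_n)\subseteq\mathfrak L(y_1,\ldots,y_n)$.

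\textbf{The parity unary $u$.} Take $(x_i)\in\mathfrak L(d)$. Then $\sum_i x_i=0$ and $\prod_i d_i^{x_i}=\kappa^{\sum x_i}\rho^{\sum_i x_ik_i}=1$. Since $\rho>0$ and $\rho\neq1$, this gives $\sum_i x_ik_i=0$. Hence
\[
\prod_i u_i^{x_i}=(-1)^{\sum_i x_ik_i}=1,
\]
so $(x_i)\in\mathfrak L(u)$. The lemma then yields $u$, which has entries $\pm1$ and no zeros.

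\textbf{The indicators $v$ and $w$.} These have zero entries, so the lattice lemma does not apply directly. I would instead interpolate with the weights of \cref{lem: interpolation extract principle submatrix}.

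For $v$, use the realizable unary $d^{\circ t}$ normalized by the constant $\kappa^{-t}$. Its entries are $\rho^{tk_i}$. Choose the sign of $t$ so that $\rho^t<1$; then the entries lie in $(0,1]$, with value exactly $1$ precisely on the level-$0$ indices.

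Now take an instance of $\PlHolant(\mathcal{EQ}\mid M,v)$ in which $v$ is attached (with multiplicity) to a multiset $S$ of vertices. Replace each copy of $v$ by $\ell$ copies of this normalized unary. The partition function becomes $\sum_{x}A_x x^\ell$, where $x$ ranges over the polynomially many values $\rho^{t\sum_{s\in S}k_{\sigma(s)}}$. This is a Vandermonde system, and solving it recovers the coefficient $A_1$.

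The coefficient $A_1$ collects exactly those assignments with $k_{\sigma(s)}=0$ for every $s\in S$. That is the same as attaching $v$ to $S$, so $v$ is realizable. For $w$, take $\rho^t>1$ and normalize by $\kappa^{-t}\rho^{-tD}$. The value $1$ then occurs exactly on the level-$D$ indices, and the same argument applies.

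\textbf{Expected obstacle.} The main difficulty is the bookkeeping in the interpolation step. One must check that the number of distinct exponents $\sum_{s\in S}k_{\sigma(s)}$ is polynomial, being bounded by $D|S|$. One must also check that the Vandermonde nodes are distinct, which holds because $\rho^t\neq1$. Finally, the interpolation has to be phrased relative to an arbitrary auxiliary set $\mathcal G$, so that it fits \cref{def:realizable}.
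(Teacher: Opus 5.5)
Your proof is correct and is essentially the paper's. The parity unary $u$ comes from $\mathfrak L(d)\subseteq\mathfrak L(u)$ exactly as in the paper. For $v$ and $w$ you, like the paper, recover the constant coefficient of the generating polynomial $\sum_m c_m z^m$, where $m$ is the level sum $\sum_{s\in S}k_{\sigma(s)}$ (resp.\ $\sum_{s\in S}(D-k_{\sigma(s)})$). The only difference is the choice of nodes: you use $z=\rho^{t\ell}$, realized directly as normalized powers of $d$ via \cref{lem:thickening_interpolation}, whereas the paper evaluates at $z=1,\ldots,D|S|+1$ using $y(z)=(z^{k_1},\ldots,z^{k_n})$ from the conformal lattice lemma, so your version needs that lemma only for $u$.
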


\begin{proof}
    Let $d=(d_1,\ldots, d_n)=(M_{11},\ldots, M_{nn})$.
    Then $d$ is realizable by \cref{lem:realize_diag_rowsum_colsum}.
    By \cref{lem:geometric-levels}, there exist $\kappa>0$, $\rho\neq1$ such that  $d_i=\kappa\rho^{k_i} \,(i\in [n])$.
    For every $t\in \R$, define $y(t)=(t^{k_1},\ldots, t^{k_n})$.
    Here we take the convention that $0^0=1.$
    
    We claim that for every $t\neq 0$, $\mathfrak{L}(d_1,\ldots, d_n)\subseteq \mathfrak{L}(y(t)_1,\ldots, y(t)_n)$.
    Suppose $(x_1,\ldots, x_n)\in \mathfrak{L}(d_1,\ldots, d_n).$
    Then $\prod_{i=1}^n (\kappa \rho^{k_i})^{x_i}=1$ and $\sum_{i=1}^n x_i=0$.
    This implies $\rho^{\sum_{i=1}^n k_ix_i}=1.$
    Since $\rho\neq 1$, we have $\sum_{i=1}^n k_i x_i=0$.
    Hence $\prod_{i=1}^n(t^{k_i})^{x_i}=1$, $(x_1,\ldots, x_n)\in \mathfrak{L}(y(t)_1,\ldots, y(t)_n).$
    By \cref{lem: realize unary by conformal lattice interpolation}, $y(t)$ is realizable for every $t\neq 0.$
    In particular, $u=y(-1)$ is realizable.

    Next we show that $v=y(0)$ is realizable.
    Consider a planar graph $G$ as an instance of $\PlHolant(\mathcal{EQ}\mid M,v)$, where every vertex in $V$ is labeled by an {\sc Equality} signature, every edge is labeled by the binary signature $M$, and a multi-subset of vertices $S \subseteq V$ is attached to the unary signature $v=y(0)$.
    The partition function of the instance is
    \begin{equation}\label{eq:N1y0}
    \begin{aligned}
    N_1(v_1,v_2,\ldots, v_n)
    =&\sum_{\sigma:V\to [n]}\prod_{(u,v)\in E}M_{\sigma(u)\sigma(v)}\prod_{v\in S}y(0)_{\sigma(v)}\\
    =& \sum_{\substack{j_1,j_2,\ldots, j_n\ge 0\\j_1+j_2+\cdots+j_n=|S|}}y(0)_1^{j_1}y(0)_2^{j_2}\cdots y(0)_n^{j_n}\cdot z_{j_1,j_2,\ldots, j_n}.
\end{aligned}
\end{equation}
Here the coefficients $z_{j_1,j_2,\ldots, j_n}$ only depend on $G,S$ and $M$, but not depend on $v=y(0)$.
Fix any $t\neq 0$.
We replace every occurrence of $v$ by one copy of $y(t)$, and obtain an instance of $\PlHolant(\mathcal{EQ}\mid M,y(t))$ with partition function
\begin{equation}\label{eq:Nlyt}
    \begin{aligned}
    N_1(y(t)_1,y(t)_2,\ldots, y(t)_n)
    =&\sum_{\sigma:V\to [n]}\prod_{(u,v)\in E}M_{\sigma(u)\sigma(v)}\prod_{v\in S}y(t)_{\sigma(v)}\\
    =& \sum_{\substack{j_1,j_2,\ldots, j_n\ge 0\\j_1+j_2+\cdots+j_n=|S|}}t^{k_1 j_1}t^{k_2 j_2}\cdots t^{k_n j_n}\cdot z_{j_1,j_2,\ldots, j_n}.
\end{aligned}
\end{equation}
The coefficients $z_{j_1,j_2,\ldots, j_n}$ in \eqref{eq:Nlyt} are the same as in \eqref{eq:N1y0}.

Notice that $N_{1}(y(t)_1,\ldots ,y(t)_n)=P(t)$ is a polynomial in $t$ with degree at most $D|S|$, and $P(0)=N_1(v_1,v_2,\ldots,v_n)$ under the convention $0^0=1$.
Since $y(t)$ is realizable for every $t\in \R$
, we may query $P(1),P(2),\ldots,P(D|S|),P(D|S|+1)$ and recover $P(0)$ in polynomial time.
Therefore, $v=y(0)$ is realizable.
Similarly, we may define $\tilde{y}(t)=(t^{D-k_1},t^{D-k_2},\ldots, t^{D-k_n})$ and show that $w=\tilde{y}(0)$ is realizable.
\end{proof}

\begin{definition}[Connected components]\label{def:support-components}
For $H\in\Sym{n}{}{\R_{\ge 0}}$, its \emph{connected components} are the connected components of the undirected graph on $[n]$ with an edge $(i,j)$ whenever $i\ne j$ and $H_{ij}>0$.
We do not distinguish between the connected components in the underlying graph of $H$ and their corresponding matrix components in $H.$
\end{definition}

\begin{lemma}\label{lem:parity-quotient}
Let $M$ be a minimal counterexample with diagonal levels $(k_1,\ldots, k_n)$. 
Let $H\in \Sym{n}{pd}{\R_{\ge 0}}$ be realizable. 
Assume $H_{ij}=0$ for all $i,j\in[n]$ if $k_i,k_j$ have different parity. 
Then $H$ cannot have connected component of size at least $2$, i.e., $H$ is diagonal.
%If a connected component of $H$ has size at least $2$, then a positive symmetric positive-definite target of order less than $n$ with nonconstant diagonal reduces to $M$.
\end{lemma}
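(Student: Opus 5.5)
We argue by contradiction. Suppose $H$ has a connected component $C\subseteq[n]$ with $|C|\ge 2$. By the parity hypothesis, every off-diagonal edge of the support graph of $H$ joins indices of equal parity of level, so all $i\in C$ have $k_i$ of the same parity. The goal is to realize from $H$ (hence from $M$) a matrix of order strictly less than $n$ that is entrywise positive, positive definite, and has nonconstant diagonal. Since $M$ is a minimal counterexample, $\PlGH$ of that smaller matrix would be \#P-hard, so $\PlGH(M)$ would be \#P-hard, a contradiction.

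\textbf{Step 1 (pass to a positive block).} Consider the powers $H^k$, which are realizable by stretching (\cref{lem:stretching}) and remain positive definite. Since $H\ge0$ has positive diagonal ($H$ is positive definite), for $k\ge|C|$ the block $(H^k)_C$ is entrywise positive while $H^k$ is still block diagonal along the components. Alternatively, one can use $H^t$ for real $t$ via \cref{lem:spectral-powers}. Write $H'=H^k$.

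\textbf{Step 2 (isolate a proper piece).} We need a realizable matrix supported on a proper subset that still carries two different diagonal values of $M$. First combine $H'$ with $M$ by thickening: $M\circ H'$ is realizable, positive definite by the Schur product theorem (\cref{thm:schur-product}), nonnegative, and has exactly the support of $H'$. Its diagonal entries are $M_{ii}H'_{ii}$. Next apply \cref{lem: support interpolation} to $M\circ H'$, which yields the $0/1$ support pattern, a block-diagonal matrix of all-ones blocks. Thickening this pattern against $M$ then gives $M$ restricted to the block structure, i.e.\ $\bigoplus_{C'} M_{C'}$ over the components $C'$ of $H'$.

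For a connected planar input $G$, $Z_{\bigoplus M_{C'}}(G)=\sum_{C'}Z_{M_{C'}}(G)$. To separate a single block, use a realizable unary. The parity unary $u$ of \cref{lem:parity-unary}, together with $v,w$ or powers $d^{\circ t}$, weights the blocks differently. Then the Vandermonde-type argument of \cref{lem: interpolation extract principle submatrix}, applied with suitable weights $0<w_i\le1$ equal to $1$ exactly on one chosen class, extracts a principal piece.

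\textbf{Step 3 (nonconstant diagonal on a proper piece).} Every component is parity-homogeneous. If some component with $|C|\ge2$ is a proper subset of $[n]$ and $M_C$ has nonconstant diagonal, then $M_C$ is a smaller counterexample, giving the contradiction. The remaining cases need separate handling:
\begin{itemize}
\item \emph{Constant diagonal on $M_C$.} Here I would instead use $H'_C\circ M_C$, or vary the level data via \cref{lem:level-polynomials}.
\item \emph{A single component.} A single component equal to $[n]$ is impossible: by \cref{lem:binomial-levels} every level $0,\dots,D$ occurs with $D\ge1$, so both parities occur.
\end{itemize}
Thus there are at least two parity classes. Since each $n_a>0$, the even class contains levels $0$ and $2$ when $D\ge2$, giving nonconstant diagonal on that class. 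If $D=1$, each class has a constant diagonal, but then $C$, being of size $\ge2$, lies in a level class, and a different argument is needed.

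\textbf{Main obstacle.} The hardest step is guaranteeing nonconstant diagonal on the extracted proper piece, in particular the case where $C$ lies within a single level. For that case I would try to use $C$ itself. Consider $(M\circ H')$ restricted to $C\cup\{j\}$ for some $j$ of a different level, and exploit \cref{lem:level-row-sums}: the realizable $K=M\circ H'+\epsilon$-type combinations must have equal level row sums across rows of equal level, while rows inside $C$ and rows of the same level outside $C$ have different supports. This should force a contradiction directly from \cref{lem:level-row-sums} applied to the realizable positive matrix $\exp_\circ$-perturbations, or to $H'+J$-like positive gadgets such as $H'\cdot M$, which is positive.
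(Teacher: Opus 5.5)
\paragraph{Gap.} There is a genuine gap, and it lies in the strategy of Steps 2--3 rather than in a leftover case.

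Restricting $M$ to one component $C$ of $H$ cannot, in general, produce a nonconstant diagonal. The components of $H$ may be strictly finer than the level classes, and then $M_C$ has constant diagonal, so minimality gives no contradiction. This already happens whenever $D=1$: the parity classes coincide with the two level classes, so every component lies inside a single level. Your remark about levels $0$ and $2$ for $D\ge 2$ concerns a whole parity class, not a component, so it does not help either.

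Thickening with $H$ does not rescue this. The diagonal of $H$ is a positive realizable unary, hence of the form $cM_{ii}^{\gamma}$ by \cref{lem:unary-profile}, which is constant on a level.

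Step 2's extraction mechanism also fails. By \cref{lem:unary-profile}, every positive realizable unary is a function of the level alone. So a weight with $0<w_i\le 1$ that equals $1$ exactly on a set $S$ forces $S$ to be one extreme level class or all of $[n]$; you cannot single out a chosen component. Even granting component extraction (e.g.\ via \cref{lem:nn-component}), the constant-diagonal problem remains.

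Your closing paragraph is not an argument. \cref{lem:level-row-sums} is stated for entrywise positive $K$, and its identities only constrain level-wise row sums. These are consistent with several components sharing the same level profile, so no contradiction is forced.

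\paragraph{Missing idea.} Contract each component to a single color instead of restricting to one. You already realize $K=\bigoplus_{j} J_{c_j}$ from $H^n$ and \cref{lem: support interpolation}. Replace every vertex of a planar input graph by a cycle whose edges carry $K$. On the original edges put $M_\theta=\operatorname{diag}(d^{\circ\theta})\,M\,\operatorname{diag}(d^{\circ\theta})$, where $d$ is the diagonal of $M$. The $K$-cycle forces all copies of a vertex into one component and sums over its members. Hence the partition function is $Z_{Q_\theta}(G)$ with $Q_\theta=R^{\tt T}M_\theta R$, where $R$ is the $n\times q$ component-indicator matrix. This $Q_\theta$ has order $q<n$, is entrywise positive, and is positive definite because $R$ has full column rank.

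Its diagonal entries are $\kappa^{2\theta}p_j(\rho^\theta)$ with $p_j(z)=\sum_{i,\ell\in C_j}M_{i\ell}z^{k_i+k_\ell}$. This is where the parity hypothesis really enters:
\begin{itemize}
\item The component containing an index of level $1$ exists, since $n_1>0$ by \cref{lem:binomial-levels}. It contains only odd levels, so its $p_{j_1}$ has zero constant term.
\item The component containing an index $i_0$ of level $0$ has $p_{j_0}(0)\ge M_{i_0i_0}>0$.
\end{itemize}
Thus $p_{j_0}-p_{j_1}$ is a nonzero polynomial of degree at most $2D$. It cannot vanish at the $2D+1$ distinct points $\rho^{\theta}$, $0\le\theta\le 2D$, so some $Q_\theta$ has nonconstant diagonal, contradicting minimality. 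This works even when every $M_C$ has constant diagonal, which is exactly the case your approach cannot handle.
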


\begin{proof}
Since \(H\) is positive definite, every diagonal entry is positive.
We may perform a simultaneous permutation of rows and columns of $H$ and write $H=H_1\oplus H_2\oplus \cdots \oplus  H_{q}$, where $H_1,\ldots, H_q$ are nontrivial connected components of $H$.
Suppose the $i$-th connected component has size $c_i$, i.e., $H_i$ has order $c_i$, for every $i\in [q].$
We claim that $c_i=1$ for every $i\in[q]$, and thus $H$ is diagonal.
Otherwise, $q<n.$
We will show that there exists a matrix $Q\in \Sym{q}{pd}{\Rp}$ with nonconstant diagonal, such that $\PlGH(Q)\le_T \PlGH(M)$.
If so, by the definition of a minimal counterexample, $\PlGH(Q)$ would be \#P-hard.
This would imply $\PlGH(M)$ is \#P-hard.
But $M$ is defined such that $\PlGH(M)$ is not \#P-hard (see \cref{def:minimal-counterexample}),
a contradiction.
In the following we show the construction of $Q.$

Define $K=J_{c_1}\oplus J_{c_2}\oplus \cdots \oplus J_{c_q}.$
We show that $K$ is realizable.
First, $H^n$ is realizable by \cref{lem:stretching}.
Note that $H^n=H_1^{n}\oplus H_2^n \oplus \cdots \oplus H_q^n$, where every $H_i^n$ is entrywise positive.
This is because any two vertices in the same connected component are joined by a path of length at most $n-1$.
Then any path of length at most \(n-1\)
%and 
can be extended to a walk of length \(n\) using self-loops.
(Note that $H$ has positive diagonals.)
By \cref{lem: support interpolation}, we may replace every positive entry in $H^n$ by $1$ and obtain that $K=J_{c_1}\oplus J_{c_2}\oplus \cdots \oplus J_{c_q}$ is realizable.
%Let $r=(r_1,r_2,\ldots, r_n)$ with $r_i=\sum_{j=1}^n E_{ij}=c_{p_i}$ for every $i\in[n]$, where $p_i\in[q]$ is the index of the connected component containing $i.$
%By \cref{lem:realize_diag_rowsum_colsum}, $r$ is a realizable unary signature.
%By \cref{lem:thickening_interpolation}, its reciprocal $r^{-1}=(c^{-1}_{p_1},\ldots, c^{-1}_{p_n})$ is realizable.

Define the $n\times q$ matrix $R$ by
 \[
 R_{ij}=\begin{cases}
     1,& \text{if $i$ is in the $j$-th connected component;} \\
     0, & \text{otherwise.}
 \end{cases}
 \]
In other words, $R_{ij}=\mathbb{I}[{p_i}=j]$, where $p_i\in[q]$ is the index of the connected component containing $i$, for every $i\in[n], j\in [q].$
Let $d$ be the diagonal vector of $M.$
For every integer $\theta\in\{0,\ldots,2D\}$, set $M_\theta=\operatorname{diag}(d^{\circ\theta})M\operatorname{diag}(d^{\circ\theta})$ and $Q_\theta=R^{\tt T}M_\theta R\in \Sym{q}{}{\Rp}.$
Entrywise,
\begin{equation}\label{eq: Q theta jj}
    (Q_\theta)_{jj'}=\sum_{i\in [n]}\sum_{\ell\in[n]}R_{ij}R_{\ell j'}(M_\theta)_{i\ell}
    =\sum_{\substack{i\in [n]\\p_i=j}}\sum_{\substack{\ell\in[n]\\p_{\ell}=j'}}(M_\theta)_{i\ell}
    =\sum_{\substack{i\in [n]\\p_i=j}}\sum_{\substack{\ell\in[n]\\p_{\ell}=j'}}M_{i\ell}d_i^\theta d_\ell ^{\theta}
    \quad  \forall j,j'\in[q].
\end{equation}
We claim that $Q_\theta$ is positive definite for every $\theta\in \{0,\ldots, 2D\}.$
It is easy to check that $R$ has full column rank, so $Ry=0\implies y=0$ for every $y\in \R^q.$
For every nonzero $y\in\mathbb R^{q}$,
\(
y^{\tt T}Q_\theta y=(R y)^{\tt T}M_\theta(R y)>0.
\)
So $Q_\theta$ is positive definite, hence $Q_\theta\in \Sym{q}{pd}{\Rp}$, for every $\theta\in \{0,\ldots, 2D\}.$

Fix $\theta\in \{0,\ldots, 2D\}.$
Next we show that $Q_\theta$ is realizable.
Consider a planar graph $G=(V,E)$ as an instance of $\PlGH(Q_\theta).$
Without loss of generality, we may assume that $G$ is connected.
We construct the planar graph $G'$ as follows (see \cref{fig:incidence}).
Replace every vertex $v\in V$ of by a cycle $C(v)$ of length $\deg(v)$.
The original edges in $E$ connected to $v$ are now connected to exactly one vertex in the cycle.
Label $K$ on every edge in the cycle, and label \(M_\theta\) on every original edge in $E$.
%Attach the unary \(r^{-1}\) at every vertex.
\begin{figure}[htbp]
\centering
\begin{tikzpicture}[scale=.8,every node/.style={font=\small},
  vertex/.style={circle,draw,fill=white,inner sep=2pt}]
  \begin{scope}[xshift=-3.7cm]
    \node[vertex,label=below:$v$] (v) at (0,0) {};
    \foreach \a in {30,120,210,300}{\draw (v)--(\a:1.65);}
    \node at (0,-2.05) {};
  \end{scope}
  \draw[-{Stealth[length=2mm]}] (-1.25,0)--(-.35,0);
  \begin{scope}[xshift=1.8cm]
    \draw[dashed,gray] (0,0) circle (1.28);
    \foreach \a/\i in {30/1,120/2,210/3,300/4}{
      \node[vertex] (p\i) at (\a:.82) {};
      \draw (p\i)--(\a:1.65);
    }
    \draw[thick] (p1)--(p2)--(p3)--(p4)--(p1);
    \node at (0,0) {};
    \node at (0,-2.05) {};
  \end{scope}
\end{tikzpicture}
\caption{$G=(V,E)$ is an instance of $\PlGH(Q_\theta).$
Replace every $v\in V$ by a cycle $C(v)$ of length $\deg(v)$.
The cycle edges are labeled by the binary signature $K$; 
each original edge are labeled by $M_\theta$.
%Every vertex is attached with the unary $r^{-1}$.
We obtain $G'$ as an instance of $\PlHolant(\mathcal{EQ}\mid M_\theta,E)$.}
\label{fig:incidence}
\end{figure}

Note that $d^{\circ \theta}$ is realizable by \cref{lem:realize_diag_rowsum_colsum,lem:thickening}, and thus $M_\theta$ is realizable by \cref{lem:realize_diag_K_diag}.
We also proved $K$ is realizable.
Therefore, we have oracle access to the partition function of $G'$, which is
\begin{equation}\label{eq: ZMEG}
    Z_{M,K}(G'):=
\sum_{\sigma:V(G')\to [n]}\prod_{(u,v)\in E}(M_\theta)_{\sigma(u)\sigma(v)}\prod_{v\in V(G)}\prod_{(i,j)\in E(C(v))}K_{\sigma(i)\sigma(j)}.
\end{equation}
For every $\sigma:V(G')\to [n]$ contributing a nonzero term in $Z_{M,K}(G')$,
it maps all vertices in every fixed cycle $C(v)$ to the same connected component, since $K_{\sigma(i)\sigma(j)}=1$ iff $\sigma(i)$ and $\sigma(j)$ belong to the same connected component.
Therefore, $\sigma$ first chooses a component indexed by $\tau(v)\in[q]$ for every original vertex $v$, then maps every vertex in the cycle $C(v)$ to the component $H_{\tau(v)}.$
We may stratify the sum \eqref{eq: ZMEG} according to $(\tau(v))_{v\in V}$.
The contribution of all $\sigma$ which chooses the component $H_{\tau(v)}$ for $v\in V$ in \eqref{eq: ZMEG} is
\[
\sum_{\substack{\sigma:V(G')\to [n]\\ \sigma(C(v))\subseteq H_{\tau(v)},\forall v\in V}}\prod_{(u,v)\in E}(M_\theta)_{\sigma(u)\sigma(v)}=\prod_{(u,v)\in E}\sum_{\substack{\sigma(u)\in H_{\tau(u)} \\\sigma(v)\in H_{\tau(v)}}}(M_\theta)_{\sigma(u)\sigma(v)}
\]
By \eqref{eq: Q theta jj}, $$\sum_{\substack{\sigma(u)\in H_{\tau(u)} \\\sigma(v)\in H_{\tau(v)}}}(M_\theta)_{\sigma(u)\sigma(v)}=\sum_{\substack{\sigma(u)\in [n]\\ p_{\sigma(u)}=\tau(u)}}\sum_{\substack{\sigma(v)\in [n]\\ p_{\sigma(v)}=\tau(v)}}(M_\theta)_{\sigma(u)\sigma(v)}=(Q_\theta)_{\tau(u)\tau(v)}.$$
Thus,
\[
Z_{M,K}(G')=\sum_{\tau:V\to [q]}\prod_{(u,v)\in E}(Q_\theta)_{\tau(u)\tau(v)}=Z_{Q_\theta}(G).
\]
Therefore, $\PlGH(Q_\theta)\le_T \PlGH(M)$.

It remains to choose $\theta\in \{0,\ldots, 2D\}$ so that $Q_\theta$ has nonconstant diagonal. 
By \cref{lem:geometric-levels}, there exist $\kappa>0$ and $\rho\neq 1$ such that $d_i=\kappa \rho^{k_i}$.
By \eqref{eq: Q theta jj}, $$(Q_\theta)_{jj}=\kappa^{2\theta}\sum_{\substack{i\in [n]\\p_i=j}}\sum_{\substack{\ell\in[n]\\p_{\ell}=j}}M_{i\ell}\rho^{\theta(k_i+k_\ell)}.$$
For every $j\in[q]$,
define the polynomial $$p_j(z):=\sum_{\substack{i\in [n]\\p_i=j}}\sum_{\substack{\ell\in[n]\\p_{\ell}=j}}M_{i\ell}z^{k_i+k_\ell}.$$
Then $\deg(p_j)\le 2D$ and $(Q_\theta)_{jj}=\kappa^{2\theta}p_j(\rho^\theta)$ for every $j\in [q].$

By \cref{lem:binomial-levels}, $n_1>0$.
So there exists $i_1\in [n]$ such that $k_{i_1}=1.$
Suppose $i_1$ belongs to the $j_1$-th connected component in $H.$
We claim $p_{j_1}(0)=0$.
For every $i\in [n]$ such that $p_i=j_1$, $i$ belongs to the connected component $H_{j_1}.$
So there is a path $i_1=v_0v_1\ldots v_t=i$ in the underlying graph of $H$ connecting $i_1$ and $i$.
Along the path, $k_{i_1}=k_{v_0},k_{v_1},\ldots, k_{v_s}=k_{i}$ have the same parity, since $H_{uv}=0$ whenever $k_u$ and $k_v$ have different parity.
Thus, $k_i$ is odd for every $i\in [n]$ with $p_i=j_1$, as $k_{i_1}=1$ is odd.
Consequently, every term $M_{i\ell}z^{k_i+k_\ell}$ appearing in $p_{j_1}(z)$ has degree at least $2$.
This proves $p_{j_1}(0)=0.$

On the other hand, there exists $i_0\in [n]$ with $k_{i_0}=0.$
Suppose $i_0$ belongs to the $j_0$-th connected component of $H.$
Then $(i,\ell)=(i_0,i_0)$ contributes $M_{i_0i_0}>0$ to the constant term of $p_{j_0}(z)$, and thus $p_{j_0}(z)$ has a positive constant term.
Therefore, $p_{j_0}(z)-p_{j_1}(z)$ is a nonzero polynomial of degree at most $2D$.
It cannot vanish on all of $\{\rho^\theta\}_{0\le \theta\le 2D}.$
Choose some $\theta\in \{0,\ldots, 2D\}$ such that $p_{j_0}(\rho^{\theta})\neq p_{j_1}(\rho^{\theta})$, then $(Q_{\theta})_{j_0j_0}\neq (Q_{\theta})_{j_1j_1}$.

Let $Q=Q_\theta$.
We have proved that $Q\in\Sym{q}{pd}{\Rp}$ has nonconstant diagonal, and $\PlGH(Q)\le \PlGH(M)$.
This is a contradiction.
\end{proof}

\iffalse
\begin{definition}\label{def: parity matrix S}
    Let $M$ be a minimal counterexample with diagonal levels $(k_1,\ldots, k_n)$.
    Define $S=\mathrm{diag}((-1)^{k_1},\ldots,(-1)^{k_n}).$
\end{definition}
\fi
\begin{lemma}\label{lem:geometric-mean}
Let $M$ be a minimal counterexample with diagonal levels $(k_1,\ldots, k_n)$.
Let $S=\mathrm{diag}((-1)^{k_1},\ldots, (-1)^{k_n})$.
For every realizable $X\in\Sym{n}{pd}{\R}$, the matrix $G=X\# (SXS)$ is realizable, and is diagonal with positive diagonal entries.
\end{lemma}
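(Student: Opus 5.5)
The proof has two parts: realizability of $G$, and the structural conclusion that $G$ is diagonal with positive diagonal.

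\textbf{Realizability.} First, $SXS$ is realizable. By \cref{lem:parity-unary} the parity unary $u=((-1)^{k_1},\ldots,(-1)^{k_n})$ is realizable, so \cref{lem:realize_diag_K_diag} with $u$ on both sides gives $SXS=\operatorname{diag}(u)X\operatorname{diag}(u)$. Both $X$ and $SXS$ are positive definite, so $X^{-1}$ and $X^{1/2}$, $X^{-1/2}$ are realizable by \cref{lem:spectral-powers}. Then $Y:=X^{-1/2}(SXS)X^{-1/2}$ is realizable by \cref{lem:stretching} and is positive definite. Applying \cref{lem:spectral-powers} again, $Y^{1/2}$ is realizable, and one more round of stretching gives $G=X^{1/2}Y^{1/2}X^{1/2}=X\#(SXS)$. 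By construction $G$ is symmetric positive definite.

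\textbf{Parity symmetry.} Conjugation by the orthogonal involution $S$ commutes with congruences and with positive-definite square roots. Hence $S(X\#Y')S=(SXS)\#(SY'S)$, and in particular $SGS=(SXS)\#X$. The geometric mean is symmetric, $A\#B=B\#A$, which follows from the uniqueness characterization in \cref{lem:geometric-mean-equation}: if $GA^{-1}G=B$ then $G^{-1}=A^{-1}G^{-1}$-type manipulations give $GB^{-1}G=A$. Therefore $SGS=G$. This means $G_{ij}=0$ whenever $k_i$ and $k_j$ have different parity.

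\textbf{Reducing to \cref{lem:parity-quotient}.} That lemma needs entrywise nonnegativity, which $G$ need not have a priori. I would apply \cref{lem: support interpolation}-style reasoning only after obtaining nonnegative entries, and I expect this to be the main obstacle. My first attempt is a conformal-lattice or thickening trick: take the entrywise square $G\circ G$. It is realizable by \cref{cor:thickening}, positive definite by \cref{thm:schur-product}, entrywise nonnegative, and has the same zero pattern across parity classes. \cref{lem:parity-quotient} applied to $H=G\circ G$ then shows $G\circ G$ is diagonal. Hence $G$ is diagonal, and its diagonal entries are positive since $G$ is positive definite.

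One subtlety is that realizability in \cref{def:realizable} is for general real signatures, so the signs in $u$ and the square-root steps must be compatible with the interpolation lemmas. \cref{lem:spectral-powers} is stated for realizable $K\in\Sym{n}{pd}{\R}$, so I will rely on it as stated.
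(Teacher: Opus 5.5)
Your proof is correct and follows the paper's argument: realize $SXS$ from the parity unary, get $G$ by stretching and \cref{lem:spectral-powers}, deduce $SGS=G$ from the uniqueness in \cref{lem:geometric-mean-equation}, and finish by applying \cref{lem:parity-quotient} to $G\circ G$. The only differences are in the parity-symmetry step, where the paper checks $(SGS)X^{-1}(SGS)=SXS$ directly instead of using equivariance and symmetry of $\#$, and in your symmetry justification, which should simply say that inverting $GA^{-1}G=B$ gives $A=GB^{-1}G$; the nonnegativity issue is fully resolved by the $G\circ G$ step, so no support-interpolation argument is needed.
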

\begin{proof}
Put $Y=SXS$ and $G=X\#Y$. 
Since $X$ is positive definite, $Y=SXS$ 
%and 
is also positive definite.
By \cref{def:geometric-mean}, $G=X^{1/2}\bigl(X^{-1/2}YX^{-1/2}\bigr)^{1/2}X^{1/2}$ is also positive definite.
The parity unary $((-1)^{k_1},\ldots, (-1)^{k_n})$ is realizable by \cref{lem:parity-unary}, so $Y$ is realizable by \cref{lem:realize_diag_K_diag}. 
Consequently, $G$ is realizable by \cref{lem:stretching,lem:spectral-powers}.

By \cref{lem:geometric-mean-equation}, $GX^{-1}G=Y$. 
So $GY^{-1}G=X$. 
Also $S^{-1}=S$ and so $Y^{-1} = SX^{-1}S $.
Therefore
\[
(SGS)X^{-1}(SGS)
=S G(SX^{-1}S)G S
=S GY^{-1}G S
=SXS=Y.
\]
The matrix $SGS$ is positive definite, so uniqueness in \cref{lem:geometric-mean-equation} gives $SGS=G$. Entrywise, 
\(
(-1)^{k_i+k_j}G_{ij}=G_{ij}\,(i,j\in[n]),
\)
and hence $G_{ij}=0$ if $k_i$ and $k_j$ have different parity.

The entrywise square $G\circ G$ is nonnegative and realizable by \cref{cor:thickening}. 
It is positive definite by \cref{thm:schur-product}. 
Also, $(G\circ G)_{ij}=0$ if $k_i$ and $k_j$ have different parity.
\Cref{lem:parity-quotient} then implies that $G \circ G$ is diagonal, and hence so is $G$.
The positive definiteness of $G$ gives $G_{ii}>0$ for every $i\in[n]$.
\end{proof}

\begin{definition}\label{def:normalized-family}
For a minimal counterexample $M$ with diagonal levels $(k_1,\ldots, k_n)$, define $S=\mathrm{diag}((-1)^{k_1},\ldots, (-1)^{k_n})$, $G=M\#(SMS)$, $B=G^{-1/2}MG^{-1/2}$ and $A(z)=B\operatorname{diag}(z^{k_1},z^{k_2},\ldots, z^{k_n}) B$ for every $z\in \R.$
\end{definition}

    By \cref{lem:geometric-mean}, $G$ is a realizable positive diagonal matrix. 
    So $G^{-\frac{1}{2}}$ is also realizable by \cref{lem:spectral-powers}.
    Consequently, $B\in \Sym{n}{pd}{\Rp}$ is realizable by \cref{lem:stretching}. 
    For every fixed $z>0$, we proved that $(z^{k_1},\ldots, z^{k_n})$ is realizable in \cref{lem:parity-unary}.
    Thus by \cref{lem:stretching}, $A(z)\in \Sym{n}{pd}{\Rp}$ is also realizable for every fixed $z>0$.

\begin{lemma}\label{lem:parity-normalization}
Let $M$ be a minimal counterexample with diagonal levels $(k_1,\ldots, k_n)$.
Let $S=\mathrm{diag}((-1)^{k_1},\ldots, (-1)^{k_n})$, and let $B$ be the matrix in \cref{def:normalized-family}.
Then $B^{-1}=SBS$ and $BSB=S.$
\end{lemma}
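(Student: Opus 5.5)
The plan is to derive both identities from the geometric-mean equation of \cref{lem:geometric-mean-equation}, using that $G$ is diagonal by \cref{lem:geometric-mean} and that diagonal matrices commute with $S$.

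By \cref{lem:geometric-mean-equation} applied with $X=M$ and $Y=SMS$, we have $GM^{-1}G=SMS$. Conjugating by $G^{-1/2}$ on both sides rewrites this as
\[
G^{1/2}M^{-1}G^{1/2}=G^{-1/2}SMSG^{-1/2}.
\]
The left-hand side is exactly $B^{-1}$, since $B^{-1}=G^{1/2}M^{-1}G^{1/2}$. For the right-hand side, $G$ is a positive diagonal matrix, so $G^{-1/2}$ is diagonal and commutes with the diagonal matrix $S$. Hence $G^{-1/2}SMSG^{-1/2}=S\,G^{-1/2}MG^{-1/2}\,S=SBS$. Together these give $B^{-1}=SBS$.

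The second identity follows from the first. Multiplying $B^{-1}=SBS$ on the left by $B$ gives $I=BSBS$. Since $S^2=I$, multiplying on the right by $S$ yields $BSB=S$.

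The only point needing care is the commutation of $G^{\pm1/2}$ with $S$, and this relies solely on the diagonality of $G$ established in \cref{lem:geometric-mean}. I expect no real obstacle here; the computation is direct once that lemma is available.
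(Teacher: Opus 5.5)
Your proposal is correct and follows the paper's proof essentially verbatim: it rewrites $B^{-1}=G^{1/2}M^{-1}G^{1/2}=G^{-1/2}(GM^{-1}G)G^{-1/2}=G^{-1/2}SMSG^{-1/2}$ and uses that $G^{-1/2}$ is diagonal and hence commutes with $S$. The paper then also gets $BSB=S$ by multiplying by $B$ on the left and by $S=S^{-1}$ on the right, exactly as you do.
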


\begin{proof}
The geometric-mean equation \cref{lem:geometric-mean-equation} gives $GM^{-1}G=SMS$. Since the diagonal matrices $G^{-\frac{1}{2}}$ and $S$ commute,
\[
B^{-1}
=G^{1/2}M^{-1}G^{1/2}
=G^{-1/2}(GM^{-1}G)G^{-1/2}
=G^{-1/2}SMS G^{-1/2}
=SBS.
\]
Multiplication by $B$ on the left and $S= S^{-1}$ on the right gives $BSB=S$.
\end{proof}

\begin{lemma}\label{lem:reciprocal}
For every minimal counterexample $M$, let $A(z)$ be the family in \cref{def:normalized-family}. 
For every real $z>0$, there exists a positive diagonal matrix $K(z)$ such that
\(
A(z)=K(z)A(z^{-1})K(z).
\)
\end{lemma}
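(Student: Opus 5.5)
The plan is to combine a purely algebraic inversion identity for $A(z)$ with \cref{lem:geometric-mean} applied to $X=A(z)$ itself. The diagonal matrix produced by that lemma should be exactly the required $K(z)$.

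\emph{Step 1 (inversion identity).} Write $\Lambda(z)=\operatorname{diag}(z^{k_1},\ldots,z^{k_n})$, so $A(z)=B\Lambda(z)B$ and $\Lambda(z)^{-1}=\Lambda(z^{-1})$ for $z>0$. By \cref{lem:parity-normalization}, $B^{-1}=SBS$. Since $\Lambda(z)$ and $S$ are both diagonal, they commute, and $S^2=I$. Hence
\[
A(z)^{-1}=B^{-1}\Lambda(z^{-1})B^{-1}=SBS\,\Lambda(z^{-1})\,SBS=S\,B\Lambda(z^{-1})B\,S=S\,A(z^{-1})\,S .
\]

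\emph{Step 2 (geometric mean of $A(z)$).} Fix $z>0$. As noted right after \cref{def:normalized-family}, $A(z)\in\Sym{n}{pd}{\Rp}$ is realizable. So \cref{lem:geometric-mean} applies with $X=A(z)$. It shows that $K(z):=A(z)\#\bigl(SA(z)S\bigr)$ is a diagonal matrix with positive diagonal entries. By \cref{lem:geometric-mean-equation}, this $K(z)$ satisfies
\[
K(z)\,A(z)^{-1}\,K(z)=S\,A(z)\,S .
\]

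\emph{Step 3 (combine).} Substitute Step 1 into this equation to get $K(z)\,S A(z^{-1}) S\,K(z)=S A(z) S$. Both $K(z)$ and $S$ are diagonal, so they commute. Conjugating both sides by $S$ then gives $K(z)A(z^{-1})K(z)=A(z)$, which is the claim.

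I expect no serious obstacle here. The only point needing care is that $A(z)$ is genuinely realizable and positive definite for each fixed $z>0$, so that \cref{lem:geometric-mean} applies. This rests on the realizability of the unary $(z^{k_1},\ldots,z^{k_n})$ from \cref{lem:parity-unary} and on stretching. Positive definiteness follows because $A(z)=B\Lambda(z)B$ with $B$ symmetric invertible and $\Lambda(z)$ positive diagonal.
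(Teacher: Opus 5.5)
Your proposal is correct and matches the paper's proof essentially step for step. The paper also sets $K(z)=A(z)\#(SA(z)S)$, uses \cref{lem:geometric-mean} to get that it is positive diagonal, and derives $A(z)^{-1}=SA(z^{-1})S$ from \cref{lem:parity-normalization}; it then substitutes this into the geometric-mean equation and removes $S$ using that $S$ commutes with $K(z)$ and $S^2=I$.
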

\begin{proof}
Fix $z>0$ and put $K(z)=A(z)\#(SA(z)S)$. 
Since $A(z)\in \Sym{n}{pd}{\Rp}$ is realizable, 
\cref{lem:geometric-mean} applies to $A(z)$.
So $K(z)$ is positive diagonal. \cref{lem:parity-normalization} gives
\begin{equation}\label{eq: reciprocal}
\begin{aligned}
A(z)^{-1}
&=B^{-1}\operatorname{diag}(z^{-k_1},z^{-k_2},\ldots,z^{-k_n})B^{-1}\\
&=SBS\operatorname{diag}(z^{-k_1},z^{-k_2},\ldots,z^{-k_n})SBS\\
&=SB\operatorname{diag}(z^{-k_1},z^{-k_2},\ldots,z^{-k_n})BS\\
&=SA(z^{-1})S.
\end{aligned}
\end{equation}
The geometric-mean equation \cref{lem:geometric-mean-equation} gives $K(z)A(z)^{-1}K(z)=SA(z)S$. Substitute \eqref{eq: reciprocal}, we have
\(K(z)SA(z^{-1})SK(z)=SA(z)S\).
%Multiply by $S$ on the left and right. 
Since $S$ commutes with the diagonal matrix $K(z)$ and $S^2=I$, the result is $K(z)A(z^{-1})K(z)=A(z)$.
\end{proof}

\begin{lemma}\label{lem:normalized-diagonal-polynomials}
For every minimal counterexample $M$, there exist nonproportional linear polynomials $U(z)=u_0+u_1z$, $V(z)=v_0+v_1z$ with $u_0,u_1,v_0,v_1>0$, 
such that the family $A(z)$ in \cref{def:normalized-family} has the expression
\[
A(z)_{ii}=U(z)^{D-k_i}V(z)^{k_i}
\qquad\text{for every }i\in[n].
\]
\end{lemma}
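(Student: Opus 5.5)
We apply \cref{lem:level-polynomials} to the realizable matrix $K=A(z_0)$ for a fixed $z_0>0$. This only gives information about row generating polynomials $\sum_j A(z_0)_{ij}z^{k_j}$, whereas we want the diagonal entries as a function of $z_0$. So the plan is instead to use the structure $A(z)=B\operatorname{diag}(z^{k})B$ and read off the diagonal directly:
\[
A(z)_{ii}=\sum_{j=1}^n B_{ij}^2 z^{k_j}.
\]
This is exactly the $i$-th row generating polynomial of the entrywise square $B\circ B$. Since $B\in\Sym{n}{pd}{\Rp}$ is realizable (shown after \cref{def:normalized-family}), $B\circ B$ is realizable by \cref{cor:thickening}. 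It is entrywise positive, and it is positive definite by \cref{thm:schur-product}. Hence $B\circ B\in\Sym{n}{pd}{\Rp}$ satisfies the hypotheses of \cref{lem:level-polynomials}.

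Applying \cref{lem:level-polynomials} with $K=B\circ B$ immediately yields nonproportional linear $U,V$ with positive coefficients such that
\[
\sum_j (B\circ B)_{ij}z^{k_j}=U(z)^{D-k_i}V(z)^{k_i}
\]
for every $i$. This is exactly $A(z)_{ii}$, as a polynomial identity in $z$.

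The key steps, in order, are:
\begin{enumerate}
\item verify realizability and positive definiteness of $B\circ B$;
\item check that the row polynomial of $B\circ B$ coincides with $A(z)_{ii}$, which is immediate because $\operatorname{diag}(z^{k})$ is diagonal and $B$ is symmetric;
\item invoke \cref{lem:level-polynomials}.
\end{enumerate}

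The only potential obstacle is confirming that \cref{lem:level-polynomials} genuinely applies to an arbitrary realizable entrywise positive definite matrix relative to the minimal counterexample. In particular, its internal nonproportionality argument uses invertibility of $K$, which is guaranteed here by positive definiteness. So I expect no real difficulty beyond this bookkeeping.
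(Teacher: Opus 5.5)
Your proposal is correct and is essentially the paper's own proof. The paper likewise notes that $A(z)_{ii}=\sum_j B_{ij}^2 z^{k_j}$ is the $i$-th row generating polynomial of $B\circ B$, which is realizable by \cref{cor:thickening}, entrywise positive, and positive definite by \cref{thm:schur-product}, and then invokes \cref{lem:level-polynomials}; your opening detour through $A(z_0)$ is unnecessary and can be dropped.
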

\begin{proof}
The matrix $B\circ B$ is realizable, entrywise positive, and symmetric. 
It is positive definite by \cref{thm:schur-product}. \cref{lem:level-polynomials} applied to this matrix gives
\[
A(z)_{ii}=\sum_{j=1}^nB_{ij}^2z^{k_j}
=U(z)^{D-k_i}V(z)^{k_i}\qquad\text{for every }i\in[n],
\]
as claimed.
\end{proof}

\begin{lemma}\label{lem:singleton-endpoints}
For every minimal counterexample $M$ with diagonal level data $(n_a)_{0\le a\le D}$, one has $n_0=n_D=1$.
\end{lemma}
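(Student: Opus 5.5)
We argue by contradiction for the level $0$; the case of level $D$ is symmetric, exchanging the roles of $U$ and $V$ and of $k_i$ and $D-k_i$. Suppose $i\neq j$ with $k_i=k_j=0$. We work with the family $A(z)$ of \cref{def:normalized-family}, which is realizable, lies in $\Sym{n}{pd}{\Rp}$ for every $z>0$, and satisfies the reciprocal rule of \cref{lem:reciprocal}.

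First step: pin down $K(z)$ on level $0$. By \cref{lem:normalized-diagonal-polynomials} we have $A(z)_{ii}=A(z)_{jj}=U(z)^D$. Comparing diagonal entries in $A(z)=K(z)A(z^{-1})K(z)$ gives
\[
K(z)_{ii}^2=K(z)_{jj}^2=\frac{U(z)^D}{U(z^{-1})^D}.
\]
Since $K(z)$ is a positive diagonal matrix, $K(z)_{ii}=K(z)_{jj}$.

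Second step: read off the off-diagonal entry. Let $p(z):=A(z)_{ij}=\sum_{l}B_{il}B_{jl}z^{k_l}$. This is a polynomial of degree at most $D$ with nonnegative coefficients, and its constant term is $\sum_{k_l=0}B_{il}B_{jl}>0$. The $(i,j)$ entry of the reciprocal rule gives $p(z)\,U(z^{-1})^D=U(z)^D p(z^{-1})$. Multiplying by $z^D$ turns this into the polynomial identity
\[
p(z)\,(u_1+u_0z)^D=(u_0+u_1z)^D\,\tilde p(z),\qquad \tilde p(z)=z^Dp(z^{-1}).
\]
If $u_0\neq u_1$, the two linear factors are coprime. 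Then $(u_0+u_1z)^D$ divides $p$, and the degree bound forces $p=c\,U^D$ for some $c>0$. We then carry out the same computation with $z$ ranging over all of $(0,\infty)$ and over the other levels.

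Third step: derive the contradiction. Write $x=Be_i$ and $y=Be_j$. The identities $A_{ii}=A_{jj}=U^D$ and $A_{ij}=cU^D$ say that, level by level,
\[
\langle x,y\rangle_a=c\,|x|_a^2=c\,|y|_a^2,
\]
where $\langle\cdot,\cdot\rangle_a$ denotes the inner product restricted to coordinates $l$ with $k_l=a$. Positive definiteness of $B$ gives $0<c<1$.

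To exclude this configuration, I would first try the realizable level-$0$ indicator unary $v$ of \cref{lem:parity-unary}. Attached at an endpoint of $A(z)$, it realizes row sums restricted to level $0$. Combined with \cref{lem:level-row-sums} and \cref{lem:unary-profile}, this should force rows $i$ and $j$ to be indistinguishable. Alternatively, using \cref{lem: interpolation extract principle submatrix} with $v$, we can extract a strictly smaller positive definite principal block. If the rows cannot be separated there, we obtain $x=y$ and hence $c=1$, contradicting the invertibility of $B$.

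The main obstacle is this third step: turning the degenerate block $U^D\left(\begin{smallmatrix}1&c\\ c&1\end{smallmatrix}\right)$ into an actual contradiction using only realizable gadgets. The degenerate case $u_0=u_1$ also needs care, since there the reciprocal rule is vacuous on level $0$. In that case I would transfer the argument to level $D$ via $V$, which is not proportional to $U$ and therefore satisfies $v_0\neq v_1$. Once $n_0=n_D=1$ is established, \cref{lem:binomial-levels} gives $u_1/v_0=1$, and hence the exact binomial form.
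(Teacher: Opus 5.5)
\textbf{Gap: the contradiction is never reached.} Your first two steps match the paper. Positivity of $K(z)$ forces $K(z)_{ii}=K(z)_{jj}$, and the reciprocal rule plus coprimality of $u_0+u_1z$ and $u_1+u_0z$ gives $p=c\,U^D$. The problem is the third step. On their own, the level-by-level relations $\langle x,y\rangle_a=c|x|_a^2=c|y|_a^2$ with $0<c<1$ are perfectly consistent, and none of the routes you sketch closes the argument:
\begin{itemize}
\item The indicator $v$ has zero entries, so the principal-submatrix extraction lemma, which needs $0<w_i\le 1$, does not apply as stated.
\item In any case, the level-$0$ principal block of $M$ has constant diagonal $\kappa$, so minimality says nothing about it.
\item Equal level-restricted row sums for rows $i,j$ are already guaranteed by \cref{lem:level-row-sums} and contradict nothing.
\item The claim that the rows ``cannot be separated'', hence $x=y$, is unsupported.
\end{itemize}
Your treatment of the case $u_0=u_1$ is also inadequate. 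Passing to level $D$ does not help, because your hypothesis only gives two indices at level $0$.

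\textbf{The missing idea: evaluate at $z=-1$.} By \cref{lem:parity-normalization}, $A(-1)=BSB=S$, which is diagonal with entries $(-1)^{k_l}$. All your identities are polynomial identities in $z$, so they may be evaluated there. This gives two facts at once.
\begin{itemize}
\item $U(-1)^D=A(-1)_{ii}=1\neq 0$. Hence $u_0\neq u_1$, and the degenerate case never occurs. At level $D$, similarly, $V(-1)^D=(-1)^D\neq 0$.
\item $p(-1)=A(-1)_{ij}=0$. This contradicts $p=c\,U^D$ with $c>0$ and $U(-1)\neq 0$.
\end{itemize}
In your notation this is simply $\sum_a(-1)^a\langle x,y\rangle_a=0$ against $\sum_a(-1)^a|x|_a^2=1$. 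With this one observation, your first two steps finish the proof for both endpoints uniformly, taking $L=U$ or $L=V$. That is exactly what the paper does.
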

\begin{proof}
Fix $b\in\{0,D\}$.
By \cref{lem:binomial-levels}, $n_b>0.$
Suppose for a contradiction that $n_b\ge 2$, then there are distinct $i,j\in [n]$ with $k_i=k_j=b.$ 
\cref{lem:normalized-diagonal-polynomials} gives a linear polynomial $L(z)=\ell_0+\ell_1z$ with $\ell_0,\ell_1>0$ such that $A(z)_{ii}=A(z)_{jj}=L(z)^D$. (Take $L(z)=U(z)$ if $b=0$ and $L(z)=V(z)$ if $b=D$.) 
Note that $A(-1)=BSB=S$
by \cref{def:normalized-family} and  \cref{lem:parity-normalization}.
So $A(-1)_{ij}=S_{ij}=0$ as $i\neq j.$
Evaluate $A(z)_{ii}=L(z)^D$ at $z=-1$, we have $L(-1)^D=A(-1)_{ii}=S_{ii}=(-1)^{b}\neq 0.$ 
So $L(-1)\neq 0.$
In the following we show the polynomial identity $A_{ij}(z)=cL(z)^D$ for some $c>0$, thus contradicting $A(-1)_{ij}=0$ and $L(-1)\neq 0.$

Put $p(z)=A_{ij}(z)=\sum_{t=1}^nB_{it}B_{tj}z^{k_t}$. 
Grouping the sum by its degree, we write $p(z)=\sum_{r=0}^D c_r z^r,$ where $c_r=\sum_{t\in[n], k_t=r}B_{it}B_{tj}.$
By \cref{lem:binomial-levels}, $c_r>0$ for every $0\le r\le D.$
So $\deg(p)=D.$
Let $K(z)$ be the positive diagonal matrix in \cref{lem:reciprocal}.
Examine the $i$-th and $j$-th diagonal entries of $A(z)=K(z)A(z^{-1})K(z)$, we obtain
\[
L(z)^D=A(z)_{ii}=K(z)_{ii}^2 A(z^{-1})_{ii}=K(z)_{ii}^2 L(z^{-1})^D
\]
and 
\[
L(z)^D=A(z)_{jj}=K(z)_{jj}^2 A(z^{-1})_{jj}=K(z)_{jj}^2 L(z^{-1})^D.
\]
Since $\ell_0,\ell_1>0$, $L(z)>0$ when $z>0.$
So $K_{ii}(z)^2=K_{jj}(z)^2$ when $z>0$.
Since $K(z)$ has positive diagonal for every $z>0$, we have $K(z)_{ii}=K(z)_{jj}\,(z>0).$
Examining the $(i,j)$ entry of $A(z)=K(z)A(z^{-1})K(z)$ yields
\[
p(z)=A(z)_{ij}=K(z)_{ii}K(z)_{jj}A(z^{-1})_{ij}=K(z)_{ii}^2A(z^{-1})_{ij}=\frac{L(z)^D}{L(z^{-1})^D}p(z^{-1})\quad (z>0).
\]
So
\begin{equation}\label{eq: reciprocal polynomial}
    p(z)L(z^{-1})^D=p(z^{-1})L(z)^D \qquad(z>0).
\end{equation}
Define the reciprocal polynomials $L^*(z)=zL(z^{-1})=\ell_1+\ell_0z$ and $p^*(z)=z^Dp(z^{-1})=\sum_{r=0}^D c_{D-r}z^r$. 
Multiplying \eqref{eq: reciprocal polynomial} by $z^D$ gives the polynomial identity
\begin{equation}
p(z)L^*(z)^D=p^*(z)L(z)^D.
\label{eq:reciprocal-polynomial}
\end{equation}
It is a polynomial identity because it holds for every $z>0$.

We claim that $L(z)$ and $L^*(z)$ are coprime.
The roots of $L(z)$ and $L^*(z)$ are $-\ell_0/\ell_1$ and $-\ell_1/\ell_0$, respectively.
Note that the two roots are distinct, otherwise $\ell_0=\ell_1$ as $\ell_0,\ell_1>0$, contradicting $L(-1)=\ell_0-\ell_1\neq 0.$
Thus \(L(z)\) and \(L^*(z)\) have distinct roots and are coprime.
Combined with \eqref{eq:reciprocal-polynomial}, $L(z)^D\mid p(z).$
Since $\deg (p)=D$, $p(z)=cL(z)^D\,(z\in \R)$, where $c=p(0)/L(0)^D=c_0/\ell_0^D>0.$
This contradicts that $p(-1)=0$ and $L(-1)\neq 0$.
Therefore, $n_0=1$ and $n_D=1.$
\end{proof}

\begin{corollary}\label{cor:binary-levels}
For every minimal counterexample $M$ with diagonal level data $(n_a)_{0\le a\le D}$,
\[
n_a=\binom Da\quad \text{for every }a\in\{0,\ldots,D\}, \quad \text{and }
n=2^D.
\]
\end{corollary}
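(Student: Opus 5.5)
The corollary follows by combining \cref{lem:binomial-levels} with \cref{lem:singleton-endpoints}; no new construction is needed. \cref{lem:binomial-levels} gives
\[
n_a=n_0\binom Da\left(\frac{u_1}{v_0}\right)^a \qquad (0\le a\le D).
\]
\cref{lem:singleton-endpoints} gives $n_0=n_D=1$.

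First, substitute $n_0=1$, so that $n_a=\binom Da q^a$ with $q=u_1/v_0>0$.

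Next, evaluate at $a=D$. Since $\binom DD=1$, we get $1=n_D=q^D$, and because $q>0$ and $D\ge 1$, this forces $q=1$. Hence $n_a=\binom Da$ for every $0\le a\le D$.

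Finally, every index $i\in[n]$ has a level $k_i\in\{0,\dots,D\}$ by \cref{lem:geometric-levels}. Therefore
\[
n=\sum_{a=0}^D n_a=\sum_{a=0}^D\binom Da=2^D.
\]

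There is no real obstacle here. The only point to check is that both cited lemmas use the same $D$ and the same levels fixed in \cref{def:diagonal-levels}. This holds because both are stated for the diagonal level data of that definition.
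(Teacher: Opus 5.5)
Your proposal is correct and follows the paper's proof exactly: you substitute $n_0=n_D=1$ into the near-binomial form, which forces $(u_1/v_0)^D=1$ and hence $u_1/v_0=1$, and then you sum $\binom{D}{a}$ over the levels to get $n=2^D$.
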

\begin{proof}
    By \cref{lem:binomial-levels}, there are $u_1,v_0>0$ such that 
    \[
    n_a=n_0\binom Da\left(\frac{u_1}{v_0}\right)^a \qquad\text{for every }a\in\{0,\ldots,D\}
    \]
    Since $n_0=n_D=1$ by \cref{lem:singleton-endpoints}, $\left(\frac{u_1}{v_0}\right)^D=1$, so $\frac{u_1}{v_0}=1.$
    Hence $n_a=\binom{D}{a}$ for every $a\in \{0,\ldots,D\}.$
    Finally, $n=\sum_{a=0}^D n_a=\sum_{a=0}^D\binom{D}{a}=2^D.$
\end{proof}

\subsection{Hardness Proof}
\begin{definition}\label{def:boolean-power-data}
For a minimal counterexample $M$ with diagonal levels $(k_1,\ldots, k_n)$, let $x_0$ and $x_D$ be the unique index in $[n]$ such that $k_{x_0}=0$ and $k_{x_D}=D$, respectively. 
Define the matrix $W$ indexed by $\{0,1\}$:
\[
W=\begin{pmatrix}
M_{x_0x_0}^{1/D}&M_{x_0x_D}^{1/D}\\
M_{x_0x_D}^{1/D}&M_{x_Dx_D}^{1/D}
\end{pmatrix}.
\]
\end{definition}

\begin{lemma}\label{lem:boolean-power}
For any planar graph $G$, $Z_M(G)=Z_W(G)^D.$
\end{lemma}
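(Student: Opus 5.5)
The plan is to prove the stronger structural statement that, after a simultaneous permutation of rows and columns, $M=W^{\otimes D}$. Concretely, we would construct a bijection $x\mapsto s_x\in\{0,1\}^D$ with $|s_x|=k_x$ and
\[
M_{xy}=\prod_{i=1}^D W_{s_{xi}s_{yi}}\qquad(x,y\in[n]).
\]
Then $Z_M(G)=Z_{W^{\otimes D}}(G)=Z_W(G)^D$ for every graph $G$, since tensor products multiply partition functions. The count $n_a=\binom Da$ from \cref{cor:binary-levels} is exactly consistent with such a bijection, and $x_0,x_D$ must correspond to $0^D$ and $1^D$.

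\textbf{Step 1: entries of the extreme rows.} Apply \cref{lem:level-polynomials} not only to $M$ but to every Hadamard power $M^{\circ t}$ with $t\in\Z_{>0}$. These are realizable by \cref{cor:thickening}, entrywise positive, and positive definite by \cref{thm:schur-product}. For row $x_0$ we obtain
\[
\sum_j M_{x_0j}^t z^{k_j}=U_t(z)^D .
\]
Comparing coefficients of $z^a$ shows that the $\binom Da$ entries $M_{x_0j}$ with $k_j=a$ have power sums $\binom Da u_{0,t}^{D-a}u_{1,t}^a$. The multiplicativity in $t$ of the cases $a=0$ and $a=D$ (singletons, by \cref{lem:singleton-endpoints}) gives $u_{0,t}=u_0^t$. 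Consider next the power sums of the level-$a$ entries. If $a=1$, there are $D$ entries whose power sums equal $D\,c^t$ for every $t$, so all of them equal $c$. Inductively on $a$, and using an equality-of-power-sums argument (equal power sums for all $t$ force equal multisets), we aim to show that every entry $M_{x_0j}$ with $k_j=a$ equals $W_{00}^{D-a}W_{01}^{a}$. Symmetrically, row $x_D$ has all level-$a$ entries equal to $W_{01}^{D-a}W_{11}^{a}$. Here we use that $U(z)=u_0+u_1z$ and $V(z)=v_0+v_1z$ with $u_0^D=M_{x_0x_0}$, $u_1^D=v_0^D=M_{x_0x_D}$ and $v_1^D=M_{x_Dx_D}$, which is consistent with \cref{def:boolean-power-data}.

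\textbf{Step 2: a general row.} For a row $i$ with $k_i=e$, the same Hadamard-power argument applied to $U_t^{D-e}V_t^{e}$ determines, for each level $a$, the multiset $\{M_{ij}: k_j=a\}$. It must be the multiset of values $\prod_\ell W_{s_\ell s'_\ell}$, where $s$ is a fixed weight-$e$ string and $s'$ ranges over weight-$a$ strings. In other words, each row of $M$ has the correct multiset of entries, level by level, as a row of $W^{\otimes D}$. To make the powers separate the values, we may need to first replace $W$'s entries by generic values. This can be done via \cref{lemma: full_thickeningInterpolation}-style prime-exponent interpolation, or simply by noting that $W_{00},W_{01},W_{11}$ enter through $u_0,u_1,v_1$, so the only question is multiplicative independence.

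\textbf{Step 3 (main obstacle): assembling a consistent bijection.} Knowing each row's multisets does not yet pin down which column plays which role. For this we would proceed by induction on $D$, via the ``unpinning'' idea. We use the realizable unaries $v$ and $w$ of \cref{lem:parity-unary} together with $y(t)$, and restrict via \cref{lem: interpolation extract principle submatrix} to the level-$\le D-1$ indices that are adjacent (in the level sense) to $x_0$. The goal is to identify a principal substructure isomorphic to $W^{\otimes(D-1)}$. We then extend it one coordinate at a time, using the $1$-neighbours $\{j:k_j=1\}$ of $x_0$ as the $D$ coordinate directions: set $s_{xi}=1$ iff $M_{x j_i}$ takes the ``far'' value in coordinate $i$.

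Once $s$ is defined, we verify $M_{xy}=\prod_i W_{s_{xi}s_{yi}}$. This should follow from Step 2, since the entries in each row are forced to be products of $W$-entries and there is no room for a mismatch. The verification would use the rank/positive-definiteness of $M$ and the symmetry $M_{xy}=M_{yx}$ to rule out inconsistent assignments. This combinatorial consistency step is where we expect the real work. If it resists, the fallback is to compute $Z_M(G)$ directly by expanding over level assignments: for each level assignment, the row-generating identities $\sum_j M_{ij}z^{k_j}=U^{D-k_i}V^{k_i}$ already give the factorization $Z_W(G)^D$ after an interpolation in $z$ through the realizable unaries $y(z)$.
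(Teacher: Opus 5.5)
Your Steps 1 and 2 are correct as far as they go. Applying \cref{lem:level-polynomials} to $M^{\circ t}$, and using that levels $0$ and $D$ are singletons, pins down $U_t(z)=W_{00}^t+W_{01}^tz$ and $V_t(z)=W_{01}^t+W_{11}^tz$. Since level $a$ has exactly $\binom Da$ indices, equal power sums for all $t$ already force equal multisets, so no genericity of $W$'s entries is needed. But all this yields is per-row, per-level multisets of entries.

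The genuine gap is Step 3, which carries the entire content of your approach and is not carried out. Nothing shows that $x\mapsto s_x$ is injective, nor that $M_{xy}=\prod_i W_{s_{xi}s_{yi}}$ for all pairs $x,y$. The assertion that there is ``no room for a mismatch'' is not an argument. Row multisets survive many inequivalent ways of matching columns across rows; two non-isomorphic $r$-regular graphs already have adjacency matrices with identical row multisets. Meanwhile $Z_M(G)$ for $G$ with cycles depends exactly on how the rows fit together. Your fallback fails for the same reason: the row-generating identities encode only one-edge gadgets and cannot by themselves control $Z_M(G)$.

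You are also aiming at a strictly stronger conclusion, $M\cong W^{\otimes D}$, than the lemma, which is only an identity of planar partition functions. The information available is planar in nature. By \cref{lem:unary-profile}, positive realizable unaries are constant on each level, so vertices within a level cannot be separated. Planar partition functions do not in general determine a matrix up to simultaneous permutation, so it is unclear the target is reachable with these tools.

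The missing idea is to apply \cref{lem:unary-profile} not to single rows but to the unary produced by the whole graph. The paper proves, by induction on the number of unpinned vertices, the pinned identity $Z_M(G,S,\alpha_S)=Z_W(G,S,\beta_S)^D$. Here vertices of $S$ are pinned to $x_0$ or $x_D$ for $M$, and correspondingly to $0$ or $1$ for $W$. These pins are realizable: since $n_0=n_D=1$ (\cref{lem:singleton-endpoints}), the unaries $v,w$ of \cref{lem:parity-unary} are exactly $\delta_{x_0},\delta_{x_D}$.

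The base case $S=V$ is the definition of $W$. In the inductive step, pick an unpinned $v_0$ and view $G$ with its pins as a planar unary gadget with distinguished vertex $v_0$. It realizes an entrywise positive unary $f$, so \cref{lem:unary-profile} and \cref{lem:geometric-levels} give $f_i=f_{x_0}^{1-k_i/D}f_{x_D}^{k_i/D}$. The induction hypothesis gives $f_{x_0}=g_0^D$ and $f_{x_D}=g_1^D$, hence $f_i=g_0^{D-k_i}g_1^{k_i}$. Summing with $n_a=\binom Da$ (\cref{cor:binary-levels}) yields $\sum_i f_i=(g_0+g_1)^D$. No bijection or structural identification of $M$ is ever needed.
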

\begin{proof}
Fix the planar graph $G=(V,E).$
%%% I added "subset" b/c we have been using multiset earlier. here it's not multiset
For any subset $S\subseteq V$, a \emph{compatible pair of pinning on $S$} are two maps $\alpha_S:S\mapsto \{x_0,x_D\}$ and $\beta_S:S\mapsto \{0,1\}$, such that 
\[
    \alpha_S(v)=\begin{cases}
        x_D &\text{if } \beta_S(v)=1;\\
        x_0 &\text{if } \beta_S(v)=0,
    \end{cases}
\]
for every $v\in S.$
For every $S\subseteq V$ and a compatible pair of pinning $\alpha_S,\beta_S$ on $S$, we define
\[
Z_{M}(G,S,\alpha_S):=\sum_{\substack{\sigma:V\to [n]\\\sigma|_S=\alpha_S}}\prod_{(u,v)\in E}M_{\sigma(u)\sigma(v)}\quad \text{and}\quad
Z_{W}(G,S,\beta_S):=\sum_{\substack{\tau:V\to \{0,1\}\\ \tau|_S=\beta_S}}\prod_{(u,v)\in E}W_{\tau(u)\tau(v)}.
\]
It is clear that $Z_{M}(G,\emptyset,\alpha_{\emptyset})=Z_M(G)$, and $Z_W(G,\emptyset,\beta_{\emptyset})=Z_{W}(G).$
We prove the stronger claim that for every $S\subseteq V$ and every compatible pair of pinning $\alpha_S,\beta_S$ on $S$, $Z_{M}(G,S,\alpha_S)=Z_{W}(G,S,\beta_S)^D.$ 

We prove the claim by induction on the number of unpinned vertices $|V|-|S|.$
If $|S|=|V|$, then every vertex in $V$ is pinned.
For every compatible pair of pinning $\alpha_V$ and $\beta_V$, we have
$$Z_{M}(G,V,\alpha_V)=\prod_{(u,v)\in E}M_{\alpha(u)\alpha(v)}=\prod_{(u,v)\in E}W_{\beta(u)\beta(v)}^D=Z_{W}(G,V,\beta_V)^D$$
by the definition of $W$ and compatible pair of pinning.

For the inductive step, we choose an unpinned vertex $v_0.$
For every $i\in [n]$, we define 
\[
f_i:=\sum_{\substack{\sigma:V\to [n]\\ \sigma|_S=\alpha_S,\,\sigma(v_0)=i}}\prod_{(u,v)\in E}M_{\sigma(u)\sigma(v)}.
\]
We claim that $f=(f_1,\ldots, f_n)$ is a realizable unary signature.
By \cref{lem:parity-unary}, the unary pinning $\delta_{x_0}=(\mathbb{I}[k_1=0],\ldots, \mathbb{I}[k_n=0])$ and $\delta_{x_D}=(\mathbb{I}[k_1=D],\ldots, \mathbb{I}[k_n=D])$ are realizable.
Since $x_0$ is the unique index in $[n]$ such that $k_{x_0}=0$, $\delta_{x_0}(u)$ equals $1$ when $u=x_0$ and $0$ otherwise.
Similarly, $\delta_{x_D}(u)$ equals $1$ when $u=x_D$ and $0$ otherwise.
Then $f$ is realized by the following planar gadget: We take
the given planar graph $G$, transform it so that the vertex $v_0$ is on
the external face and it is the single distinguished vertex for this unary gadget;
%with the planar embedding one the dangling vertex $v$, where 
every edge is labeled by $M$, and every vertex $v$ in $S$ is attached to the unary $\delta_{\alpha_S(v)}$. Note that for every $v \in S$, 
$\delta_{\alpha_S(v)}$ is either $\delta_{x_0}$ if $\beta_S(v)=0$, or 
 $\delta_{x_D}$ if $\beta_S(v)=1$.
 %%% for the following sentence, since we technically are not talking about
 % complexity, just equality of two partition functions, so this "constant"
 % is technically meaningless. so i rewrote it.
%Notice that $G$ is a fixed planar graph, so this is a constant size planar gadget.
Notice that $G$ is a fixed planar graph, so this is a planar gadget.
(This lemma is about an equality of partition functions,
not about complexity, and so $G$ is just a planar gadget.)

By \cref{lem:unary-profile}, there exist $c>0$ and $\gamma\in \R$ such that $f_i=cd_i^\gamma$ for every $i\in [n]$, where $d_i=M_{ii}.$
Since $d_i=\kappa\rho^{k_i}$ for some $\kappa>0$ and $\rho\neq 1$ by \cref{lem:geometric-levels}, we have $f_i=c\kappa^{\gamma}\rho^{\gamma k_i}$ for every $i\in [n].$
By $k_{x_0}=0$ and $k_{x_D}=D$, we have $f_{x_0}=c\kappa^\gamma$ and $f_{x_D}=c\kappa^\gamma \rho^{\gamma D}$.
Hence \(
f_i=f_{x_0}^{\,1-k_i/D}f_{x_D}^{\,k_i/D},
\)
for every $i \in [n]$.
%%% I added a few words here. I used j=0,1 below to differentiate with i \in [n]
Now for every $j\in \{0,1\}$,
we similarly define the partial partition function for $Z_W(G)$ by
\[
g_j:=\sum_{\substack{\tau:V\to \{0,1\}\\\tau|_S=\beta_S,\, \tau(v_0)=j}}\prod_{(u,v)\in E}W_{\tau(u)\tau(v)}.
\]
When $v_0$ is pinned, the induction hypothesis gives $f_{x_0}=g_0^D$ and $f_{x_D}=g_1^D.$
Thus, $f_i=g_0^{D-k_i}g_1^{k_i}$ for every $i\in [n].$
%%% I added the following line to justify the step from \sum_i=1}^n to \sum_{a=0}^D
Recall  
\cref{def:diagonal-levels} for the diagonal levels, that 
each $i \in[n]$ belongs to a unique level $k_i$
and $n_a=|\{i\in[n]\mid k_i=a\}|.$ Therefore,
\[
Z_{M}(G,S,\alpha_S)=\sum_{i=1}^nf_i=\sum_{a=0}^D n_ag_0^{D-a}g_1^a=\sum_{a=0}^D\binom{D}{a}g_0^{D-a}g_1^a=(g_0+g_1)^D=Z_{W}(G,S,\beta_S)^D,
\]
since $n_a=\binom{D}{a}$ by \cref{cor:binary-levels}.
This completes the induction.
In particular, when $S=\emptyset$, we have $Z_M(G)=Z_W(G)^D.$
\end{proof}

\begin{lemma}\label{lem:nonconstant}
If $M\in\Sym{n}{pd}{\Rp}$ and $M$ has non-constant diagonal, then $\PlGH(M)$ is \#P-hard.
\end{lemma}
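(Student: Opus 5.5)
The proof is by contradiction with the minimal counterexample framework built in this appendix. Suppose some $M\in\Sym{n}{pd}{\Rp}$ with nonconstant diagonal has $\PlGH(M)$ not \#P-hard. Then a minimal counterexample exists, and we fix one, still calling it $M$; it has diagonal levels $(k_1,\ldots,k_n)$, and by \cref{lem:geometric-levels} $D\ge 1$. By \cref{cor:binary-levels}, $n=2^D$, and the indices $x_0$ and $x_D$ with $k_{x_0}=0$ and $k_{x_D}=D$ are unique. Therefore the $2\times 2$ matrix $W$ of \cref{def:boolean-power-data} is well defined. Its entries are positive $D$-th roots of the positive entries of $M$, so $W$ is entrywise positive and symmetric.

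The reduction goes through \cref{lem:boolean-power}. For every planar $G$ we have $Z_M(G)=Z_W(G)^D$, and $Z_W(G)>0$ because $W$ is entrywise positive. So one oracle call to $\PlGH(M)$, followed by extracting the positive real $D$-th root (to the required precision in the bit model, as in \cref{lem:potts-hardness}), computes $Z_W(G)$. This gives $\PlGH(W)\le_T\PlGH(M)$.

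It remains to show that $\PlGH(W)$ is \#P-hard, and this is where \cref{thm:boolean-source} applies. Its diagonal hypothesis holds: $W_{00}=M_{x_0x_0}^{1/D}=\kappa^{1/D}$ and $W_{11}=M_{x_Dx_D}^{1/D}=\kappa^{1/D}\rho$ with $\rho\neq 1$, so $W_{00}\neq W_{11}$. The step I expect to be the main obstacle is full rank: \cref{thm:boolean-source} is stated for $\Sym{2}{F}{\Rp}$, so we need $\det W\neq 0$, that is, $M_{x_0x_D}^2\neq M_{x_0x_0}M_{x_Dx_D}$.

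My first route to full rank is the $2\times 2$ principal submatrix of $M$ on $\{x_0,x_D\}$. Since $M$ is positive definite, this submatrix has positive determinant, so $M_{x_0x_D}^2<M_{x_0x_0}M_{x_Dx_D}$. Taking positive $D$-th roots preserves the strict inequality, which gives $W_{01}^2<W_{00}W_{11}$. So $W$ is positive definite, and in particular full rank.

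Combining the two pieces, $\PlGH(W)$ is \#P-hard, hence $\PlGH(M)$ is \#P-hard. This contradicts the choice of $M$ as a counterexample, so no counterexample exists and the lemma follows. If the root-extraction step needs more care in the bit model, an alternative is to note that $Z_W(G)^D$ is a positive real, so its $D$-th root is uniquely determined and can be computed by standard methods, as in the Potts case.
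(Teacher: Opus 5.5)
Your proposal is correct and follows the paper's proof essentially step for step. The common chain is: a minimal counterexample; uniqueness of $x_0,x_D$ via \cref{cor:binary-levels}; positive definiteness (hence full rank) of $W$ from the $2\times 2$ principal minor of $M$ on $\{x_0,x_D\}$; distinct diagonal entries $\kappa^{1/D}\neq\kappa^{1/D}\rho$; hardness of $\PlGH(W)$ from \cref{thm:boolean-source}; and the reduction $\PlGH(W)\le_T\PlGH(M)$ from \cref{lem:boolean-power}. The only difference is that you spell out the positive $D$-th root extraction, which the paper leaves implicit.
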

\begin{proof}
    Suppose that a minimal counterexample $M\in \Sym{n}{pd}{\Rp}$ exists.
Let $(k_1,\ldots, k_n)$ be its diagonal levels, and let $x_0$ and $x_D$ be the unique index such that $k_{x_0}=0$ and $k_{x_D}=D$.
The principal minor satisfies
\(
M_{x_0x_0}M_{x_Dx_D}>M_{x_0x_D}^2.
\)
Taking positive $D$-th roots gives $W_{00}W_{11}>W_{01}^2$. Thus $W$ is positive definite. 
By \cref{lem:geometric-levels}, there exist $\kappa>0$ and $\rho\neq 1$ such that $M_{ii}=\kappa\rho^{k_i}$.
The diagonal entries of $W$ are $W_{00}=M_{x_0x_0}^{1/D}=\kappa^{1/D}$ and $W_{11}=M_{x_Dx_D}^{1/D}=\kappa^{1/D}\rho$, which differ. 
\cref{thm:boolean-source} makes $\PlGH(W)$ hard.
\cref{lem:boolean-power} gives $\PlGH(W)\le_T \PlGH(M)$, so $\PlGH(M)$ is \#P-hard. 
This contradicts the definition of a minimal counterexample \cref{def:minimal-counterexample}.
\end{proof}

We are ready to prove the nonconstant diagonal hardness theorem for entrywise positive and full rank matrices:
\nonconstantdiag*
\begin{proof}
    Consider $M^2\in \Sym{n}{pd}{\Rp}.$
    By \cref{lem:stretching}, $\PlGH(M^2)\le_T \PlGH(M).$
    If $M^2$ has nonconstant diagonal, then $\PlGH(M^2)$ is \#P-hard by \cref{lem:nonconstant}.
    It follows that $\PlGH(M)$ is \#P-hard.
    Otherwise, $M^2$ has constant diagonal.
    Let $d=(M_{11},M_{22},\ldots,M_{nn})$.
    Then $d$ is realizable unary by \cref{lem:realize_diag_rowsum_colsum}.
    Let $M'=\operatorname{diag}(d_1,\ldots, d_n)M^2\operatorname{diag}(d_1,\ldots, d_n).$
    Then $\PlGH(M')\le \PlGH(M)$ by \cref{lem:realize_diag_K_diag}.
    The matrix $M'$ has diagonal entries $M'_{ii}=d_i^2(M^2)_{ii}$, for $i\in[n].$
    Since $M$ has nonconstant diagonal and $M^2$ has constant diagonal, $M'$ has nonconstant diagonal.
    It is also clear that $M'\in \Sym{n}{pd}{\Rp}.$
    Thus, $\PlGH(M')$ is \#P-hard by \cref{lem:nonconstant}.
    It follows that $\PlGH(M)$ is \#P-hard.
\end{proof}

\section{Extension to Nonnegative Matrices}
\label{app: nnextension}
In this section, we extend the dichotomy for entrywise positive and full rank matrices (\cref{thm:main}) to nonnegative matrices (\cref{thm: nonnegative}).
More precisely, we will prove \cref{thm:nn-dichotomy}.

\begin{theorem}[Dichotomy for nonnegative matrices]
\label{thm:nn-dichotomy}
Let $M\in \Sym{n}{}{\R_{\geq0}}$. 
The problem $\PlGH(M)$ is \#P-hard unless after removing zero rows and columns, every connected component of $M$, after a
simultaneous permutation of rows and columns, has one of the forms:
\begin{enumerate}
    \item\label{item: tractable 1} $K\otimes ww^{\tt T},\,K \text{ is an Ising tensor and } w \text{ is an entrywise positive vector;}$
    \item\label{item: tractable 2} $\begin{pmatrix}O &K\otimes uv^{\tt T}\\K\otimes vu^{\tt T}&O \end{pmatrix},$
    $K$ is an Ising tensor and $u,v$ are entrywise positive vectors;
\end{enumerate}
in which case $\PlGH(M)$ is tractable.
By an Ising tensor we mean a matrix in the form \eqref{eq:ising-tensor}.
The Ising tensor $K$ in different connected components may differ.
An empty Ising tensor is allowed in both \cref{item: tractable 1,item: tractable 2}, which means a nonzero order one matrix.
\end{theorem}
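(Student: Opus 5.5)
The plan is to prove the tractability direction first and then reduce hardness for general nonnegative $M$ to \cref{thm:main}, component by component.

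\emph{Tractability.} The partition function factors over connected components of the input graph. For a connected input $G$, $Z_M(G)$ is the sum over the components $C$ of $M$ of $Z_C(G)$, with isolated vertices contributing trivially. For form~\ref{item: tractable 1} we have $Z_{K\otimes ww^{\tt T}}(G)=Z_K(G)\,Z_{ww^{\tt T}}(G)$. The rank-one factor is a product of vertex weights $\prod_v \bigl(\sum_i w_i^{\deg v}\bigr)$, and $Z_K$ is a tensor product of planar Ising models, which are tractable by FKT with a holographic transformation. Form~\ref{item: tractable 2} is the bipartite analogue. We have $Z(G)=0$ unless $G$ is bipartite; otherwise we sum over the two orientations of the bipartition, each term factoring as $Z_K(G)$ times explicit vertex weights.

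\emph{Hardness.} For a single component $C$ of $M$, we first argue $\PlGH(C)\le_T\PlGH(M)$. We use \cref{lem: support interpolation}-style interpolation together with the standard connected-component argument of \cite{CaiM23,cai-maran-young}: via stretching, the powers $M^{k}$ separate components by Perron–Frobenius eigenvalues, and interpolation isolates the block with the chosen spectral radius. Next, we split $C$ according to whether it is bipartite.

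In the non-bipartite case, $C^{k}$ is entrywise positive for large even $k$, and we may realize $C^{2}$, whose $\log$-structure we analyze. In the bipartite case, $C^{2}=A\oplus A'$ with $A=BB^{\tt T}$ positive after powering.

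To reach the positive full-rank setting of \cref{thm:main}, we pass from $C$ to a realizable entrywise positive matrix. The natural choice is the Gram matrix $C^{2}$, or a positive-definite perturbation $C^{2k}$. We then compress repeated (proportional) rows, since twins give a rank-one factor $ww^{\tt T}$. Concretely, we identify rows of $C$ that are proportional and form the quotient matrix with vertex weights. We show the quotient is full rank, since otherwise an interpolation/thickening gadget exposes a nonconstant diagonal. \cref{thm:nonconstant} then gives hardness; it extends to vertex-weighted positive matrices via \cref{lem:realize_diag_K_diag}.

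Applying \cref{thm:main} to the positive full-rank matrix $N$ obtained from $C$ (or from $B$ in the bipartite case), we conclude that hardness holds unless $N$ is an Ising tensor. In that case we must lift back: we must show $C$ itself equals $K\otimes ww^{\tt T}$ (respectively the bipartite form), not merely that $C^{2}$ does.

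\emph{Main obstacle.} The hard step is this lifting, and the reduction to full rank with vertex weights. Thickening lets us realize $C^{\circ t}$ for positive entries, but zero entries of $C$ obstruct taking $\log_\circ$ directly. I would first handle $C$ with zeros by showing, via \cref{lem: support interpolation}, that the support pattern $K$ is realizable. If the support is not a full block (or a bipartite full block), I would aim to extract a hard 0--1 pattern or a principal submatrix with nonconstant diagonal. This forces $C$ to be entrywise positive (or positive on the bipartite blocks). From there, positivity of $C$ plus thickening lets us apply \cref{thm:main} to $C$ itself after quotienting twins, which avoids the squaring issue.
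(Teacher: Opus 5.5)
Your tractability argument is fine, but the hardness half has genuine gaps, located at the steps you yourself call the main obstacle. The central one is rank. Pairwise nonproportional rows do not imply full rank, and your claim that a rank-deficient quotient can be turned by some thickening or interpolation gadget into a full-rank matrix with nonconstant diagonal has no mechanism behind it. Note that \cref{thm:nonconstant}, \cref{thm:main} and the homogeneous-setting machinery all presuppose full rank, so for a singular positive $R$ none of them can be invoked.

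Thickening alone can fail completely. For positive $a\neq b$, the matrix $R=\left(\begin{smallmatrix} a&b&a&b\\ b&a&b&a\\ a&b&b&a\\ b&a&a&b\end{smallmatrix}\right)$ is symmetric and entrywise positive with pairwise nonproportional rows. Yet its rows satisfy $r_1^{\circ t}+r_2^{\circ t}=r_3^{\circ t}+r_4^{\circ t}=(a^t+b^t)\mathbf 1$, so every Hadamard power $R^{\circ t}$ is singular.

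The idea you are missing is \cref{lem:nn-positive-kernel}. The Cauchy--Schwarz normalized Gram matrix $A=\operatorname{diag}(h)^{-1}R^2\operatorname{diag}(h)^{-1}$, with $h_i=\sqrt{(R^2)_{ii}}$, is realizable and satisfies $A_{ii}=1>A_{ij}$ for $i\neq j$; this is precisely where nonproportionality is used. Hence a large Hadamard power $B=A^{\circ k}$ is strictly diagonally dominant, so invertible. The realizable analytic family $R(t)=B^{\circ t}\circ R^{\circ(1-t)}$ is then full rank near $t=1$, and \cref{thm:main} applies there. Pigeonholing over the finitely many bijections $\phi$, combined with the identity theorem, transports $\log_\circ R(t)\in\mathcal V_\phi$ back to $t=0$, i.e.\ to $R$ itself.

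Your treatment of zero entries is also unsupported. Extracting ``a hard $0$--$1$ pattern'' would require a dichotomy for planar $0$--$1$ matrices, which is not available here (it is a special case of the theorem being proved). Likewise, \cref{lem: interpolation extract principle submatrix} needs a realizable unary with values in $(0,1]$ equal to $1$ exactly on the target index set, and you do not construct one.

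The paper instead uses odd powers (\cref{lem:nn-nonbip-core}). For connected nonbipartite $R$ with nonproportional rows, $R^\ell$ with $\ell$ odd and large is entrywise positive and still has nonproportional rows, since $\ker R^\ell=\ker R$. So it is an Ising tensor by the previous step, and $R$ is its unique real symmetric $\ell$-th root, again an Ising tensor. This removes the zeros and the lifting problem you raise at once. Odd rather than even powers are essential, since a real symmetric square root is not unique.

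In the bipartite case the object to classify is the rectangular block $B$, to which \cref{thm:main} does not apply. Knowing that $BB^{\tt T}$ and $B^{\tt T}B$ are Ising tensors does not determine $B$. The paper needs the separate \cref{lem:nn-matching}, which uses the whole family $U(t)=\exp_\circ(tL)$, the second-order term of $P\log_\circ(U(t)U(t)^{\tt T})P$, and a $\log\cosh$ identity to show that $L=S_\ell AS_r^{\tt T}$ with $A$ a weighted permutation matrix.

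You also do not explain how to realize the unweighted quotient from $M$. The paper realizes the class indicator $T_A$ from the same normalized Gram matrix and uses a cycle gadget (\cref{lem: realize compressed matrix C}). Nor do you show why the twin weights coincide across classes, which is what turns the quotient back into $K\otimes ww^{\tt T}$. The paper obtains this from the weighted quotients $C_f$ of \cref{lem:nn-quotient} and Newton's identities.

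Finally, your Perron--Frobenius separation of components breaks down when components share spectral data; the paper uses the pinning-based \cref{lem:nn-component}.
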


%Recall from \cref{def:support-components} that the connected components of $M$ are defined as the connected components of its underlying graph.Throughout this section, we do not distinguish between a connected component of the underlying graph and the corresponding matrix component in $M$.

\subsection{Preliminaries}\label{subsec: appc preliminary}
The underlying graph of $M\in \Sym{n}{}{\R_{\ge 0}}$ is the undirected graph with vertex set $[n]$, where $i$ and $j$ are adjacent iff $M_{ij}>0.$
(If $i=j$ this is a self-loop.)
We say $M$ is bipartite (resp, nonbipartite) if its underlying graph is bipartite (resp, nonbipartite).
We say $M$ is connected if its underlying graph is connected.
We do not distinguish the connected components of the underlying graph of $M$ and their corresponding matrix components in $M.$

\begin{lemma}\label{lem: realize TA}
    Let $A\in \Sym{n}{}{\R_{\ge0}}$, with $0\le A_{ij}\le 1$ for all $i,j\in [n]$.
    Let $T_A\in \Sym{n}{}{\R_{\ge 0}}$ satisfy $(T_A)_{ij}=\mathbb{I}[A_{ij}=1],$ for all $i,j\in [n].$
    Then $\PlGH(T_A)\le_T \PlGH(A).$
\end{lemma}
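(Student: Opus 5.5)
This is a standard interpolation on the thickened graph. Given a planar input $G=(V,E)$, we will use thickening (\cref{cor:thickening}) to obtain oracle access to $Z_{A^{\circ \ell}}(G)$ for every positive integer $\ell$. Each such value is computed by replacing every edge with $\ell$ parallel edges, which preserves planarity. Here we use the convention $0^\ell=0$ for $\ell\ge1$, so zero entries stay zero.

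Let $x_1,\ldots,x_d$ be the distinct nonzero entries of $A$ that are strictly less than $1$; they lie in $(0,1)$. Stratify the sum over assignments $\sigma:V\to[n]$ as follows. Any $\sigma$ that uses an edge with $A_{\sigma(u)\sigma(v)}=0$ contributes $0$ for every $\ell\ge1$. For the remaining $\sigma$, let $j_i$ be the number of edges mapped to the value $x_i$. Then
\[
Z_{A^{\circ\ell}}(G)=\sum_{x\in X} c_x\, x^{\ell},\qquad X=\Bigl\{\textstyle\prod_{i=1}^d x_i^{j_i}\ \Bigm|\ j_i\ge0,\ \sum_i j_i\le |E|\Bigr\}.
\]
The coefficient $c_x$ counts the assignments $\sigma$ with no zero edge whose product of edge values, where entries equal to $1$ contribute trivially, equals $x$. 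The set $X$ has at most $\binom{|E|+d}{d}$ elements, which is polynomially many, and its elements can be enumerated in polynomial time. The key observation is that, since every $x_i<1$, the product equals $1$ exactly when all $j_i=0$, that is, when every edge is mapped to an entry equal to $1$. Hence $c_1=Z_{T_A}(G)$.

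Next, querying $\ell=1,\ldots,|X|$ gives a Vandermonde system in the unknowns $(c_x)_{x\in X}$. The nodes are distinct by construction, so the system is invertible, and solving it recovers $c_1$ in polynomial time. This mirrors the proof of \cref{lem: interpolation extract principle submatrix}. Alternatively, one can apply \cref{lm:conformal lattice interpolation} as in \cref{lem: support interpolation}, with target values $y_i=0$ handled by this direct Vandermonde argument.

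The main point to handle with care is the bit-model issue when the entries of $A$ are arbitrary reals. As elsewhere in the paper, following \cite[Appendix A]{cai-maran-young}, the interpolation is carried out in the Turing bit model. The real obstacle is only that different exponent vectors may yield the same product, which could make nodes coincide. We avoid this by grouping terms by the actual value $x$ rather than by the exponent vector, so that the nodes of the Vandermonde system are the distinct values in $X$. The only remaining fact needed is that $x=1$ arises solely from all $j_i=0$, and this is guaranteed by $0<x_i<1$.
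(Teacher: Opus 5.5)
Your proposal is correct and is essentially the paper's proof: you thicken to obtain $Z_{A^{\circ\ell}}(G)$, group assignments by the actual product value, solve the resulting Vandermonde system, and read off the coefficient of $1$, which equals $Z_{T_A}(G)$ because every entry other than $1$ is strictly less than $1$. Your explicit removal of assignments that hit a zero entry is slightly cleaner than the paper's version: the paper's value set $X$ can contain $0$, which as written would give an identically zero column in the Vandermonde system.
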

\begin{proof}
    The proof is similar to that of \cref{lem: interpolation extract principle submatrix}.
    The case $A_{ij}<1$ for all $i,j\in [n]$ is trivial, as $T_A$ is the zero matrix in this case.
    In the following we assume that $T_A$ is not the zero matrix.
    Input a planar graph $G=(V,E)$,
    define 
    \[
    F_G(k)=\sum_{\sigma:V\to [n]}\prod_{(u,v)\in E}A_{\sigma(u)\sigma(v)}^k.
    \]
    By thickening (\cref{cor:thickening}), $A^{\circ k}$ is realizable for every $k\in \Zp.$
    Oracle access to $\PlGH(A)$ gives $F_G(k)$ for any $k\in \mathbb{Z}_{>0}$ in polynomial time (in terms of the size of $G$). 
    Define $X$ to be the set of all possible product values $\prod_{(u,v)\in E}A_{\sigma(u)\sigma(v)}$ appearing in $F_G(k)$:
    $$
    X=\left\{\prod_{i,j=1}^n A_{ij}^{t_{ij}}\mid t_{ij}\in \mathbb{Z}_{\ge 0}\text{ and } \sum_{i,j=1}^n t_{ij}=|E|\right\}. 
    $$
    There are $\binom{|E|+n^2-1}{n^2-1}=|E|^{O(1)}$ many integer sequences $(t_{ij})_{i,j\in[n]}$ such that $t_{ij}\in \mathbb Z_{\ge 0}$ and $\sum_{i,j=1}^n t_{ij}=|E|.$
    Note that $n$ is a constant.
    So $X$ has polynomial size in $G$, and elements in $X$ are enumerable in polynomial time.
    In particular, $1\in X$ since $T_A$ is not the zero matrix.
    Define $A_x$ to be the coefficient of $x\in X$ appearing in $F_G(k):$
    \[
    A_x=\left|\left\{\sigma:V\to [n]\mid \prod_{(u,v)\in E}A_{\sigma(u)\sigma(v)}=x\right\}\right|\quad (\forall x\in X).
    \]
    Then $F_G(k)=\sum_{x\in X}A_x x^{k}, k\in \Zp.$
    Query $F_G(1),F_G(2),\ldots, F_G(|X|)$, we obtain a full rank Vandermonde system in variables $(A_x)_{x\in X}.$
    Solving the Vandermonde system, we recover $A_1$ in polynomial time.
    Note that $\prod_{(u,v)\in E}A_{\sigma(u)\sigma(v)}=1$ $\iff$ $A_{\sigma(u)\sigma(v)}=1$ for every $(u,v)\in E$ $\iff$ $(T_A)_{\sigma(u)\sigma(v)}=1$ for every $(u,v)\in E$ $\iff$ $\prod_{(u,v)\in E}(T_A)_{\sigma(u)\sigma(v)}=1$.  
    Thus,
    \[
    A_1=\sum_{\sigma: V\to [n]}\prod_{(u,v)\in E}(T_A)_{\sigma(u)\sigma(v)}=Z_{T_A}(G).
    \]
    This proves $\PlGH(T_A)\le_T \PlGH(A).$
\end{proof}

Also, \cref{lem:thickening_interpolation} can be generalized to the nonnegative setting:

\begin{lemma}\label{lem:nn-thickening_interpolation}
    Let $k\in\{1,2\}$.  
    Let $f$ be a real  signature of arity $k$ with nonnegative values on domain $[n].$
    Then for arbitrary signature set $\mathcal{G},$
    $\PlHolant(\mathcal{EQ}\mid f^t,\mathcal{G})\le \PlHolant(\mathcal{EQ}\mid f,\mathcal{G})$ for every $t\in \R$. Here $f^t$ is the arity $k$ signature satisfying $f^t({\bf x})=(f({\bf x}))^t$ if $f(\mathbf{x})>0$, and $f^t({\bf x})=0$ if $f(\mathbf{x})=0$, $\forall \mathbf{x}\in[n]^k.$
\end{lemma}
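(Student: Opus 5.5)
We reduce from $\PlHolant(\mathcal{EQ}\mid f,\mathcal{G})$ to the same problem with $f^t$, by polynomial interpolation on the positive values of $f$, as in \cref{lem:thickening_interpolation}. Zeros of $f$ are handled by restricting attention to the support, and the convention $f^t(\mathbf{x})=0$ keeps that support unchanged.

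Fix an instance $\Omega$ of $\PlHolant(\mathcal{EQ}\mid f^t,\mathcal{G})$ in which $f^t$ occurs $N$ times. Let $x_1,\ldots,x_d$ be the distinct positive values of $f$. Group the edge assignments $\sigma$ by the vector $(j_1,\ldots,j_d)$, where $j_r$ counts the occurrences of $f$ (or $f^t$) evaluated at an input with value $x_r$. Assignments that hit a zero of $f$ contribute $0$ in both problems, since $f^t$ vanishes exactly where $f$ does; we discard them. This gives
\[
\Holant_{\Omega}(f^t)=\sum_{j_1+\cdots+j_d=N} z_{j_1,\ldots,j_d}\prod_{r=1}^d x_r^{t j_r},
\]
where each coefficient $z_{\mathbf{j}}$ collects the contributions of all other signatures and is independent of $f$. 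If $d=0$, the answer is trivially $0$.

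Next, replace each occurrence of $f^t$ by $\ell$ parallel copies of $f$. For a binary signature these are thickened edges, as in \cref{lem:thickening}; for a unary one they are $\ell$ unaries attached to the same vertex. This preserves planarity and yields
\[
N_\ell=\sum_{\mathbf{j}} z_{\mathbf{j}}\Bigl(\prod_r x_r^{j_r}\Bigr)^{\ell}
\]
for $\ell=1,\ldots,\binom{N+d-1}{d-1}$, each obtained with one oracle call. The lattice condition of \cref{lm:conformal lattice interpolation} holds: if $\prod_r x_r^{m_r}=1$ with $\sum_r m_r=0$, then $\prod_r (x_r^t)^{m_r}=1$ because all $x_r>0$. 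So \cref{lm:conformal lattice interpolation} recovers $N_1(x_1^t,\ldots,x_d^t)=\Holant_\Omega(f^t)$ in time polynomial in $N$.

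Alternatively, we can group directly by the distinct values $\prod_r x_r^{j_r}$, which number polynomially many, and solve the resulting Vandermonde system. Equal products give equal $t$-th powers, so each group contributes $y^t A_y$ once the coefficients $A_y$ are recovered. This direct grouping is the cleanest route and avoids citing the lattice lemma at all.

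The only subtlety we expect is bookkeeping. We must justify that the zero entries neither interfere with the interpolation nor change which assignments are supported. The convention $0^t:=0$ handles both, even for $t\le 0$.
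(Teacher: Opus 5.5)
Your proposal is correct and follows the paper's route: the paper omits the details, saying the argument parallels \cref{lem: support interpolation} via conformal lattice interpolation, and that is your main argument. Your three checks are the right ones: zeros of $f$ kill the same assignments on both sides (using $0^{\ell}=0$ for $\ell\ge 1$ and the convention $f^t=0$ there), thickening preserves planarity, and $\mathfrak{L}(x_1,\ldots,x_d)\subseteq\mathfrak{L}(x_1^t,\ldots,x_d^t)$ holds because all $x_r>0$. Your alternative of grouping by distinct product values and solving the Vandermonde system is also valid; it is essentially the proof of the lattice lemma specialized to this case.
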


Its proof is similar to that of \cref{lem: support interpolation}, making use of the conformal lattice interpolation (\cref{lm:conformal lattice interpolation}).
For brevity, we omit the details here.

The following lemma is Lemma 4.6 in~\cite{cai2013graph} for general
$\GH$, but is also valid for
$\PlGH$; 
the proof uses Lemma 4.1 in~\cite{cai2013graph}, called the first pinning lemma, 
which uses a planar construction.

\begin{lemma}[{\cite[Lemma 4.6]{cai2013graph}}]
\label{lem:nn-component}
Let $m\in \Zp$ and $N \in \mathrm{Sym}_m(\mathbb{R}_{\ge 0})$.
Suppose $N_C$ is an arbitrary connected component of $N$.
Then $\PlGH(N_C)\le_T \PlGH(N).$
\end{lemma}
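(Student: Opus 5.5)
The statement to prove is \cref{lem:nn-component}: for a connected component $N_C$ of $N$, $\PlGH(N_C)\le_T\PlGH(N)$.

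The plan is to isolate the component $N_C$ by \emph{pinning}: force one vertex of each connected input graph to take a value inside $C$. Since the input graph is connected, that single pinned vertex already confines every other vertex to $C$.

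\textbf{Step 1 (reduce to connected inputs).} Given a planar input $G$, write $Z_{N_C}(G)$ as the product over the connected components of $G$ of their partition functions. So it suffices to handle connected planar $G$.

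\textbf{Step 2 (the pinned sum is enough).} For connected $G$, every assignment $\sigma$ with nonzero weight maps all of $V(G)$ into a single connected component of the underlying graph of $N$. Edges with weight $0$ kill the term. The only exception is an edgeless $G$, which is trivial.

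Fix a vertex $v_0\in V(G)$. Then
\[
Z_{N_C}(G)=\sum_{i\in C} Z_N(G,\, v_0\mapsto i),
\]
where $Z_N(G,\,v_0\mapsto i)$ is the sum of the weights of those $\sigma$ with $\sigma(v_0)=i$. Thus it suffices to compute each pinned value $Z_N(G,\,v_0\mapsto i)$ with oracle access to $\PlGH(N)$.

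\textbf{Step 3 (pinning via the first pinning lemma).} This is exactly the content of the first pinning lemma, \cite[Lemma 4.1]{cai2013graph}. Its construction glues $v_0$ of many copies of $G$ (or of $G$ together with auxiliary gadgets) at a single vertex, which keeps the graph planar. The oracle then returns power sums
\[
\sum_i Z_N(G,\,v_0\mapsto i)^k .
\]
To extract individual terms, the argument groups indices $i$ into equivalence classes: $i\sim j$ if $Z_N(H,\,v_0\mapsto i)=Z_N(H,\,v_0\mapsto j)$ for every pinned planar graph $H$. Vandermonde interpolation over such gluings (identifying graphs at the distinguished vertex) then recovers the class sums. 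By a Lovász-type argument, $\sim$ refines the component structure, so the sum over a class-union equal to $C$ is obtainable.

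The main obstacle is checking that this separation works using only planar gadgets, i.e., that every gluing used is at a single vertex placed on the outer face. I would verify that each step of the cited proof uses only one-vertex amalgamations, which preserve planarity, so the proof carries over verbatim to $\PlGH$.
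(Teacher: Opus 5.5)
Your route is the paper's own: the paper simply invokes Cai--Chen--Lu's Lemma~4.6 and notes that the first pinning lemma behind it uses only planar (one-vertex) gluings. Your Steps~1--2 are exactly that reduction, namely that for connected $G$ one has $Z_{N_C}(G)=\sum_{i\in C}Z_N(G,v_0\mapsto i)$.

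One assertion in your sketch of Step~3 is false, although the argument does not need it. The relation $\sim$ does \emph{not} refine the component partition. Take $N=A\oplus A$ and let $\pi$ be the automorphism swapping the two copies. Then $Z_N(H,w\mapsto i)=Z_N(H,w\mapsto \pi(i))$ for every pinned graph $H$. So each $i$ is equivalent to its twin in the other component, and $C$ is not a union of classes.

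What you actually need is already built into the definition of $\sim$: all indices in one class have the same pinned value for every planar $G$. Dividing each recovered class sum by the class size therefore gives every individual $Z_N(G,v_0\mapsto i)$, which is precisely the conclusion of the pinning lemma. You then sum these over $i\in C$ directly.
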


In the following, we fix a nonzero matrix $M\in \Sym{n}{}{\R_{\ge 0}}$.
Removing 0 rows and columns clearly does not change the complexity
of $\PlGH(M)$.
%%%% by M being symmetric, they are  the same indices. just isolated vertices w/o loop
So we assume $M$ has no 0 rows and columns. We also assume that $M$ is connected.
Assume the problem $\PlGH(M)$ is not \#P-hard, 
we show $M$ has the form in either \cref{item: tractable 1} or \cref{item: tractable 2}.
All realizability is relative to this $M$ unless otherwise stated.

\subsection{Remove Proportional Rows}
\label{subsec:nn-quotient}

\begin{definition}[Proportional relation]
    %Let $0\neq M\in\Sym{n}{}{\R_{\ge 0}}$ be connected.
    Define the relation ``$\sim$" on $[n]$ by: $i\sim j\iff $ there exists $\lambda\neq 0$ such that $M_{ik}=\lambda M_{jk}$ for every $k\in [n].$
    Clearly, this is an equivalence relation.
    Let $P_1,\ldots, P_q$ denote its equivalence classes.
    We call them \emph{proportional classes}.
    Let $n_a:=|P_a|$ for every $a\in [q].$
    {\rm (}$M$ has no 0 rows, $\{P_1,\ldots, P_q\}$ is a partition of $[n]$,
    $\sum_{a\in [q]} n_a = n$.{\rm )}
\end{definition}

\begin{definition}[Compressed matrix]\label{def: compressed matrix}
    Define $s_i:=\sqrt{(M^2)_{ii}}$ for every $i\in [n]$, let $s=(s_1,\ldots, s_n).$
    Define the matrix $C\in \Sym{q}{}{\R_{\ge 0}}$, with entries 
    \[
    C_{ab}:=\frac{M_{ij}}{s_i s_j} \qquad (\forall a,b\in [q]),
    \]
    %%% with M having >=0, but no 0 rows/col we still have s_i >0. but C can have
    % 0 entries still. 
    where $i\in P_a$ and $j\in P_b.$
    We call $C$ the \emph{compressed matrix} of $M.$
\end{definition}

Since $M\in \Sym{n}{}{\R_{\ge 0}}$ has no zero rows, $s_i=\sqrt{\sum_{k\in [n]}M_{ik}^2}>0$, for every $i\in [n].$
It is easy to verify that the definition of $C_{ab}$ is independent of the choices of representatives $i\in P_a$ and $j\in P_b$.
Thus, $C$ is well-defined.
Also, $C$ has pairwise nonproportional rows.

\begin{lemma}\label{lem: realize s}
    The unary $s\in \R^n_{>0}$ is realizable.
\end{lemma}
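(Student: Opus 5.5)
We realize $s$ by first realizing the unary $(s_1^2,\ldots,s_n^2)=((M^2)_{11},\ldots,(M^2)_{nn})$ and then taking an entrywise square root.

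For the first step, the matrix $M^2$ is realizable from $M$ by stretching (\cref{lem:stretching}); that argument uses only a planar path gadget and never uses positivity. Its diagonal is then realizable as a unary by the loop construction in \cref{lem:realize_diag_rowsum_colsum}. Concretely, we attach to a vertex $v$ (labeled by an \textsc{Equality} signature) a loop of length $2$ with one intermediate vertex $w$, and put $M$ on both edges of the loop. Summing over the color of $w$ gives the weight $\sum_k M_{ik}M_{ki}=(M^2)_{ii}$ at $v$ colored $i$. This gadget is planar and preserves the partition function, so the unary $s^{\circ 2}$ is realizable relative to $M$.

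For the second step, every entry $s_i^2=\sum_k M_{ik}^2$ is strictly positive, because $M$ has no zero rows. Hence $s^{\circ 2}\in\R_{>0}^n$, and we can apply \cref{lem:thickening_interpolation} with $t=1/2$; equivalently we can use \cref{lem:nn-thickening_interpolation}, which allows nonnegative entries. This gives $(s^{\circ 2})^{\circ 1/2}=s$ realizable. The reduction chain is the same as in \cref{cor:thickening}: $\PlHolant(\mathcal{EQ}\mid s,\mathcal G)\le_T\PlHolant(\mathcal{EQ}\mid s^{\circ 2},\mathcal G)\le_T\PlHolant(\mathcal{EQ}\mid\mathcal G)$ for every signature set $\mathcal G$ containing $M$.

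The only point that needs care is whether the earlier gadget lemmas, stated in the positive setting, still apply. The stretching and loop-diagonal arguments are purely combinatorial gadget replacements and hold for arbitrary real signatures. The interpolation step needs strictly positive values, and the no-zero-rows assumption supplies exactly that.
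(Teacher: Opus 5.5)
Your proposal is correct and follows essentially the same route as the paper. You realize $M^2$ by stretching, extract its diagonal $(s_1^2,\ldots,s_n^2)$ as a unary via \cref{lem:realize_diag_rowsum_colsum}, and take entrywise square roots by \cref{lem:thickening_interpolation}, which applies because no zero rows makes every $s_i^2>0$.
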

\begin{proof}
    By \cref{lem:stretching}, $M^2$ is realizable.
    By \cref{lem:realize_diag_rowsum_colsum}, $((M^2)_{11},\ldots, (M^2)_{nn})\in \Rp^{n}$ is realizable.
    By \cref{lem:thickening_interpolation}, $s$ is realizable.
\end{proof}

\begin{lemma}\label{lem: realize compressed matrix C}
    $\PlGH(C)\le_T \PlGH(M)$, where $C$ is the compressed matrix of $M.$
\end{lemma}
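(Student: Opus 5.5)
The plan is to reduce $\PlGH(C)$ to $\PlHolant(\mathcal{EQ}\mid M, s^{\circ -1})$ by a vertex-class summation, and then absorb the unary weights using \cref{lem: realize s} and \cref{lem:nn-thickening_interpolation}.

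First, the unary $s^{\circ -1}=(s_1^{-1},\ldots,s_n^{-1})$ is realizable. By \cref{lem: realize s}, $s$ is a realizable positive unary, and taking $t=-1$ in \cref{lem:thickening_interpolation} gives the claim.

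Next comes the class decomposition. For $i\in P_a$ we have $M_{ij}=s_is_jC_{ab}$ for all $j\in P_b$. Summing over a vertex $v$ of degree $\deg(v)$ whose colour ranges over a class $P_a$ therefore produces the weight
\[
w_a(k)=\sum_{i\in P_a}s_i^{k},\qquad k=\deg(v).
\]
Each edge then contributes $C_{ab}$. Given a planar $G=(V,E)$, I would attach to every vertex $v$ exactly $\deg(v)$ copies of $s^{\circ-1}$. This gadget is planar and realizable, since unaries are simply attached to vertices. Its value is
\[
\sum_{\tau:V\to[q]}\prod_{uv\in E}C_{\tau(u)\tau(v)}\prod_{v\in V}\Bigl(\sum_{i\in P_{\tau(v)}}s_i^{\deg(v)}s_i^{-\deg(v)}\Bigr)=\sum_{\tau}\prod_{uv\in E}C_{\tau(u)\tau(v)}\prod_{v}n_{\tau(v)}.
\]
So we obtain $\PlHolant(\mathcal{EQ}\mid C,\nu)$ with vertex weights $\nu=(n_1,\ldots,n_q)$. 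The intended target is the unweighted $Z_C(G)$.

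The obstacle is removing the vertex weights $n_a$. I would try the following first. Interpolate via the generalized weights: attaching $\deg(v)+t$ copies of $s^{\circ-1}$ gives weights $w_a(-t)=\sum_{i\in P_a}s_i^{-t}$ for integer $t$. If $s$ happens to be constant on each class, this only yields scalings of $n_a$, which does not remove them. So instead I would use the realizable unary $\mathbf{1}$ (obtained from $s^{\circ 0}$) together with a parametrized unary family $u_t$. The unary $\nu^{\circ t}$ on $[q]$ should be realizable after the reduction, because $n_a$ can be generated from the gadget with $t$ extra loops. Then \cref{lem:nn-thickening_interpolation} with exponent $-1$ gives $\nu^{\circ-1}$, which cancels the weights.

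Concretely, I would realize the unary $\nu$ on the $C$-side by a degree-zero vertex gadget. This is a pendant vertex carrying no $s$-weights, attached through one $M$ edge and normalized by $s$. I would then apply \cref{lem:nn-thickening_interpolation} at $t=-1$ to get $\nu^{\circ-1}$, and thicken it onto each vertex. That yields $Z_C(G)$ exactly, giving $\PlGH(C)\le_T\PlGH(M)$. I expect this weight-cancellation step to be the part that needs the most care in checking that the gadget on the $C$-domain is legitimate.
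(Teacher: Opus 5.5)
Your first step is fine. Attaching $\deg(v)$ copies of $s^{\circ-1}$ at each vertex turns $Z_M(G)$ into $\sum_{\tau}\prod_{uv\in E}C_{\tau(u)\tau(v)}\prod_{v}n_{\tau(v)}$, so everything hinges on cancelling the class sizes $n_a$. The step you propose for this does not work.

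Every vertex on the $M$-side automatically carries its class count. So the pendant-vertex gadget you describe induces at $v$, after cancelling $s$-factors, the class function $\sum_b C_{ab}y_b$. Here $y_b=\sum_{j\in P_b}s_j$, or $y_b=n_b$ if the pendant vertex carries one copy of $s^{\circ-1}$. This is $(Cy)_a$, not $n_a$. An isolated vertex only contributes the global factor $n$, and extra self-loops only contribute $C_{aa}$.

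Nor can \cref{lem:nn-thickening_interpolation} help, since it only changes the exponent of a unary that is already freely attachable. The only meaning of ``realizing $\nu$ on the $C$-side'' in this framework is realizing on $[n]$ the class-size unary $r$ (or a power of it), where $r_i=n_a$ for $i\in P_a$. No step of your proposal produces it. As you yourself observed, $s$-powers cannot substitute for it, and $s$ can be constant while class sizes differ. For example, $M=\frac15\left(\begin{smallmatrix}\sqrt3&1&1\\1&\sqrt2&\sqrt2\\1&\sqrt2&\sqrt2\end{smallmatrix}\right)$ has $s\equiv 5^{-1/2}$, $n_1=1$, $n_2=2$.

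The missing idea is a gadget that detects the proportionality relation itself. The paper takes $A=\operatorname{diag}(s)^{-1}M^2\operatorname{diag}(s)^{-1}$. Its entries $A_{ij}=\langle M_{i,*},M_{j,*}\rangle/(\|M_{i,*}\|\,\|M_{j,*}\|)$ lie in $[0,1]$, and by Cauchy--Schwarz they equal $1$ exactly when $i\sim j$. \cref{lem: realize TA} interpolates over the thickenings $A^{\circ k}$ and keeps only the product value $1$. This realizes the indicator matrix $T_A$ of $\sim$, whose row-sum unary is exactly $r$.

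With $r$ in hand your construction closes at once. Attach $r^{\circ-1}$ (\cref{lem:thickening_interpolation} with $t=-1$) once more at each vertex. Each class sum then becomes $\sum_{i\in P_a}1/n_a=1$, and the value is exactly $Z_C(G)$. The paper achieves the same cancellation differently, by replacing each vertex with a cycle of $R_A=\operatorname{diag}(r)^{-1}T_A$ edges. Your vertex-level bookkeeping is a slightly more direct way to finish once $r$ is available.
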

\begin{proof}
    Consider the matrix $A=\diag(s_1^{-1},\ldots, s_n^{-1})M^2 \diag (s_1^{-1},\ldots,s_n^{-1}).$
    By \cref{lem: realize s}, $s$ is realizable.
    So $s^{-1}=(s_1^{-1},\ldots, s_{n}^{-1})$ is realizable by \cref{lem:thickening_interpolation}.
    It follows that $A$ is realizable from \cref{lem:realize_diag_K_diag}.
    Since $M$ is symmetric, we can write $A$ entrywise as
    \[
    A_{ij}=\frac{\sum_{k=1}^n M_{ik}M_{kj}}{\sqrt{\sum_{k=1}^nM_{ik}^2}\sqrt{\sum_{k=1}^nM_{jk}^2}}=\frac{\braket{M_{i,*}\,,M_{j,*}}}{\|M_{i,*} \|\cdot \|M_{j,*}\|}
    \qquad (\forall i,j\in[n])
    \]
    where $M_{i,*}$ is the $i$-th row of $M.$
    By Cauchy-Schwarz inequality, $0\le A_{ij}\le 1$, and $A_{ij}=1$ iff $M_{i,*}$ and $M_{j,*}$ are proportional, i.e., $i$ and $j$ belong to the same proportional class.
    Let $T_A\in \Sym{n}{}{\R_{\ge0}}$ be the matrix with entries
    \[
    (T_A)_{ij}=\begin{cases}
        1, &\text{if }i\sim j,\\
        0, &\text{otherwise}
    \end{cases}
    \qquad(\forall i,j\in [n]).
    \]
    Then $(T_A)_{ij}=\mathbb{I}[A_{ij}=1].$
    By \cref{lem: realize TA}, $T_A$ is realizable.
    By \cref{lem:realize_diag_rowsum_colsum}, the row sum of $T_A$, denoted by $r=(r_1,\ldots, r_n)$, is realizable.
    Note that $r_i=n_a$, whenever $i\in P_a$.
    Consequently, the matrix $R_A=\diag(r_1^{-1},\ldots, r_n^{-1})T_A$ is realizable by \cref{lem:realize_diag_rowsum_colsum}.
    We have 
    \[
    (R_A)_{ij}=\begin{cases}
        \frac{1}{n_a},& \text{if }i,j\in P_a,\\
        0,&\text{otherwise} 
    \end{cases}
    \qquad(\forall i,j\in [n]).
    \]

    Now we are ready to prove the reduction $\PlGH(C)\le_T\PlGH(M).$
    Consider a planar graph $G=(V,E)$ as an instance of $\PlGH(C).$
    Without loss of generality, we may assume that $G$ is connected.
We construct the planar graph $G'$ as follows (see \cref{fig:incidence2}).
Replace every vertex $v\in V$ of  $G$ by a cycle $C(v)$ of length $\deg(v)$.
The original edges in $E$ connected to $v$ are now connected to exactly one vertex in the cycle.
Label $R_A$ on every edge in the cycle, and label \(M\) on every original edge in $E$.
Finally, attach the unary \(s^{-1}=(s_1^{-1},\ldots,s_{n}^{-1})\) at every vertex in $G'.$
We obtain an instance of $\PlHolant(\mathcal{EQ}\mid M,R_A,s).$

\begin{figure}[htbp]
\centering
\begin{tikzpicture}[scale=.8,every node/.style={font=\small},
  vertex/.style={circle,draw,fill=white,inner sep=2pt}]
  \begin{scope}[xshift=-3.7cm]
    \node[vertex,label=below:$v$] (v) at (0,0) {};
    \foreach \a in {30,120,210,300}{\draw (v)--(\a:1.65);}
    \node at (0,-2.05) {};
  \end{scope}
  \draw[-{Stealth[length=2mm]}] (-1.25,0)--(-.35,0);
  \begin{scope}[xshift=1.8cm]
    \draw[dashed,gray] (0,0) circle (1.28);
    \foreach \a/\i in {30/1,120/2,210/3,300/4}{
      \node[vertex] (p\i) at (\a:.82) {};
      \draw (p\i)--(\a:1.65);
    }
    \draw[thick] (p1)--(p2)--(p3)--(p4)--(p1);
    \node at (0,0) {};
    \node at (0,-2.05) {};
  \end{scope}
\end{tikzpicture}
\caption{$G=(V,E)$ is an instance of $\PlGH(C).$
Replace every $v\in V$ by a cycle $C(v)$ of length $\deg(v)$.
The cycle edges are labeled by the binary signature $R_A$; 
each original edge are labeled by $M$.
Every vertex is attached with the unary $s^{-1}$.
We obtain $G'$ as an instance of $\PlHolant(\mathcal{EQ}\mid M,R_A,s)$.}
\label{fig:incidence2}
\end{figure}

Since $R_A$ and $s$ are both realizable, we can query
\begin{equation}\label{eq: ZMRAsG}
    Z_{M,R_A,s}(G'):=
\sum_{\sigma:V(G')\to [n]}\prod_{(u,v)\in E}M_{\sigma(u)\sigma(v)}\prod_{v\in V(G)}\prod_{(i,j)\in E(C(v))}(R_A)_{\sigma(i)\sigma(j)}\prod_{u\in V(G')}s^{-1}_{\sigma(u)}.
\end{equation}

For every $\sigma:V(G')\to [n]$ contributing a nonzero term in \eqref{eq: ZMRAsG},
it maps all vertices in every fixed cycle $C(v)$ to the same proportional class, since $(R_A)_{\sigma(i)\sigma(j)}$ is nonzero iff $\sigma(i)\sim \sigma(j)$.
Therefore, $\sigma$ first chooses a proportional class indexed by $\tau(v)\in[q]$ for every original vertex $v\in V$, then maps every vertex in the cycle $C(v)$ to the proportional class $P_{\tau(v)}.$
We may stratify the sum \eqref{eq: ZMRAsG} according to $(\tau(v))_{v\in V}$.
The contribution of all $\sigma$ which chooses the proportional class $P_{\tau(v)}$ for $v\in V$ in \eqref{eq: ZMRAsG} is

\begin{equation}\label{eq: stratify by tau}
    \sum_{\substack{\sigma:V(G')\to [n]\\ \sigma(C(v))\subseteq P_{\tau(v)},\forall v\in V}}\prod_{(u,v)\in E}M_{\sigma(u)\sigma(v)}\prod_{v\in V}\left(\frac{1}{n_{\tau(v)}}\right)^{\deg(v)}\prod_{u\in V(G')}s^{-1}_{\sigma(u)}.
\end{equation}
Plugging in $M_{\sigma(u)\sigma(v)}=C_{\tau(u)\tau(v)}s_{\sigma(u)}s_{\sigma(v)}$, \eqref{eq: stratify by tau} becomes 
\begin{equation}\label{eq: stratify by tau 2}
    \sum_{\substack{\sigma:V(G')\to [n]\\ \sigma(C(v))\subseteq P_{\tau(v)},\forall v\in V}}\prod_{(u,v)\in E}C_{\tau(u)\tau(v)}\prod_{v\in V}\left(\frac{1}{n_{\tau(v)}}\right)^{\deg(v)}.
\end{equation}
For each cycle $C(v)$ of length $\deg(v)$ being mapped to $P_{\tau(v)}$, there are $|P_{\tau(v)}|^{\deg(v)}=n_{\tau(v)}^{\deg(v)}$ many choices.
Then \eqref{eq: stratify by tau 2} becomes
\[
\prod_{(u,v)\in E}C_{\tau(u)\tau(v)}\prod_{v\in V}\left(\frac{1}{n_{\tau(v)}}\right)^{\deg(v)}\prod_{v\in V}n_{\tau(v)}^{\deg(v)}=\prod_{(u,v)\in E}C_{\tau(u)\tau(v)}.
\]
Summing over all possible $(\tau(v))_{v\in V}\in [q]^{V}$ gives $Z_{M,R_A,s}(G')=Z_C(G).$
Thus, $\PlGH(C)\le_T\PlHolant(\mathcal{EQ}\mid M,R_A,s)\le_T\PlGH(M).$
\end{proof}

The following lemma generalizes \cref{lem: realize compressed matrix C}.
Note that when $f_i=s_i^{-1}$, it gives \cref{lem: realize compressed matrix C}.
\begin{lemma}
\label{lem:nn-quotient}
Let $C\in \Sym{q}{}{\R_{\ge 0}}$ be the compressed matrix of $M.$
For every realizable unary $f\in \Rp^n$, put
\begin{equation}
 u(f)_a=\frac1{n_a}\sum_{i\in P_a}s_if_i,
 \qquad (\forall a\in [q])
 \label{eq:nn-weighted-quotient}
\end{equation}
and let $C_f=\diag(u(f))C\diag(u(f)).$
Then $\PlGH(C_f)\leq_T\PlGH(M)$. 
\end{lemma}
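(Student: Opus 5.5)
We follow the proof of \cref{lem: realize compressed matrix C} almost verbatim, changing only the unary attached at the original vertices of the instance. As there, the matrix $R_A$ with $(R_A)_{ij}=\frac1{n_a}\mathbb{I}[i,j\in P_a]$ is realizable, and so are $s$ and $s^{-1}$ (\cref{lem: realize s}, \cref{lem:thickening_interpolation}). By hypothesis $f$ is realizable, and by \cref{lem:thickening} so is the entrywise product $g=s^{-1}\circ(s\circ f)$, or whatever combination of $s,s^{-1},f$ we need at a given vertex.

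Take a connected planar instance $G=(V,E)$ of $\PlGH(C_f)$ and build $G'$ as in \cref{fig:incidence2}. Each $v$ is replaced by a cycle $C(v)$ of length $\deg(v)$ with edges labeled $R_A$, and the original edges are labeled $M$. Every cycle vertex gets the unary $s^{-1}$. In addition, on each cycle $C(v)$ we attach the unary $s\circ f$ to every cycle vertex; the exponent is adjusted below.

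The plan is to stratify the sum by $\tau:V\to[q]$ exactly as before. Each cycle $C(v)$ is forced into a single class $P_{\tau(v)}$. Substituting $M_{ij}=C_{ab}s_is_j$ shows that each original edge contributes $C_{\tau(u)\tau(v)}$ times factors $s$ that cancel against the $s^{-1}$ unaries. What remains at $v$ is $n_{\tau(v)}^{-\deg v}$ from the $R_A$ edges, multiplied by the sum over the cycle assignments of the product of the extra unaries.

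Here lies the obstacle. Attaching $s\circ f$ to every cycle vertex does not give $u(f)$: the sum factorizes over cycle vertices as $\bigl(\sum_{i\in P_a}s_if_i\bigr)^{\deg v}=(n_au(f)_a)^{\deg v}$, so $C_f$ would need exponent $\deg v$ rather than $1$ per vertex. The remedy is to attach $s\circ f$ to only one distinguished cycle vertex per $v$. That vertex then contributes $\sum_{i\in P_a}s_if_i$, while the remaining $\deg(v)-1$ cycle vertices (with $s^{-1}$ cancelling the $s$ from their $M$-edge) each contribute $n_a$, for a total of $n_a^{\deg v-1}\cdot n_au(f)_a\cdot n_a^{-\deg v}=u(f)_{\tau(v)}$.

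To make this cancellation exact, the distinguished vertex gets $s\circ f$ in addition to its $s^{-1}$; the $s^{-1}$ already cancels its $M$-edge factor $s$. The total weight is then $\prod_{uv\in E}C_{\tau(u)\tau(v)}\prod_v u(f)_{\tau(v)}$. We must also check that the power of $u(f)$ at $v$ is $\deg_G(v)$-independent, i.e.\ exactly one factor per vertex, since $C_f=\diag(u(f))C\diag(u(f))$ contributes $u(f)_{\tau(v)}^{\deg v}$ over the edges at $v$.

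So $C_f$ in fact requires $u(f)^{\deg v}$ per vertex, which means we should attach $s\circ f$ at every cycle vertex, not at one. We therefore revert to the first choice and recheck the count. With $s\circ f$ and $s^{-1}$ at every cycle vertex, each vertex gives $\sum_{i\in P_a}s_if_i=n_au(f)_a$. The product over the $\deg v$ vertices is $n_a^{\deg v}u(f)_a^{\deg v}$, and the $n_a^{-\deg v}$ from $R_A$ cancels the $n_a^{\deg v}$. This yields precisely $\prod_{uv}C_{\tau(u)\tau(v)}u(f)_{\tau(u)}u(f)_{\tau(v)}=Z_{C_f}(G)$ after summing over $\tau$.

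The only delicate point is bookkeeping the $s$-cancellation. Each cycle vertex carries exactly one $M$-edge, so it receives one factor $s_i$ from $M$, one $s_i^{-1}$, and one $s_if_i$, giving net weight $s_if_i$ as required. The result is $\PlGH(C_f)\le_T\PlHolant(\mathcal{EQ}\mid M,R_A,s^{-1},s\circ f)\le_T\PlGH(M)$.
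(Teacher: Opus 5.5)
Your final construction is correct and is the paper's own argument: attaching both $s^{-1}$ and $s\circ f$ at every cycle vertex is the same as attaching the single unary $s^{-1}\circ(s\circ f)=f$, which is what the paper attaches in place of $s^{-1}$. Each cycle vertex then has net weight $s_if_i$, and the factor $n_a^{-\deg v}$ from the $R_A$ edges cancels the $n_a^{\deg v}$ coming from the class sums, leaving exactly $u(f)_{\tau(v)}^{\deg v}$ as $Z_{C_f}(G)$ requires. You should delete the abandoned ``one distinguished vertex'' detour, since it would produce a single vertex weight $u(f)_{\tau(v)}$ per vertex rather than $C_f$.
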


\begin{proof}
    The proof is similar to that of \cref{lem: realize compressed matrix C}.
    Instead of attaching $s^{-1}$ to every vertex of $G'$, we attach the unary $f.$
    One can check $Z_{M,R_A,f}(G')=Z_{C_f}(G)$, and then the lemma follows.
\end{proof}

\begin{lemma}\label{lem: nonproportional rows matrix powers}
    Let $m\in \Zp$ and $D\in \Sym{m}{}{\R_{\ge 0}}.$
    If $D$ has pairwise nonproportional rows, then so does $D^k$ for every $k\in \Zp.$
\end{lemma}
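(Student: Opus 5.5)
We argue by contrapositive: if some power $D^k$ has two proportional rows $i\neq j$, then rows $i$ and $j$ of $D$ are already proportional. The only inputs are symmetry of $D$ and the equality case of the Cauchy--Schwarz inequality; nonnegativity is not needed. Write $d_i$ for the $i$-th row of $D$, and more generally $d^{(m)}_i$ for the $i$-th row of $D^m$. Recall that rows $i,j$ are proportional if $d_i=\lambda d_j$ for some $\lambda\neq 0$; in particular, two zero rows count as proportional.

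\emph{Step 1 (halving).} We first show that if rows $i,j$ of $(D^m)^2=D^{2m}$ are proportional, then rows $i,j$ of $D^m$ are proportional. Since $D^m$ is symmetric, $(D^{2m})_{ab}=\langle d^{(m)}_a,d^{(m)}_b\rangle$. Suppose $(D^{2m})_{i,*}=\lambda (D^{2m})_{j,*}$ with $\lambda\neq0$. Reading off columns $i$ and $j$ gives
\[
\|d^{(m)}_i\|^2=\lambda\langle d^{(m)}_j,d^{(m)}_i\rangle,\qquad \langle d^{(m)}_i,d^{(m)}_j\rangle=\lambda\|d^{(m)}_j\|^2 .
\]
Multiplying these two identities yields $\|d^{(m)}_i\|^2\|d^{(m)}_j\|^2=\lambda^2\,\langle d^{(m)}_i,d^{(m)}_j\rangle\,\|d^{(m)}_j\|^2$. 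We split into two cases.
\begin{itemize}
\item If $d^{(m)}_j=0$, then row $j$ of $D^{2m}$ is zero. Proportionality then forces row $i$ of $D^{2m}$ to be zero, and in particular $\|d^{(m)}_i\|^2=(D^{2m})_{ii}=0$. So both rows of $D^m$ vanish and are proportional.
\item If $d^{(m)}_j\neq 0$, we can divide by $\|d^{(m)}_j\|^2$ and substitute the second identity. This gives $\|d^{(m)}_i\|^2\|d^{(m)}_j\|^2=\langle d^{(m)}_i,d^{(m)}_j\rangle^2$, which is the equality case of Cauchy--Schwarz. Hence $d^{(m)}_i=\mu d^{(m)}_j$ for some $\mu$. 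Moreover $\mu\neq0$, since otherwise row $i$ of $D^{2m}$ would be zero while row $j$ equals $d^{(m)}_j D^m$, whose $j$-th entry is $\|d^{(m)}_j\|^2\neq0$.
\end{itemize}

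\emph{Step 2 (raising the power).} If rows $i,j$ of $D^k$ are proportional, then so are rows $i,j$ of $D^{k'}$ for every $k'\ge k$. Indeed, $(D^{k'})_{i,*}=(D^k)_{i,*}D^{k'-k}$, so the relation $(D^k)_{i,*}=\lambda (D^k)_{j,*}$ transfers directly, with the same $\lambda$.

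\emph{Step 3 (combine).} Suppose rows $i,j$ of $D^k$ are proportional. Choose $t$ with $2^t\ge k$; by Step 2, rows $i,j$ of $D^{2^t}$ are proportional. Applying Step 1 $t$ times, with $m=2^{t-1},2^{t-2},\dots,1$, shows that rows $i,j$ of $D$ are proportional. This contradicts the hypothesis, so $D^k$ has pairwise nonproportional rows.

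The main obstacle is Step 1, because proportionality of rows of $D^k$ does not pull back to $D$ in general. The point is that proportionality of rows of a Gram matrix restricts to the $2\times2$ principal minor on $\{i,j\}$, which turns it into a Cauchy--Schwarz equality. Zero rows have to be handled separately, because the definition of proportionality requires $\lambda\neq0$.
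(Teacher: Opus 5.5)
Your proof is correct, and it takes a different route from the paper.

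\textbf{The paper's route.} It sets $v=\mathbf e_i-\lambda\mathbf e_j$ and notes that, by symmetry, $D^kv=0$. It then uses the spectral theorem to get $\ker(D^k)=\ker(D)$, so $Dv=0$, which says rows $i$ and $j$ of $D$ are proportional with the same $\lambda$.

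\textbf{Your route.} Your Step~1 is a coordinate-level proof of $\ker(A^2)=\ker(A)$ for the symmetric matrix $A=D^m$, applied to this particular $v$. Your Steps~2--3 replace the spectral theorem: Step~2 is the inclusion $\ker(D^k)\subseteq\ker(D^{k'})$ for $k'\ge k$, and Step~3 halves repeatedly along powers of two.

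\textbf{Comparison.} Your version uses nothing beyond Cauchy--Schwarz, while the paper's is shorter and avoids case analysis. Neither uses nonnegativity. You can shorten Step~1 considerably. If $A^2v=0$, then $\|Av\|^2=v^{\tt T}A^{\tt T}Av=v^{\tt T}A^2v=0$, so $Av=0$, which is exactly $d^{(m)}_i=\lambda d^{(m)}_j$. This removes the separate zero-row case and the argument that $\mu\neq0$, since you get $\mu=\lambda$ directly.

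\textbf{A small algebra slip.} The displayed equation you attribute to ``multiplying these two identities'' is false as written. The actual product is
\[
\|d^{(m)}_i\|^2\langle d^{(m)}_i,d^{(m)}_j\rangle=\lambda^2\langle d^{(m)}_i,d^{(m)}_j\rangle\|d^{(m)}_j\|^2 .
\]
For a concrete check, take $d_i=2d_j$ with $\|d_j\|=1$, so $\lambda=2$. Your version then reads $4=8$.

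The Cauchy--Schwarz equality you need still follows at once. Multiply your first identity by $\|d^{(m)}_j\|^2$, then replace $\lambda\|d^{(m)}_j\|^2$ by $\langle d^{(m)}_i,d^{(m)}_j\rangle$ using the second identity. So the gap is in the written derivation only, not in the argument.
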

\begin{proof}
    Fix arbitary $k\in\Zp.$
    Suppose for a contradiction that $D^k_{i,*}=\lambda D^k_{j,*}$ for two distinct rows $i,j\in [m]$ and some $\lambda\in \R_{\neq 0}.$
    Let $v=\mathbf{e}_i-\lambda \mathbf{e}_j\in \R^{m}$, where $\mathbf{e}_i$ and $\mathbf{e}_j$ are unit vectors at $i\in [m]$ and $j\in [m]$, respectively.
    Then $D^kv=0$, as $D$ is symmetric. Hence $v\in \ker (D^k).$
    We note that $\ker(D^k)=\ker(D)$.
    This follows from the real spectral theorem (\cref{thm:spectral-theorem}): raising $D$ to the $k$-th power preserves its $0$-eigenspace.
    Thus, $v\in \ker(D).$
    This would imply that the $i$-th row and $j$-th row of $D$ are proportional, a contradiction.
\end{proof}

\subsection{Nonbipartite Case}\label{subsec: nonbipartite}
In this subsection, we fix the nonzero and connected matrix $M\in \Sym{n}{}{\R_{\ge 0}}$, and assume that $M$ is nonbipartite.
We will show that $M$ has the form \cref{item: tractable 1}, otherwise $\PlGH(M)$ is \#P-hard.

\begin{definition}[Cube space]\label{def: cube space}
    Let $d\in \Zp$.
    For any bijection $\phi:[2^d]\to  \{1,-1\}^d$ and $i\in [d]$, define the matrix $S(\phi ,i)\in \R^{2^d\times 2^d}$ by $S(\phi,i)_{xy}:=\phi(x)_i\phi(y)_i,$ and the vector space
\begin{equation*}
    \mathcal V_{\phi}:=\left\{cJ_{2^d}+\sum_{i=1}^d\gamma_i S({\phi,i})\mid c,\gamma_1,\ldots,\gamma_d\in\R\right\}.
\end{equation*}
We call $\mathcal{V}_{\phi}$ the \emph{cube space} associated with $\phi.$
\end{definition}

\begin{lemma}\label{lem: ising tensor and vector space}
    Let $m=2^d$ and $K\in \Sym{m}{F}{\R_{\ge 0}}$.
    Then $K$ is an Ising tensor after a simultaneous permutation of rows and columns iff there exists a bijection $\phi:[m]\to \{1,-1\}^d$ such that $\log_\circ K\in \mathcal{V}_{\phi}.$
\end{lemma}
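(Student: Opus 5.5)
Both directions come down to writing out the entrywise logarithm of a tensor product of $2\times 2$ Ising factors.

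\emph{($\Rightarrow$)} Suppose that, after a simultaneous permutation $\pi$ of $[m]$,
\[
K=\lambda\bigotimes_{i=1}^d\begin{pmatrix}a_i&b_i\\b_i&a_i\end{pmatrix},\qquad \lambda,a_i,b_i>0 .
\]
Index the rows of the $i$-th factor by $\{1,-1\}$, so the tensor product is indexed by $\{1,-1\}^d$. Composing this indexing with $\pi$ gives a bijection $\phi:[m]\to\{1,-1\}^d$ with
\[
K_{xy}=\lambda\prod_{i=1}^d\bigl(a_i\,\mathbb I[\phi(x)_i=\phi(y)_i]+b_i\,\mathbb I[\phi(x)_i\neq\phi(y)_i]\bigr).
\]
Since $\phi(x)_i\phi(y)_i\in\{\pm1\}$, each factor can be rewritten as
\[
\sqrt{a_ib_i}\,\exp\!\bigl(\gamma_i\,\phi(x)_i\phi(y)_i\bigr),\qquad \gamma_i=\tfrac12\log(a_i/b_i).
\]
Taking logarithms entrywise,
\[
\log_\circ K=cJ_m+\sum_{i=1}^d\gamma_i S(\phi,i),\qquad c=\log\lambda+\tfrac12\sum_i\log(a_ib_i),
\]
so $\log_\circ K\in\mathcal V_\phi$. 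This direction never uses $a_i\neq b_i$.

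\emph{($\Leftarrow$)} Suppose $\log_\circ K=cJ_m+\sum_i\gamma_i S(\phi,i)$ for some bijection $\phi$ (in particular $K$ is entrywise positive, since $\log_\circ K$ is defined). Exponentiating entrywise,
\[
K_{xy}=e^c\prod_{i=1}^d e^{\gamma_i\phi(x)_i\phi(y)_i}.
\]
Reordering rows and columns simultaneously according to $\phi$ turns this into
\[
K=e^c\bigotimes_{i=1}^d\begin{pmatrix}e^{\gamma_i}&e^{-\gamma_i}\\e^{-\gamma_i}&e^{\gamma_i}\end{pmatrix},
\]
which is the computation already done at the end of the proof of \cref{lem:binary}. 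The $i$-th factor has determinant $2\sinh(2\gamma_i)$, and the determinant of $K$ is a product of powers of these factor determinants. Because $K$ has full rank, every $\gamma_i\neq0$, so $a_i=e^{\gamma_i}\neq e^{-\gamma_i}=b_i$ and $K$ has the form \eqref{eq:ising-tensor}.

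\emph{Main obstacle.} The only subtle point is the bookkeeping: a simultaneous permutation of $[m]$ must correspond exactly to a choice of bijection $\phi$, using the standard identification of $[2]^d$ with $\{1,-1\}^d$. The sign-sensitive step is translating the two entry values $a_i,b_i$ into $\exp(\gamma_i s s')$ up to the scalar $\sqrt{a_ib_i}$, which is what makes the $cJ_m$ term absorb all the constants.
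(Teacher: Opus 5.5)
Your proof is correct and takes the same route as the paper: write each Ising factor as a positive scalar times $\left(\begin{smallmatrix}e^{\gamma_i}&e^{-\gamma_i}\\e^{-\gamma_i}&e^{\gamma_i}\end{smallmatrix}\right)$, absorb the scalars into $cJ_m$, and take entrywise logarithms, then exponentiate back for the converse. Your converse is slightly more careful than the paper's ``reversing the argument,'' since you use full rank explicitly (the determinant of $K$ is a product of powers of $2\sinh(2\gamma_i)$) to force $\gamma_i\neq 0$, and hence $a_i\neq b_i$, which the paper leaves implicit.
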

\begin{proof}
If $K$ is an Ising tensor, then it can be written as
\[
K=e^c\bigotimes_{i=1}^{d}
\begin{pmatrix}e^{\gamma_i}& e^{-\gamma_i}\\e^{-\gamma_i}&e^{\gamma_i}\end{pmatrix},
\quad c\in \R,\quad \gamma_i\in \R_{\neq 0},
\]
after a simultaneous permutation of rows and columns.
Let $\phi:[m]\to \{1,-1\}^d,\,x\mapsto \phi(x)$ denote this permutation.
Then 
\[
K_{xy}=e^c\prod_{i=1}^d\exp(\phi(x)_i\phi(y)_i\gamma_i)\implies \log K_{xy}=c+\sum_{i=1}^d\phi(x)_i\phi(y)_i \gamma_i\qquad(\forall x,y\in [m]).
\]
Consequently, $\log_\circ K=cJ+\sum_{i=1}^d \gamma_iS(\phi,i)\in\mathcal{V}_\phi.$
The converse also holds by reversing the argument.
\end{proof}

\begin{lemma}\label{lem: cube basis}
    For $d\in \Zp$ and a bijection  $\phi:[2^d]\to \{1,-1\}^d$, $\{J_{2^d},S(\phi,1),\ldots, S(\phi,d)\}$ is a basis for $\mathcal{V}_{\phi}.$
\end{lemma}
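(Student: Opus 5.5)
By \cref{def: cube space}, $\mathcal V_\phi$ is by definition the real span of $J_{2^d},S(\phi,1),\ldots,S(\phi,d)$. So spanning is immediate, and the only thing to prove is linear independence. The plan is to exhibit an orthogonality relation under the matrix inner product of \cref{def:matrix-conventions}, using the fact that $\phi$ is a bijection onto $\{1,-1\}^d$.

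Write $m=2^d$ and, for $i\in[d]$, let $\mathbf s_i\in\{1,-1\}^m$ be the vector with $(\mathbf s_i)_x=\phi(x)_i$. Then $S(\phi,i)=\mathbf s_i\mathbf s_i^{\tt T}$ and $J_m=\mathbf 1\mathbf 1^{\tt T}$. For rank-one matrices we have $\langle \mathbf a\mathbf a^{\tt T},\mathbf b\mathbf b^{\tt T}\rangle=(\mathbf a^{\tt T}\mathbf b)^2$, so it suffices to compute the inner products among $\mathbf 1,\mathbf s_1,\ldots,\mathbf s_d$.

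Since $\phi$ is a bijection, as $x$ ranges over $[m]$ the tuple $\phi(x)$ ranges over all of $\{1,-1\}^d$ exactly once. This gives
\[
\mathbf 1^{\tt T}\mathbf s_i=\sum_{s\in\{1,-1\}^d}s_i=0,
\]
and, for $i\ne j$,
\[
\mathbf s_i^{\tt T}\mathbf s_j=\sum_{s\in\{1,-1\}^d}s_is_j=0.
\]
Hence $J_m,S(\phi,1),\ldots,S(\phi,d)$ are pairwise orthogonal in $\mathbb R^{m\times m}$. Each has squared norm $m^2\neq0$, so they are nonzero.

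Nonzero pairwise orthogonal vectors are linearly independent. To see this directly, suppose $cJ+\sum_i\gamma_iS(\phi,i)=0$. Taking the inner product with $S(\phi,j)$ leaves $\gamma_j m^2=0$, and taking it with $J$ leaves $cm^2=0$, so every coefficient vanishes. This finishes the argument. I see no real obstacle here; the only point needing care is to invoke the bijectivity of $\phi$ so that each coordinate sum runs over the full cube.
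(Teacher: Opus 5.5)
Your proof is correct, but it takes a different route from the paper. The paper proves independence multiplicatively. If $cJ_{2^d}+\sum_i\gamma_iS(\phi,i)=O_{2^d}$, then by the computation in \cref{lem: ising tensor and vector space}, $J_{2^d}=\exp_\circ(O_{2^d})$ equals $e^c\bigotimes_{i}\left(\begin{smallmatrix}e^{\gamma_i}&e^{-\gamma_i}\\e^{-\gamma_i}&e^{\gamma_i}\end{smallmatrix}\right)$ after the permutation $\phi$. From this the paper concludes $c=\gamma_1=\cdots=\gamma_d=0$. The entry comparison behind that conclusion is left implicit: two indices whose $\phi$-images differ only in coordinate $i$ give entries with ratio $e^{2\gamma_i}$, which forces $\gamma_i=0$ and then $e^c=1$.

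You instead stay in the additive setting. You show that $J,S(\phi,1),\ldots,S(\phi,d)$ are pairwise orthogonal under the inner product of \cref{def:matrix-conventions}, using $\langle \mathbf a\mathbf a^{\tt T},\mathbf b\mathbf b^{\tt T}\rangle=(\mathbf a^{\tt T}\mathbf b)^2$ and the vanishing character sums over the full cube.

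The paper's argument is a one-line reuse of the Ising-tensor picture it has just set up, though it leaves that last entry comparison to the reader. Yours is self-contained and makes explicit exactly where bijectivity of $\phi$ enters. It also gives slightly more: an orthogonal basis whose elements all have squared norm $2^{2d}$. So the coordinates of any $X\in\mathcal V_\phi$ are read off directly as $c=\langle X,J\rangle/2^{2d}$ and $\gamma_i=\langle X,S(\phi,i)\rangle/2^{2d}$.
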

\begin{proof}
    Suppose there are real numbers $c,\gamma_1,\ldots,\gamma_d$ such that $cJ_{2^d}+\sum_{i=1}^d\gamma_i S(\phi,i)=O_{2^d}.$
    By the proof of \cref{lem: ising tensor and vector space}, this implies
    $$J_{2^d}=\exp_\circ(O_{2^d})=e^c\prod_{i=1}^d\begin{pmatrix}e^{\gamma_i}& e^{-\gamma_i}\\e^{-\gamma_i}&e^{\gamma_i}\end{pmatrix}$$
    after a simultaneous permutation of rows and columns.
    Therefore, $c=\gamma_1=\cdots=\gamma_d=0.$
\end{proof}

\begin{lemma}\label{lem: uniform tensor}
    Let $m\in \Zp$ and $\{R(t)\}_{t\in \R}\subseteq \Sym{m}{}{\Rp}$ be an entrywise analytic family of matrices such that $\PlGH(R(t))$ is not \#P-hard for every $t\in \R.$
    If there exists some $t_0 \in \mathbb{R}$ such that $R(t_0)\in \Sym{m}{F}{\Rp}$, then there is a bijection $\phi \colon [m] \to \{-1,1\}^d$ such that $\log_\circ R(t) \in \mathcal{V}_\phi$ for every $t \in \mathbb{R}$.
\end{lemma}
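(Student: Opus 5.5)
Since $R(t_0)$ is entrywise positive, full rank, and $\PlGH(R(t_0))$ is not \#P-hard, \cref{thm:main} (which we may use, as it is established in the body) shows that $m=2^d$ and that $R(t_0)$ is an Ising tensor up to a simultaneous permutation. By \cref{lem: ising tensor and vector space} there is then a bijection $\phi_0$ with $\log_\circ R(t_0)\in\mathcal V_{\phi_0}$.

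Full rank is an open condition and the family is entrywise analytic. So $\det R(t)$ is a real-analytic function of $t$ that does not vanish at $t_0$. By \cref{thm:analytic-identity} its zero set $Z$ is discrete. For every $t\notin Z$, the same argument gives a bijection $\phi_t$ with $\log_\circ R(t)\in\mathcal V_{\phi_t}$, and $d$ is fixed by $m$.

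There are only finitely many bijections $[m]\to\{-1,1\}^d$, and $\mathbb R\setminus Z$ is uncountable. Hence some single $\phi$ satisfies $\log_\circ R(t)\in\mathcal V_\phi$ for an infinite set of $t$ with an accumulation point; we pick $\phi$ so that this set accumulates at a point of $\mathbb R\setminus Z$.

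Next we translate membership in $\mathcal V_\phi$ into linear equations. Let $\pi$ be the orthogonal projection of $\mathbb R^{m\times m}$ onto $\mathcal V_\phi^{\perp}$; the basis from \cref{lem: cube basis} is used only to make this concrete. The condition $\log_\circ R(t)\in\mathcal V_\phi$ says exactly that every entry of $\pi(\log_\circ R(t))$ vanishes. Since $R(t)$ is entrywise positive, $\log_\circ R(t)$ is entrywise analytic in $t$ on all of $\mathbb R$. So each entry of $\pi(\log_\circ R(t))$ is a real-analytic function on $\mathbb R$ whose zero set has an accumulation point. The one-variable identity theorem, applied on the connected domain $\mathbb R$, then forces each such function to vanish identically. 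Therefore $\log_\circ R(t)\in\mathcal V_\phi$ for every $t\in\mathbb R$, including $t\in Z$.

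The main obstacle is the pigeonhole step. \cref{thm:analytic-identity} as stated asks for vanishing on an open set, not merely on a set with an accumulation point. Two routes handle this. The first is to invoke the standard one-dimensional identity theorem for zeros accumulating at a point. The second avoids it: for each of the finitely many $\phi$, let $A_\phi$ be the set of $t\notin Z$ with $\log_\circ R(t)\in\mathcal V_\phi$. Each $A_\phi$ is relatively closed in the open set $\mathbb R\setminus Z$, and finitely many of them cover it. By the Baire category theorem, some $A_\phi$ contains an open interval, and \cref{thm:analytic-identity} then applies directly.
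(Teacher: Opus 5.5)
Your proof is correct and follows the paper's route. You apply \cref{thm:main} wherever $\det R(t)\neq 0$, pigeonhole over the finitely many bijections to get one $\phi$ on a set of $t$ with an accumulation point, and conclude by the identity theorem applied to the linear conditions cutting out $\mathcal V_\phi$; the paper does the same on a neighborhood of $t_0$ and tacitly uses the accumulation-point form of the identity theorem, so your Baire-category (or finite-union-of-closed-sets) variant is a valid way to stay within \cref{thm:analytic-identity} as stated, though not needed.
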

\begin{proof}
    Let $f(t)=\det(R(t)),\,\forall t\in \R.$
    This is an analytic function in $t$, and $f(t_0)\neq 0.$
    Hence there is a neighborhood $U_1$ of $t_0$ so that $R(t)$ is invertible for every $t\in U_1.$
    Since $R(t_0)$ is entrywise positive, there is a neighborhood $U_2$ of $t_0$ so that $R(t)$ is entrywise positive for every $t\in U_2.$
    Let $U=U_1\cap U_2.$
    For every $t\in U_1\cap U_2$, $R(t)\in \Sym{n}{F}{\Rp}$, so \cref{thm:main} applies.
    We obtain that $m=2^d$ for some $d\in \Zp$, and $R(t)$ is an Ising tensor up to a simultaneous permutation of rows and columns.
    By \cref{lem: ising tensor and vector space}, there exists a bijection $\phi_t:[m]\to \{1,-1\}^d$ such that $\log_\circ R(t)\in \mathcal{V}_{\phi_t}$, for every $t\in U.$
    Since there are only finitely many bijections from $[m]$ to $\{-1, 1\}^d$, there exists a fixed bijection $\phi \colon [m] \to \{-1, 1\}^d$ and a sequence $\{t_k\}_{k \in \Zp}$ converging to $t_0$ such that $\log_\circ R(t_k) \in \mathcal{V}_{\phi}$ for all $k$.

    Since $\mathcal{V}_\phi$ is a linear subspace of $\R^{n\times n}$, there are linear functionals $\ell_1,\ldots,\ell_r:\R^{n\times n}\to \R$ such that $X\in \mathcal{V}_{\phi}\iff \ell_1(X)=\cdots=\ell_r(X)=0.$
    Now for every $j\in [r]$, the analytic function $\ell_j(\log_\circ R(t))$ vanishes on the converging sequence $\{t_k\}.$
    This implies $\ell_j(\log_\circ R(t))\equiv 0$ for every $j\in [r]$.
    That is, $\log_\circ R(t)\in \mathcal{V}_{\phi}$ for every $t\in \R.$
\end{proof}

\begin{lemma}
\label{lem:nn-positive-kernel}
Let $R\in \Sym{m}{}{\R_{>0}}$ have pairwise nonproportional rows. 
If
$\PlGH(R)$ is not $\#\mathrm P$-hard, then $R$ is an Ising tensor up to a simultaneous permutation of rows and columns.
\end{lemma}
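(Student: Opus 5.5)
Since $R$ is entrywise positive but possibly singular, the idea is to pass to an entrywise positive full rank matrix realizable from $R$, apply \cref{thm:main}, and then use analyticity to transfer the Ising structure back to $R$ itself. Concretely, I would form the family $R(t)=R^{\circ t}$ for $t\in\R$. By \cref{lem:thickening_interpolation} (or \cref{cor:thickening_interpolation} applied with $R$ in place of $M$), $\PlGH(R^{\circ t})\le_T\PlGH(R)$, so none of these problems is \#P-hard. The family is entrywise analytic and entrywise positive. To invoke \cref{lem: uniform tensor}, I need a single $t_0$ with $R^{\circ t_0}$ of full rank; once I have it, that lemma gives a bijection $\phi$ with $\log_\circ R^{\circ t}=t\log_\circ R\in\mathcal V_\phi$ for all $t$. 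Taking $t=1$, \cref{lem: ising tensor and vector space} then gives the Ising form, with full rank guaranteed because the resulting $\gamma_i$ must be nonzero (otherwise two rows would coincide, contradicting nonproportionality).

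The main obstacle is producing $t_0$ with $\det(R^{\circ t_0})\neq0$ using only the hypothesis that the rows are pairwise nonproportional. My plan is a generalized Vandermonde argument. Write $\det(R^{\circ t})=\sum_{\pi}\mathrm{sgn}(\pi)\exp\bigl(t\sum_i\log R_{i\pi(i)}\bigr)$, which is an exponential polynomial in $t$. It is nonzero as a function provided some exponent $\sum_i \log R_{i\pi(i)}$ is attained by a unique permutation, or more generally if the signed coefficients do not all cancel.

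Pairwise nonproportionality of rows alone does not obviously guarantee this, for example when rows are products of few distinct values. So I would first perturb the family. I would use thickening with a second realizable matrix, such as $(R^{2})^{\circ s}$, or better $R^k$, whose rows are also pairwise nonproportional by \cref{lem: nonproportional rows matrix powers}. Alternatively, I would consider large $t$ and the dominant assignment. As $t\to\infty$, $\det(R^{\circ t})$ is dominated by the permutations maximizing $\sum_i\log R_{i\pi(i)}$, i.e. an assignment problem on $\log R$.

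If the determinant vanished identically, then every Schur-product power would be singular. I would derive a contradiction as follows. A classical fact states that for distinct nonproportional positive vectors $r_1,\dots,r_m$, the vectors $r_i^{\circ t}$ become linearly independent for suitable $t$. I would argue this by taking the rows of $R^{\circ t}$ to be values of the exponential functions $y\mapsto e^{t\log R_{iy}}$. If these vectors were dependent for all $t$ in an interval, then expanding in $t$, or equivalently using the tensor-power Gram determinant, would force some $\log R_{i,*}-\log R_{j,*}$ to be constant. That would mean rows $i$ and $j$ are proportional, a contradiction.

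If this direct route is too delicate, I would fall back on realizing the full rank matrix $R^k\circ R^{\circ t}$, or use the Gram matrix $R^{\circ t}(R^{\circ t})^{\tt T}$. The Gram matrix is positive semidefinite, and it is positive definite exactly when the rows of $R^{\circ t}$ are independent. Its off-diagonal-to-diagonal ratios are $<1$ by Cauchy–Schwarz combined with nonproportionality, and they become negligible after a further large power $(\cdot)^{\circ N}$. That would give a diagonally dominant, hence full rank, matrix, which is again realizable and part of an analytic family to which \cref{lem: uniform tensor} applies.
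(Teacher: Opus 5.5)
Your main route has a genuine gap. The ``classical fact'' you invoke is false, so in general no $t_0$ with $R^{\circ t_0}$ of full rank need exist. An identically vanishing exponential polynomial $\det(R^{\circ t})$ only requires signed cancellation among permutations of equal weight, and this can happen with all rows distinct. For a concrete counterexample, index by $\mathbb{Z}_{30}$, let $S=\{5,6,12,18,24,25\}$, fix positive $a\neq b$, and put $R_{xy}=a$ if $x+y\in S$ and $R_{xy}=b$ otherwise (addition mod $30$). With $\zeta=e^{2\pi i/30}$ one has $\sum_{z\in S}\zeta^z=(\zeta^6+\zeta^{12}+\zeta^{18}+\zeta^{24})+(\zeta^5+\zeta^{25})=-1+1=0$ and $\sum_{z}\zeta^z=0$. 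Hence $\sum_y R_{xy}^t\zeta^y=\zeta^{-x}\bigl(a^t\sum_{z\in S}\zeta^z+b^t\sum_{z\notin S}\zeta^z\bigr)=0$ for every $x$ and every real $t$, so the real vector $(\cos(2\pi y/30))_y$ lies in $\ker R^{\circ t}$ for all $t$. Yet $R$ is symmetric and positive with pairwise nonproportional rows. All rows are rearrangements of one multiset, so proportional rows would be equal, forcing $S+d=S$ for some $d\neq 0$. That is impossible, since every nonzero subgroup of $\mathbb{Z}_{30}$ contains $15$, $10$ or $6$, while $5+15,\,5+10,\,5+6\notin S$. This example also refutes your sketched expansion argument. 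The other perturbations you list are not carried out.

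Your fallback is the right idea and is essentially what the paper does, but two things need fixing. First, Cauchy--Schwarz gives $G_{ij}<\sqrt{G_{ii}G_{jj}}$, not $G_{ij}<G_{ii}$, so you must normalize. Take the realizable unary $h_i=\sqrt{(R^2)_{ii}}$ and form $A=\operatorname{diag}(h)^{-1}R^2\operatorname{diag}(h)^{-1}$. It is realizable, has unit diagonal and off-diagonal entries in $(0,1)$, so $B=A^{\circ k}$ is strictly diagonally dominant for large $k$. (Alternatively, note that the identity permutation strictly dominates the determinant expansion.) Second, and this is the essential missing step, a full-rank realizable $B$ says nothing about $R$ by itself. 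It must lie in a single analytic realizable family together with $R$, and neither $R^{\circ t}$ nor a family of Gram matrices contains both. The paper uses $R(t)=B^{\circ t}\circ R^{\circ(1-t)}$, which is realizable for every real $t$ by \cref{cor:thickening,cor:thickening_interpolation}, with $R(0)=R$ and $R(1)=B$ of full rank. \cref{lem: uniform tensor} then gives $\log_\circ R\in\mathcal V_\phi$, and your remark that nonproportionality forces every $\gamma_i\neq 0$ finishes the proof.
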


\begin{proof}
Put $h_i=\sqrt{(R^2)_{ii}}>0$ for every $i\in[m]$ and consider
\(
A=\diag(h)^{-1}R^2\diag(h)^{-1}.
\)
Similar to the proof of \cref{lem: realize compressed matrix C}, $A$ is realizable (relative to $R$), and $0<A_{ij}\le 1$ for all $i,j\in [m]$, by Cauchy-Schwarz inequality.
The equality holds if and only if the $i$-th row and the $j$-th row of $R$ are proportional, i.e., $i=j.$
Thus, $A_{ii}=1$ and $A_{ij}<1$ for all $i,j\in [m]$ and $ i \ne j$.
A sufficiently
large fixed integer $k$ makes $B=A^{\circ k}$ strictly diagonally
dominant, hence invertible by the Gershgorin circle theorem.
The matrix $B$ is realizable relative to $R$ by \cref{cor:thickening}.

For every $t\in \R$, define $R(t)=B^{\circ t}\circ R^{\circ(1-t)}.$
Since both $B$ and $R$ are entrywise positive, $R(t)$ is well-defined
for all $t\in \R$.
Then $R(t)$ is an entrywise analytic family of matrices, with $R(0)=R$ and $R(1)=B\in \Sym{m}{F}{\Rp}$.
By \cref{cor:thickening,cor:thickening_interpolation}, $\PlGH(R(t))\le_T\PlGH(R)$ for every $t\in \R.$
Since $\PlGH(R)$ is not \#P-hard, so is $\PlGH(R(t))$ for every $t\in \R.$
By \cref{lem: uniform tensor}, there is a bijection $\phi \colon [m] \to \{-1,1\}^d$ such that $\log_\circ R(t) \in \mathcal{V}_\phi$ for every $t \in \mathbb{R}$.
In particular, $R=R(0)\in \mathcal{V}_\phi$.
By \cref{lem: ising tensor and vector space}, $R$ is an Ising tensor up to a simultaneous permutation of rows and columns.
\end{proof}

\begin{lemma}
\label{lem:nn-nonbip-core}
Let $R\in \Sym{m}{}{\R_{\ge 0}}$ be nonzero, connected and nonbipartite, with pairwise nonproportional rows.
If $\PlGH(R)$ is not
$\#\mathrm P$-hard, then $R$ is an Ising tensor up to a simultaneous permutation of rows and columns.
\end{lemma}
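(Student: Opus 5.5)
The plan is to reduce to \cref{lem:nn-positive-kernel} by producing an entrywise positive realizable matrix that still has pairwise nonproportional rows, and then transfer the Ising structure back to $R$ with the analytic-family argument of \cref{lem: uniform tensor}.

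\textbf{Step 1: a positive power.} Since $R$ is connected and nonbipartite, the underlying graph has an odd closed walk. Hence for some fixed integer $k$ (at most about $2m$), every pair $i,j$ is joined by a walk of length exactly $k$. Therefore $R^k$ is entrywise positive, and it is realizable by \cref{lem:stretching}. By \cref{lem: nonproportional rows matrix powers}, $R^k$ has pairwise nonproportional rows. Then \cref{lem:nn-positive-kernel}, applied to $R^k$ (with $\PlGH(R^k)\le_T\PlGH(R)$ not \#P-hard), shows that $R^k$ is an Ising tensor up to a permutation. In particular $m=2^d$.

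\textbf{Step 2: interpolating family.} To reach $R$ itself, I would use the nonnegative thickening interpolation \cref{lem:nn-thickening_interpolation} together with thickening. Consider $R(t)=R^k\circ R^{\circ t}$, where $R^{\circ t}$ is $0$ on zero entries. This is not analytic in $t$ at the zero entries, so instead I would take the family $R_\lambda=(R^k)^{\circ \lambda}\circ R$ for $\lambda\in\R$, realizable by \cref{cor:thickening_interpolation} and \cref{cor:thickening}. Its support equals the support of $R$, which is the obstruction. Better: first handle the case where $R$ is entrywise positive, which follows directly from \cref{lem:nn-positive-kernel}. Otherwise we must show that zero entries force hardness.

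\textbf{Step 3: excluding zeros.} The idea is to realize $B(\varepsilon)=R\circ\dots$ differently. Take $N(t)=R^k+ tR$? Sums are not realizable. Instead use stretching with the identity-like matrix: by \cref{lem:nn-positive-kernel} applied to $R^{k}$ and $R^{k+1}=R^k R$ (also positive, with nonproportional rows), both are Ising tensors. Also $R^{k+j}$ is one for all $j\ge0$. Powers of a symmetric matrix share eigenvectors. Spectral interpolation \cref{lem:spectral-powers} does not apply directly since $R$ need not be positive definite, but $R^{2k}$ is. So $(R^{2k})^{t}$ is realizable for all real $t$, and for $t$ near $1/2k\cdot(2k)$ this is analytic in $t$. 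It is positive and full rank at $t=1$. By \cref{lem: uniform tensor} along a suitable analytic sub-family where positivity holds, $\log_\circ (R^{2k})^{t}\in\mathcal V_\phi$ on an interval. Reading off eigenvectors, $R^{2k}$ is diagonalized by the Hadamard characters $\chi_S(\phi(x))$, and so is $R^2$ (its unique PSD $k$-th root). Also $R$ commutes with $R^{2k}$. I would then show that $R$ is a polynomial in the Ising eigen-structure, giving $R=\sum_S \mu_S \chi_S\chi_S^{\tt T}$. Combining this with the fact that each $R^{k+j}$ has log in $\mathcal V_\phi$ pins down the eigenvalues multiplicatively as $\mu_S=\prod_{i\in S}\beta_i\prod_{i\notin S}\alpha_i$. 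That yields $R=\bigotimes\begin{pmatrix}a_i&b_i\\b_i&a_i\end{pmatrix}$, with positivity following from $R\ge0$ and nonproportional rows.

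\textbf{Main obstacle.} The hardest step is the last one: deducing the product form of $R$'s eigenvalues, and hence excluding zero entries, from information only about high powers. The difficulty is sign ambiguity: $R$ may be one of several square/odd roots. I would try first to use that both $R^k$ and $R^{k+1}$ are Ising tensors with compatible $\phi$. Dividing their eigenvalues gives $R$'s eigenvalues as ratios that factor over coordinates.
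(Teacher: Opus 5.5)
\textbf{Verdict: genuine gap.} Your Step 1 is sound and matches the paper's first move. However, the proposal never gets from the Ising structure of a power of $R$ back to $R$ itself. The ``main obstacle'' you name is left open, and the arguments you sketch for it fail as stated.

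\textbf{Where Step 3 breaks.} You know that $R^{2k}$ (hence $R^2$) is diagonalized by the Hadamard characters $\chi_S\circ\phi$, and that $R$ commutes with $R^{2k}$. This does not force $R$ to be diagonal in that basis. The eigenvalues $\lambda\prod_{i\in S}(a_i-b_i)\prod_{i\notin S}(a_i+b_i)$ can coincide, for instance when two tensor factors are equal. A commuting matrix is then free to mix vectors inside the degenerate eigenspace. Moreover, $R^2$ does not determine $R$.

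\textbf{The fallback is unproved.} You want $R^k$ and $R^{k+1}$ to be Ising tensors ``with compatible $\phi$''. But \cref{lem:nn-positive-kernel} gives each of them an Ising structure under its own permutation, and nothing in your argument makes the two permutations agree. Step 2, which tries to exclude zero entries separately, is a detour; no such step is needed.

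\textbf{The missing idea: take the positive power odd.} Since $R$ has no zero column, $R^k>0$ implies $R^{k+1}>0$. So one of $k,k+1$ is an odd $\ell$ with $R^\ell>0$. Your Step 1 then gives
\[
P^{\tt T}R^\ell P=\lambda\bigotimes_i\begin{pmatrix}a_i&b_i\\ b_i&a_i\end{pmatrix}.
\]
For odd $\ell$, a real symmetric matrix has a \emph{unique} real symmetric $\ell$-th root: take the real odd roots of the eigenvalues and keep the eigenspaces. So there is no sign ambiguity at all, and $P^{\tt T}RP$ must equal this root.

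This root can be computed factor by factor through the $2\times 2$ Hadamard diagonalization, and it is again an Ising tensor.
\begin{itemize}
\item Its entries $\tfrac12\bigl((a_i+b_i)^{1/\ell}\pm(a_i-b_i)^{1/\ell}\bigr)$ are positive, because $|a_i-b_i|<a_i+b_i$.
\item The two entries in each factor are distinct, because $a_i\neq b_i$.
\end{itemize}
This is exactly the paper's proof.

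\textbf{Alternative: rescuing your $R^k,R^{k+1}$ route.} Apply \cref{lem: uniform tensor} to the family $(R^k)^{\circ(1-t)}\circ(R^{k+1})^{\circ t}$. It is entrywise positive, analytic and realizable, and it is full rank at $t=0$, so the lemma yields a common $\phi$. Then $R=(R^k)^{-1}R^{k+1}$ factors as a tensor product of $2\times 2$ symmetric circulants $\begin{pmatrix}p_i&q_i\\ q_i&p_i\end{pmatrix}$. The hypotheses then pin down each factor:
\begin{itemize}
\item connectivity gives $q_i\neq0$;
\item nonbipartiteness gives $p_i\neq0$;
\item nonnegativity then forces all entries positive;
\item nonproportional rows give $p_i\neq q_i$.
\end{itemize}
The odd-root argument is much shorter.
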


\begin{proof}
Since $R$ is nonzero, connected and nonbipartite, $H=R^\ell$ is entrywise
positive for a sufficiently large odd integer $\ell$.
By stretching (\cref{lem:stretching}),
$\PlGH(H)\le_T\PlGH(R)$, so $\PlGH(H)$ is not $\#\mathrm P$-hard.
Moreover, \cref{lem: nonproportional rows matrix powers} implies that
$H$ has pairwise nonproportional rows.
Applying \cref{lem:nn-positive-kernel}, we conclude that $H$ is an
Ising tensor up to a simultaneous permutation of rows and columns.
Thus there exists a permutation matrix $P\in\R^{m\times m}$ such that
\[
P^{\tt T}R^\ell P
=\lambda\bigotimes_{i=1}^{d}
\begin{pmatrix}a_i&b_i\\b_i&a_i\end{pmatrix},
\qquad
\lambda>0,\quad a_i,b_i>0,\quad a_i\neq b_i.
\]

Since $\ell$ is odd, there is a unique real symmetric matrix $X$
satisfying
\(
X^\ell=P^{\tt T}R^\ell P.
\)
It is obtained by taking the real $\ell$-th root of each eigenvalue
of $P^{\tt T}R^\ell P$, while preserving the corresponding eigenvectors.
The odd spectral root factors over the displayed tensor decomposition:
Let $H_2 = \tfrac{1}{\sqrt{2}}
\begin{psmallmatrix}1&1\\1&-1\end{psmallmatrix}$, then
$
\begin{psmallmatrix}a&b\\b&a\end{psmallmatrix}
= H_2^{\tt T} 
\begin{psmallmatrix}\lambda &0\\0& \mu\end{psmallmatrix}
H_2$, with
$\lambda = a+b,\, \mu = a-b$,
and its real $\ell$-th root is 
$H_2^{\tt T}  \begin{psmallmatrix}\lambda^{1/\ell} &0\\0& \mu^{1/\ell}\end{psmallmatrix} H_2$, a matrix of the 2 by 2 Ising form.
So $X$ is again an Ising tensor.

On the other hand, $P^{\tt T}RP$ is real symmetric and satisfies
\(
(P^{\tt T}RP)^\ell=P^{\tt T}R^\ell P.
\)
By uniqueness, $P^{\tt T}RP=X$.
Hence $R$ is an Ising tensor up to a simultaneous permutation of
rows and columns.
\end{proof}

Recall $s_i=\sqrt{(M^2)_{ii}}\,(\forall i\in [n])$, and $P_a$ is the proportional class of size $n_a$.

\begin{lemma}
\label{lem:nn-moments-nonbipartite}
Let $M\in \Sym{n}{}{\R_{\ge 0}}$ be nonzero, connected and nonbipartite.
If $\PlGH(M)$ is not
$\#\mathrm P$-hard, then the multisets $\{s_i\mid i\in P_a\}$ agree for all $a\in [q].$
\end{lemma}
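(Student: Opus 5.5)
Our approach is to use the weighted quotient reduction \cref{lem:nn-quotient} together with the structure theorem \cref{lem:nn-nonbip-core}, and then extract moments of the multisets $\{s_i\mid i\in P_a\}$ by interpolation. The compressed matrix $C$ has pairwise nonproportional rows. It is nonzero and connected, and it is nonbipartite because $M$ is: $C$ has the same underlying graph as $M$ with proportional classes identified, and any odd closed walk in $M$ projects to one in $C$. For a realizable unary $f\in\Rp^n$, the matrix $C_f=\diag(u(f))C\diag(u(f))$ inherits all these properties, since $u(f)$ is positive. By \cref{lem:nn-quotient}, $\PlGH(C_f)\le_T\PlGH(M)$, so $\PlGH(C_f)$ is not \#P-hard. \cref{lem:nn-nonbip-core} then makes every $C_f$ an Ising tensor up to permutation. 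In particular, each $C_f$ is entrywise positive with constant diagonal.

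The key choice is $f=s^{\circ t}$. This unary is realizable for every $t\in\R$ by \cref{lem: realize s} and \cref{lem:thickening_interpolation}. Then
\[
u(s^{\circ t})_a=\frac1{n_a}\sum_{i\in P_a}s_i^{1+t},
\]
and constancy of the diagonal gives
\[
C_{aa}\,u(s^{\circ t})_a^2=C_{bb}\,u(s^{\circ t})_b^2\qquad(a,b\in[q],\ t\in\R).
\]
Taking $f=s^{-1}$, that is $t=-2$, gives $u_a=1$ for all $a$. Hence $C$ itself is an Ising tensor and has constant diagonal $C_{aa}=C_{bb}>0$. Cancelling the diagonal and taking positive square roots, the power means $\frac1{n_a}\sum_{i\in P_a}s_i^{\tau}$ agree for all $a$ and all real $\tau$.

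It remains to recover the multisets from these power means. At $\tau=0$ the mean is $1$ and carries no information about $n_a$, so the size $n_a$ has to be obtained separately. The size is the main obstacle.

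We can get it from the tensor structure. Apply \cref{lem:nn-quotient} with $f=s^{-1}\circ g$, where $g$ is the realizable row-sum unary of $T_A$ with entries $g_i=n_a$ for $i\in P_a$. Then $u(f)_a=n_a$, and constancy of the diagonal of $C_f$, combined with constancy of the diagonal of $C$, forces $n_a^2$ to be constant, so all $n_a$ are equal. Once the $n_a$ agree, the sums $\sum_{i\in P_a}s_i^\tau$ agree for all $\tau$. Grouping these sums by distinct values of $s_i$ gives a generalized Dirichlet/exponential-sum identity $\sum_j c_j e^{\tau\log\sigma_j}=0$ for all $\tau$. Linear independence of exponentials with distinct exponents, as in the argument of \cref{lem:row-statistic-multisets}, forces each multiplicity difference $c_j$ to vanish. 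Therefore the multisets $\{s_i\mid i\in P_a\}$ agree for all $a\in[q]$.

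If the row-sum unary of $T_A$ fails to be directly usable, we would instead realize $\diag(g)\,T_A$ and repeat the cycle-gadget construction of \cref{lem: realize compressed matrix C}. This yields the weights $n_a$ in $u(\cdot)$ through the same computation.
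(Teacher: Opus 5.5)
Your proof is correct and follows essentially the same route as the paper: pass to $C_f$ via \cref{lem:nn-quotient}, use the constant diagonal of the resulting Ising tensor to force $u(f)$ to be constant, take $f_i=n_a/s_i$ to make the class sizes equal, and then use powers of $s$ to match moments. The only difference is the final step, where the paper uses integer powers $f=s^{\circ(j-1)}$ with Newton's identities and you use real powers with linear independence of exponentials, which works equally well; also note that $f=s^{-1}$ corresponds to $t=-1$, not $t=-2$.
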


\begin{proof}
    Let $C\in \Sym{q}{}{\R_{\ge 0}}$ be the compressed matrix of $M$.
    Note that $C$ is still nonzero, connected and nonbipartite.
    To see that, suppose $i, j\in P_a$ are in the same proportional class.
    Then for any $k$,  $M_{ik} \ne 0$ iff $M_{jk} \ne 0$, and so any path
    or cycle of the underlying graph of $M$ has a corresponding  path
    or cycle of the underlying graph of $C$.
    Moreover, $C$ has pairwise nonproportional rows.
    By \cref{lem: realize compressed matrix C}, $\PlGH(C)\le_T\PlGH(M)$.
    Since we assume that $\PlGH(M)$ is not \#P-hard, neither is $\PlGH(C)$.
    \cref{lem:nn-nonbip-core} applied to $C$ shows that $C$ is an Ising tensor up to a simultaneous permutation of rows and columns.
    In particular, $C$ is invertible, entrywise positive and has constant diagonal.

    Now consider an arbitrary realizable unary $f\in \Rp^n.$
    Let \[u(f)_a=\frac1{n_a}\sum_{i\in P_a}s_if_i\,\qquad (\forall a\in [q])\]
    and $C_f=\diag(u(f))C\diag(u(f)).$
    By \cref{lem:nn-quotient}, $\PlGH(C_f)\leq_T\PlGH(M)$. 
    Then $\PlGH(C_f)$ is not \#P-hard.
    Since $u(f)\in \Rp^n$, $C_f$ is invertible and entrywise positive.  
    By \cref{lem:nn-positive-kernel}, $C_f$ is an Ising tensor up to a simultaneous permutation of rows and columns.
    In particular, $C_f$ has constant diagonal.
    But $(C_f)_{aa}=C_{aa}u(f)_{a}^2,\,\forall a\in[q]$.
    This implies that $u(f)$ is a constant unary, for any realizable $f\in \Rp^n.$
    
    Let $r_i=n_a$ if $i\in P_a$, for every $i\in [n].$
    In the proof of \cref{lem: realize compressed matrix C}, we showed that $r=(r_1,\ldots, r_n)$ is realizable.
    Consider $f_i=\frac{r_i}{s_i}=\frac{n_a}{s_i}$ if $i\in P_a$, for every $i\in [n].$
    Then $f\in \Rp^n$ and $f$ is realizable by \cref{lem:thickening_interpolation,lem:thickening}.
    For this $f$, we have 
    \[
    u(f)_a=\frac{1}{n_a}\sum_{i\in P_a}s_i\frac{n_a}{s_i}=n_a \qquad (\forall a\in [q]).
    \]
    Thus, $n_a=c$ is constant for every $a\in [q].$
    Now for any fixed $j\in \Zp$, let $f_i=s_i^{j-1},\,\forall i\in [n].$
    Then $f$ is realizable by \cref{lem:thickening}.
    For this $f$, 
    \[
    u(f)_a=\frac{1}{n_a}\sum_{i\in P_a}s_i\cdot s_i^{j-1}=\frac{1}{c}\sum_{i\in P_a}s_i^j \qquad (\forall a\in [q]).
    \]
    Then $\sum_{i\in P_a}s_i^j$ agree for all $a\in [q]$, for any fixed $j.$
    By Newton's identities, the elementary symmetric polynomials in $\{s_i\mid i\in P_a\}$ (as a multiset) agree for all $a\in [q].$
    Define the polynomial $F_a(z)=\prod_{i\in P_a}(z-s_i)$ for every $a\in[q].$ 
    Then $F_a$ are the same polynomial for all $a\in [q].$
    This shows that the multisets $\{s_i\mid i\in P_a\}$ agree for all $a\in [q].$
\end{proof}

\begin{lemma}\label{lem: nonbipartite case}
    Let $M\in \Sym{n}{}{\R_{\ge 0}}$ be nonzero, connected and nonbipartite.
    If $\PlGH(M)$ is not $\#\mathrm P$-hard, then $M=K\otimes ww^{\tt T}$ up to a simultaneous permutation of rows and columns, where $K$ is an Ising tensor and $w$ is an entrywise positive vector.
\end{lemma}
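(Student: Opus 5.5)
We aim to show that $M_{ij}=C_{ab}\,s_is_j$ (for $i\in P_a$, $j\in P_b$) can be rewritten as $K\otimes ww^{\tt T}$. The argument reuses the structure already extracted in the proof of \cref{lem:nn-moments-nonbipartite}. From that proof we already know three things. First, the compressed matrix $C\in\Sym{q}{}{\R_{\ge0}}$ is, after a simultaneous permutation, an Ising tensor $K$. Second, all proportional classes have a common size $n_a=c$. Third, the multisets $\{s_i\mid i\in P_a\}$ coincide for all $a\in[q]$.

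Granting these, we fix an ordering of each class: write $P_a=\{i_{a,1},\dots,i_{a,c}\}$ with $s_{i_{a,1}}\le\cdots\le s_{i_{a,c}}$. Because the multisets agree, $s_{i_{a,k}}=w_k$ for a common vector $w\in\R_{>0}^c$, independent of $a$. We then index $[n]$ by pairs $(a,k)\in[q]\times[c]$ via $i_{a,k}\leftrightarrow(a,k)$, which is a simultaneous permutation of rows and columns. In this indexing, \cref{def: compressed matrix} gives
\[
M_{(a,k),(b,l)}=C_{ab}\,s_{i_{a,k}}s_{i_{b,l}}=C_{ab}\,w_kw_l ,
\]
that is, $M=C\otimes ww^{\tt T}$. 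Applying the further permutation that turns $C$ into the Ising tensor $K$ (acting on the first tensor coordinate only) yields $M=K\otimes ww^{\tt T}$ up to a simultaneous permutation. Here $w$ is entrywise positive because $M$ has no zero rows, so $s_i>0$.

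It remains to justify the three facts. The Ising form of $C$ follows from \cref{lem: realize compressed matrix C} together with \cref{lem:nn-nonbip-core}. For the latter we must check that $C$ is nonzero, connected and nonbipartite with pairwise nonproportional rows; this is exactly the observation made in the preceding proof. The equal class sizes and equal $s$-multisets are the content of \cref{lem:nn-moments-nonbipartite} and its proof, using $f_i=r_i/s_i$ and then $f_i=s_i^{j-1}$.

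The only subtle point is consistency. The representation $M_{ij}=C_{ab}s_is_j$ must hold for every pair of representatives, not just for chosen ones. Since rows in a class are proportional and each row is normalized by its own $s_i$, the normalized rows $M_{i,*}/s_i$ are identical within a class, and this gives the consistency. We expect no real obstacle here: the work was done in \cref{lem:nn-moments-nonbipartite}, and this lemma is essentially bookkeeping. If $q=1$, then $K$ is the empty Ising tensor (a positive scalar), which is allowed.
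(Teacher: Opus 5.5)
Your proposal is correct and follows the paper's proof essentially step for step. The paper also takes two facts from \cref{lem:nn-moments-nonbipartite}: the Ising form of the compressed matrix $C$, and the equality of the multisets $\{s_i\mid i\in P_a\}$. It then orders each class so that $s_{i_{a,t}}=w_t$ and reads off $M_{(a,s),(b,t)}=K_{ab}w_sw_t$. The only cosmetic difference is that the paper builds the permutation $\pi$ taking $C$ to $K$ directly into the ordering of the classes, whereas you first obtain $C\otimes ww^{\tt T}$ and then permute the first tensor factor.
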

\begin{proof}
    Let $C\in \Sym{q}{}{\R_{\ge 0}}$ be the compressed matrix of $M$.
    As in the proof of \cref{lem:nn-moments-nonbipartite}, $C$ is an Ising tensor up to a simultaneous permutation of rows and columns.
    More precisely, there exists a permutation $\pi:[q]\to [q]$ and an Ising tensor $K$ such that $K_{ab}=C_{\pi(a)\pi(b)}$ for all $a,b\in [q].$
    By \cref{lem:nn-moments-nonbipartite}, we may write $\{s_i\mid i\in P_a\}=\{w_1,\ldots,w_k\},\,\forall a\in [q]$. Here $\{w_1,\ldots,w_k\}$ is the same for all
    $a \in [q]$.
    For every $a\in [q],$
    We may reorder the $k$ elements in $P_{\pi(a)}$ by $(i_{a,1},i_{a,2},\ldots, i_{a,k})$ such that $s_{i_{a,t}}=w_t$, for every $t\in [k].$
    Finally, we order the $n$ colors by $(i_{1,1},\ldots, i_{1,k},i_{2,1},\ldots, i_{2,k},\ldots, i_{q,1},\ldots,i_{q,k}).$
    We obtain the matrix ${M'}$ after this reordering, which satisfies
    \[
    M'_{(a,s),(b,t)}=M_{i_{a,s},i_{b,t}}=C_{\pi(a)\pi(b)}s_{i_{a,s}}s_{i_{b,t}}=K_{ab}w_sw_t\qquad (\forall a,b\in[q],\forall s,t\in[k]).
    \]
    That is, $M'=K\otimes ww^{\tt T}.$
\end{proof}

\subsection{Bipartite Case}
\label{subsec: bipartite}
In this subsection, we prove that a nonzero connected bipartite matrix is in the form of \cref{item: tractable 2}, otherwise it defines a \#P-hard $\PlGH$ problem.

\begin{lemma}
\label{lem:nn-matching}
Let $\ell,r,m\in \Zp$ and $\ell+r=m$.
Let $R=\left(\begin{smallmatrix}O_\ell&D\\D^{\tt T}&O_r\end{smallmatrix}\right)\in\Sym{m}{}{\R_{\ge0}}$ have pairwise nonproportional rows, where   $D\in\Rp^{\ell\times r}$.
If $\PlGH(R)$ is not $\#\mathrm P$-hard, then $\ell=r$, and $D$ is an Ising tensor up to independent row and column permutations. 
\end{lemma}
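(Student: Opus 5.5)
Consider the two Gram-type matrices $DD^{\tt T}\in\Sym{\ell}{}{\Rp}$ and $D^{\tt T}D\in\Sym{r}{}{\Rp}$. Stretching (\cref{lem:stretching}) gives $R^2=\left(\begin{smallmatrix}DD^{\tt T}&O\\O&D^{\tt T}D\end{smallmatrix}\right)$, which is realizable. Its two diagonal blocks are exactly its two connected components, because $D$ is entrywise positive. So \cref{lem:nn-component} yields $\PlGH(DD^{\tt T})\le_T\PlGH(R)$ and $\PlGH(D^{\tt T}D)\le_T\PlGH(R)$. The rows of $R$ are pairwise nonproportional, so the rows of $D$ are pairwise nonproportional, and likewise the rows of $D^{\tt T}$.

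\textbf{Step 1: the Gram matrices are Ising tensors.} The rows of $DD^{\tt T}$ are pairwise nonproportional. If $DD^{\tt T}x=0$ then $D^{\tt T}x=0$, so the kernel argument of \cref{lem: nonproportional rows matrix powers} applies to $DD^{\tt T}$ viewed as a block of $R^2$. It is entrywise positive and not \#P-hard. \cref{lem:nn-positive-kernel} then says $DD^{\tt T}$ is an Ising tensor, hence invertible, so $\ell=2^{d_1}$ and $\operatorname{rank}D=\ell$. The same argument applied to $D^{\tt T}D$ gives $r=2^{d_2}$ and $\operatorname{rank}D=r$. Therefore $\ell=r$ and $D$ is square and invertible.

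\textbf{Step 2: $D$ itself is an Ising tensor.} Here the plan copies the analytic-family argument of \cref{lem:nn-positive-kernel}, applied to the bipartite Holant $\PlHolant(\mathcal{EQ}\mid R)$ directly. For every $t\in\R$, set $D(t)=D^{\circ t}$ and $R(t)=\left(\begin{smallmatrix}O&D(t)\\D(t)^{\tt T}&O\end{smallmatrix}\right)$. The family $R(t)$ is realizable from $R$ by \cref{lem:nn-thickening_interpolation}. Then $R(t)^2$ has blocks $D(t)D(t)^{\tt T}$, and these are realizable by the component lemma. They are entrywise positive, analytic in $t$, and full rank at $t=1$.

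\cref{lem: uniform tensor} then provides one bijection $\phi$ with $\log_\circ\bigl(D(t)D(t)^{\tt T}\bigr)\in\mathcal V_\phi$ for all $t$, and a bijection $\psi$ doing the same for $D(t)^{\tt T}D(t)$. Differentiate twice at $t=0$, where $D(0)=J$. This gives $\log_\circ(D(t)D(t)^{\tt T})=\log\ell\cdot J+t^2\,\frac1\ell\,P L L^{\tt T}P+O(t^3)$, with $L=\log_\circ D$ and $P$ the centering projection; the linear term vanishes after centering. Hence the centered $PLL^{\tt T}P$ lies in $\mathcal V_\phi$, and similarly $PL^{\tt T}LP$ lies in $\mathcal V_\psi$.

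\textbf{Step 3: recover the cube structure of $L$ (main obstacle).} We now want to conclude that $PLP=\sum_i\gamma_i\,\phi(x)_i\psi(y)_i$ for suitable $\gamma_i$. Then $D=\lambda\bigotimes\left(\begin{smallmatrix}a_i&b_i\\b_i&a_i\end{smallmatrix}\right)$ after permuting rows by $\phi$ and columns by $\psi$. The Gram information determines $PLP$ only up to an orthogonal factor on the right. In the cube basis, though, $PLL^{\tt T}P$ is diagonal, with character vectors $\phi(\cdot)_i$ as eigenvectors.

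I would try first to use the higher-order terms in $t$. Equivalently, require all matrices $D(t)D(s)^{\tt T}$ for $s,t\in\R$ to be cube-type, which is legitimate by stretching gadgets of $D^{\circ t}$ and $D^{\circ s}$. The mixed $\partial_s\partial_t$ derivative gives exactly $\frac1\ell PL L^{\tt T}P$ again. The higher derivatives give the mixed moments $P(L^{\circ a})(L^{\circ b})^{\tt T}P$, and these should force $PLP$ to map each character of $\psi$ to a multiple of a character of $\phi$ with no mixing.

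If that fails, the fallback is to run the whole \cref{sec:algebra}–\cref{sec:dichotomy} machinery on the bipartite realizable algebra spanned by the off-diagonal blocks. The even and odd parts then act as a $\Z_2$-graded algebra, and the color bijection is applied separately on each side. Given the full-rank conclusion of Step 1, this final identification is where I expect the real difficulty to lie.
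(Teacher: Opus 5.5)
Your Step 1 is correct and is essentially the paper's argument (the paper runs it for $F(t)=U(t)U(t)^{\tt T}$ and $G(t)=U(t)^{\tt T}U(t)$ at every $t\neq 0$). Invoking \cref{lem: uniform tensor} to get the bijections $\phi,\psi$ is also what the paper does.

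The gap is Step 3, which is the actual content of the lemma and which you leave as a plan. Take $L$ centered, as discussed at the end. The second-order term only yields $\operatorname{im}(L)\subseteq\operatorname{im}(S_\ell)$ and $\operatorname{im}(L^{\tt T})\subseteq\operatorname{im}(S_r)$, where $(S_\ell)_{xi}=\phi(x)_i$ and $(S_r)_{yj}=\psi(y)_j$. That is, $L=S_\ell A S_r^{\tt T}$ for some $d\times d$ matrix $A$, which is exactly the ``orthogonal factor'' ambiguity you point out. What must be shown is that $A$ is a weighted permutation matrix, not a diagonal one. Your target $\sum_i\gamma_i\phi(x)_i\psi(y)_i$ with matched indices is too strong for the $\psi$ handed to you by \cref{lem: uniform tensor}, whose coordinates must be relabeled.

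The missing idea is to use the exact cube form at $t=1$ instead of infinitesimal moments. Summing over $z$ through the bijection $\psi$ gives
\[
F(1)_{xy}=\sum_{z}e^{L_{xz}+L_{yz}}=2^d\prod_{j=1}^d\cosh w_j,\qquad w_j=\sum_{h=1}^d A_{hj}\bigl(\phi(x)_h+\phi(y)_h\bigr),
\]
while $\log F(1)_{xy}=\mu+\sum_j\phi(x)_j\phi(y)_j\gamma_j$. Take $\phi(x),\phi(y)$ that agree exactly on $\{i\}$. Then take them agreeing exactly on $\{i,k\}$, with either common sign in coordinate $k$. Using $\cosh(u+v)\cosh(u-v)=\cosh^2u\cosh^2v\,(1-\tanh^2u\tanh^2v)$, these choices give $\sum_j\log\bigl(1-\tanh^2(2A_{ij})\tanh^2(2A_{kj})\bigr)=0$. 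Hence $A_{ij}A_{kj}=0$ for $i\neq k$. A zero row of $A$ would make two rows of $D$ equal, contradicting nonproportionality.

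Neither of your suggested routes replaces this.
\begin{enumerate}
\item The matrices $D(t)D(s)^{\tt T}$ with $s\neq t$ are not symmetric, so \cref{thm:main} and \cref{lem:nn-positive-kernel} give no cube-type conclusion for them. The claim that mixed moments ``should force'' no mixing is therefore unsupported.
\item Rerunning \cref{sec:algebra}--\cref{sec:dichotomy} on a graded bipartite algebra is not available as is. $\log_\circ$, the homogeneous setting and \cref{thm:nonconstant} all require entrywise positive matrices.
\end{enumerate}

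There are also two smaller issues.
\begin{enumerate}
\item Unless you first know $L$ has constant row sums, the $t^2$-coefficient of the centered $\log_\circ\bigl(D(t)D(t)^{\tt T}\bigr)$ is $\frac1\ell PLPL^{\tt T}P$, not $\frac1\ell PLL^{\tt T}P$.
\item A cube form for $PLP$ alone only gives $D=\operatorname{diag}(\alpha)K\operatorname{diag}(\beta)$ with $K$ an Ising tensor, not $D$ itself Ising.
\end{enumerate}
The paper handles both by centering $L=\log_\circ D-cJ$. It then derives zero row and column sums from the constant diagonal of $F(t)$, differentiated at $t=0$.
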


\begin{proof}
All realizability in this proof is relative to $R.$
By \cref{lem:nn-thickening_interpolation},
we can realize 
$$\left(\begin{matrix}O_\ell&D^{\circ t}\\(D^{\tt T})^{\circ t}&O_r\end{matrix}\right)\qquad (\forall t\in \R).$$
Put
\[
c=\frac1{\ell r}\sum_{x=1}^{\ell}\sum_{y=1}^r\log D_{xy},
\qquad
L=\log_\circ D-cJ_{\ell\times r},
\qquad
U(t)=\exp_\circ(tL).
\]
Then we can realize the matrix
$$H(t):=\left(\begin{matrix}
    O_\ell & e^{-ct}D^{\circ t}\\
    e^{-ct}(D^{\tt T})^{\circ t} & O_r
\end{matrix}\right)=\left(\begin{matrix}
    O_\ell & \exp_\circ(tL)\\
    \exp_\circ(tL)^{\tt T} & O_r
\end{matrix}\right)=\left(\begin{matrix}
    O_\ell & U(t)\\
    U(t)^{\tt T} & O_r
\end{matrix}\right)\quad(\forall t\in \R).$$
Since $R$ has pairwise nonproportional rows, so does $H(t)$ for $t \in \mathbb{R}_{\neq 0}$.
Let $F(t)=U(t)U(t)^{\tt T}\in \Sym{\ell}{psd}{\Rp}$ and $G(t)=U(t)^{\tt T}U(t)\in \Sym{r}{psd}{\Rp}$ for every $t\in \R.$
By stretching (\cref{lem:stretching}), we can realize
\[
H(t)^2=
\left(\begin{matrix}
    F(t) & O_{\ell\times r}\\
    O_{r\times \ell} & G(t)
\end{matrix}\right)\qquad(\forall t\in \R).
\]
By \cref{lem: nonproportional rows matrix powers}, $H(t)^2$ also has pairwise nonproportional rows, as do $F(t)$ and $G(t)$, for every $t\in \R_{\neq 0}.$
By \cref{lem:nn-component}, $\PlGH(F(t))\le_T\PlGH(R)$ and $\PlGH(G(t))\le_T \PlGH(R)$, for every $t\in \R.$
Since $\PlGH(R)$ is not \#P-hard, neither $\PlGH(F(t))$ nor $\PlGH(G(t))$ is \#P-hard.
%Since $R$ has pairwise nonproportional rows, $D$ has pairwise nonproportional rows and columns, as does $U(t)$ for every $t \in \mathbb{R}_{\neq 0}$.
Then \cref{lem:nn-positive-kernel} applies to $F(t)$ and $G(t)\,(\forall t\neq 0)$.
We obtain that $\ell=2^d$ and $r=2^e$; $F(t)$ and $G(t)$ are Ising tensors up to possibly distinct simultaneous permutations of rows and columns.
In particular, $F(t)$ is invertible, and so is $G(t)$, for every $t\neq 0$.
Fix some $t\in \R_{\neq 0}.$
By invertibility, 
\(
\ell=\operatorname{rank}(F(t))=\operatorname{rank}(U(t))=\operatorname{rank}(G(t))=r.
\)
In the following we prove that $D$ is an Ising tensor up to independent row and column permutations. 

\begin{claim}\label{claim: L zero row col sum}
    $L$ has zero row and column sums.
\end{claim}
\begin{claimproof}{\ref{claim: L zero row col sum}}
Consider
\begin{equation*}
    F(t)_{xx}=\sum_{y=1}^r U(t)_{xy}^2=\sum_{y=1}^re^{2tL_{xy}}\qquad (\forall x\in [\ell]).
\end{equation*}
Differentiating at $t=0$ gives 
$$F'(0)_{xx}=2\sum_{y=1}^r{L_{xy}}\qquad (\forall x\in [\ell]).$$
If $t\neq 0$, then $F(t)$ is an Ising tensor up to a simultaneous permutations of rows and columns, and thus $F(t)$ has constant diagonal.
%%% your sentence here is a trubcated. i edited
%If $t=0$, $F(0)$ also has constant 
By continuity of the expression of $F(t)_{xx}$, and taking limit $t \rightarrow 0$, we see that $F(0)$ also has constant diagonal.
Thus, $F'(0)_{xx}=\lim_{t\to 0}\frac{1}{t}(F(t)_{xx}-F(0)_{xx})$ is independent of $x.$
Hence the above equation implies that $L$ has constant row sum.
The similar calculation for $G'(0)_{yy}$ implies that $L$ has constant column sum.
Since the sum of all entries of $L$ is zero by the definition of $L$, every row and column
sum is zero.
\end{claimproof}

Apply \cref{lem: uniform tensor} to $\{F(t)\}_{t\in \R}$ (note that $F(t)\in \Sym{\ell}{F}{\Rp}$ when $t\neq 0$), we obtain a bijection $\phi_{\rm L}:[\ell]\to \{1,-1\}^d$ such that $\log_\circ F(t)\in \mathcal{V}_{\phi_{\rm L}}$ for every $t\in \R.$
Similarly, there is a bijection $\phi_{\rm R}:[\ell]\to \{1,-1\}^d$ such that $\log_\circ G(t)\in \mathcal{V}_{\phi_{\rm R}}$ for every $t\in \R.$
Entrywise, $$F(t)_{xx'}=\sum_{y=1}^\ell  U(t)_{xy}U(t)_{x'y}=\sum_{y=1}^\ell e^{t\left(L_{xy}+L_{x'y}\right)} \qquad (\forall x,x'\in[\ell]).$$ 
Thus, $F(0)_{xx'}=\ell $, $F'(0)_{xx'}=\sum_{y=1}^\ell (L_{xy}+L_{x'y})=0$, and $F''(0)_{xx'}=\sum_{y=1}^\ell (L_{xy}+L_{x'y})^2$, for all $x,x'\in [\ell].$
Hence
\[
\left.\frac{d^2}{dt^2}\log F(t)_{xx'}\right|_{t=0}=\frac{F''(0)_{xx'}}{F(0)_{xx'}}-\frac{F'(0)_{xx'}^2}{F(0)_{xx'}^2}=\frac{1}{\ell }\sum_{y=1}^r(L_{xy}+L_{x'y})^2\quad(\forall x,x'\in [\ell]).
\]
Therefore, with $P=I_\ell-J_\ell/\ell$, 
\begin{equation}
 \left.\frac{d^2}{dt^2}
       P\log_{\circ}(F(t))P\right|_{t=0}
 =\frac2\ell LL^{\tt T}.
 \label{eq:nn-second-order}
\end{equation}
This can be verified by setting $B=  \left.\frac{d^2}{dt^2}\log F(t)\right|_{t=0}
=\tfrac{1}{\ell} (M_1 +M_2 +2LL^{\tt T})$, and $A = PBP$,
where $M_1= {\bf 1} {\bf d}^{\tt T}$, $M_2=   {\bf d}{\bf 1}^{\tt T}$,
${\bf d}= {\rm diag}(LL^{\tt T})$. Using Claim~\ref{claim: L zero row col sum}
we have $LL^{\tt T} J = J LL^{\tt T} =O$.  After some calculation, one can check that $A = \tfrac{2}{\ell} LL^{\tt T}$.
% B = [M1 + M2 + 2M3]/l
% write M_3 = LL^{\tt T}, A = (I-J/l)B(I-J/l) = B -JB/l -BJ/l +JBJ/l^2, 
% J = 1 1^T. 
% JM_1 = Tr . J,  M_2 J = Tr . J
% JM_2 = l M_2,   M_1 J = l M_1
% J M_3 = M_3 J =0
% JB = [ JM1 + J M2 + 0]/l = [Tr J]/l  +M_2
% BJ = [ l M1 + Tr . J] /l = [Tr J]/l  +M1
% JBJ = 2 Tr J
% A = B -JB/l -BJ/l +JBJ/l^2 = B - [[Tr J]/l^2  +M_2/l] - [[Tr J]/l^2  +M1/l]
% + 2/l^2 [Tr J] = 2/l  M3

Since $\log_\circ F(t)\in \mathcal{V}_{\phi_{\rm L}}$, 
by \cref{lem: ising tensor and vector space}
we can write $\log_\circ F(t)=c(t)J+\sum_{i=1}^d\gamma_i(t) S(\phi_{\rm L},i)$.
By \cref{lem: cube basis}, this expansion is unique, and smoothness of $\log_\circ F(t)$ implies the smoothness $c(t)$ and $\gamma_i(t)$. 
%where $S(\phi_{\rm L},i)_{xx'}=\phi_{\rm L}(x)_i\phi_{\rm L}(x')_i$.
Define 
$S_{\ell}\in \R^{\ell\times d}$ by $(S_\ell)_{xi}=\phi_{\rm L}(x)_i$, then $\log_\circ F(t)=c(t)J+S_{\ell}\diag (\gamma_1(t),\ldots,\gamma_d(t))S_\ell^{\tt T}.$
Note that the matrix $S_\ell$ has zero column sum, since $\phi_{\rm L}$ is a bijection from $[\ell]$ to $\{1,-1\}^d$.
Hence $PS_{\ell}=S_{\ell}.$
Also $PJ =O$.
Thus, $P\log_\circ(F(t))P=S_{\ell}\diag (\gamma_1(t),\ldots,\gamma_d(t))S_\ell^{\tt T}$, for every $t\in \R.$
Then \eqref{eq:nn-second-order} implies $LL^{\tt T}=\frac{\ell}{2}S_\ell \diag(\gamma_1''(0),\ldots, \gamma_d''(0))S_{\ell}^{\tt T}.$
In particular, $\operatorname{im}(LL^{\tt T})\subseteq \operatorname{im}(S_{\ell}).$
For the real matrix $L$, the rank of $L L^{\tt T}$ is the same as $L$.
%As $L L^{\tt T} v =0 \Longrightarrow \|L^{\tt T}v\| =0  \Longrightarrow  L^{\tt T}v=0$,
%$L L^{\tt T}$ has the same rank as $L$.
So, $\operatorname{im}(LL^{\tt T})=\operatorname{im}(L)$, and we obtain $\operatorname{im}(L)\subseteq\operatorname{im}(S_\ell).$
Define $S_r\in \R^{\ell \times d}$ by $(S_r)_{yj}=\phi_{\rm R}(y)_j.$
The same calculation for $G(t)$ yields $\operatorname{im}(L^{\tt T})\subseteq \operatorname{im}(S_{r}).$
\begin{claim}\label{claim: L and A}
    Let $A=\frac{1}{\ell ^2}S_\ell^{\tt T}LS_r\in \R^{d\times d}$, then $L=S_{\ell} A S_r^{\tt T}.$
\end{claim}
\begin{claimproof}{\ref{claim: L and A}}
    Note that
    \(S_\ell^{\tt T}S_\ell=\ell I_d\) and \(S_r^{\tt T}S_r=\ell I_d\), since $\phi_{\rm L}:[\ell]\to \{1,1\}^d$ and $\phi_{\rm R}:[\ell]\to \{1,1\}^d$ are bijecions.
    Since $\operatorname{im}(L)\subseteq\operatorname{im}(S_\ell)$, there exists $B\in \R^{d\times \ell}$ such that $L=S_\ell B.$
    Then \(
    \frac1\ell S_\ell S_\ell^{\tt T}L
    =\frac1\ell S_\ell S_\ell^{\tt T}S_\ell B
    =\frac1\ell S_\ell(\ell I_d)B
    =S_\ell B
    =L.
    \)
    Similarly, by $\operatorname{im}(L^{\tt T})\subseteq \operatorname{im}(S_{r})$ we have \(\frac1\ell LS_rS_r^{\tt T}=L.\)
    Combining these two identities, we obtain $L=\frac{1}{\ell^2} S_\ell S_\ell^{\tt T} LS_rS_r^{\tt T}=S_{\ell}AS_{r}^{\tt T}.$
\end{claimproof}
Next we show that the matrix $A$ in Claim~\ref{claim: L and A} must be a weighted permutation matrix, i.e., there is exactly one nonzero element in every row and column of $A.$

\begin{claim}\label{claim: A distinct rows}
    For all distinct $i,k\in [d]$ and every $j\in[d]$, $A_{ij}A_{kj}=0.$
\end{claim}
\begin{claimproof}{\ref{claim: A distinct rows}}
By $\log_{\circ}F(1)\in \mathcal{V}_{\phi_{\rm L}}$, for arbitrary $x,y\in[\ell]$ we can write
\begin{equation}\label{eq: logF1 1}
    \log F(1)_{xy}=\mu+\sum_{j=1}^d \phi_{\rm L}(x)_j\phi_{\rm L}(y)_j\gamma_j
\end{equation}
for some $\mu,\gamma_1,\ldots,\gamma_d\in \R.$
On the other hand, by $F(1)=U(1)U(1)^{\tt T}=\exp_\circ (L)\exp_\circ (L)^{\tt T}$,
\begin{equation}\label{eq: F1xy 1}
   F(1)_{xy}=\sum_{z=1}^\ell e^{L_{xz}+L_{yz}}. 
\end{equation}
From $L=S_{\ell}AS_r^{\tt T}$, $(S_\ell)_{xj}=\phi_{\rm L}(x)_j$ and $(S_r)_{zj}=\phi_{\rm R}(z)_j$, we have
\begin{equation}\label{eq: Lxz plus Lyz}
    \begin{aligned}
    L_{xz}+L_{yz}=&\sum_{j=1}^d\sum_{h=1}^d A_{hj}(\phi_{\rm L}(x)_h\phi_{\rm R}(z)_j+\phi_{\rm L}(y)_h\phi_{\rm R}(z)_j)\\
    =&\sum_{j=1}^d\left(\sum_{h=1}^dA_{hj}(\phi_{\rm L}(x)_h+\phi_{\rm L}(y)_h)\right)\phi_{\rm R}(z)_j\\
    =&\sum_{j=1}^d w_j \eta_j,
\end{aligned}
\end{equation}
where $w_j=\sum_{h=1}^dA_{hj}(\phi_{\rm L}(x)_h+\phi_{\rm L}(y)_h)$ and $\eta_j=\phi_{\rm R}(z)_j$, for every $j\in[d].$
Since $\phi_{\rm R}:[\ell]\to\{1,-1\}^d$ is a bijection, when $z$ runs over $1$ to $\ell$, $\eta=\phi_{\rm R}(z)$ runs over $\{1,-1\}^d.$
Plug \eqref{eq: Lxz plus Lyz} into \eqref{eq: F1xy 1}, we have
\[
F(1)_{xy}=\sum_{\eta\in\{1,-1\}^d} \exp\left(\sum_{j=1}^d w_j\eta_j \right)
=\sum_{\eta\in\{1,-1\}^d} \prod_{j=1}^de^{w_j \eta_j}
=\prod_{j=1}^d(e^{w_j}+e^{-w_j})
=2^d\prod_{j=1}^d\cosh{w_j}.
\]
Taking logarithm and substituting the expression for $w_j$, 
\begin{equation}\label{eq: logF1 2}
    \log F(1)_{xy}
=\log \ell+
\sum_{j=1}^d
\log\cosh\left(\sum_{h=1}^dA_{hj}(\phi_{\rm L}(x)_h+\phi_{\rm L}(y)_h)\right).
\end{equation}

The equations \eqref{eq: logF1 1} and \eqref{eq: logF1 2} hold for arbitrary $x,y\in[\ell]$.
We may pick $x,y\in [\ell]$ such that $\phi_{\rm L}(x)=-\phi_{\rm L}(y).$
Such $x,y$ exist because $\phi_{\rm L}:[\ell]\to \{1,-1\}^d$ is a bijection.
Then 
\begin{equation*}
    \log F(1)_{xy}=\mu-\sum_{j=1}^d \gamma_j=\log\ell.
\end{equation*}
Hence $\mu=\log \ell+\sum_{j=1}^d \gamma_j.$
Plugging this into \eqref{eq: logF1 1},
\begin{equation}\label{eq: logF1 3}
\begin{aligned}
    \log F(1)_{xy}=\log \ell
    %+\sum_{j=1}^d \gamma_i
    +\sum_{j=1}^d \left( 1 + \phi_{\rm L}(x)_j\phi_{\rm L}(y)_j \right) \gamma_j
    =\log \ell +\sum_{\substack{j\in [d]\\\phi_{\rm L}(x)_j=\phi_{\rm L}(y)_j}} 2 \gamma_j.
\end{aligned} 
%%% I put 2 inside, it looks better
\end{equation}

Now fix distinct $i,k\in [d].$ 
We choose $x,y\in [\ell]$ such that $\phi_{\rm L}(x)_i=\phi_{\rm L}(y)_i=1$ and $\phi_{\rm L}(x)_j=-\phi_{\rm L}(y)_j$ for every $j\in [d]\setminus \{i\}.$
Equations \eqref{eq: logF1 2} and \eqref{eq: logF1 3} give
\begin{equation}\label{eq: log cosh Aij}
    \sum_{j=1}^d \log\cosh \left(2A_{ij}\right)=2\gamma_i.
\end{equation}
\iffalse
Similarly, we may choose $x,y\in [n]$ such that $\phi_{\rm L}(x)_k=\phi_{\rm L}(y)_k=1$ and $\phi_{\rm L}(x)_j=-\phi_{\rm L}(y)_j$ for every $j\in [d]\setminus \{k\}.$
Then
\begin{equation}\label{eq: log cosh Akj}
    \sum_{j=1}^d \log\cosh \left(2A_{kj}\right)=2\gamma_k.
\end{equation}
\fi
Also, we may choose $x,y\in [n]$ such that $\phi_{\rm L}(x)_i=\phi_{\rm L}(y)_i=1$, $\phi_{\rm L}(x)_k=\phi_{\rm L}(y)_k=\varepsilon\in\{1,-1\}$, and $\phi_{\rm L}(x)_j=-\phi_{\rm L}(y)_j$ for every $j\in [d]\setminus \{i,k\}.$
Equations \eqref{eq: logF1 2} and \eqref{eq: logF1 3} imply
\begin{equation}\label{eq: log cosh Aij Akj}
    \sum_{j=1}^d\log\cosh\bigl(2A_{ij}
    %\pm
    + 2 \varepsilon A_{kj}\bigr)=2\gamma_i+2\gamma_k.
\end{equation}
Combining \eqref{eq: log cosh Aij}, and substituting $k$ for $i$ in \eqref{eq: log cosh Aij}, and \eqref{eq: log cosh Aij Akj},
for both signs $\pm$,
\begin{equation}\label{eq: log cosh}
    \sum_{j=1}^d\log\cosh\bigl(2A_{ij}\pm 2A_{kj}\bigr)
=
\sum_{j=1}^d\log\cosh(2A_{ij})
+\sum_{j=1}^d\log\cosh(2A_{kj}).
\end{equation}
Adding the two identities in \eqref{eq: log cosh}, combining logarithm and using
\[
\frac{\cosh(u+v)\cosh(u-v)}{\cosh^2u\cosh^2v}
=1-\tanh^2u\,\tanh^2v
\]
yields
\begin{equation}\label{eq: tanhAij}
    \sum_{j=1}^d\log\!\left(
1-\tanh^2(2A_{ij})\tanh^2(2A_{kj})
\right)=0.
\end{equation}
Every summand in \eqref{eq: tanhAij} is nonpositive, with equality precisely when $\tanh^2(2A_{ij})\tanh^2(2A_{kj})=0,\,\forall j\in[d]$.
This is equivalent to
$A_{ij}A_{kj}=0$ for every $j\in [d]$.
\end{claimproof}

\begin{claim}\label{claim: A nonzero row}
    No row of $A$ is zero.
\end{claim}
\begin{claimproof}{\ref{claim: A nonzero row}}
    Suppose for a contradiction that the $i$-th row of $A$ is zero.
    Choose $x,x'\in[\ell]$ such that $\phi_{\rm L}(x)_i=-\phi_{\rm L}(x')_i$ and $\phi_{\rm L}(x)_h=\phi_{\rm L}(x')_h$ for every $h\in [d]\setminus\{i\}.$
    Then $x\neq x'.$
    From $L=S_{\ell}AS_r^{\tt T}$ and the definition of $S_\ell,S_r$, we have 
    \begin{equation}\label{eq: Lxy and Lx'y}
        L_{xy}=\sum_{h=1}^d\sum_{j=1}^dA_{hj}\phi_{\rm L}(x)_h\phi_{\rm R}(y)_j\quad \text{and} \quad 
        L_{x'y}=\sum_{h=1}^d\sum_{j=1}^dA_{hj}\phi_{\rm L}(x')_h\phi_{\rm R}(y)_j
    \end{equation}
    for every $y\in [\ell].$
    The terms in the summations in \eqref{eq: Lxy and Lx'y} indexed by $h = i$ vanish because we assumed that the $i$-th row of $A$ is zero. 
    The remaining terms indexed by $h \neq i$ are identical.
    It follows that $L_{xy} = L_{x'y}$ for every $y \in [\ell]$.
    Thus, $D_{xy}=D_{x'y}$ for every $y\in [\ell]$, contradicting that $D$ has pairwise nonproportional rows.
\end{claimproof}

By Claims~\ref{claim: A distinct rows} and~\ref{claim: A nonzero row}, there is exactly one nonzero element in every row of $A$, i.e., $A$ is a weighed permutation matrix.
There exists a permutation $\pi:[d]\to [d]$, such that $A_{ij}=\gamma_i\neq 0$ if $j=\pi(i)$, and $A_{ij}=0$ if $j\neq \pi(i)$, for all $i,j\in [d].$
From $L=S_{\ell}AS_r^{\tt T}$ and the definition of $S_\ell,S_r$, we have 
\[
L_{xy}
=\sum_{i=1}^d\gamma_i
\phi_{\rm L}(x)_i\phi_{\rm R}(y)_{\pi(i)}\qquad (\forall x,y\in [\ell]).
\]
Define
\(
\widetilde\phi_{\rm R}(y)_i:=\phi_{\rm R}(y)_{\pi(i)}
\)
for every $y\in [\ell].$
This remains a bijection from $[\ell]$ to $\{1,-1\}^d$, and 
\[
L_{xy}
=\sum_{i=1}^d\gamma_i
\phi_{\rm L}(x)_i\widetilde\phi_{\rm R}(y)_{i}\qquad (\forall x,y\in [\ell]).
\]
By $\log_\circ D=L+cJ_{\ell}$,
\[
D_{xy}
=e^c\prod_{i=1}^d
\exp\bigl(\gamma_i\phi_{\rm L}(x)_i\widetilde\phi_{\rm R}(y)_i\bigr) \qquad(\forall x,y\in [\ell]).
\]
Therefore, $D$ is the Ising tensor $e^c\bigotimes_{i=1}^d \left(\begin{matrix}
    e^{\gamma_i}& e^{-\gamma_i}\\
    e^{-\gamma_i} & e^{\gamma_i}
\end{matrix}\right)$  after independent permutations of rows and columns.
\end{proof}

The following lemma generalizes \cref{lem:nn-matching} from entrywise positive $D$ to nonnegative $D$.

\begin{lemma}
\label{lem:nn-bip-core}
Let $\ell,r,m\in \Zp$ and $\ell+r=m$.
Let $R=\left(\begin{smallmatrix}O_\ell&D\\D^{\tt T}&O_r\end{smallmatrix}\right)\in\Sym{m}{}{\R_{\ge0}}$
be connected with pairwise nonproportional rows, where $D\in\R_{\ge0}^{\ell\times r}$.
If $\PlGH(R)$ is not $\#\mathrm P$-hard, then $\ell=r$, and $D$ is an Ising tensor up to independent row and column permutations. 
\end{lemma}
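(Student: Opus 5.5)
The plan is to reduce to \cref{lem:nn-matching} by producing from $R$ a realizable bipartite matrix whose off-diagonal block is entrywise positive. The key point is that the nonbipartite case used odd powers; here we use odd powers of the bipartite matrix. Since $R$ is connected and bipartite, for a sufficiently large odd integer $k$ we have
\[
R^k=\begin{pmatrix}O_\ell&D_k\\D_k^{\tt T}&O_r\end{pmatrix},
\qquad D_k=(DD^{\tt T})^{(k-1)/2}D .
\]
Here $D_k$ is entrywise positive, because every left vertex is joined to every right vertex by walks of every sufficiently large odd length. Such walks can be padded by back-and-forth steps along an edge, and connectivity gives a walk of odd length between any left--right pair. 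By \cref{lem:stretching}, $\PlGH(R^k)\le_T\PlGH(R)$, so $\PlGH(R^k)$ is not \#P-hard. By \cref{lem: nonproportional rows matrix powers}, $R^k$ still has pairwise nonproportional rows. Then \cref{lem:nn-matching} applied to $R^k$ gives $\ell=r=2^d$, and $D_k$ is an Ising tensor after independent row and column permutations $P_1,P_2$.

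The remaining task is to recover $D$ from $D_k$, as in the odd-root argument in \cref{lem:nn-nonbip-core}. Since $R$ is real symmetric and $k$ is odd, $R$ is the unique real symmetric $k$-th root of $R^k$: take real $k$-th roots of the eigenvalues and keep the eigenvectors.

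Permuting simultaneously by $P_1\oplus P_2$, we may assume $D_k=\lambda\bigotimes_i\begin{psmallmatrix}a_i&b_i\\b_i&a_i\end{psmallmatrix}$. I then construct an explicit symmetric matrix
\[
X=\begin{pmatrix}O&Y\\Y^{\tt T}&O\end{pmatrix}
\]
with $X^k=R^k$. Its $k$-th power is
\[
X^k=\begin{pmatrix}O&(YY^{\tt T})^{(k-1)/2}Y\\ \ast&O\end{pmatrix}.
\]
If $Y$ is symmetric and commutes with $D_k$, then $Y^k=D_k$ suffices. Take $Y$ to be the odd real $k$-th root of the symmetric matrix $D_k$. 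This factors over the tensor product via the Hadamard diagonalization $H_2$, so $Y=\lambda^{1/k}\bigotimes_i\begin{psmallmatrix}a_i'&b_i'\\b_i'&a_i'\end{psmallmatrix}$ with $a_i'\pm b_i'=(a_i\pm b_i)^{1/k}$. By uniqueness of the real symmetric odd root, the permuted $R$ equals $X$, so $D=Y$ in permuted coordinates.

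This is a symmetric matrix of 2 by 2 Ising form in each factor. Entries could in principle be nonpositive or have $a_i'=b_i'$, but $a_i\ne b_i$ gives $a_i'\ne b_i'$. Also, $D\ge0$ is nonzero on each factor pattern, which forces $a_i'\pm b_i'$ and positivity as needed; alternatively, invoke the Ising-tensor structure through $\log$, since $D$ is nonnegative and, being a tensor of 2 by 2 blocks, positive entries follow from nondegeneracy.

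The main obstacle is this last positivity check: that each factor of $Y$ has positive entries, so that $D$ is truly of the form \eqref{eq:ising-tensor} rather than a signed Ising-type tensor. I would handle it by noting that $D\ge 0$ entrywise and each factor is $\begin{psmallmatrix}a'&b'\\b'&a'\end{psmallmatrix}$ with $|a'|\ne|b'|$ nonzero. A zero entry would make $D$ have a zero pattern inconsistent with $D_k>0$ and connectivity. After absorbing signs of factors, nonnegativity of all products forces all $a_i',b_i'>0$ up to a global sign, which is positive since $D\ne0$.
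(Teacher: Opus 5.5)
Your proposal is correct and follows the paper's argument. You pass to an odd power of $R$ so that the off-diagonal block is entrywise positive, invoke \cref{lem:nn-matching}, and recover $D$ as the real odd root of the resulting Ising tensor; you take the root of the whole matrix $R^k$, as in \cref{lem:nn-nonbip-core}, whereas the paper extracts it from $H^2=(DD^{\tt T})^{2k+1}$, which is an inessential difference. The positivity step you single out as the main obstacle is in fact immediate: $a_i'+b_i'=(a_i+b_i)^{1/k}>|a_i-b_i|^{1/k}=|a_i'-b_i'|$ already forces $a_i',b_i'>0$, so no connectivity or sign-absorbing argument is needed.
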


\begin{proof}
All realizability in this proof is relative to $R.$
For every $k\in\Zp$, by stretching (\cref{lem:stretching}) we can realize
\[
R^{2k+1}=
\left(\begin{matrix}
    O_{\ell} & D\\
    D^{\tt T} & O_r
\end{matrix}\right)^{2k+1}=
\left(\begin{matrix}
    O_\ell & D(D^{\tt T}D)^k\\
    (D^{\tt T}D)^kD^{\tt T} & O_r
\end{matrix}\right).
\]
Since $R$ is connected, for sufficiently large $k$, $D(D^{\tt T}D)^k$
is entrywise positive. 
Fix this $k$ and let $H=D(D^{\tt T}D)^k.$ 
Note that $R^{2k+1}$ has pairwise nonproportional rows by \cref{lem: nonproportional rows matrix powers}, so \cref{lem:nn-matching} applies to $R^{2k+1}$.
We obtain that $\ell=r$, and $H$ is an Ising tensor up to independent row and column permutations. 
That is, there exist $\ell\times\ell$ permutation matrices $P$ and $Q$ such that $H'=PHQ$ is an Ising tensor.
In particular, $H'$ is symmetric and invertible.
Let $D'=PDQ$, then $H=D(D^{\tt T}D)^k$ implies $H'=D'(D'^{\tt T}D')^k$.
We may assume that $H$ itself is symmetric and invertible.
%In particular, $H$ is symmetric and invertible.
%Hence $D$ is square and invertible.

Write $A=DD^{\tt T}$. 
Then $H=A^k D$ and
$H^2=HH^{\tt T}=A^{2k+1}$.
Since $2k+1$ is odd, the $(2k+1)$-st root of any symmetric and invertible matrix exists
and is unique, and it is obtained by taking the $(2k+1)$-st root of each eigenvalue of the matrix, while preserving the corresponding eigenvectors.
%%% I added the line below.
%$H^2$ 
%%is positive definite.
%%% I changed .... as we need to talk about 2k+1 root of H too.
Thus,  $A=H^{\frac{2}{2k+1}}$, the unique $(2k+1)$-th real root
of $H^2$, and 
\[
 D=A^{-k}H=H^{-\frac{2k}{2k+1}}H=H^{\frac{1}{2k+1}}.
\]
%Since $2k+1$ is odd, the $(2k+1)$-st root of the symmetric and invertible matrix $H$ is unique, and it is obtained by taking the $(2k+1)$-st root of each eigenvalue of $H$, while preserving the corresponding eigenvectors.
Thus, $D$ is still an Ising tensor, up to the same row and column permutations $P$ and $Q.$
\end{proof}

%%%%%%%%%%%%%%%here

The following lemma is the bipartite version of \cref{lem:nn-moments-nonbipartite}.
\begin{lemma}
\label{lem:nn-moments-bipartite}
Let $M\in\Sym{n}{}{\R_{\ge0}}$ be nonzero, connected and bipartite.
If $\PlGH(M)$ is not $\#\mathrm P$-hard, then the multisets
$\{s_i\mid i\in P_a\}$ agree among all proportional classes
on the same side of the bipartition.
\end{lemma}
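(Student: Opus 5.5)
I would mirror the proof of \cref{lem:nn-moments-nonbipartite}, replacing \cref{lem:nn-nonbip-core} and \cref{lem:nn-positive-kernel} by their bipartite counterparts \cref{lem:nn-bip-core} and \cref{lem:nn-matching}. Let the bipartition be $[n]=V_{\rm L}\sqcup V_{\rm R}$. Every proportional class lies on one side: its rows share a support, and a row indexed by $V_{\rm L}$ is supported in $V_{\rm R}$, and vice versa. The compressed matrix $C$ therefore has the block form $\left(\begin{smallmatrix}O&D_C\\D_C^{\tt T}&O\end{smallmatrix}\right)$. It is connected, and it has pairwise nonproportional rows. By \cref{lem: realize compressed matrix C}, $\PlGH(C)$ is not \#P-hard. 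Then \cref{lem:nn-bip-core} shows that both sides carry the same number $\ell$ of classes and that $D_C$ is an Ising tensor up to independent row and column permutations; in particular $D_C$ is entrywise positive and invertible.

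Next, for a realizable positive unary $f$, \cref{lem:nn-quotient} gives $\PlGH(C_f)\le_T\PlGH(M)$, where $C_f$ has off-diagonal block $\diag(u_{\rm L})D_C\diag(u_{\rm R})$ with $u=u(f)$ split by side. This matrix is entrywise positive. Its rows are pairwise nonproportional, since $D_C$ is invertible and $u>0$. So \cref{lem:nn-matching} applies, and the block $\diag(u_{\rm L})D_C\diag(u_{\rm R})$ is again an Ising tensor up to row and column permutations.

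The key step, and the main obstacle, is to extract the conclusion that $u_{\rm L}$ is constant and $u_{\rm R}$ is constant. The diagonal-constancy argument used in the nonbipartite case is unavailable here, because row and column permutations are independent. I would argue instead that every Ising tensor $N$ has all row sums equal and all column sums equal. Apply this to both $N_1=D_C$ (after permutation) and $N_2=\diag(u_{\rm L})D_C\diag(u_{\rm R})$. Row sums of $N_2$ are $u_{{\rm L},a}(D_Cu_{\rm R})_a$, and these must be constant. To avoid coupling, I would pick $f$ supported with a factor only on one side. Concretely, let $g$ be the realizable unary $\mathbb{I}[i\in V_{\rm L}]$-type weight, obtained by interpolating $(t$ on $V_{\rm L}$, $1$ on $V_{\rm R})$ via \cref{lem:nn-thickening_interpolation}-style thickening or via diagonal entries of $M^2$ versus $M^3$ patterns. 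Failing that, I would use a more robust invariant: an Ising tensor satisfies $N N^{\tt T}$ having constant diagonal. Then $(N_2N_2^{\tt T})_{aa}=u_{{\rm L},a}^2\,(D_C\diag(u_{\rm R})^2D_C^{\tt T})_{aa}$. Taking $f$ to be $s_i^{j-1}$ times the class-size weight $r_i$ (as in the nonbipartite proof) and combining with the $j=0$ normalization, I expect to show $u_{\rm L}$ constant for every such $f$, and symmetrically $u_{\rm R}$ constant.

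Granting this, the rest follows the nonbipartite template. Taking $f_i=r_i/s_i$ shows that the class sizes $n_a$ are constant on each side. Taking $f_i=s_i^{j-1}$ for $j\in\Zp$ shows that the power sums $\sum_{i\in P_a}s_i^j$ agree among classes on the same side. Newton's identities then give equality of the multisets $\{s_i\mid i\in P_a\}$ within each side.
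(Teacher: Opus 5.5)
\textbf{Gap at the key step.} Your first two steps match the paper: you pass to the compressed matrix, apply \cref{lem:nn-bip-core}, and conclude that the cross block $N_2=\diag(u_{\rm L})D_C\diag(u_{\rm R})$ of $C_f$ is an Ising tensor up to independent row and column permutations. Your closing Newton-identity argument is also correct. But the step you flag as the main obstacle is left unresolved, and it is the real content of the lemma.

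The invariants you propose do not separate the two sides. Constancy of the row sums $u_{{\rm L},a}(D_Cu_{\rm R})_a$ involves $u_{\rm R}$ through an $a$-dependent weighted average. The same is true of $(N_2N_2^{\tt T})_{aa}=u_{{\rm L},a}^2\sum_b (D_C)_{ab}^2u_{{\rm R},b}^2$. Neither condition by itself forces $u_{\rm L}$ to be constant. You also do not show that the one-sided unary ($t$ on $V_{\rm L}$, $1$ on $V_{\rm R}$) is realizable, and nothing in the paper provides it. Finally, ``I expect to show'' is not an argument.

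\textbf{Missing idea.} Use a multiplicative invariant, which factors across the two diagonal scalings. Every row of $\lambda\bigotimes_i\left(\begin{smallmatrix}a_i&b_i\\b_i&a_i\end{smallmatrix}\right)$ contains the same multiset of entries, namely all products $\lambda\prod_i c_i$ with $c_i\in\{a_i,b_i\}$, each exactly once. Independent row and column permutations preserve this, and the same holds for columns. Hence both $D_C$ and $N_2$ have constant row products, and
\[
\prod_{b}(N_2)_{ab}=u_{{\rm L},a}^{\ell}\Bigl(\prod_b (D_C)_{ab}\Bigr)\Bigl(\prod_b u_{{\rm R},b}\Bigr).
\]
The left side and both parenthesized factors are positive and independent of $a$, so $u_{{\rm L},a}$ is constant. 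Column products give the same conclusion for $u_{\rm R}$. This is exactly what the paper does.

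Alternatively, your coupled row- and column-sum conditions can be closed by the uniqueness part of Sinkhorn's diagonal-scaling theorem, since $D_C$ already has equal row and column sums. That route needs an outside theorem, whereas the product argument is immediate.
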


\begin{proof}
Let $C=\left(\begin{smallmatrix}
    O_\ell & D\\
    D^{\tt T} & O_r
\end{smallmatrix}\right)\in \Sym{q}{}{\R_{\ge 0}}$ be the compressed matrix of $M$, where $\ell,r\ge 1$ and $\ell+r=q.$
The matrix $C$ is connected and bipartite, with pairwise
nonproportional rows.
By \cref{lem: realize compressed matrix C},
$\PlGH(C)\le_T\PlGH(M)$, so $\PlGH(C)$ is not $\#\mathrm P$-hard.
Applying \cref{lem:nn-bip-core}, we conclude that $\ell=r$, and $D$ is an Ising tensor up to independent row and column permutations.

Now consider an arbitrary realizable unary $f\in\Rp^n$.
Put
\[
u(f)_a=\frac1{n_a}\sum_{i\in P_a}s_if_i
\qquad(\forall a\in[q]),
\]
and let $C_f=\diag(u(f))C\diag(u(f))$.
By \cref{lem:nn-quotient},
$\PlGH(C_f)\le_T\PlGH(M)$, so $\PlGH(C_f)$ is not
$\#\mathrm P$-hard.
Since $u(f)\in\Rp^q$, the matrix $C_f$ still has pairwise nonproportional rows.
Thus \cref{lem:nn-bip-core} also applies to $C_f$.
Its cross block
\[
(D_f)_{ab}=u(f)_aD_{ab}u(f)_{b+\ell}
\qquad(\forall a,b\in [\ell])
\]
is therefore an Ising tensor up to independent row and column permutations.
Every row of an Ising tensor contains the same multiset of entries,
and the same holds for its columns.
Consequently, both $D$ and $D_f$ have constant row product
and constant column product.
For every $a\in [\ell]$,
\[
\prod_{b\in [\ell]}(D_f)_{ab}
=
u(f)_a^{\ell}
\prod_{b\in [\ell]}D_{ab}
\prod_{b\in [\ell]}u(f)_{b+\ell}.
\]
The LHS and the last two factors are positive and
independent of $a$.
Hence $u(f)_a$ is constant for every $a\in [\ell]$.
The analogous argument using column product shows that
$u(f)_a$ is constant for every $\ell<a\le q$ as well.

Similar to the proof of \cref{lem:nn-moments-nonbipartite},
$f=(f_1,\ldots, f_n)$ is a realizable unary, where
\(
f_i=\frac{r_i}{s_i}=\frac{n_a}{s_i}\,(i\in P_a, \forall i\in [n]).
\)
For this choice,
\[
u(f)_a
=\frac1{n_a}\sum_{i\in P_a}s_i\frac{n_a}{s_i}
=n_a.
\]
Thus $n_a$ are constant on each side of the bipartition, i.e., there exist $c_{\ell},c_{r}\in \Zp$ such that $n_a=c_\ell$ for every $1\le a\le \ell$ and $n_a=c_r$ for every $\ell< a\le q.$

For any fixed $j\in\Zp$, take $f_i=s_i^{j-1}$.
The unary $f=(f_1,\ldots,f_n)$ is realizable, and
\[
u(f)_a=
\begin{cases}
\displaystyle\frac1{c_\ell}\sum_{i\in P_a}s_i^j,
    &1\le a\le \ell\\[6pt]
\displaystyle\frac1{c_r}\sum_{i\in P_a}s_i^j,
    &\ell <a \le q.
\end{cases}
\]
Since $u(f)$ is constant on each side, the power sums
$\sum_{i\in P_a}s_i^j$ agree among all classes on the same side.
Applying Newton's identities separately on the two sides, we
conclude that the elementary symmetric polynomials in
$\{s_i\mid i\in P_a\}$ agree among proportional classes on the same side.
Hence the polynomials
\(
F_a(z)=\prod_{i\in P_a}(z-s_i)
\)
are identical for all $1\le a\le\ell$, and are identical for all
$\ell<a\le q$.
Their roots, counted with multiplicity, therefore agree within
each side, as claimed.
\end{proof}

\begin{lemma}\label{lem: bipartite case}
Let $M=\left(\begin{smallmatrix}
    O & B\\
    B^{\tt T} & O
\end{smallmatrix}\right)\in\Sym{n}{}{\R_{\ge0}}$ be nonzero, connected and bipartite.
If $\PlGH(M)$ is not $\#\mathrm P$-hard, then
\[
M=
\begin{pmatrix}
O&K\otimes uv^{\tt T}\\
K\otimes vu^{\tt T}&O
\end{pmatrix}
\]
up to a simultaneous permutation of rows and columns, where $K$
is an Ising tensor and $u,v$ are entrywise positive vectors.
\end{lemma}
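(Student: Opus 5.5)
The argument mirrors \cref{lem: nonbipartite case}, using the bipartite ingredients already in place. Let $C=\left(\begin{smallmatrix} O_\ell & D\\ D^{\tt T} & O_r\end{smallmatrix}\right)$ be the compressed matrix of $M$, indexed by the proportional classes $P_1,\ldots,P_q$. Since $M$ is bipartite and connected, and proportional rows share supports, every proportional class lies on one side of the bipartition. After a simultaneous permutation we can therefore take classes $P_1,\ldots,P_\ell$ on the left side (rows of $B$) and $P_{\ell+1},\ldots,P_q$ on the right side.

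\emph{Step 1: the compressed core.} As in the proof of \cref{lem:nn-moments-bipartite}, \cref{lem: realize compressed matrix C} and \cref{lem:nn-bip-core} give $\ell=r$. They also give permutations $\pi,\pi'$ of $[\ell]$ and an Ising tensor $K$ with
\[
D_{\pi(a)\pi'(b)}=K_{ab}\qquad(a,b\in[\ell]).
\]

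\emph{Step 2: the weights.} By \cref{lem:nn-moments-bipartite}, there is a single multiset $\{u_1,\ldots,u_k\}$ with $\{s_i\mid i\in P_a\}=\{u_1,\ldots,u_k\}$ for every left class $a\le\ell$. Likewise there is a single multiset $\{v_1,\ldots,v_{k'}\}$ shared by all right classes. All $s_i>0$, so the vectors $u$ and $v$ are entrywise positive.

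\emph{Step 3: reordering.} Within each left class $P_{\pi(a)}$, list its elements as $i_{a,1},\ldots,i_{a,k}$ so that $s_{i_{a,s}}=u_s$. Do the same for each right class $P_{\ell+\pi'(b)}$, with elements $j_{b,1},\ldots,j_{b,k'}$ and $s_{j_{b,t}}=v_t$. Order the left colors lexicographically by $(a,s)$ and the right colors by $(b,t)$; this is one simultaneous permutation of $[n]$. By \cref{def: compressed matrix},
\[
M_{i_{a,s},\,j_{b,t}}=C_{\pi(a),\,\ell+\pi'(b)}\,s_{i_{a,s}}s_{j_{b,t}}=K_{ab}\,u_s v_t,
\]
so the upper-right block becomes $K\otimes uv^{\tt T}$. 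The lower-left block is its transpose. Because $K$ is symmetric, that transpose equals $K\otimes vu^{\tt T}$, and the diagonal blocks are zero.

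\emph{Main obstacle: reconciling the permutations.} The separate permutations $\pi,\pi'$ from \cref{lem:nn-bip-core} are harmless here, since the two sides of a bipartite matrix are permuted independently in any case; the single simultaneous permutation of $[n]$ simply acts by $\pi$ on the left colors and by $\pi'$ on the right. The remaining point to verify is that the Ising tensor $K$ delivered by \cref{lem:nn-bip-core} is symmetric after these permutations. It is, because that lemma presents it as $e^c\bigotimes_i\left(\begin{smallmatrix}e^{\gamma_i}&e^{-\gamma_i}\\e^{-\gamma_i}&e^{\gamma_i}\end{smallmatrix}\right)$, which has exactly the form \eqref{eq:ising-tensor}.
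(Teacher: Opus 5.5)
Your proposal is correct and follows the paper's proof essentially step for step: you compress $M$, apply \cref{lem:nn-bip-core} to get $\ell=r$ and an Ising tensor $K$ up to independent row and column permutations, use \cref{lem:nn-moments-bipartite} to get the shared weight multisets on each side, and reorder so that $M_{i_{a,s},j_{b,t}}=K_{ab}u_sv_t$. You also fill in two points the paper leaves implicit: that no proportional class straddles the bipartition, which justifies the block form of $C$, and that $K$ is symmetric, so the lower-left block is $K\otimes vu^{\tt T}$.
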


\begin{proof}
Let $C=\left(\begin{smallmatrix}
    O_{\ell} & D\\
    D^{\tt T} & O_{r}
\end{smallmatrix}\right)\in \Sym{q}{}{\R_{\ge 0}}$ be the compressed matrix of $M$, where $\ell,r\ge 1$ and $\ell+r=q.$
The matrix $C$ is nonzero, connected and bipartite, with pairwise
nonproportional rows.
By \cref{lem: realize compressed matrix C},
$\PlGH(C)\le_T\PlGH(M)$, so $\PlGH(C)$ is not $\#\mathrm P$-hard.
Applying \cref{lem:nn-bip-core}, we obtain
$\ell=r$, and $D$ is an Ising tensor up to independent permutations of rows and columns.
Thus, there exist bijections $\pi_{\ell}:[\ell]\to [\ell]$ and $\pi_r:[\ell]\to \{\ell+1,\ldots,q\}$
such that
\(
K_{ab}=C_{\pi_\ell(a)\pi_r(b)}\,
(\forall a,b\in[\ell])
\)
is an Ising tensor.

By \cref{lem:nn-moments-bipartite}, there are positive vectors
$u=(u_1,\ldots,u_{k_\ell})$ and $v=(v_1,\ldots,v_{k_r})$ such that
\[
\{s_i\mid i\in P_{a}\}
=\{u_1,\ldots,u_{k_\ell}\}
\quad \text{and} \quad
\{s_i\mid i\in P_{b}\}
=\{v_1,\ldots,v_{k_r}\}
\]
as multisets, for every $1\le a\le \ell$ and $\ell<b\le q.$
We may therefore enumerate
\begin{equation}\label{eq: prop order}
    P_{\pi_\ell(a)}=\{i_{a,1},\ldots,i_{a,k_\ell}\}
\quad \text{and} \quad
P_{\pi_r(a)}=\{j_{a,1},\ldots,j_{a,k_r}\}
\end{equation}
so that
\(
s_{i_{a,s}}=u_s,\,s_{j_{a,t}}=v_t
\)
for every $a\in[\ell]$, $s\in[k_\ell]$ and $t\in[k_r]$.

Finally, order the color set $[n]$ by listing the left proportional classes in the order $\pi_\ell(1),\ldots,\pi_\ell(\ell)$, followed by the right proportional classes in the order
$\pi_r(1),\ldots,\pi_r(\ell)$, using the chosen order \eqref{eq: prop order} within each proportional class.
Apply this same ordering to the rows and columns of $M$, and denote
the resulting matrix by $M'$.
The upper-right block $B'$ of $M'$ satisfies
\[
B'_{(a,s),(b,t)}
=M_{i_{a,s},j_{b,t}}
=C_{\pi_\ell(a)\pi_r(b)}
  s_{i_{a,s}}s_{j_{b,t}}
=K_{ab}u_sv_t
\]
for all $a,b\in[\ell]$, $s\in[k_\ell]$ and $t\in[k_r]$.
Thus $B'=K\otimes uv^{\tt T}$.
Therefore,
\[
M'=
\begin{pmatrix}
O&B'\\
B'^{\tt T}&O
\end{pmatrix}
=
\begin{pmatrix}
O&K\otimes uv^{\tt T}\\
K\otimes vu^{\tt T}&O
\end{pmatrix},
\]
as claimed.
\end{proof}

\subsection{Dichotomy for Nonnegative Matrices}
\label{subsec:nn-completion}
We are now ready to prove \cref{thm:nn-dichotomy}, the dichotomy theorem for $\PlGH(M)$ when $M$ is a real symmetric nonnegative matrix.

\paragraph{Proof of tractability:}
From the definition of the partition function $Z_M(G)$, it is clear that $Z_{M_1\otimes M_2}(G)=Z_{M_1}(G)Z_{M_2}(G)$ and $Z_{M_1\oplus M_2}(G)=Z_{M_1}(G)+Z_{M_2}(G)$ for arbitrary real symmetric matrices $M_1,M_2$ and connected planar graph $G$.
Note that $\PlGH(M)$ is tractable, when $M=\left(\begin{smallmatrix}
    a_i & b_i\\
    b_i & a_i
\end{smallmatrix}\right)$ with $a_i,b_i>0$ and $a_i\neq b_i$ \cite{fisher,kasteleyn1963dimer,temperley1961dimer,cai2017complexitybook}, or $M=ww^{\tt T}$ \cite{bulatov2005complexity}, or $M=\left(\begin{smallmatrix}
    O & uv^{\tt T}\\
    vu^{\tt T} & O
\end{smallmatrix}\right)$ \cite{bulatov2005complexity}, where $u,v,w$ are entrywise positive vectors.
A zero block is trivially tractable.
These are basic tractable blocks.
Thus, if $M$ satisfies the conditions of \cref{thm:nn-dichotomy}, then $\PlGH(M)$ is tractable because $M$ decomposes into a direct sum of tensor products of basic tractable blocks.

\paragraph{Proof of hardness:}
Suppose $\PlGH(M)$ is not $\#\mathrm P$-hard.
If $M$ is the zero matrix, then the theorem is trivial.
Now suppose $M$ is nonzero.
Let $M_0$ be an arbitrary connected component of $M$ after removing zero rows and columns.
By \cref{lem:nn-component}, $\PlGH(M_0)\le_T\PlGH(M).$
So $\PlGH(M_0)$ is not \#P-hard.
Let $C$ be the compressed matrix of $M_0.$
If $M_0$ is nonbipartite, then by \cref{lem: nonbipartite case}, $M_0$ has form \cref{item: tractable 1} up to a simultaneous permutation of rows and columns.
If $M_0$ is bipartite, then by \cref{lem: bipartite case}, $M_0$ has form \cref{item: tractable 2} up to a simultaneous permutation of rows and columns.
\qed

\section*{Acknowledgement}
The authors had the idea of defining the entrywise logarithmic matrix algebra, and uploaded into Codex many of our unpublished  non-AI generated manuscripts over the past several months.
We used ChatGPT 5.6 sol and 6 Astra to develop mathematical proofs.
The entire manuscript was written by human authors.
The authors have verified, reorganized, and simplified all proofs, and take full responsibility for their correctness and originality.
The authors would like to thank Ashwin Maran and Ben Young. 
Our work is built on their substantial previous work.
Chenghua Liu and Boning Meng informed us that
they have independently obtained the same result.

\clearpage
\phantomsection
\addcontentsline{toc}{section}{References}
\bibliographystyle{plainnat}
\bibliography{references}
\end{document}